\newtheoremstyle{slant}{\topsep}{\topsep}{\slshape}{0pt}%
                  {\bfseries}{.}{ }{}
\theoremstyle{slant}
\newtheorem{thm}{Theorem}[chapter]
\newtheorem{coro}[thm]{Corollary}
\newtheorem{lem}[thm]{Lemma}
\newtheorem{prop}[thm]{Proposition}
\newtheorem*{claims}{Claim}
\newtheorem{assum}{Assumption}
\theoremstyle{definition}
\newtheorem{rem}[thm]{Remark}
\newtheorem{defn}[thm]{Definition}
\newtheorem*{rems}{Remark}
\newcommand{\ifff}{if and only if}
\newcommand{\bbar}{\ensuremath{\|}}
\newcommand{\rv}[1]{{#1}}
\newcommand{\F}{\ensuremath{\mathds{F}}}
\newcommand{\N}{\mathds{N}}
\newcommand{\eps}{\epsilon}
\renewcommand{\varepsilon}{\epsilon}
\newcommand{\veps}{\epsilon}
\newcommand{\U}{\mathcal{U}}
\newcommand{\C}{\mathcal{C}}
\newcommand{\Z}{\mathds{Z}}
\newcommand{\cI}{\mathcal{I}}
\newcommand{\cC}{\mathcal{C}}
\newcommand{\Ex}{\mathds{E}}
\newcommand{\supp}{\mathsf{supp}}
\newcommand{\List}{\mathsf{LIST}}
\newcommand{\eqdef}{:=}
\newcommand{\cS}{\mathcal{S}}
\newcommand{\cX}{\mathcal{X}}
\newcommand{\cY}{\mathcal{Y}}
\newcommand{\cD}{\mathcal{D}}
\newcommand{\cG}{\mathcal{G}}
\newcommand{\cE}{\mathcal{E}}
\newcommand{\cB}{\mathcal{B}}
\newcommand{\cM}{{M}}
\newcommand{\cZ}{\mathcal{Z}}
\newcommand{\PI}{p}
\newcommand{\dist}{{\mathsf{dist}}}
\newcommand{\rdist}{{\mathsf{rdist}}}
\newcommand{\rk}{{\mathsf{rank}}}
\newcommand{\poly}{{\mathsf{poly}}}
\newcommand{\qpoly}{{\mathsf{quasipoly}}}
\newcommand{\tre}{{\mathsf{Tre}}}
\newcommand{\Exp}{\mathbb{E}}
\newcommand{\wgt}{\mathsf{wgt}}
\newcommand{\bou}{\mathsf{AExt}}
\newcommand{\kz}{\mathsf{SFExt}}
\newcommand{\extr}{\mathsf{Ext}}
\newcommand{\sm}{\setminus}
\newcommand{\agr}{\mathsf{Agr}}
\newcommand{\TRASH}[1]{}
\newcommand{\Th}{\text{th}}
\newcommand{\floor}[1]{\lfloor #1 \rfloor}
\providecommand{\eqref}[1]{(\ref{#1})}
\newcommand{\nextLine}{\\}
\newenvironment{Proof}{\begin{proof}}{\end{proof}}
\newcommand{\cL}{\mathcal{L}}
\newcommand{\tn}{{\tilde{n}}}
\newcommand{\td}{{\tilde{d}}}
\newcommand{\tk}{k}
\newcommand{\Tk}{{\tilde{k}}}
\newcommand{\tl}{{\tilde{\ell}}}
\newcommand{\tee}{t}
\newcommand{\tx}{{\tilde{x}}}
\newcommand{\ty}{{\tilde{y}}}
\newcommand{\zo}{\{0,1\}}
\newcommand{\Wlog}{Without loss of generality}
\newcommand{\rep}{\odot}
\newcommand{\Title}{Applications of Derandomization Theory in Coding}
\newcommand{\Author}{Mahdi Cheraghchi Bashi Astaneh}
\author{\Author}
\title{\Title}
\providecommand{\cites}[1]{\cite{#1}}
\newcommand{\newConstruction}[6][tbp]{%
\begin{constr}[#1] 
  \begin{framed}
  \begin{itemize}
  \item \textit{Given:} #3

  \item \textit{Output:} #4

  \item \textit{Construction:} #5
  \end{itemize}
  \end{framed}
  \caption{#2}
  \label{#6}
\end{constr}
}
\newcommand{\musicBox}[2]{
\vfill
\noindent
\begin{minipage}{\textwidth}
\vspace{2cm}
\hrule \vspace{1pt} \hrule \vspace{2pt}
\begin{centering}
\includegraphics[width=\textwidth]{#1}
\end{centering}
\begin{flushright}
\begin{minipage}{10cm}
\begin{flushleft}
{\textsf{\textsl{\footnotesize #2}}}
\end{flushleft}
\end{minipage}
\end{flushright}
\end{minipage}
}
\newcommand{\musicBoxIntro}{
\musicBox{music/BachArtFugue14}
{Johann Sebastian Bach (1685--1750): The Art of Fugue BWV~1080, Contrapunctus~XIV.}}
\newcommand{\musicBoxExtractor}{
\musicBox{music/BachBWV639}
{Johann Sebastian Bach (1685--1750): Chorale Prelude in F~minor
BWV~639 ``Ich ruf zu dir, Herr Jesu Christ''.
Piano transcription by Ferruccio Busoni (1866--1924).}}
\newcommand{\musicBoxWiretap}{
\musicBox{music/ScarlattiK87L33}
{Domenico Scarlatti (1685--1757): Keyboard Sonata \\ in B~minor K.~87 (L.~33).}}
\newcommand{\musicBoxTesting}{
\musicBox{music/chopinBallade2}
{Fr\'ed\'eric Chopin (1810--1849): Ballade Op.~38 No.~2 in F~major.}}
\newcommand{\musicBoxCapacity}{
\musicBox{music/BrahmsBallade4}
{Johannes Brahms (1833--1897): Ballade Op.~10 No.~4 in B~major.}}
\newcommand{\musicBoxGV}{
\musicBox{music/AlbenizIberia}
{Isaac Alb\'eniz (1860--1909): Iberia Suite for Piano, Book~1, \\ Evocaci\'on in A~flat.}}
\newcommand{\musicBoxConclusion}{
\musicBox{music/scriabinSonata2}
{Alexander Scriabin (1872--1915): Piano Sonata No.~2 in G~sharp minor (Op.~19, ``Sonata-Fantasy'').}}
\newcommand{\musicBoxAppendixCode}{
\musicBox{music/debussyCathedrale}
{Claude Debussy (1862--1918): Preludes, Book~I, No.~X \\ (La cath\'edrale engloutie).}}
\newcommand{\pagestyleselection}{headings-new}
\newcommand{\Chapter}[1]{\openleft \chapter{#1}}
\begin{document}

\setcounter{secnumdepth}{3} \setcounter{tocdepth}{3}



\newpage \thispagestyle{empty}
\begin{centering}
  \textbf{\Large \Title} \vfill
  {\large by \\[7mm]
    \Large{\Author}} \vfill {\large Master of Science (\'Ecole
    Polytechnique F\'ed\'erale de Lausanne), 2005} \vfill
  {\large A dissertation submitted in partial fulfillment of the \\
    requirements for the degree of \\
    Doctor of Philosophy} \vfill
  {\large in \\[7mm]
    \large{Computer Science}} \vfill
  {\large at the \\[7mm]
    School of Computer and Communication Sciences \\[3mm]
    \textsc{\large \'Ecole Polytechnique F\'ed\'erale de Lausanne}}
    \vfill
  {Thesis Number: 4767}
  \vfill
  {\large Committee in charge: \\[3mm]
    Emre Telatar, Professor (President) \\
    Amin Shokrollahi, Professor (Thesis Director) \\
    R\"udiger Urbanke, Professor \\
    Venkatesan Guruswami, Associate Professor \\
    Christopher Umans, Associate Professor } \vfill {\large
    \centerline{July 2010}}
\end{centering}
\newpage \thispagestyle{empty} \phantom{.}

\newpage

\newpage \thispagestyle{empty}
\vspace*{2cm}
\begin{centering}
  \textbf{\Large \Title}
\end{centering}

\newpage \thispagestyle{empty} \phantom{.}

\pagenumbering{roman} \setcounter{page}{0} \newpage

\vspace{3mm}%
\noindent%
\centerline{\large \textsf{\textbf{Abstract}}} \vspace{10mm}

Randomized techniques play a fundamental role in theoretical computer
science and discrete mathematics, in particular for the design of
efficient algorithms and construction of combinatorial objects. The
basic goal in derandomization theory is to eliminate or reduce the
need for randomness in such randomized constructions.  Towards this
goal, numerous fundamental notions have been developed to provide a
unified framework for approaching various derandomization problems and
to improve our general understanding of the power of randomness in
computation.  Two important classes of such tools are
\emph{pseudorandom generators} and \emph{randomness
  extractors}. Pseudorandom generators transform a short, purely
random, sequence into a much longer sequence that \emph{looks} random,
while extractors transform a weak source of randomness into a
perfectly random one (or one with much better qualities, in which case
the transformation is called a \emph{randomness condenser}).

In this thesis, we explore some applications of the fundamental
notions in derandomization theory to problems outside the core of
theoretical computer science, and in particular, certain problems
related to coding theory.  First, we consider the \emph{wiretap
  channel problem} which involves a communication system in which an
intruder can eavesdrop a limited portion of the transmissions. We
utilize randomness extractors to construct efficient and
infor\-mation-theoretically optimal communication protocols for this
model.

Then we consider the \emph{combinatorial group testing} problem. In
this classical problem, one aims to determine a set of defective items
within a large population by asking a number of queries, where each
query reveals whether a defective item is present within a specified
group of items. We use randomness condensers to explicitly construct
optimal, or nearly optimal, group testing schemes for a setting where
the query outcomes can be highly unreliable, as well as the
\emph{threshold model} where a query returns positive if the number of
defectives pass a certain threshold.

Next, we use randomness condensers and extractors to design ensembles
of error-correcting codes that achieve the infor\-mation-theoretic
capacity of a large class of communication channels, and then use the
obtained ensembles for construction of explicit capacity achieving
codes. Finally, we consider the problem of explicit construction of
error-correcting codes on the \emph{Gilbert-Varshamov bound} and
extend the original idea of Nisan and Wigderson to obtain a small
ensemble of codes, mostly achieving the bound, under suitable
computational hardness assumptions.

\vspace{1mm}%
\noindent\textit{Keywords: Derandomization theory, randomness
  extractors, pseudorandomness, wiretap channels, group testing,
  error-correcting codes.}

\newpage \vspace{3mm}%
\noindent%
\centerline{\large \textsf{\textbf{R\'esum\'e}}} \vfill 

Les techniques de randomisation jouent un r\^ole fondamental en
informatique th\'eorique et en math\'ematiques discr\`etes, en
particulier pour la conception d'algorithmes efficaces et pour la
construction d'objets combinatoires. L'objectif principal de la
th\'eorie de d\'erandomisation est d'\'eliminer ou de r\'eduire le
besoin d'al\'ea pour de telles constructions. Dans ce but, de
nombreuses notions fondamentales ont \'et\'e d\'evelopp\'ees, d'une
part pour cr\'eer un cadre unifi\'e pour aborder diff\'erents
probl\`emes de d\'erandomisation, et d'autre part pour mieux
comprendre l'apport de l'al\'ea en informatique. Les
\emph{g\'en\'erateurs pseudo-al\'eatoires} et les \emph{extracteurs}
sont deux classes importantes de tels outils. Les g\'en\'erateurs
pseudo-al\'eatoires transforment une suite courte et purement
al\'eatoire en une suite beaucoup plus longue qui \emph{parait}
al\'eatoire. Les \emph{extracteurs d'al\'ea} transforment une source
faiblement al\'eatoire en une source parfaitement al\'eatoire (ou en
une source de meilleure qualit\'e. Dans ce dernier cas, la
transformation est appel\'ee un \emph{condenseur d'al\'ea}).

Dans cette th\`ese, nous explorons quelques applications des notions
fondamentales de la th\'eorie de d\'erandomisation \`a des probl\`emes
p\'eriph\'eriques \`a l'informatique th\'eorique et en particulier \`a
certains probl\`emes relevant de la th\'eorie des codes.  Nous nous
int\'eressons d'abord au \emph{probl\`eme du canal \`a jarreti\`ere},
qui consiste en un syst\`eme de communication o\`u un intrus peut
intercepter une portion limit\'ee des transmissions. Nous utilisons
des extracteurs pour construire pour ce mod\`ele des protocoles de
communication efficaces et optimaux du point de vue de la th\'eorie de
l'information.

Nous \'etudions ensuite le probl\`eme du \emph{test en groupe
  combinatoire}.  Dans ce probl\`eme classique, on se propose de
d\'eterminer un ensemble d'objets d\'efectueux parmi une large
population, \`a travers un certain nombre de questions, o\`u chaque
r\'eponse r\'ev\`ele si un objet d\'efectueux appartient \`a un
certain ensemble d'objets. Nous utilisons des condenseurs pour
construire explicitement des tests de groupe optimaux ou
quasi-optimaux, dans un contexte o\`u les r\'eponses aux questions
peuvent \^etre tr\`es peu fiables, et dans le \emph{mod\`ele de seuil}
o\`u le r\'esultat d'une question est positif si le nombre d'objets
d\'efectueux d\'epasse un certain seuil.

Ensuite, nous utilisons des condenseurs et des extracteurs pour
concevoir des ensembles de codes correcteurs d'erreurs qui atteignent
la capacit\'e (dans le sens de la th\'eorie de l'information) d'un
grand nombre de canaux de communications. Puis, nous utilisons les
ensembles obtenus pour la construction de codes explicites qui
atteignent la capacit\'e. Nous nous int\'eressons finalement au
probl\`eme de la construction explicite de codes correcteurs d'erreurs
qui atteignent la \emph{borne de Gilbert--Varshamov} et reprenons
l'id\'ee originale de Nisan et Wigderson pour obtenir un petit
ensemble de codes dont la plupart atteignent la borne, sous certaines
\emph{hypoth\`eses de difficult\'e computationnelle}.

\vspace{1mm}%
\noindent\textit{Mots-cl\'es: Th\'eorie de d\'erandomisation,
  extracteurs d'al\'ea, pseudo-al\'ea, ca\-naux \`a jarreti\`ere, test
  en groupe, codes correcteurs d'erreurs.}

\newpage\pagestyle{plain}%
\noindent{\Large\textsf{\textbf{Acknowledgments}}} \vspace{1cm}

During my several years of study at EPFL, both as a Master's student
and a Ph.D.~student, I have had the privilege of interacting with so
many wonderful colleagues and friends who have been greatly
influential in my graduate life.  Despite being thousands of miles
away from home, thanks to them my graduate studies turned out to be
one of the best experiences of my life.  These few paragraphs are an
attempt to express my deepest gratitude to all those who made such an
exciting experience possible.

My foremost gratitude goes to my adviser, Amin Shokrollahi, for not
only making my academic experience at EPFL truly enjoyable, but also
for numerous other reasons. Being not only a great adviser and an
amazingly brilliant researcher but also a great friend, Amin is
undoubtedly one of the most influential people in my life.  Over the
years, he has taught me more than I could ever imagine. Beyond his
valuable technical advice on research problems, he has thought me how
to be an effective, patient, and confident researcher. He would always
insist on picking research problems that are worth thinking, thinking
about problems for the joy of thinking and without worrying about the
end results, and publishing only those results that are worth
publishing.  His mastery in a vast range of areas, from pure
mathematics to engineering real-world solutions, has always greatly
inspired for me to try learning about as many topics as I can and
interacting with people with different perspectives and interests. I'm
especially thankful to Amin for being constantly available for
discussions that would always lead to new ideas, thoughts, and
insights. Moreover, our habitual outside-work discussions in
restaurants, on the way for trips, and during outdoor activities
turned out to be a great source of inspiration for many of our
research projects, and in fact some of the results presented in this
thesis!  I'm also grateful to Amin for his collaborations on several
research papers that we coauthored, as well as the technical substance
of this thesis.  Finally I thank him for numerous small things, like
encouraging me to buy a car which turned out to be a great idea!

Secondly, I would like to thank our secretary Natascha Fontana for
being so patient with too many inconveniences that I made for her over
the years! She was about the first person I met in Switzerland, and
kindly helped me settle in Lausanne and get used to my new life
there. For several years I have been constantly bugging her with
problems ranging from administrative trouble with the doctoral school
to finding the right place to buy curtains. She has also been a great
source of encouragement and support for my graduate studies.

Besides Amin and Natascha, I'm grateful to the present and past
members of our Laboratory of Algorithms (ALGO) and Laboratory of
Algorithmic Mathematics (LMA) for creating a truly active and
enjoyable atmosphere: Bertrand Meyer, Ghid Maatouk, Giovanni Cangiani,
Harm Cronie, Hesam Salavati, Luoming Zhang, Masoud Alipour, Raj Kumar
(present members), and Andrew Brown, Bertrand Ndzana Ndzana, Christina
Fragouli, Fr\'ed\'eric Didier, Fr\'ed\'erique Oggier, Lorenz Minder,
Mehdi Molkaraie, Payam Pakzad, Pooya Pakzad, Zeno Crivelli (past
members), as well as Alex Vardy, Emina Soljanin, Martin F\"urer, and
Shahram Yousefi (long-term visitors).  Special thanks to:

\begin{description}
\item Alex Vardy and Emina Soljanin: For fruitful discussions on the
  results presented in Chapter~\ref{chap:wiretap}.

\item Fr\'ed\'eric Didier: For numerous fruitful discussions and his
  collaboration on our joint paper \cite{ref:CDS09}, on which
  Chapter~\ref{chap:wiretap} is based.

\item Giovanni Cangiani: For being a brilliant system administrator
  (along with Damir Laurenzi), and his great help with some technical
  problems that I had over the years.

\item Ghid Maatouk: For her lively presence as an endless source of
  fun in the lab, for taking student projects with me prior to joining
  the lab, helping me keep up my obsession about classical music,
  encouraging me to practice the piano, and above all, being an
  amazing friend. I also thank her and Bertrand Meyer for translating
  the abstract of my thesis into French.

\item Lorenz Minder: For sharing many tech-savvy ideas and giving me a
  quick campus tour when I visited him for a day in Berkeley, among
  other things.

\item Payam Pakzad: For many fun activities and exciting discussions
  we had during the few years he was with us in ALGO.

\item Zeno Crivelli and Bertrand Ndzana Ndzana: For sharing their
  offices with me for several years!  I also thank Zeno for countless
  geeky discussions, lots of fun we had in the office, and for
  bringing a small plant to the office, which quickly grew to reach
  the ceiling and stayed fresh for the entire duration of my
  Ph.D.~work.
\end{description}

I'm thankful to professors and instructors from whom I learned a great
deal attending their courses as a part of my Ph.D.~work: I learned
Network Information Theory from Emre Telatar, Quantum Information
Theory from Nicolas Macris, Algebraic Number Theory from Eva Bayer,
Network Coding from Christina Fragouli, Wireless Communication from
Suhas Diggavi, and Modern Coding Theory from my adviser Amin. As a
teaching assistant, I also learned a lot from Amin's courses (on
algorithms and coding theory) and from an exciting collaboration with
Monika Henzinger for her course on advanced algorithms.

During summer 2009, I spent an internship at KTH working with Johan
H{\aa}stad and his group.  What I learned from Johan within this short
time turned out far more than I had expected. He was always available
for discussions and listening to my countless silly ideas with extreme
patience, and I would always walk out of his office with new ideas
(ideas that would, contrary to those of my own, always work!). Working
with the theory group at KTH was more than enjoyable, and I'm
particularly thankful to Ola Svensson, Marcus Isaksson, Per Austrin,
Cenny Wenner, and Luk\'a\v{s} Pol\'a\v{c}ek for numerous delightful
discussions.

Special thanks to cool fellows from the Information Processing Group
(IPG) of EPFL for the fun time we had and also countless games of
Foosball we played (brought to us by Giovanni).

I'm indebted to my great friend, Soheil Mohajer, for his close
friendship over the years.  Soheil has always been patient enough to
answer my countless questions on information theory and communication
systems in a computer-science-friendly language, and his brilliant
mind has never failed to impress me.  We had tons of interesting
discussions on virtually any topic, some of which coincidentally (and
finally!) contributed to a joint paper \cite{ref:CKMS10}.  I also
thank Amin Karbasi and Venkatesh Saligrama for this work.  Additional
thanks goes to Amin for our other joint paper \cite{ref:CHKV09} (along
with Ali Hormati and Martin Vetterli whom I also thank) and in
particular giving me the initial motivation to work on these projects,
plus his unique sense of humor and amazing friendship over the years.

I'd like to extend my warmest gratitude to Pedram Pedarsani, for too
many reasons to list here, but above all for being an amazingly caring
and supportive friend and making my graduate life even more
pleasing. Same goes to Mahdi Jafari, who has been a great friend of
mine since middle school! Mahdi's many qualities, including his
humility, great mind, and perspective to life (not to mention great
photography skills) has been a big influence on me.  I feel extremely
lucky for having such amazing friends.

I take this opportunity to thank three of my best, most brilliant, and
most influential friends; Omid Etesami, Mohammad Mahmoody, and Ehsan
Ardes\-tani\-zadeh, whom I'm privileged to know since high school.  In
college, Omid showed me some beauties of complexity theory which
strongly influenced me in pursuing my post-graduate studies in
theoretical computer science. He was also influential in my decision
to study at EPFL, which turned out to be one of my best decisions in
life. I had the most fascinating time with Mohammad and Ehsan during
their summer internships at EPFL. Mohammad thought me a great deal
about his fascinating work on foundations of cryptography and
complexity theory and was always up to discuss anything ranging from
research ideas to classical music and cinema.  I worked with Ehsan on
our joint paper \cite{ref:ACS09} which turned out to be one of the
most delightful research collaborations I've had.  Ehsan's unique
personality, great wit and sense of humor, as well as musical
talents---especially his mastery in playing Santur---has always filled
me with awe.  I also thank the three of them for keeping me company
and showing me around during my visits in Berkeley, Princeton, and San
Diego.

In addition to those mentioned above, I'm grateful to so many amazing
friends who made my study in Switzerland an unforgettable stage of my
life and full of memorable moments: Ali Ajdari Rad, Amin Jafarian,
Arash Golnam, Arash Salarian, Atefeh Ma\-sha\-tan, Banafsheh Abasahl,
Elham Ghadiri, Faezeh Malakouti, Fereshteh Baghe\-rimiyab, Ghazale
Hosseinabadi, Hamed Alavi, Hossein Afshari, Hossein Rouhani, Hossein
Taghavi, Javad Ebrahimi, Laleh Goles\-tanirad, Mani Bastani Parizi,
Marjan Hamedani, Marjan Sedighi, Maryam Javanmardy, Maryam Zaheri,
Mina Karzand, Mohammad Karzand, Mona Mahmoudi, Morteza Zadimoghaddam,
Nasibeh Pouransari, Neda Salamati, Nooshin Hadadi, Parisa Ha\-ghani,
Pooyan Abouzar, Pouya Dehghani, Ramtin Pedarsani, Sara Kherad Pajouh,
Shi\-rin Saeedi, Vahid Aref, Vahid Majidzadeh, Wojciech Galuba, and
Zahra Sinaei. Each name should have been accompanied by a story
(ranging from a few lines to a few pages); however, doing so would
have made this section exceedingly long. Moreover, having prepared the
list rather hastily, I'm sure I have missed a lot of nice friends on
it.  I owe them a coffee (or tea, if they prefer) each!  Additional
thanks to Mani, Ramtin, and Pedram, for their musical presence.

Thanks to Alon Orlitsky, Avi Wigderson, Madhu Sudan, Rob Calderbank,
and Umesh Vazirani for arranging my short visits to UCSD, IAS, MIT,
Princeton, and U.C.~Berkeley, and to Anup Rao, Swastik Kopparty, and
Zeev Dvir for interesting discussions during those visits.

I'm indebted to Chris Umans, Emre Telatar, R\"udiger Urbanke, and
Venkat Guruswami for giving me the honor of having them in my
dissertation committee. I also thank them (and Amin) for carefully reading the
thesis and their comments on an earlier draft of this work.
Additionally, thanks to Venkat for numerous illuminating
discussions on various occasions, in particular on my papers \cites{ref:Che09,ref:Che10}
that form the basis of the material presented in
Chapter~\ref{chap:testing}.

My work was in part funded by grants from
the Swiss National Science Foundation (Grant No.~200020-115983/1) and
the European Research Council (Advanced Grant No.~228021) that I
gratefully acknowledge.

Above all, I express my heartfelt gratitude to my parents, sister
Azadeh, and brother Babak who filled my life with joy and
happiness. Without their love, support, and patience none of my
achievements---in particular this thesis---would have been possible. I
especially thank my mother for her everlasting love, for all she went
through until I reached this point, and her tremendous patience during
my years of absence while I was only able to go home for a short visit
each year. This thesis is dedicated with love to her.

\newpage\pagestyle{\pagestyleselection}
\tableofcontents*
\eject \listoffigures \eject \listoftables \eject


\setlength{\epigraphrule}{0pt}

\Chapter{Introduction}
\epigraphhead[70]{\epigraph{\textsl{``You are at the wheel of your car, waiting at a traffic
light, you take the book out of the bag, rip off the transparent wrapping,
start reading the first lines. A storm of honking breaks over you;
the light is green, you're blocking traffic.''}}{\textit{--- Italo Calvino}}}
\label{chap:intro} \setcounter{page}{0}
\pagenumbering{arabic}

Over the decades, the role of randomness in computation has proved to be
one of the most intriguing subjects of study in computer science.
Considered as a fundamental computational resource, randomness has
been extensively used as an indispensable tool in design and analysis
of algorithms, combinatorial constructions, cryptography, and
computational complexity.

As an illustrative example on the power of randomness in algorithms,
consider a \emph{clustering} problem, in which we wish to partition a
collection of items into two groups. Suppose that some pairs of items
are marked as \emph{inconsistent}, meaning that they are best be
avoided falling in the same group. Of course, it might be simply
impossible to group the items in such a way that no inconsistencies
occur within the two groups.  For that reason, it makes sense to
consider the objective of minimizing the number of inconsistencies
induced by the chosen partitioning.  Suppose that we are asked to
color individual items red or blue, where the items marked by the same
color form each of the two groups. How can we design a strategy that
maximizes the number of inconsistent pairs that fall in different
groups? The basic rule of thumb in randomized algorithm design
suggests that

\begin{quote}
  \textsl{When unsure making decisions, try flipping coins!}
\end{quote}

Thus a naive strategy for assigning color to items would be to flip a
fair coin for each item. If the coin falls Heads, we mark the item
blue, and otherwise red.

How can the above strategy possibly be any reasonable? After all we
are defining the groups without giving the slightest thought on the
given structure of the inconsistent pairs! Remarkably, a simple
analysis can prove that the coin-flipping strategy is in fact a quite
reasonable one.  To see why, consider any inconsistent pair. The
chance that the two items are assigned the same color is exactly one
half. Thus, we expect that half of the inconsistent pairs end up
falling in different groups. By repeating the algorithm a few times
and checking the outcomes, we can be sure that an assignment
satisfying half of the inconsistency constraints is found after a few
trials.

We see that, a remarkably simple algorithm that does not even read its
input can attain an approximate solution to the clustering problem in
which the number of inconsistent pairs assigned to different groups is
no less than half the maximum possible. However, our algorithm used a
valuable resource; namely random coin flips, that greatly simplified
its task. In this case, it is not hard to come up with an efficient
(i.e., polynomial-time) algorithm that does equally well without using
any randomness. However, designing such an algorithm and analyzing its
performance is admittedly a substantially more difficult task that
what we demonstrated within a few paragraphs above.

As it turns out, finding an optimal solution to our clustering problem
above is an intractable problem (in technical terms, it is
$\mathsf{NP}$-hard), and even obtaining an approximation ratio better
than $16/17 \approx .941$ is so \cite{HastadOptimal}.  Thus the
trivial bit-flipping algorithm indeed obtains a reasonable solution.
In a celebrated work, Goemans and Williamson \cite{ref:GW95} improve
the approximation ratio to about $.878$, again using randomization\footnote{
Improving upon the approximation ration obtained by this algorithm
turns out to be $\mathsf{NP}$-hard under a well-known conjecture \cite{KKMOD04}.
}. A
deterministic algorithm achieving the same quality was later
discovered \cite{ref:MR95}, though it is much more complicated to
analyze.

Another interesting example demonstrating the power of randomness in
algorithms is the \emph{primality testing} problem, in which the goal
is to decide whether a given $n$-digit integer is prime or
composite. While efficient (poly\-nomial-time in $n$) randomized
algorithms were discovered for this problem as early as 1970's (e.g.,
Solovay-Strassen's \cite{ref:SS77} and Miller-Rabin's algorithms
\cites{ref:Mil76,ref:Rab80}), a deterministic polynomial-time
algorithm for primality testing was found decades later, with the
breakthrough work of Agrawal, Kayal, and Saxena~\cite{ref:AKS04},
first published in 2002.  Even though this algorithm provably works in
polynomial time, randomized methods still tend to be more favorable
and more efficient for practical applications.

The primality testing algorithm of Agrawal et al.~can be regarded as a
derandomization of a particular instance of the \emph{polynomial
  identity testing} problem.  Polynomial identity testing generalizes
the high-school-favorite problem of verifying whether a pair of
polynomials expressed as closed form formulae expand to identical
polynomials. For example, the following is an 8-variate identity
\begin{multline*}
  (x_1^2+x_2^2+x_3^2+x_4^2)(y_1^2+y_2^2+y_3^2+y_4^2) \stackrel{?}{\equiv} \\
  (x_1 y_1 - x_2 y_2 - x_3 y_3 - x_4 y_4)^2 +
  (x_1 y_2 + x_2 y_1 + x_3 y_4 - x_4 y_3)^2 + \\
  (x_1 y_3 - x_2 y_4 + x_3 y_1 + x_4 y_2)^2 + (x_1 y_4 + x_2 y_3 - x_3
  y_2 + x_4 y_1)^2
\end{multline*}
which turns out to be valid. When the number of variables and the
complexity of the expressions grow, the task of verifying identities
becomes much more challenging using naive methods.

This is where the power of randomness comes into play again. A
fundamental idea due to Schwartz and Zippel
\cites{ref:Schwartz,ref:Zippel} shows that the following approach
indeed works:
\begin{quote}
  \textsl{Evaluate the two polynomials at sufficiently many randomly
    chosen points, and identify them as identical if and only if all
    evaluations agree.}
\end{quote}
It turns out that the above simple idea leads to a randomized
efficient algorithm for testing identities that may err with an
arbitrarily small probability.  Despite substantial progress, to this
date no polynomial-time deterministic algorithms for solving general
identity testing problem is known, and a full \emph{derandomization}
of Schwartz-Zippel's algorithm remains a challenging open problem in
theoretical computer science.

The discussion above, among many other examples, makes the strange
power of randomness evident. Namely, in certain circumstances the
power of randomness makes algorithms more efficient, or simpler to
design and analyze. Moreover, it is not yet clear how to perform
certain computational tasks (e.g., testing for general polynomial
identities) without using randomness.

Apart from algorithms, randomness has been used as a fundamental tool
in various other areas, a notable example being combinatorial
constructions. Combinatorial objects are of fundamental significance
for a vast range of theoretical and practical problems.  Often solving
a practical problem (e.g., a real-world optimization problem) reduces
to construction of suitable combinatorial objects that capture the
inherent structure of the problem. Examples of such combinatorial
objects include graphs, set systems, codes, designs, matrices, or even
sets of integers. For these constructions, one has a certain
structural property of the combinatorial object in mind (e.g., mutual
intersections of a set system consisting of subsets of a universe) and
seeks for an instance of the object that optimizes the property in
mind in the best possible way (e.g., the largest possible set system
with bounded mutual intersections).

The task of constructing suitable combinatorial objects turns out
quite challenging at times. Remarkably, in numerous situations the
power of randomness greatly simplifies the task of constructing the
ideal object. A powerful technique in combinatorics, dubbed as
\emph{the probabilistic method} (see \cite{probMethod}) is based on
the following idea:
\begin{quote}
  \textsl{When out of ideas finding the right combinatorial object,
    try a random one!}
\end{quote}

Surprisingly, in many cases this seemingly naive strategy
significantly beats the most brilliant constructions that do not use
any randomness. An illuminating example is the problem of constructing
\emph{Ramsey graphs}. It is well known that in a group of six or more
people, either there are at least three people who know each other or
three who do not know each other. More generally, \emph{Ramsey theory}
shows that for every positive integer $K$, there is an integer $N$
such that in a group of $N$ or more people, either there are at least
$K$ people who mutually know each other (called a \emph{clique} of
size $K$) or $K$ who are mutually unfamiliar with one another (called
an \emph{independent set} of size $K$). Ramsey graphs capture the
reverse direction:

\begin{quote}
\textsl{For a given $N$, what is the smallest $K$ such
that there is a group of $N$ people with no cliques or independent sets of
size $K$ or more? And how can an example of such a group be constructed?}
\end{quote}

In graph-theoretic terms (where mutual acquaintances are captured by
edges), an undirected graph with $N := 2^n$ vertices
is called a Ramsey graph with \emph{entropy $k$} if it has no clique
or independent set of size $K := 2^k$ (or larger). The Ramsey graph
construction problem is to efficiently construct a graph with smallest
possible entropy $k$.

Constructing a Ramsey graph with entropy $k = (n+1)/2$ is already
nontrivial.  However, the following \emph{Hadamard graph} does the job
\cite{CG88}: Each vertex of the graph is associated with a binary
vector of length $n$, and there is an edge between two vertices if
their corresponding vectors are orthogonal over the binary field.  A
much more involved construction, due to Barak et al.~\cite{BRSW06}
(which remains the best deterministic construction to date) attain an
entropy $k = n^{o(1)}$.

A brilliant, but quite simple, idea due to Erd{\H o}s~\cite{Erdos}
demonstrates the power of randomness in combinatorial constructions:
Construct the graph randomly, by deciding whether to put an edge
between every pair of vertices by flipping a fair coin. It is easy to
see that the resulting graph is, with overwhelming probability, a
Ramsey graph with entropy $k = \log n + 2$.  It also turns out that
this is about the lowest entropy one can hope for!  Note the
significant gap between what achieved by a simple, probabilistic
construction versus what achieved by the best known deterministic
constructions.

Even though the examples discussed above clearly demonstrate the power
of randomness in algorithm design and combinatorics, a few issues are
inherently tied with the use of randomness as a computational
resource, that may seem unfavorable:

\begin{enumerate}
\item A randomized algorithm takes an abundance of fair, and
  independent, coin flips for granted, and the analysis may fall apart
  if this assumption is violated. For example, in the clustering
  example above, if the coin flips are biased or correlated, the $.5$
  approximation ratio can no longer be guaranteed. This raises a
  fundamental question:
  \begin{quote}
    \textsl{ Does ``pure randomness'' even exist? If so, how can we
      instruct a computer program to produce purely random coin flips?
    }
  \end{quote}

\item Even though the error probability of randomized algorithms (such
  as the primality testing algorithms mentioned above) can be made
  arbitrarily small, it remains nonzero. In certain cases where a
  randomized algorithm never errs, its running time may vary depending
  on the random choices being made. We can never be completely sure
  whether an error-prone algorithm has really produced the right
  outcome, or whether one with a varying running time is going to
  terminate in a reasonable amount of time (even though we can be
  almost confident that it does).

\item As we saw for Ramsey graphs, the probabilistic method is a
  powerful tool in showing that combinatorial objects with certain
  properties exist, and it most cases it additionally shows that a
  random object almost surely achieves the desired properties. Even
  though for certain applications a randomly produced object is good
  enough, in general there might be no easy way to certify whether a
  it indeed satisfies the properties sought for. For the example of
  Ramsey graphs, while almost every graph is a Ramsey graph with a
  logarithmically small entropy, it is not clear how to certify
  whether a given graph satisfies this property.  This might be an
  issue for certain applications, when an object with
  \emph{guaranteed} properties is needed.
\end{enumerate}

The basic goal of \emph{derandomization theory} is to address the
above-mentioned and similar issues in a systematic way. A central
question in derandomization theory deals with efficient ways of
\emph{simulating randomness}, or relying on \emph{weak randomness}
when perfect randomness (i.e., a steady stream of fair and independent
coin flips) is not available. A mathematical formulation of randomness
is captured by the notion of \emph{entropy}, introduced by Shannon
\cite{ref:Shannon}, that quantifies randomness as the amount of
\emph{uncertainty} in the outcome of a process.  Various sources of
``unpredictable'' phenomena can be found in nature. This can be in
form of an electric noise, thermal noise, ambient sound input, image
captured by a video camera, or even a user's input given to an input
device such as a keyboard.  Even though it is conceivable to assume
that a bit-sequence generated by all such sources contains a certain
amount of entropy, the randomness being offered might be far from
perfect.  Randomness \emph{extractors} are fundamental combinatorial,
as well as computational, objects that aim to address this issue.

As an example to illustrate the concept of extractors, suppose that we
have obtained several independent bit-streams $X_1, X_2, \ldots, X_r$
from various physically random sources. Being obtained from physical
sources, not much is known about the structure of these sources, and
the only assumption that we can be confident about is that they
produce a substantial amount of entropy.  An extractor is a function
that combines these sources into one, \emph{perfectly random},
source. In symbols, we have
\[
f(X_1, X_2, \ldots, X_r) = Y,
\]
where the output source $Y$ is purely random provided that the input
sources are reasonably (but not fully) random. To be of any practical
use, the extractor $f$ must be efficiently computable as well. A more
general class of functions, dubbed \emph{condensers} are those that do
not necessarily transform imperfect randomness into perfect one, but
nevertheless substantially purifies the randomness being given.  For
instance, as a condenser, the function $f$ may be expected to produce
an output sequence whose entropy is $90\%$ of the optimal entropy
offered by perfect randomness.

Intuitively, there is a trade-off between \emph{structure} and
\emph{randomness}. A sequence of fair coin flips is extremely
unpredictable in that one cannot \emph{bet} on predicting the next
coin flip and expect to gain any advantage out of it. On the other
extreme, a sequence such as what given by digits of $\pi =
3.14159265\dots$ may look random but is in fact perfectly
structured. Indeed one can use a computer program to perfectly predict
the outcomes of this sequence. A physical source, on the other hand,
may have some inherent structure in it. In particular, the outcome of
a physical process at a certain point might be more or less
predictable, dictated by physical laws, from the outcomes observed
immediately prior to that time. However, the degree of predictability
may of course not be as high as in the case of $\pi$.

From a combinatorial point of view, an extractor is a combinatorial
object that neutralizes any kind of structure that is inherent in a
random source, and, \emph{extracts} the ``random component'' out (if
there is any).  On the other hand, in order to be any useful, an
extractor must be computationally efficient. At a first sight, it may
look somewhat surprising to learn that such objects may even exist! In
fact, as in the case of Ramsey graphs, the probabilistic method can be
used to show that a randomly chosen function is almost surely a decent
extractor.  However, a random function is obviously not good enough as
an extractor since the whole purpose of an extractor is to eliminate
the need for pure randomness. Thus for most applications, an extractor
(and more generally, condenser) is required to be efficiently
computable and utilize as small amount of auxiliary pure randomness as
possible.

While randomness extractors were originally studied for the main
purpose of eliminating the need for pure randomness in randomized
algorithms, they have found surprisingly diverse applications in
different areas of combinatorics, computer science, and related
fields.  Among many such developments, one can mention construction of
good expander graphs \cite{WZ99} and Ramsey graphs \cite{BRSW06} (in
fact the best known construction of Ramsey graphs can be considered a
byproduct of several developments in extractor theory), communication
complexity \cite{CG88}, Algebraic complexity theory \cite{ref:RY08},
distributed computing (e.g., \cites{Zuc97,GVZ06,RZ01}), data
structures (e.g., \cite{ref:Ta02}), hardness of optimization problems
\cites{MU01,Zuc96}, cryptography (see, e.g., \cite{ref:Dodis}), coding
theory \cite{ref:TZ04}, signal processing \cite{ref:Ind08},
and various results in structural complexity theory (e.g.,
\cite{GZ97}).


In this thesis we extend such connections to several fundamental
problems related to coding theory. In the following we present a brief
summary of the individual problems that are studied in each chapter.

\subsection*{The Wiretap Channel Problem}

The wiretap channel problem studies reliable transmission of messages
over a communication channel which is partially observable by a
\emph{wiretapper}.  As a basic example, suppose that we wish to
transmit a sensitive document over the internet. Loosely speaking, the
data is transmitted in form of packets, consisting of blocks of
information, through the network.

Packets may be transmitted along different paths over the network
through a cloud of intermediate transmitters, called \emph{routers},
until delivered at the destination. Now an adversary who has access to
a set of the intermediate routers may be able to learn a substantial
amount of information about the message being transmitted, and thereby
render the communication system insecure.

A natural solution for assuring secrecy in transmission is to use a
standard cryptographic scheme to encrypt the information at the
source.  However, the infor\-mation-theoretic limitation of the
adversary in the above scenario (that is, the fact that not all of the
intermediate routers, but only a limited number of them are being
eavesdropped) makes it possible to provably guarantee secure
transmission by using a suitable encoding at the source. In
particular, in a wiretap scheme, the original data is encoded at the
source to a slightly redundant sequence, that is then transmitted to
the recipient.  As it turns out, the scheme can be designed in such a
way that no information is leaked to the intruder and moreover no
secrets (e.g., an encryption key) need to be shared between the two
parties prior to transmission.

We study this problem in Chapter~\ref{chap:wiretap}. The main
contribution of this chapter is a construction of
infor\-mation-theoretically secure and optimal wiretap schemes that
guarantee secrecy in various settings of the problem.  In particular
the scheme can be applied to point-to-point communication models as
well as networks, even in presence of noise or active intrusion (i.e.,
when the adversary not only eavesdrops, but also alters the
information being transmitted). The construction uses an explicit
family of randomness extractors as the main building block.

\subsection*{Combinatorial Group Testing}

Group testing is a classical combinatorial problem that has
applications in surprisingly diverse and seemingly unrelated areas,
from data structures to coding theory to biology.

Intuitively, the problem can be described as follows: Suppose that
blood tests are taken from a large population (say hundreds of
thousands of people), and it is suspected that a small number (e.g.,
up to one thousand) carry a disease that can be diagnosed using costly
blood tests. The idea is that, instead of testing blood samples one by
one, it might be possible to pool them in fairly large groups, and
then apply the tests on the groups without affecting reliability of
the tests.  Once a group is tested negative, all the samples
participating in the group must be negative and this may save a large
number of tests. Otherwise, a positive test reveals that at least one
of the individuals in the group must be positive (though we do not
learn which).

The main challenge in group testing is to design the pools in such a
way to allow identification of the exact set of infected population
using as few tests as possible, thereby economizing the identification
process of the affected individuals.  In Chapter~\ref{chap:testing} we
study the group testing problem and its variations.  In particular, we
consider a scenario where the tests can produce highly unreliable
outcomes, in which case the scheme must be designed in such a way that
allows correction of errors caused by the presence of unreliable
measurements. Moreover, we study a more general \emph{threshold}
variation of the problem in which a test returns positive if the
number of positives participating in the test surpasses a certain
threshold. This is a more reasonable model than the classical one,
when the tests are not sufficiently sensitive and may be affected by
dilution of the samples pooled together.  In both models, we will use
randomness condensers as combinatorial building blocks for
construction of optimal, or nearly optimal, explicit measurement
schemes that also tolerate erroneous outcomes.

\subsection*{Capacity Achieving Codes}

The theory of error-correcting codes aims to guarantee reliable
transmission of information over an unreliable communication medium,
known in technical terms as a \emph{channel}. In a classical model,
messages are encoded into sequences of bits at their source, which are
subsequently transmitted through the channel.  Each bit being
transmitted through the channel may be flipped (from $0$ to $1$ or
vice versa) with a small probability.

Using an error-correcting code, the encoded sequence can be designed
in such a way to allow correct recovery of the message at the
destination with an overwhelming probability (over the randomness of
the channel). However, the cost incurred by such an encoding scheme is
a loss in the transmission rate, that is, the ratio between the
information content of the original message and the length of the
encoded sequence (or in other words, the effective number of bits
transmitted per channel use).

A \emph{capacity achieving code} is an error correcting code that
essentially maximizes the transmission rate, while keeping the error
probability negligible. The maximum possible rate depends on the
channel being considered, and is a quantity given by the \emph{Shannon
  capacity} of the channel.

In Chapter~\ref{chap:capacity}, we consider a general class of
communication channels (including the above example) and show how
randomness condensers and extractors can be used to design capacity
achieving \emph{ensembles of} codes for them.  We will then use the
obtained ensembles to obtain \emph{explicit} constructions of capacity
achieving codes that allow efficient encoding and decoding as well.

\subsection*{Codes on the Gilbert-Varshamov Bound}

While randomness extractors aim for eliminating the need for
\emph{pure} randomness in algorithms, a related class of objects known
as \emph{pseudorandom generators} aim for eliminating randomness
altogether. This is made meaningful by a fundamental idea saying that
randomness should be defined \emph{relative to the observer}. The idea
can be perhaps best described by an example due to Goldreich
\cite{ref:Gol08}*{Chapter~8}, quoted below:

\begin{quote}
  ``Alice and Bob play \textsc{head} or \textsc{tail} in one of the
  following four ways. In all of them Alice flips a coin high in the
  air, and Bob is asked to guess its outcome \emph{before} the coin
  hits the floor. The alternative ways differ by the knowledge Bob has
  before making his guess.

  In the first alternative, Bob has to announce his guess before Alice
  flips the coin. Clearly, in this case Bob wins with probability
  $1/2$.

  In the second alternative, Bob has to announce his guess while the
  coin is spinning in the air. Although the outcome is
  \emph{determined in principle} by the motion of the coin, Bob does
  not have accurate information on the motion. Thus we believe that,
  also in this case Bob wins with probability $1/2$.

  The third alternative is similar to the second, except that Bob has
  at his disposal sophisticated equipment capable of providing
  accurate information on the coin's motion as well as on the
  environment affecting the outcome. However, Bob cannot process this
  information in time to improve his guess.

  In the fourth alternative, Bob's recording equipment is directly
  connected to a \emph{powerful computer} programmed to solve the
  motion equations and output a prediction. It is conceivable that in
  such a case Bob can improve substantially his guess of the outcome
  of the coin.''
\end{quote}

Following the above description, in principle the outcome of a coin
flip may well be deterministic. However, as long as the observer does
not have enough resources to gain any advantage predicting the
outcome, the coin flip should be considered random for him. In this
example, what makes the coin flip random for the observer is the
inherent \emph{hardness} (and not necessarily \emph{impossibility}) of
the prediction procedure. The theory of pseudorandom generators aim to
express this line of thought in rigorous ways, and study the
circumstances under which randomness can be \emph{simulated} for a
particular class of observers.

\newcommand{\BPP}{\mathsf{BPP}} \newcommand{\sP}{\mathsf{P}} The advent of
probabilistic algorithms that are unparalleled by deterministic
methods, such as randomized primality testing (before the AKS
algorithm \cite{ref:AKS04}), polynomial identity testing and the like
initially made researchers believe that the class of problems solvable
by randomized polynomial-time algorithms (in symbols, $\BPP$) might be
strictly larger than those solvable in polynomial-time without the
need for randomness (namely, $\sP$) and conjecture $\sP \neq \BPP$.
To this date, the ``$\sP$ vs. $\BPP$'' problem remains one
of the most challenging problems in theoretical computer science.

Despite the initial belief, more recent research has led most
theoreticians to believe otherwise, namely that $\sP = \BPP$.
This is supported by recent discovery of deterministic
algorithms such as the AKS primality test, and more importantly, the
advent of strong pseudorandom generators.  In a seminal
work~\cite{NW}, Nisan and Wigderson showed that a ``hard to compute''
function can be used to efficiently transform a short sequence of
random bits into a much longer sequence that looks
\emph{indistinguishable} from a purely random sequence to any
efficient algorithm.  In short, they showed how to construct
\emph{pseudorandomness} from \emph{hardness}.  Though the underlying
assumption (that certain hard functions exists) is not yet proved, it
is intuitively reasonable to believe (just in the same way that, in
the coin flipping game above, the hardness of gathering sufficient
information for timely prediction of the outcome by Bob is reasonable
to believe without proof).

In Chapter~\ref{chap:gv} we extend Nisan and Wigderson's method
(originally aimed for probabilistic algorithms) to combinatorial
constructions and show that, under reasonable hardness assumptions, a
wide range of probabilistic combinatorial constructions can be
substantially derandomized.

The specific combinatorial problem that the chapter is based on is the
construction of error-correcting codes that attain the rate versus
error-tolerance trade-off shown possible using the probabilistic
method (namely, construction of codes on the so-called
\emph{Gilbert-Varshamov bound}). In particular, we demonstrate a small
ensemble of efficiently constructible error-correcting codes almost
all of which being as good as random codes (under a reasonable
assumption). Even though the method is discussed for construction of
error-correcting codes, it can be equally applied to numerous other
probabilistic constructions; e.g., construction of optimal Ramsey
graphs.

\subsection*{Reading Guidelines}

The material presented in each of the technical chapters of this
thesis (Chapters~3--6) are presented independently so they can be read
in any order.  Since the theory of randomness extractors plays a
central role in the technical content of this thesis,
Chapter~\ref{chap:extractor} is devoted to an introduction to this
theory, and covers some basic constructions of extractors and
condenser that are used as building blocks in the main chapters. Since
the extractor theory is already an extensively developed area, we will
only touch upon basic topics that are necessary for understanding the
thesis.

Apart from extractors, we will extensively use fundamental notions of
coding theory throughout the thesis. For that matter, we have provided
a brief review of such notions in Appendix~\ref{app:coding}.

The additional mathematical background required for each chapter is
provided when needed, to the extent of not losing focus.  For a
comprehensive study of the basic tools being used, we refer the reader
to \cites{ref:MR,ref:MU,probMethod} (probability, randomness in
algorithms, and probabilistic constructions), \cite{HLW06} (expander
graphs), \cites{ref:AB09,ref:Gol08} (modern complexity theory),
\cites{ref:MS,ref:vanl,ref:Roth} (coding theory and basic algebra
needed), \cite{ref:Venkat} (list decoding), and
\cites{ref:groupTesting,ref:DH06} (combinatorial group testing).

\emph{Each chapter of the thesis is concluded by the opening notes of a
piece of music that I truly admire.}

\musicBoxIntro


\Chapter{Extractor Theory}
\epigraphhead[70]{\epigraph{\textsl{``Art would be useless if the world were perfect,
as man wouldn't look for harmony but would simply live in it.''}}{\textit{--- Andrei Tarkovsky}}}
\label{chap:extractor}

Suppose that you are given a possibly biased coin that falls heads
some $p$ fraction of times ($0 < p < 1$) and are asked to use it to
``simulate'' fair coin flips.  A natural approach to solve this
problem would be to first try to ``learn'' the bias $p$ by flipping
the coin a large number of times and observing the fraction of times
it falls heads during the experiment, and then using this knowledge to
encode the sequence of biased flips to its infor\-mation-theoretic
entropy.

Remarkably, back in 1951 John von~Neumann \cite{vN51} demonstrated a
simple way to solve this problem without knowing the bias $p$: flip
the coin twice and one of the following cases may occur:

\begin{enumerate}
\item The first flip shows Heads and the second Tails: output ``H''.
\item The first flip shows Tails and the second Heads: output ``T''.
\item Otherwise, repeat the experiment.
\end{enumerate}

Note that the probability that the output symbol is ``H'' is precisely
equal to it being ``T'', namely, $p(1-p)$. Thus, the outcome of this
process represents a perfectly fair coin toss. This procedure might be
somewhat wasteful; for instance, it is expected to waste half of the
coin flips even if $p=1/2$ (that is, if the coin is already fair) and
that is the cost we pay for not knowing $p$.  But nevertheless, it
transforms an imperfect, not fully known, source of randomness into a
perfect source of random bits.

This example, while simple, demonstrates the basic idea in what is
known as ``extractor theory''. The basic goal in extractor theory is to
improve randomness, that is, to efficiently transform a ``weak''
source of randomness into one with better qualities; in particular,
having a higher entropy per symbol. The procedure shown above, seen as
a function from the sequence of coin flips to a Boolean function (over
$\{ H,T \}$) is known as an \emph{extractor}. It is called so since it
``extracts'' pure randomness from a weak source.

When the distribution of the weak source is known, it is possible to
use techniques from \emph{source coding} (say Huffman or Arithmetic
Coding) to compress the information to a number of bits very close to
its actual entropy, without losing any of the source information. What
makes extractor theory particularly challenging is the following
issues:

\begin{enumerate}
\item An extractor knows little about the exact source
  distribution. Typically nothing more than a lower bound on the
  source entropy, and no structure is assumed on the source. In the
  above example, even though the source distribution was unknown, it
  was known to be an i.i.d. sequence (i.e., a sequence of independent,
  identically distributed symbols).  This need not be the case in
  general.

\item The output of the extractor must ``strongly'' resemble a uniform
  distribution (which is the distribution with maximum possible
  entropy), in the sense that no statistical test (no matter how
  complex) should be able to distinguish between the output
  distribution and a purely random sequence. Note, for example, that a
  sequence of $n-1$ uniform and independent bits followed by the
  symbol ``$0$'' has $n-1$ bits of entropy, which is only slightly
  lower than that of $n$ purely random bits (i.e., $n$). However, a
  simple statistical test can trivially distinguish between the two
  distributions by only looking at the last bit.
\end{enumerate}

Since extractors and related objects (in particular, \emph{lossless
  condensers}) play a central role in the technical core of this thesis,
we devote this chapter to a formal treatment of extractor theory,
introducing the basic ideas and some fundamental constructions. In
this chapter, we will only cover basic notions and discuss a few of
the results that will be used as building blocks in the rest of
thesis.

\section{Probability Distributions}

\subsection{Distributions and Distance}

In this thesis we will focus on probability distributions over finite
domains. Let $(\Omega, \mathsf{E}, \cX)$ be a probability space, where
$\Omega$ is a finite sample space, $\mathsf{E}$ is the set of events
(that in our work, will always consist of the set of subsets of
$\Omega$), and $\cX$ is a probability measure.  The probability
assigned to each outcome $x \in \Omega$ by $\cX$ will be denoted by
$\cX(x)$, or $\Pr_\cX(x)$\index{notation!$\cX(x),
  \Pr_\cX(x)$}. Similarly, for an event $T \in \mathsf{E}$, we will
denote the probability assigned to $T$ by $\cX(T)$, or
$\Pr_\cX[T]$\index{notation!$\cX(T), \Pr_\cX[T]$} (when clear from the
context, we may omit the subscript $\cX$). The
\emph{support}\index{probability distribution!support of} of $\cX$ is
defined as
\[
\supp(\cX) \eqdef \{ x\in \Omega\colon \cX(x) > 0 \}.
\]
A particularly important probability measure is defined by the uniform
distribution, which assigns equal probabilities to each element of
$\Omega$.  We will denote the uniform distribution over $\Omega$ by
$\U_\Omega$\index{notation!$\U_\Omega, \U_n$}, and use the shorthand
$\U_n$, for an interger $n \geq 1$, for $\U_{\{0,1\}^n}$.  We will use
the notation $X \sim \cX$ to denote that the random variable $X$ is
drawn from the probability distribution $\cX$.

It is often convenient to think about the probability measure as a
real vector of dimension $|\Omega|$, whose entries are indexed by the
elements of $\Omega$, such that the value at the $i$th entry of the
vector is $\cX(i)$.

An important notion for our work is the distance between
distributions.\index{probability distribution!distance} There are
several notions of distance in the literature, some stronger than the
others, and often the most suitable choice depends on the particular
application in hand.  For our applications, the most important notion
is the $\ell_p$ distance:

\begin{defn}
  Let $\cX$ and $\cY$ be probability distributions on a finite domain
  $\Omega$.  Then for every $p \geq 1$, their $\ell_p$ distance,
  denoted by $\| \cX - \cY \|_p$, is given by \index{notation!{$\bbar
      \cX - \cY \bbar_p$}}
  \[
  \| \cX - \cY \|_p \eqdef \left(\sum_{x\in\Omega} |\cX(x) -
    \cY(y)|^p\right)^{1/p}.
  \]
  We extend the distribution to the special case $p = \infty$, to
  denote the \emph{point-wise} distance:
  \[
  \| \cX - \cY \|_\infty \eqdef \max_{x\in\Omega}{ |\cX(x) - \cY(y)|
  }.
  \]
  The distributions $\cX$ and $\cY$ are called
  $\eps$-close\index{probability distribution!$\eps$-close} with
  respect to the $\ell_p$ norm \ifff\ $\| \cX - \cY \|_p \leq \eps$.
\end{defn}
We remark that, by the Cauchy-Schwarz inequality, the following
relationship between $\ell_1$ and $\ell_2$ distances holds:
\[
\| \cX - \cY \|_2 \leq \| \cX - \cY \|_1 \leq \sqrt{|\Omega|} \cdot \|
\cX - \cY \|_2.
\]

Of particular importance is the \emph{statistical} (or \emph{total
  variation}) distance\index{statistical distance}.  This is defined
as half the $\ell_1$ distance between the distributions:
\[
\| \cX - \cY \| \eqdef \textstyle \frac{1}{2} \| \cX - \cY \|_1.
\]
We may also use the notation $\dist(\cX,\cY)$ to denote the
statistical distance.  We call two distributions $\eps$-close \ifff\
their statistical distance is at most $\eps$.  When there is no risk
of confusion, we may extend such notions as distance to the random
variables they are sampled from, and, for instance, talk about two
random variables being $\eps$-close.

This is in a sense, a very strong notion of distance since, as the
following proposition suggests, it captures the worst-case difference
between the probability assigned by the two distributions to any
event:

\begin{prop} \label{prop:statDist} Let $\cX$ and $\cY$ be
  distributions on a finite domain $\Omega$. Then $\cX$ and $\cY$ are
  $\eps$-close \ifff\ for every event $T \subseteq \Omega$, $|
  \Pr_\cX[T] - \Pr_\cY[T] | \leq \eps$.
\end{prop}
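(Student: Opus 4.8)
The plan is to identify the single event that witnesses the $\ell_1$ distance and then show it is extremal among all events. First I would partition the domain according to which distribution dominates pointwise: set $P \eqdef \{ x \in \Omega \colon \cX(x) \geq \cY(x) \}$ and $N \eqdef \Omega \setminus P$. Since both $\cX$ and $\cY$ sum to $1$ over $\Omega$, the total positive surplus equals the total negative surplus, i.e.
\[
\sum_{x \in P} \bigl( \cX(x) - \cY(x) \bigr) = \sum_{x \in N} \bigl( \cY(x) - \cX(x) \bigr) =: \Delta ,
\]
and consequently $\| \cX - \cY \|_1 = 2\Delta$. Thus the hypothesis that $\cX$ and $\cY$ are $\eps$-close is, by the definition of statistical distance as $\tfrac12 \| \cX - \cY \|_1$, exactly the statement $\Delta \leq \eps$.

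Next I would prove the forward implication: for an arbitrary event $T \subseteq \Omega$, write $\Pr_\cX[T] - \Pr_\cY[T] = \sum_{x \in T} (\cX(x) - \cY(x))$. Dropping the nonpositive terms (those with $x \in T \cap N$) can only increase the sum, and extending the remaining sum from $T \cap P$ to all of $P$ can only increase it further, so $\Pr_\cX[T] - \Pr_\cY[T] \leq \Delta$. Applying the same reasoning with the roles of $\cX$ and $\cY$ swapped gives $\Pr_\cY[T] - \Pr_\cX[T] \leq \Delta$, hence $| \Pr_\cX[T] - \Pr_\cY[T] | \leq \Delta \leq \eps$.

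For the converse, I would simply instantiate the event hypothesis at $T = P$: by construction $\Pr_\cX[P] - \Pr_\cY[P] = \Delta$, so the assumed bound $| \Pr_\cX[P] - \Pr_\cY[P] | \leq \eps$ yields $\Delta \leq \eps$, which is precisely $\eps$-closeness. There is no real obstacle here — the only point requiring a moment's care is the bookkeeping that justifies $\| \cX - \cY \|_1 = 2\Delta$ (i.e.\ that the positive and negative parts of $\cX - \cY$ have equal total mass), and the observation that $P$ itself is the event achieving the maximum discrepancy, which is what makes the characterization tight.
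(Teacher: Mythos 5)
Your proposal is correct and follows essentially the same route as the paper: both partition $\Omega$ into the set where $\cX$ dominates and its complement, identify that set as the extremal event achieving the statistical distance, and use it for both directions of the equivalence. The only cosmetic difference is that you bound $\Pr_\cX[T]-\Pr_\cY[T]$ directly by monotonicity over $T\cap P$, whereas the paper writes the discrepancy as a difference $p_1-p_2$ of two quantities each lying in $[0,\eps]$ — the underlying decomposition is identical.
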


\begin{proof}
  Denote by $\Omega_\cX$ and $\Omega_\cY$ the following partition of
  $\Omega$:
  \[
  \Omega_\cX \eqdef \{ x \in \Omega\colon \cX(x) \geq \cY(x) \}, \quad
  \Omega_\cY \eqdef \Omega\setminus T_\cX.
  \]
  Thus, $\| \cX - \cY \| = 2(\Pr_\cX(\Omega_\cX)-\Pr_\cY(\Omega_\cX))
  = 2(\Pr_\cY(\Omega_\cY)-\Pr_\cX(\Omega_\cY))$.  Let $p_1 \eqdef
  \Pr_\cX[T \cap \Omega_\cX] - \Pr_\cY[T \cap \Omega_\cX]$, and $p_2
  \eqdef \Pr_\cY[T \cap \Omega_\cY] - \Pr_\cX[T \cap
  \Omega_\cY]$. Both $p_1$ and $p_2$ are positive numbers, each no
  more than $\eps$.  Therefore,
  \[
  |\Pr_\cX[T] - \Pr_\cY[T]| = |p_1 - p_2| \leq \eps.
  \]

  For the reverse direction, suppose that for every event $T \subseteq
  \Omega$, $| \Pr_\cX[T] - \Pr_\cY[T] | \leq \eps$. Then,
  \[
  \| \cX - \cY \|_1 = |\Pr_\cX[\Omega_X]-\Pr_\cY[\Omega_X]| +
  |\Pr_\cX[\Omega_Y]-\Pr_\cY[\Omega_Y]| \leq 2\eps.
  \]
\end{proof}

An equivalent way of looking at an event $T \subseteq \Omega$ is by
defining a predicate $P\colon \Omega \to \zo$ whose set of accepting
inputs is $T$; namely, $P(x) = 1$ if and only if $x \in T$. In this
view, Proposition~\ref{prop:statDist} can be written in the following
equivalent form.

\begin{prop} \label{prop:disting} Let $\cX$ and $\cY$ be distributions
  on the same finite domain $\Omega$. Then $\cX$ and $\cY$ are
  $\eps$-close if and only if, for every distinguisher $P\colon \Omega
  \to \zo$, we have
  \[
  \left|\Pr_{X \sim \cX}[P(X) = 1] - \Pr_{Y \sim \cY}[P(Y) = 1]\right|
  \leq \eps.
  \] \qed
\end{prop}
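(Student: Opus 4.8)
The plan is to derive this proposition directly from Proposition~\ref{prop:statDist} by exploiting the correspondence between Boolean predicates on $\Omega$ and events, which is exactly the reformulation sketched in the paragraph preceding the statement. First I would observe that a distinguisher $P\colon \Omega \to \zo$ carries the same information as the event $T_P \eqdef \{ x \in \Omega \colon P(x) = 1 \}$, and that the assignment $P \mapsto T_P$ is a bijection from the set of all predicates on $\Omega$ onto the collection of all subsets of $\Omega$. Under this identification we have
\[
\Pr_{X \sim \cX}[P(X) = 1] = \Pr_\cX[T_P], \qquad \Pr_{Y \sim \cY}[P(Y) = 1] = \Pr_\cY[T_P],
\]
so that $\left|\Pr_{X\sim\cX}[P(X)=1] - \Pr_{Y\sim\cY}[P(Y)=1]\right|$ is literally equal to $\left|\Pr_\cX[T_P] - \Pr_\cY[T_P]\right|$.

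Consequently, the clause ``for every distinguisher $P$, $\left|\Pr_{X\sim\cX}[P(X)=1] - \Pr_{Y\sim\cY}[P(Y)=1]\right| \leq \eps$'' is, clause by clause, logically equivalent to ``for every event $T \subseteq \Omega$, $\left|\Pr_\cX[T] - \Pr_\cY[T]\right| \leq \eps$'', since as $P$ ranges over all predicates the event $T_P$ ranges over all subsets of $\Omega$, and conversely every subset $T$ arises as $T_P$ for $P$ its indicator function. Proposition~\ref{prop:statDist} states precisely that this latter condition is equivalent to $\cX$ and $\cY$ being $\eps$-close, which finishes the argument.

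If one instead preferred a self-contained proof, I would simply repeat the two directions of Proposition~\ref{prop:statDist}. For the forward direction, given $\| \cX - \cY \| \leq \eps$, set $\Omega_\cX \eqdef \{ x \in \Omega \colon \cX(x) \geq \cY(x) \}$ and estimate $\Pr_\cX[T_P] - \Pr_\cY[T_P] = \sum_{x \in T_P}(\cX(x) - \cY(x)) \leq \sum_{x \in \Omega_\cX}(\cX(x)-\cY(x)) = \| \cX - \cY \|$, then apply the symmetric bound with the roles of $\cX$ and $\cY$ exchanged to control the absolute value. For the reverse direction, instantiate the hypothesis with the single distinguisher $P$ equal to the indicator of $\Omega_\cX$; for that choice the probability gap equals $\| \cX - \cY \|$ exactly, so the hypothesis forces $\| \cX - \cY \| \leq \eps$.

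I do not expect any real obstacle: the only point requiring care is the trivial bookkeeping that $P \leftrightarrow T_P$ is a \emph{surjection} onto all events, so that quantifying over distinguishers loses nothing compared with quantifying over events; everything substantive is inherited verbatim from Proposition~\ref{prop:statDist}.
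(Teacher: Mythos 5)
Your proposal is correct and matches the paper exactly: the paper gives no separate proof of Proposition~\ref{prop:disting}, presenting it (with a \qed) as an immediate restatement of Proposition~\ref{prop:statDist} via precisely the predicate--event correspondence $P \leftrightarrow T_P$ that you spell out. Your optional self-contained argument is also sound and just reproduces the two directions of Proposition~\ref{prop:statDist}.
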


The notion of \emph{convex combination} of distributions is defined as
follows:
\begin{defn}
  Let $\cX_1, \cX_2, \ldots, \cX_n$ be probability distributions over
  a finite space $\Omega$ and $\alpha_1, \alpha_2, \ldots, \alpha_n$
  be nonnegative real values that sum up to $1$. Then the \emph{convex
    combination} \[ \alpha_1 \cX_1 + \alpha_2 \cX_2 + \cdots +
  \alpha_n \cX_n \] is a distribution $\cX$ over $\Omega$ given by the
  probability measure
  \[ \Pr_\cX(x) \eqdef \sum_{i=1}^n \alpha_i \Pr_{\cX_i}(x), \] for
  every $x \in \Omega$.
\end{defn}

When regarding probability distributions as vectors of probabilities
(with coordinates indexed by the elements of the sample space), convex
combination of distributions is merely a linear combination
(specifically, a point-wise average) of their vector forms.  Thus
intuitively, one expects that if a probability distribution is close
to a collection of distributions, it must be close to any convex
combination of them as well.  This is made more precise in the
following proposition.

\begin{prop} \label{prop:ConvexOfClose} Let $\cX_1, \cX_2, \ldots,
  \cX_n$ be probability distributions, all defined over the same
  finite set $\Omega$, that are all $\eps$-close to some distribution
  $\cY$. Then any convex combination
  \[
  \cX := \alpha_1 \cX_1 + \alpha_2 \cX_2 + \cdots + \alpha_n \cX_n
  \]
  is $\eps$-close to $\cY$.
\end{prop}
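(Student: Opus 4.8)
The plan is to reduce the statement to the triangle inequality for the $\ell_1$ norm, exploiting the fact that the coefficients $\alpha_i$ sum to $1$. First I would observe that, since $\sum_{i=1}^n \alpha_i = 1$, we may write $\cY = \sum_{i=1}^n \alpha_i \cY$ as an identity of probability vectors over $\Omega$. Consequently, viewing all the distributions as vectors indexed by the elements of $\Omega$, we obtain the identity $\cX - \cY = \sum_{i=1}^n \alpha_i (\cX_i - \cY)$.

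Next I would apply the triangle inequality together with the positive homogeneity of the $\ell_1$ norm and the nonnegativity of the $\alpha_i$ to get $\| \cX - \cY \|_1 \leq \sum_{i=1}^n \alpha_i \| \cX_i - \cY \|_1$. Since each $\cX_i$ is $\eps$-close to $\cY$, by definition $\| \cX_i - \cY \|_1 \leq 2\eps$, hence $\| \cX - \cY \|_1 \leq 2\eps \sum_{i=1}^n \alpha_i = 2\eps$. Dividing by $2$ gives that the statistical distance $\| \cX - \cY \| = \frac{1}{2}\| \cX - \cY \|_1$ is at most $\eps$, which is exactly the claim.

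Alternatively, the same argument can be phrased through Proposition~\ref{prop:disting}: for any distinguisher $P\colon \Omega \to \zo$, the definition of a convex combination gives $\Pr_{X\sim\cX}[P(X)=1] = \sum_{i=1}^n \alpha_i \Pr_{X\sim\cX_i}[P(X)=1]$, and subtracting $\Pr_{Y\sim\cY}[P(Y)=1] = \sum_{i=1}^n \alpha_i \Pr_{Y\sim\cY}[P(Y)=1]$ and applying the triangle inequality bounds the difference in absolute value by $\sum_{i=1}^n \alpha_i \eps = \eps$; Proposition~\ref{prop:disting} then yields the conclusion.

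There is no real obstacle here; the only point requiring the slightest care is the use of $\sum_i \alpha_i = 1$ to rewrite $\cY$ as a convex combination of copies of itself (without it the scaling would not balance), together with the observation that ``averaging the distributions and then measuring an event'' coincides with ``measuring the event in each component and then averaging'', which is immediate from the definition of convex combination. I would also remark in passing that the argument is insensitive to the choice of distance: it goes through verbatim with the statistical distance replaced by any $\ell_p$ distance, since only the triangle inequality and positive homogeneity of the norm are used.
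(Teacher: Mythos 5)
Your proof is correct and rests on the same idea as the paper's: linearity of the convex combination together with the triangle inequality (your second paragraph, via distinguishers, is essentially the paper's argument verbatim). The only difference is presentational --- the paper proves the case $n=2$ by writing $\Pr_{\cX_i}[T]=\Pr_\cY[T]+\eps_i$ for an arbitrary event $T$ and then appeals to induction, whereas you handle general $n$ in one step at the level of the $\ell_1$ norm.
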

\begin{proof}
  We give a proof for the case $n=2$, which generalizes to any larger
  number of distributions by induction.  Let $T \subseteq \Omega$ be
  any nonempty subset of $\Omega$.  Then we have
  \begin{eqnarray*}
    | \Pr_\cX[T] - \Pr_\cY[T] | &=& | \alpha_1 \Pr_{\cX_1}[T] + (1-\alpha_1) \Pr_{\cX_2}[T] - \Pr_\cY[T] | \\
    &=& | \alpha_1 (\Pr_{\cY}[T] + \eps_1) + (1-\alpha_1) (\Pr_{\cY}[T] + \eps_2) - \Pr_\cY[T] |,
  \end{eqnarray*}
  where $|\eps_1|, |\eps_2| \leq \eps$ by the assumption that $\cX_1$
  and $\cX_2$ are $\eps$-close to $\cY$. Hence the distance simplifies
  to
  \[
  |\alpha_1 \eps_1 + (1-\alpha_1) \eps_2|,
  \]
  and this is at most $\eps$.
\end{proof}
In a similar manner, it is straightforward to see that a convex
combination $(1-\eps)\cX + \eps \cY$ is $\eps$-close to $\cX$.

Sometimes, in order to show a claim for a probability distribution it
may be easier, and yet sufficient, to write the distribution as a
convex combination of ``simpler'' distributions and then prove the
claim for the simpler components. We will examples of this technique
when we analyze constructions of extractors and condensers.

\subsection{Entropy}

A central notion in the study of randomness is related to the
\emph{information content} of a probability distribution. Shannon
formalized this notion in the following form:

\begin{defn}
  Let $\cX$ be a distribution on a finite domain $\Omega$. The
  \emph{Shannon entropy} of $\cX$ (in bits) is defined as
  \[
  H(\cX) := \sum_{x \in \supp(\cX)} -\cX(x) \log_2 \cX(x) = \Ex_{X
    \sim \cX}[ - \log_2 \cX(X) ].
  \]
\end{defn}

Intuitively, Shannon entropy quantifies the number of bits required to
specify a sample drawn from $\cX$ \emph{on average}. This intuition is
made more precise, for example by Huffman coding that suggest an
efficient algorithm for encoding a random variable to a binary
sequence whose expected length is almost equal to the Shannon entropy
of the random variable's distribution (cf.\ \cite{ref:cover}).  For
numerous applications in computer science and cryptography, however,
the notion of Shannon entropy--which is an \emph{average-case}
notion--is not well suitable and a \emph{worst-case} notion of entropy
is required.  Such a notion is captured by \emph{min-entropy}, defined
below.

\begin{defn}
  Let $\cX$ be a distribution on a finite domain $\Omega$. The
  \emph{min-entropy} of $\cX$ (in bits) is defined as
  \[
  H_\infty(\cX) := \min_{x \in \supp(\cX)} - \log_2 \cX(x).
  \]
\end{defn}
Therefore, the min-entropy of a distribution is at least $k$ if and
only if the distribution assigns a probability of at most $2^{-k}$ to
any point of the sample space (such a distribution is called a
$k$-source).  It also immediately follows by definitions that a
distribution having min-entropy at least $k$ must also have a Shannon
entropy of at least $k$.  When $\Omega = \zo^n$, we define the
\emph{entropy rate} of a distribution $\cX$ on $\Omega$ as
$H_\infty(\cX)/n$.

A particular class of probability distributions for which the notions
of Shannon entropy an min-entropy coincide is \emph{flat
  distributions}\index{distribution!flat}.  A distribution on $\Omega$
is called flat if it is uniformly supported on a set $T \subseteq
\Omega$; that is, if it assigns probability $1/|T|$ to all the points
on $T$ and zeros elsewhere.  The Shannon- and min-entropies of such a
distribution are both $\log_2 |T|$ bits.

An interesting feature of flat distributions is that their convex
combinations can define any arbitrary probability distribution with a
nice preservence of the min-entropy, as shown below.

\begin{prop} \label{prop:convex} Let $K$ be an integer. Then any
  distribution $\cX$ with min-entropy at least $\log K$ can be
  described as a convex combination of flat distributions with
  min-entropy $\log K$.
\end{prop}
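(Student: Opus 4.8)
The plan is to argue by induction on the number of \emph{fractional} points of $\cX$, that is, on the cardinality of $F(\cX) \eqdef \{x \in \supp(\cX) \colon \cX(x) < 1/K\}$. The useful reformulation to start from is that $H_\infty(\cX) \ge \log K$ says exactly that $\cX(x) \le 1/K$ for all $x$, and since the probabilities sum to $1$ this already forces $|\supp(\cX)| \ge K$. If $F(\cX) = \emptyset$, then every point of the support has probability exactly $1/K$, so there are exactly $K$ of them and $\cX$ is itself a flat distribution of min-entropy $\log K$; this is the base case.

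For the inductive step, assume $F(\cX) \ne \emptyset$. The first observation is that then $|F(\cX)| \ge 2$: the mass carried by the saturated points (those of probability exactly $1/K$) is a multiple of $1/K$, so the mass on $F(\cX)$ is a positive multiple of $1/K$, hence at least $1/K$, which no single point of probability $< 1/K$ can carry. Now let $S$ be a set of $K$ points of largest $\cX$-probability, put $s \eqdef \min_{x\in S}\cX(x) > 0$ and $g \eqdef \max_{x\notin S}\cX(x)$ (with $g \eqdef 0$ when $\supp(\cX)\subseteq S$, a case which forces $\cX$ flat and is already covered); note $0 \le g \le s \le 1/K$, and in fact $g < 1/K$ and $s < 1/K$ unless $\cX$ is flat. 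Let $\cX_0 \eqdef \U_S$, the uniform (hence flat, min-entropy $\log K$) distribution on $S$, set $\alpha \eqdef \min\{Ks,\ 1 - Kg\} \in (0,1)$, and define $\cX' \eqdef (\cX - \alpha\cX_0)/(1-\alpha)$. The goal is to show that $\cX'$ is a probability distribution with $H_\infty(\cX') \ge \log K$ and $|F(\cX')| < |F(\cX)|$. Granting this, the induction hypothesis writes $\cX'$ as a convex combination of flat distributions of min-entropy $\log K$, and then $\cX = \alpha\cX_0 + (1-\alpha)\cX'$ exhibits $\cX$ as such a combination as well.

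Establishing these three properties of $\cX'$ is the crux, and I expect it to be the only genuinely delicate part. That $\cX'$ has total mass $1$ is immediate; what needs care is nonnegativity, the bound $\cX'(x) \le 1/K$, and the strict drop in the fractional count, and I would handle these by cases according to which term attains the minimum in $\alpha$. If $\alpha = Ks \le 1 - Kg$ (equivalently $s + g \le 1/K$), then subtracting $\alpha/K = s$ from each point of $S$ zeroes out every point of $S$ of probability $s$ — of which there is at least one — so the support, and a fortiori $|F|$, strictly decreases, while renormalizing by $1/(1-Ks)$ keeps all probabilities $\le 1/K$ precisely because $s + g \le 1/K$. If instead $\alpha = 1 - Kg < Ks$ (so $s + g > 1/K$, whence $g > 0$), then renormalizing by $1/(Kg)$ pushes every point outside $S$ of probability exactly $g$ up to probability $1/K$, converting at least one fractional point into a saturated one, while the inequality $s > 1/K - g$ keeps every point of $S$ positive and no larger than $1/K$, so no point becomes newly fractional; in both branches the remaining points of $S$ and the remaining points outside $S$ are checked to stay in $[0,1/K]$ routinely. (A slicker but less elementary alternative is to observe that the distributions on $\Omega$ with min-entropy $\ge \log K$ form the polytope $\{v \colon 0 \le v_x \le 1/K \text{ for all } x,\ \sum_x v_x = 1\}$, argue — via the same ``at least two fractional coordinates'' observation — that its vertices are exactly the flat distributions of min-entropy $\log K$, and invoke that every polytope is the convex hull of its vertices.)
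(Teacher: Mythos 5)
Your inductive argument is correct, and it takes a genuinely different route from the paper. The paper's proof is exactly the ``slicker alternative'' you sketch in your closing parenthesis: it observes that the distributions of min-entropy at least $\log K$ form the polytope $\{0 \le p_i \le 1/K,\ \sum_i p_i = 1\}$, asserts that its corner points are the flat distributions, and invokes the fact that a bounded polytope is the convex hull of its vertices. Your main argument instead peels off flat distributions greedily: subtract the largest feasible multiple $\alpha=\min\{Ks,\,1-Kg\}$ of $\U_S$ for $S$ the $K$ heaviest points, so that either some point of $S$ drops to mass $0$ or some point outside $S$ is pushed up to the cap $1/K$; in either branch the number of ``fractional'' coordinates strictly decreases, and your case analysis correctly verifies nonnegativity, the $1/K$ upper bound, and the strict drop (in particular, saturated points stay saturated and no new fractional points are created). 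What your version buys is a fully self-contained, constructive proof that also bounds the number of flat components by $|F(\cX)|+1$ and makes explicit the one nontrivial fact the paper leaves unproven --- that the vertices of this polytope are precisely the flat distributions, which rests on your ``at least two fractional points'' observation; what the paper's version buys is brevity, at the cost of quietly relying on the Minkowski vertex-decomposition theorem. The only cosmetic slip is the phrase ``$s<1/K$ unless $\cX$ is flat'': what you mean (and what you in fact use) is ``unless $\cX$ is uniform on exactly $K$ points,'' since a flat distribution on more than $K$ points still has $s<1/K$; this case is excluded in the inductive step anyway, so nothing breaks.
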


\begin{proof}
  Suppose that $\cX$ is distributed on a finite domain $\Omega$.  Any
  probability distribution on $\Omega$ can be regarded as a real
  vector with coordinates indexed by the elements of $\Omega$,
  encoding its probability measure.  The set of distributions
  $(p_i)_{i \in \Omega}$ with min-entropy at least $\log K$ form a
  simplex
  \[
  \begin{array}{c}
    (\forall i \in \Omega)\ 0 \leq p_i \leq 1/K,\\
    \sum_{i \in \Omega} p_i = 1,
  \end{array}
  \]
  whose corner points are flat distributions. The claim follows since
  every point in the simplex can be written as a convex combination of
  the corner points.
\end{proof}

\section{Extractors and Condensers}

\subsection{Definitions}

Intuitively, an extractor is a function that transforms impure
randomness; i.e., a random source containing a sufficient amount of
entropy, to an almost uniform distribution (with respect to a suitable
distance measure; e.g., statistical distance).

Suppose that a source $\cX$ is distributed on a sample space $\Omega
:= \zo^n$ with a distribution containing at least $k$ bits of
min-entropy. The goal is to construct a function $f\colon \zo^n \to
\zo^m$ such that $f(\cX)$ is $\eps$-close to the uniform distribution
$\U_m$, for a negligible distance $\eps$ (e.g., $\eps =
2^{-\Omega(n)}$). Unfortunately, without having any further knowledge
on $\cX$, this task becomes impossible. To see why, consider the
simplest nontrivial case where $k = n-1$ and $m=1$, and suppose that
we have come up with a function $f$ that extracts one almost unbiased
coin flip from any $k$-source. Observe that among the set of
pre-images of $0$ and $1$ under $f$; namely, $f^{-1}(0)$ and
$f^{-1}(1)$, at least one must have size $2^{n-1}$ or more. Let $\cX$
be the flat source uniformly distributed on this set. The distribution
$\cX$ constructed this way has min-entropy at least $n-1$ yet $f(\cX)$
is always constant. In order to alleviate this obvious impossibility,
one of the following two solutions is typically considered:

\begin{enumerate}
\item Assume some additional structure on the source: In the
  counterexample above, we constructed an opportunistic choice of the
  source $\cX$ from the function $f$.  However, in general the source
  obtained this way may turn out to be exceedingly complex and
  unstructured, and the fact that $f$ is unable to extract any
  randomness from this particular choice of the source might be of
  little concern.  A suitable way to model this observation is to
  require a function $f$ that is expected to extract randomness only
  from a \emph{restricted class} of randomness sources.

  The appropriate restriction in question may depend on the context
  for which the extractor is being used. A few examples that have been
  considered in the literature include:

  \begin{itemize}
  \item Independent sources: In this case, the source $\cX$ is
    restricted to be a product distribution with two or more
    components. In particular, one may assume the source to be the
    product distribution of $r \geq 2$ independent random variables
    $X_1, \ldots, X_r \in \zo^{n'}$ that are each sampled from an
    arbitrary $k'$-source (assuming $n = rn'$ and $k=rk'$).

  \item Affine sources: We assume that the source $\cX$ is uniformly
    supported on an arbitrary translation of an unknown
    $k$-dimensional vector subspace of\footnote{Throughout the thesis,
      for a prime power $q$, we will use the notation
      $\F_q$\index{notation!$\F_q := \mathrm{GF}(q)$} to denote the
      finite field with $q$ elements.} $\F_2^n$.  A further
    restriction of this class is known as \emph{bit-fixing sources}. A
    bit-fixing source is a product distribution of $n$ bits $(X_1,
    \ldots, X_n)$ where for some unknown set of coordinates positions
    $S \subseteq [n]$ of size $k$, the variables $X_i$ for $i \in S$
    are independent and uniform bits, but the rest of the $X_i$'s are
    fixed to unknown binary values. In Chapter~\ref{chap:wiretap}, we
    will discuss these classes of sources in more detail.

  \item Samplable sources: This is a class of sources first studied by
    Trevisan and Vadhan \cite{TV00}.  In broad terms, a samplable
    source is a source $\cX$ such that a sample from $\cX$ can
    produced out of a sequence of random and independent coin flips by
    a restricted computational model. For example, one may consider
    the class of sources of min-entropy $k$ such that for any source
    $\cX$ in the class, there is a function $f\colon \zo^r \to \zo^n$,
    for some $r \geq k$, that is computable by polynomial-size Boolean
    circuits and satisfies $f(\U_r) \sim \cX$.
  \end{itemize}

  For restricted classes of sources such as the above examples, there
  are deterministic functions that are good extractors for all the
  sources in the family. Such deterministic functions are known as
  \emph{seedless extractors} for the corresponding family of
  sources. For instance, an affine extractor for entropy $k$ and error
  $\eps$ (in symbols, an affine $(k,\eps)$-extractor) is a mapping
  $f\colon \F_2^n \to \F_2^m$ such that for every affine $k$-source
  $\cX$, the distribution $f(\cX)$ is $\eps$-close to the uniform
  distribution $\U_m$.

  In fact, it is not hard to see that for any family of not ``too
  many'' sources, there is a function that extracts almost the entire
  source entropy of the sources (examples include affine $k$-sources,
  samplable $k$-sources, and two independent sources\footnote{For this
    case, it suffices to count the number of independent \emph{flat}
    sources.}).  This can be shown by a probabilistic argument that
  considers a random function and shows that it achieves the desired
  properties with overwhelming probability.

\item Allow a short \emph{random seed}: The second solution is to
  allow extractor to use a small amount of pure randomnness as a
  ``catalyst''.  Namely, the extractor is allowed to require two
  inputs: a sample from the unknown source and a short sequence of
  random and independent bits that is called the \emph{seed}.  In this
  case, it turns out that extracting almost the entire entropy of the
  weak source becomes possible, without any structural assumptions on
  the source and using a very short independent seed.  Extractors that
  require an auxiliary random input are called \emph{seeded
    extractors}.  In fact, an equivalent of looking at seeded
  extractors is to see them as \emph{seedless} extractors that assume
  the source to be structured as a product distribution of two
  sources: an arbitrary $k$-source and the uniform distribution.

\end{enumerate}

For the rest of this chapter, we will focus on seeded
extractors. Seedless extractors (especially affine extractors) are
treated in Chapter~\ref{chap:wiretap}.  A formal definition of
(seeded) extractors is as follows.

\begin{defn} \index{extractor} A function $f\colon \zo^n \times \zo^d
  \to \zo^m$ is a \emph{$(k, \eps)$-extractor} if, for every
  $k$-source $\cX$ on $\zo^n$, the distribution $f(\cX, \U_d)$ is
  $\eps$-close (in statistical distance) to the uniform distribution
  on $\zo^m$.  The parameters $n$, $d$, $k$, $m$, and $\eps$ are
  respectively called the \emph{input length}, \emph{seed length},
  \emph{entropy requirement}, \emph{output length}, and \emph{error}
  of the extractor.
\end{defn}

An important aspect of randomness extractors is their computational
complexity.  For most applications, extractors are required to be
efficiently computable functions.  We call an extractor
\emph{explicit} \index{extractor!explicit} if it is computable in
polynomial time (in its input length). Though it is rather
straightforward to show existence of good extractors using
probabilistic arguments, coming up with a nontrivial explicit
construction can turn out a much more challenging task.  We will
discuss and analyze several important explicit constructions of seeded
extractors in Section~\ref{sec:extrConstr}.

Note that, in the above definition of extractors, achieving an output
length of up to $d$ is trivial: the extractor can merely output its
seed, which is guaranteed to have a uniform distribution! Ideally the
output of an extractor must be ``almost independent'' of its seed, so
that the extra randomness given in the seed can be ``recycled''. This
idea is made precise in the notion of \emph{strong extractors} given
below.

\begin{defn} \index{extractor!strong} A function $f\colon \zo^n \times
  \zo^d \to \zo^m$ is a strong \emph{$(k, \eps)$-extractor} if, for
  every $k$-source $\cX$ on $\zo^n$, and random variables $X \sim
  \cX$, $Z \sim \U_d$, the distribution of the random variable $(X,
  f(X, Z))$ is $\eps$-close (in statistical distance) to $\U_{d+m}$.
\end{defn}

A fundamental property of strong extractors that is essential for
certain applications is that, the extractor's output remains close to
uniform for almost all fixings of the random seed. This is made clear
by an ``averaging argument'' stated formally in the proposition below.

\begin{prop} \label{prop:strongDist} Consider joint distributions
  $\tilde{\cX} := (\cZ, \cX)$ and $\tilde{\cY} := (\cZ, \cY)$ that are
  $\eps$-close, where $\cX$ and $\cY$ are distributions on a finite
  domain $\Omega$, and $\cZ$ is uniformly distributed on $\zo^d$. For
  every $z \in \zo^d$, denote by $\cX_z$ the distribution of the
  second coordinate of $\tilde{\cX}$ conditioned on the first
  coordinate being equal to $z$, and similarly define $\cY_z$ for the
  distribution $\tilde{\cY}$.  Then, for every $\delta > 0$, at least
  $(1-\delta)2^d$ choices of $z \in \zo^d$ must satisfy
  \[
  \| \cX_z - \cY_z \| \leq \eps/\delta.
  \]
\end{prop}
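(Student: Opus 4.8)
The plan is to use a standard averaging (Markov-type) argument on the "per-seed" statistical distances. First I would observe that since $\tilde{\cX} = (\cZ,\cX)$ and $\tilde{\cY} = (\cZ,\cY)$ share the same first coordinate $\cZ = \U_d$, the statistical distance between them decomposes as an average of the conditional distances. Concretely, writing out $\| \tilde{\cX} - \tilde{\cY} \| = \frac{1}{2}\sum_{z,x} | \Pr_{\tilde{\cX}}(z,x) - \Pr_{\tilde{\cY}}(z,x) |$ and using $\Pr_{\tilde{\cX}}(z,x) = 2^{-d}\cX_z(x)$ (and likewise for $\tilde{\cY}$), one gets
\[
\| \tilde{\cX} - \tilde{\cY} \| = \frac{1}{2^d} \sum_{z \in \zo^d} \| \cX_z - \cY_z \|.
\]
So the hypothesis that $\tilde{\cX}$ and $\tilde{\cY}$ are $\eps$-close says precisely that the average of $\| \cX_z - \cY_z \|$ over the $2^d$ seeds is at most $\eps$.

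Next I would apply Markov's inequality to the nonnegative quantity $\| \cX_z - \cY_z \|$ over the uniform choice of $z$. Since its average is at most $\eps$, the fraction of $z$ for which $\| \cX_z - \cY_z \| > \eps/\delta$ is at most $\delta$. Hence at least $(1-\delta)2^d$ choices of $z$ satisfy $\| \cX_z - \cY_z \| \leq \eps/\delta$, which is exactly the claim.

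I don't expect any genuine obstacle here; the only point requiring a little care is the decomposition identity for the statistical distance of a joint distribution whose first marginal is uniform and identical on both sides — one should make sure the factor of $2^{-d}$ and the $\frac{1}{2}$ in the definition of statistical distance are handled correctly so that the average (not the sum) of the conditional distances appears. Once that identity is in place, the rest is just Markov's inequality. If one wanted to avoid even writing the identity explicitly, an alternative is to define the "bad" set $B := \{ z : \| \cX_z - \cY_z \| > \eps/\delta \}$, choose for each $z \in B$ an event $T_z \subseteq \Omega$ witnessing the distance (via Proposition~\ref{prop:statDist}), and consider the event $T := \{ (z,x) : z \in B,\ x \in T_z \}$ in the joint space; then $|\Pr_{\tilde{\cX}}[T] - \Pr_{\tilde{\cY}}[T]| = 2^{-d}\sum_{z \in B} (\Pr_{\cX_z}[T_z] - \Pr_{\cY_z}[T_z]) > 2^{-d} |B| \cdot (\eps/\delta)$, and since this is at most $\eps$ by Proposition~\ref{prop:statDist} applied to $\tilde{\cX},\tilde{\cY}$, we get $|B| < \delta 2^d$, as desired.
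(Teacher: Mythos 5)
Your proof is correct and follows essentially the same route as the paper: both establish that the statistical distance of the joint distributions equals the average over $z$ of the conditional distances $\| \cX_z - \cY_z \|$ (the paper writes this at the level of $\ell_1$ sums, you at the level of statistical distances, which is the same identity), and both then conclude by Markov's inequality. The factor bookkeeping in your decomposition is handled correctly, so there is nothing to fix.
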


\begin{proof}
  Clearly, for every $\omega \in \Omega$ and $z \in \zo^d$, we have
  $\cX_z(\omega) = 2^d \cX(z,\omega)$ and similarly, $\cY_z(\omega) =
  2^d \cY(z,\omega)$.  Moreover from the definition of statistical
  distance,
  \[
  \sum_{z \in \zo^d} \sum_{\omega \in \Omega} |\cX(z,\omega) -
  \cY(z,\omega)| \leq 2\eps.
  \]
  Therefore,
  \[
  \sum_{z \in \zo^d} \sum_{\omega \in \Omega} |\cX_z(\omega) -
  \cY_z(\omega)| \leq 2^{d+1} \eps,
  \]
  which can be true only if for at least $(1-\delta)$ fraction of the
  choices of $z$, we have
  \[
  \sum_{\omega \in \Omega} |\cX_z(\omega) - \cY_z(\omega)| \leq 2
  \eps/\delta,
  \]
  or in other words,
  \[
  \| \cX_z - \cY_z \| \leq \eps/\delta.
  \]
  This shows the claim.
\end{proof}

Thus, according to Proposition~\ref{prop:strongDist}, for a strong
$(k, \eps)$-extractor \[f\colon \zo^n \times \zo^d \to \zo^m\] and a
$k$-source $\cX$, for $1-\sqrt{\eps}$ fraction of the choices of $z
\in \zo^d$, the distribution $f(\cX, z)$ must be $\eps$-close to
uniform.

Extractors are specializations of the more general notion of
\emph{randomness condensers}.  Intuitively, a condenser transforms a
given weak source of randomness into a ``more purified'' but possibly
imperfect source. In general, the output entropy of a condenser might
be substantially less than the input entropy but nevertheless, the
output is generally required to have a substantially higher
\emph{entropy rate}. For the extremal case of extractors, the output
entropy rate is required to be $1$ (since the output is required to be
an almost uniform distribution).  Same as extractors, condensers can
be seeded or seedless, and also seeded condensers can be required to
be strong (similar to strong extractors). Below we define the general
notion of strong, seeded condensers.

\begin{defn} \index{condenser} A function $f\colon \zo^n \times \zo^d
  \to \zo^m$ is a \emph{strong $k \to_\eps k'$ condenser} if for every
  distribution $\cX$ on $\zo^n$ with min-entropy at least $k$, random
  variable $X \sim \cX$ and a \emph{seed} $Y \sim \U_d$, the
  distribution of $(Y, f(X, Y))$ is $\eps$-close to a distribution
  $(\U_d, \cZ)$ with min-entropy at least $d+k'$. The parameters $k$,
  $k'$, $\eps$, $k - k'$, and $m - k'$ are called the \emph{input
    entropy}, \emph{output entropy}, \emph{error}, the \emph{entropy
    loss} and the \emph{overhead} of the condenser, respectively.  A
  condenser is \emph{explicit} if it is polynomial-time computable.
\end{defn}

Similar to strong extractors, strong condensers remain effective under
almost all fixings of the seed. This follows immediately from
Proposition~\ref{prop:strongDist} and is made explicit by the
following corollary:

\begin{coro} \label{coro:strongExt} Let $f\colon \zo^n \times \zo^d
  \to \zo^m$ be a strong $k \to_\eps k'$ condenser.  Consider an
  arbitrary parameter $\delta > 0$ and a $k$-source $\cX$. Then, for
  all but at most a $\delta$ fraction of the choices of $z \in \zo^d$,
  the distribution $f(\cX, z)$ is $(\eps/\delta)$-close to a
  $k'$-source. \qed
\end{coro}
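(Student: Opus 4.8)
The plan is to obtain this as a direct consequence of Proposition~\ref{prop:strongDist}. Fix a $k$-source $\cX$ on $\zo^n$, let $X \sim \cX$, and let $Y \sim \U_d$ be an independent uniform seed. By the definition of a strong $k \to_\eps k'$ condenser, the joint distribution $\tilde{\cX} := (Y, f(X,Y))$ is $\eps$-close to some joint distribution $\tilde{\cY} := (\U_d, \cZ)$ whose min-entropy is at least $d + k'$. Both $\tilde{\cX}$ and $\tilde{\cY}$ live on $\zo^d \times \zo^m$ and have first marginal exactly $\U_d$, so Proposition~\ref{prop:strongDist} applies with $\Omega = \zo^m$ and the given $\delta$: all but at most a $\delta$ fraction of the seeds $z \in \zo^d$ satisfy
\[
\| \tilde{\cX}_z - \tilde{\cY}_z \| \leq \eps/\delta,
\]
where $\tilde{\cX}_z$, $\tilde{\cY}_z$ denote the conditional distributions of the second coordinate given that the first coordinate equals $z$.

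Next I would identify these conditional distributions. Because $X$ and $Y$ are independent, conditioning $(Y, f(X,Y))$ on $Y = z$ merely fixes the seed, so $\tilde{\cX}_z$ is exactly $f(\cX, z)$. On the other side $\tilde{\cY}_z = \cZ_z$, the distribution of the second coordinate of $(\U_d, \cZ)$ conditioned on the first coordinate being $z$. It then remains to verify that $\cZ_z$ is a $k'$-source for \emph{every} $z$, and this is where the joint min-entropy bound is used: for each $z$ and each $w \in \zo^m$,
\[
2^{-d}\,\cZ_z(w) = \Pr[\tilde{\cY} = (z,w)] \leq 2^{-(d+k')},
\]
hence $\cZ_z(w) \leq 2^{-k'}$, i.e. $H_\infty(\cZ_z) \geq k'$.

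Combining the two observations finishes the argument: for all but at most a $\delta$ fraction of $z \in \zo^d$, the distribution $f(\cX, z) = \tilde{\cX}_z$ is $(\eps/\delta)$-close to $\cZ_z$, which is a $k'$-source, which is precisely the claim. I do not expect a genuine obstacle here, since the statement is essentially a repackaging of Proposition~\ref{prop:strongDist}; the only mildly delicate points are the bookkeeping that the second-coordinate conditional of the ideal joint distribution inherits min-entropy $k'$ from the joint min-entropy bound $d+k'$, and the use of source--seed independence to identify $\tilde{\cX}_z$ with $f(\cX,z)$ — both of which unwind immediately from the definitions.
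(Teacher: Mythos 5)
Your proof is correct and follows the same route the paper intends: the paper derives this corollary directly from Proposition~\ref{prop:strongDist}, and you have simply filled in the (routine but worth stating) details — that independence of source and seed identifies the conditional $\tilde{\cX}_z$ with $f(\cX,z)$, and that the joint min-entropy bound $d+k'$ together with the exactly uniform first marginal forces each conditional $\cZ_z$ to be a $k'$-source.
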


Typically, a condenser is only interesting if the output entropy rate
$k'/m$ is considerably larger than the input entropy rate $k/n$.  From
the above definition, an extractor is a condenser with zero overhead.
Another extremal case corresponds to the case where the entropy loss
of the condenser is zero. Such a condenser is called
\emph{lossless}\index{condenser!lossless}.  We will use the
abbreviated term \emph{$(k, \eps)$-condenser} for a lossless condenser
with input entropy $k$ (equal to the output entropy) and error $\eps$.
Moreover, if a function is a $(k_0, \eps)$-condenser for every $k_0
\leq k$, it is called a $(\leq k, \eps)$-condenser. Most known
constructions of lossless condensers (and in particular, all
constructions used in this thesis) are $(\leq k, \eps)$-condensers for
their entropy requirement $k$.

Traditionally, lossless condensers have been used as intermediate
building blocks for construction of extractors. Having a good lossless
condenser available, for construction of extractors it would suffice
to focus on the case where the input entropy is large.  Nevertheless,
lossless condensers have been proved to be useful for a variety of
applications, some of which we will discuss in this thesis.

\subsection{Almost-Injectivity of Lossless
  Condensers} \label{sec:inject}

Intuitively, an extractor is an almost ``uniformly surjective''
mapping.  That is, the extractor mapping distributes the probability
mass of the input source almost evenly among the elements of its
range.

On the other hand, a lossless condenser preserves the entire source
entropy on its output and intuitively, must be an almost injective
function when restricted to the domain defined by the input
distribution. In other words, in the mapping defined by the condenser
``collisions'' rarely occur and in this view, lossless condensers are
useful ``hashing'' tools.  In this section we formalize this intuition
through a simple practical application.

Given a source $\cX$ and a function $f$, if $f(\cX)$ has the same
entropy as that of $\cX$ (or in other words, if $f$ is a perfectly
lossless condenser for $\cX$) we expect that from the outcome of the
function, its input when sampled from $\cX$ must be
reconstructible. For flat distributions (that is, those that are
uniform on their support) and considering an error for the condenser,
this is shown in the following proposition. We will use this simple
fact several times throughout the thesis.

\begin{prop} \label{prop:flatmap} Let $\cX$ be a flat distribution
  with min-entropy $\log K$ over a finite sample space $\Omega$ and
  $f\colon \Omega \to \Gamma$ be a mapping to a finite set $\Gamma$.
  \begin{enumerate}
  \item If $f(\cX)$ is $\eps$-close to having min-entropy $\log K$,
    then there is a set $T \subseteq \Gamma$ of size at least
    $(1-2\eps)K$ such that
    \[
    (\forall y \in T\text{ and }\forall x,x'\in \supp(\cX))\quad
    f(x)=y \land f(x') = y \Rightarrow x = x'.
    \]
  \item Suppose $|\Gamma| \geq K$. If $f(\cX)$ has a support of size
    at least $(1-\eps)K$, then it is $\eps$-close to having
    min-entropy $\log K$.
  \end{enumerate}
\end{prop}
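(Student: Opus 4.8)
The plan is to count collisions in both directions. For part~(1), since $\cX$ is flat with min-entropy $\log K$, it is uniform on a support $S := \supp(\cX)$ of size $K$, so $f(\cX)$ is the distribution on $\Gamma$ that assigns to each $y$ the mass $|f^{-1}(y)\cap S|/K$. Let $B := \{ y \in \Gamma \colon |f^{-1}(y)\cap S| \geq 2 \}$ be the set of ``collision'' images and let $b := |B|$. The hypothesis is that $f(\cX)$ is $\eps$-close (in statistical distance) to some distribution of min-entropy $\ge \log K$, i.e.\ one that puts mass at most $1/K$ on every point. I would compare $f(\cX)$ against that witness distribution on the event $B$: every $y\in B$ carries $f(\cX)$-mass at least $2/K$, whereas the witness carries at most $1/K$, so the statistical distance is at least $\sum_{y\in B} (|f^{-1}(y)\cap S|/K - 1/K)$. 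Now $\sum_{y\in B} |f^{-1}(y)\cap S| = K - |f(S)\setminus B| \ge K - (|f(S)| - b)$, and since $f$ is at least $2$-to-one on $B$ we also have $|f(S)| \le K - b$ (each $y\in B$ ``wastes'' at least one preimage). Combining, $\sum_{y\in B}(|f^{-1}(y)\cap S| - 1)/K \ge b/K$, hence $b \le \eps K$, wait—I should track the factor of $2$ carefully: the sum of $f(\cX)$-overmass over $B$ is at least $b/K$ in the worst case but the clean bound one wants is $b \le 2\eps K$, which comes out once the inequality $\sum_{y}(|f^{-1}(y)\cap S| - 1)_{+} \le 2\eps K$ is established from $\|f(\cX) - \mathcal{W}\| \le \eps$. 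Then take $T := f(S)\setminus B$, the injective-image set; its size is $|f(S)| - b \ge (K - b) - b \ge K - 2b \ge (1-2\eps)K$ after the corrected bookkeeping, and by construction every $y\in T$ has a unique preimage in $S$, which is exactly the stated implication (the implication is vacuous for $y \notin f(S)$, so one can even take $T \subseteq \Gamma$ freely).

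For part~(2), this is the converse packaging. Assume $|\Gamma|\ge K$ and that $f(\cX)$ has support of size at least $(1-\eps)K$; call this support $S' \subseteq \Gamma$, with $|S'| \ge (1-\eps)K$. I want a distribution $\cZ$ of min-entropy $\ge \log K$ with $\|f(\cX) - \cZ\| \le \eps$. The natural choice is to ``flatten'' $f(\cX)$: cap each probability at $1/K$ and redistribute the excess onto unused points of $\Gamma$ (which exist since $|\Gamma| \ge K \ge$ the number of points used, once we have room — here one should be slightly careful, but since $|\Gamma|\ge K$ and $f(\cX)$ is a single distribution there is always enough total ``capacity'' $|\Gamma|/K \ge 1$ to host a min-entropy-$\log K$ distribution). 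Concretely, let $\cZ$ be uniform on any $K$-element superset of $S'$ inside $\Gamma$ (possible since $|S'|\le K \le |\Gamma|$): then $\cZ$ has min-entropy exactly $\log K$. The statistical distance $\|f(\cX) - \cZ\|$ equals $\sum_{y\colon f(\cX)(y) > \cZ(y)} (f(\cX)(y) - \cZ(y))$; since $\cZ(y) = 1/K$ on all of $S'$ and $f(\cX)$ is supported on $S'$, this is $\sum_{y\in S'} (f(\cX)(y) - 1/K)_+ \le 1 - |S'|/K \le \eps$, because the total mass of $f(\cX)$ is $1$ and it must cover a support of size $\ge (1-\eps)K$ each of whose points would contribute $1/K$ to $\cZ$. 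That gives the claim.

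The main obstacle I anticipate is purely the constant-factor bookkeeping in part~(1): one must be careful whether ``collision image'' means multiplicity $\ge 2$ and how the wasted preimages translate through the definition of statistical distance as \emph{half} the $\ell_1$ distance, to land exactly on the factor $(1-2\eps)$ rather than $(1-\eps)$ or $(1-4\eps)$. The cleanest route is to fix the witness distribution $\mathcal{W}$ of min-entropy $\ge \log K$ guaranteed by the hypothesis, write $\|f(\cX) - \mathcal{W}\|_1 \le 2\eps$, restrict the $\ell_1$ sum to the set $B$ of collision images where $f(\cX)$ strictly exceeds $\mathcal{W}$, and note each such point contributes at least $1/K$; this directly yields $|B| \le 2\eps K$, and then $|T| = |f(S)| - |B|$ together with $|f(S)| \ge K - |B|$ (since $f$ restricted to $S$ loses at most $|B|$ in image size — in fact it loses exactly $\sum_{y\in B}(|f^{-1}(y)\cap S| - 1) \ge |B|$) gives $|T| \ge K - 2|B|$; one then re-examines which of these three inequalities is tight to conclude $|T| \ge (1-2\eps)K$. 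Parts~(2) and the converse direction are routine once the right $\cZ$ is chosen.
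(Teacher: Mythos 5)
Your approach is the same as the paper's: split the image into the injectively-hit part $T$ and the collision part $B$, bound the mass of $B$ against the min-entropy witness for part~(1), and flatten onto a $K$-element superset of the support for part~(2). Part~(2) is correct as written; note only that every $y$ in the support of $f(\cX)$ carries mass at least $1/K$ (its preimage count is at least $1$), which is why $(f(\cX)(y)-1/K)_+ = f(\cX)(y)-1/K$ there and the distance is exactly $1-|S'|/K \le \eps$.

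The one loose end is the constant in part~(1), which you flag but leave unresolved --- and in fact your \emph{first} computation was already the right one; the ``factor of $2$ correction'' is what loses the constant. Statistical distance at most $\eps$ gives, for the single event $B$, the one-sided bound $\Pr_{f(\cX)}[B]-\Pr_{\mathcal{W}}[B]\le\eps$ with no factor of $2$ (this is exactly Proposition~\ref{prop:statDist}). Since $\mathcal{W}$ puts mass at most $1/K$ on each point, this yields $\sum_{y\in B}(n_y-1)\le \eps K$ where $n_y:=|f^{-1}(y)\cap S|$. That single inequality gives both $|B|\le\eps K$ (each term is at least $1$) and $|f(S)|=K-\sum_{y\in B}(n_y-1)\ge(1-\eps)K$, hence $|T|=|f(S)|-|B|\ge(1-2\eps)K$ as required. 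If instead you restrict the two-sided $\ell_1$ bound $2\eps$ to $B$, both estimates degrade to $2\eps K$ and your chain only yields $(1-4\eps)K$; the ``re-examination'' you defer amounts precisely to replacing the $\ell_1$ form by the event form of statistical distance.
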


\begin{Proof}
  Suppose that $\cX$ is uniformly supported on a set $S \subseteq
  \Omega$ of size $K$, and denote by $\mu$ the distribution $f(\cX)$
  over $\Gamma$. For each $y \in \Gamma$, define \[n_y := |\{x \in
  \supp(\cX)\colon f(x) = y\}|.\] Moreover, define $T := \{ y \in
  \Gamma\colon n_y = 1\}$, and similarly, $T' := \{ y \in \Gamma\colon
  n_y \geq 2\}$. Observe that for each $y \in \Gamma$ we have $\mu(y)
  = n_i/K$, and also $\supp(\mu) = T \cup T'$.  Thus,
  \begin{equation} \label{eqn:TTp} |T| + \sum_{y \in T'} n_y = K.
  \end{equation}

  Now we show the first assertion. Denote by $\mu'$ a distribution on
  $\Gamma$ with min-entropy $K$ that is $\eps$-close to $\mu$, which
  is guaranteed to exist by the assumption.  The fact that $\mu$ and
  $\mu'$ are $\eps$-close implies that
  \[
  \sum_{y \in T'} | \mu(y) - \mu'(y) | \leq \eps \Rightarrow \sum_{y
    \in T'} (n_y - 1) \leq \eps K.
  \]
  In particular, this means that $|T'| \leq \eps K$ (since by the
  choice of $T'$, for each $y \in T'$ we have $n_y \geq 2$).
  Furthermore,
  \[
  \sum_{y \in T'} (n_y - 1) \leq \eps K \Rightarrow \sum_{y \in T'}
  n_y \leq \eps K + |T'| \leq 2\eps K.
  \]
  This combined with \eqref{eqn:TTp} gives
  \[
  |T| = K - \sum_{y \in T'} n_y \geq (1-2\eps) K
  \]
  as desired.

  For the second part, observe that $|T'| \leq \eps K$. Let $\mu'$ be
  any flat distribution with a support of size $K$ that contains the
  support of $\mu$. The statistical distance between $\mu$ and $\mu'$
  is equal to the difference between the probability mass of the two
  distributions on those elements of $\Gamma$ to which $\mu'$ assigns
  a bigger probability, namely,
  \[
  \frac{1}{K}(\supp(\mu') - \supp(\mu)) = \frac{\sum_{y \in T'}
    (n_y-1)}{K} = \frac{\sum_{y \in T'} n_y - |T'|}{K} =
  \frac{K-|T|-|T'|}{K},
  \]
  where we have used \eqref{eqn:TTp} for the last equality.  But
  $|T|+|T'| = |\supp(\mu)| \geq (1-\eps)K$, giving the required bound.
\end{Proof}

As a simple application of this fact, consider the following ``source
coding'' problem.  Suppose that Alice wants to send a message $x$ to
Bob through a noiseless communication channel, and that the message is
randomly sampled from a distribution $\cX$. Shannon's source coding
theorem roughly states that, there is a compression scheme that
encodes $x$ to a binary sequence $y$ of length $H(X)$ bits on average,
where $H(\cdot)$ denotes the Shannon entropy, such that Bob can
perfectly reconstruct $y$ from $x$ (cf.\
\cite{ref:cover}*{Chapter~5}).  If the distribution $\cX$ is known to
both Alice and Bob, they can use an efficient coding scheme such as
Huffman codes or Arithmetic coding to achieve this bound (up to a
small constant bits of redundancy).

On the other hand, certain \emph{universal} compression schemes are
known that guarantee an optimal compression provided that $\cX$
satisfies certain statistical properties. For instance, Lempel-Ziv
coding achieves the optimum compression rate without exact knowledge
of $\cX$ provided that is defined by a stationary, ergodic process
(cf.\ \cite{ref:cover}*{Chapter~13}).

Now consider a situation where the distribution $\cX$ is arbitrary but
only known to the receiver Bob. In this case, it is known that there
is no way for Alice to substantially compress her information without
interaction with Bob \cite{ref:AM01}.  On the other hand, if we allow
interaction, Bob may simply send a description of the probability
distribution $\cX$ to Alice so she can use a classical source coding
scheme to compress her information at the entropy.

Interestingly, it turns out that this task is still possible if the
amount of information sent to Alice is substantially lower than what
needed to fully encode the probability distribution $\cX$. This is
particularly useful if the bandwidth from Alice to Bob is
substantially lower than that of the reverse direction (consider, for
example, an ADSL connection) and for this reason, the problem is
dubbed as the \emph{asymmetric communication channel
  problem}\index{asymmetric communication}.  In particular, Watkinson
et al.~\cite{ref:WAF01} obtain a universal scheme with $H(\cX)+2$ bits
of communication from Alice to Bob and $n(H(\cX)+2)$ bits from Bob to
Alice, where $n$ is the bit-length of the message. Moreover, Adler et
al.~\cite{ref:ADHP06} obtain strong lower bounds on the number of
rounds of communication between Alice and Bob.

Now let us impose a further restriction on $\cX$ that it is uniformly
supported on a set $S \subseteq \zo^n$, and Alice knows nothing about
$S$ but its size.  If we disallow interaction between Alice and Bob,
there would still be no deterministic way for Alice to
deterministically compress her message.  This is easy to observe by
noting that any deterministic, and compressing, function
$\varphi\colon \zo^n \to \zo^m$, where $m<n$, has an output value with
as many as $2^{n-m}$ pre-images, and an adversarial choice of $S$ that
concentrates on the set of such pre-images would force the compression
scheme to fail.

However, let us allow the encoding scheme to be randomized, and err
with a small probability over the randomness of the scheme and the
message.  In this case, Alice can take a strong lossless condenser
$f\colon \zo^n \times \zo^d \to \zo^m$ for input entropy $k := \log
|S|$, choose a uniformly random seed $z \in \zo^d$, and transmit $y :=
(z,f(x,z))$ to Bob. Now we argue that Bob will be able to recover $x$
from $y$.

Let $\eps$ denote the error of the condenser. Since $f$ is a lossless
condenser for $\cX$, we know that, for $Z \sim \U_d$ and $X \sim \cX$,
the distribution of $(Z, f(X, Z))$ is
$\eps$-close to some distribution $(\U_d, \cY)$, with min-entropy at
least $d+k$.  Thus by Corollary~\ref{coro:strongExt} it follows that, for
at least $1-\sqrt{\eps}$ fraction of the choices of $z \in \zo^d$, the
distribution $\cY_z := f(\cX,z)$ is $\sqrt{\eps}$-close to having
min-entropy $k$. For any such ``good seed'' $z$,
Proposition~\ref{prop:flatmap} implies that only for at most
$2\sqrt{\eps}$ fraction of the message realizations $x \in S$ can the
encoding $f(x,z)$ be confused with a different encoding $f(x',z)$ for
some $x' \in S$, $x' \neq x$.  Altogether we conclude that, from the
encoding $y$, Bob can uniquely deduce $x$ with probability at least
$1-3\sqrt{\eps}$, where the probability is taken over the randomness
of the seed and the message distribution $\cX$.

The amount of communication in this encoding scheme is $m+d$ bits.
Using an optimal lossless condenser for $f$, the encoding length
becomes $k+O(\log n)$ with a polynomially small (in $n$) error
probability (where the exponent of the polynomial is arbitrary and
affects the constant in the logarithmic term).  On the other hand,
with the same error probability, the explict condenser of
Theorem~\ref{thm:CRVW} would give an encoding length $k+O(\log^3
n)$. Moreover, the explicit condenser of Theorem~\ref{thm:GUVcond}
results in length $k(1+\alpha) + O_\alpha(\log n)$ for any arbitrary
constant $\alpha > 0$.

\section{Constructions} \label{sec:extrConstr}

We now turn to explicit constructions of strong extractors and
lossless condensers.

Using probabilistic arguments, Radhakrishan and Ta-Shma
\cite{ref:lowerbounds} showed that, for every $k, n, \eps$, there is a
strong $(k, \eps)$-extractor with seed length $d = \log (n-k) + 2 \log
(1/\eps) + O(1)$ and output length $m = k - 2\log (1/\eps) - O(1)$. In
particular, a random function achieves these parameters with
probability $1-o(1)$. Moreover, their result show that this trade-off
is almost the best one can hope for.

Similar trade-offs are known for lossless condensers as well.
Specifically, the probabilistic construction of Radhakrishan and
Ta-Shma has been extended to the case of lossless condensers by
Capalbo et al.~\cite{ref:CRVW02}, where they show that a random
function is with high probability a strong lossless $(k,
\eps)$-condenser with seed length $d = \log n + \log(1/\eps) + O(1)$
and output length $m = k+\log(1/\eps)+ O(1)$. Moreover, this tradeoff
is almost optimal as well.

In this section, we introduce some important \emph{explicit}
constructions of both extractors and lossless condensers that are used
as building blocks of various constructions in the thesis. In
particular, we will discuss extractors and lossless condensers
obtained by the Leftover Hash Lemma, Trevisan's extractor, and a
lossless condenser due to Guruswami, Umans, and Vadhan.

\subsection{The Leftover Hash Lemma}

One of the foremost explicit constructions of extractors is given by
the \emph{Leftover Hash Lemma} first stated by Impagliazzo, Levin, and
Luby \cite{ref:ILL89}. This extractor achieves an optimal output
length $m = k - 2\log(1/\eps)$ albeit with a substantially large seed
length $d = n$.  Moreover, the extractor is a linear function for
every fixing of the seed.  In its general form, the lemma states that
any \emph{universal family of hash functions} can be transformed into
an explicit extractor. The universality property required by the hash
functions is captured by the following definition.

\newcommand{\cH}{\mathcal{H}}
\newcommand{\hLin}{\mathcal{H}_{\mathsf{lin}}}

\begin{defn} \label{defn:pwindep} \index{universal hash family}
  A family of functions $\cH = \{h_1, \ldots, h_D\}$ where $h_i\colon
  \zo^n \to \zo^m$ for $i=1, \ldots, D$ is called \emph{universal} if,
  for every fixed choice of $x, x' \in \zo^n$ such that $x \neq x'$
  and a uniformly random $i \in [D] := \{1, \ldots,
  D\}$\index{notation!$[n] := \{1,\ldots,n\}$} we have
  \[
  \Pr_i [h_i(x) = h_i(x')] \leq 2^{-m}.
  \]
\end{defn}

One of the basic examples of universal hash families is what we call
\emph{the linear family}, defined as follows. Consider an arbitrary
isomorphism $\varphi\colon \F_2^n \to \F_{2^n}$ between the vector
space $\F_2^n$ and the extension field $\F_{2^n}$, and let $0 < m \leq
n$ be an arbitrary integer.  The linear family $\hLin$ is the set $\{
h_\alpha\colon \alpha \in \F_{2^n} \}$ of size $2^n$ that contains a
function for each element of the extension field $\F_{2^n}$.  For each
$\alpha$, the mapping $h_\alpha$ is given by
\[
h_\alpha(x) := (y_1, \ldots, y_m), \text{ where $(y_1, \ldots, y_n) :=
  \varphi^{-1}(\alpha \cdot \varphi(x))$}.
\]
Observe that each function $h_\alpha$ can be expressed as a linear
mapping from $\F_2^n$ to $\F_2^m$.  Below we show that this family is
pairwise independent.

\begin{prop}
  The linear family $\hLin$ defined above is universal.
\end{prop}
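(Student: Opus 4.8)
The plan is to exploit the $\F_2$-linearity of each $h_\alpha$ together with the fact that multiplication by a fixed nonzero element of $\F_{2^n}$ is a bijection. First I would observe that for every fixed $\alpha \in \F_{2^n}$ the map $h_\alpha\colon \F_2^n \to \F_2^m$ is $\F_2$-linear: it is the composition of the $\F_2$-linear isomorphism $\varphi$, the map ``multiplication by $\alpha$'' (which is an $\F_2$-linear endomorphism of $\F_{2^n}$ once the field is viewed as an $n$-dimensional $\F_2$-vector space, by distributivity), the inverse isomorphism $\varphi^{-1}$, and the coordinate projection onto the first $m$ coordinates. Consequently, for any fixed $x \neq x'$ in $\zo^n$ the collision event $h_\alpha(x) = h_\alpha(x')$ is equivalent to $h_\alpha(z) = 0$, where $z := x + x' \in \F_2^n$ (addition over $\F_2$) is a fixed \emph{nonzero} vector.

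Next I would set $\beta := \varphi(z)$, a nonzero element of $\F_{2^n}$ since $\varphi$ is an isomorphism and $z \neq 0$. As $\alpha$ ranges uniformly over $\F_{2^n}$, the product $\alpha \cdot \beta$ ranges uniformly over $\F_{2^n}$ as well, because the map $w \mapsto w \cdot \beta$ is a bijection of $\F_{2^n}$ with inverse $w \mapsto w \cdot \beta^{-1}$. Applying the $\F_2$-linear bijection $\varphi^{-1}$, the vector $\varphi^{-1}(\alpha \cdot \beta) \in \F_2^n$ is therefore uniformly distributed over $\F_2^n$.

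Finally, the event $h_\alpha(z) = 0$ asks exactly that the first $m$ coordinates of this uniformly random vector in $\F_2^n$ all vanish, an event of probability precisely $2^{n-m}/2^n = 2^{-m}$. Hence $\Pr_\alpha[h_\alpha(x) = h_\alpha(x')] = 2^{-m} \leq 2^{-m}$, which is the required bound (in fact it holds with equality). I do not expect any real obstacle here; the only point needing a little care is verifying that ``multiplication by $\alpha$'' is genuinely $\F_2$-linear so that $h_\alpha$ is linear and the reduction to a single nonzero difference vector $z$ is legitimate, but this is immediate from distributivity in $\F_{2^n}$.
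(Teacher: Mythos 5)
Your proof is correct and follows essentially the same route as the paper's: both reduce the collision event to $h_\alpha(z)=0$ for the fixed nonzero difference $z=x+x'$ via linearity, use the fact that multiplication by the nonzero field element $\varphi(z)$ is a bijection to conclude $\alpha\cdot\varphi(z)$ is uniform over $\F_{2^n}$, and then observe that the first $m$ coordinates of a uniform vector vanish with probability exactly $2^{-m}$. Your write-up is just slightly more explicit about why each $h_\alpha$ is $\F_2$-linear.
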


\begin{proof}
  Let $x, x'$ be different elements of $\F_{2^n}$. Consider the
  mapping $f\colon \F_{2^n} \to \F_2^m$ defined as
  \[
  f(x) := (y_1, \ldots, y_m), \text{ where $(y_1, \ldots, y_n) :=
    \varphi^{-1}(x)$},
  \]
  which truncates the binary representation of a field element from
  $\F_{2^n}$ to $m$ bits.  The probability we are trying to estimate
  in Definition~\ref{defn:pwindep} is, for a uniformly random $\alpha
  \in \F_{2^n}$,
  \[
  \Pr_{\alpha \in \F_{2^n}} [f(\alpha \cdot x) = f(\alpha \cdot x')] =
  \Pr_{\alpha \in \F_{2^n}} [f(\alpha \cdot (x-x')) = 0].
  \]
  But note that $x-x'$ is a nonzero element of $\F_{2^n}$, and thus,
  for a uniformly random $\alpha$, the random variable $\alpha x$ is
  uniformly distributed on $\F_{2^n}$. It follows that
  \[
  \Pr_{\alpha \in \F_{2^n}} [f(\alpha \cdot (x-x')) = 0] = 2^{-m},
  \]
  implying that $\hLin$ is a universal family.
\end{proof}

Now we are ready to state and prove the Leftover Hash Lemma. We prove
a straightforward generalization of the lemma which shows that
universal hash families can be used to construct not only strong
extractors, but also lossless condensers.

\begin{thm} (Leftover Hash Lemma) \index{Leftover Hash
    Lemma} \label{lem:leftover} Let $\cH = \{ h_i\colon \F_2^n \to
  \F_2^m \}_{i \in \F_2^d}$ be a universal family of hash functions
  with $2^d$ elements indexed by binary vectors of length $d$, and
  define the function $ f\colon \F_2^n \times \F_2^d \to \F_2^m $ as
  $f(x, z) := h_z(x)$. Then
  \begin{enumerate}
  \item For every $k, \eps$ such that $m \leq k - 2 \log(1/\eps)$, the
    function $f$ is a strong $(k,\eps)$-extractor, and

  \item For every $k, \eps$ such that $m \geq k + 2 \log(1/\eps)$, the
    function $f$ is a strong lossless $(k,\eps)$-condenser.
  \end{enumerate}
  In particular, by choosing $\cH = \hLin$, it is possible to get
  explicit extractors and lossless condensers with seed length $d=n$.
\end{thm}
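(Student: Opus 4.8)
The plan is to prove both parts simultaneously by analyzing the collision probability of the function $f(x,z) = h_z(x)$ and then invoking the $\ell_2$-to-$\ell_1$ bound together with Proposition~\ref{prop:flatmap}. First I would reduce to the case of flat sources: by Proposition~\ref{prop:convex}, any $k$-source $\cX$ is a convex combination of flat distributions of min-entropy $\log K$ (with $K = 2^k$), so by Proposition~\ref{prop:ConvexOfClose} it suffices to establish the conclusion when $\cX$ is uniform on a set $S \subseteq \F_2^n$ of size $K$. Then I would consider the random variable $W := (Z, f(X,Z)) = (Z, h_Z(X))$ where $X \sim \cX$ and $Z \sim \U_d$ independently, and compute the collision probability $\mathrm{cp}(W) := \Pr[W = W']$ for an i.i.d.\ copy $(X', Z')$. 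Splitting on whether $Z = Z'$ (probability $2^{-d}$) and whether $X = X'$ (probability $1/K$), and using the universality bound $\Pr_z[h_z(x) = h_z(x')] \le 2^{-m}$ for the off-diagonal terms, one gets
\[
\mathrm{cp}(W) \le 2^{-d}\left( \tfrac{1}{K} + 2^{-m} \right) = 2^{-d-m}\left(1 + \tfrac{2^m}{K}\right).
\]

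For part~1 (the extractor claim), the standard step is: a distribution on a set of size $N := 2^{d+m}$ with collision probability at most $(1+4\eps^2)/N$ is $\eps$-close to uniform, because $\|W - \U_{d+m}\|_1^2 \le N \cdot \|W - \U_{d+m}\|_2^2 = N(\mathrm{cp}(W) - 1/N) \le 4\eps^2$ by Cauchy--Schwarz. Since $m \le k - 2\log(1/\eps)$ gives $2^m/K \le \eps^2 \le 4\eps^2$, this yields $\|W - \U_{d+m}\| \le \eps$, which is exactly the strong-extractor property. I would present this $\ell_2$ argument as a small self-contained lemma or inline it.

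For part~2 (the lossless-condenser claim), when $m \ge k + 2\log(1/\eps)$ the output alphabet is much larger than $K$, so instead I would argue that collisions are rare and appeal to Proposition~\ref{prop:flatmap}(2). Concretely, let $\cY$ denote the distribution of $W = (Z, h_Z(X))$; it is flat-ish but not exactly flat, so I would work seed-by-seed: for each fixed $z$, the map $x \mapsto h_z(x)$ on $S$ has expected number of colliding pairs at most $\binom{K}{2} 2^{-m} \le K^2 2^{-m}/2 \le K\eps^2/2$, hence by Markov the support of $f(\cX,z)$ has size at least $(1-\eps^2)K \ge (1-\eps)K$ for all but an $\eps$-fraction of seeds $z$ — actually I would track this more carefully to get the claim that $(Z, f(X,Z))$ is $\eps$-close to a distribution of the form $(\U_d, \cZ)$ with min-entropy $\ge d+k$. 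Proposition~\ref{prop:flatmap}(2) (applied per good seed, with $|\Gamma| = 2^m \ge K$) converts "large support" into "$\eps$-close to min-entropy $\log K$", and averaging over seeds combined with Proposition~\ref{prop:ConvexOfClose} back over the flat decomposition finishes it. Finally, the "in particular" sentence follows because $\hLin$ was shown universal in the preceding proposition, each $h_\alpha$ is explicitly computable by a finite-field multiplication and truncation, and it has exactly $2^n$ functions, i.e.\ seed length $d = n$.

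The main obstacle I expect is bookkeeping in part~2: matching the per-seed support-size bound to the precise "$\eps$-close to $(\U_d, \cZ)$ with min-entropy $\ge d+k$" formulation in the definition of a strong lossless condenser, and keeping the error parameters consistent (whether one loses a factor of $2$ or needs $\eps^2$ versus $\eps$ in intermediate steps). The $\ell_2$ computation in part~1 is routine; the subtlety is entirely in the definitional unwinding and the choice of how tightly to state the intermediate claims so that no constants blow up beyond what the theorem statement allows.
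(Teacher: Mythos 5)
Your part~1 is essentially the paper's argument: reduce to flat sources, bound the collision probability of $(Z, h_Z(X))$ by $2^{-d-m}(1 + 2^m/K)$ using universality, and convert to statistical distance via Cauchy--Schwarz. For part~2, however, you take a genuinely different route. The paper handles both parts with the \emph{same} $\ell_2$ computation: it bounds $\| \cY - \mu \|_2$ for a reference flat distribution $\mu$ and then simply swaps $\mu$ --- the uniform distribution on $\F_2^{d+m}$ for the extractor claim, and a flat distribution on a support of size $2^{d+k}$ containing $\supp(\cY)$ for the condenser claim; since $K = \eps^2 M$ in the second case, the identical bound gives $(\eps/2)$-closeness to a min-entropy-$(d+k)$ distribution directly. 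Your route instead counts colliding pairs, applies Markov over seeds, and invokes Proposition~\ref{prop:flatmap}(2) per good seed. This is sound, and arguably more transparent about \emph{why} the condenser is lossless, but as written it costs you a constant: balancing the bad-seed fraction against the per-seed support deficiency gives a final error around $\sqrt{2}\,\eps$ (or $3\eps/2$) rather than $\eps$. You can recover, and in fact beat, the stated bound by doing your counting globally rather than per seed: the total number of colliding pairs over all $(z, \{x,x'\})$ is at most $2^d K^2 2^{-m}/2 \leq 2^d K \eps^2/2$, so the support of $(Z, f(X,Z))$ has size at least $(1 - \eps^2/2)\, 2^{d+k}$, and a single application of Proposition~\ref{prop:flatmap}(2) gives $(\eps^2/2)$-closeness to min-entropy $d+k$ --- with the one remaining caveat (shared with the paper's own choice of $\mu$) that the reference flat distribution should be chosen with exactly $K$ support points per seed so that it has the product form $(\U_d, \cZ)$ demanded by the definition of a strong condenser.
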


\begin{proof}
  Considering Proposition~\ref{prop:convex}, it suffices to show the
  claim when $\cX$ is a flat distribution on a support of size $K :=
  2^k$.  Define $M := 2^m$, $D := 2^d$, and let $\mu$ be any flat
  distribution over $\F_2^{d+m}$ such that $\supp(\cX) \subseteq
  \supp(\mu)$, and denote by $\cY$ the distribution of $(Z, f(X, Z))$
  over $\F_2^{d+m}$ where $X \sim \cX$ and $Z \sim \U_d$.  We will
  first upper bound the $\ell_2$ distance of the two distributions
  $\cY$ and $\mu$, that can be expressed as follows:
  {\allowdisplaybreaks
    \begin{eqnarray}
      \| \cY - \mu \|_2^2 &=& \sum_{x \in \F_2^{d+m}} (\cY(x) - \mu(x))^2 \nonumber \\
      &=& \sum_{x} \cY(x)^2 + \sum_{x} \mu(x)^2 -2 \sum_{x} \cY(x) \mu(x) \nonumber \\
      &\stackrel{\mathrm{(a)}}{=}& \sum_{x} \cY(x)^2 + \frac{1}{|\supp(\mu)|} -\frac{2}{|\supp(\mu)|} \sum_{x} \cY(x) \nonumber \\
      &=& \sum_{x} \cY(x)^2 - \frac{1}{|\supp(\mu)|} \label{eqn:LLLa},
    \end{eqnarray}}
  where $\mathrm{(a)}$ uses the fact that $\mu$ assigns probability $1/|\supp(\mu)|$
  to exactly $|\supp(\mu)|$ elements of $\F_2^{d+m}$ and zeros elsewhere.

  Now observe that $\cY(x)^2$ is the probability that two independent
  samples drawn from $\cY$ turn out to be equal to $x$, and thus,
  $\sum_{x} \cY(x)^2$ is the \emph{collision probability} of two
  independent samples from $\cY$, which can be written as
  \begin{equation*}
    \sum_{x} \cY(x)^2 = \Pr_{Z,Z',X,X'}[(Z, f(X, Z)) = (Z', f(X', Z'))],
  \end{equation*}
  where $Z, Z' \sim \F_2^d$ and $X, X' \sim \cX$ are independent
  random variables.  We can rewrite the collision probability as
  {\allowdisplaybreaks
    \begin{eqnarray*}
      \sum_{x} \cY(x)^2 &=& \Pr[Z = Z'] \cdot \Pr[ f(X, Z) = f(X', Z') \mid Z = Z'] \\
      &=& \frac{1}{D} \cdot \Pr_{Z,X,X'}[ h_{Z}(X) = h_{Z}(X') ] \\
      &=& \frac{1}{D} \cdot (\Pr[X=X'] + \frac{1}{K^2} \sum_{\substack{x, x' \in \supp(\cX) \\ x \neq x'}} \Pr_{Z}[ h_{Z}(x) = h_{Z}(x') ]) \\
      &\stackrel{\mathrm{(b)}}{\leq}& \frac{1}{D} \cdot \big(\frac{1}{K} + \frac{1}{K^2} \sum_{\substack{x, x' \in \supp(\cX) \\ x \neq x'}} \frac{1}{M}\big)
      \leq \frac{1}{DM} \cdot \big(1 + \frac{M}{K}\big),
    \end{eqnarray*}}
  \noindent where $\mathrm{(b)}$ uses the assumption that $\cH$ is a
  universal hash family.  Plugging the bound in \eqref{eqn:LLLa}
  implies that
  \[
  \| \cY - \mu \|_2 \leq \frac{1}{\sqrt{DM}} \cdot \sqrt{1 -
    \frac{DM}{|\supp(\mu)|} + \frac{M}{K}}.
  \]
  Observe that both $\cY$ and $\mu$ assign zero probabilities to
  elements of $\zo^{d+m}$ outside the support of $\mu$. Thus using
  Cauchy-Schwarz on a domain of size $\supp(\mu)$, the above bound
  implies that the statistical distance between $\cY$ and $\mu$ is at
  most
  \begin{equation} \label{eqn:LLLb} \frac{1}{2} \cdot
    \sqrt{\frac{|\supp(\mu)|}{DM}} \cdot \sqrt{1 -
      \frac{DM}{|\supp(\mu)|} + \frac{M}{K}}.
  \end{equation}
  Now, for the first part of the theorem, we specialize $\mu$ to the
  uniform distribution on $\zo^{d+m}$, which has a support of size
  $DM$, and note that by the assumption that $m \leq k -
  2\log(1/\eps)$ we will have $M \leq \eps^2 K$.  Using
  \eqref{eqn:LLLb}, it follows that $\cY$ and $\mu$ are
  $(\eps/2)$-close.

  On the other hand, for the second part of the theorem, we specialize
  $\mu$ to any flat distribution on a support of size $DK$ containing
  $\supp(\cY)$ (note that, since $\cX$ is assumed to be a flat
  distribution, $\cY$ must have a support of size at most $DK$). Since
  $m \geq k + 2\log(1/\eps)$, we have $K = \eps^2 M$, and again
  \eqref{eqn:LLLb} implies that $\cY$ and $\mu$ are $(\eps/2)$-close.
\end{proof}

\subsection{Trevisan's Extractor} \label{sec:Tre}

One of the most important explicit constructions of extractors is due
to Trevisan \cite{ref:Tre}.  Since we will use this extractor at
several points in the thesis, we dedicate this section to sketch the
main ideas behind this important construction.

Trevisan's extractor can be thought of as an ``infor\-mation-theoretic''
variation of Nisan-Wigderson's pseudorandom generator that will be
discussed in detail in Chapter~\ref{chap:gv}.  For the purpose of this
exposition, we will informally demonstrate how Nisan-Wigderson's
generator works and then discuss Trevisan's extractor from a
coding-theoretic perspective.

Loosely speaking, a pseudorandom generator is an efficient and
deterministic function (where the exact meaning of ``efficient'' may
vary depending on the context) that transforms a statistically uniform
distribution on $d$ bits to a distribution on $m$ bits, for some $m
\gg d$, that ``looks random'' to any ``restricted''
distinguisher. Again the precise meaning of ``looking random'' and the
exact restriction of the distinguisher may vary. In particular, we
require the output distribution $\cX$ of the pseudorandom generator to
be such that, for every restricted distinguisher $D\colon \zo^m \to
\zo$, we have
\[
\left|\Pr_{X \sim \cX}[D(X) = 1] - \Pr_{Y \sim \U_m}[D(Y) = 1]\right|
\leq \eps,
\]
where $\eps$ is a negligible bias.  Recall that, in in light of
Proposition~\ref{prop:disting}, this is very close to what we expect
from the output distribution of an extractor, except that for the case
of pseudorandom generators the distinguisher $D$ cannot be an
arbitrary function. Indeed, when $m > d$, the output distribution of a
pseudorandom generator cannot be close to uniform and is always
distinguishable by \emph{some} distinguisher.  The main challenge in
construction of a pseudorandom gnerator is to exclude the possibility
of such a distinguisher to be included in the restricted class of
functions into consideration.  As a concrete example, one may require a
pseudorandom generator to be a polynomial-time computable function
whose output is a sequence of length $d^2$ that is indistinguishable by
linear-sized Boolean circuits with a bias better than $d^{-2}$.

Nisan and Wigderson observed that the hardness of distinguishing the
output distribution from uniform can be derived from a hardness
assumption that is inherent in the way the pseudorandom generator
itself is computed. In a way, their construction shows how to
``trade'' computational hardness with pseudorandomness. In a
simplified manner, a special instantiation of this generator can be
described as follows: Suppose that a Boolean predicate $f\colon \zo^d
\to \zo$ is hard to compute on average by ``small'' Boolean circuits;
meaning that no circuit consisting of a sufficiently small number of
gates (as determined by a \emph{security parameter}) is able to
compute $f$ substantially better than a trivial circuit that always
outputs a constant value. Then, given a random seed $Z \in \zo^d$, the
sequence $(Z,f(Z))$ is pseudorandom for small circuits. The reason can
be seen by contradiction. Let us suppose that for some distinguisher
$D$, we have
\[
\left|\Pr_{X \sim \cX}[D(X) = 1] - \Pr_{Y \sim \U_m}[D(Y) = 1]\right|
> \eps.
\]
By the following simple proposition, such a distinguisher can be
transformed into a predictor for the hard function $f$.

\begin{prop} \label{prop:predictor} Consider predicates $f\colon
  \F_2^d \to \F_2$ and $D\colon \F_2^{d+1} \to \F_2$ and suppose that
  \[
  \left|\Pr_{X \sim \U_d}[D(X,f(X)) = 1] - \Pr_{Y \sim \U_{d+1}}[D(Y)
    = 1]\right| > \eps.
  \]
  Then, there are fixed choices of $a_0, a_1 \in \F_2$ such that
  \[
  \Pr_{X \sim \U_d}[D(X,a_0)+a_1 = f(X)] > \frac{1}{2} + \eps.
  \]
\end{prop}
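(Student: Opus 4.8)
The plan is to run the standard ``distinguisher-to-predictor'' (Yao-style) argument, specialized to the one-bit situation in which the pseudorandom string is obtained by simply appending $f(X)$ to the uniform seed $X$.

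First I would get rid of the absolute value. Replacing the distinguisher $D$ by its complement $1+D$ (addition over $\F_2$) flips both probabilities $\Pr_X[D(X,f(X))=1]$ and $\Pr_Y[D(Y)=1]$, hence negates their difference; so \wlog\ I may assume
\[
\Pr_{X \sim \U_d}[D(X,f(X)) = 1] - \Pr_{Y \sim \U_{d+1}}[D(Y) = 1] > \eps .
\]
The extra ``$+1$'' possibly introduced by this reduction will be absorbed into the additive constant $a_1$ at the very end, so it costs nothing. Next I would introduce a \emph{randomized} predictor $P$ for $f$: on input $X$, sample a uniform bit $b$, evaluate $D(X,b)$, and output $b$ if $D(X,b)=1$ and $\bar b:=1+b$ otherwise. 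The idea is that $D$ tends to accept strings of the form $(X,f(X))$, so if $D(X,b)=1$ we bet $b$ equals $f(X)$, and otherwise we bet it does not.

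Then I would compute $\Pr_{X,b}[P(X)=f(X)]$ by conditioning on the event $b=f(X)$, which has probability $\tfrac12$ independently of $X$. Conditioned on $b=f(X)$, the predictor is correct exactly when $D(X,f(X))=1$; conditioned on $b\neq f(X)$ it is correct exactly when $D(X,\overline{f(X)})=0$. Writing $p:=\Pr_X[D(X,f(X))=1]$ and $q:=\Pr_X[D(X,\overline{f(X)})=1]$, and observing that for each fixed $X$ the two possible appended bits are precisely $f(X)$ and $\overline{f(X)}$, so that $\Pr_{Y\sim\U_{d+1}}[D(Y)=1]=(p+q)/2$, the hypothesis becomes $p-(p+q)/2>\eps$, i.e.\ $p-q>2\eps$, and a one-line computation gives
\[
\Pr_{X,b}[P(X)=f(X)] \;=\; \tfrac12 p + \tfrac12(1-q) \;=\; \tfrac12 + \tfrac12(p-q) \;>\; \tfrac12 + \eps .
\]

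Finally I would derandomize $P$: by averaging over $b$, there is a fixed value $a_0\in\F_2$ of the sampled bit for which the deterministic rule ``output $a_0$ if $D(X,a_0)=1$, else output $\bar a_0$'' agrees with $f$ on more than a $\tfrac12+\eps$ fraction of $X$. It remains to note that over $\F_2$ this two-case rule is affine in $D(X,a_0)$: checking the two cases shows it equals $D(X,a_0)+(a_0+1)$, so taking $a_1:=a_0+1$ (adjusted by an extra $+1$ if the sign reduction flipped $D$, which leaves $a_0$ unchanged) yields $\Pr_X[D(X,a_0)+a_1=f(X)]>\tfrac12+\eps$, as desired. The only step needing any care — hardly an obstacle — is this last bit of $\F_2$-bookkeeping: verifying that the two-case predictor collapses to the single expression $D(X,a_0)+a_1$ and that the initial reduction perturbs only $a_1$; everything else is elementary conditional-probability arithmetic.
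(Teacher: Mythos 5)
Your proof is correct and follows essentially the same route as the paper's: build the randomized predictor that guesses a uniform bit $b$ and keeps it iff $D(X,b)=1$, compute its success probability by conditioning on $b=f(X)$ versus $b\neq f(X)$, and then fix the coin by averaging. Your $\F_2$-bookkeeping at the end (showing the two-case rule collapses to $D(X,a_0)+a_1$ with $a_1=a_0+1$, adjusted if $D$ was negated) is if anything spelled out more carefully than in the paper, which leaves that last step implicit.
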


\begin{proof}
  Without loss of generality, assume that the quantity inside the
  absolute value is non-negative (otherwise, one can reason about the
  negation of $D$).  Consider the following randomized algorithm $A$
  that, given $x\in \F_2^d$, tries to predict $f(X)$: Flip a random
  coin $r \in F_2$. If $r = 1$, output $r$ and otherwise output
  $\bar{r}$.

  Intuitively, the algorithm $A$ tries to make a random guess for
  $f(X)$, and then feeds it to the distinguisher.  As $D$ is more
  likely to output $1$ when the correct value of $f(X)$ is supplied,
  $A$ takes the acceptance of $x$ as an evidence that the random guess
  $r$ has been correct (and vice versa). The precise analysis can be
  however done as follows.

  {\allowdisplaybreaks
    \begin{eqnarray*}
      \Pr_{X,r}[A(X) = f(X)] &=& \frac{1}{2} \Pr_{X,r}[A(X) = f(X) \mid r=f(X)] + \\&& \frac{1}{2} \Pr_{X,r}[A(X) = f(X) \mid r \neq f(X)]\\
      &=& \frac{1}{2} \Pr_{X,r}[D(X,r) = 1 \mid r=f(X)] + \\&& \frac{1}{2} \Pr_{X,r}[D(X,r) = 0 \mid r \neq f(X)]\\
      &=& \frac{1}{2} \Pr_{X,r}[D(X,r) = 1 \mid r=f(X)] + \\&& \frac{1}{2} (1-\Pr_{X,r}[D(X,r) = 1 \mid r \neq f(X)])\\
      &=& \frac{1}{2} + \Pr_{X,r}[D(X,r) = 1 \mid r=f(X)] - \\&& \frac{1}{2} \left(\Pr_{X,r}[D(X,r) = 1 \mid r=f(X)]+ \right. \\&& \left. \Pr_{X,r}[D(X,r) = 1 \mid r \neq f(X)]\right)\\
      &=& \frac{1}{2} + \Pr_{X}[D(X,f(X)) = 1] - \Pr_{X,r}[D(X,r) = 1] \\
      &>& \frac{1}{2} + \eps.
    \end{eqnarray*}}

  Therefore, by averaging, for some fixed choice of $r$ the
  probability must remain above $\frac{1}{2} + \eps$, implying that
  one of the functions $D(X,0)$, $D(X,1)$ or their negations must be
  as good a predictor for $f(X)$ as $A$ is.
\end{proof}

Since the complexity of the predictor is about the same as that of the
distinguisher $D$, and by assumption $f$ cannot be computed by small
circuits, we conclude that the outcome of the generator must be
indistinguishable from uniform by small circuits. Nisan and Wigderson
generalized this idea to obtain generators that output a long sequence
of bits that is indistinguishable from having a uniform distribution. In
order to obtain more than one pseudorandom bit from the random seed,
they evaluate the hard function $f$ on carefully chosen subsequences
of the seed (for this to work, the input length of $f$ is assumed to
be substantially smaller than the seed length $d$).

An important observation in Trevisan's work is that Nisan-Wigderson's
pseudorandom generator is a \emph{black-box} construction. Namely, the
generator merely computes the hard function $f$ at suitably chosen
points without caring much about how this computation is
implemented. Similarly, the analysis uses the distinguisher $D$ as a
black-box. If $f$ is computable in polynomial time, then so is the
generator (assuming that it outputs polynomially many bits), and if
$f$ is hard against small circuits, the class of circuits of about the
same size must be fooled by the generator.

\newConstruction{Trevisan's extractor $E\colon \zo^n \times \zo^d \to
    \zo^m$.}{A random sample $X \sim \cX$, where $\cX$ is a
    distribution on $\zo^n$ with min-entropy at least $k$, and a
    uniformly distributed random seed $Z \sim \U_d$ of length $d$.
    Moreover, the extractor assumes a $(\frac{1}{2} - \delta, \ell)$
    list-decodable binary code $\C$ of length $N$ (a power of two) and
    size $2^n$, and a combinatorial design $\mathcal{S} := \{ S_1,
    \ldots, S_m \}$, where
    \begin{itemize}
    \item For all $i \in [m]$, $S_i \subseteq [d]$, $|S_i| = \log_2
      N$, and
    \item For all $1 \leq i < j \leq m$, $|S_i \cap S_j| \leq r$.
    \end{itemize}}%
    {A binary string $E(X,Z)$ of length $m$.}%
    {Denote the encoding of $X$ under $\C$ by
    $C(X)$.  For each $i \in [m]$, the subsequence of $Z$ picked by
    the coordinate positions in $S_i$ (denoted by $Z|_i$) is a string
    of length $\log_2 N$ and can be regarded as an integer in
    $[N]$. Let $C_i(X)$ denote the bit at the $(Z|_i)$th position of
    the encoding $C(X)$. Then,
    \[
    E(X,Z) := (C_1(X), \ldots, C_m(X)).
    \]}{constr:Tre}

How can we obtain an extractor from Nisan-Wigderson's construction?
Recall that the output distribution of an extractor must be
indistinguishable from uniform by \emph{all} circuits, and not only
small ones. Adapting Nisan-Wigderson's generator for this requirement
means that we will need a function $f$ that is hard for all circuits,
something which is obviously impossible.  However, this problem can be
resolved if we take \emph{many} hard functions instead of one, and
enforce the predictor to simultaneously predict all functions with a
reasonable bias.  More precisely, statistical indistinguishability can
be obtained if the function $f$ is sampled from a random distribution,
and that is exactly how Trevisan's extractor uses the supplied weak
source. In particular, the extractor regards the sequence obtained
from the weak source as the truth table of a randomly chosen function,
and then applies Nisan-Wigderson's construction relative to that
function.

The exact description of the extractor is given in
Construction~\ref{constr:Tre}. The extractor assumes the existence of
a suitable list-decodable code (see Appendix~\ref{app:coding} for the
terminology) as well as a \emph{combinatorial design}. Intuitively, a
combinatorial design is a collection of subsets of a universe such
that their pairwise intersections are small. We will study designs
more closely in Chapter~\ref{chap:testing}. In order to obtain a
polynomial-time computable extractor, we need an efficient
construction of the underlying list-decodable code and combinatorial
design.

An analysis of Trevisan's construction is given by the following
theorem, which is based on the original analysis of \cite{ref:Tre}.

\begin{thm}
  Trevisan's extractor (as described in Construction~\ref{constr:Tre})
  is a strong $(k, \eps)$-extractor provided that $\eps \geq 2m
  \delta$ and $k > d+m2^{r+1}+\log(\ell/\eps)+3$.
\end{thm}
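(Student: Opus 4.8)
The plan is to argue by contradiction, via the ``reconstruction'' paradigm at the heart of Trevisan's extractor. Suppose $E$ is not a strong $(k,\eps)$-extractor. By Proposition~\ref{prop:convex} and convexity of statistical distance (Proposition~\ref{prop:ConvexOfClose}) it suffices to reach a contradiction assuming $\cX$ is a \emph{flat} source, uniform on a set $S\subseteq\zo^n$ of size $2^k$. By Proposition~\ref{prop:disting}, failure yields a distinguisher $D\colon\zo^{d+m}\to\zo$ with $\bigl|\Pr[D(Z,E(X,Z))=1]-\Pr[D(U)=1]\bigr|>\eps$, where $X\sim\cX$, $Z\sim\U_d$, $U\sim\U_{d+m}$; note the seed $Z$ occurs identically in both distributions.

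First I would run a hybrid argument over the $m$ output coordinates, replacing $C_i(X)$ by fresh uniform bits one coordinate at a time while keeping $Z$ fixed. This isolates an index $i^\ast\in[m]$ on which $D$ distinguishes two adjacent hybrids with advantage exceeding $\eps/m$. Hardwiring the remaining uniform ``future'' bits $R_{i^\ast+1},\dots,R_m$ by an averaging argument and applying the bit-prediction argument behind Proposition~\ref{prop:predictor} (in its conditional form, since $C_{i^\ast}(X)$ need only be predicted jointly with the earlier coordinates), I obtain a deterministic predictor $P$ with $\Pr_{X,Z}\bigl[P\bigl(Z,C_1(X),\dots,C_{i^\ast-1}(X)\bigr)=C_{i^\ast}(X)\bigr]>\tfrac12+\tfrac{\eps}{m}$. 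Now comes the step that actually uses the two ingredients of Construction~\ref{constr:Tre}: by definition $C_{i^\ast}(X)$ is exactly the bit of the codeword $C(X)$ indexed by $Z|_{S_{i^\ast}}\in[N]$, so for a fixed $x$ it is a function of $Z|_{S_{i^\ast}}$ that realizes the entire codeword $C(x)$; and for each $j<i^\ast$, once the seed bits outside $S_{i^\ast}$ are fixed, the design condition $|S_i\cap S_j|\le r$ forces $C_j(X)$ to depend on at most $r$ bits of $Z|_{S_{i^\ast}}$, hence to be recordable by a truth table of at most $2^r$ bits.

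Finally I would carry out the counting. Averaging the predictor's success over $X$, then over the seed bits $Z|_{[d]\sm S_{i^\ast}}$, shows that for at least an $\eps/(2m)$ fraction of $x\in S$ there is a fixing of those seed bits for which the function $g_x\colon[N]\to\zo$ obtained by feeding the fixed seed bits and the at most $m$ small truth tables into $P$ agrees with $C(x)$ on more than a $\tfrac12+\tfrac{\eps}{2m}$ fraction of positions; the hypothesis $\eps\ge 2m\delta$ makes this more than $\tfrac12+\delta$, so $(\tfrac12-\delta,\ell)$-list-decodability of $\C$ leaves at most $\ell$ candidates for $C(x)$ given $g_x$, and a further $\le\log\ell$ bits identify $x$. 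Thus every such $x$ is reconstructible from an advice string consisting of the fixed seed bits outside $S_{i^\ast}$ ($\le d$ bits), the at most $m$ truth tables ($\le m2^r$ bits), the list index ($\le\log\ell$ bits), and $O(\log m)$ extra bits — the index $i^\ast$, the hardwired bits $a_0,a_1$, and rounding — which together with the loss from the $\eps/(2m)$ fraction I would bound by $d+m2^{r+1}+\log(\ell/\eps)+3$, padding $m2^r$ up to $m2^{r+1}$ precisely to absorb the $O(\log m)$ overheads. Since the map from $x$ to its advice is injective on that set, $2^k$ is at most $2$ to that many bits, contradicting $k>d+m2^{r+1}+\log(\ell/\eps)+3$.

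The delicate part, and where I would be most careful, is the middle step: one must ensure the predictor reads only the \emph{earlier} output coordinates (so the ``future'' bits can be hardwired independently of $x$), and one must deploy the combinatorial design exactly at the point where, after conditioning on the seed outside $S_{i^\ast}$, each earlier codeword bit $C_j(X)$ collapses to a cheap $\le r$-bit table — this is what keeps the advice short enough. The list-decoding step and the final counting are then essentially routine, provided the bookkeeping of the successive averaging losses is tracked carefully enough to fit inside the stated additive constant.
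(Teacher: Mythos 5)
Your proposal is correct and is essentially the paper's own proof: the same reconstruction paradigm with a hybrid argument to isolate a coordinate, Proposition~\ref{prop:predictor} to turn the distinguisher into a predictor, the design condition to compress each earlier $C_j(x)$ into a $2^r$-bit table, list-decodability of $\C$ to cap the candidates per advice string, and a final count of advice strings against $2^k$. The only differences are cosmetic reorderings (you run the hybrid globally and average over $x$ afterwards, whereas the paper first restricts to the $\eps/2$-good inputs and hybridizes per $x$), and your advice inventory omits the $\le m$ hardwired future bits $u_{i^\ast+1},\dots,u_m$ that the paper counts via its $2^m$ factor — harmless here, both because in your ordering they are fixed once globally and because the $m2^{r}$ slack absorbs them anyway.
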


\begin{proof}
  In light of Proposition~\ref{prop:convex}, it suffices to show the
  claim when $\cX$ is a flat distribution.  Suppose for the sake of
  contradiction that the distribution of $(Z,E(X,Z))$ is not
  $\eps$-close to uniform.  Without loss of generality, and using
  Proposition~\ref{prop:disting}, this means that there is a
  distinguisher $D\colon \zo^{m} \to \zo$ such that
  \begin{equation} \label{eqn:TreDisA} \Pr_{X,Z}[D(Z,E(X,Z)) = 1] -
    \Pr_{Z,U \sim \U_m}[D(Z,U) = 1] > \eps,
  \end{equation}
  where $U=(U_1, \ldots, U_m)$ is a sequence of uniform and
  independent random bits.  Let $X' \subseteq \supp(\cX)$ denote the
  set of inputs on the support of $\cX$ that satisfy
  \begin{equation} \label{eqn:TreDisB} \Pr_{Z}[D(Z,E(x,Z)) = 1] -
    \Pr_{Z,U}[D(Z,U) = 1] > \frac{\eps}{2},
  \end{equation}
  Observe that the size of $X'$ must be at least $\frac{\eps}{2}
  |\supp(\cX)| = \eps 2^{k-1}$, since otherwise \eqref{eqn:TreDisA}
  cannot be satisfied. In the sequel, fix any $x \in X'$.

  For $i=0, \ldots, m$, define a \emph{hybrid sequence} $H_i$ as the
  random variable $H_i := (Z, C_1(x), \ldots, C_i(x), U_{i+1}, \ldots,
  U_m)$. Thus, $H_0$ is a uniformly random bit sequence and $H_m$ has
  the same distribution as $(Z,E(x,Z))$. For $i \in [m]$, define
  \[
  \delta_i := \Pr[D(H_i) = 1] - \Pr[D(H_{i-1})=1],
  \]
  where the probability is taken over the randomness of $Z$ and $U$.
  Now we can rewrite \eqref{eqn:TreDisB} as
  \[
  \Pr[D(H_m) = 1] - \Pr[D(H_0)=1] > \frac{\eps}{2},
  \]
  or equivalently,
  \[
  \sum_{i=1}^m \delta_i > \frac{\eps}{2}.
  \]
  Therefore, for some $i \in [m]$, we must have $\delta_i > \eps/(2m)
  =: \eps'$. Fix such an $i$, and recall that we have
  \begin{multline} \label{eqn:TreDisC} \Pr[D(Z,C_1(x), \ldots, C_i(x),
    U_{i+1}, \ldots, U_m) = 1] - \\ \Pr[D(Z,C_1(x), \ldots,
    C_{i-1}(x), U_{i}, \ldots, U_m) = 1] > \eps'.
  \end{multline}
  Now observe that there is a fixing $U_{i+1} = u_{i+1}, \ldots, U_m =
  u_m$ of the random bits $U_{i+1}, \ldots, U_m$ that preserves the
  above bias. In a similar way as we defined the subsequence $Z|_i$,
  denote by $Z|_{\bar{i}}$ the subsequence of $Z$ obtained by removing
  the coordinate positions of $Z$ picked by $S_i$. Now we note that
  $C_i(x)$ depends only on $x$ and $Z|_i$ and is in particular
  independent of $Z|_{\bar{i}}$. Furthermore, one can fix
  $Z|_{\bar{i}}$ (namely, the portion of the random seed outside
  $S_i$) such that the bias in \eqref{eqn:TreDisC} is preserved. In
  other words, there is a string $z' \in \zo^{d-|S_i|}$ such that
  \begin{multline*}
    \Pr[D(Z,C_1(x), \ldots, C_i(x), u_{i+1}, \ldots, u_m) = 1 \mid (Z|_{\bar{i}})=z'] - \\
    \Pr[D(Z,C_1(x), \ldots, C_{i-1}(x), U_{i}, u_{i+1}, \ldots, u_m) =
    1 \mid (Z|_{\bar{i}})=z'] > \eps',
  \end{multline*}
  where the randomness is now only over $U_i$ and $Z|_i$, and all
  other random variables are fixed to their appropriate values.  Now,
  Proposition~\ref{prop:predictor} can be used to show that, under the
  above fixings, there is a fixed choice of bits $a_0, a_1 \in \F_2$
  such that $D$ can be transformed into a predictor for $C_i(x)$;
  namely, so that
  \[
  \Pr_Z[D(Z,C_1(x), \ldots, C_{i-1}(x), a_0, u_{i+1}, \ldots, u_m)+a_1
  = C_i(x) \mid (Z|_{\bar{i}})=z'] > \frac{1}{2} + \eps'.
  \]
  Since $Z|_i$ is a uniformly distributed random variable, the above
  probability can be interpreted in coding-theoretic ways as follows:
  By running through all the $N$ possibilities of $Z|_i$, the
  predictor constructed from $D$ can correctly recover the encoding
  $C(x)$ at more than $\frac{1}{2} + \eps'$ fraction of the
  positions. Therefore, the distinguisher $D$ can be transformed into
  a word $w \in \F_2^N$ that has an agreement above $\frac{1}{2} +
  \frac{\eps}{2m}$ with $C(x)$.

  Now a crucial observation is that the word $w$ can be obtained from
  $D$ without any knowledge of $x$, as long a correct ``advice''
  string consisting of the appropriate fixings of $i, u_{i+1}, \ldots,
  u_m, a_0, a_1, z'$, and the truth tables of $C_1(x), \ldots,
  C_{i-1}(x)$ as functions of $Z|_i$ are available. Here is where the
  small intersection property of the design $\mathcal{S}$ comes to
  play: Each $C_j(x)$ (when $j \neq i$) depends on at most $r$ of the
  bits in $Z|_i$, and therefore, $C_j(x)$ as a function of $Z|_i$ can
  be fully described by its evaluation on at most $2^r$ points (that
  can be much smaller than $2^{|S_i|} = N$).  This means that the
  number of possibilities for the advice string is at most
  \[
  m \cdot 2^m \cdot 4 \cdot 2^{d-\log N} \cdot 2^{m2^r} = \frac{m}{N}
  \cdot 2^{d+m(2^r+1)+2} \leq 2^{d+m(2^{r+1})+2} =: T.
  \]
  Therefore, regardless of the choice of $x \in X'$, there are words
  $w_1, \ldots, w_T \in \F_2^N$ (one for each possibility of the
  advice string) such that at least one (corresponding to the
  ``correct'' advice) has an agreement better than $\frac{1}{2} +
  \eps'$ with $C(x)$. This, in turn, implies that there is a set $X''
  \subseteq X'$ of size at least $|X'|/T \geq \eps 2^{k-1}/T$ and a
  fixed $j \in [T]$ such that, for every $x \in X''$, the codeword
  $C(x)$ has an agreement better than $\frac{1}{2} + \eps'$ with
  $w_j$. As long as $\delta \leq \eps'$, the number of such codewords
  can be at most $\ell$ (by the list-decodability of $\C$), and we
  will reach to the desired contradiction (completing the proof) if
  the list size $\ell$ is small enough; specifically, if
  \[
  \ell < \frac{\eps 2^{k-1}}{T},
  \]
  which holds by the assumption of the theorem.
\end{proof}

By an appropriate choice of the underlying combinatorial design
$\mathcal{S}$ and the list-decodable code $\C$ (namely, concatenation
of the Reed-Solomon code and the Hadamard code as described in
Section~\ref{sec:concat}), Trevisan \cite{ref:Tre} obtained a strong
extractor with output length $k^{1-\alpha}$, for any fixed constant
$\alpha > 0$, and seed length $d = O(\log^2(n/\eps)/\log_k)$. In a
subsequent work, Raz, Reingold and Vadhan observed that a weaker
notion of combinatorial designs suffice for this construction to work.
Using this idea and a careful choice of the list-decodable code $\C$,
they managed to improve Trevisan's extractor so that it extracts
almost the entire source entropy.  Specifically, their imrpovement can
be summarized as follows.

\begin{thm} \cite{ref:RRV} \label{thm:Tre} For every $n, k, m \in \N$,
  $(m \leq k \leq n)$ and $\eps > 0$, there is an explicit strong $(k,
  \eps)$-extractor $\tre\colon \zo^n \times \zo^d \to \zo^m$ with $d =
  O(\log^2 (n/\eps) \cdot \log (1/\alpha))$, where $\alpha := k/(m-1)
  - 1$ must be less than $1/2$. \qed
\end{thm}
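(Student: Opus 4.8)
The plan is to obtain Theorem~\ref{thm:Tre} by revisiting the analysis of Construction~\ref{constr:Tre} established above and substituting two of its ingredients by the improved ones of Raz, Reingold and Vadhan. In that analysis the output length is capped by an entropy requirement of the shape $k > d + m\,2^{r+1} + \log(\ell/\eps) + O(1)$, and the exponential factor $2^{r+1}$---which measures how expensive it is to tabulate the ``advice'' codewords $C_1(x),\dots,C_{i-1}(x)$ as functions of the subseed $Z|_i$---is precisely what forces $m$ far below $k$. The strategy is to kill that factor by weakening the combinatorial design, and then to balance the remaining parameters with a suitable explicit list-decodable code.

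First I would replace the combinatorial design of Construction~\ref{constr:Tre} by a \emph{weak $(m,\rho)$-design}: a family $\{S_1,\dots,S_m\}$ of subsets of $[d]$, each of size $\log_2 N$, such that $\sum_{j<i}2^{|S_i\cap S_j|}\le\rho\,(m-1)$ for every $i$, where the density $\rho>1$ is to be chosen close to $1$. Then I would rerun the hybrid/reconstruction proof of the preceding theorem essentially verbatim: the reduction to flat sources via Proposition~\ref{prop:convex}, the hybrid sequences $H_0,\dots,H_m$, the use of Proposition~\ref{prop:predictor}, and the extraction from the distinguisher of a word $w\in\F_2^N$ agreeing with $C(x)$ on a $\tfrac12+\tfrac{\eps}{2m}$ fraction of coordinates all go through unchanged. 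The only place where the design property enters is the count of admissible advice strings: tabulating $C_j(x)$ as a function of $Z|_i$ costs $2^{|S_i\cap S_j|}$ bits, so the cost of all of $C_1(x),\dots,C_{i-1}(x)$ is $\sum_{j<i}2^{|S_i\cap S_j|}\le\rho\,(m-1)$ bits rather than $(i-1)\,2^{r}$. After a careful accounting of the remaining advice (the index $i$, the bits $a_0,a_1$, and the fixings of $Z|_{\bar i}$ and of $U_{i+1},\dots,U_m$, charged just as in \cite{ref:RRV}), the entropy requirement becomes one of the form $k > d + \rho\,m + O(\log(n/\eps))$, still under $\eps\ge 2m\delta$, and the list-decodability of $\C$ yields the contradiction as before.

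It remains to instantiate the two objects and balance parameters. For the code, put $\delta:=\eps/(4m)$ and let $\C$ be the concatenation of a Reed--Solomon code with the Hadamard code (Section~\ref{sec:concat}): it is efficiently encodable and $(\tfrac12-\delta,\ell)$-list-decodable with $\ell=O(1/\delta^2)$, has size $2^n$, and has block length $N=\poly(n,1/\delta)$, which can be taken to be a power of two. Hence $\log_2 N=O(\log(n/\eps))$ and $\log(\ell/\eps)=O(\log(n/\eps))$ (using $m\le n$), and $\eps\ge 2m\delta$ holds by construction. For the design, take $\rho:=1+\Theta(\alpha)$ with $\alpha=k/(m-1)-1$, so that $\rho\,m$ leaves a slack of order $\alpha m$ below $k=(1+\alpha)(m-1)$; since $\alpha<1/2$, the term $d+O(\log(n/\eps))$ to be absorbed into this slack is $O(\log^2(n/\eps)\log(1/\alpha))$, which fits in the relevant parameter range. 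The last ingredient is the weak-design construction of \cite{ref:RRV}, which supplies a weak $(m,1+\Theta(\alpha))$-design with $|S_i|=\log_2 N$ over a universe of size $d=O(\log^2(n/\eps)\log(1/\alpha))$. Explicitness of $\tre$ then follows from the explicitness of the Reed--Solomon/Hadamard code and of the design, and the parameters are exactly those claimed.

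The step I expect to be the main obstacle---the one ingredient genuinely beyond a mechanical adaptation of the earlier proof---is this weak-design construction with a $\log(1/\alpha)$ (rather than $1/\alpha$) dependence on the density gap. A naive greedy or probabilistic argument produces a weak $(m,1+\alpha)$-design only over a universe of size $O((\log_2 N)^2/\alpha)$, giving $d=O(\log^2(n/\eps)/\alpha)$; obtaining the logarithmic dependence requires the more delicate, recursive construction of \cite{ref:RRV} (or, alternatively, composing a constant-fraction-output version of the above extractor with itself $O(\log(1/\alpha))$ times, which multiplies the seed length by the same factor while driving the output fraction to $1-\alpha$). A secondary nuisance is the lower-order bookkeeping needed to land precisely at the threshold $\alpha<1/2$ while keeping $N$, $\ell$, and the error polynomially controlled at the same time.
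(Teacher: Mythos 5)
The paper does not prove this theorem: it is quoted from \cite{ref:RRV} without proof, the only justification offered being the remark that a weaker notion of combinatorial design and a careful choice of the list-decodable code upgrade the preceding analysis of Construction~\ref{constr:Tre}, and your sketch is precisely that argument (weak designs replacing the $m2^{r+1}$ advice cost by roughly $\rho(m-1)$, a Reed--Solomon/Hadamard concatenated code, and the recursive weak-design construction for the $\log(1/\alpha)$ seed dependence). Your plan is therefore correct at the level of detail given and follows essentially the same route the paper points to; you also rightly isolate the one ingredient that cannot be obtained by mechanically adapting the earlier proof, namely the weak-design construction of \cite{ref:RRV} together with the bookkeeping that absorbs the advice bits $U_{i+1},\dots,U_m$ into the $\rho(m-1)$ term rather than charging an extra $m$ bits.
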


Observe that, as long as the list-decodable code $\C$ is linear,
Trevisan's extractor (as well as its improvement above) becomes linear
as well, meaning that it can be described as a linear function of the
weak source for every fixed choice of the seed. We will make crucial
use of this observation at several points in the thesis.

\subsection{Guruswami-Umans-Vadhan's Condenser}

One of the important constructions of lossless condensers that we will
use in this thesis is the coding-theoretic construction of Guruswami,
Umans and Vadhan \cite{ref:GUV09}. In this section, we discuss the
construction (Construction~\ref{constr:GUV}) and its analysis
(Theorem~\ref{thm:GUVcond}).

\newConstruction{Guruswami-Umans-Vadhan's Condenser $C\colon \F_q^n \times
    \F_q \to \F_q^m$.}%
    {A random sample $X \sim \cX$, where $\cX$ is a
    distribution on $\F_q^n$ with min-entropy at least $k$, and a
    uniformly distributed random seed $Z \sim \U_{\F_q}$ over $\F_q$.}%
    {A vector $C(X,Z)$ of length $\ell$ over $\F_q$.}%
    {Take any irreducible univariate
    polynomial $g$ of degree $n$ over $\F_q$, and interpret the input
    $X$ as the coefficient vector of a random univariate polynomial
    $F$ of degree $n-1$ over $\F_q$. Then, for an integer parameter
    $h$, the output is given by
    \[
    C(X,Z) := (F(Z), F_1(Z), \ldots, F_{\ell-1}(Z)),
    \]
    where we have used the shorthand $F_i := F^{h^i} \mod g$.}%
    {constr:GUV}

We remark that this construction is inspired by a variation of
Reed-Solomon codes due to Parvaresh and Vardy
\cite{ref:PV05}. Specifically, for a given $x \in \F_q^n$, arranging
the outcomes of the condenser $C(x, z)$ for all possibilities of the
seed $z \in \F_q$ results in the encoding of the input $x$ using a
Parvaresh-Vardy code.  Moreover, Parvaresh-Vardy codes are equipped
with an efficient list-decoding algorithm that is implicit in the
analysis of the condenser. The main technical part of the analysis is
given by the following theorem.

\begin{thm} \cite{ref:GUV09} \label{thm:GUVcondGeneral} The mapping defined in
  Construction~\ref{constr:GUV} is a strong $(k, \eps)$ lossless
  condenser with error $\eps := (n-1)(h-1)\ell/q$, provided that $\ell \geq
  k/\log h$ (thus, under the above conditions the mapping becomes a
  strong $(\leq k, \eps)$-condenser as well).
\end{thm}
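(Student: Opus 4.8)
The plan is to follow the Parvaresh--Vardy list-decoding paradigm, reinterpreted as an argument about the size of the image of a small set under the condenser map. By Proposition~\ref{prop:convex} it suffices to treat a flat source $\cX$ supported on a set $S \subseteq \F_q^n$ with $|S| = K := 2^k$; and by Proposition~\ref{prop:flatmap}(2), once we know that the joint map $x \mapsto (z, C(x,z))$ has image of size at least $(1-\eps)qK$ (out of a universe of size $q \cdot q^\ell \geq qK$, using $\ell \geq k/\log h$), we conclude that $(Z, C(X,Z))$ is $\eps$-close to having min-entropy $\log(qK) = d + k$, which is exactly the strong lossless condenser property. So the whole theorem reduces to a collision bound: for all but an $\eps$ fraction of pairs $(x, z)$ with $x \in S$, the value $(z, C(x,z))$ is not shared by any other $(x', z)$ with $x' \in S$, $x' \neq x$.

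First I would set up the algebraic skeleton. Identify $x \in \F_q^n$ with a polynomial $F_x \in \F_q[Y]$ of degree $< n$, and recall the output is $C(x,z) = (F_x(z), (F_x^h \bmod g)(z), \ldots, (F_x^{h^{\ell-1}} \bmod g)(z))$ for an irreducible $g$ of degree $n$. The key trick, exactly as in Parvaresh--Vardy, is that in the field $\F := \F_q[Y]/(g)$ the residue $F_x^{h^i} \bmod g$ equals the image of the fixed polynomial $E_i(T) := T^{h^i} \in \F_q[T]$ applied to the element $F_x \bmod g$; i.e. the tuple $(F_x \bmod g, \ldots)$ traces out an explicit algebraic curve in $\F^\ell$. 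Now suppose $(z, C(x,z)) = (z, C(x',z))$ for distinct $x, x' \in S$. Then the nonzero polynomial $F_x - F_{x'}$ (degree $< n$, hence $z$ is among at most $n-1$ roots) — that's one easy source of collisions, contributing at most $(n-1)|S|$ bad pairs, handled directly. The substantive case is when $F_x(z) = F_{x'}(z)$ for a $z$ that is genuinely forced by the higher coordinates.

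The heart of the argument is the standard interpolation step: collect all the colliding pairs and show that if there were too many distinct $x \in S$ surviving, one could build a nonzero low-degree multivariate polynomial $Q(Y, T_1, \ldots, T_\ell)$ vanishing on all the points $(z, F_x(z), F_x^h(z), \ldots)$, and then — reducing mod $g$ and using irreducibility of $g$ together with the degree bound $\deg g = n > \deg F_x$ — force $Q(Y, E_1(F_x), \ldots, E_{\ell-1}(F_x)) \equiv 0$ in $\F$, so that the univariate polynomial $Q^*(T) := Q(Y, T^h, T^{h^2}, \ldots, T^{h^{\ell-1}})$ over $\F$ has $F_x \bmod g$ as a root. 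Since $Q^*$ has degree at most roughly $h^{\ell}$ over $\F$ but we arranged $h^{\ell} \approx q^{k/\log h \cdot \log h}\!/\ldots$ — more precisely one chooses the interpolation degrees so that $Q^*$ has at most $K-1$ roots, contradicting that all $|S| = K$ values $F_x \bmod g$ are roots. Chasing the parameters, each "bad" seed $z$ for a fixed colliding structure accounts for a loss controlled by $(h-1)$ times the degree in $Y$, and summing the at most $\ell$ coordinates and the $n-1$ factor from the degree of $F_x$ yields precisely the claimed error $\eps = (n-1)(h-1)\ell/q$ once divided by $q$ (the number of seeds). The condition $\ell \geq k/\log h$ is exactly what makes $h^\ell \geq 2^k = K$, which is what the root-counting contradiction needs.

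The main obstacle I expect is getting the interpolation/degree bookkeeping exactly tight enough to land on $\eps = (n-1)(h-1)\ell/q$ rather than a sloppier bound: one must choose the individual degree bounds of $Q$ in the $Y$ variable versus the $T_i$ variables so that (a) a nonzero $Q$ exists vanishing on all the required point-evaluations (a dimension count: number of monomials exceeds number of constraints), and simultaneously (b) after the substitution $T_i \mapsto T^{h^i}$ the resulting $Q^*$ over $\F$ has fewer than $K$ roots. These two requirements pull in opposite directions, and reconciling them is precisely where the $\log h$ in the hypothesis $\ell \geq k/\log h$ and the multiplicative structure of the $E_i = T^{h^i}$ are essential. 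Once the clean list-decoding statement is in hand, converting "the list has size $< K$" into "the image is large, hence $\eps$-close to min-entropy $d+k$" via Proposition~\ref{prop:flatmap} is routine; I would also note that since the argument only uses $|S| \leq K$, it automatically yields the $(\leq k, \eps)$-condenser conclusion stated parenthetically.
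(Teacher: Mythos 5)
Your plan follows the paper's proof essentially verbatim: reduce to a flat source via Proposition~\ref{prop:convex}, translate the strong lossless-condenser property into a lower bound of $(1-\eps)qK$ on the support size of $(Z, C(X,Z))$ via the second part of Proposition~\ref{prop:flatmap}, and prove that bound by the Parvaresh--Vardy interpolation argument (a nonzero $Q$ of degree less than $D = q-(n-1)(h-1)\ell$ in the seed variable and weighted degree less than $K$ in the remaining variables, which after reduction modulo the irreducible $g$ becomes a nonzero univariate polynomial of degree less than $K$ over $\F_q[Y]/(g)$ with too many roots). The one stray remark is your ``easy source of collisions'' contributing $(n-1)|S|$ bad pairs handled directly: that count is off (for each of the $|S|^2$ ordered pairs, up to $n-1$ seeds collide, which is far too many to discard by a union bound) and the case split is unnecessary, since the interpolation argument controls all collisions uniformly exactly as in the paper.
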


\begin{proof}
  Without loss of generality (using Proposition~\ref{prop:convex}),
  assume that $\cX$ is uniformly distributed on a subset of $\F_q^n$
  of size $K := 2^k$.  Let $D := q - (n-1)(h-1)\ell$. Define the random
  variable
  \[
  Y := (Z, F(Z), F_1(Z), \ldots, F_{\ell-1}(Z)),
  \]
  and denote by $T \subseteq \F_q^{\ell+1}$ the set that supports the
  distribution of $Y$; i.e., the set of vectors in $\F_q^{\ell+1}$ for
  which $Y$ has a nonzero probability of being assigned to.  Our goal
  is to show that $|T| \geq DK$.  Combined with the second part of
  Proposition~\ref{prop:flatmap}, this will prove the theorem, since
  we will know that the distribution of $(Z, C(X,Z))$ has a support of
  size at least $(1-\eps) q2^k$.

  Assume, for the sake of contradiction, that $|T| < DK$. Then the set
  of points in $T$ can be interpolated by a nonzero multivaraite
  low-degree polynomial of the form
  \[
  Q(z, z_1, \ldots, z_\ell) = \sum_{i=0}^{D-1} z^i Q'_i(z_1, \ldots,
  z_\ell),
  \]
  where each monomial $z_1^{j_1} \cdots z_\ell^{j_\ell}$ in every $Q'_i$ has
  weighted degree $j_1 + h j_2 + h^2 j_3 + \cdots + h^{\ell-1} j_\ell$ at
  most $K-1 < h^\ell$ and individual degrees less than $h$ (this
  condition can be assured by taking $j_1, \ldots, j_\ell$ to be the
  integer representation of an integer between $0$ and $K-1$).  Note
  that $Q$ can be described by its $DK$ coefficients, and each point
  on $T$ specifies a linear constraint on their choice. Since the
  number of constraints is less than the number of unknowns, we know
  that a nonzero polynomial $Q$ vanishes on the set $T$. Fix a nonzero
  choice of $Q$ that has the lowest degree in the first variable
  $z$. This assures that if we write down $Q$ as
  \[
  Q(z, z_1, \ldots, z_\ell) = \sum_{j=(j_1, \ldots, j_{\ell})} Q_j(z)
  z_1^{j_1}\cdots z_\ell^{j_\ell},
  \]
  the polynomials $Q_j(z)$ do not have common irreducible factors
  (otherwise we could divide by the common factor and contradict
  minimality of the degree). In particular at least one of the $Q_j$'s
  must be nonzero modulo the irreducible polynomial $g$.

  Now consider the set $S$ of univariate polynomials of degree less
  than $n$ chosen so that
  \[
  f \in S \Leftrightarrow (\forall z \in \F_q)\colon (z, f(z), f_1(z),
  \ldots, f_{\ell-1}(z)) \in T,
  \]
  where, similarly as before, we have used the shorthand $f_i$ for
  $(f^{h^{i}}\ \mathrm{mod}\ g)$.  Note that, if we regard
  $\supp(\cX)$ as a set of low-degree univariate polynomials, by
  construction of the condenser this set must be contained in
  $S$. Therefore, to reach the desired contradiction, it suffices to
  show that $|S| < K$.

  Let $f$ be any polynomial in $S$. By the definition of $S$, the
  univariate polynomial $Q(z, f(z), f_1(z), \ldots, f_{\ell-1}(z))$ must
  have $q$ zeros (namely, all the elements of $\F_q$).  But the total
  degree of this polynomial is at most $D-1+(n-1)(h-1)\ell=q-1$, and
  thus, the polynomial must be identically zero, and in particular,
  identically zero modulo $g$.  Thus, we have the polynomial identity
  \[
  Q(z, f(z), f^2(z), \ldots, f^{h^{\ell-1}}(z)) \equiv 0 \mod g(z),
  \]
  and by expanding the identity, that
  \begin{equation*}
    \sum_{j=(j_1, \ldots, j_{\ell})} (Q_j(z)\ \mathrm{mod}\ g(z)) \cdot (f(z))^{j_1} (f^{h}(z))^{j_2} \cdots (f^{h^{\ell-1}}(z))^{j_\ell} \equiv 0,
  \end{equation*}
  which simplifies to the identity
  \begin{equation} \label{eqn:polyIden} \sum_{j=(j_1, \ldots, j_{\ell})}
    (Q_j(z)\ \mathrm{mod}\ g(z)) \cdot (f(z))^{{j_1}+j_2 h+\cdots+j_\ell
      h^{\ell-1}} \equiv 0.
  \end{equation}

  Consider the degree $n$ field extension $\F = \F_q[z]/g(z)$ of
  $\F_q$, that is isomorphic to the set of $\F_q$-polynomials of
  degree smaller than $n$.  Under this notation, for every $j$ let
  $\alpha_j \in \F$ to be the extension field element corresponding to
  the $\F_q$-polynomial $(Q_j(z)\ \mathrm{mod}\ g(z))$. Recall that,
  by our choice of $Q$, at least one of the $\alpha_j$'s is nonzero,
  and \eqref{eqn:polyIden} implies that the nonzero univariate
  $\F$-polynomial
  \[
  \sum_{j=(j_1, \ldots, j_{\ell})} \alpha_j z^{{j_1}+j_2 h+\cdots+j_\ell
    h^{\ell-1}}
  \]
  has $f$, regarded as an element of $\F$, as one of its zeros.  The
  degree of this polynomial is less than $K$ and thus it can have less
  than $K$ zeros. Thus we conclude that $|S| < K$ and get the desired
  contradiction.
\end{proof}

By a careful choice of the parameters $h$ and $q$ in the above
construction (roughly, $h \approx (2nk/\eps)^{1/\alpha}$ and $q
\approx h^{1+\alpha}$ for arbitrary constant $\alpha > 0$ and error
$\eps$), Guruswami et al.\ derived the following corollary of the
above theorem:

\begin{thm} \cite{ref:GUV09} \label{thm:GUVcond} For all constants
  $\alpha \in (0,1)$ and every $k \leq n \in \N$, $\eps > 0$ there is
  an explicit strong $(k, \eps)$ lossless condenser with seed length
  $d=(1+1/\alpha) \log (nk/\eps) + O(1)$ and output length
  $m=d+(1+\alpha)k$. \qed
\end{thm}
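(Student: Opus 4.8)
The plan is to obtain Theorem~\ref{thm:GUVcond} as a corollary of Theorem~\ref{thm:GUVcondGeneral} by instantiating Construction~\ref{constr:GUV} over a suitable field $\F_q$ with a judiciously chosen exponent $h$, and then translating the field‑level guarantees into the bit‑level parameters $d$ and $m$. We may assume $\eps < 1$ (otherwise any distribution is $\eps$‑close to uniform and there is nothing to prove) and that $nk/\eps$ exceeds a fixed absolute constant (the remaining, bounded, cases being absorbed into the additive $O(1)$ term of $d$). Since Theorem~\ref{thm:GUVcondGeneral} already asserts that the mapping is a \emph{strong lossless} condenser, all that is left to do is to choose $q$, $h$ and the output arity $\ell$ so that its error, seed length and output length match the claimed quantities.

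Concretely, I would set $h := \lceil (2nk/\eps)^{1/\alpha} \rceil \ge 2$, so that $h^{\alpha} \ge 2nk/\eps$ and $\log h = \tfrac1\alpha \log(nk/\eps) + O(1)$. Then choose the field size $q$ to be a power of two lying in the interval $[h^{1+\alpha},\, 2h^{1+\alpha})$ — such a power of two always exists — so that elements of $\F_q$ are literally bit‑strings of length $\log q$ and $\log q = (1+\alpha)\log h + O(1) = (1+1/\alpha)\log(nk/\eps) + O(1)$; this value is the seed length $d$. Set the output arity to $\ell := \lceil k/\log h \rceil$, which satisfies the hypothesis $\ell \ge k/\log h$ of Theorem~\ref{thm:GUVcondGeneral}, and let $n_0 := \lceil n/\log q \rceil$ be the number of $\F_q$‑coordinates used to carry an $n$‑bit input (padding with zeros, which does not lower min‑entropy). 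Note $n_0 \le n$ and $\ell \le k$.

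By Theorem~\ref{thm:GUVcondGeneral}, the construction with these parameters is a strong lossless $(k,\eps')$‑condenser (indeed a $(\le k,\eps')$‑condenser) with $\eps' = (n_0-1)(h-1)\ell/q$. The error check is immediate: $\eps' < n_0 h \ell/q \le nkh/h^{1+\alpha} = nk/h^{\alpha} \le \eps/2 < \eps$. The seed length is $d = \log q = (1+1/\alpha)\log(nk/\eps) + O(1)$ by construction. For the output length, the condenser emits $\ell$ symbols of $\F_q$, i.e. $m = \ell\log q$ bits; writing $m = \log q + (\ell-1)\log q$ and using $\ell - 1 < k/\log h$ together with $\log q \le (1+\alpha)\log h + O(1)$ gives $m = d + (\ell-1)\log q \le d + (1+\alpha)k + O(k/\log h)$, which is the claimed $m = d + (1+\alpha)k$ up to the $O(k/\log h)$ term; one can absorb that term by a slight adjustment of the parameter, or, if the exact form $m = d + (1+\alpha)k$ is wanted, pad the output upward (appending copies of seed coordinates preserves the strong lossless condenser property). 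Explicitness is routine: field arithmetic in $\F_q$ (a power of two) is polynomial time, an irreducible $g$ of degree $n_0$ over $\F_q$ is computable in polynomial time, and each $F^{h^i}\bmod g$ is obtained by repeated squaring in time polynomial in $\log(h^{\ell}) = O(\ell\log h) = O(k)$; so the whole map runs in time polynomial in $n$ and $\log(1/\eps)$.

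The heavy lifting is entirely done by Theorem~\ref{thm:GUVcondGeneral}, so the only real obstacle is the parameter bookkeeping: one must make the three competing requirements — error at most $\eps$, seed length exactly $(1+1/\alpha)\log(nk/\eps) + O(1)$, and output length exactly $d + (1+\alpha)k$ — hold simultaneously with $h,\ell$ integers and $q$ a power of two, checking that the two rounding steps ($\ell = \lceil k/\log h\rceil$ and rounding $q$ up to a power of two), each of which perturbs $m$ by a factor $1 + O(1/\log h)$, fit within the slack afforded by the $(1+\alpha)$ factor and the additive $O(1)$, and separately disposing of the few degenerate regimes of the parameters. This is exactly the ``careful choice of parameters'' mentioned in the text preceding the theorem.
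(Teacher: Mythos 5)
Your proof is correct and follows essentially the same route as the paper's own derivation (spelled out in the proof of Corollary~\ref{coro:GUVcondLinear}): instantiate Construction~\ref{constr:GUV} with $h \approx (nk/\eps)^{1/\alpha}$, $\ell = \lceil k/\log h\rceil$, and $q$ a power of two of order $h^{1+\alpha}$, then invoke Theorem~\ref{thm:GUVcondGeneral}. The only cosmetic difference is that the paper sets $q \approx nh\ell/\eps$, which makes $\log q \leq (1+\alpha)\log h$ hold with no additive slack and thus avoids the $O(k/\log h)$ overshoot in $m$ that you (correctly) absorb by slightly shrinking $\alpha$.
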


Using a straightforward observation, we slightly strengthen this result
and show that in fact the parameters can be set up in such a way that
the resulting lossless condenser becomes linear. Linearity of the condenser
is a property that is particularly useful for the results obtained
in Chapter~\ref{chap:capacity}.

\begin{coro} \label{coro:GUVcondLinear}
 Let $p$ be a fixed prime power and $\alpha > 0$ be an arbitrary constant. Then, for
 parameters $n \in \N$, $k \leq n \log p$, and $\eps > 0$, there is an explicit strong $(\leq k, \eps)$ lossless condenser
 $f\colon \F_{p}^n \times \zo^d \to \F_{p}^m$ with seed length
  $d \leq (1+1/\alpha) (\log (nk/\eps) + O(1))$ and output length satisfying\footnote{All unsubscripted logarithms are to the base $2$.}
  $m \log p \leq d+(1+\alpha)k$. Moreover, $f$ is a linear function (over $\F_{p}$) for every fixed choice of the seed.
\end{coro}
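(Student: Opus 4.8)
The plan is to reopen the parameter choice behind Theorem~\ref{thm:GUVcond} and observe that the only operation in Construction~\ref{constr:GUV} that fails to be linear in the source $X$ is the repeated powering $F \mapsto F^{h^i} \bmod g$, and that this operation becomes $\F_p$-linear the moment $h$ is taken to be a power of $p$. So the proof has two ingredients: a Frobenius observation that supplies linearity, and a re-run of the GUV parameter optimization under the extra constraint that $h$ and $q$ are powers of $p$, plus some alphabet bookkeeping.

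For the linearity, I would write $p = p_0^c$ with $p_0$ prime, take $q = p^s$ so that $\F_p \subseteq \F_q$ and every $\F_q$-space appearing in the construction is in particular an $\F_p$-vector space, and let $g$ be irreducible of degree $n'$ over $\F_q$, so that $R := \F_q[z]/(g)$ is a field of characteristic $p_0$ and the coefficient-vector identification $\F_q^{n'} \cong R$ is $\F_q$-linear. If $h = p^t$, then $a \mapsto a^h = a^{p_0^{ct}}$ is the $(ct)$-fold iterate of the Frobenius endomorphism of $R$: it is additive, and since every $\beta \in \F_p$ satisfies $\beta^h = \beta$, it commutes with multiplication by $\F_p$-scalars, hence is $\F_p$-linear. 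Therefore each $F \mapsto F^{h^i} \bmod g$ is $\F_p$-linear, and composing with the $\F_q$-linear, and therefore $\F_p$-linear, evaluation map $F \mapsto F(Z)$ at a fixed seed $Z$ shows that $X \mapsto C(X,Z) = (F(Z), F_1(Z), \ldots, F_{\ell-1}(Z))$ is $\F_p$-linear for every fixed $Z$, which is precisely the added conclusion.

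Next I would redo the optimization of $h$, $q$ and $\ell$ from the derivation of Theorem~\ref{thm:GUVcond} (via Theorem~\ref{thm:GUVcondGeneral}) with the sole modification that $h$ and $q$ are rounded up to powers of $p$. Since $p$ is a fixed constant, each such rounding inflates the corresponding quantity by at most a factor $p$; running the underlying construction with a slightly smaller constant in place of $\alpha$ then absorbs these factors, so one still obtains error at most $\eps$, seed length $d = \log q \le (1+1/\alpha)(\log(nk/\eps)+O(1))$, and an output obeying $(\ell-1)\log q \le (1+\alpha)k$, i.e.\ $m\log p \le d + (1+\alpha)k$ once the output is measured over $\F_p$.

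Finally I would handle the alphabet bookkeeping. Fixing an $\F_p$-basis of $\F_q$ identifies $\F_q$ with $\F_p^{s}$, turning the condenser into a map with input $\F_q^{n'} \cong \F_p^{sn'}$, output $\F_q^{\ell} \cong \F_p^{s\ell} =: \F_p^m$, and seed a uniform element of $\F_q \cong \F_p^{s}$, which we regard as $d=\log q$ bits exactly as in the passage to Theorem~\ref{thm:GUVcond}. Choosing $n' := \lceil n/s\rceil$ gives $sn' \ge n$, hence an $\F_p$-linear injection $\iota\colon \F_p^n \hookrightarrow \F_q^{n'}$; precomposing the condenser with $\iota$ in the source coordinate yields the desired explicit $f\colon \F_p^n \times \zo^d \to \F_p^m$, which remains $\F_p$-linear for each fixed seed and remains a strong lossless $(\le k, \eps)$-condenser, since an injective $\F_p$-linear map preserves min-entropy and the hypothesis $k \le n\log p$ keeps $k$-sources on $\F_p^n$ available. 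I expect this last step to be the only place requiring care: one must check the mild circularity $s = \Theta(\log(n'k/\eps))$ versus $n' = \Theta(n/s)$ (resolved by first bounding $s = O(\log(nk/\eps))$ using $n' \le n$ and only then defining $n'$), and decide how literally to read ``$\zo^d$'' when $q$ is not a power of two. The Frobenius observation itself is immediate, so the substance of the argument is this accounting.
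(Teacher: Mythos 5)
Your proposal is correct and follows essentially the same route as the paper: round $h$ and $q$ to powers of $p$ so that $F\mapsto F^{h^i}\bmod g$ becomes $\F_p$-linear (the paper states the freshman's-dream identity directly rather than invoking Frobenius), re-verify the GUV parameter bounds with the constant-factor inflation absorbed into the $O(1)$ terms, and convert alphabets at the input and output. The only (immaterial) divergence is your input embedding: you pack $n$ symbols of $\F_p$ into $n'=\lceil n/s\rceil$ symbols of $\F_q$, which forces the circularity you flag, whereas the paper simply keeps $n$ coordinates and embeds $\F_p^n\hookrightarrow\F_q^n$ coordinatewise via the subfield inclusion, avoiding that issue at no cost to the stated bounds.
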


\begin{proof}
  We set up the parameters of the condenser $C$ given by Construction~\ref{constr:GUV} and
  apply Theorem~\ref{thm:GUVcondGeneral}. The range of the parameters is mostly similar to what
  chosen in the original result of Guruswami et al.~\cite{ref:GUV09}.

  Letting $h_0 := (2p^2 nk/\eps)^{1/ \alpha}$, we take $h$ to be an integer power of $p$ in range
  $[h_0, p h_0]$. Also, let $\ell := \lceil k/\log h \rceil$ so that the condition $\ell \geq k/\log h$ required by Theorem~\ref{thm:GUVcondGeneral}
  is satisfied. Finally, let $q_0 := nh \ell/\eps$ and choose the field size $q$ to be an integer power of $p$ in range
  $[q_0, p q_0]$.

  We choose the input length of the condenser $C$ to be equal to $n$. Note that
  $C$ is defined over $\F_q$, and we need a condenser over $\F_p$. Since $q$ is a fixed parameter, we can
  ensure that $q \geq p$ (for large enough $n$), so that $\F_p$ is a subfield of $\F_q$.  For $x \in \F_p^n$ and $z \in \zo^d$,
  let $y := C(x,y) \in \F_q^\ell$, where $x$ is regarded as a vector over the extension $\F_q$ of $\F_p$.
  We define the output of the condenser $f(x,z)$ to be the vector $y$ regarded as a vector of length
  $\ell \log_p q$ over $\F_p$ (by expanding each element of $\F_q$ as a vector of length $\log_p q$ over $\F_p$).
  It can be clearly seen that $f$ is a strong $(\leq k, \eps)$-condenser if $C$ is.

  By Theorem~\ref{thm:GUVcondGeneral}, $C$ is a strong lossless condenser with error upper bounded by
  \[
   \frac{(n-1)(h-1)\ell }{q} \leq \frac{nh \ell}{q_0} = \eps.
  \]
  It remains to analyze the seed length and the output length of the condenser.
  For the output length of the condenser, we have
  \[
   m \log p = \ell \log q \leq (1+k/\log h) \log q \leq d + k (\log q)/(\log h),
  \]
  where the last inequality is due to the fact that we have $d = \lceil \log q \rceil$.
  Thus in order to show the desired upper bound on the output length, it suffices
  to show that $\log q \leq (1+\alpha) \log h_0$. We have
  \[
   \log q \leq \log (p q_0) = \log(pnh \ell/\eps) \leq \log h_0 + \log(p^2 n\ell/\eps)
  \]
  and our task is reduced to showing that $p^2 n \ell/\eps \leq h_0^{\alpha} = 2p^2nk/\eps$. But
  this bound is obviously valid by the choice of $\ell \leq 1+ k/\log h$.

  The seed length is $d = \lceil \log q \rceil$ for which we have
  \begin{eqnarray*}
   d &\leq& \log q + 1 \leq \log q_0 + O(1) \\
   &\leq& \log (nh_0 \ell / \eps) + O(1) \\
   &\leq& \log (nh_0 k / \eps) + O(1) \\
   &\leq& \log(nk/\eps) + \frac{1}{\alpha} \log (2p^2 nk/\eps) \\
   &\leq& \big(1+ \frac{1}{\alpha}\big)(\log(nk/\eps) + O(1))
  \end{eqnarray*}
as desired.

Since $\F_q$ has a fixed characteristic, an efficient deterministic algorithm for representation and
manipulation of the field elements is available \cite{ref:Shoup} which implies that the condenser
is polynomial-time computable and is thus explicit.

Moreover, since $h$ is taken as an integer power of $p$ and $\F_q$ is an extension of $\F_p$,
for any choice of polynomials $F, F', G \in \F_q[X]$, subfield elements $a, b \in \F_p$, and integer $i \geq 0$, we have
\[
 (a F + b F')^{h^i} \equiv a F^{h^i} + b F'^{h^i} \mod G,
\]
meaning that raising a polynomial to power $h^i$ is an $\F_p$-linear operation.
Therefore, the mapping $C$ that defines the condenser (Construction~\ref{constr:GUV})
is $\F_p$-linear for every fixed seed. This in turn implies that the final condenser $f$
is linear, as claimed.
\end{proof}

Guruswami et al.\ used the lossless condenser above as an
intermediate building block for construction of an extractor that is
optimal up to constant factors and extracts almost the entire source
entropy. Namely, they proved the following result that will be useful
for us in later chapters.

\begin{thm} \cite{ref:GUV09} \label{thm:extr} For all positive
  integers $n \geq k$ and all $\eps > 0$, there is an explicit strong
  $(k, \eps)$-extractor $\extr\colon \zo^n \times \zo^d \to \F_2^m$
  with $m = k - 2 \log(1/\eps) - O(1)$ and $d = \log n + O(\log k
  \cdot \log (k/\eps))$. \qed
\end{thm}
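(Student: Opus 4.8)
The plan is to obtain $\extr$ by \emph{composing} a lossless condenser with a good but seed-inefficient extractor, exploiting that the Guruswami--Umans--Vadhan condenser of Theorem~\ref{thm:GUVcond} shrinks the input length from $n$ down to $n_1 = O(k) + O(\log(n/\eps))$ while \emph{exactly preserving} the min-entropy $k$ (so that, when $k$ dominates $\log(n/\eps)$, the entropy rate is pushed to a constant arbitrarily close to $1$), at a cost of only about $\log(n/\eps)$ bits of seed. The point is that once the source lives on $O(k)$ bits, any generic explicit extractor applied to it needs a seed whose length no longer depends on $n$ at all, only on $k$ and $\eps$; so the lone $\log n$ in the target bound comes purely from the condensing step, and the additive $O(\log k\cdot\log(k/\eps))$ from the subsequent extraction.

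Concretely, fix a (slowly growing or constant) overhead parameter $\alpha$, apply Theorem~\ref{thm:GUVcond} with error $\eps/2$ to get an explicit strong $(\le k,\eps/2)$ lossless condenser $C\colon\zo^n\times\zo^{d_1}\to\zo^{n_1}$ with $d_1=(1+1/\alpha)\log(nk/\eps)+O(1)$ and $n_1=d_1+(1+\alpha)k$, then take an explicit strong $(k,\eps/2)$-extractor $E\colon\zo^{n_1}\times\zo^{d_2}\to\F_2^m$ and set $\extr(x,(y_1,y_2)):=E(C(x,y_1),y_2)$ with seed length $d:=d_1+d_2$. Correctness rests on a simple composition lemma with \emph{additive} error: by definition of a strong lossless condenser, $(Y_1,C(X,Y_1))$ is $(\eps/2)$-close to some $(\U_{d_1},\cZ)$ of min-entropy $\ge d_1+k$, and this min-entropy bound forces every conditional $\cZ\mid Y_1{=}y_1$ to be a $k$-source; applying $E$ with a fresh uniform seed $Y_2$ to the last coordinate preserves $(\eps/2)$-closeness, while the strongness of $E$ makes $(Y_2,E(\cZ,Y_2))\mid Y_1{=}y_1$ be $(\eps/2)$-close to $\U_{d_2+m}$ for every $y_1$, whence $(Y_1,Y_2,\extr(X,(Y_1,Y_2)))$ is $\eps$-close to $\U_{d+m}$. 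Reducing to flat $\cX$ via Proposition~\ref{prop:convex} is harmless throughout, and explicitness is inherited since only a bounded number of explicit maps is composed.

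It remains to instantiate $E$ on the short, high-rate source so that $d_2=O(\log k\cdot\log(k/\eps))$ while keeping $m=k-2\log(1/\eps)-O(1)$. A first approximation is to take $E$ to be the Raz--Reingold--Vadhan improvement of Trevisan's extractor (Theorem~\ref{thm:Tre}) with its overhead set to $\approx\log(1/\eps)/k$, which on an input of length $n_1$ gives seed $O(\log^2(k/\eps)\cdot\log k)$ and output within a lower-order additive term of $k$; this already yields a strong extractor with $d=\log n+O(\log^2(k/\eps)\log k)$. To reach the claimed seed bound one replaces this single inner step by a short ($O(\log k)$-round) sequence of ``condense a little, then extract a little'' stages — each round reapplying Theorem~\ref{thm:GUVcond} to keep the working length $O(k)$ and spending only $O(\log(k/\eps))$ further seed bits, with the error budget split into $O(\log k)$ pieces and recombined via the additive-error composition lemma above — and concludes with one application of the Leftover Hash Lemma (Theorem~\ref{lem:leftover}, part~1), whose output length $k-2\log(1/\eps)-O(1)$ is tight, on the short residual block carrying the last few bits of entropy.

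The step I expect to be the main obstacle is exactly this simultaneous optimization: matching $d=\log n+O(\log k\log(k/\eps))$ and $m=k-2\log(1/\eps)-O(1)$ at the same time. Each composition or condensing round that shortens the input or trims the seed also leaks a little entropy, so keeping the output essentially optimal while the seed stays essentially optimal forces the overhead $\alpha$ and the number of rounds to be chosen as carefully tuned, slowly growing functions of $k$ and $1/\eps$, with the error accounting tracked stage by stage. Verifying that strongness survives all of these compositions — i.e.\ that concatenating the successive seeds keeps the full ``(all seeds, output)'' distribution close to uniform — is the other place care is needed, but it is handled uniformly by the averaging argument behind Proposition~\ref{prop:strongDist} together with the additive-error composition lemma sketched above.
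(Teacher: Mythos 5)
A preliminary remark: the paper never proves this statement---Theorem~\ref{thm:extr} is imported from \cite{ref:GUV09} and used as a black box---so your proposal can only be judged on its own terms. Its skeleton is sound and is indeed the right one: condense first so that the source lives on $O(k+\log(n/\eps))$ bits, then extract with a seed that no longer depends on $n$, gluing the two steps with the composition lemma you state. That lemma is correct as you argue it: since the ideal distribution $(\U_{d_1},\cZ)$ has a uniform first marginal and joint min-entropy $d_1+k$, every conditional $\cZ\mid Y_1=y_1$ is a $k$-source, and the errors add.

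The genuine gap is in the entropy accounting of the $O(\log k)$-round ``condense a little, then extract a little'' phase, and no amount of parameter tuning closes it within that architecture. To concatenate the outputs of successive rounds applied to the same evolving source you must, at each round, condition the source on the previous round's seed and output; the chain rule for min-entropy only guarantees that the conditioned source retains $k_i-m_i-\log(1/\tau_i)$ bits except with probability $\tau_i$, and the $\log(1/\tau_i)$ slack is lost for good while $\tau_i$ is charged to the error. With $t=\Theta(\log k)$ rounds and total error budget $\eps$, the accumulated slack $\sum_i\log(1/\tau_i)$ subject to $\sum_i\tau_i\le\eps$ is minimized at $t\log(t/\eps)=\Theta(\log k\cdot\log(\log k/\eps))$, which swamps the claimed $2\log(1/\eps)+O(1)$; even a \emph{single} such composition followed by a Leftover-Hash finish yields loss $3\log(1/\eps)+O(1)$, not $2\log(1/\eps)+O(1)$. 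Reaching the stated output length requires a mechanism that does not pay a fresh conditioning toll per round---the standard one, and the route taken in \cite{ref:GUV09}, is to restructure the condensed constant-entropy-rate source as a block source and chain the block extractors so that each one's output supplies the next one's seed---and that idea is absent from your sketch. Two secondary issues: with constant $\alpha$ the condenser seed is $(1+1/\alpha)\log(nk/\eps)\ge 2\log n$, so your leading coefficient on $\log n$ is $2$ rather than $1$; repairing it forces $\alpha$ to grow with $n$ (not with $k$ and $1/\eps$ as you suggest), which Theorem~\ref{thm:GUVcond} as quoted (constant $\alpha$ only) does not supply and which reintroduces an $n$-dependence into $n_1$ that must itself be condensed away. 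And the concluding application of Lemma~\ref{lem:leftover} has seed length equal to its input length, so it is affordable only after the residual source has been losslessly condensed down to length $O(\log(k/\eps))$---a step you gesture at but do not budget.
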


\musicBoxExtractor


\Chapter{The Wiretap Channel Problem}
\epigraphhead[70]{\epigraph{\textsl{``Music is meaningless noise unless it touches a receiving mind.''}}{\textit{--- Paul Hindemith}}}
\label{chap:wiretap}

Suppose that Alice wants to send a message to Bob through a
communication channel, and that the message is \emph{partially}
observable by an intruder.  This scenario arises in various practical
situations.  For instance, in a packet network, the sequence
transmitted by Alice through the channel can be fragmented into small
packets at the source and/or along the way and different packets might
be routed through different paths in the network in which an intruder
may have compromised some of the intermediate routers.  An example
that is similar in spirit is furnished by transmission of a piece of
information from multiple senders to one receiver, across different
delivery media, such as satellite, wireless, and/or wired networks.
Due to limited resources, a potential intruder may be able to observe
only a fraction of the lines of transmission, and hence only partially
observe the message.
As another example, one can consider secure storage of data on a
distributed medium that is physically accessible in parts by an
intruder, or a sensitive file on a hard drive that is erased from the
file system but is only partially overwritten with new or random
information, and hence, is partially exposed to a malicious party.

An obvious approach to solve this problem is to use a secret key to
encrypt the information at the source. However, almost all practical
cryptographic techniques are shown to be secure only under unproven
hardness assumptions and the assumption that the intruder possesses
bounded computational power.  This might be undesirable in certain
situations. Moreover, the key agreement problem has its own
challenges.

In the problem that we consider in this chapter, we assume the
intruder to be \emph{information theoretically} limited, and our goal
will be to employ this limitation and construct a protocol that
provides unconditional, infor\-mation-theoretic security, even in the
presence of a computationally unbounded adversary.

The problem described above was first formalized by Wyner
\cite{ref:Wyner1} and subsequently by Ozarow and Wyner
\cite{ref:Wyner2} as an infor\-mation-theoretic problem.  In its most
basic setting, this problem is known as the \emph{wiretap II
  problem}\index{wiretap~II problem} (the description given here
follows from \cite{ref:Wyner2}):

\begin{quote}
Consider a communication system with a {source} which outputs a
sequence $X = (X_1,\ldots, X_m)$ in $\zo^m$ uniformly at random.  A
randomized algorithm, called the encoder,
maps the output of the source to a binary string $Y \in \zo^n$.  The
output of the encoder is then sent through a noiseless channel (called
\emph{the direct channel}\index{direct channel}) and is eventually
delivered to a decoder\footnote{Ozarow and Wyner also consider the
  case in which the decoder errs with negligible probability, but we
  are going to consider only error-free decoders.} $D$ which maps $Y$
back to $X$.
Along the way, an intruder arbitrarily picks a subset $S \subseteq
[n] := \{1,\ldots,n\}$ of size $t \leq n$, and is allowed to observe\footnote{ For a
  vector $x = (x_1, x_2, \ldots, x_n)$ and a subset $S \subseteq [n]$,
  we denote by $x|_S$ the vector of length $|S|$ that is obtained from
  $x$ by removing all the coordinates $x_i$, $i \notin S$.  } $Z :=
Y|_S$ (through a so-called \emph{wiretap channel}\index{wiretap
  channel}), i.e., $Y$ on the coordinate positions corresponding to
the set $S$.
The goal is to make sure that the intruder learns as little as
possible about $X$, regardless of the choice of $S$.
\end{quote}

\begin{figure}
\centerline{\includegraphics[width=\textwidth]{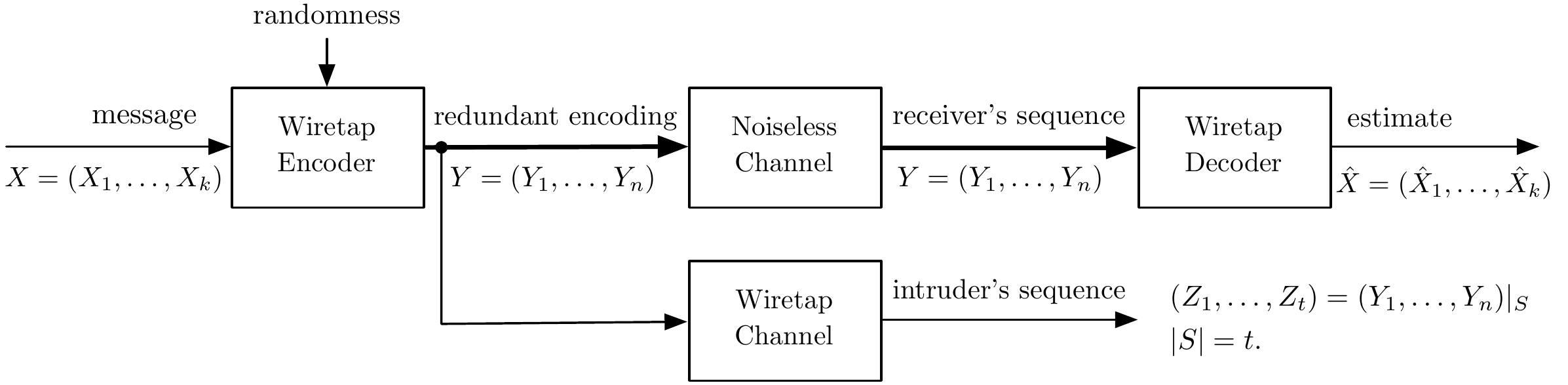}}
\caption[The Wiretap~II Problem]{The Wiretap~II Problem.}
\label{fig:wiretapII}
\end{figure}

The system defined above is illustrated in Figure~\ref{fig:wiretapII}.
The security of the system is defined by the following conditional
entropy, known as ``equivocation'':\index{equivocation} \[ \Delta :=
\min_{S\colon |S| = t}
H(X|Z).\] 
When $\Delta = H(X) = m$, the intruder obtains no information about the
transmitted message and we have \emph{perfect privacy} in the
system. Moreover, when $\Delta \to m$ as $m \to \infty$, we call the
system \emph{asymptotically perfectly} private.\index{perfect
  privacy}\index{asymptotically perfect privacy} These two cases
correspond to what is known in the literature as ``strong
secrecy''\index{strong secrecy}\index{weak secrecy}.  A weaker
requirement (known as ``weak secrecy'') would be to have $m - \Delta =
o(m)$.

\begin{rem} \label{rem:randomMessage}
  The assumption that $X$ is sampled from a uniformly random
  source should not be confused with the fact that Alice is
  transmitting \emph{one particular} message to Bob that is fixed and
  known to her before the transmission. In this case, the randomness
  of $X$ in the model captures the \emph{a priori} uncertainty about
  $X$ for the \emph{outside world}, and in particular the intruder,
  but not the transmitter.

  As an intuitive example, suppose that a random key is agreed upon
  between Alice and a trusted third party, and now Alice wishes to
  securely send her particular key to Bob over a wiretapped
  channel. Or, assume that Alice wishes to send an audio stream to Bob
  that is encoded and compressed using a conventional audio encoding
  method.
%

  Furthermore, the particular choice of the distribution on $X$ as a
  uniformly random sequence will cause no loss of generality. If the
  distribution of $X$ is publicly known to be non-uniform, the
  transmitter can use a suitable source-coding scheme to compress the
  source to its entropy prior to the transmission, and ensure that
  from the intruder's point of view, $X$ is uniformly distributed.
  On the other hand, it is also easy to see that if a protocol
  achieves perfect privacy under uniform message distribution, it
  achieves perfect privacy under any other distribution as well.
\end{rem}

\section{The Formal Model}
\label{subsec:model}
The model that we will be considering in this chapter is motivated by
the original wiretap channel problem but is more stringent in terms of
its security requirements. In particular, instead of using Shannon
entropy as a measure of uncertainty, we will rely on statistical
indistinguishability which is a stronger measure that is more widely
used in cryptography.

\begin{defn} \index{wiretap protocol}
  \label{def:wiretap}
  Let $\Sigma$ be a set of size $q$, $m$ and $n$ be positive integers,
  and $\veps, \gamma>0$.  A $(t,\veps,\gamma)_q$-resilient wiretap
  protocol of block length $n$ and message length $m$ is a pair of
  functions $E\colon\Sigma^m\times\zo^r\to\Sigma^n$ (the encoder) and
  $D\colon\Sigma^n\to\Sigma^m$ (the decoder) that are computable in
  time polynomial in $m$, such that
  \begin{enumerate}
  \item[(a)] (Decodability) For all $x\in\Sigma^m$ and all $z\in\zo^r$
    we have $D(E(x,z))=x$,
  \item[(b)] (Resiliency) Let $\rv{X} \sim \U_{\Sigma^m}$, $\rv{R}\sim
    \U_{r}$, and $\rv{Y}=E(X,R)$. For a set $S \subseteq [n]$ and $w
    \in \Sigma^{|S|}$, let
    $\cX_{S,w}$ denote the distribution of $X$ conditioned on the
    event $Y|_S = w$. Define the set of \emph{bad observations} as \[
    B_S := \{ w \in \Sigma^{|S|} \mid \dist( \cX_{S,w}, \U_{\Sigma^m}
    ) > \eps \},\] where $\dist(\cdot, \cdot)$ denotes the statistical
    distance between two distributions.  Then we require that for
    every $S \subseteq [n]$ of size at most $t$, $\Pr[Y|_S \in B_S]
    \leq \gamma$, where the probability is over the randomness of
    $X$ and $R$.
  \end{enumerate}

  The \emph{encoding} of a vector $x\in\Sigma^k$ is accomplished by
  choosing a vector $Z\in\zo^r$ uniformly at random, and calculating
  $E(x,Z)$.  The quantities $R:=m/n$, $\eps$, and $\gamma$ are called
  the \emph{rate}, the \emph{error}, and the \emph{leakage}
  \index{wiretap protocol!rate}\index{wiretap
    protocol!error}\index{wiretap protocol!leakage}\index{wiretap
    protocol!resilience} of the protocol, respectively.  Moreover, we
  call $\delta:=t/n$ the \emph{(relative) resilience} of the protocol.
\end{defn}

The decodability condition ensures that the functions $E$ and $D$ are
a \emph{matching} encoder/decoder pair, while the resiliency
conditions ensures that the intruder learns almost nothing about the
message from his observation.

In our definition, the imperfection of the protocol is captured by the
two parameters $\eps$ and $\gamma$.  When $\eps = \gamma = 0$, the
above definition coincides with the original wiretap channel problem
for the case of perfect privacy.

When $\gamma = 0$, we will have a \emph{worst-case} guarantee, namely,
that the intruder's views of the message before and after his
observation are statistically close, \emph{regardless} of the outcome
of the observation.

The protocol remains interesting even when $\gamma$ is positive but
sufficiently small.  When $\gamma > 0$, a particular observation might
potentially reveal to the intruder a lot of information about the
message.  However, a negligible $\gamma$ will ensure that such a bad
event (or \emph{leakage}) happens only with negligible probability.

All the constructions that we will study in this chapter achieve zero
leakage (i.e., $\gamma = 0$), except for the general result in
Section~\ref{subsec:arbitrary} for which a nonzero leakage is
inevitable.

The significance of zero-leakage protocols is that they assure
\emph{adaptive} resiliency in the \emph{weak} sense introduced in
\cite{ref:DSS01} for exposure-resilient functions: if the intruder is
given the encoded sequence as an oracle that he can adaptively query
at up to $t$ coordinates (that is, the choice of each query may depend
on the outcome of the previous queries), and is afterwards presented
with a challenge which is either the original message or an
independent uniformly chosen random string, he will not be able to
distinguish between the two cases.

In general, it is straightforward to verify that our model can be used
to solve the original wiretap II problem, with $\Delta \geq
m(1-\eps-\gamma)$:

\begin{lem}
  \label{app:model}
  Suppose that $(E, D)$ is an encoder/decoder pair as in
  Definition~\ref{def:wiretap}.  Then using $E$ and $D$ in the
  wiretap~II problem attains an equivocation \[ \Delta \geq
  m(1-\eps-\gamma).\]
\end{lem}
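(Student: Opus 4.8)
The plan is to unfold the equivocation $\Delta=\min_{|S|=t}H(X\mid Z)$, where $Z=Y|_S$, as an average of posterior entropies over the intruder's possible observations, discard the small fraction of ``bad'' observations, and observe that on the ``good'' observations the posterior $\cX_{S,w}$ is by construction statistically close to uniform, hence carries almost full entropy.

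In detail, fix any $S\subseteq[n]$ with $|S|=t$ and write
\[
H(X\mid Z)\;=\;\sum_{w\in\Sigma^{|S|}}\Pr[Y|_S=w]\cdot H(\cX_{S,w}).
\]
Let $G_S:=\Sigma^{|S|}\setminus B_S$ be the good observations. The resiliency condition of Definition~\ref{def:wiretap} gives $\Pr[Y|_S\in B_S]\le\gamma$, hence $\Pr[Y|_S\in G_S]\ge 1-\gamma$; since every summand is nonnegative I may drop the terms with $w\in B_S$, obtaining
\[
H(X\mid Z)\;\ge\;\sum_{w\in G_S}\Pr[Y|_S=w]\cdot H(\cX_{S,w}).
\]
For $w\in G_S$ we have $\dist(\cX_{S,w},\U_{\Sigma^m})\le\eps$, and here I would invoke the standard entropy/statistical-distance tradeoff: a distribution on an $N$-point set that is $\eps$-close to uniform has Shannon entropy at least $(1-\eps)\log N$ (entropy normalized so that $H(\U_{\Sigma^m})=m$, i.e.\ $\log N=m$). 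This yields $H(\cX_{S,w})\ge(1-\eps)m$ for every $w\in G_S$, so
\[
H(X\mid Z)\;\ge\;(1-\eps)m\cdot\Pr[Y|_S\in G_S]\;\ge\;(1-\eps)(1-\gamma)m\;\ge\;(1-\eps-\gamma)m,
\]
the last inequality because $(1-\eps)(1-\gamma)=1-\eps-\gamma+\eps\gamma\ge 1-\eps-\gamma$. Since $S$ was arbitrary, $\Delta\ge m(1-\eps-\gamma)$. (One can run the same computation through mutual information: $\Delta=m-I(X;Z)$ and $I(X;Z)=\sum_w\Pr[Y|_S=w]\,D(\cX_{S,w}\,\|\,\U_{\Sigma^m})\le(1-\gamma)\cdot\eps m+\gamma\cdot m$, using $D(\,\cdot\,\|\,\U_{\Sigma^m})\le m$ always — this needs exactly the same core estimate.)

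The only nonroutine ingredient, and the step to be careful with, is the entropy/distance inequality $H(P)\ge(1-\eps)\log N$ whenever $\dist(P,\U_N)\le\eps$. I would derive it from $H(P)=\log N-D(P\,\|\,\U_N)$ together with the observation that, by Schur-concavity of entropy, the entropy-minimizing $P$ inside the statistical-distance ball around $\U_N$ is the distribution uniform on a $(1-\eps)$-fraction of the ground set with the leftover $\eps$ mass heaped on a single point; a direct evaluation then gives $D(P\,\|\,\U_N)=(\eps+1/N)\log(1+\eps N)$, which is at most $\eps\log N$ in the regime of large $N=q^m$ and small $\eps$ that is relevant here (for tiny $N$ or $\eps$ close to $1$ the clean bound degrades, but those cases are irrelevant). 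Alternatively this tradeoff can simply be cited; everything else is bookkeeping with the definitions.
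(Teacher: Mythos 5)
Your proof is correct and follows essentially the same route as the paper: expand $H(X\mid Z)$ as an average of posterior entropies, drop the bad observations (total mass at most $\gamma$), and lower-bound each good posterior's entropy by $(1-\eps)m$ via the entropy/statistical-distance tradeoff. The paper proves that auxiliary tradeoff (its Proposition~\ref{prop:Shannon}, stated for $|S|>4$ and $\eps\le 1/4$) by a direct concavity argument rather than your KL-divergence/extremal-distribution derivation, but the two are interchangeable and the surrounding bookkeeping is identical.
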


\begin{proof}
  Let $W := Y|_S$ be the intruder's observation, and denote by $W'$
  the set of \emph{good} observations, namely,
  \[ W' := \{ w \in \Sigma^t \colon \dist( \cX_{S,w}, \U_{\Sigma^m} )
  \leq \eps \}. \] Denote by $H(\cdot)$ the Shannon entropy in $d$-ary
  symbols. Then we will have
  \begin{eqnarray*}
    H(X|W) &=& \sum_{w \in \Sigma^t} \Pr(W = w) H(X|W = w) \\
    &\geq& \sum_{w \in W'} \Pr(W = w) H(X|W = w) \\
    &\stackrel{\mathrm{(a)}}{\geq}& \sum_{w \in W'} \Pr(W = w) (1-\eps) m
    \stackrel{\mathrm{(b)}}{\geq} (1-\gamma)(1-\eps) m \geq (1-\gamma-\eps) m.
  \end{eqnarray*}
  The inequality $\mathrm{(a)}$ follows from the definition of $W'$
  combined with Proposition~\ref{prop:Shannon} in the appendix, and
  $\mathrm{(b)}$ by the definition of leakage parameter.
\end{proof}

Hence, we will achieve asymptotically perfect privacy when
$\eps+\gamma = o(1/m)$.  For all the protocols that we present in this
chapter this quantity will be superpolynomially small; that is,
smaller than $1/m^c$ for every positive constant $c$ (provided that
$m$ is large enough).

\section{Review of the Related Notions in Cryptography}

There are several interrelated notions in the literature on
Cryptography and Theoretical Computer Science that are also closely
related to our definition of the wiretap protocol
(Definition~\ref{def:wiretap}).  These are \emph{resilient functions
  (RF)} and \emph{almost perfect resilient functions (APRF)},
\emph{exposure-resilient functions (ERF)}, and \emph{all-or-nothing
  transforms (AONT)} (cf.\
\cites{ref:tResilient,ref:tResilient2,ref:Riv97,ref:Stinson,ref:FT00,ref:CDH,ref:KJS}
and \cite{ref:Dodis} for a comprehensive account of several important
results in this area).

The notion of resilient functions was introduced in \cite{ref:BBR85}
(and also \cite{ref:Vaz87} as the \emph{bit-extraction problem}). A
deterministic polynomial-time computable function $f\colon \zo^n \to
\zo^m$ is called $t$-resilient \index{resilient function (RF)} if
whenever any $t$ bits of the its input are arbitrarily chosen by an
adversary and the rest of the bits are chosen uniformly at random,
then the output distribution of the function is (close to) uniform.
APRF is a stronger variation where the criterion for uniformity of the
output distribution is defined with respect to the $\ell_\infty$
(i.e., point-wise distance of distributions) rather than
$\ell_1$. This stronger requirement allows for an ``adaptive
security'' of APRFs.


ERFs\index{exposure-resilient function (ERF)}, introduced in
\cite{ref:CDH}, are similar to resilient functions except that the
entire input is chosen uniformly at random, and the view of the
adversary from the output remains (close to) uniform even after
observing any $t$ input bits of his choice.

ERFs and resilient functions are known to be useful in a scenario
similar to the wiretap channel problem where the two parties aim to
agree on \emph{any} random string, for example a session key (Alice
generates $x$ uniformly at random which she sends to Bob, and then
they agree on the string $f(x)$).  Here no control on the content of
the message is required, and the only goal is that at the end of the
protocol the two parties agree on any random string that is uniform
even conditioned on the observations of the intruder.  Hence,
Definition~\ref{def:wiretap} of a wiretap protocol is more stringent
than that of resilient functions, since it requires the existence and
efficient computability of the encoding function $E$ that provides a
control over the content of the
message. 

Another closely related notion is that of all-or-no\-thing transforms,
which was suggested in \cite{ref:Riv97} for protection of block
ciphers.  \index{all-or-nothing transform (AONT)} A randomized
poly\-nomial-time computable function $f\colon \zo^m \to \zo^n$, $(m
\leq n)$, is called a (statistical, non-adaptive, and secret-only)
$t$-AONT with error $\eps$ if it is efficiently invertible and for
every $S \subseteq [n]$ such that $|S| \le t$, and all $x_1, x_2 \in
\zo^m$ we have that the two distributions $f(x_1)|_S$ and $f(x_2)|_S$
are $\eps$-close.

An AONT with $\eps=0$ is called perfect.  It is easy to see that
perfectly private wiretap protocols are equivalent to perfect adaptive
AONTs. It was shown in \cite{ref:DSS01} that such functions can not
exist (with positive, constant rate) when the adversary is allowed to
observe more than half of the encoded bits. A similar result was
obtained in \cite{ref:tResilient} for the case of perfect linear RFs.

As pointed out in \cite{ref:DSS01}, AONTs can be used in the original
scenario of Ozarow and Wyner's wiretap channel problem. However, the
best known constructions of AONTs can achieve rate-resilience
trade-offs that are far from the infor\-mation-theoretic optimum (see
Figure \ref{fig:region}).

While an AONT requires indistinguishability of intruder's view for
every fixed pair $(x_1, x_2)$ of messages, the relaxed notion of
\emph{average-case} AONT \index{all-or-nothing transform
  (AONT)!average case} requires the \emph{expected} distance of
$f(x_1)|_S$ and $f(x_2)|_S$ to be at most $\eps$ for a uniform random
message pair. Hence, for a negligible $\eps$, the distance will be
negligible for all but a negligible fraction of message pairs.
Up to a loss in parameters, wiretap protocols are equivalent to
average case AONTs:

\begin{lem} \label{lem:avgAONT} Let $(E, D)$ be an encoding/decoding
  pair for a $(t, \eps, \gamma)_2$-resilient wiretap protocol. Then
  $E$ is an average-case $t$-AONT with error at most $2(\eps+\gamma)$.

  Conversely, an average-case $t$-AONT with error $\eta^2$ can be used
  as a $(t, \eta, \eta)$-resilient wiretap encoder.
\end{lem}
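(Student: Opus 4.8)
The plan is to isolate a single identity that powers both directions, and then read each direction off it with one extra inequality. Fix a set $S \subseteq [n]$ with $|S| \le t$ and write $q := |\Sigma|$. For $x \in \Sigma^m$ let $\mu_x$ be the distribution of $E(x,R)|_S$ over a uniform seed $R$, and let $\mu$ be the distribution of $Y|_S = E(X,R)|_S$ over uniform $X$ and $R$; thus $\mu$ is the uniform average of the $\mu_x$ over $x \in \Sigma^m$. The key identity I would establish is
\[
\Ex_{x}\bigl[\dist(\mu_x,\mu)\bigr] \;=\; \Ex_{w \sim Y|_S}\bigl[\dist(\cX_{S,w},\U_{\Sigma^m})\bigr],
\]
with $x$ uniform on the left. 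To prove it, apply Bayes' rule to get $\cX_{S,w}(x) = \mu_x(w)/(q^m \mu(w))$ for every $w$ in the support of $\mu$, so that $\dist(\cX_{S,w},\U_{\Sigma^m}) = \tfrac{1}{2 q^m \mu(w)}\sum_x |\mu_x(w)-\mu(w)|$; weighting by $\mu(w)$ and summing over $w$ turns the right-hand side into $\tfrac{1}{2q^m}\sum_{x,w}|\mu_x(w)-\mu(w)|$, which is exactly what the left-hand side expands to directly from the definition of statistical distance. This reindexing — the two sides average over different quantities ($x$ uniform on one side, $w \sim Y|_S$ on the other) yet coincide — is the only place I expect genuine care to be needed; everything below is a couple of lines.

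\emph{Forward direction.} By the triangle inequality routed through $\mu$, $\Ex_{x_1,x_2}[\dist(\mu_{x_1},\mu_{x_2})] \le 2\,\Ex_x[\dist(\mu_x,\mu)]$, which by the identity equals $2\,\Ex_{w\sim Y|_S}[\dist(\cX_{S,w},\U_{\Sigma^m})]$. Splitting the last expectation over whether the observation $w=Y|_S$ lands in the bad set $B_S$ of Definition~\ref{def:wiretap} — using $\dist(\cX_{S,w},\U_{\Sigma^m}) \le \eps$ off $B_S$, the trivial bound $\dist \le 1$ on $B_S$, and $\Pr[Y|_S \in B_S] \le \gamma$ from resiliency — bounds it by $\eps + \gamma$. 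Hence for every such $S$, $\Ex_{x_1,x_2}[\dist(E(x_1,R)|_S, E(x_2,R)|_S)] \le 2(\eps+\gamma)$; since $D$ inverts $E$ and both are polynomial time, $E$ is an average-case $t$-AONT with error $2(\eps+\gamma)$.

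\emph{Converse.} Set $E := f$ and let $D$ be the efficient inverter guaranteed by the AONT definition, so decodability is immediate. Fix $S$ with $|S|\le t$. Convexity of statistical distance in its second argument (with $\mu$ the average of the $\mu_{x_2}$) gives $\dist(\mu_{x_1},\mu) \le \Ex_{x_2}[\dist(\mu_{x_1},\mu_{x_2})]$, so $\Ex_x[\dist(\mu_x,\mu)] \le \Ex_{x_1,x_2}[\dist(\mu_{x_1},\mu_{x_2})] \le \eta^2$ by the average-case AONT guarantee. The identity then yields $\Ex_{w\sim Y|_S}[\dist(\cX_{S,w},\U_{\Sigma^m})] \le \eta^2$, and Markov's inequality gives $\Pr[Y|_S \in B_S] = \Pr_{w\sim Y|_S}[\dist(\cX_{S,w},\U_{\Sigma^m}) > \eta] \le \eta$. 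That is precisely the resiliency requirement with error $\eta$ and leakage $\eta$, so $(E,D)$ is a $(t,\eta,\eta)$-resilient wiretap protocol. The main obstacle throughout is really just the identity; once it is in hand, the forward direction is the triangle inequality plus a bad-event split, and the converse is convexity plus Markov.
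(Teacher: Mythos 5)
Your proof is correct and follows essentially the same route as the paper: your ``key identity'' is precisely the paper's Proposition~\ref{prop:duality} (proved there by the same Bayes-rule/reindexing computation you give), and the forward direction via the bad-set split plus triangle inequality, as well as the converse via convexity plus Markov, are exactly the paper's argument.
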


\begin{proof}
  Consider a $(t, \eps, \gamma)_2$-resilient wiretap protocol as in
  Definition~\ref{def:wiretap}, and accordingly, let the random
  variable $Y=E(X,R)$ denote the encoding of $X$ with a random seed
  $R$. For a set $S \subseteq [n]$ of size at most $t$, denote by $W
  := Y|_S$ the intruder's observation.

  The resiliency condition implies that, the set of bad observations
  $B_S$ has a probability mass of at most $\gamma$ and hence, the
  expected distance $\dist(X|W, X)$ taken over the distribution of $W$
  is at most $\eps+\gamma$.  Now we can apply
  Proposition~\ref{prop:duality} to the jointly distributed pair of
  random variables $(W, X)$, and conclude that the expected distance
  $\dist(W|X, W)$ over the distribution of $X$ (which is uniform) is
  at most $\eps+\gamma$. This implies that the encoder is an
  average-case $t$-AONT with error at most $2(\eps+\gamma)$.

  Conversely, the same argument combined with Markov's bound shows
  that an average-case $t$-AONT with error $\eta^2$ can be seen as
  $(t, \eta, \eta)$-resilient wiretap protocol.
\end{proof}

Note that the converse direction does not guarantee zero leakage,
and hence, zero leakage wiretap protocols are in general stronger than
average-case AONTs.  An average-case to worst-case reduction for AONTs
was shown in \cite{ref:CDH} which, combined with the above lemma, can
be used to show that any wiretap protocol can be used to construct an
AONT (at the cost of a rate loss).

A simple \emph{universal} transformation was proposed in
\cite{ref:CDH} to obtain an AONT from any ERF, by one-time padding the
message with a random string obtained from the ERF.  In particular,
given an ERF $f\colon \zo^n \to \zo^m$, the AONT $g\colon \zo^m \to
\zo^{m+n}$ is defined as $g(x) := (r,x + f(r))$, where $r \in \zo^n$
is chosen uniformly at random.  Hence, the ERF is used to one-time pad
the message with a random secret string.

This construction can also yield a wiretap protocol with zero leakage.
However, it has the drawback of significantly weakening the
rate-resilience trade-off. Namely, even if an information
theoretically optimal ERF is used in this reduction, the resulting
wiretap protocol will only achieve half the optimal rate (see
Figure~\ref{fig:region}).  This is because the one-time padding
strategy necessarily requires a random seed that is at least as long
as the message itself, even if the intruder is restricted to observe
only a small fraction of the transmitted sequence. Hence the rate of
the resulting AONT cannot exceed $1/2$, and it is not clear how to
improve this universal transformation to obtain a worst-case AONT
using a shorter seed.

The main focus of this chapter is on asymptotic trade-offs between the
rate $R$ and the resilience $\delta$ of an asymptotically perfectly
private wiretap protocol.  For applications in cryptography, e.g., the
context of ERFs or AONTs, it is typically assumed that the adversary
learns \emph{all} but a small number of the bits in the encoded
sequence, and the incurred \emph{blow-up} in the encoding is not as
crucially important, as long as it remains within a reasonable
range. On the other hand, as in this chapter we are motivated by the
wiretap channel problem which is a communication problem, optimizing
the transmission \emph{rate} will be the most important concern for
us. We will focus on the case where the fraction $\delta$ of the
symbols observed by the intruder is an arbitrary constant below $1$,
which is the most interesting range in our context. However, some of
our constructions work for sub-constant $1-\delta$ as well.

Following \cite{ref:Wyner2}, it is easy to see that, for resilience
$\delta$, an infor\-mation-theoretic bound $R\le 1-\delta + o(1)$ must
hold.  Lower bounds for $R$ in terms of $\delta$ have been studied by
a number of researchers.

For the case of perfect privacy (where the equivocation $\Delta$ is equal to the
message length $m$), Ozarow and
\index{Ozarow-Wyner's protocol} Wyner~\cite{ref:Wyner2} give a
construction of a wiretap protocol using linear error-correcting
codes, and show that the existence of an $[n,k,d]_q$-code implies the
existence of a perfectly private, $(d-1,0,0)_q$-resilient wiretap
protocol with message length $k$ and block length $n$ (thus, rate $k/n$).

As a result, the so-called Gilbert-Varshamov bound on the
rate-distance trade-offs of linear codes (see
Chapter~\ref{chap:gv}) 
implies that, asymptotically, $R\ge 1-h_q(\delta)$, where $h_q$ is the
$q$-ary entropy function defined as
\[
h_q(x) := x \log_q(q-1) - x \log_q(x) - (1-x) \log_q(1-x).
\]
If $q\ge 49$ is a square, the bound can be further improved to $R\ge
1-\delta-1/(\sqrt{q}-1)$ using Goppa's algebraic-geometric
codes~\cites{gopp:81,tsvz:82}.  In these protocols, the encoder can be
seen as an adaptively secure, perfect AONTs and the decoder is an
adaptive perfect RF.

Moving away from perfect to asymptotically perfect privacy, it was
shown in~\cite{ref:KJS} that for any $\gamma>0$ there exist binary
asymptotically perfectly private wiretap protocols with $R\ge
1-2\delta-\gamma$ and exponentially small error\footnote{Actually,
  what is proved in this paper is the existence of $t$-resilient
  functions which correspond to decoders in our wiretap setting;
  however, it can be shown that these functions also possess efficient
  encoders, so that it is possible to construct wiretap protocols from
  them.}.  This bound strictly improves the coding-theoretic bound of
Ozarow and Wyner for the binary alphabet.

\begin{figure}[t]
  \centering
  \begin{tikzpicture}[xscale=4, yscale=4]
    \foreach \x in {1, ..., 9} { \draw[thin, dotted] (0.1*\x,
      1-0.1*\x) -- (0.1*\x, 0); \draw[thin, dotted] (0, 0.1*\x) -- (1
      - 0.1*\x, 0.1*\x); } \draw[->] (-0.02,0) -- (1.1,0) node[right]
    {$\delta$}; \draw[->] (0,-0.02) -- (0,1.1) node[above]
    {$\mathrm{rate}$}; \draw(0, 1) node[left, scale=1]{${1}$};
    \draw(1, 0) node[below, scale=1]{$1$};
    \draw[very thick] (0,1) -- (1,0); 
    \draw[very thick, densely dotted] (0,0.5) -- (1,0); 
    \draw (0,1) -- (0.5,0) node[below]{$\frac{1}{2}$}; \draw[dashed]
    plot[raw gnuplot, id=GV] function{ h(x) = (- x * log(x) - (1-x) *
      log(1-x) ) / log(2); gv(x) = (x == 0) ? 1 : ( (x==0.5) ? 0: 1 -
      h(x)); plot [x=0:0.5] gv(x); };

    \draw[very thick] (0.7, 1.05) node[left]{{\small $(1)$}} -- (0.9,
    1.05); \draw (0.7, 0.9) node[left]{{\small $(2)$}} -- (0.9, 0.9);
    \draw[dashed](0.7, 0.75) node[left]{{\small $(3)$}} -- (0.9,
    0.75); \draw[very thick, densely dotted](0.7, 0.6)
    node[left]{{\small $(4)$}} -- (0.9, 0.6);
  \end{tikzpicture} \hspace{1cm}
  \begin{tikzpicture}[xscale=4, yscale=4]
    \foreach \x in {1, ..., 9} { \draw[thin, dotted] (0.1*\x,
      1-0.1*\x) -- (0.1*\x, 0); \draw[thin, dotted] (0, 0.1*\x) -- (1
      - 0.1*\x, 0.1*\x); } \draw (0.5, 0) node[below]{$\frac{1}{2}$};
    \draw[->] (-0.02,0) -- (1.1,0) node[right] {$\delta$}; \draw[->]
    (0,-0.02) -- (0,1.1) node[above] {$\mathrm{rate}$}; \draw(0, 1)
    node[left, scale=1]{${1}$}; \draw(1, 0) node[below, scale=1]{$1$};
    \draw[very thick] (0,1) -- (1,0); 
    \draw plot[raw gnuplot, id=plot1] function{ d =
      64; 
      d2 = 62; 
      lambda = (2 * sqrt(d2 - 1) + (d - d2)) / d; alpha = -log(lambda
      ** 2)/log(d); max(a, b) = (a > b) ? a : b; walk(x) = max( alpha
      * (1-x), 1 - x / alpha ); plot [x=0:1] walk(x); }; \draw[dashed]
    plot[raw gnuplot, id=plot2] function{ d = 64; 
      h(q, x) = (x * log(q-1) - x * log(x) - (1-x) * log(1-x) ) /
      log(q); max(a, b) = (a > b) ? a : b; gv(x) = (x == 0) ? 1 : (
      (x==1) ? 0: 1 - h(d, x)); AG(x) = 1 - x - (1 / (sqrt(d) - 1));
      plot [x=0:1] max(gv(x), AG(x)); };
    \draw[very thick] (0.7, 1.05) node[left]{{\small $(1)$}} -- (0.9,
    1.05); \draw (0.7, 0.9) node[left]{{\small $(5)$}} -- (0.9, 0.9);
    \draw[dashed](0.7, 0.75) node[left]{{\small $(6)$}} -- (0.9,
    0.75);
  \end{tikzpicture}
  \caption[Comparison of the rate vs.\ resilience trade-offs achieved
  by various wiretap protocols]{ A comparison of the rate vs.\
    resilience trade-offs achieved by the wiretap protocols for the
    binary alphabet (left) and larger alphabets (right, in this
    example of size $64$).  $(1)$ Infor\-mation-theoretic bound,
    attained by Theorem~\ref{coro:wiretap}; $(2)$ The bound approached
    by \cite{ref:KJS}; $(3)$ Protocol based on best non-explicit binary
    linear codes \cites{gilbert,varshamov}; $(4)$ AONT construction of
    \cite{ref:CDH}, assuming that the underlying ERF is optimal; $(5)$
    Random walk protocol of Corollary~\ref{coro:WTWalk}; $(6)$
    Protocol based on the best known explicit \cite{tsvz:82} and
    non-explicit \cites{gilbert,varshamov} linear codes.  }

  \label{fig:region}
\end{figure}
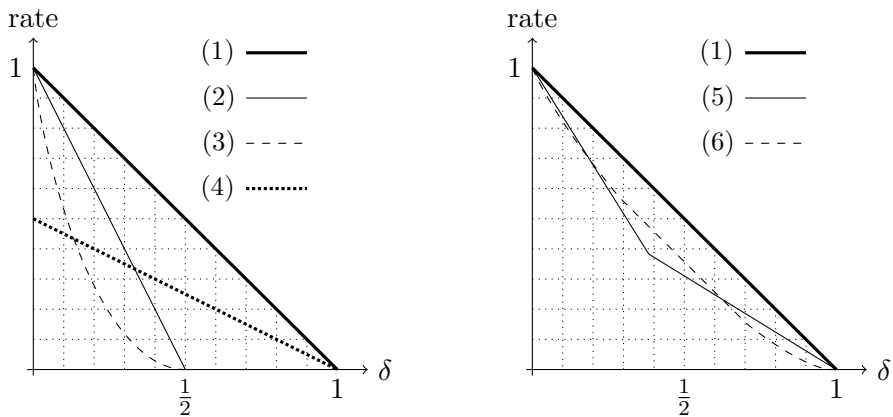

\section{Symbol-Fixing and Affine Extractors}
\label{sec:symb-fix}

Two central notions for our constructions of wiretap protocols in this
chapter are symbol-fixing and affine extractors. In this section, we
introduce these notions, and study some basic constructions.

\begin{defn}
  A $d$-ary \emph{symbol-fixing source} is an imperfect source of
  random symbols from an alphabet of size $d$, that may fix some
  bounded number of the symbols to unknown values. More precisely, an
  $(n, k)_d$ \emph{symbol-fixing} source is the distribution of a
  random variable $\rv{X}=(\rv{X}_1, \rv{X}_2, \ldots, \rv{X}_n) \in
  \Sigma^n$, for some set $\Sigma$ of size $d$, in which at least $k$
  of the coordinates (chosen arbitrarily) are uniformly and
  independently distributed on $\Sigma$ and the rest take
  deterministic values.

  When $d=2$, we will have a \emph{binary} symbol-fixing
  source\index{source!bit-fixing}, or simply a \emph{bit-fixing}
  source\index{source!bit-fixing}. In this case $\Sigma = \{0, 1\}$,
  and the subscript $d$ is dropped from the notation.
\end{defn}

The min-entropy of a $(n, k)_d$ symbol-fixing source is $k \log_2 d$
bits.  For a $d$-ary source with $d \neq 2$, it is more convenient to
talk about the $d$-ary entropy of the source, which is $k$ (in $d$-ary
symbols).

Affine sources are natural generalizations of symbol-fixing sources
when the alphabet size is a prime power.

\begin{defn}
  For a prime power $q$, an $(n,k)_q$ \emph{affine}
  source\index{source!affine} is a distribution on $\F_q^n$ that is
  uniformly supported on an affine translation of some $k$-dimensional
  subspace of $\F_q^n$.
\end{defn}

It is easy to see that the $q$-ary min-entropy of a $k$-dimensional
affine source is $k$.  Due to the restricted structure of
symbol-fixing and affine sources, it is possible to construct seedless
extractors for such sources:

\begin{defn}
  Let $\Sigma$ be a finite alphabet of size $d > 1$.  A function
  $f\colon \Sigma^n \to \Sigma^m$ is a (seedless) $(k,
  \eps)$-extractor\index{extractor!affine}\index{extractor!symbol-fixing}
  for symbol-fixing (resp., affine) sources on $\Sigma^n$ if for every
  $(n, k)_d$ symbol-fixing (resp., affine) source $\cX$, the
  distribution $E(\cX)$ is $\eps$-close to the uniform distribution
  $\U_{\Sigma^m}$.  The extractor is called explicit if it is
  deterministic and polynomial-time computable.
\end{defn}

We will shortly see simple constructions of zero-error, symbol-fixing
and affine extractors using linear functions arising from good
error-correcting codes. These extractors achieve the lowest possible
error, but however are unable to extract the entire source
entropy. Moreover, the affine extractor only works for a
``restricted'' class of affine sources. For unrestricted affine
sources, there are by now various constructions of extractors in the
literature. Here we review some notable examples that are most useful
for the construction of wiretap protocols that we will discuss in this
chapter.

Over large fields, the following affine extractor due to Gabizon and
Raz extract almost the entire source entropy:

\begin{thm}\cite{ref:affine}
  There is a constant $q_0$ such that for any prime power field
  size $q$ and integers $n,k$ such that $q > \max\{q_0, n^{20}\}$,
  there is an explicit affine $(k, \eps)$-extractor $f\colon \F_q^n
  \to \F_q^{k-1}$, where $\eps < q^{-1/21}$. \qed
\end{thm}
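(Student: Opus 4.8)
The plan is to realize the extractor as an explicit low-degree polynomial map $E\colon\F_q^n\to\F_q^{k-1}$ and to show that, restricted to any $k$-dimensional affine subspace, this map has nearly-uniform fibers; near-uniformity of the output distribution then follows by a direct counting argument, with the fiber sizes controlled by Weil-type estimates.

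First I would set up the reduction. An affine $(n,k)_q$ source $\cX$ is flat, uniform on some $V = a + W$ with $\dim W = k$; fix a parametrization $V = \{a + \sum_{i=1}^k t_i b_i : \vec t\in\F_q^k\}$ with $b_1,\dots,b_k$ linearly independent. Then $E(\cX)$ assigns to $\vec y\in\F_q^{k-1}$ probability $q^{-k}\cdot N_V(\vec y)$, where $N_V(\vec y) := \#\{\vec t\in\F_q^k : E(a+\sum_i t_i b_i) = \vec y\}$ is the number of $\F_q$-points on the fiber, i.e.\ on the variety cut out in the $k$ coordinates $\vec t$ by $k-1$ polynomial equations of degree at most $D$, where $D := \deg E$. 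Since $\U_{\F_q^{k-1}}$ assigns each $\vec y$ probability $q^{-(k-1)} = q\cdot q^{-k}$, the statistical distance between $E(\cX)$ and $\U_{\F_q^{k-1}}$ equals $\tfrac{1}{2q^k}\sum_{\vec y}|N_V(\vec y) - q|$. Hence it suffices to prove $|N_V(\vec y) - q| \le \poly(D)\cdot\sqrt q$ for every subspace $V$ and every $\vec y$: summing over the $q^{k-1}$ values of $\vec y$ then gives statistical distance at most $\poly(D)/\sqrt q$. (Equivalently, one may phrase the same bounds in Fourier language, as estimates $|\Exp_{\vec t}[\psi(\langle\vec\lambda,E(a+\sum_i t_i b_i)\rangle)]| \le \poly(D)/\sqrt q$ for every nontrivial additive character $\psi$ of $\F_q$ and every nonzero $\vec\lambda\in\F_q^{k-1}$.) The reason a symbol is lost — output $\F_q^{k-1}$ rather than $\F_q^k$ — is that the fiber must be genuinely positive-dimensional (about $q$ points, comfortably above the $\sqrt q$ error bar), whereas no map can be injective on every $k$-dimensional affine subspace at once; already in the linear case this would demand a $k\times n$ matrix injective on every $k$-dimensional subspace, which does not exist.

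The counting estimate is where the real work lies, and it has two halves. (i) \textbf{Number-theoretic input:} the Lang--Weil and Weil bounds say that a geometrically irreducible affine curve of degree $D$ over $\F_q$ has $q + O(D^2\sqrt q)$ rational points, and iterating Weil's character-sum bound over one of the $t_i$'s yields the same conclusion for the systems arising here. (ii) \textbf{The main obstacle --- uniform non-degeneracy:} one must choose $E$ so that, no matter which affine subspace $V$ and which fiber $\vec y$ (resp.\ which $\vec\lambda$) is substituted in, the resulting system is well-behaved: the fiber has the expected dimension $1$ (no dimension jumps), its relevant component is geometrically irreducible, and --- crucially for the character-sum route --- the leading form of the composed polynomial is not a $p$-th power. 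The danger is that an adversarial linear substitution $\vec t\mapsto a+\sum_i t_i b_i$ might annihilate or collapse the top-degree part of $E$. Gabizon and Raz defuse this by building $E$ from monomials with widely spread exponents and coefficient vectors in general position (the construction is permitted to depend on $q$, so one may in particular take the exponents coprime to $\mathrm{char}(\F_q)$), and proving that no linear substitution can degenerate such an $E$ is a delicate algebraic--combinatorial argument; this is the step I expect to be hardest. Finally, tracking parameters: $D$ is polynomial in $n$, so the error $\poly(D)/\sqrt q = \poly(n)/\sqrt q$ drops below $q^{-1/21}$ once $q$ exceeds a fixed power of $n$, which --- together with the absolute constant $q_0$ required for the geometric hypotheses --- produces exactly the stated regime $q>\max\{q_0,n^{20}\}$. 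Since $E$ is a fixed low-degree polynomial map over $\F_q$, it is computable in time $\poly(n,\log q)$, so the extractor is explicit.
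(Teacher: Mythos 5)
The theorem you are proving is imported into the thesis as a black box (it is stated with a \verb|\qed| and no proof), so the only meaningful comparison is with the actual Gabizon--Raz argument, and your proposal diverges from it in a way that exposes a genuine gap. The core of your plan is the claim that one can exhibit an explicit polynomial map $E\colon\F_q^n\to\F_q^{k-1}$ such that for \emph{every} $k$-dimensional affine subspace $V$ and \emph{every} $\vec y$, the fiber is a geometrically irreducible curve of bounded degree, whence Lang--Weil gives $|N_V(\vec y)-q|\le \poly(D)\sqrt q$. You correctly identify this ``uniform non-degeneracy'' as the hard step, but you do not establish it, and the resolution you attribute to Gabizon and Raz (``monomials with widely spread exponents and coefficients in general position'') is not something their paper does for the multi-output map -- nor is it clear any fixed map of $k-1$ coordinate polynomials has irreducible one-dimensional fibers under all $\binom{n}{k}_q$-many linear substitutions. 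Worse, your proposed fallback is quantitatively broken: the two formulations you call ``equivalent'' are not. Iterating Weil's bound over a single variable $t_i$ gives $|\Exp_{\vec t}[\psi(\langle\vec\lambda,E\rangle)]|\le \poly(D)/\sqrt q$, and converting per-character bias into statistical distance over $\F_q^{k-1}$ costs a factor $q^{(k-1)/2}$, yielding the useless bound $\poly(D)\,q^{(k-2)/2}$; to recover $|N_V(\vec y)-q|\le\poly(D)\sqrt q$ from character sums you would need full square-root cancellation in all $k$ variables simultaneously (a Deligne-type bound), which again requires exactly the non-degeneracy you have not proven. So the only working route in your outline is the Lang--Weil one, and its hypothesis is the unproven core.

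The actual proof sidesteps all of this with a two-stage composition, pieces of which appear elsewhere in this thesis. First one builds a \emph{one-output} extractor $D\colon\F_q^n\to\F_q$ for affine sources of dimension $\ge 1$: taking $D(x)=\sum_i x_i^{d_i}$ with distinct small exponents coprime to the characteristic, the restriction to any line $a+tb$ is a nonconstant univariate polynomial whose degree is the largest $d_i$ with $b_i\ne 0$, and Weil's bound on $\sum_t\psi(f(t))$ gives bias $O(D/\sqrt q)$ per line, hence per affine source by decomposing it into parallel lines. (Here the output is one symbol, so the $q^{(k-1)/2}$ loss above does not arise.) Second, one feeds $D(x)$ as the seed of an explicit \emph{linear} strong seeded extractor $T$ and outputs $T(x,D(x))$: since conditioning an affine source on the value of a linear map leaves it affine of dimension reduced by at most the output length, $D$ still extracts from the conditioned source, and Shaltiel's composition lemma (Theorem~\ref{thm:mileage} in this thesis, which the thesis explicitly credits to this observation of Gabizon and Raz) shows $T(X,D(X))\approx T(X,\U)$. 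The loss of one output symbol and the constants $n^{20}$, $q^{-1/21}$ fall out of balancing the degrees and errors across the two stages, not from a fiber-dimension obstruction. If you want to salvage your single-shot approach you must either prove the uniform irreducibility statement or replace it; otherwise I would redo the proof along the composition route.
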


In this construction, the field size has to be polynomially large in
$n$. When the field size is small (in particular, constant), the task
becomes much more challenging. The most challenging case thus
corresponds to the binary field $\F_2$, for which an explicit affine
extractor was obtained, when the input entropy is a constant fraction
of the input length, by Bourgain:

\newcommand{\aext}{\mathsf{AExt}} \newcommand{\ibou}{\mathsf{AExt}}

\begin{thm} \cite{ref:Bourgain} \label{thm:Bourgain} For every
  constant $0 < \delta \leq 1$, there is an explicit affine extractor
  $\bou\colon \F_2^n \to \F_2^m$ for min-entropy $\delta n$ with
  output length $m = \Omega(n)$ and error at most
  $2^{-\Omega(m)}$. \qed
\end{thm}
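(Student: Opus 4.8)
The statement is Bourgain's theorem on explicit binary affine extractors, so the plan is to follow his argument; its high-level structure is by now the template for explicit ``algebraic'' extractors over small fields. The extractor $\bou$ is an explicit polynomial map: identify $\F_2^n$ with the field $\F_{2^n}$, fix an integer degree $d=d(\delta)$ and a fixed polynomial map $\Phi\colon\F_{2^n}\to\F_{2^n}$ of degree at most $d$ (a sum of a few monomials with carefully chosen exponents), and let $\bou(x)$ be the first $m=\Omega_\delta(n)$ coordinates of $\Phi(x)$ in a fixed $\F_2$-basis. The reason for this shape is that for every nonzero test character $\psi\colon\F_2^m\to\F_2$ one can write $(-1)^{\psi(\bou(x))} = (-1)^{\mathsf{Tr}_{\F_{2^n}/\F_2}(\lambda\,\Phi(x))}$ for a suitable $\lambda=\lambda(\psi)\in\F_{2^n}$, $\lambda\neq 0$ (the trace from $\F_{2^m}$ is absorbed into the trace from $\F_{2^n}$). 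So, by the standard Fourier/XOR argument --- a union bound over the $2^m-1$ nonzero characters --- the whole problem reduces to one exponential-sum estimate per character.

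Concretely, I would establish: for every affine subspace $V\subseteq\F_{2^n}$ with $\dim V = k\ge\delta n$ and every admissible $\lambda\neq 0$,
\[
\Big|\,\Ex_{x\in V}\big[(-1)^{\mathsf{Tr}(\lambda\Phi(x))}\big]\,\Big| \le 2^{-\Omega_\delta(n)}.
\]
Granting this, summing over the $2^m-1$ characters shows that $\bou(\cX)$ is within $2^m\cdot 2^{-\Omega_\delta(n)}$ of uniform in statistical distance; choosing $m$ to be a small enough constant fraction of $n$ keeps this bound at $2^{-\Omega_\delta(n)}=2^{-\Omega(m)}$, and since every affine $k$-source is by definition uniform on such a $V$, this is exactly the extractor guarantee with output length $m=\Omega(n)$. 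Explicitness is immediate: $\Phi$ is a fixed sparse polynomial and arithmetic in $\F_{2^n}$ (with a fixed irreducible polynomial) is polynomial-time.

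The entire difficulty, and the main obstacle, is the exponential-sum bound. The plan there is to apply the Weyl / van~der~Corput differencing inequality to the sum over $V$ a bounded number of times (roughly $d$ times), which kills the additive (linear) part of $\Phi$ and replaces the sum by an average of exponential sums of multilinear forms over sumsets $V+\cdots+V$, i.e.\ over sets that carry additive structure but are then hit by genuinely multiplicative terms. At that point one invokes the sum-product phenomenon in $\F_{2^n}$: a set of size $2^{\delta n}$ that is not close to a proper subfield cannot have both a small sumset and a small product set, and the quantitative exponential-sum versions of this fact (Bourgain--Glibichuk--Konyagin over general finite fields) deliver the cancellation $2^{-\Omega_\delta(n)}$. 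The genuinely hard points --- which is why this is a deep theorem and not a routine reduction --- are ruling out proper-subfield obstructions uniformly in $n$ (which must be arranged through the choice of exponents in $\Phi$), tracking how the exponent in $\Omega_\delta(\cdot)$ degrades with $\delta$ and with the number of differencing steps, and verifying that $\Phi$ has exactly the arithmetic structure the sum-product input requires. Since this analysis is the substance of \cite{ref:Bourgain}, at this point I would cite that paper rather than reproduce the exponential-sums estimates.
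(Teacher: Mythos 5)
The paper gives no proof of this statement---it is quoted verbatim from \cite{ref:Bourgain} as an external black box---so there is nothing internal to compare against; your outline correctly reproduces the architecture of Bourgain's own argument (a fixed sparse polynomial map over $\F_{2^n}$ truncated to $m$ bits, the Fourier/XOR reduction to a single character-sum bound per nonzero character, and Weyl differencing feeding into sum--product exponential-sum estimates over affine subspaces). Deferring the exponential-sum estimates to the citation is exactly the right place to stop, since that is where the entire substance of the theorem lives.
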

Bourgain's construction was recently simplified, improved, and
extended to work for arbitrary prime fields by Yehudayoff
\cite{ref:Yeh09}.

An ``intermediate'' trade-off is recently obtained by DeVos and
Gabizon \cite{ref:DG10}, albeit with a short output length. This
explicit construction extracts one unbiased bit from any $(n,k)_q$
affine source provided that, for $d := 5n/k$, we have $q > 2d^2$ and
the characteristic of the field is larger than $d$.

\subsection{Symbol-Fixing Extractors from Linear
  Codes} \label{sec:symfix-code}

The simple theorem below states that linear error-correcting codes can
be used to obtain symbol-fixing extractors with zero error.

\begin{thm} \label{thm:symfixExtr} Let $\C$ be an $[n,\Tk,d]_q$ code
  over $\F_q$ and $G$ be a $\Tk \times n$ generator matrix of $\C$.
  Then, the function $E\colon \F_q^n \to \F_q^\Tk$ defined
  as\footnote{We typically consider vectors be represented in row
    form, and use the transpose operator
    ($x^\top$)\index{notation!$x^\top$} to represent column vectors.}
  $E(x) := G x^\top$ is an $(n-d+1, 0)$-extractor for symbol-fixing
  sources over $\F_q$.

  Conversely, if a linear function $E\colon \F_q^n \to \F_q^\Tk$ is an
  $(n-d+1, 0)$-extractor for symbol-fixing sources over $\F_q$, it
  corresponds to a generator matrix of an $[n,\Tk,d]_q$ code.
\end{thm}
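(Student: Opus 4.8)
The plan is to reduce the whole statement to one linear-algebraic observation: the map $E(x) = Gx^\top$ transforms the uniform distribution on any coordinate subcube of $\F_q^n$ into the uniform distribution on $\F_q^{\Tk}$ exactly when the restriction of $G$ to the ``free'' columns is surjective onto $\F_q^{\Tk}$, and requiring this for \emph{all} $(n-d+1)$-subsets of columns is precisely a restatement of $\C$ having minimum distance at least $d$. Once this dictionary is in place, both directions of the theorem drop out.

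For the forward direction, I would fix an arbitrary $(n,k)_q$ symbol-fixing source with $k := n-d+1$, realized by $X = (X_1,\ldots,X_n)$, and let $S\subseteq[n]$ with $|S|=k$ index a set of coordinates on which $X$ is uniform and independent; if strictly more than $k$ coordinates are free, I shrink $S$ to size $k$ and average over arbitrary fixings of the remaining free coordinates, which by Proposition~\ref{prop:convex} loses nothing. After reordering coordinates, $E(X) = G_S X_S^\top + c$, where $G_S$ is the $\Tk\times k$ submatrix of $G$ with columns indexed by $S$, the vector $c := G_{\bar S}X_{\bar S}^\top$ is fixed, and $X_S\sim\U_{\F_q^k}$. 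Since a surjective $\F_q$-linear map carries the uniform distribution on its domain to the uniform distribution on its codomain (all fibers having equal cardinality), and translation by $c$ preserves uniformity, it suffices to show that $G_S$ is surjective onto $\F_q^{\Tk}$ for every such $S$.

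The heart of the argument is then the claim that $G_S$ has full row rank $\Tk$ for every $S$ with $|S|=n-d+1$ if and only if $\C = \{uG : u\in\F_q^{\Tk}\}$ has minimum distance at least $d$: a nonzero codeword $uG$ is supported inside the complement $\bar S$ (of size $d-1$) precisely when $uG_S = 0$, i.e.\ when $u$ lies in the left kernel of $G_S$, so the absence of nonzero codewords of weight $\le d-1$ is equivalent to every such $G_S$ having trivial left kernel, which for a $\Tk\times k$ matrix is the same as full row rank, hence surjectivity. Combining this with the hypothesis on $\C$ finishes the forward implication. For the converse, I would write an arbitrary linear $(n-d+1,0)$-extractor $E$ as $E(x)=Gx^\top$ with $G$ a $\Tk\times n$ matrix, feed it the flat symbol-fixing source supported exactly on a set $S$ of size $n-d+1$, and read off that uniformity of $G_S X_S^\top$ on $\F_q^{\Tk}$ forces $G_S$ surjective; since this holds for all such $S$, in particular $G$ has rank $\Tk$ and generates a code of dimension $\Tk$, and the equivalence above shows its minimum distance is at least $d$. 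I do not expect a real obstacle here; the only points requiring a little care are the bookkeeping when the source has more than $n-d+1$ free coordinates (handled by the averaging above) and the convention of reading $[n,\Tk,d]_q$ as ``minimum distance at least $d$'' \textemdash{} with equality precisely when $E$ fails to be an $(n-d,0)$-extractor.
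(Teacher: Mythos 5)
Your proposal is correct and follows essentially the same route as the paper: both arguments reduce to the observation that the submatrix of $G$ on the free coordinates has full row rank $\Tk$ (equivalently, trivial left kernel) precisely because a nonzero vector in that kernel would yield a nonzero codeword supported on the $d-1$ fixed positions, contradicting the minimum distance; surjectivity then carries the uniform distribution on the free coordinates to the uniform distribution on $\F_q^{\Tk}$, and the converse reverses the equivalence. Your explicit handling of sources with more than $n-d+1$ free coordinates via averaging matches the paper's footnote, so there is nothing to add.
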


\begin{proof}
  Let $\cX$ be a symbol-fixing source with a set $S \subseteq [n]$ of
  fixed coordinates, where\footnote{If the set of fixed symbols if of
    size smaller than $d-1$, the argument still goes through by taking
    $S$ as an arbitrary set of size $d-1$ containing all the fixed
    coordinates.}  $|S| = d-1$, and define $\bar{S} := [n] \setminus
  S$.  Observe that, by the Singleton bound, we must have $|\bar{S}| =
  n-d+1 \geq \Tk$.

  The submatrix of $G$ obtained by removing the columns picked by $S$
  must have rank $\Tk$. Since otherwise, the left kernel of this
  submatrix would be nonzero, meaning that $\C$ has a nonzero codeword
  that consists of entirely zeros at the $d-1$ positions picked by
  $S$, contradicting the assumption that the minimum distance of $\C$
  is $d$.  Therefore, the distribution $E(\cX)$ is supported on a
  $\Tk$-dimensional affine space on $\F_q^\Tk$, meaning that this
  distribution is uniform.

  The converse is straightforward by following the same argument.
\end{proof}

If the field size is large enough; e.g., $q \geq n$, then one can pick
$\C$ in the above theorem to be an MDS code (in particular, a
Reed-Solomon code) to obtain a $(k, 0)$-extractor for all
symbol-fixing sources of entropy $k$ with optimal output length
$k$. However, for a fixed $q$, negative results on the rate-distance
trade-offs of codes (e.g., Hamming, MRRW, and Plotkin bounds) assert
that this construction of extractors must inevitably lose some
fraction of the entropy of the source. Moreover, the construction
would at best be able to extract some constant fraction of the source
entropy only if the entropy of the source (in $q$-ary symbols) is
above $n/q$.

\subsection{Restricted Affine Extractors from Rank-Metric
  Codes} \label{sec:affext-code}

In Section~\ref{sec:Apps}, we will see that affine extractors
can be used to construct wiretap schemes for models that are
more general than the original Wiretap~II problem, e.g., when
the direct channel is noisy. For these applications, the extractor
needs to additionally have a nice structure that is in particular
offered by linear functions.

An obvious observation is that a nontrivial affine extractor cannot be
a linear function.  Indeed, a linear function $f(x) := \langle
\alpha, x \rangle + \beta$, where $\alpha, \beta, x \in \F_q^n$, is
constant on the $(n-1)$-dimensional orthogonal subspace of $\alpha$,
and thus, fails to be an extractor for even $(n-1)$-dimensional affine
spaces. However, in this section we will see that linear affine extractors
can be constructed if the affine source is known to be described by
a set of linear constraints whose coefficients lie on a small \emph{sub-field}
of the underlying field.
Such restricted extractors turn out to be sufficient for some of the
applications that we will consider.

Let $Q$ be a prime power.  Same as linear codes, an affine subspace on
$\F_Q^n$ can be represented by a \emph{generator matrix}, or
\emph{parity-check matrix} and a constant shift. That is, a
$k$-dimensional affine subspace $A \subseteq \F_Q^n$ can be described
as the image of a linear mapping
\[
A := \{ x G + \beta \colon x \in \F_Q^k \},
\]
where $G$ is a $k \times n$ \emph{generator matrix} of rank $k$ over
$\F_Q$, and $\beta \in \F_Q^n$ is a fixed vector. Alternatively, $A$
can be expressed as the translated null-space of a linear mapping
\[
A := \{ x + \beta \in \F_Q^n\colon H x^\top = 0 \},
\]
for an $(n-k) \times n$ \emph{parity check matrix} of rank $n-k$ over
$\F_Q$.

Observe that a symbol-fixing source over $\F_q$ with $q$-ary
min-entropy $k$ can be seen as a $k$-dimensional affine source with a
generator matrix of the form $[I \mid \mathbf{0}]\cdot P$, where $I$
is the $k \times k$ identity matrix, $\mathbf{0}$ denotes the $k
\times (n-k)$ all-zeros matrix, and $P$ is a permutation matrix.
Recall that from Theorem~\ref{thm:symfixExtr} we know that for this
restricted type of affine sources linear extractors exist. In this
section we generalize this idea.

Suppose that $Q = q^m$ for a prime power $q$ so that $\F_Q$ can be
regarded as a degree $m$ extension of $\F_q$ (and isomorphic to
$\F_{q^m}$).  Let $A$ be an affine source over $\F_Q^n$. We will call
the affine source
\emph{$\F_q$-restricted}\index{source!affine!$\F_q$-restricted} if its
support can be represented by a generator matrix (or equivalently, a
parity check matrix) over $\F_q$.

In this section we introduce an affine extractor that is $\F_Q$-linear
and, assuming that $m$ is sufficiently large, extracts from
$\F_q$-restricted affine sources.  The construction of the extractor
is similar to Theorem~\ref{thm:symfixExtr}, except that instead of an
error-correcting code defined over the \emph{Hamming metric}, we will
use \emph{rank-metric} codes.

Consider the function $\rdist\colon \F_q^{m \times n} \times \F_q^{m
  \times n} \to \Z$, where $\F_q^{m \times n}$ denotes the set of $m
\times n$ matrices over $\F_q$, defined as $\rdist(A, B) :=
\rk_q(A-B)$, where $\rk_q$ is the matrix rank over $\F_q$.  It is
straightforward to see that $\rdist$ is a metric\index{rank metric}.

The usual notion of error-correcting codes defined under the Hamming
metric can be naturally extended to the rank metric.  In particular, a
\emph{rank-metric}\index{code!rank-metric} code $\C$ can be defined as
a set of $m \times n$ matrices (known as codewords), whose minimum
distance is the minimum rank distance between pairs of codewords.

For $Q := q^m$, there is a natural correspondence between $m\times n$
matrices over $\F_q$ and vectors of length $n$ over $\F_Q$. Consider
an isomorphism $\varphi\colon \F_Q \to \F_q^m$ between $\F_Q$ and
$\F_q^m$ which maps elements of $\F_Q$ to column vectors of length $m$
over $\F_q$. Then one can define a mapping $\Phi\colon \F_Q^n \to
\F_q^{m \times n}$ defined as
\[
\Phi(x_1, \ldots, x_n) := [\varphi(x_1) \mid \cdots \mid \varphi(x_n)]
\]
to put the elements of $\F_Q^n$ in one-to-one correspondence with $m
\times n$ matrices over $\F_q$.

A particular class of rank-metric codes are linear ones. Suppose that
$\C$ is a linear $[n,\Tk,\tilde{d}]_Q$ code over $\F_Q$. Then, using $\Phi(\cdot)$, $\C$ can
be regarded as a rank-metric code of dimension $\Tk$ over $\F_q^{m
  \times n}$. In symbols, we will denote such a linear $\Tk$-dimensional
rank-metric code as an $[[ n,\Tk,d ]]_{q^m}$ code, where $d$ is the
minimum rank-distance of the code. The rank-distance of a linear
rank-metric code turns out to be equal to the minimum rank of its
nonzero codewords and obviously, one must have $d \leq \tilde{d}$. However,
the Hamming distance of $\C$ might turn out to be much larger than its
rank distance when regarded as a rank-metric code. In particular, $d
\leq m$, and thus, $d$ must be strictly smaller than $\tilde{d}$ when the
degree $m$ of the field extension is less than $\tilde{d}$.

A counterpart of the Singleton bound in the rank-metric states that,
for any $[[n,\Tk,d]]_{q^m}$ code, one must have $d \leq
n-\Tk+1$. Rank-metric codes that attain equality exist and are called
\emph{maximum rank distance (MRD)}\index{code!MRD} codes. A class of
linear rank-metric codes known as \emph{Gabidulin
  codes}\index{code!Gabidulin} \cite{ref:Gab85} are MRD and can be
thought of as the counterpart of Reed-Solomon codes in the rank
metric. In particular, the codewords of a Gabidulin code, seen as
vectors over the extension field, are evaluation vectors of
bounded-degree \emph{linearized} polynomials rather than arbitrary
polynomials as in the case of Reed-Solomon codes.  These codes are
defined for any choice of $n, \Tk, q, m$ as long as $m \geq n$ and $\Tk
\leq n$.

The following is an extension of Theorem~\ref{thm:symfixExtr} to
restricted affine sources.

\begin{thm} \label{thm:resAffExtr} Let $\C$ be an $[[n,\Tk,d]]_{q^m}$
  code defined from a code over $\F_Q$ (where $Q := q^m$) with a
  generator matrix $G \in \F_Q^{\Tk \times n}$.  Then the function
  $E\colon \F_Q^n \to \F_Q^\Tk$ defined as $E(x) := G x^\top$ is an
  $(n-d+1, 0)$-extractor for $\F_q$-restricted affine sources over
  $\F_Q$.

  Conversely, if a linear function $E\colon \F_Q^n \to \F_Q^\Tk$ is an
  $(n-d+1, 0)$-extractor for all $\F_q$-restricted affine sources over
  $\F_Q$, it corresponds to a generator matrix of an
  $[[n,\Tk,d]]_{q^m}$ code.
\end{thm}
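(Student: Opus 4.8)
The plan is to carry the proof of Theorem~\ref{thm:symfixExtr} over to the rank metric essentially verbatim, the only new ingredient being a dictionary between the rank weight of a vector and the $\F_q$-restricted subspaces containing it. So I would first record two linear-algebraic facts. \emph{(i)} For $c=(c_1,\dots,c_n)\in\F_Q^n$ one has $\rk_q(\Phi(c)) = \dim_{\F_q}\bigl(\mathrm{span}_{\F_q}\{c_1,\dots,c_n\}\bigr)$, because $\varphi\colon\F_Q\to\F_q^m$ is an $\F_q$-linear isomorphism and the rank of $\Phi(c)$ equals the $\F_q$-dimension of the span of its columns $\varphi(c_1),\dots,\varphi(c_n)$; consequently $\rk_q(\Phi(c))\le r$ \emph{if and only if} $c$ lies in some $\F_q$-restricted $\F_Q$-subspace of $\F_Q^n$ of dimension at most $r$ (expand the $c_j$ in an $\F_q$-basis $\gamma_1,\dots,\gamma_r$ of their span to write $c=\sum_i\gamma_i\mathbf{a}_i$ with $\mathbf{a}_i\in\F_q^n$, and run the computation backwards for the reverse implication). \emph{(ii)} If $V\subseteq\F_Q^n$ is $\F_q$-restricted then so is its $\F_Q$-orthogonal complement $V^{\perp}$, since $V^{\perp}$ is the $\F_Q$-row space of a parity-check matrix of $V$ that has entries in $\F_q$, and $\dim V^{\perp}=n-\dim V$. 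I also note that $\Tk\le n-d+1$ by the rank-metric Singleton bound recalled above, so the output length never exceeds the source dimension.

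\emph{Forward direction.} Let $\cX$ be an $\F_q$-restricted affine source of dimension $k:=n-d+1$, uniform on $A=v_0+V$ with $V$ an $\F_q$-restricted $\F_Q$-subspace of dimension $k$. Since $x\mapsto Gx^\top$ is $\F_Q$-linear, it suffices to show that its restriction to $V$ is surjective onto $\F_Q^{\Tk}$: then $E(A)=\F_Q^{\Tk}$, every fibre of $E|_A$ is a coset of $V_0:=\{v\in V\colon Gv^\top=0\}$ and hence has the same size $Q^{k-\Tk}$, so $E(\cX)=\U_{\F_Q^{\Tk}}$. If the restriction were not surjective, there would be a nonzero row vector $u\in\F_Q^{\Tk}$ with $uGx^\top=0$ for all $x\in V$; then $c:=uG$ is a nonzero codeword of $\C$ (as $u\neq 0$ and $G$ has rank $\Tk$) lying in $V^{\perp}$. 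By fact \emph{(ii)}, $V^{\perp}$ is $\F_q$-restricted of dimension $n-k=d-1$, so by fact \emph{(i)} we get $\rk_q(\Phi(c))\le d-1$, contradicting that the minimum rank distance of $\C$ is $d$.

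\emph{Converse.} A linear extractor $E$ has the form $E(x)=Gx^\top$ for some $G\in\F_Q^{\Tk\times n}$. Feeding $E$ the symbol-fixing source uniform on $\{x\in\F_Q^n\colon x_i=0\text{ for }i\in S\}$ with $|S|=d-1$ --- which is $\F_q$-restricted of dimension $n-d+1$ --- and using the extractor property forces the $\Tk\times(n-d+1)$ submatrix of $G$ obtained by deleting the columns in $S$, and hence $G$ itself, to have rank $\Tk$, so $G$ generates an $[n,\Tk]_Q$ code $\C$. To see that the minimum rank distance of $\C$ is at least $d$: if some nonzero $c=uG\in\C$ had $\rk_q(\Phi(c))=r\le d-1$, then by fact \emph{(i)} $c$ lies in an $\F_q$-restricted subspace $U_0$ of dimension $\le d-1$; enlarge $U_0$ by adjoining suitable standard basis vectors to an $\F_q$-restricted $U$ of dimension exactly $d-1$, and set $V:=U^{\perp}$, which by fact \emph{(ii)} is $\F_q$-restricted of dimension $n-d+1$ with $c\in U\subseteq V^{\perp}$. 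Taking $\cX$ uniform on $V$, the support of $E(\cX)=\{Gx^\top\colon x\in V\}$ lies inside the proper subspace $\{y\in\F_Q^{\Tk}\colon uy^\top=0\}$, so $E(\cX)$ is not uniform, contradicting that $E$ is an $(n-d+1,0)$-extractor. Hence the minimum rank distance is at least $d$, matching the claimed correspondence exactly as in Theorem~\ref{thm:symfixExtr}.

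The only step going beyond the Hamming-metric argument is setting up facts \emph{(i)} and \emph{(ii)} correctly: realizing that the right analog of ``a codeword whose Hamming support has size $\le d-1$'' is ``a codeword contained in an $\F_q$-restricted subspace of dimension $\le d-1$'', and that this is exactly the structure produced by $V^{\perp}$ when $V$ is $\F_q$-restricted. This is also the point that forces the $\F_q$-restriction hypothesis on the source: without it, the codeword $c=uG$ would only be constrained to lie in an arbitrary $(d-1)$-dimensional $\F_Q$-subspace, which carries no rank-distance information (and indeed no linear function can be an extractor for general affine sources).
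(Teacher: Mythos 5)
Your proof is correct and follows essentially the same route as the paper: the forward direction is the paper's dimension count on $\Phi(uG)$ versus the rows of the generator matrix of $V$, repackaged through your facts \emph{(i)} and \emph{(ii)}, and the converse (which the paper dismisses as ``straightforward by a similar argument'') is carried out by the same reasoning in reverse. The only difference is presentational — your dictionary between $\rk_q(\Phi(c))$ and membership in $\F_q$-restricted subspaces makes the duality with $V^\perp$ explicit, whereas the paper counts linearly independent rows directly.
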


\begin{proof}
  Consider a restricted affine source $\cX$ uniformly supported on an
  affine subspace of dimension\footnote{ The argument still holds if
    the dimension of $\cX$ is more than $n-d+1$.  } $n-d+1$
  \[ X := \{ x A+\beta\colon x \in \F_Q^{n-d+1} \}, \] where $A \in
  \F_q^{(n-d+1) \times n}$ has rank $n-d+1$, and $\beta \in \F_Q^n$ is
  a fixed translation.  Note that $\Tk \leq n-d+1$ by the Singleton
  bound for rank-metric codes.

  The output of the extractor is thus uniformly supported on the
  affine subspace
  \[
  B := \{ G A^\top x^\top + G \beta^\top\colon x \in \F_Q^{n-d+1} \}
  \subseteq \F_Q^\Tk.
  \]

  Note that $G A^\top \in \F_Q^{\Tk \times (n-d+1)}$.  Our goal is to
  show that the dimension of $B$ is equal to $\Tk$.  Suppose not, then
  we must have $\rk_Q (GA^\top) < \Tk$. In particular, there is a
  nonzero $y \in \F_Q^\Tk$ such that $y G A^\top = 0$.

  Let $Y := \Phi(y G) \in \F_q^{m \times n}$, where $\Phi(\cdot)$ is
  the isomorphism that maps codewords of $\C$ to their matrix form
  over $\F_q$. By the distance of $\C$, we know that $\rk_q(Y) \geq
  d$. Since $m \geq d$, this means that $Y$ has at least $d$ linearly
  independent rows. On the other hand, we know that the matrix $Y
  A^\top \in \F_q^{\Tk \times (n-d+1)}$ is the zero matrix. Therefore,
  $Y$ has $d$ independent rows (each in $\F_q^n$) that are all
  orthogonal to the $n-d+1$ independent rows of $A$. Since $d +
  (n-d+1) > n$, this is a contradiction.

  Therefore, the dimension of $B$ is exactly $\Tk$, meaning that the
  output distribution of the extractor is indeed uniform.  The
  converse is straightforward by following a similar line of argument.
\end{proof}

Thus, in particular, we see that generator matrices of MRD codes can
be used to construct linear extractors for restricted affine sources
that extract the entire source entropy with zero error. This is
possible provided that the field size is large enough compared to the
field size required to describe the generator matrix of the affine
source. Using Gabidulin's rank metric codes, we immediately obtain the following
corollary of Theorem~\ref{thm:resAffExtr}:

\begin{coro} \label{coro:resAffExtrGabidulin}
  Let $q$ be a prime power. Then for every positive integer $n$,
  $k \leq n$, and $Q := q^n$, there is a linear function $f\colon \F_Q^n \to \F_Q^k$
  that is a $(k, 0)$-extractor for $\F_q$-restricted affine sources
  over $\F_Q$. \qed
\end{coro}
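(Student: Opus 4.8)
The plan is to instantiate Theorem~\ref{thm:resAffExtr} with a Gabidulin code of exactly the right parameters. First I would set the extension degree to $m := n$, so that $Q = q^m = q^n$ matches the statement of the corollary, and check the existence conditions recalled just above: Gabidulin codes exist for any $n, \Tk, q, m$ with $m \geq n$ and $\Tk \leq n$. Here $m = n \geq n$ holds (with equality, which is the borderline but permitted case), and the dimension we will want, $\Tk = k$, satisfies $k \leq n$ by hypothesis.

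Next I would take $\C$ to be a Gabidulin code of length $n$ and dimension $\Tk = k$ over $\F_Q = \F_{q^n}$, viewed as a rank-metric code via the correspondence $\Phi$. Since Gabidulin codes are MRD, $\C$ meets the rank-metric Singleton bound, so its minimum rank distance is $d = n - \Tk + 1 = n - k + 1$; that is, $\C$ is an $[[n, k, n-k+1]]_{q^n}$ code. Let $G \in \F_Q^{k \times n}$ be a generator matrix of $\C$, which has full row rank $k$. Then I would apply the forward direction of Theorem~\ref{thm:resAffExtr} verbatim: the map $f\colon \F_Q^n \to \F_Q^k$ given by $f(x) := G x^\top$ is an $(n - d + 1, 0)$-extractor for $\F_q$-restricted affine sources over $\F_Q$. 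Substituting $d = n - k + 1$ gives $n - d + 1 = k$, so $f$ is precisely a $(k, 0)$-extractor for $\F_q$-restricted affine sources, and it is $\F_Q$-linear by construction. This completes the proof.

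The only subtlety—what passes for the "main obstacle" here—is ensuring that the invocation of Gabidulin codes is legitimate: that $m = n$ is an allowed choice, and that "MRD" delivers $d = n - k + 1$ in the \emph{rank} metric rather than merely in the Hamming metric (recall that in general one only has $d \le \tilde d$ for the Hamming distance $\tilde d$ of the underlying $\F_Q$-code). Both points are immediate from the facts stated in the paragraph preceding the corollary, so no computation beyond the parameter substitution $n - (n-k+1) + 1 = k$ is needed.
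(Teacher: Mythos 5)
Your proof is correct and is exactly the argument the paper intends: the corollary is stated as an immediate consequence of Theorem~\ref{thm:resAffExtr} instantiated with a Gabidulin code, and your parameter choices ($m = n$, $\Tk = k$, $d = n-k+1$ from the MRD property, hence $n-d+1 = k$) carry this out correctly. The two subtleties you flag — that $m = n$ is an admissible extension degree for Gabidulin codes and that MRD gives $d = n-k+1$ in the rank metric — are indeed the only points needing verification, and both check out.
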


It can be shown using similar proofs that if, in Theorems
\ref{thm:symfixExtr}~and~\ref{thm:resAffExtr}, a parity check matrix
of the code is used instead of a generator matrix, the resulting
linear function would become a lossless $(d-1, 0)$-condenser rather
than an extractor. This is in fact part of a more general
``duality'' phenomenon that is discussed in Section~\ref{sec:dualityAffine}.




\section{Inverting Extractors}
\label{sec:InvExt}

In this section we will introduce the notion of \emph{invertible
  extractors} and its connection with wiretap protocols\footnote{
  Another notion of invertible extractors was introduced in
  \cite{ref:Dod05} and used in \cite{ref:DS05} for a different
  application (entropic security) that should not be confused with the
  one we use.  Their notion applies to seeded extractors with long
  seeds that are efficiently invertible bijections for every fixed
  seed. Such extractors can be seen as a single-step walk on highly
  expanding graphs that mix in one step.  This is in a way similar to
  the multiple-step random walk used in the seedless extractor of
  section \ref{sec:walk}, that can be regarded as a single-step walk
  on the expander graph raised to a certain power.  }.  Later we will
use this connection to construct wiretap protocols with good
rate-resilience trade-offs.

\begin{defn} \index{invertible extractor}
  \label{def:inverter}
  Let $\Sigma$ be a finite alphabet and $f$ be a mapping from
  $\Sigma^n$ to $\Sigma^m$.  For $\gamma \geq 0$, a function $A\colon
  \Sigma^m \times \zo^r \to \Sigma^n$ is called a
  $\gamma$-\emph{inverter} for $f$ if the following conditions hold:
  \begin{enumerate}
  \item[(a)] (Inversion) Given $x \in \Sigma^m$ such that $f^{-1}(x)$
    is nonempty, for every $z \in \zo^r$ we have $f(A(x, z)) =x
    $.
  \item[(b)] (Uniformity) $A(\U_{\Sigma^m}, \U_r) \sim_\gamma
    \U_{\Sigma^n}$.
  \end{enumerate}
  A $\gamma$-inverter is called \emph{efficient} if there is a
  randomized algorithm that runs in worst case polynomial time and,
  given $x \in \Sigma^m$ and $z$ as a random seed, computes $A(x,
  z)$. We call a mapping $\gamma$-\emph{invertible} if it has an
  efficient $\gamma$-inverter, and drop the prefix $\gamma$ from the
  notation when it is zero.
\end{defn}

The parameter $r$ in the above definition captures the amount of random
bits that the inverter (seen as a randomized algorithm) needs to receive.
For our applications, no particular care is needed to optimize this
parameter and, as long as $r$ is polynomially bounded in $n$, it is generally
ignored.

\begin{rem}
  If a function $f$ maps the uniform distribution to a distribution
  that is $\eps$-close to uniform (as is the case for all extractors),
  then any randomized mapping that maps its input $x$ to a
  distribution that is $\gamma$-close to the uniform distribution on
  $f^{-1}(x)$ is easily seen to be an $(\eps+\gamma)$-inverter for
  $f$.  In some situations designing such a function might be easier
  than directly following the above definition.
\end{rem}

The idea of random pre-image sampling was proposed in \cite{ref:DSS01}
for construction of adaptive AONTs from APRFs. However, they ignored
the efficiency of the inversion, as their goal was to show the
existence of (not necessarily efficient) infor\-mation-theoretically
optimal adaptive AONTs.  Moreover, the strong notion of APRF and a
perfectly uniform sampler is necessary for their construction of
AONTs.  As wiretap protocols are weaker than (worst-case) AONTs, they
can be constructed from slightly imperfect inverters as shown by the
following lemma.


\begin{lem} \label{lem:protocol} Let $\Sigma$ be an alphabet of size
  $q > 1$ and $f\colon \Sigma^n \to \Sigma^m$ be a
  $(\gamma^2/2)$-invertible $q$-ary $(k, \eps)$ symbol-fixing
  extractor.  Then, $f$ and its inverter can be seen as a
  decoder/encoder pair for an $(n-k,\veps+\gamma,\gamma)_q$-resilient
  wiretap protocol with block length $n$ and message length $m$.
\end{lem}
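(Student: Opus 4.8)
The plan is to let the decoder be the extractor itself, $D := f$, and the encoder be (the algorithm computing) an efficient $(\gamma^2/2)$-inverter $A$ for $f$, which exists by the invertibility hypothesis, i.e.\ $E(x,z) := A(x,z)$; then efficiency of $E$ is exactly the hypothesis and efficiency of $D$ is the (assumed) explicitness of $f$. For decodability, first note that $f$ is surjective: $\U_{\Sigma^n}$ is an $(n,k)_q$ symbol-fixing source, hence $f(\U_{\Sigma^n})$ is $\eps$-close to $\U_{\Sigma^m}$, and by Proposition~\ref{prop:statDist} an omitted point of $\Sigma^m$ would force $\eps \ge q^{-m}$, which is not the case in the regime of interest. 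Thus $f^{-1}(x)\neq\emptyset$ for all $x$, and the inversion property of $A$ gives $D(E(x,z)) = f(A(x,z)) = x$ for every $x$ and $z$, establishing condition (a) of Definition~\ref{def:wiretap}. In particular, in the encoding experiment ($X\sim\U_{\Sigma^m}$, $R\sim\U_r$, $Y:=A(X,R)$) we have $X = f(Y)$ almost surely.

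For resiliency I would compare the real experiment with an idealized one. By the uniformity property of the inverter, $Y$ is $(\gamma^2/2)$-close to $\U_{\Sigma^n}$; set $\tilde Y \sim \U_{\Sigma^n}$ and $\tilde X := f(\tilde Y)$. Fix $S\subseteq[n]$ with $|S|\le t:=n-k$, so $|\bar S| = n-|S|\ge k$. Conditioning the uniform string $\tilde Y$ on $\tilde Y|_S = w$ leaves the coordinates in $\bar S$ uniform and independent and fixes those in $S$; this is exactly an $(n,|\bar S|)_q$ — hence an $(n,k)_q$ — symbol-fixing source, so by the extractor property the conditional distribution $\tilde{\cX}_{S,w}$ of $\tilde X$ given $\tilde Y|_S = w$ is $\eps$-close to $\U_{\Sigma^m}$, and this holds for \emph{every} $w\in\Sigma^{|S|}$. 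Thus the idealized experiment has no bad observation at threshold $\eps$.

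The remaining step transfers this to the real world. Since $(Y|_S, X) = (Y|_S, f(Y))$ and $(\tilde Y|_S, \tilde X) = (\tilde Y|_S, f(\tilde Y))$ are images of $Y$ and $\tilde Y$ under the same deterministic map $y\mapsto(y|_S, f(y))$, and statistical distance does not increase under such maps, these two joint distributions are $(\gamma^2/2)$-close. Writing $p_w := \Pr[Y|_S = w]$ and $u := q^{-|S|}$, the identity $p_w\cX_{S,w}(x) - u\tilde{\cX}_{S,w}(x) = p_w(\cX_{S,w}(x)-\tilde{\cX}_{S,w}(x)) + (p_w-u)\tilde{\cX}_{S,w}(x)$ together with the triangle inequality gives, for each $w$, $\sum_x|p_w\cX_{S,w}(x) - u\tilde{\cX}_{S,w}(x)| \ge 2p_w\dist(\cX_{S,w},\tilde{\cX}_{S,w}) - |p_w-u|$. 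For $w$ in the bad set $B_S = \{w : \dist(\cX_{S,w},\U_{\Sigma^m}) > \eps+\gamma\}$, the triangle inequality with $\dist(\tilde{\cX}_{S,w},\U_{\Sigma^m})\le\eps$ gives $\dist(\cX_{S,w},\tilde{\cX}_{S,w}) > \gamma$. Summing over $w\in B_S$, using $\sum_w|p_w-u| = 2\dist(Y|_S,\U_{\Sigma^{|S|}}) \le 2\dist(Y,\U_{\Sigma^n}) \le \gamma^2$ and that the left side is at most $2\dist((Y|_S,X),(\tilde Y|_S,\tilde X)) \le \gamma^2$, we obtain $\gamma^2 \ge 2\gamma\Pr[Y|_S\in B_S] - \gamma^2$, i.e.\ $\Pr[Y|_S\in B_S]\le\gamma$. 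This is condition (b) with resilience $t=n-k$, error $\eps+\gamma$, and leakage $\gamma$, completing the proof.

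The main obstacle is precisely this last transfer: the idealized analysis is clean and yields literally no bad observations, but because the inverter is only approximately uniform, $Y|_S$ is not exactly uniform, so one cannot directly invoke the averaging statement of Proposition~\ref{prop:strongDist}, which presupposes a uniform conditioning variable. The term-by-term triangle-inequality bookkeeping above absorbs this discrepancy, and the specific choice of a $(\gamma^2/2)$-inverter is exactly what makes the two $\gamma^2/2$-sized error contributions combine into a leakage bound of $\gamma$ (and an error threshold $\eps+\gamma$). Everything else — decodability, efficiency, and the observation that restricting a uniform string to a subset of coordinates yields a symbol-fixing source — is routine.
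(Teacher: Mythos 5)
Your proof is correct and follows essentially the same route as the paper's: the decoder is the extractor, the encoder is its inverter, and resiliency comes from comparing with the ideal experiment in which the codeword is exactly uniform (so that conditioning on $Y|_S=w$ yields an exact symbol-fixing source) together with a Markov-type averaging over $w$ showing that at most a $\gamma$-fraction of observations can be bad. The only difference is bookkeeping: the paper applies its Proposition~\ref{prop:conditioning} to the conditional distributions of $Y$ over the fibers $\{y\colon y|_S=w\}$ and then pushes through $f$ via Proposition~\ref{prop:closeFunction}, whereas you inline the same computation directly on the joint distribution of $(Y|_S, f(Y))$.
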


\begin{proof}
  Let $E$ and $D$ denote the wiretap encoder and decoder,
  respectively.  Hence, $E$ is the $(\gamma^2/2)$-inverter for $f$,
  and $D$ is the extractor $f$ itself.  From the definition of the
  inverter, for every $x \in \Sigma^m$ and every random seed $r$, we
  have $D(E(x, r)) = x$.  Hence it is sufficient to show that the pair
  satisfies the resiliency condition.

  Let the random variable $X$ be uniformly distributed on $\Sigma^m$
  and the seed $R\in\zo^r$ be chosen uniformly at random.  Denote the
  encoding of $X$ by $Y := E(X, R)$. Fix any $S \subseteq [n]$ of size
  at most $n-k$.

  For every $w \in \Sigma^{|S|}$, let $Y_w$ denote the set $\{ y \in
  \Sigma^n\colon (y|_S) = w \}$.
  Note that the sets $Y_w$ partition the space $\Sigma^n$ into
  $|\Sigma|^{|S|}$ disjoint sets.

  Let $\cY$ and $\cY_S$ denote the distribution of $Y$ and $Y|_S$,
  respectively. The inverter guarantees that $\cY$ is
  $(\gamma^2/2)$-close to uniform. Applying
  Proposition~\ref{prop:conditioning}, we get that
  \[
  \sum_{w \in \Sigma^{|S|}} \Pr[(Y|_S) = w] \cdot \dist( (\cY | Y_w),
  \U_{Y_w} ) \leq \gamma^2.
  \]
  The left hand side is the expectation of $\dist( (\cY | Y_w),
  \U_{Y_w} )$.  Denote by $W$ the set of all \emph{bad outcomes} of
  $Y|_S$, i.e.,
  \[W := \{ w \in \Sigma^{|S|} \mid \dist( (\cY | Y_w), \U_{Y_w} ) >
  \gamma \}.\] By Markov's inequality, we conclude that
  \[
  \Pr[(Y|_S) \in W] \leq \gamma.
  \]
  For every $w \in W$, the distribution of $Y$ conditioned on the
  event $(Y|_S) = w$ is $\gamma$-close to a symbol-fixing source with
  $n-|S| \geq k$ random symbols. The fact that $D$ is a symbol-fixing
  extractor for this entropy and Proposition~\ref{prop:closeFunction}
  imply that, for any such $w$, the conditional distribution of
  $D(Y)|(Y|_S = w)$ is $(\gamma+\eps)$-close to uniform.  Hence with
  probability at least $1-\gamma$ the distribution of $X$ conditioned
  on the outcome of $Y|_S$ is $(\gamma+\eps)$-close to uniform. This
  ensures the resiliency of the protocol.
\end{proof}

By combining Lemma~\ref{lem:protocol} and Theorem~\ref{thm:symfixExtr}
using a Reed-Solomon code, we can obtain a perfectly private,
rate-optimal, wiretap protocol for the Wiretap~II problem over large
alphabets (namely, $q \geq n$), and recover the original result of
Ozarow and Wyner\footnote{In fact, Ozarow and Wyner use a parity check
  matrix of an MDS code in their construction, which is indeed a
  generator matrix for the dual code which is itself
  MDS.}~\cite{ref:Wyner2}:

\begin{coro} \label{coro:wiretapii} For every positive integer $n$,
  prime power $q \geq n$, and $\delta \in [0, 1)$,
  there is a $(\delta n, 0, 0)_q$-resilient wiretap protocol with block length $n$ and rate $1 -
  \delta$ that attains perfect privacy. \qed
\end{coro}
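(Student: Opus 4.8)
The plan is to instantiate the two results already in hand: Theorem~\ref{thm:symfixExtr}, which produces a zero-error symbol-fixing extractor from the generator matrix of a linear code, and Lemma~\ref{lem:protocol}, which turns an invertible symbol-fixing extractor into a wiretap protocol; the code to feed in is a Reed--Solomon code, which exists over $\F_q$ precisely because $q \geq n$. First I would assume $t := \delta n$ is an integer (otherwise round it down, which only improves the resilience), set $m := n - t = (1-\delta) n$, and take $\C$ to be an $[n, m, n-m+1]_q$ Reed--Solomon code with generator matrix $G \in \F_q^{m \times n}$. Since $d = n-m+1$, Theorem~\ref{thm:symfixExtr} then says that $E\colon \F_q^n \to \F_q^m$ given by $E(x) := G x^\top$ is an $(n-d+1,0) = (m,0)$-extractor for symbol-fixing sources over $\F_q$.

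The one step that is not a pure substitution is checking that $E$ admits an efficient \emph{exact} ($\gamma = 0$) inverter, so that Lemma~\ref{lem:protocol} applies with $\gamma = 0$ and hence zero leakage. Because $G$ has full row rank $m$, every fiber $E^{-1}(y) = \{ x \in \F_q^n \colon G x^\top = y^\top \}$ (for $y \in \F_q^m$) is a coset of the kernel $V := \{ x \in \F_q^n \colon G x^\top = 0 \}$, an $\F_q$-subspace of dimension $n-m = t$. The inverter, given $y$ and a random seed (a uniform $r = (r_1, \ldots, r_t) \in \F_q^{\,t}$), computes by Gaussian elimination over $\F_q$ (in time polynomial in $n$) a particular solution $x_0$ of $G x^\top = y^\top$ together with a basis $b_1, \ldots, b_t$ of $V$, and outputs $x_0 + r_1 b_1 + \cdots + r_t b_t$. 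By construction $E(x_0 + r_1 b_1 + \cdots + r_t b_t) = y$, giving the inversion property; and since the fibers $\{ E^{-1}(y) \}_{y \in \F_q^m}$ partition $\F_q^n$ into $q^m$ classes of equal size $q^t$, feeding a uniform $y$ together with a uniform $r$ yields exactly the uniform distribution on $\F_q^n$. Hence $E$ is a $0$-invertible (in particular $(\gamma^2/2)$-invertible for $\gamma = 0$) $q$-ary $(m,0)$ symbol-fixing extractor.

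Assembling the pieces, Lemma~\ref{lem:protocol} with $\Sigma = \F_q$, $k = m$, $\eps = 0$, $\gamma = 0$ shows that $E$ (as the decoder) and its inverter (as the encoder) form an $(n-k, 0, 0)_q = (t,0,0)_q = (\delta n, 0, 0)_q$-resilient wiretap protocol of block length $n$ and message length $m$, hence of rate $m/n = 1-\delta$; both maps are polynomial-time, amounting to linear algebra over $\F_q$ with matrices of dimension at most $n$. Finally, since $\eps = \gamma = 0$, Lemma~\ref{app:model} yields equivocation $\Delta \geq m(1-\eps-\gamma) = m = H(X)$, i.e.\ the protocol is perfectly private, as claimed. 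I do not expect a genuine obstacle here: everything beyond plugging parameters into Theorem~\ref{thm:symfixExtr} and Lemma~\ref{lem:protocol} is the observation that the natural coset-sampling inverter for a full-rank linear map is both efficiently computable and \emph{perfectly} uniform, which is exactly what lets the construction achieve zero leakage rather than merely a negligible one.
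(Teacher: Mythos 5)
Your proposal is correct and follows exactly the route the paper takes: Theorem~\ref{thm:symfixExtr} instantiated with a Reed--Solomon code to get a zero-error $(m,0)$ symbol-fixing extractor, combined with Lemma~\ref{lem:protocol} at $\eps=\gamma=0$. The only content the paper leaves implicit is the $0$-inverter, and your coset-sampling argument (uniform fiber of a full-row-rank linear map, all fibers of equal size $q^{n-m}$) is precisely the intended one.
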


\section{A Wiretap Protocol Based on Random Walks} \label{sec:walk}

In this section we describe a wiretap protocol that achieves a rate
$R$ within a constant fraction of the information theoretically
optimal value $1-\delta$ (the constant depending on the alphabet
size).

To achieve our result, we will modify the symbol-fixing extractor of
Kamp and Zuckerman \cite{ref:KZ}, that is based on random walks on
expander graphs, to make it efficiently invertible without affecting
its extraction properties, and then apply Lemma~\ref{lem:protocol}
above to obtain the desired wiretap protocol.

Before we proceed, let us briefly review some basic notions and facts
related to expander graphs. For a detailed review of the theory of
expander graphs, refer to the excellent survey by Hoory, Linial and
Wigderson \cite{HLW06}, and books \cites{ref:MR,ref:MU}.

We will be working with directed regular expander graphs that are
obtained from undirected graphs by replacing each undirected edge with
two directed edges in opposite directions.  Let $G=(V, E)$ be a
$d$-regular graph.  Then a \emph{labeling} of the edges of $G$ is a
function $L\colon V \times [d] \to V$ such that for every $u \in V$
and $t \in [d]$, the edge $(u, L(u, t))$ is in $E$.  The labeling is
\emph{consistent}\index{consistent labeling} if whenever $L(u, t) =
L(v, t)$, then $u = v$.  Note that the natural labeling of a Cayley
graph (cf.\ \cite{HLW06}) is in fact consistent.

A \emph{family} of $d$-regular graphs 
is an infinite set of $d$-regular graphs such that for every $N \in \N$, the
set contains a graph with at least $N$ vertices.  For a parameter $c
\geq 1$, we will call a family $c$-dense if there is an $N_0 \in \N$
such that, for every $N \geq N_0$, the family has a graph with at
least $N$ and at most $cN$ vertices.  We call a family of graphs
\emph{constructible} if all the graphs in the family have a consistent
labeling that is efficiently computable. That is, there is a uniform,
polynomial-time algorithm that, given $N \in \N$ and $i \in [N], j \in
[d]$, outputs the label of the $j$th neighbor of the $i$th vertex,
under a consistent labeling, in the graph in the family that has $N$
vertices (provided that it exists).

Let $A$ denote the normalized adjacency matrix of a $d$-regular graph
$G$ (that is, the adjacency matrix with all the entries divided by $d$).
We denote by $\lambda_G$ the second largest eigenvalue of $A$ in
absolute value. The \emph{spectral gap}\index{spectral gap} of $G$ is
given by $1-\lambda_G$.  Starting from a probability distribution $p$
on the set of vertices, represented as a real vector with coordinates
index by the vertex set, performing a single-step random walk on $G$
leads to the distribution defined by $pA$. The following is a well
known lemma on the convergence of the distributions resulting from
random walks (see \cite{ref:Lovasz} for a proof):

\begin{lem} \label{lem:walk} Let $G=(V,E)$ be a $d$-regular undirected
  graph, and $A$ be its normalized adjacency matrix. Then for any
  probability vector $p$, we have \[ \| pA - \U_V \|_2 \leq \lambda_G
  \| p - \U_V \|_2,\] where $\| \cdot \|_2$ denotes the $\ell_2$
  norm. \qed
\end{lem}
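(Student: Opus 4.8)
The plan is to reduce the claim to a statement about how the normalized adjacency matrix $A$ contracts vectors orthogonal to the all-ones vector. First I would note that, since $G$ is $d$-regular, $A$ is doubly stochastic, so the uniform distribution is stationary: $\U_V A = \U_V$. Setting $v := p - \U_V$, this gives $pA - \U_V = (p - \U_V)A = vA$, so it suffices to prove $\| vA \|_2 \le \lambda_G \| v \|_2$. The key structural feature of $v$ is that its coordinates sum to zero, because both $p$ and $\U_V$ are probability vectors; equivalently, $v$ is orthogonal to the all-ones vector $\mathbf{1}$.

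Next I would use that $G$ is undirected, so $A$ is a real symmetric matrix and hence admits an orthonormal eigenbasis $e_1, \dots, e_{|V|}$ with real eigenvalues $1 = \lambda_1 \ge \lambda_2 \ge \dots \ge \lambda_{|V|} \ge -1$. Here $e_1 = \mathbf{1}/\sqrt{|V|}$ is an eigenvector of eigenvalue $1$ (as $A$ is row-stochastic), and by definition $\lambda_G = \max\{ |\lambda_2|, |\lambda_{|V|}| \}$. Writing $v = \sum_{i \ge 2} c_i e_i$ (the $e_1$-component vanishes since $v \perp \mathbf{1}$), we have $vA = \sum_{i \ge 2} c_i \lambda_i e_i$, whence
\[
\| vA \|_2^2 = \sum_{i \ge 2} c_i^2 \lambda_i^2 \le \lambda_G^2 \sum_{i \ge 2} c_i^2 = \lambda_G^2 \| v \|_2^2 ,
\]
and taking square roots yields the lemma.

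There is no serious obstacle here; the only points that need a little care are checking that $A$ fixes $\U_V$ (which is exactly where $d$-regularity, i.e. double stochasticity, enters) and that $\lambda_G$ governs the action of $A$ on the orthogonal complement of $\mathbf{1}$ (which is where symmetry of $A$, i.e. the graph being undirected, enters). One could alternatively bypass the spectral theorem by invoking the Rayleigh-quotient characterization $\lambda_G = \max_{w \perp \mathbf{1},\, w \neq 0} \| wA \|_2 / \| w \|_2$, but the eigendecomposition argument above is the most transparent route.
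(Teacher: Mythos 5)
Your proof is correct and is exactly the standard spectral argument that the paper defers to (it cites an external reference rather than proving the lemma): use double stochasticity of $A$ to get $pA - \U_V = (p-\U_V)A$, note $p - \U_V \perp \mathbf{1}$, and expand in an orthonormal eigenbasis of the symmetric matrix $A$ so that every surviving component is contracted by a factor of at most $\lambda_G$. No gaps.
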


The extractor of Kamp and Zuckerman~\cite{ref:KZ}\index{extractor!Kamp
  and Zuckerman's} starts with a fixed vertex in a large expander
graph and interprets the input as the description of a walk on the
graph.  Then it outputs the label of the vertex reached at the end of
the walk.  Notice that a direct approach to invert this function will
amount to sampling a path of a particular length between a pair of
vertices in the graph, uniformly among all the possibilities, which
might be a difficult problem for good families of expander
graphs\footnote{In fact intractability of the easier problem of
  finding a loop in certain families of expander graphs forms the
  underlying basis for a class of cryptographic hash functions (cf.\
  \cite{ref:CGL}). Even though this easier problem has been solved in
  \cite{ref:TZ08}, uniform sampling of paths seems to be much more
  difficult.}.  We work around this problem by choosing the starting
point of the walk from the input\footnote{ The idea of choosing the
  starting point of the walk from the input sequence has been used
  before in extractor constructions \cite{ref:Zuc07}, but in the
  context of seeded extractors for general sources with high
  entropy.}. The price that we pay by doing so is a slightly
  larger error compared to the original construction of Kamp and
  Zuckerman that is, asymptotically, of little significance.
  In particular we show the following:

\begin{thm} \label{thm:invWalk} Let $G$ be a constructible $d$-regular
  graph with $d^m$ vertices and second largest eigenvalue
  $\lambda_G\ge 1/\sqrt{d}$.  Then there exists an explicit invertible
  $(k,2^{s/2})_d$ symbol-fixing extractor $\kz\colon [d]^n \to [d]^m$,
  such that
  \[
  s := \left\{ \begin{array}{ll}
      m \log d + k \log \lambda_G^2 & \text{if $k \leq n - m$,} \\
      (n-k) \log d + (n-m) \log \lambda_G^2 & \text{if $k > n - m$.} \\
    \end{array} \right.
  \]
\end{thm}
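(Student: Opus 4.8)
The plan is to realize $\kz$ concretely as a Kamp--Zuckerman-style walk in which \emph{the starting vertex is read off from part of the input}. Identify the vertex set $V$ of $G$ with $[d]^m$; on input $x\in[d]^n$, interpret the first $m$ symbols as a start vertex $v_0\in[d]^m$, and the remaining $n-m$ symbols $x_{m+1},\dots,x_n\in[d]$ as a walk, where from the current vertex $v$ the symbol $x_{m+i}$ moves us to $L(v,x_{m+i})$. The output $\kz(x)$ is the final vertex. This differs from the original construction only in that the walk starts from an input-encoded vertex rather than a fixed one, which is precisely what buys invertibility at the mild cost quantified in the statement.

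For invertibility, the one thing to exploit is that consistency of the labeling means that for each fixed $t\in[d]$ the map $v\mapsto L(v,t)$ is injective, hence a permutation $\pi_t$ of $V$. Therefore, for a fixed walk string $w=(w_1,\dots,w_{n-m})$, the map $v_0\mapsto\kz(v_0,w)$ is the permutation $\pi_{w_{n-m}}\circ\cdots\circ\pi_{w_1}$, so every $y\in V$ has exactly $d^{n-m}$ preimages, one per choice of $w$. The inverter $A(y,w)$ reads its seed as a walk string $w\in[d]^{n-m}$, walks backwards from $y$ by applying $\pi_{w_i}^{-1}$ (for the explicit Cayley-graph families used here these inverse permutations are just forward steps with relabeled edges, hence efficiently computable), recovers the unique compatible $v_0$, and outputs $(v_0,w)\in[d]^n$. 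Property (a) of Definition~\ref{def:inverter} is immediate; for property (b), for each fixed $w$ the map $y\mapsto v_0$ is a bijection of $V$, so $y\sim\U_{\Sigma^m}$ and $w\sim\U_{n-m}$ independent give $(v_0,w)$ exactly uniform on $[d]^n$. Thus $A$ is an efficient, perfect ($0$-error) inverter, and explicitness of both $\kz$ and $A$ follows from constructibility, since a step of the walk, forward or backward, is a single evaluation of the labeling.

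For the extraction bound I would run the standard $\ell_2$ argument, tracking the probability vector over $V$ for the current vertex. Let $\cX$ be an $(n,k)_d$ symbol-fixing source; without loss of generality exactly $k$ coordinates are free, say $j$ of them among the first $m$ positions and $k-j$ among positions $m+1,\dots,n$, so necessarily $j\ge\max\{0,k-(n-m)\}=:j^\ast$. By consistency, each step driven by a fixed symbol acts on the current distribution by a coordinate permutation and preserves $\|\cdot-\U_V\|_2$; each step driven by a free (uniform, independent) symbol is a random-walk step and by Lemma~\ref{lem:walk} scales $\|\cdot-\U_V\|_2$ by at most $\lambda_G$. The initial vector is uniform over a subcube of $V$ of size $d^{\,j}$, for which a direct computation gives $\ell_2$-distance to $\U_V$ at most $d^{-j/2}$. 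Hence $\|p_{\mathrm{final}}-\U_V\|_2\le\lambda_G^{\,k-j}\,d^{-j/2}$, and by Cauchy--Schwarz the statistical distance of $\kz(\cX)$ to uniform is at most $\tfrac12\sqrt{d^m}\,\lambda_G^{\,k-j}\,d^{-j/2}=\tfrac12\sqrt{d^{\,m-j}\lambda_G^{\,2(k-j)}}$.

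The remaining point — and the place I expect to need the most care — is the worst-case placement argument, which is exactly where the hypothesis $\lambda_G\ge 1/\sqrt d$ is used: the exponent $\tfrac{m-j}{2}\log d+(k-j)\log\lambda_G$ has derivative $-\log(\sqrt d\,\lambda_G)\le 0$ in $j$, hence is non-increasing, so over admissible $j\ge j^\ast$ the bound is maximized at $j=j^\ast$. Taking $j^\ast=0$ when $k\le n-m$ yields statistical distance at most $\tfrac12\sqrt{d^m\lambda_G^{2k}}=\tfrac12\,2^{s/2}<2^{s/2}$ with $s=m\log d+k\log\lambda_G^2$; taking $j^\ast=k-n+m$ when $k>n-m$ (so $k-j^\ast=n-m$) yields at most $\tfrac12\sqrt{d^{\,n-k}\lambda_G^{2(n-m)}}=\tfrac12\,2^{s/2}$ with $s=(n-k)\log d+(n-m)\log\lambda_G^2$, matching the two cases. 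The invertibility half is essentially free once consistency is invoked; the only real bookkeeping is the placement optimization above together with the constants in the $\ell_2\!\to\!\ell_1$ conversion.
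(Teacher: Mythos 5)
Your proposal is correct and follows essentially the same route as the paper's proof: the same input-encoded starting vertex, the same permutation-based $0$-inverter exploiting consistency of the labeling, and the same $\ell_2$ tracking via Lemma~\ref{lem:walk} followed by Cauchy--Schwarz and the observation that $\lambda_G\ge 1/\sqrt{d}$ forces the worst case to occur when the number of free symbols among the first $m$ positions is minimized. The only cosmetic differences are that you spell out the backward-walk inverter and the placement optimization (via the sign of the derivative in $j$) slightly more explicitly than the paper does.
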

\newcommand{\inv}{\mathsf{Inv}}
\begin{Proof}
  We first describe the extractor and its inverse. Given an input $(v,
  w) \in [d]^m \times [d]^{n-m},$ the function $\kz$ interprets $v$ as
  a vertex of $G$ and $w$ as the description of a walk starting from
  $v$. The output is the index of the vertex reached at the end of the
  walk.  Figure~\ref{fig:graphWalk} depicts the procedure. The
  $4$-regular graph shown in this toy example has $8$ vertices labeled
  with binary sequences of length $3$. Edges of the graph are
  consistently labeled at both endpoints with the set of labels
  $\{1,2,3,4\}$. The input sequence $(0,1,0\mid2,3,4,2,4$) shown below
  the graph describes a walk starting from the vertex $010$ and
  following the path shown by the solid arrows. The output of the
  extractor is the label of the final vertex $011$.

\begin{figure}
\centerline{\includegraphics[width=7cm]{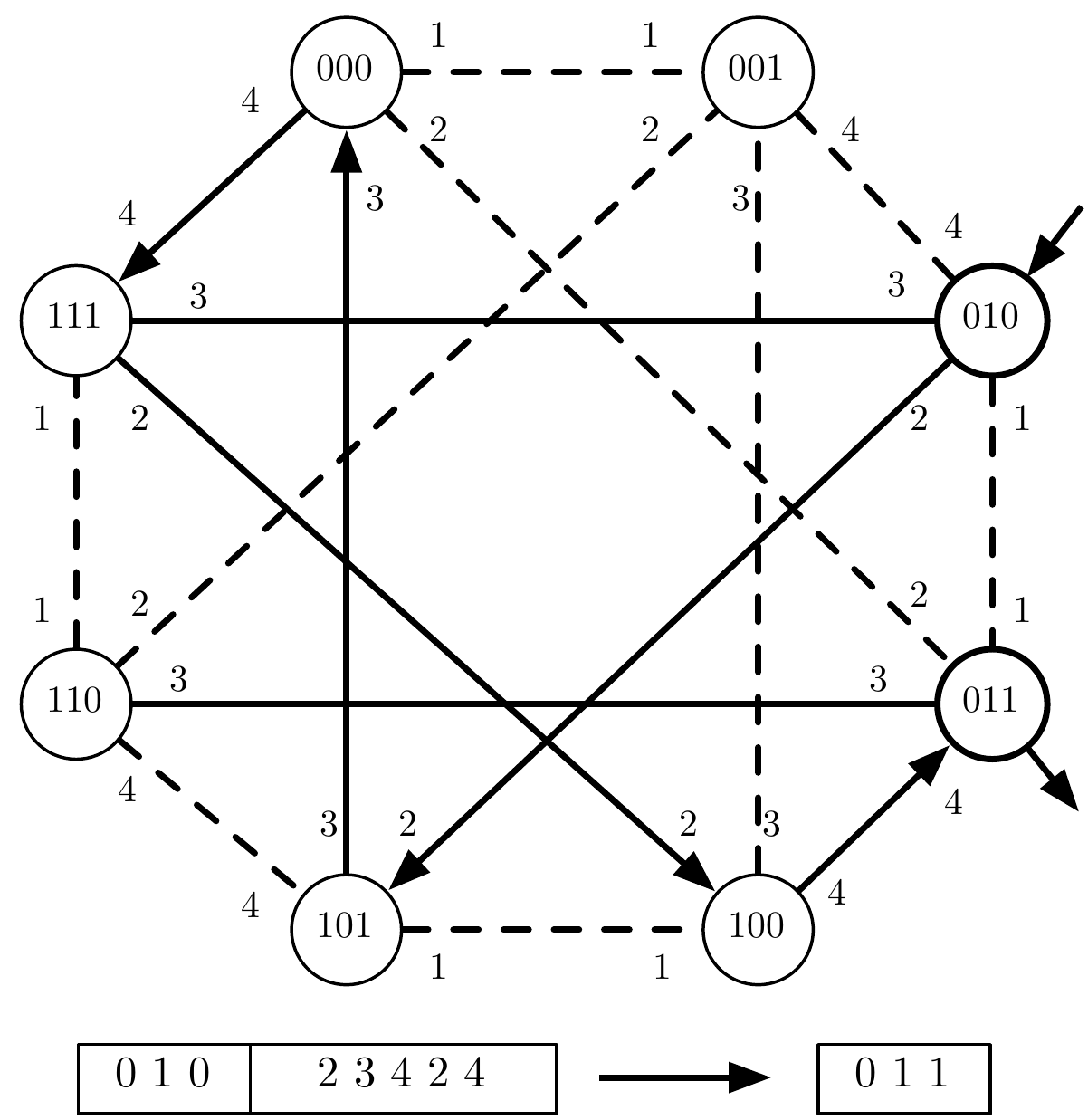}}
\caption[The random-walk symbol-fixing extractor]{The random-walk symbol-fixing extractor.}
\label{fig:graphWalk}
\end{figure}

  The inverter $\inv$ works as follows: Given $x \in [d]^m$,
  $x$ is interpreted as a vertex of $G$. Then $\inv$ picks $W \in
  [d]^{n-m}$ uniformly at random.  Let $V$ be the vertex starting from
  which the walk described by $W$ ends up in $x$. The inverter outputs
  $(V, W)$. It is easy to verify that $\inv$ satisfies the properties
  of a $0$-inverter.

  Now we show that $\kz$ is an extractor with the given parameters. We
  will follow the same line of argument as in the original proof of
  Kamp and Zuckerman.  Let $(x, w) \in [d]^{m} \times [d]^{n-m}$ be a
  vector sampled from an $(n, k)_d$ symbol-fixing source, and let $u
  := \kz(x, w)$.  Recall that $u$ can be seen as the vertex of $G$
  reached at the end of the walk described by $w$ starting from $x$.
  Let $p_i$ denote the probability vector corresponding to the walk
  right after the $i$th step, for $i = 0, \ldots, n-m$, and denote by
  $p$ the uniform probability vector on the vertices of $G$.  Our goal
  is to bound the error $\eps$ of the extractor, which is half the
  $\ell_1$ norm of $p_{n-m} - p$.

  Suppose that $x$ contains $k_1$ random symbols and the remaining
  $k_2 := k - k_1$ random symbols are in $w$. Then $p_0$ has the value
  $d^{-k_1}$ at $d^{k_1}$ of the coordinates and zeros elsewhere,
  hence
  \[
  \| p_0 - p \|_2^2 = d^{k_1} (d^{-k_1} - d^{-m})^2 + (d^m - d^{k_1})
  d^{-2m} = d^{-k_1} - d^{-m} \leq d^{-k_1}.
  \]

  Now for each $i \in [n-m]$, if the $i$th step of the walk
  corresponds to a random symbol in $w$ the $\ell_2$ distance is
  multiplied by $\lambda_G$ by Lemma~\ref{lem:walk}.  Otherwise the
  distance remains the same due to the fact that the labeling of $G$
  is consistent.  Hence we obtain $\| p_{n-m} - p \|_2^2 \leq d^{-k_1}
  \lambda_G^{2 k_2}$.  Translating this into the $\ell_1$ norm by
  using the Cauchy-Schwarz inequality, we obtain $\eps$, namely,
  \begin{equation*} \label{eqn:epsBound} \eps \leq \frac{1}{2}
    d^{(m-k_1)/2} \lambda_G^{k_2} < 2^{((m-k_1) \log d + k_2 \log
      \lambda_G^2)/2}.
  \end{equation*}
  By our assumption, $\lambda_G \geq 1/\sqrt{d}$. Hence, everything
  but $k_1$ and $k_2$ being fixed, the above bound is maximized when
  $k_1$ is minimized.  When $k \leq n-m$, this corresponds to the case
  $k_1 = 0$, and otherwise to the case $k_1 = k-n+m$.  This gives us
  the desired upper bound on $\eps$.
\end{Proof}

Combining this with Lemma~\ref{lem:protocol} and setting up the the
right asymptotic parameters, we obtain our protocol for the wiretap
channel problem.

\begin{coro} \label{coro:WTWalk} Let $\delta \in [0, 1)$ and $\gamma >
  0$ be arbitrary constants, and suppose that there is a constructible
  family
  of $d$-regular expander graphs with spectral gap at least
  $1-\lambda$ that is $c$-dense, for constants $\lambda < 1$ and $c
  \geq 1$.

  Then, for every large enough $n$, there is a $(\delta n,
  2^{-\Omega(n)}, 0)_d$-resilient wiretap protocol with block length
  $n$ and rate \[R = \max\{\alpha (1-\delta), 1 - \delta/\alpha\} -
  \gamma,\] where $\alpha := -\log_d \lambda^2$.
\end{coro}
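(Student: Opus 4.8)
The plan is to instantiate the inverted random‑walk extractor of Theorem~\ref{thm:invWalk} on a graph taken from the given family, feed it into Lemma~\ref{lem:protocol} with the entropy parameter chosen so that the resulting resilience is exactly $\delta n$, and then read off the rate by optimizing the one remaining free parameter (the message length $m$) against the error bound. The conceptual work is already done in those two results; what is left is bookkeeping of parameters.

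First some normalizations. We may assume $\lambda\ge 1/\sqrt d$ — if not, replace $\lambda$ by $1/\sqrt d$, which only weakens Lemma~\ref{lem:walk} and makes the hypothesis $\lambda_G\ge 1/\sqrt d$ of Theorem~\ref{thm:invWalk} hold — so that $\alpha=-\log_d\lambda^2\in(0,1]$ (this is also the only interesting regime, since the rate is anyway bounded by $1-\delta<1$). Given a large $n$, set $k:=\lceil(1-\delta)n\rceil$, so that the resilience $n-k$ delivered by Lemma~\ref{lem:protocol} is at least $\delta n$. By $c$‑density we fix a graph $G$ in the family on $d^{m}$ vertices; rounding the target vertex count up to a power of $d$ costs only an additive $O_c(1)$ in $\log_d(\cdot)$ (one takes the graph whose size lies in $[d^{m},c\,d^{m}]$ and truncates the output length by $O_c(1)$ symbols), which is harmless and absorbed into $\gamma$. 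Constructibility guarantees that $\kz$ and its $0$‑inverter are polynomial‑time computable, hence that the induced encoder/decoder pair is efficient as required by Definition~\ref{def:wiretap}.

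Now combine the ingredients. Theorem~\ref{thm:invWalk} yields an invertible $(k,2^{s/2})_d$ symbol‑fixing extractor $\kz\colon[d]^n\to[d]^m$ equipped with a $0$‑inverter, where, using $\log\lambda_G^2=-\alpha\log d$,
\[
s=\begin{cases}(m-\alpha k)\log d & \text{if } m\le n-k,\\[2pt] \bigl(n-k-\alpha(n-m)\bigr)\log d & \text{if } m> n-k.\end{cases}
\]
Since the inverter is exact, Lemma~\ref{lem:protocol} (with its parameter $\gamma$ set to $0$) turns $\kz$ together with this inverter into an $(\,n-k,\ 2^{s/2},\ 0\,)_d$‑resilient wiretap protocol of block length $n$, message length $m$, hence rate $R=m/n$; the resilience already satisfies $n-k\ge\delta n$. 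As $\log d\ge 1$, it remains only to choose $m$ as large as possible subject to $s\le -\Omega(n)$, which forces the error $2^{s/2}=2^{-\Omega(n)}$ and leaves leakage $0$.

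Finally, optimize $m$. With $k\approx(1-\delta)n$, the requirement $s\le -\Theta(\gamma)\,n\log d$ becomes $m\le(\alpha(1-\delta)-\Theta(\gamma))n$ in the regime $m\le n-k\ (\approx\delta n)$, and $m\le(1-\delta/\alpha-\Theta(\gamma))n$ in the regime $m> n-k$; in each regime the candidate bound is consistent with the case condition exactly when it is the larger of the two (the first governs for $\delta\ge\alpha/(\alpha+1)$ and the second otherwise, the two expressions agreeing at the crossover $\delta=\alpha/(\alpha+1)$). Taking the better choice of $m$ then gives a valid protocol of rate $R=\max\{\alpha(1-\delta),\,1-\delta/\alpha\}-\gamma$, as claimed. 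The only mildly delicate points — and the places I would be most careful — are the reconciliation of ``$c$‑dense'' with ``vertex count an exact power of $d$'', and checking that the two one‑sided optimizations splice together to precisely the stated $\max$‑formula; neither involves any real difficulty beyond careful accounting.
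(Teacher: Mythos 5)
For the dense case $c=1$ your argument is exactly the paper's: instantiate Theorem~\ref{thm:invWalk} on a graph with $d^m$ vertices, feed the extractor and its exact $0$-inverter into Lemma~\ref{lem:protocol} with $k=(1-\delta)n$, and optimize $m$ against the two branches of the error exponent; your case analysis and the crossover at $\delta=\alpha/(\alpha+1)$ check out. (Your normalization ``replace $\lambda$ by $1/\sqrt d$'' is a slightly awkward way to discharge the hypothesis $\lambda_G\ge 1/\sqrt d$ of Theorem~\ref{thm:invWalk} --- the paper simply invokes the Alon--Boppana bound $\lambda_G\ge 2\sqrt{d-1}/d\ge 1/\sqrt d$, which holds for every infinite $d$-regular family --- but either route is harmless.)

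The genuine gap is in the case $c>1$, which you dismiss as ``careful accounting.'' When the family only supplies a graph on $N$ vertices with $N$ not a power of $d$, you must map the first block of input symbols onto $[N]$ and the final vertex back into $[d]^{m}$, and any such map $\modf_{q,p}$ is unbalanced: its fibers differ in size, so the natural inverter (sample a uniform preimage) is no longer a $0$-inverter. ``Truncating the output by $O_c(1)$ symbols'' does not fix this. The consequence is not just a worse error constant: if you only control the inverter's deviation in statistical distance and plug it into Lemma~\ref{lem:protocol}, that lemma returns a protocol with \emph{positive} leakage $\gamma>0$, whereas the corollary claims leakage exactly $0$. The paper's actual fix (Appendix~\ref{app:wiretapDetails}) is to choose $N$ in $[c^{\eta m}d^m, c^{\eta m+1}d^m]$ so that the unbalancedness is exponentially small in the $\ell_\infty$ norm, and then to prove a separate statement (Lemma~\ref{lem:infty}) that an $\ell_\infty$-bound of the form $2^{-\Omega(m)}/d^n$ survives conditioning on \emph{every} fixing of up to $n-k$ coordinates --- this pointwise control is what restores zero leakage, at the cost of a small loss in the error exponent absorbed by $\eta$. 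That is an additional idea your proposal is missing, not bookkeeping; as written, your argument proves the corollary only for $c=1$, or for $c>1$ with the third parameter weakened from $0$ to $2^{-\Omega(n)}$.
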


\begin{Proof} For the case $c = 1$ we use Lemma~\ref{lem:protocol}
  with the extractor $\kz$ of Theorem~\ref{thm:invWalk} and its
  inverse.  Every infinite family of graphs must satisfy $\lambda \geq
  2 \sqrt{d-1}/d$ \cite{ref:nilli}, and in particular we have $\lambda
  \geq 1/\sqrt{d}$, as required by Theorem~\ref{thm:invWalk}.  We
  choose the parameters $k := (1-\delta) n$ and $m := n(\max\{\alpha
  (1-\delta), 1 - \delta/\alpha\} - \gamma)$, which gives $s =
  -\Omega(n)$, and hence, exponentially small error.  The case $c > 1$
  is similar, but involves technicalities for dealing with lack of
  graphs of arbitrary size in the family. We will elaborate on this in
  Appendix~\ref{app:wiretapDetails}.
\end{Proof}

Using explicit constructions of Ramanujan graphs that achieve \[\lambda
\leq 2 \sqrt{d-1}/d\] when $d-1$ is a prime power
\cites{ref:ramanujan1,ref:ramanujan2,ref:pizer}, one can obtain
$\alpha \geq 1 - 2/\log d$, which can be made arbitrarily close to one
(hence, making the protocol arbitrarily close to the optimal bound) by
choosing a suitable alphabet size that does {not} depend on $n$. Namely, we have
the following result:

\begin{coro} \label{coro:WTWalkRamanujan} Let $\delta \in [0, 1)$ and $\gamma >
  0$ be arbitrary constants. Then, there is a positive integer $d$ only depending
  on $\gamma$ such that the following holds:
  For every large enough $n$, there is a $(\delta n,
  2^{-\Omega(n)}, 0)_d$-resilient wiretap protocol with block length
  $n$ and rate at least $1-\delta - \gamma$. \qed
\end{coro}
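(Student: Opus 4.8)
The plan is to deduce this statement directly from Corollary~\ref{coro:WTWalk} by instantiating it with an explicit family of Ramanujan graphs and then optimizing the alphabet size $d$ as a function of $\gamma$. First I would recall exactly what Corollary~\ref{coro:WTWalk} provides: given a constructible, $c$-dense family of $d$-regular expander graphs whose spectral gap is at least $1-\lambda$ (for constants $\lambda<1$ and $c\geq 1$), it yields, for every sufficiently large block length $n$, a $(\delta n, 2^{-\Omega(n)}, 0)_d$-resilient wiretap protocol of rate $\max\{\alpha(1-\delta),\, 1-\delta/\alpha\}-\gamma'$, where $\alpha := -\log_d \lambda^2$ and $\gamma'>0$ is an arbitrary constant slack. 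So it suffices to produce a family with $\alpha$ arbitrarily close to $1$ and then apply the corollary with $\gamma'$ a constant fraction of $\gamma$.

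Next I would invoke the known explicit constructions of Ramanujan graphs \cites{ref:ramanujan1,ref:ramanujan2,ref:pizer}: whenever $d-1$ is a prime power there is an explicit family of $d$-regular Ramanujan graphs with second-largest eigenvalue (in absolute value) at most $2\sqrt{d-1}/d$. These graphs arise as Cayley graphs, so they carry a natural \emph{consistent} labeling that is efficiently computable, which makes the family constructible in the sense required by Corollary~\ref{coro:WTWalk}. Moreover, since these graphs exist on a number of vertices that is polynomial in a prime $p$ (linear, resp.\ cubic, in $p$) and consecutive primes differ by at most a constant factor (Bertrand's postulate), the family is $c$-dense for some absolute constant $c$. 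This verifies all the hypotheses of Corollary~\ref{coro:WTWalk}, and in particular $\lambda = 2\sqrt{d-1}/d \geq 1/\sqrt{d}$ so Theorem~\ref{thm:invWalk} applies.

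From $\lambda \leq 2\sqrt{d-1}/d$ we get $\lambda^2 \leq 4(d-1)/d^2 < 4/d$, hence
\[
\alpha = -\log_d \lambda^2 > -\log_d(4/d) = 1 - \log_d 4 = 1 - \frac{2}{\log_2 d}.
\]
Since $\delta<1$, the rate guaranteed by Corollary~\ref{coro:WTWalk} is at least $\alpha(1-\delta) - \gamma' > (1-\delta) - \tfrac{2}{\log_2 d} - \gamma'$. I would then fix $d$ to be the least integer for which $d-1$ is a prime power and $2/\log_2 d \leq \gamma/2$; this $d$ depends only on $\gamma$, not on $n$. Applying Corollary~\ref{coro:WTWalk} with this $d$ and slack $\gamma' := \gamma/2$ gives, for all large enough $n$, a $(\delta n, 2^{-\Omega(n)}, 0)_d$-resilient wiretap protocol of rate at least $(1-\delta) - \gamma/2 - \gamma/2 = 1-\delta-\gamma$, which is the claim; the error $2^{-\Omega(n)}$ and zero leakage are inherited verbatim from Corollary~\ref{coro:WTWalk}.

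The only non-routine point — and hence the main obstacle — is verifying that the chosen Ramanujan family really meets the constructibility and $c$-density requirements: one must exhibit a consistent, polynomial-time-computable edge labeling (which follows from the Cayley-graph structure, as noted for Cayley graphs earlier in this chapter) and argue that graphs of every order up to a constant factor are available (which follows from the density of primes). Everything after that is the elementary estimate of $\alpha$ and the choice of $d$ displayed above.
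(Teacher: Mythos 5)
Your proposal is correct and follows essentially the same route as the paper: the paper derives this corollary directly from Corollary~\ref{coro:WTWalk} by plugging in explicit Ramanujan graphs with $\lambda \leq 2\sqrt{d-1}/d$, noting $\alpha \geq 1-2/\log d$, and choosing $d$ (depending only on $\gamma$) large enough to absorb the loss. Your additional remarks on constructibility via the Cayley-graph labeling and $c$-density via the density of primes are exactly the hypotheses Corollary~\ref{coro:WTWalk} (and its appendix for $c>1$) already handles, so nothing further is needed.
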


\section[Invertible Affine Extractors]{Invertible Affine Extractors
  and Asymptotically Optimal Wiretap Protocols}
\label{sec:invAExt}
In this section we will construct a black box transformation for
making certain seedless extractors invertible.  The method is
described in detail for affine extractors, and leads to wiretap
protocols with asymptotically optimal rate-resilience trade-offs.
Being based on affine extractors, these protocols
are only defined for prime power alphabet sizes. On the other hand,
the random-walk based protocol discussed in Section~\ref{sec:walk}
can be potentially instantiated for an arbitrary alphabet size, though
achieving asymptotically sub-optimal parameters (and a positive
rate only for an alphabet of size $3$ or more).

Modulo some minor differences, the construction can be simply
described as follows: A seedless affine extractor is first used to
extract a small number of uniform random bits from the source, and the
resulting sequence is then used as the seed for a seeded extractor
that extracts almost the entire entropy of the source.

Of course, seeded extractors in general are not guaranteed to work if
(as in the above construction) their seed is not independent from the
source. However, as observed by Gabizon and Raz~\cite{ref:affine}, a
\emph{linear} seeded extractor can extract from an affine source if
the seed is the outcome of an affine extractor on the source. This
idea was formalized in a more general setting by
Shaltiel~\cite{ref:mileage}.

Shaltiel's result gives a general framework for transforming any
seedless extractor (for a family of sources satisfying a certain
\emph{closedness} condition) with short output length to one with an
almost optimal output length. The construction uses the imperfect
seedless extractor to extract a small number of uniform random bits
from the source, and will then use the resulting sequence as the seed
for a seeded extractor to extract more random bits from the
source. For a suitable choice of the seeded extractor, one can use
this construction to extract almost all min-entropy of the source.

The closedness condition needed for this result to work for a family
$\cC$ of sources is that, letting $E(x, s)$ denote the seeded
extractor with seed $s$, for every $\cX \in \cC$ and every fixed $s$
and $y$, the distribution $(\cX | E(\cX, s)=y)$ belongs to $\cC$. If
$E$ is a linear function for every fixed $s$, the result will be
available for affine sources (since we are imposing a linear
constraint on an affine source, it remains an affine source).  A more
precise statement of Shaltiel's main result is the following:

\begin{thm} \cite{ref:mileage} \label{thm:mileage} Let $\cC$ be a
  class of distributions on $\F_2^n$ and $F\colon \F_2^n \to \F_2^t$
  be an extractor for $\cC$ with error $\eps$. Let $E\colon \F_2^n
  \times \F_2^t \to \F_2^m$ be a function for which $\cC$ satisfies
  the closedness condition above. Then for every $\cX \in \cC$,
  $E(\cX, F(\cX)) \sim_{\eps \cdot 2^{t+3}} E(\cX, \U_t)$. \qed
\end{thm}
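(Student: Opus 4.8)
The plan is to invoke the distinguisher characterization of statistical distance (Proposition~\ref{prop:disting}) and reduce the claim to showing that no Boolean test $D\colon\F_2^m\to\zo$ separates $E(\cX,F(\cX))$ from $E(\cX,\U_t)$ by more than $\eps\,2^{t+3}$. The crucial point to keep in mind is that knowing $F(\cX)$ is merely $\eps$-close to $\U_t$ is \emph{not} sufficient here: the distribution $E(\cX,F(\cX))$ depends on the \emph{joint} law of $\cX$ and $F(\cX)$, so one has to control the correlation between the source and the seed it produces. This is exactly the role of the closedness hypothesis, and the trick is to apply the extractor guarantee of $F$ not to $\cX$ but to the sources obtained from $\cX$ by conditioning on a value of $E(\cX,s)$.

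In detail, I would fix $\cX\in\cC$ and a test $D$, let $X\sim\cX$ and $S\sim\U_t$ be independent, and decompose the acceptance probability by the value of $F(X)$: since the events $\{F(X)=s \text{ and } D(E(X,s))=1\}$, over $s\in\F_2^t$, form a partition of $\{D(E(X,F(X)))=1\}$,
\[
\Pr_X[D(E(X,F(X)))=1]=\sum_{s\in\F_2^t}\ \sum_{\substack{y\in\F_2^m\\ D(y)=1}}\Pr_X[F(X)=s \text{ and } E(X,s)=y].
\]
For each pair $(s,y)$ with $\Pr_X[E(X,s)=y]>0$, I would write $\Pr_X[F(X)=s \text{ and } E(X,s)=y]=\Pr_X[E(X,s)=y]\cdot\Pr_X[F(X)=s\mid E(X,s)=y]$, and now invoke closedness: the conditional distribution $(\cX\mid E(\cX,s)=y)$ lies in $\cC$, so its image under $F$ is $\eps$-close to $\U_t$, and applying Proposition~\ref{prop:statDist} to the singleton event $\{s\}$ gives $|\Pr_X[F(X)=s\mid E(X,s)=y]-2^{-t}|\le\eps$. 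Substituting and summing the inner sum over $y$ yields, for every $s$,
\[
\bigl|\Pr_X[F(X)=s \text{ and } D(E(X,s))=1]-2^{-t}\Pr_X[D(E(X,s))=1]\bigr|\le\eps\cdot\Pr_X[D(E(X,s))=1].
\]
Summing over all $s$, and using $\sum_s\Pr_X[D(E(X,s))=1]=2^t\Pr_{X,S}[D(E(X,S))=1]\le2^t$, I get $|\Pr_X[D(E(X,F(X)))=1]-\Pr_{X,S}[D(E(X,S))=1]|\le\eps\,2^t$. Since $E(\cX,\U_t)$ is by definition the law of $E(X,S)$, Proposition~\ref{prop:disting} turns this into $\dist(E(\cX,F(\cX)),E(\cX,\U_t))\le\eps\,2^t$, which is at most $\eps\,2^{t+3}$ as claimed.

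I do not expect a genuine obstacle here; the only thing requiring care is the \emph{order} of the conditioning. One must condition on $\{E(X,s)=y\}$ --- the event that the closedness hypothesis speaks about --- and not on $\{F(X)=s\}$, about which nothing is assumed; and then the extractor guarantee for $F$ is used with the specific seed value $s$ playing the role of the test point in the pointwise form of ``$\eps$-close to uniform''. The remaining points are routine: verifying that the $s$'s index a true partition, that probability-zero conditioning events drop out, and that the double sum over $(s,y)$ recollapses to $\Pr_{X,S}[D(E(X,S))=1]$.
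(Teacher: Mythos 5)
Your argument is correct. Note that the thesis itself states this theorem with a citation to Shaltiel and no proof, so there is nothing internal to compare against; your derivation is the standard one underlying the cited result: partition $\{D(E(X,F(X)))=1\}$ by the value of $F(X)$, condition on the event $\{E(X,s)=y\}$ (the event the closedness hypothesis is about, as you rightly emphasize), and apply the extractor guarantee to the conditioned source via the singleton-event form of Proposition~\ref{prop:statDist}. Your accounting in fact yields the sharper bound $\eps\cdot 2^{t}$, which of course implies the stated $\eps\cdot 2^{t+3}$; the looser constant in the quoted statement reflects Shaltiel's original bookkeeping rather than any necessity of the argument.
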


Recall that a seeded extractor is called \emph{linear} if it is a
linear function for every fixed choice of the seed, and that this
condition is satisfied by Trevisan's extractor~\cite{ref:Tre}.  For
our construction, we will use the following theorem implied by the
improvement of this extractor due to Raz, Reingold and Vadhan
(Theorem~\ref{thm:Tre}):

\begin{thm} \cite{ref:RRV} \label{thm:seeded} There is an explicit
  strong linear seeded $(k, \eps)$-extractor $\extr\colon \F_2^n
  \times \F_2^d \to \F_2^m$ with $d = O(\log^3 (n/\eps))$ and $m = k -
  O(d)$. \qed
\end{thm}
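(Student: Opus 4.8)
The plan is to derive Theorem~\ref{thm:seeded} from Theorem~\ref{thm:Tre} (the Raz--Reingold--Vadhan improvement of Trevisan's extractor) by an essentially bookkeeping choice of parameters, while tracking two features of the construction along the way: that it remains $\F_2$-linear for each fixed seed, and that its seed-length bound degrades only to $O(\log^3(n/\eps))$. For the linearity I would first pin down the underlying list-decodable code $\C$ in Construction~\ref{constr:Tre} to be an $\F_2$-linear code — the concatenated Reed--Solomon/Hadamard code of Section~\ref{sec:concat} works, and is exactly what the RRV analysis already uses. As the remark immediately following Theorem~\ref{thm:Tre} records, this makes $\tre(\cdot, z)$ a linear map for every fixed $z$, and nothing below will disturb that.

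The substance is the parameter choice. We may assume $\eps \leq 1$ (otherwise the claim is vacuous, since statistical distance is at most $1$), and we may assume $k$ is at least a large absolute multiple of $\log^3(n/\eps)$ (otherwise $m = k - O(d)$ is non-positive and the statement is trivial). The key observation is that the seed length returned by Theorem~\ref{thm:Tre} is bounded \emph{independently of} the output length $m$: whenever $\alpha = k/(m-1) - 1 \in (0,1/2)$ one has $1/\alpha \leq m - 1 \leq n$, hence $\log(1/\alpha) \leq \log n \leq \log(n/\eps)$, so the theorem's seed length is at most $C\log^2(n/\eps)\cdot\log(1/\alpha) \leq C\log^3(n/\eps)$ for the absolute constant $C$ hidden in its $O(\cdot)$. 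I would therefore set $D := \lceil C\log^3(n/\eps)\rceil$ first, in terms of $n$ and $\eps$ alone, and only then set $m := k - D$; the assumption on $k$ gives $\alpha = (D+1)/(k - D - 1) < 1/2$, so Theorem~\ref{thm:Tre} applies and yields an explicit strong $(k,\eps)$-extractor $\tre\colon \zo^n\times\zo^{d_0}\to\zo^m$ with $d_0 \leq D$, linear for every fixed seed.

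Finally I would normalize the seed length. Since $d_0 \leq D$, define $\extr(x, (z,z')) := \tre(x, z)$ for $z\in\zo^{d_0}$ and $z'\in\zo^{D - d_0}$. Appending independent uniform bits to the seed preserves strongness — if $(Z, \tre(X,Z))$ is $\eps$-close to $\U_{d_0+m}$ and $Z'$ is independent and uniform, then $(Z, Z', \tre(X,Z))$ is $\eps$-close to $\U_{D+m}$ — and it preserves linearity, since $x\mapsto\tre(x,z)$ is $\F_2$-linear for each $(z,z')$. This produces the required $\extr\colon\F_2^n\times\F_2^D\to\F_2^m$ with $D = O(\log^3(n/\eps))$ and $m = k - D = k - O(D)$; any strictly smaller output length is then obtained by truncating the output, which again preserves strongness and linearity.

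I expect the only genuinely delicate point to be the apparent circularity in ``$m = k - O(d)$, with $d$ the seed length returned by the very theorem one invokes to fix $m$''. The way I would break the loop is exactly the observation in the second paragraph: $\log(1/\alpha) \leq \log n$ for \emph{all} admissible $m$, so the seed-length bound $D$ depends only on $n$ and $\eps$, can be committed to up front, and only afterwards is the compatible $m = k - D$ read off. The remaining checks — that a linear choice of $\C$ really does keep the RRV extractor linear (stated in the text after Theorem~\ref{thm:Tre}), that projections and appended seed bits do not break the ``strong'' property, and that the degenerate regime $k = O(\log^3(n/\eps))$ is harmless — are routine.
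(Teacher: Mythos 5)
Your derivation is correct. Note first that the paper does not actually prove Theorem~\ref{thm:seeded}: it is imported as a black-box citation to \cite{ref:RRV}, just as Theorem~\ref{thm:Tre} is, so there is no ``paper proof'' to compare against. What you have done is show that the second cited statement follows from the first plus the linearity remark, and that reduction is sound. The one genuinely non-routine point is exactly the one you isolate: the seed length in Theorem~\ref{thm:Tre} appears to depend on $m$ through $\alpha = k/(m-1)-1$, while $m$ is supposed to be determined by the seed length. Your observation that $1/\alpha = (m-1)/(k-m+1) \leq m-1 \leq n$ for every admissible $m \leq k$, hence $\log(1/\alpha) \leq \log(n/\eps)$ uniformly, lets you commit to $D = \Theta(\log^3(n/\eps))$ before choosing $m = k - D$, and the check $\alpha = (D+1)/(k-D-1) < 1/2$ under the harmless assumption $k \gg \log^3(n/\eps)$ closes the loop. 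The remaining steps -- fixing $\C$ to be the linear Reed--Solomon/Hadamard concatenation so that $\tre(\cdot,z)$ is $\F_2$-linear for each seed, padding the seed with independent uniform bits (which leaves the statistical distance of $(Z,Z',\tre(X,Z))$ to uniform unchanged), and truncating the output -- all preserve both strongness and linearity as you claim. So the argument stands as a self-contained way to get Theorem~\ref{thm:seeded} from Theorem~\ref{thm:Tre} without re-opening the RRV construction.
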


\begin{rem}
  We note that our arguments would identically work for any other
  linear seeded extractor as well, for instance those constructed in
  \cites{ref:RMExtractor,ref:SU}.  However, the most crucial parameter
  in our application is the output length of the extractor, being
  closely related to the rate of the wiretap protocols we
  obtain. Among the constructions we are aware of, the result quoted
  in Theorem~\ref{thm:seeded} is the best in this regard. Moreover, an
  affine seeded extractor with better parameters is constructed by
  Gabizon and Raz~\cite{ref:affine}, but it requires a large alphabet
  size to work.
\end{rem}

Now, having the right tools in hand, we are ready to formally describe
our construction of invertible affine extractors with nearly optimal
output length. Broadly speaking, the construction follows the abovementioned idea
of Shaltiel, Gabizon, and Raz \cites{ref:mileage,ref:affine} on enlarging
the output length of affine extractors, with an additional
``twist'' for making the extractor invertible. For concreteness, the description is given over the
binary field $\F_2$:

\begin{thm} \label{thm:invAffine} For every constant $\delta \in (0,
  1]$ and every $\alpha \in (0,1)$, there is an explicit invertible
  affine extractor $D\colon \F_2^n \to \F_2^m$ for min-entropy $\delta
  n$ with output length $m = \delta n - O(n^\alpha)$ and error at most
  $O(2^{-n^{\alpha/3}})$.
\end{thm}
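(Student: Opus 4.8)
The construction will instantiate the Shaltiel--Gabizon--Raz ``output amplification'' template (Theorem~\ref{thm:mileage}) with Bourgain's affine extractor (Theorem~\ref{thm:Bourgain}) as the seed generator and the linear strong seeded extractor of Theorem~\ref{thm:seeded} as the amplifier, and then add a gadget that makes the whole map efficiently invertible. First I would fix parameters: pick a constant $\delta_0 \in (0,\delta)$ and let $\bou\colon \F_2^n \to \F_2^{\Omega(n)}$ be Bourgain's extractor for min-entropy rate $\delta_0$; truncating its output to the first $t := \Theta(n^\alpha)$ bits still leaves a string that is $2^{-\Omega(n)}$-close to uniform on every affine source of dimension $\ge \delta_0 n$. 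Choose the error of the seeded extractor $\extr$ of Theorem~\ref{thm:seeded} to be $\eps_0 := 2^{-\Theta(n^{\alpha/3})}$, so that its seed length is $d = O(\log^3(n/\eps_0)) = O(n^\alpha) \le t$ and it extracts $\delta n - O(n^\alpha)$ almost-uniform bits from a $(\delta n)$-source. Crucially, $\extr(\cdot, s)$ is $\F_2$-linear for each fixed $s$ (in Construction~\ref{constr:Tre} every output coordinate is the value of a codeword of the underlying list-decodable code at a seed-determined position, which is a linear functional of the message when the code is linear), so conditioning an affine source on a fixed value of $\extr(\cdot,s)$ leaves an affine source; this is exactly the closedness hypothesis of Theorem~\ref{thm:mileage} for the class of affine sources, with the gap $\delta_0 < \delta$ chosen so that conditioned sources still fall in the class Bourgain's extractor handles. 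Applying the amplification then gives that $D_0(x) := \extr(x,\bou(x))$ is $O(2^{-n^{\alpha/3}})$-close to uniform on every affine source of min-entropy $\delta n$, with output length $\delta n - O(n^\alpha)$.

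To obtain an \emph{invertible} extractor in the sense of Definition~\ref{def:inverter}, I would append the seed to the output and set $D(x) := (\bou(x),\,\extr(x,\bou(x)))$, still of output length $t + (\delta n - O(n^\alpha)) = \delta n - O(n^\alpha)$. That $D$ is still a $(\delta n, O(2^{-n^{\alpha/3}}))$-affine extractor follows by combining the amplification above with the strongness of $\extr$ and a short averaging argument over the value of the seed (in the spirit of Proposition~\ref{prop:strongDist}): for all but a negligible fraction of seed values $s$, the source conditioned on $\bou(x)=s$ retains min-entropy $\delta n - O(n^\alpha)$, so $\extr(\cdot,s)$ applied to it is near-uniform, while $\bou$ itself emits a near-uniform $t$-bit string. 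The point of including the seed is that an inverter reading a candidate output $(s,y)$ now knows which \emph{linear} map $\extr(\cdot,s)$ it must invert onto the prescribed value $y$, and inverting a known surjective $\F_2$-linear map onto a point is Gaussian elimination, the fibre being an explicitly describable coset.

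The inverter $A$, on input $(s,y)$ and internal randomness, would restrict to the affine subspace $C := \{x\in\F_2^n : \extr(x,s)=y\}$, which has dimension $(1-\delta)n - O(n^\alpha) \ge \beta n$ for a constant $\beta>0$ (here $\delta<1$ is used), and then output a point $x\in C$ with $\bou(x)=s$. The inversion requirement of Definition~\ref{def:inverter} then holds by construction, since once $\bou(x)=s$ the second coordinate of $D(x)$ is automatically $\extr(x,s)=y$. For the uniformity requirement, observe that when the target $(s,y)$ is uniform: for each fixed $s$, as $y$ ranges uniformly the coset $C$ ranges uniformly over the cosets of $\ker\extr(\cdot,s)$, so a uniform point of $C$ is uniform on $\F_2^n$; restricting further to $\bou(x)=s$ costs only the near-uniformity of $\bou$ on the dimension-$\ge\beta n$ affine source $\U_C$ (Theorem~\ref{thm:Bourgain}), whence $A(\U_m,\U_r)$ is $\gamma$-close to $\U_{\F_2^n}$ with $\gamma=O(2^{-n^{\alpha/3}})$.

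The step I expect to be the main obstacle is producing, in \emph{polynomial} time, a near-uniform sample from $\bou^{-1}(s)\cap C$: plain rejection sampling from $\U_C$ succeeds with probability only $\approx 2^{-t}=2^{-\Theta(n^\alpha)}$, and $\bou$ is a highly nonlinear function whose fibres have no evident structure. This is where the ``twist'' in the construction is genuinely required: one has to partition the coordinates of $x$ and arrange the composition so that the coordinates feeding $\bou$ can be fixed first, to some preimage of $s$, after which the constraint $\extr(x,s)=y$ becomes a consistent linear system in the remaining coordinates that Gaussian elimination can solve and sample from, all while keeping $D$ an affine extractor with the stated parameters and keeping the uniformity analysis of the previous paragraph intact. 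Once the inverter is in place, polynomial-time computability is immediate (one call each to $\bou$ and $\extr$ plus linear algebra), decodability/inversion is exact, and the uniformity bound completes the proof; feeding $D$ and its inverter into the reduction of Section~\ref{sec:InvExt} then yields asymptotically rate-optimal wiretap protocols over the binary alphabet.
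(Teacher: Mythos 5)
Your extraction analysis is essentially the right architecture (Bourgain's affine extractor supplies the seed, Shaltiel's Theorem~\ref{thm:mileage} plus the linearity of $\extr(\cdot,s)$ gives the output amplification), but the construction $D(x) := (\bou(x),\,\extr(x,\bou(x)))$ does not yield an efficient inverter, and you have in fact isolated the gap yourself without closing it. Your inverter must produce a near-uniform element of $\bou^{-1}(s)\cap C$ for an affine subspace $C$, and as you note, rejection sampling succeeds with probability $2^{-\Theta(n^\alpha)}$ while the fibres of Bourgain's extractor have no exploitable structure; even finding a single preimage of $s$ under $\bou$, let alone sampling one near-uniformly inside $C$, is not known to be polynomial-time. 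The repair you gesture at --- fixing ``the coordinates feeding $\bou$'' to a preimage of $s$ first --- does not work either: if $\bou$ reads only a designated block of coordinates, an adversarial affine source can be constant on that block and $\bou$ ceases to be an extractor for it; and if $\bou$ reads all of $x$, you are back to inverting $\bou$.

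The paper's resolution is to never require the inverter to hit a prescribed fibre of $\bou$ at all. The input is parsed as $(s,x)\in\F_2^t\times\F_2^{n-t}$ with $t=O(n^\alpha)$, and the extractor outputs $\extr\bigl(x,\; s+\bou(x)|_{[t]}\bigr)$ --- the seed is the XOR of the short input block $s$ with a truncation of $\bou(x)$, and the seed is \emph{not} part of the output. Extraction still works because conditioning on any fixed $s$ leaves an affine source on $x$ of min-entropy $\delta n - t$, so $s+\bou(x)|_{[t]}$ is a near-uniform seed and Theorem~\ref{thm:mileage} applies as before. Inversion becomes trivial: given $y$, pick a uniform seed $Z\in\F_2^t$, solve the linear system $\extr_Z(X)=y$ by Gaussian elimination and sample $X$ uniformly from its solution coset, then output $(Z+\bou(X)|_{[t]},\,X)$. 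This is a valid preimage since the two occurrences of $\bou(X)|_{[t]}$ cancel, and uniformity is exact because $(Z,X)\mapsto(Z+\bou(X)|_{[t]},X)$ is a bijection of $\F_2^t\times\F_2^{n-t}$. This XOR twist is precisely the missing idea; without it (or an equivalent device) your construction does not satisfy Definition~\ref{def:inverter} with an efficient inverter, so the proof as proposed is incomplete.
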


\begin{Proof} 
  Let $\eps := 2^{-n^{\alpha/3}}$, and $t := O(\log^3 (n/\eps)) =
  O(n^\alpha)$ be the seed length required by the extractor $\extr$ in
  Theorem~\ref{thm:seeded} for input length $n$ and error $\eps$, and
  further, let $n' := n - t$. Set up $\extr$ for input length $n'$,
  min-entropy $\delta n - t$, seed length $t$ and error $\eps$.  Also
  set up Bourgain's extractor $\bou$ for input length $n'$ and entropy
  rate $\delta'$, for an arbitrary constant $\delta' < \delta$.  Then
  the function $F$ will view the $n$-bit input sequence as a tuple
  $(s, x)$, $s \in \F_2^t$ and $x \in \F_2^{n'}$, and outputs
  $\extr(x, s + \bou(x)|_{[t]})$.  This is depicted in Figure~\ref{fig:affineExt}.

\begin{figure}
\centerline{\includegraphics[width=12cm]{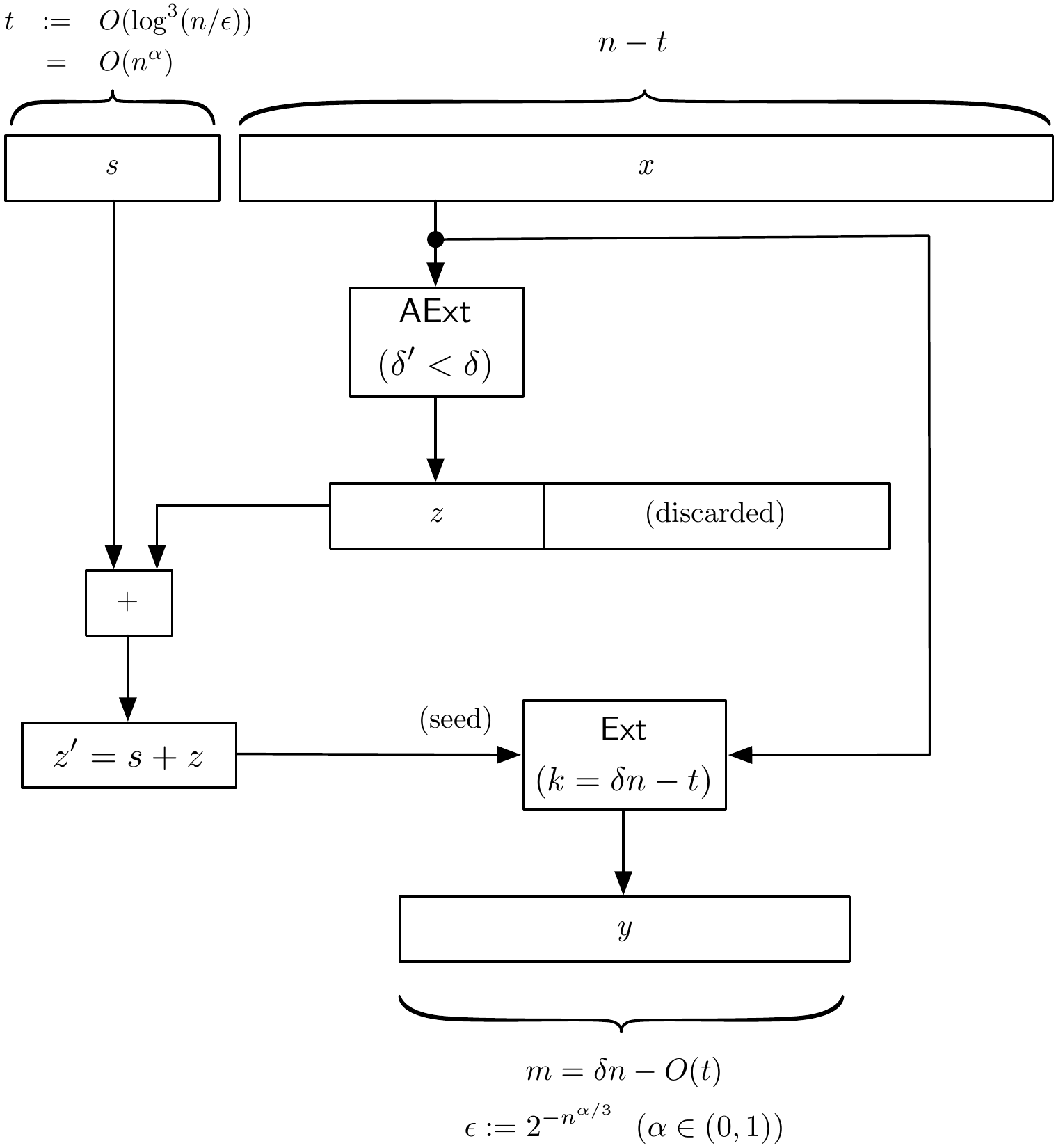}}
\caption[Construction of the invertible affine extractor]{Construction of the invertible affine extractor.}
\label{fig:affineExt}
\end{figure}

  First we show that this is an affine extractor. Suppose that $(S, X)
  \in \F_2^t \times \F_2^{n'}$ is a random variable sampled from an
  affine distribution with min-entropy $\delta n$. The variable $S$
  can have an affine dependency on $X$.  Hence, for every fixed $s \in
  \F_2^t$, the distribution of $X$ conditioned on the event $S=s$ is
  affine with min-entropy at least $\delta n - t$, which is at least
  $\delta' n'$ for large enough $n$. Hence $\bou(X)$ will be
  $2^{-\Omega(n)}$-close to uniform by Theorem~\ref{thm:Bourgain}.
  This implies that $\bou(X)|_{[t]} + S$ can extract $t$ random bits
  from the affine source with error $2^{-\Omega(n)}$.  Combining this
  with Theorem~\ref{thm:mileage}, noticing the fact that the class of
  affine extractors is closed with respect to linear seeded
  extractors, we conclude that $D$ is an affine extractor with error
  at most $\eps + 2^{-\Omega(n)} \cdot 2^{t+3} = O(2^{-n^{\alpha /
      3}})$.

  Now the inverter works as follows: Given $y \in \F_2^m$, first it
  picks $Z \in \F_2^t$ uniformly at random. The seeded extractor
  $\extr$, given the seed $Z$ is a linear function $\extr_Z\colon
  \F_2^{n'} \to \F_2^m$.  Without loss of generality, assume that this
  function is surjective\footnote{Because the seeded extractor is
    strong and linear, for most choices of the seed it is a good
    extractor (by Proposition~\ref{prop:strongDist}),
    and hence necessarily surjective (if not, one of the
    output symbols would linearly depend on the others and obviously
    the output distribution would not be close to uniform).  Hence if $\extr$ is
    not surjective for some seed $z$, one can replace it by a trivial
    surjective linear mapping without affecting its extraction
    properties.}.  Then the inverter picks $X \in \F_2^{n'}$ uniformly
  at random from the affine subspace defined by the linear constraint
  $\extr_Z(X) = y$, and outputs $(Z + \bou(X)|_{[t]}, X)$. It is easy
  to verify that the output is indeed a valid preimage of $y$. To see
  the uniformity of the inverter, note that if $y$ is chosen uniformly
  at random, the distribution of $(Z, X)$ will be uniform on
  $\F_2^n$. Hence $(Z + \bou(X)|_{[t]}, X)$, which is the output of
  the inverter, will be uniform.
\end{Proof}

In the above construction we are using an affine and a linear seeded
extractor as \emph{black boxes}, and hence, they can be replaced by
any other extractors as well (the construction will achieve an optimal
rate provided that the seeded extractor extracts almost the entire
source entropy).  In particular, over large fields one can use the
affine and seeded extractors given by Gabizon and Raz
\cite{ref:affine} that work for sub-constant entropy rates as well.

Moreover, for concreteness we described and instantiated our
construction over the binary field. Observe that Shaltiel's result,
for the special case of affine sources, holds regardless of the
alphabet size. Moreover, Trevisan's linear seeded extractor can be
naturally extended to handle arbitrary alphabets. Hence, in order to
extend our result to non-binary alphabets, it suffices to ensure that
a suitable seedless affine extractor that supports the desired
alphabet size is available. Bourgain's original
result~\cite{ref:Bourgain} is stated and proved for the binary
alphabet; however, it seems that this result can be adapted to work
for larger fields as well~\cite{ref:BourPriv}. Such an extension
(along with some improvements and simplifications) is made explicit by
Yehudayoff \cite{ref:Yeh09}.

An affine extractor is in particular, a symbol-fixing extractor.
Hence Theorem~\ref{thm:invAffine}, combined with
Lemma~\ref{lem:protocol} gives us a wiretap protocol with almost
optimal parameters:

\begin{thm} \label{coro:wiretap} Let $\delta \in [0, 1)$ and $\alpha
  \in (0, 1/3)$ be constants.  Then for a prime power $q > 1$ and
  every large enough $n$ there is a $(\delta n, O(2^{-n^{\alpha}}),
  0)_q$-resilient wiretap protocol with block length $n$ and rate $1 -
  \delta - o(1)$. \qed
\end{thm}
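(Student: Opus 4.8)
The plan is to combine the invertible affine extractor of Theorem~\ref{thm:invAffine} with the reduction in Lemma~\ref{lem:protocol}. First I would produce the $\F_q$-analogue of Theorem~\ref{thm:invAffine}. That construction is entirely black-box in its two ingredients: a seedless affine extractor for a constant entropy rate with exponentially small error (Bourgain's, Theorem~\ref{thm:Bourgain}), and a \emph{linear} seeded extractor reaching almost the entire source entropy (Theorem~\ref{thm:seeded}), glued by Shaltiel's closedness-based composition (Theorem~\ref{thm:mileage}), with the added ``twist'' making the whole map $0$-invertible. For a prime power $q$ one replaces Bourgain's extractor by Yehudayoff's affine extractor~\cite{ref:Yeh09}, which works over arbitrary prime (and, suitably adapted, prime power) fields with the same qualitative guarantees, and replaces the seeded extractor by the natural $\F_q$-extension of Trevisan's construction, which remains linear for each fixed seed; Shaltiel's framework applies verbatim to the family of affine sources over any alphabet. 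Re-running the argument of Theorem~\ref{thm:invAffine} over $\F_q$ then gives, for any constants $\beta\in(0,1]$ and $\alpha'\in(0,1)$, an explicit affine extractor $D\colon\F_q^n\to\F_q^m$ for $q$-ary min-entropy $\beta n$ with output length $m=\beta n-O(n^{\alpha'})$ and error $O(2^{-n^{\alpha'/3}})$, together with an explicit \emph{perfect} ($0$-)inverter.

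Next I would observe, exactly as noted before the statement, that an affine extractor is in particular a symbol-fixing extractor: a $q$-ary symbol-fixing source with $q$-ary min-entropy $k$ is uniform on a translate of the $k$-dimensional coordinate subspace spanned by its free symbols, hence a $k$-dimensional affine source. So $D$ is also an invertible $q$-ary $(k,\eps)$ symbol-fixing extractor with $k=\beta n$, the same $m$, and the same $\eps$. Now set $\beta:=1-\delta$ (legitimate since $\delta\in[0,1)$ forces $1-\delta\in(0,1]$) and $\alpha':=3\alpha$ (legitimate since $\alpha<1/3$ forces $\alpha'<1$), and apply Lemma~\ref{lem:protocol} with inverter error $\gamma=0$. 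This yields a $(n-k,\ \eps+0,\ 0)_q=(\delta n,\ O(2^{-n^{\alpha}}),\ 0)_q$-resilient wiretap protocol of block length $n$ and message length $m=(1-\delta)n-O(n^{3\alpha})$, so the rate is $m/n=1-\delta-O(n^{3\alpha-1})=1-\delta-o(1)$. Polynomial-time computability of the encoder (the inverter) and the decoder ($D$) comes from explicitness in the first step.

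The routine parts are the entropy/rate bookkeeping and checking that a perfect ($0$-)inverter feeds zero leakage $\gamma=0$ through Lemma~\ref{lem:protocol}. The one genuinely load-bearing point is the first step: for the binary alphabet the theorem is already an immediate corollary of Theorem~\ref{thm:invAffine} and Lemma~\ref{lem:protocol}, so the real work is verifying that, over an arbitrary prime power field $\F_q$, each of the three imported tools is available with parameters matching those used in the proof of Theorem~\ref{thm:invAffine} --- namely a seedless affine extractor for a constant entropy rate with exponentially small error, a seeded extractor that is linear for every fixed seed and extracts almost the full min-entropy, and Shaltiel's composition for the affine family. I expect this to be the main obstacle, and it rests on the cited $\F_q$-generalizations of Bourgain's and Trevisan's extractors.
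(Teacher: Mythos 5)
Your proposal is correct and follows exactly the route the paper intends: the theorem is stated as an immediate consequence of Theorem~\ref{thm:invAffine} (whose $0$-inverter makes the leakage $\gamma=0$ in Lemma~\ref{lem:protocol}), with the observation that affine extractors are in particular symbol-fixing extractors, and with the non-binary case handled by the same substitutions the paper sketches after Theorem~\ref{thm:invAffine} (Yehudayoff's affine extractor, the $\F_q$-extension of Trevisan's linear seeded extractor, and Shaltiel's alphabet-independent composition for affine sources). Your parameter bookkeeping ($\beta=1-\delta$, $\alpha'=3\alpha$, rate loss $O(n^{3\alpha-1})$) matches what the constraint $\alpha\in(0,1/3)$ is there for.
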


\section{Further Applications}
\label{sec:Apps}
In this section we will sketch some important applications of our
technique to more general wiretap problems.

\subsection{Noisy Channels and Active Intruders}
\label{sec:activeIntruder}
Suppose that Alice wants to transmit a particular sequence to Bob
through a noisy channel. She can use various techniques from coding
theory to encode her information and protect it against noise. Now
what if there is an intruder who can partially observe the transmitted
sequence and even {manipulate} it?  Modification of the sequence by
the intruder can be regarded in the same way as the channel noise;
thus one gets security against active intrusion as a ``{bonus}'' by
constructing a code that is resilient against noise and {passive}
eavesdropping.  There are two natural and {modular} approaches to
construct such a code.

A possible attempt would be to first encode the message using a good
error-correcting code and then applying a wiretap encoder to protect
the encoded sequence against the wiretapper. However, this will not
necessarily keep the information protected against the channel noise,
as the combination of the wiretap encoder and decoder does not have to
be resistant to noise.

Another attempt is to first use a wiretap encoder and then apply an
error-correcting code on the resulting sequence. Here it is not
necessarily the case that the information will be kept secure against
intrusion anymore, as the wiretapper now gets to observe the bits from
the channel-encoded sequence that may reveal information about the
original sequence. However, the wiretap protocol given in
Theorem~\ref{coro:wiretap} is constructed from an invertible {affine}
extractor, and guarantees resiliency even if the intruder is allowed
to observe arbitrary {linear combinations} of the transmitted
sequence (in this case, the
distribution of the encoded sequence subject to the intruder's observation
becomes an affine source and thus, the arguments of the proof of Lemma~\ref{lem:protocol}
remain valid). In particular, Theorem~\ref{coro:wiretap} holds even
if the intruder's observation is allowed to be obtained after applying any
arbitrary linear mapping on the output of the wiretap encoder.
Hence, we can use the wiretap scheme as an outer code and
still ensure privacy against an active intruder and reliability in
presence of a noisy channel, provided that the
error-correcting code being used as the inner code is linear.  This immediately gives us the
following result:

\begin{thm}
  Suppose that there is a $q$-ary linear error-correcting code with
  rate $r$ that is able to correct up to a $\tau$ fraction of errors
  (via unique or list decoding). Then for every constant $\delta \in
  [0, 1)$ and $\alpha \in (0,1/3)$ and large enough $n$, there is a
  $(\delta n, O(2^{-n^{\alpha}}), 0)_q$-resilient wiretap protocol
  with block length $n$ and rate $r - \delta - o(1)$ that can also
  correct up to a $\tau$ fraction of errors. \qed
\end{thm}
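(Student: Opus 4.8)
The plan is to obtain the protocol by a modular composition: use the asymptotically optimal wiretap protocol of Theorem~\ref{coro:wiretap} as the \emph{outer} layer and the given $q$-ary linear code as the \emph{inner} layer applied on top of the outer encoding. We may assume $r>\delta$, since otherwise the claimed rate $r-\delta-o(1)$ is non-positive and there is nothing to prove. Put $\delta_0:=\delta/r\in[0,1)$, and let the inner code have block length $n$ and message length $n_0:=rn$ (so its rate is $r$ and it corrects a $\tau$ fraction of errors). Invoke Theorem~\ref{coro:wiretap} with block length $n_0$, alphabet size $q$, relative resilience $\delta_0$, and the same constant $\alpha$; by its construction (via Theorem~\ref{thm:invAffine} and Lemma~\ref{lem:protocol}) this protocol $(E_{\mathrm{wt}},D_{\mathrm{wt}})$ has message length $m=(1-\delta_0)n_0-O(n_0^{\alpha})$, its decoder $D_{\mathrm{wt}}\colon\F_q^{n_0}\to\F_q^m$ is an affine extractor for min-entropy at least $(1-\delta_0)n_0$, and its encoder $E_{\mathrm{wt}}$ is a $0$-inverter of $D_{\mathrm{wt}}$. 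Define the composed encoder $E(x,z):=E_{\mathrm{ecc}}(E_{\mathrm{wt}}(x,z))$, where $E_{\mathrm{ecc}}\colon\F_q^{n_0}\to\F_q^n$ is the (linear) inner encoder, and the composed decoder by first running the inner decoder and then applying $D_{\mathrm{wt}}$.

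Decodability and error correction are immediate from the inversion property: if the received word agrees with $E(x,z)$ on all but at most $\tau n$ coordinates, the inner decoder recovers $W:=E_{\mathrm{wt}}(x,z)$ (or, when only list decoding is available, a short list containing $W$), and then $D_{\mathrm{wt}}(W)=x$ returns the message (or a short list of candidates containing it). For resiliency, fix $S\subseteq[n]$ with $|S|\le\delta n$. Because $E_{\mathrm{ecc}}$ is linear, the intruder's observation $Y|_S$ of $Y=E_{\mathrm{ecc}}(W)$ is a fixed linear image $L(W)$ of $W$ given by at most $\delta n=\delta_0 n_0$ linear functionals. Taking $X\sim\U_{\F_q^m}$ and a uniform seed $R$, the $0$-inverter property gives that $W=E_{\mathrm{wt}}(X,R)$ is \emph{exactly} uniform on $\F_q^{n_0}$ and (by surjectivity of $D_{\mathrm{wt}}$) that $X=D_{\mathrm{wt}}(W)$ deterministically. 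Hence, conditioned on any value $L(W)=w$, the distribution of $W$ is uniform on a coset of $\ker L$, i.e.\ an affine source of dimension at least $n_0-\delta_0 n_0=(1-\delta_0)n_0$; by the affine extractor property of $D_{\mathrm{wt}}$, the conditional distribution of $X=D_{\mathrm{wt}}(W)$ is $O(2^{-n_0^{\alpha}})$-close to $\U_{\F_q^m}$. Since this holds for \emph{every} observed value $w$, the set of bad observations is empty, so the leakage is $\gamma=0$ and the error is $O(2^{-n_0^{\alpha}})=O(2^{-n^{\alpha}})$. Finally the rate is $m/n=((1-\delta_0)n_0-O(n_0^{\alpha}))/n=(1-\delta_0)r-o(1)=r-\delta-o(1)$, and $E,D$ are polynomial-time computable since the inner code is fixed with an efficient decoder.

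I expect the only genuinely delicate point — and the one I would state and justify carefully — to be the step that replaces coordinate observations by arbitrary linear observations of $W$ without harming resiliency. This is exactly where it is essential that the outer protocol of Theorem~\ref{coro:wiretap} is built from a true \emph{affine} extractor and not merely a symbol-fixing one: an affine source conditioned on any system of linear constraints remains an affine source of controlled dimension, so the argument of Lemma~\ref{lem:protocol} carries over verbatim with ``symbol-fixing source'' replaced by ``affine source''. Two routine loose ends remain to tidy up: the cosmetic mismatch between $2^{-n_0^{\alpha}}$ and $2^{-n^{\alpha}}$, absorbed by shrinking $\alpha$ slightly since $n_0=\Theta(n)$; and the implicit assumption of a \emph{family} of inner codes of rate $r$ and relative decoding radius $\tau$, which is what makes the asymptotic statement meaningful.
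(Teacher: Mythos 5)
Your proposal is correct and follows essentially the same route as the paper: compose the invertible-affine-extractor wiretap scheme of Theorem~\ref{coro:wiretap} as the outer layer with the linear code as the inner layer, and observe that the intruder's $\delta n$ coordinate observations of the linear inner encoding amount to at most $\delta_0 n_0$ linear constraints on the wiretap codeword, so conditioning yields an affine source of dimension at least $(1-\delta_0)n_0$ and the affine-extractor argument of Lemma~\ref{lem:protocol} goes through unchanged. Your write-up is in fact more careful than the paper's (which leaves the composition implicit in the discussion preceding the theorem), including the correct handling of the rescaling $\delta_0=\delta/r$ and the $2^{-n_0^{\alpha}}$ versus $2^{-n^{\alpha}}$ adjustment.
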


\begin{figure}
\centerline{\includegraphics[width=\textwidth]{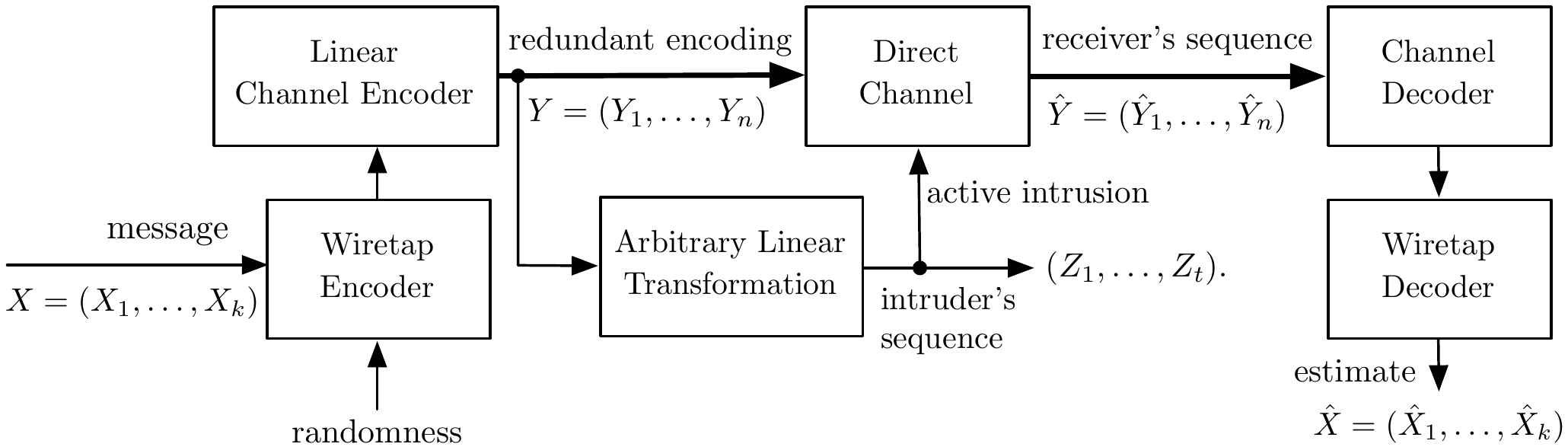}}
\caption[Wiretap scheme composed with channel coding]
{Wiretap scheme composed with channel coding. If the wiretap scheme is
constructed by an invertible affine extractor, it can guarantee secrecy
even in presence of arbitrary linear manipulation of the information.
Active intrusion can be defied using an error-correcting inner code.}
\label{fig:wiretapNoise}
\end{figure}

The setting discussed above is shown in Figure~\ref{fig:wiretapNoise}.
The same idea can be used to protect fountain codes, e.g.,
LT-~\cite{ref:LT} and Raptor Codes \cite{ref:Raptor}, against
wiretappers without affecting the error correction capabilities of the
code.

Obviously this simple composition idea can be used for any type of
channel so long as the inner code is linear, at the cost of reducing
the total rate by almost $\delta$. Hence, if the inner code achieves
the Shannon capacity of the direct channel (in the absence of the
wiretapper), the composed code will achieve the capacity of the
wiretapped channel, which is less than the original capacity by
$\delta$ \cite{ref:CK78}.

\subsection{Network Coding}
\label{sec:NetCod}
Our wiretap protocol from invertible affine extractors is also
applicable in the more general setting of transmission over
{networks}.  A communication network can be modeled as a directed
graph, in which nodes represent the network devices and information is
transmitted along the edges. One particular node is identified as the
\emph{source} and $m$ nodes are identified as \emph{receivers}. The
main problem in network coding is to have the source reliably transmit
information to the receivers at the highest possible rate, while
allowing the intermediate nodes arbitrarily process the information
along the way.

Suppose that, in the graph that defines the topology of the network,
the min-cut between the source to each receiver is $n$. It was shown
in \cite{ref:NetCod1} that the source can transmit information up to
rate $n$ (symbols per transmission) to all receivers (which is
optimal), and in \cites{ref:NetCod2,ref:NetCod3} that {linear} network
coding is in fact sufficient to achieve this rate. That is, the
transmission at rate $n$ is possible when the intermediate nodes are
allowed to forward packets that are (as symbols over a finite field)
linear combinations of the packets that they receive (See
\cite{ref:NetCodBook} for a comprehensive account of these and other
relevant results).

A basic example is shown by the \emph{butterfly} network in
Figure~\ref{fig:netcod}.  This network consists of a source on the top
and two receivers on the bottom, where the min-cut to each receiver is~$2$.
Without processing the incoming data, as in the left figure, one
of the two receivers may receive information at the optimal rate of~$2$
symbols per transmission (namely, receiver~$1$ in the figure).
However, due to the bottleneck existing in the middle (shown by the
thick edge $a\to b$), the other receiver will be forced to receive at
an inferior rate of~$1$ symbol per transmission. However, if linear
processing of the information is allowed, node $a$ may combine its
incoming information by treating packets as symbols over a finite
field and adding them up, as in the right figure. Both receivers may
then solve a full-rank system of linear equations to retrieve the
original source symbols $x_1$ and $x_2$, and thereby achieve the optimal
min-cut rate.

\begin{figure}
\centerline{\includegraphics[width=\textwidth]{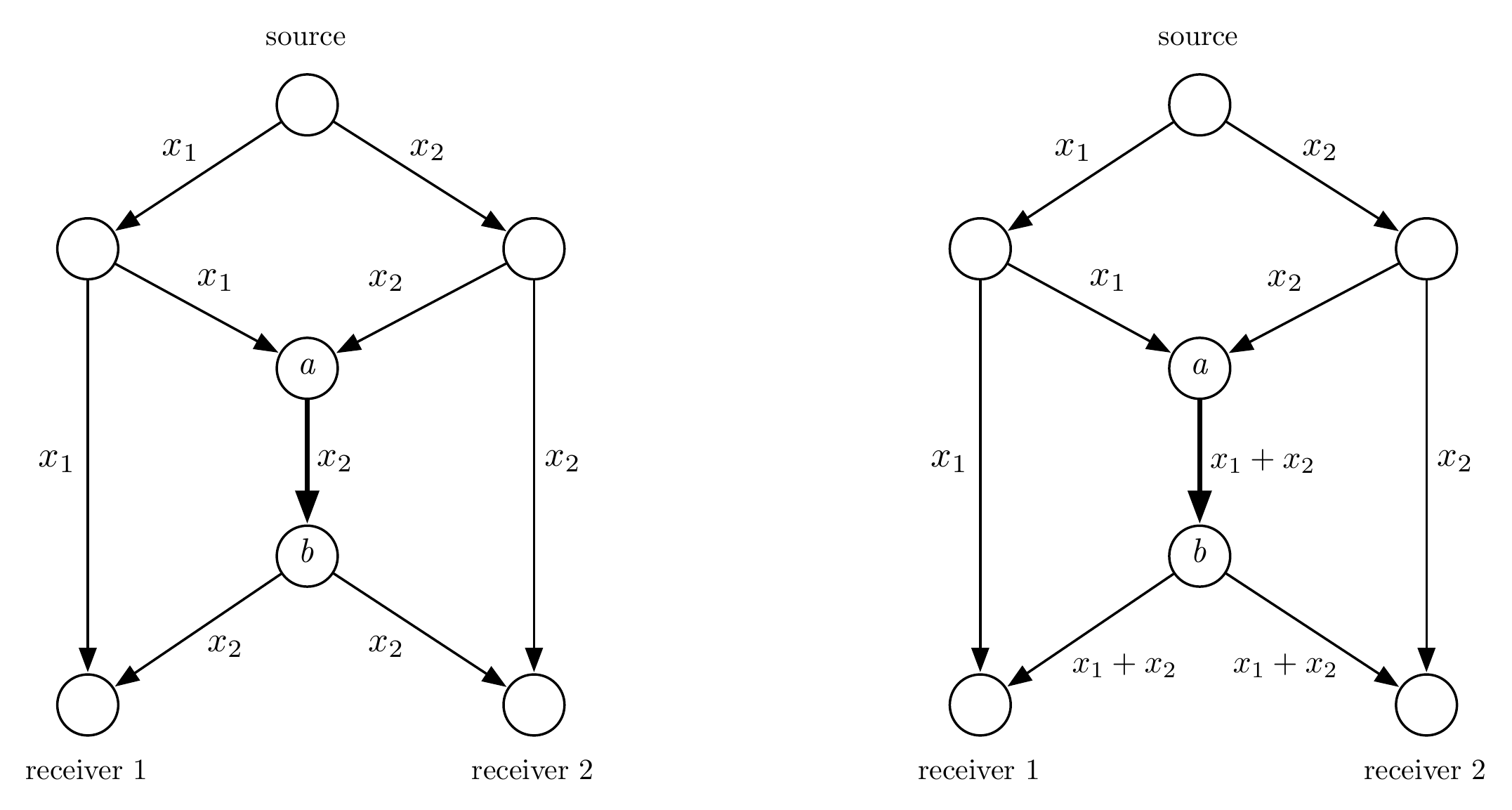}}
\caption[Network coding versus unprocessed forwarding]
{Network coding (right), versus unprocessed forwarding (left).}
\label{fig:netcod}
\end{figure}

Designing wiretap protocols for networks is an important question in
network coding, which was first posed by Cai and Yeung
\cite{ref:NetWT}. In this problem, an intruder can choose a bounded
number, say $t$, of the edges and eavesdrop all the packets going
through those edges. They designed a network code that could provide
the optimal multicast rate of $n - t$ with perfect privacy. However
this code requires an alphabet size of order $\binom{|E|}{t}$, where
$E$ is the set of edges.  Their result was later improved in
\cite{ref:FMSS04} who showed that a random linear coding scheme can
provide privacy with a much smaller alphabet size if one is willing to
achieve a slightly sub-optimal rate. Namely, they obtain rate
$n-t(1+\eps)$ with an alphabet of size roughly
$\mathrm{\Theta}(|E|^{1/\eps})$, and show that achieving the exact
optimal rate is not possible with small alphabet size.

El~Rouayheb and Soljanin \cite{ref:Emina} suggested to use the
original code of Ozarow and Wyner \cite{ref:Wyner2} as an {outer code}
at the source and showed that a careful choice of the network code can
provide optimal rate with perfect privacy. However, their code
eventually needs an alphabet of size at least $\binom{|E|-1}{t-1} +
m$. Building upon this work, Silva and Kschischang
\cite{ref:RankMetric} constructed an outer code that provides similar
results while leaving the underlying network code unchanged.  However,
their result comes at the cost of increasing the packet size by a
multiplicative factor of at least the min-cut bound, $n$ (or in
mathematical terms, the original alphabet size $q$ of the network is
enlarged to at least $q^n$).  For practical purposes, this is an
acceptable solution provided that an estimate on the min-cut size of
the network is available at the wiretap encoder.

\begin{figure}
\centerline{\includegraphics[width=8cm]{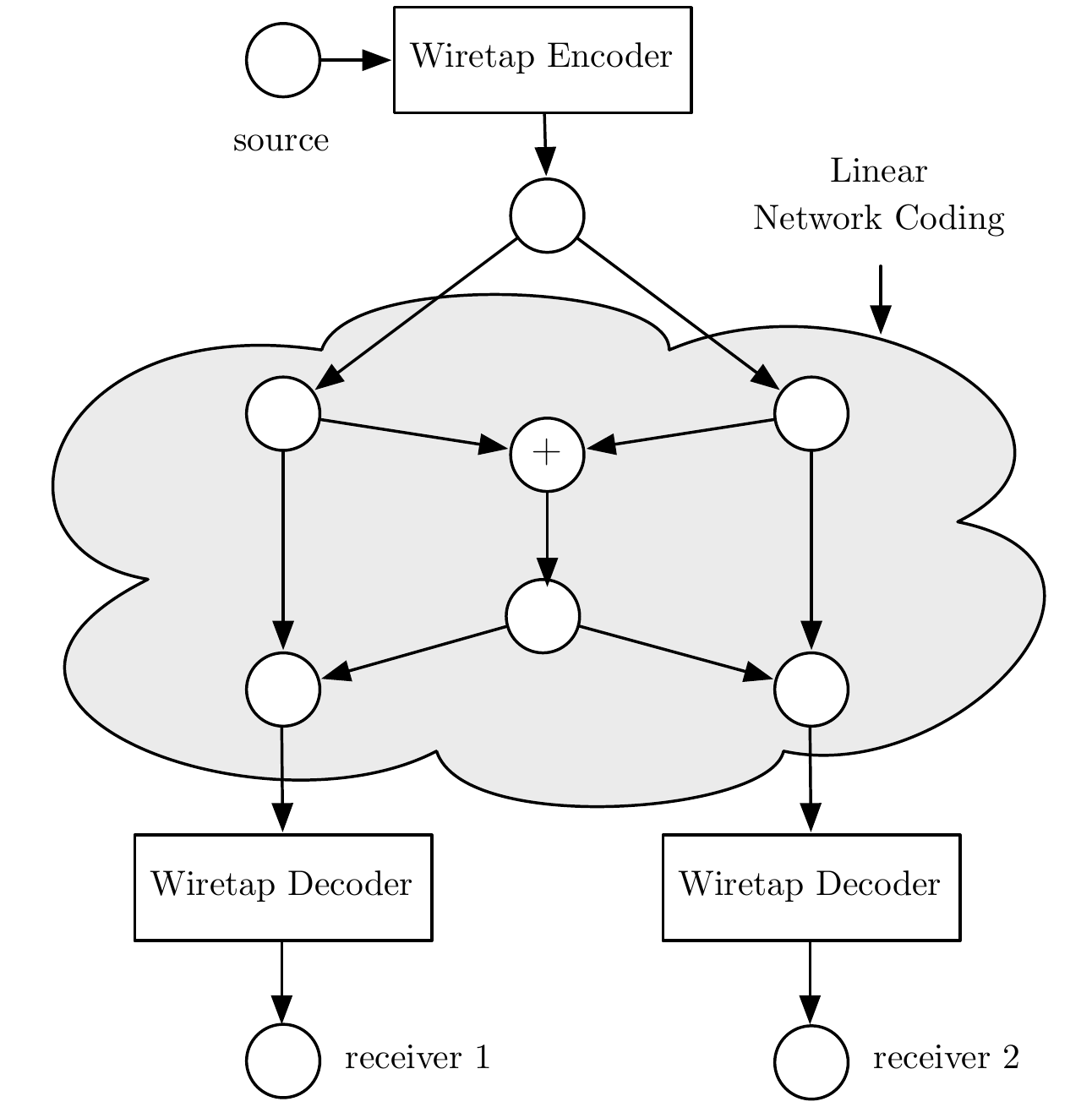}}
\caption[Linear network coding with an outer layer of wiretap
  encoding added for providing secrecy]{Linear network coding with an outer layer of wiretap
  encoding added for providing secrecy.}
\label{fig:netCodWiretap}
\end{figure}

By the discussion presented in Section~\ref{sec:activeIntruder},
the rate-optimal wiretap protocol given in Theorem~\ref{coro:wiretap}
stays resilient even in presence of any linear post-processing of the
encoded information. Thus, using the wiretap encoder given by this
result as an outer-code in the source node, one can construct an
asymptotically optimal wiretap protocol for networks that is
completely unaware of the network and eliminates all the restrictions
in the above results. This is schematically shown in Figure~\ref{fig:netCodWiretap}.
Hence, extending our notion of $(t, \eps,
\gamma)_q$-resilient wiretap protocols naturally to communication
networks, we obtain the following:

\begin{thm}
  Let $\delta \in [0, 1)$ and $\alpha \in (0, 1/3)$ be constants, and
  consider a network that uses a linear coding scheme over a finite
  field $\F_{q}$ for reliably transmitting information at rate $R$.
Suppose that, at each transmission, an intruder can arbitrarily observe
up to $\delta R$ intermediate links in the network.
Then the source
  and the receiver nodes can use an outer code of rate $1 - \delta -
  o(1)$ (obtaining a total rate of $R(1-\delta) - o(1)$) which is completely
  independent of the network, leaves the
  network code unchanged, and provides almost perfect privacy with error
  $O(2^{-{R}^{\alpha}})$ and zero leakage over a $q$-ary
  alphabet. \qed
\end{thm}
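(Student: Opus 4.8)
The plan is to obtain this theorem as a short consequence of Theorem~\ref{coro:wiretap}, using the fact isolated in Section~\ref{sec:activeIntruder} that a wiretap protocol built from an invertible \emph{affine} extractor stays resilient against observation of arbitrary $\F_q$-linear combinations of the encoded symbols, and not merely of individual coordinates. First I would instantiate Theorem~\ref{coro:wiretap} over the alphabet $\F_q$ with block length $n := R$ and resilience $t := \delta R$; internally (via Lemma~\ref{lem:protocol}) this is the invertible affine extractor of Theorem~\ref{thm:invAffine}, in its $q$-ary form, for min-entropy $(1-\delta)R$, with output length $m = (1-\delta)R - O(R^\alpha)$ and error $O(2^{-R^{\alpha}})$. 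Call the resulting encoder/decoder pair $(E,D)$, where $E$ is the inverter and $D$ the affine extractor. The overall scheme is then: the source takes its message $x \in \F_q^m$, draws the inverter's coins $Z$, forms $Y := E(x,Z) \in \F_q^R$, and injects $Y$ into the network using the given linear network code \emph{unchanged}; since the min-cut to every receiver equals $R$ and linear network coding achieves the multicast rate \cites{ref:NetCod1,ref:NetCod2,ref:NetCod3}, each receiver recovers all of $Y$ and outputs $D(Y)$. Decodability is immediate from the inversion property $D(E(x,Z)) = x$.

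The heart of the argument is privacy. In a linear network code every edge forwards a fixed $\F_q$-linear function of the $R$ symbols injected at the source, so when the intruder taps a set of at most $\delta R$ edges, its entire per-transmission view is $L(Y)$ for an $\F_q$-linear map $L\colon \F_q^R \to \F_q^{\delta R}$ determined by the (public, fixed) network code and the choice of edges, and in particular $\mathrm{rank}(L) \le \delta R$. By the uniformity property of the inverter, taking the message $X$ uniform on $\F_q^m$ (without loss of generality, cf.\ Remark~\ref{rem:randomMessage}) and $Z$ uniform makes $Y = E(X,Z)$ uniform on $\F_q^R$; hence for every value $w$ in the support of $L(Y)$, the conditional distribution of $Y$ given $L(Y) = w$ is uniform on a coset of $\ker L$, i.e.\ an affine source over $\F_q$ of $q$-ary min-entropy at least $R - \mathrm{rank}(L) \ge (1-\delta)R$. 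Since $D(Y) = X$ identically, the distribution of $X$ conditioned on $L(Y) = w$ is exactly $D$ applied to this affine source, which is $O(2^{-R^{\alpha}})$-close to $\U_{\F_q^m}$ by the affine-extractor property of $D$ (applied with min-entropy parameter $1-\delta$ in place of $\delta$). As this holds for \emph{every} admissible $w$, the set of bad observations is empty and the scheme has zero leakage — this is precisely the proof of Lemma~\ref{lem:protocol} with the linear map $L$ playing the role of the projection onto a coordinate set $S$.

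Finally I would account for the parameters: the outer code sends $m = (1-\delta)R - O(R^\alpha)$ symbols inside $R$ network symbols, so its rate is $1-\delta-o(1)$, and since the network then carries those $R$ symbols at rate $R$ per transmission the end-to-end information rate is $R(1-\delta) - o(1)$; the alphabet remains $\F_q$, the network code is untouched, and the construction never looks at the topology. I expect the only non-routine point to be the one already flagged in Section~\ref{sec:activeIntruder} — that one must use the affine-extractor-based protocol of Theorem~\ref{coro:wiretap} rather than, say, the random-walk protocol of Section~\ref{sec:walk}, because only the former tolerates arbitrary linear post-processing — combined with the elementary structural observation that linear network coding converts an edge-tapping adversary into one that observes a linear functional of the source packets. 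Two minor technicalities remain to be dispatched: the tapped edges may carry linearly dependent combinations, so $\mathrm{rank}(L)$ can drop below $\delta R$, which only raises the residual min-entropy and is therefore harmless; and the same worst-case-over-$L$ guarantee immediately covers the presence of multiple receivers and an adversary that chooses its edges adaptively.
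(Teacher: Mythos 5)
Your proposal is correct and follows essentially the same route as the paper: the theorem is obtained by using the invertible-affine-extractor wiretap encoder of Theorem~\ref{coro:wiretap} as an outer code, observing that a linear network code turns the edge-tapping adversary into one seeing a linear map $L(Y)$ of rank at most $\delta R$, so that conditioning on the observation leaves an affine source of min-entropy at least $(1-\delta)R$ to which the affine extractor (and hence the argument of Lemma~\ref{lem:protocol}) still applies. Your additional care in verifying that the inverter of Theorem~\ref{thm:invAffine} is an exact $0$-inverter (so $Y$ is truly uniform and every observation $w$ yields a genuine affine conditional distribution, giving zero leakage) is exactly the point the paper relies on implicitly in Section~\ref{sec:activeIntruder}.
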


In addition to the above result that uses the invertible affine extractor
of Theorem~\ref{thm:invAffine}, it is possible to use other rate-optimal
invertiable affine extractors. In particular,
observe that the restricted affine extractor of
Theorem~\ref{thm:resAffExtr} (and in particular, Corollary~\ref{coro:resAffExtrGabidulin})
is a linear function (over the extension
field) and is thus, obviously has an efficient $0$-inverter (since
inverting the extractor amounts to solving a system of linear
equations).  By using this extractor (instantiated with Gabidulin's
MRD codes as in Corollary~\ref{coro:resAffExtrGabidulin}),
 we may recover the result of Silva and Kschischang
\cite{ref:RankMetric} in our framework. More precisely, we have the following result:

\begin{coro}
  Let $q$ be any prime power, and
  consider a network with minimum cut of size $n$ that uses a linear coding scheme over
  $\F_{q}$ for reliably transmitting information at rate $R$.
Suppose that, at each transmission, an intruder can arbitrarily observe up to
$\delta R$ intermediate links in the network, for some $\delta \in [0, 1)$.
Then the source
  and the receiver nodes can use an outer code of rate $1 - \delta$
  over $\F_{q^n}$ (obtaining a total rate of $R(1-\delta)$)
  that provides perfect privacy over a $q^n$-ary
  alphabet. \qed
\end{coro}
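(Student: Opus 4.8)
The plan is to obtain this corollary from the same template used for Theorem~\ref{coro:wiretap}, namely ``invertible affine extractor $\Rightarrow$ wiretap protocol'' (Lemma~\ref{lem:protocol} together with the linear-post-processing observation of Section~\ref{sec:activeIntruder}), but feeding it the \emph{linear, zero-error} restricted affine extractor of Theorem~\ref{thm:resAffExtr} built from a Gabidulin code instead of Bourgain's extractor. Concretely, I would set the wiretap block length equal to the network rate $R$, put $k := (1-\delta)R$, and take $Q := q^n$ where $n$ is the min-cut. Since $n \ge R$ (a network of min-cut $n$ cannot multicast above rate $n$), a Gabidulin code with parameters $[[R, k, R-k+1]]_{q^n}$ exists, and by Theorem~\ref{thm:resAffExtr} its generator matrix $G \in \F_Q^{k\times R}$ gives a map $E\colon \F_Q^R \to \F_Q^k$, $E(x) = Gx^\top$, that is an $(R-d+1,0) = (k,0)$-extractor for $\F_q$-restricted affine sources over $\F_Q^R$. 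Because $E$ is $\F_Q$-linear (hence also $\F_q$-linear), it admits a trivial efficient $0$-inverter: on input $x$, output a uniformly random solution $Y$ of $Gy^\top = x^\top$; when $x \sim \U_{\F_Q^k}$ this yields $Y \sim \U_{\F_Q^R}$, and $E$ does map $\U_{\F_Q^R}$ (itself a full-dimensional, hence legal, affine source) onto $\U_{\F_Q^k}$.

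Next I would assemble the outer code exactly as in the proof of Lemma~\ref{lem:protocol}: the encoder is the $0$-inverter just described and the decoder is $E$ itself. Each generation, the source maps its message $x \in \F_Q^k$ to $Y = (Y_1,\dots,Y_R) \in \F_Q^R$ and sends the $Y_i$ as length-$n$ packets over $\F_q$ under the identification $\F_Q = \F_{q^n} \cong \F_q^n$; the unchanged linear network code is then applied on top. A legitimate receiver, whose min-cut is $n \ge R$, recovers $Y$ from the network code and outputs $E(Y) = x$, which gives decodability and total rate $R\cdot(k/R) = R(1-\delta)$.

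For privacy this is where the \emph{$\F_q$-restricted} refinement of Theorem~\ref{thm:resAffExtr} does the real work. An intruder watching any $t \le \delta R$ intermediate links observes $W = MY$ for a matrix $M \in \F_q^{t\times R}$ determined by the (public) linear network code, because network coding forms $\F_q$-linear combinations of the packets. Since $Y \sim \U_{\F_Q^R}$, conditioning on $W=w$ leaves $Y$ uniform on a coset of $\ker_{\F_q} M$, i.e.\ on an $\F_q$-restricted affine subspace of $\F_Q^R$ of dimension at least $R - t \ge k$; applying $E$ to this source produces, with zero error, the uniform distribution on $\F_Q^k$. Hence the posterior of $x = E(Y)$ given $W=w$ is exactly $\U_{\F_Q^k}$ for every $w$, i.e.\ perfect privacy with zero leakage and zero decoding error, recovering the Silva--Kschischang guarantee \cite{ref:RankMetric}. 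I expect the only delicate points to be bookkeeping: (i) checking that Gabidulin codes really do exist with extension degree $n$ exceeding block length $R$ (they are defined whenever $m \ge n$ in the code's own notation, which here reads $n \ge R$), so that Theorem~\ref{thm:resAffExtr}'s hypothesis ``$m \ge d$'' holds; and (ii) verifying that ``observing a link'' corresponds to an $\F_q$-linear, not merely $\F_Q$-linear, functional of $Y$, so the conditioned source genuinely lies in the $\F_q$-restricted class the theorem handles. Neither is serious, so the bulk of the argument is just chaining the cited results with the right parameter choices.
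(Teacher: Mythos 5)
Your proposal is correct and matches the paper's own argument: the paper proves this corollary by exactly the route you describe, namely instantiating the $\F_q$-restricted affine extractor of Theorem~\ref{thm:resAffExtr} with a Gabidulin MRD code (Corollary~\ref{coro:resAffExtrGabidulin}), noting that its $\F_Q$-linearity yields a trivial efficient $0$-inverter, and plugging the resulting encoder/decoder pair into the linear-post-processing wiretap framework of Section~\ref{sec:activeIntruder}. Your write-up in fact supplies more detail than the paper (which leaves the corollary as an immediate consequence of the preceding discussion), and the two ``delicate points'' you flag — existence of Gabidulin codes with $n \ge R$ and the $\F_q$-linearity of the intruder's observations — are handled exactly as the paper intends.
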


\subsection{Arbitrary Processing}
\label{subsec:arbitrary}

In this section we consider the erasure wiretap problem in its most
general setting, which is still of practical importance.  Suppose that
the information emitted by the source goes through an arbitrary
communication medium and is arbitrarily processed on the way to
provide protection against noise, to obtain better throughput, or for
other reasons. Now consider an intruder who is able to eavesdrop a
bounded amount of information at various points of the channel. One
can model this scenario in the same way as the original point-to-point
wiretap channel problem, with the difference that instead of observing
$t$ arbitrarily chosen bits, the intruder now gets to choose an
arbitrary Boolean circuit $\cC$ with $t$ output bits (which captures the
accumulation of all the intermediate processing) and observes the
output of the circuit when applied to the transmitted
sequence\footnote{ In fact this models a ``harder'' problem, as in our
  problem the circuit $\cC$ is given by the communication scheme and
  not the intruder. Nevertheless, we consider the harder problem.}.

Obviously there is no way to guarantee resiliency in this setting,
since the intruder can simply choose $\cC$ to compute $t$ output bits
of the wiretap decoder.  However, suppose that in addition there is an
auxiliary communication channel between the source and the receiver
(that we call the \emph{side channel}) that is separated from the main
channel, and hence, the information passed through the two channel do
not \emph{blend} together by the intermediate processing.

We call this scenario the \emph{general wiretap problem}, and extend
our notion of $(t, \eps, \gamma)$-resilient protocol to this problem,
with the slight modification that now the output of the encoder (and
the input of the decoder) is a pair of strings $(y_1, y_2) \in \F_2^n
\times \F_2^d$, where $y_1$ (resp., $y_2$) is sent through the main
(resp., side) channel.  Now we call $n+d$ the block length and let the
intruder choose an arbitrary pair of circuits $(\cC_1, \cC_2)$, one
for each channel, that output a total of $t$ bits, and observe
$(\cC_1(y_1), \cC_2(y_2))$.

The infor\-mation-theoretic upper bounds for the achievable rates in the
original wiretap problem obviously extend to the general wiretap
problem as well.  Below we show that for the general problem, secure
transmission is possible at asymptotically optimal rates even if the
intruder intercepts the \emph{entire communication} passing through
the side channel (as shown in Figure~\ref{fig:arbitraryWiretap}).

\begin{figure}
\centerline{\includegraphics[width=8cm]{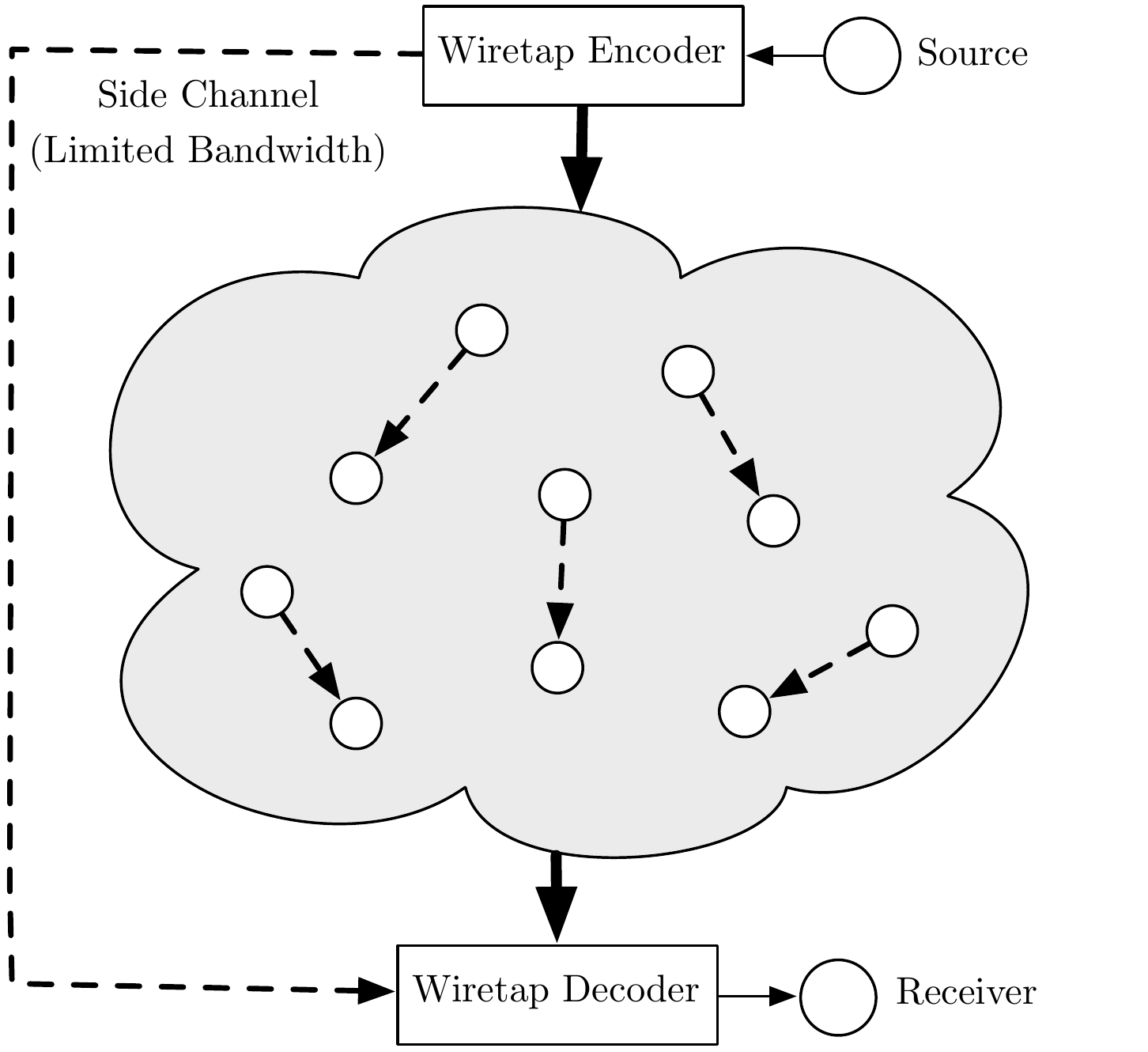}}
\caption[The wiretap channel problem in presence of arbitrary
intermediate processing]{The wiretap channel problem in presence of
  arbitrary intermediate processing. In this example, data is
  transmitted over a packet network (shown as a cloud) in which some
  intermediate links (showed by the dashed arrows) are accessible to
  an intruder.}
\label{fig:arbitraryWiretap}
\end{figure}

%

Similar as before, our idea is to use invertible extractors to
construct general wiretap protocols, but this time we use invertible
strong seeded extractors. Strong seeded extractors were used in
\cite{ref:CDH} to construct ERFs, and this is exactly what we use as
the decoder in our protocol.  As the encoder we will use the
corresponding inverter, which outputs a pair of strings, one for the
extractor's input which is sent through the main channel and another
as the seed which is sent through the side channel.  Hence we will
obtain the following result:

\begin{thm} \label{thm:generalWiretap} Let $\delta \in [0, 1)$ be a
  constant. Then for every $\alpha, \eps > 0$, there is a $(\delta n,
  \eps, 2^{-\alpha n}+\eps)$-resilient wiretap protocol for the
  general wiretap channel problem that sends $n$ bits through
  the main channel and $d := O(\log^3(n/\eps^2))$ bits through the
  side channel and achieves rate $1 - \delta - \alpha -
  O(d/(n+d))$. The protocol is secure even when the entire
  communication through the side channel is observable by the
  intruder.
\end{thm}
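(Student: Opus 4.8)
The plan is to instantiate the invertible-extractor paradigm of Section~\ref{sec:invAExt} with a strong \emph{linear seeded} extractor, routing the extractor's input through the main channel and its seed through the side channel. Concretely, I would take $\extr\colon\F_2^n\times\F_2^d\to\F_2^m$ to be the strong linear $(k,\eps_0)$-extractor of Theorem~\ref{thm:seeded} with $k:=(1-\delta-\alpha)n$, so $d=O(\log^3(n/\eps_0))$ and $m=k-O(d)=(1-\delta-\alpha)n-O(d)$, where $\eps_0$ is a sufficiently small polynomial in $\eps$ (so that $d=O(\log^3(n/\eps^2))$). As in the proof of Theorem~\ref{thm:invAffine}, I would first normalize $\extr$ so that for \emph{every} seed $z$ the map $\extr_z:=\extr(\cdot,z)$ is linear and surjective onto $\F_2^m$: for a strong extractor all but a $\le\sqrt{\eps_0}$ fraction of seeds already are surjective (a non-surjective linear map sends $\U_n$ to a distribution $\tfrac12$-far from uniform, by Proposition~\ref{prop:strongDist}), and replacing $\extr_z$ by a trivial surjective linear map on the remaining seeds perturbs the extractor error only by a lower-order term. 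The \textbf{decoder} is $D(y_1,y_2):=\extr(y_1,y_2)$. The \textbf{encoder}, given a message $x\in\F_2^m$, samples a uniform seed $Z\sim\U_d$, then a uniformly random $Y_1$ from the affine subspace $\extr_Z^{-1}(x)$ (a routine linear-algebra task, hence efficient; this is a $0$-inverter in the sense of Definition~\ref{def:inverter}), and transmits $Y_1$ on the main channel and $Z$ on the side channel. Decodability is immediate since $\extr(Y_1,Z)=x$. Crucially, because every $\extr_z$ is linear and surjective, all its fibres have size exactly $2^{n-m}$, so for $X\sim\U_m$ the pair $(Y_1,Z)$ is \emph{exactly} uniform on $\F_2^n\times\F_2^d$; in particular $Y_1\perp Z$, each uniform. (Exactness here is the reason for demanding a \emph{linear} seeded extractor.)

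For the resiliency condition of Definition~\ref{def:wiretap}, I would grant the intruder the strongest possible view: the \emph{entire} side-channel string $Z$ together with $V:=\cC_1(Y_1)$ for an arbitrary circuit $\cC_1$ with at most $t=\delta n$ output bits applied to the main-channel string. This subsumes the actual model and in particular covers full observation of the side channel. The argument has two steps. \emph{Counting:} $V$ takes $\le 2^t$ values, and those $v$ with $|\cC_1^{-1}(v)|<2^k$ together carry probability less than $2^t\cdot 2^{k-n}=2^{-\alpha n}$ (using $k=n-t-\alpha n$); hence with probability $\ge 1-2^{-\alpha n}$ over $Y_1$, the distribution $W_v:=(Y_1\mid V=v)$ is a flat source of min-entropy at least $k$. \emph{Seed-averaging:} for such a ``good'' $v$, $W_v$ is a $k$-source and, by $Y_1\perp Z$, is independent of $Z$, so by the strong-extractor property $(Z,\extr(W_v,Z))$ is $\eps_0$-close (up to the lower-order normalization loss) to $\U_{d+m}$, and Proposition~\ref{prop:strongDist} applied with $\delta=\sqrt{\eps_0}$ gives that for a $\ge 1-\sqrt{\eps_0}$ fraction of seeds $z$ the distribution $\extr(W_v,z)$ is $\sqrt{\eps_0}$-close to $\U_m$.

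Putting the steps together: with probability $\ge 1-2^{-\alpha n}-\sqrt{\eps_0}$ over the encoding, the intruder's observation $(z,v)$ has $v$ with large preimage and $z$ good for $W_v$. For each such observation the posterior distribution of the message is exactly $\extr(W_v,z)$ — here, as in the proof of Lemma~\ref{lem:protocol}, conditioning on the whole observation reduces $Y_1$ to $W_v$ precisely because $\cC_1$ acts on $Y_1$ alone and $Y_1\perp Z$ — and is therefore $\sqrt{\eps_0}$-close to uniform on $\F_2^m$. Thus $(E,D)$ is a $(\delta n,\ \sqrt{\eps_0},\ 2^{-\alpha n}+\sqrt{\eps_0})$-resilient general wiretap protocol; choosing $\eps_0$ a small enough polynomial in $\eps$ gives error $\eps$ and leakage $2^{-\alpha n}+\eps$ while keeping $d=O(\log^3(n/\eps^2))$. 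The block length is $n+d$ and the message length is $m=(1-\delta-\alpha)n-O(d)$, so the rate is $m/(n+d)=1-\delta-\alpha-O(d/(n+d))$, as claimed, and security holds even under full observation of the side channel since that view was already granted.

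The step I expect to demand the most care is keeping $Z$ \emph{exactly} independent of $Y_1$ after conditioning on the intruder's view — this is what lets the strong-extractor guarantee be invoked for $W_v$ with $Z$ a bona fide independent uniform seed. With a non-linear (or merely surjective) seeded extractor the inverter's output would be only \emph{approximately} uniform on $\F_2^n\times\F_2^d$, $W_v$ would be only approximately independent of the revealed seed, and one would have to propagate the resulting $\ell_1$ errors through the conditioning; relying on a strong \emph{linear} extractor (Trevisan/RRV, Theorem~\ref{thm:seeded}) circumvents this. A secondary point worth writing out explicitly is the ``without loss of generality surjective'' normalization together with the bookkeeping that it changes the extractor error only by a lower-order amount, since Definition~\ref{def:wiretap}(a) requires exact decodability for \emph{every} seed value.
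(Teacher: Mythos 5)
Your proposal is correct and follows essentially the same route as the paper's proof: a strong linear seeded extractor from Theorem~\ref{thm:seeded} set up for min-entropy $n(1-\delta-\alpha)$ and error $\poly(\eps)$, with the seed sent on the side channel, encoding by uniform sampling from the fibre $\extr_Z^{-1}(x)$, a preimage-counting claim showing the conditioned main-channel string is a $k$-source except with probability $2^{-\alpha n}$, and seed-averaging via Proposition~\ref{prop:strongDist}. The only difference is that you spell out the surjectivity normalization and the exact independence of $Y_1$ and $Z$ explicitly, which the paper handles implicitly by reference to Lemma~\ref{lem:protocol} and the footnote in Theorem~\ref{thm:invAffine}.
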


\begin{proof} 
  We will need the following claim in our proof, which is easy to
  verify using an averaging argument:
  \begin{claims} \label{prop:preimage} Let $f\colon \zo^n \to
    \zo^{\delta n}$ be a Boolean function. Then for every $\alpha >
    0$, and $X \sim \U_n$, the probability that $f(X)$ has fewer than
    $2^{n(1-\delta-\alpha)}$ preimages is at most $2^{-\alpha n}$.
  \end{claims}

  Now, let $\extr$ be the linear seeded extractor of
  Theorem~\ref{thm:seeded}, set up for input length $n$, seed length
  $d = O(\log^3(n/\eps^2))$, min-entropy $n(1-\delta-\alpha)$, and
  output length $m = n(1-\delta-\alpha) - O(d)$, and error $\eps^2$.
  Then the encoder chooses a seed $Z$ for the extractor uniformly at
  random and sends it through the side channel.

  For the chosen value of $Z$, the extractor is a linear function, and
  as before, given a message $x \in \zo^m$, the encoder picks a random
  vector in the affine subspace that is mapped by this linear function
  to $x$ and sends it through the public channel.

  The decoder, in turn, applies the extractor to the seed received
  from the secure channel and the transmitted string. The resiliency
  of the protocol can be shown in a similar manner as in
  Lemma~\ref{lem:protocol}. Specifically, note that by
  the above claim, with probability at least $1-2^{-\alpha n}$, the
  string transmitted through the main channel, conditioned on the
  observation of the intruder from the main channel, has a
  distribution $\cY$ with min-entropy at least
  $n(1-\delta-\alpha)$. Now in addition suppose that the seed $z$ is
  entirely revealed to the intruder. As the extractor is strong, with
  probability at least $1-\eps$, $z$ is a \emph{good} seed for $\cY$,
  meaning that the output of the extractor applied to $\cY$ and seed
  $z$ is $\eps$-close to uniform (by Proposition~\ref{prop:strongDist}), and hence the view of the intruder
  on the original message remains $\eps$-close to uniform.
\end{proof}



We observe that it is not possible to guarantee zero leakage for the
general wiretap problem above.  Specifically, suppose that $(\cC_1,
\cC_2)$ are chosen in a way that they have a single preimage for a
particular output $(w_1, w_2)$.  With nonzero probability the
observation of the intruder may turn out to be $(w_1, w_2)$, in which
case the entire message is revealed.  Nevertheless, it is possible to
guarantee negligible leakage as the above theorem does.  Moreover,
when the general protocol above is used for the original wiretap II
problem (where there is no intermediate processing involved), there is
no need for a separate side channel and the entire encoding can be
transmitted through a single channel.  Contrary to
Theorem~\ref{coro:wiretap} however, the general protocol will not
guarantee zero leakage even for this special case.


\begin{subappendices}

  \section{Some Technical Details} \label{app:wiretapDetails}

  This appendix is devoted to some technical details that are
  omitted in the main text of the chapter.

  The following proposition quantifies the Shannon entropy of a
  distribution that is close to uniform:

  \begin{prop} \label{prop:Shannon} Let $\cX$ be a probability
    distribution on a finite set $S$, $|S| > 4$, that is $\eps$-close
    to the uniform distribution on $S$, for some $\eps \leq 1/4$. Then
    $H(\cX) \geq \log_2 |S| (1 - \eps)
    $ 
  \end{prop}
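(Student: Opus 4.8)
The plan is to pass to relative entropy. Writing $n\eqdef|S|$ and expanding $H(\cX)=\sum_x\cX(x)\log_2(1/\cX(x))$, one has the identity $H(\cX)=\log_2 n-D$, where $D\eqdef\sum_{x\in S}\cX(x)\log_2\!\big(n\,\cX(x)\big)$ is the Kullback--Leibler divergence of $\cX$ from $\U_S$. Thus the claim is equivalent to $D\le\eps\log_2 n$, and I will in fact prove the slightly stronger $D\le\delta\log_2 n$ with $\delta\eqdef\dist(\cX,\U_S)\le\eps$. A warning about why one has to be careful: the logarithm must be treated honestly. Replacing it by the linear bound $\ln t\le t-1$ only yields $D=O(n\delta^2)$, which already exceeds $\delta\log_2 n$ for constant $\delta$; and generic continuity estimates for the entropy (Fannes-type) carry an additive error term that spoils the clean inequality.

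First I would throw away the below-average atoms. For every $x$ with $\cX(x)\le 1/n$ the corresponding term $\cX(x)\log_2(n\cX(x))$ is non-positive, so $D\le\sum_{x\in S_+}\cX(x)\log_2(n\cX(x))$, where $S_+\eqdef\{x\in S:\cX(x)>1/n\}$ (which is non-empty unless $\cX=\U_S$, a case in which the statement is trivial). On $S_+$ I substitute $\cX(x)=(1+s_x)/n$ with $s_x>0$. Two facts come out of this: $\sum_{x\in S_+}s_x=n\sum_{x\in S_+}(\cX(x)-1/n)=n\delta$, because the positive part of $\cX-\U_S$ has total mass exactly $\delta$; and each surviving term equals $\tfrac1n\,\phi(s_x)$ with $\phi(s)\eqdef(1+s)\log_2(1+s)$. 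Hence $D\le\tfrac1n\sum_{x\in S_+}\phi(s_x)$.

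The crux is that $\phi$ is convex on $[0,\infty)$ with $\phi(0)=0$. The vector $(s_x)_{x\in S_+}$ lives in the simplex $\{\,t\ge 0:\sum_x t_x=n\delta\,\}$, whose extreme points are the single-spike vectors $n\delta\cdot e_x$; since a convex function on a polytope is maximized at a vertex, $\sum_{x\in S_+}\phi(s_x)\le\phi(n\delta)$. Therefore $D\le\tfrac1n(1+n\delta)\log_2(1+n\delta)$, and it only remains to check the one-variable inequality
\[
\tfrac1n(1+n\delta)\log_2(1+n\delta)\;\le\;\delta\log_2 n\qquad\text{for }0<\delta\le\tfrac14,\ n\ge 5 .
\]

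To finish, fix $n$ and set $\psi(\delta)\eqdef\delta\log_2 n-\tfrac1n(1+n\delta)\log_2(1+n\delta)$. One computes $\psi(0)=0$, $\psi'(\delta)=\log_2 n-\log_2(1+n\delta)-\log_2 e$, and $\psi''(\delta)=-n/((1+n\delta)\ln 2)<0$, so $\psi$ is concave on $[0,\tfrac14]$ and hence attains its minimum over that interval at an endpoint: $\min\psi=\min\{0,\psi(\tfrac14)\}$. So everything reduces to $\psi(\tfrac14)\ge 0$, i.e.\ to $G(n)\eqdef\tfrac n4\log_2 n-(1+\tfrac n4)\log_2(1+\tfrac n4)\ge 0$. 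Now $G'(n)=\tfrac14\log_2\!\big(\tfrac{4n}{4+n}\big)>0$ for $n>4/3$, and $G(5)=\tfrac54\log_2 5-\tfrac94\log_2\tfrac94>0$ numerically, so $G(n)\ge G(5)>0$ for all real $n\ge 5$. I expect this last elementary estimate to be the only real obstacle: it is not deep, but the constants are tight --- indeed $G(4)=0$, which is exactly why the hypotheses $|S|>4$ and $\eps\le\tfrac14$ appear --- and any sloppy bound on $\log_2$ along the way breaks it.
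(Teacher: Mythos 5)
Your proof is correct, and it takes a genuinely different route from the paper's. The paper works directly with the concave function $f(x) = -x\log_2 x$ and bounds $H(\cX)=\sum_s f(\Pr_\cX(s))$ term by term from below: atoms with mass at least $1/n$ contribute at least $f(1/n)=\tfrac{1}{n}\log_2 n$ each (this is where $|S|>4$ and $\eps\le 1/4$ enter, to keep the right endpoint $1/n+\eps$ from falling past the hump of $f$), while an atom with mass $1/n-\eps_s$ loses at most $\eps_s\log_2 n$ because $f$ lies above the chord of slope $\log_2 n$ joining the origin to $(1/n,f(1/n))$; since the $\eps_s$ sum to at most $\eps$, the total deficit is at most $\eps\log_2 n$. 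In other words, the paper charges the entropy deficit to the \emph{under}-weighted atoms, whereas you pass to the KL divergence $D$, discard the under-weighted atoms (whose contribution to $D$ is non-positive), and charge everything to the \emph{over}-weighted atoms, using convexity of $(1+s)\log_2(1+s)$ and an extreme-point argument to reduce to the single-spike worst case, then closing with a one-variable calculus check. Both arguments prove the slightly stronger bound with $\eps$ replaced by the actual statistical distance, and both use the hypotheses $|S|>4$, $\eps\le 1/4$ in an essentially tight way (your observation that $G(4)=0$ is the same tightness that forces the paper's parenthetical "due to the particular range of $|S|$ and $\eps$"). Your version is longer but makes the endpoint verification fully explicit, where the paper's is terse; the paper's chord argument is shorter but leaves that boundary check to the reader.
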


\begin{Proof}
  Let $n := |S|$, and let $f(x) := - x \log_2 x$. The function $f(x)$
  is concave, passes through the origin and is strictly increasing in
  the range $[0, 1/\mathrm{e}]$.  From the definition, we have $H(\cX)
  = \sum_{s \in S} f(\Pr_\cX(s))$.  For each term $s$ in this
  summation, the probability that $\cX$ assigns to $s$ is either at
  least $1/n$, which makes the corresponding term at least $\log_2
  n/n$ (due to the particular range of $|S|$ and $\eps$), or is equal
  to $1/n - \eps_s$, for some $\eps_s > 0$, in which case the term
  corresponding to $s$ is less than $\log_2 n/n$ by at most $\eps_s
  \log_2 n$ (this follows by observing that the slope of the line
  connecting the origin to the point $(1/n, f(1/n))$ is $\log_2
  n$). The bound on the statistical distance implies that the
  differences $\eps_s$ add up to at most $\eps$.  Hence, the Shannon
  entropy of $\cX$ can be less than $\log_2 n$ by at most $\eps \log_2
  n $.
\end{Proof}

\begin{prop} \label{prop:duality} Let $(X, Y)$ be a pair of random
  variables jointly distributed on a finite set $\Omega \times
  \Gamma$. Then\footnote{Here we are abusing the notation and denote
    by $Y$ the marginal distribution of the random variable $Y$, and
    by $Y|(X=a)$ the distribution of the random variable $Y$
    conditioned on the event $X=a$.}  $ \Exp_Y[\dist(X|Y, X)] =
  \Exp_X[\dist(Y|X, Y)].  $
\end{prop}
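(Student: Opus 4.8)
For jointly distributed $(X,Y)$ on $\Omega \times \Gamma$,
\[
\Exp_Y[\dist(X|Y, X)] = \Exp_X[\dist(Y|X, Y)].
\]

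The plan is to expand both sides directly in terms of the joint probability mass function and show that each reduces to the same symmetric quantity, namely $\tfrac12 \sum_{a,b} |p(a,b) - p_X(a) p_Y(b)|$, where $p$ denotes the joint distribution and $p_X, p_Y$ the two marginals. First I would write out the left-hand side: by definition of statistical distance as half the $\ell_1$ distance, $\dist(X|Y{=}b, X) = \tfrac12 \sum_{a \in \Omega} |\Pr[X{=}a \mid Y{=}b] - p_X(a)|$. Taking the expectation over $Y \sim p_Y$ and using $\Pr[X{=}a \mid Y{=}b] = p(a,b)/p_Y(b)$ (for $b$ in the support of $Y$; terms with $p_Y(b) = 0$ contribute nothing), the factor $p_Y(b)$ multiplies in and cancels the denominator, giving
\[
\Exp_Y[\dist(X|Y,X)] = \frac12 \sum_{b} p_Y(b) \sum_a \left| \frac{p(a,b)}{p_Y(b)} - p_X(a) \right| = \frac12 \sum_{a,b} \bigl| p(a,b) - p_X(a) p_Y(b) \bigr|.
\]

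The right-hand side is handled symmetrically: $\dist(Y|X{=}a, Y) = \tfrac12 \sum_b |p(a,b)/p_X(a) - p_Y(b)|$, and multiplying through by $p_X(a)$ before summing over $a \sim p_X$ yields exactly the same double sum $\tfrac12 \sum_{a,b} |p(a,b) - p_X(a) p_Y(b)|$. Since both expressions equal this manifestly symmetric quantity, they are equal. The only minor care needed is the treatment of null marginal values — if $p_Y(b) = 0$ then $p(a,b) = 0$ for all $a$, so the conditional distribution $X|Y{=}b$ is immaterial and the corresponding block of the double sum vanishes, and likewise for $p_X(a) = 0$; so the manipulation is legitimate on the relevant supports.

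I do not anticipate a real obstacle here — the result is essentially a bookkeeping identity expressing that the "average conditional deviation from the marginal" is a symmetric function of the joint distribution. The one thing to be slightly careful about is to phrase the cancellation of $p_Y(b)$ (resp. $p_X(a)$) cleanly rather than dividing by zero; restricting each inner sum to the support of the conditioning variable and noting the excluded terms are zero makes this rigorous. If one wanted an even slicker presentation, one could observe that both sides equal $\dist\bigl((X,Y),\, X \otimes Y\bigr)$ where $X \otimes Y$ denotes the product of the marginals, but the direct computation above is the most transparent route and requires no auxiliary lemma.
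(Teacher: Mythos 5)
Your proof is correct and follows essentially the same route as the paper's: both expand each expectation via the definition of statistical distance, cancel the marginal against the conditional's denominator, and observe that both sides reduce to the symmetric quantity $\tfrac12 \sum_{a,b} |p(a,b) - p_X(a)p_Y(b)|$. Your explicit handling of the zero-marginal terms is a small point of added care that the paper's computation glosses over.
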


\begin{Proof}
  For $x \in \Omega$ and $y \in \Gamma$, we will use shorthands $p_x,
  p_y, p_{xy}$ to denote $\Pr[X=x], \Pr[Y=y], \Pr[X=x,Y=y]$,
  respectively. Then we have
  \begin{eqnarray*}
    \Exp_Y[\dist(X|Y, X)] &=& \sum_{y \in \Gamma} p_y \dist(X|(Y=y), X)
    = \frac{1}{2} \sum_{y \in \Gamma} p_y \sum_{x \in \Omega} |p_{xy}/p_y - p_x| \\
    &=& \frac{1}{2} \sum_{y \in \Gamma} \sum_{x \in \Omega} |p_{xy} - p_x p_y|
    = \frac{1}{2} \sum_{x \in \Omega} p_x \sum_{y \in \Gamma} |p_{xy}/p_x - p_y| \\
    &=& \sum_{x \in \Omega} p_x \dist(Y|(X=x), Y) = \Exp_X[\dist(Y|X, Y)].
  \end{eqnarray*}
\end{Proof}

\begin{prop} \label{prop:conditioning} Let $\Omega$ be a finite set
  that is partitioned into subsets $S_1, \ldots, S_k$ and suppose that
  $\cX$ is a distribution on $\Omega$ that is $\gamma$-close to
  uniform. Denote by $p_i$, $i = 1, \ldots k$, the probability
  assigned to the event $S_i$ by $\cX$. Then
  \[
  \sum_{i \in [k]} p_i \cdot \dist( \cX | S_i, \U_{S_i} ) \leq 2
  \gamma.
  \]
\end{prop}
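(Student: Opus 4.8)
The plan is to unfold every statistical distance into an $\ell_1$-sum over $\Omega$ and then estimate term by term using the triangle inequality together with the hypothesis that $\cX$ is close to uniform. Write $N := |\Omega|$, so that $\U_\Omega(x) = 1/N$ for every $x$, and recall that $\gamma$-closeness means $\sum_{x\in\Omega} |\cX(x) - 1/N| \le 2\gamma$.

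First I would rewrite each summand on the left-hand side. For an index $i$ with $p_i > 0$, the conditional distribution $\cX|S_i$ assigns probability $\cX(x)/p_i$ to each $x \in S_i$, while $\U_{S_i}$ assigns $1/|S_i|$, so $p_i \cdot \dist(\cX|S_i, \U_{S_i}) = \frac{1}{2} \sum_{x\in S_i} \bigl|\cX(x) - p_i/|S_i|\bigr|$. When $p_i = 0$ the left-hand side is interpreted as $0$ and $\cX(x) = 0$ for all $x \in S_i$, so the same identity holds trivially in that case as well. Hence it suffices to bound $\sum_{i\in[k]} \sum_{x\in S_i} \bigl|\cX(x) - p_i/|S_i|\bigr|$ by $4\gamma$.

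Next, for each $x \in S_i$ I would apply $\bigl|\cX(x) - p_i/|S_i|\bigr| \le |\cX(x) - 1/N| + \bigl|1/N - p_i/|S_i|\bigr|$ and sum over $x \in S_i$. The second term contributes $|S_i| \cdot \bigl|1/N - p_i/|S_i|\bigr| = \bigl||S_i|/N - p_i\bigr|$, which is precisely $|\U_\Omega(S_i) - \cX(S_i)| = \bigl|\sum_{x\in S_i}(1/N - \cX(x))\bigr| \le \sum_{x\in S_i} |\cX(x) - 1/N|$. Therefore $\sum_{x\in S_i} \bigl|\cX(x) - p_i/|S_i|\bigr| \le 2\sum_{x\in S_i} |\cX(x) - 1/N|$.

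Finally, summing this inequality over $i$ and using that $S_1,\dots,S_k$ partition $\Omega$ gives $\sum_{i\in[k]} \sum_{x\in S_i} \bigl|\cX(x) - p_i/|S_i|\bigr| \le 2\sum_{x\in\Omega} |\cX(x) - 1/N| \le 4\gamma$; dividing by $2$ yields $\sum_{i\in[k]} p_i\, \dist(\cX|S_i, \U_{S_i}) \le 2\gamma$, as claimed. I do not expect a genuine obstacle here: the only subtlety is the bookkeeping for blocks with $p_i = 0$, and the one genuinely used fact is that $\bigl||S_i|/N - p_i\bigr|$ is the aggregated discrepancy of the two measures on the block $S_i$, which is what keeps the final constant at $2\gamma$ rather than something larger.
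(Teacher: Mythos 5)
Your proof is correct and follows essentially the same route as the paper's: both expand $p_i\,\dist(\cX|S_i,\U_{S_i})$ as $\tfrac12\sum_{x\in S_i}|\cX(x)-p_i/|S_i||$, split via the triangle inequality into the pointwise discrepancy $|\cX(x)-1/N|$ plus the block-level discrepancy $\bigl||S_i|/N-p_i\bigr|$, bound the latter by the former summed over the block, and add up over the partition. The only (cosmetic) difference is that you aggregate over all blocks before dividing by two, whereas the paper bounds each block's contribution by its local $\ell_1$ mass $\gamma_i$ and then sums; your explicit treatment of the $p_i=0$ case is a nice touch the paper glosses over.
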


\begin{Proof}
  Let $N := |\Omega|$, and define for each $i$, $ \gamma_i := \sum_{s
    \in S_i} \left| \Pr_\cX(s) - \frac{1}{N} \right|,$ so that
  $\gamma_1 + \cdots + \gamma_k \leq 2 \gamma$.  Observe that by
  triangle's inequality, for every $i$ we must have $|p_i - |S_i|/N |
  \leq \gamma_i$.  To conclude the claim, it is enough to show that
  for every $i$, we have $\dist( \cX | S_i, \U_{S_i} ) \leq \gamma_i /
  p_i$.  This is shown in the following.
  {\allowdisplaybreaks \begin{eqnarray*}
    p_i \cdot \dist( \cX | S_i, \U_{S_i} ) &=& \frac{p_i}{2} \sum_{s \in S_i} \left| \frac{\Pr_\cX(s)}{p_i} - \frac{1}{|S_i|} \right| \\
    &=& \frac{1}{2} \sum_{s \in S_i} \left| \Pr_\cX(s) - \frac{p_i}{|S_i|} \right| \\
    &=& \frac{1}{2} \sum_{s \in S_i} \left| \left(\Pr_\cX(s) - \frac{1}{N} \right) + \frac{1}{|S_i|} \left( \frac{|S_i|}{N} - p_i \right) \right| \\
    &\leq& \frac{1}{2} \sum_{s \in S_i} \left| \Pr_\cX(s) - \frac{1}{N} \right| + \frac{1}{2|S_i|} \sum_{s \in S_i} \left| \frac{|S_i|}{N} - p_i  \right| \\
    &\leq& \frac{\gamma_i}{2} + \frac{1}{2|S_i|} \cdot |S_i| \gamma_i = \gamma_i.
  \end{eqnarray*}}
\end{Proof}

The following proposition shows that any function
maps close distributions to close distributions:

\begin{prop} \label{prop:closeFunction} Let $\Omega$ and $\Gamma$ be
  finite sets and $f$ be a function from $\Omega$ to $\Gamma$. Suppose
  that $\mathcal{X}$ and $\mathcal{Y}$ are probability distributions
  on $\Omega$ and $\Gamma$, respectively, and let $\mathcal{X'}$ be a
  probability distribution on $\Omega$ which is $\delta$-close to
  $\mathcal{X}$. Then if $f(\mathcal{X}) \sim_\eps \mathcal{Y}$, then
  $f(\mathcal{X'}) \sim_{\eps+\delta} \mathcal{Y}$.
\end{prop}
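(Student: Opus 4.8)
The plan is to combine two elementary facts: the triangle inequality for statistical distance, and the ``data-processing'' property that pushing two distributions forward through the same map $f$ cannot increase their statistical distance. Together these give
\[
\dist(f(\mathcal{X}'), \mathcal{Y}) \le \dist(f(\mathcal{X}'), f(\mathcal{X})) + \dist(f(\mathcal{X}), \mathcal{Y}) \le \dist(\mathcal{X}', \mathcal{X}) + \eps \le \delta + \eps,
\]
which is exactly the claim.

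First I would record the data-processing step as the substantive part. The cleanest route is to invoke Proposition~\ref{prop:statDist}: for any event $T \subseteq \Gamma$, the preimage $f^{-1}(T)$ is an event in $\Omega$, and $\Pr_{f(\mathcal{X}')}[T] = \Pr_{\mathcal{X}'}[f^{-1}(T)]$, $\Pr_{f(\mathcal{X})}[T] = \Pr_{\mathcal{X}}[f^{-1}(T)]$. Hence
\[
\bigl| \Pr_{f(\mathcal{X}')}[T] - \Pr_{f(\mathcal{X})}[T] \bigr| = \bigl| \Pr_{\mathcal{X}'}[f^{-1}(T)] - \Pr_{\mathcal{X}}[f^{-1}(T)] \bigr| \le \dist(\mathcal{X}', \mathcal{X}) \le \delta,
\]
where the first inequality is the ``only if'' direction of Proposition~\ref{prop:statDist} applied to the event $f^{-1}(T) \subseteq \Omega$. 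Since this holds for every $T \subseteq \Gamma$, the ``if'' direction of Proposition~\ref{prop:statDist} (applied on $\Gamma$) yields $\dist(f(\mathcal{X}'), f(\mathcal{X})) \le \delta$. (Alternatively one can prove the same bound directly by writing $\dist$ as a sum over $\Gamma$, grouping the mass of $\mathcal{X}$ and $\mathcal{X}'$ over each fiber $f^{-1}(y)$, and applying the triangle inequality fiberwise; I would mention this as a one-line alternative.)

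Then the triangle inequality for statistical distance (immediate from the triangle inequality for the $\ell_1$ norm, since $\dist = \tfrac12 \|\cdot\|_1$) combines $\dist(f(\mathcal{X}'), f(\mathcal{X})) \le \delta$ with the hypothesis $\dist(f(\mathcal{X}), \mathcal{Y}) \le \eps$ to give $\dist(f(\mathcal{X}'), \mathcal{Y}) \le \delta + \eps$, i.e.\ $f(\mathcal{X}') \sim_{\eps+\delta} \mathcal{Y}$.

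There is no real obstacle here; the only point requiring any care is the data-processing inequality, and even that is routine once one observes that preimages of events are events. The rest is bookkeeping with the triangle inequality, so the proof should be only a few lines.
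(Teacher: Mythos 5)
Your proof is correct and is essentially the paper's own argument: the paper also splits $|\Pr[f(X')\in T]-\Pr[Y\in T]|$ by the triangle inequality and bounds the first term via the preimage event $T'=f^{-1}(T)$, which is exactly your data-processing step. The only difference is presentational — you package the preimage argument as a standalone data-processing inequality before applying the triangle inequality at the level of distances, while the paper does both steps event-by-event.
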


\begin{proof}
Let $X$, $X'$ and $Y$ be random variables distributed
according to $\mathcal{X}$, $\mathcal{X'}$, and $\mathcal{Y}$,
respectively. We want to upperbound
\[
 \left| \Pr[ f(X') \in T ] - \Pr[ Y \in T ] \right|
\]
for every $T \subseteq \Gamma$.
By the triangle inequality, this is no more than
\[
 \left| \Pr[ f(X') \in T ] - \Pr[ f(X) \in T ] \right| + \left| \Pr[ f(X) \in T ] - \Pr[ Y \in T ] \right|.
\]
Here the summand on the right hand side is upperbounded by the distance
of $f(\mathcal{X})$ and $\mathcal{Y}$, that is assumed to be at most $\eps$.
Let $T' \eqdef \{x \in \Omega \mid f(x) \in T\}$.
Then the summand on the left can be written as
\[
 \left| \Pr[X' \in T'] - \Pr[X \in T'] \right|
\]
which is at most $\delta$ by the assumption that $\mathcal{X} \sim_\delta \mathcal{X'}$.
\end{proof}

\subsection*{Omitted Details of the Proof of
  Corollary~\ref{coro:WTWalk}}

\newcommand{\MOD}{\mathrm{mod}\ } \newcommand{\modf}{\mathsf{Mod}}
\newcommand{\imod}{\mathsf{Inv}}

Here we prove Corollary~\ref{coro:WTWalk} for the case $c > 1$.  The
construction is similar to the case $c = 1$, and in particular the
choice of $m$ and $k$ will remain the same. However, a subtle
complication is that the expander family may not have a graph with
$d^m$ vertices and we need to adapt the extractor of
Theorem~\ref{thm:invWalk} to support our parameters, still with
exponentially small error. To do so, we pick a graph $G$ in the family
with $N$ vertices, such that \[c^{\eta m} d^m \leq N \leq c^{\eta m +
  1} d^m,\] for a small absolute constant $\eta > 0$ that we are free
to choose.  The assumption on the expander family guarantees that such
a graph exists. Let $m'$ be the smallest integer such that $d^{m'}
\geq c^{\eta m} N$.  Index the vertices of $G$ by integers in $[N]$.
Note that $m'$ will be larger than $m$ by a constant multiplicative
factor that approaches $1$ as $\eta \to 0$.

For positive integers $q$ and $p \leq q$, define the function
$\modf_{q, p}\colon [q] \to [p]$ by \[\modf_{q, p}(x) := 1 + (x\ \MOD
p).\] The extractor $\kz$ interprets the first $m'$ symbols of the
input as an integer $u$, $0\le u< d^{m'}$ and performs a walk on $G$
starting from the vertex $\modf_{d^{m'}, N}(u+1)$, the walk being
defined by the remaining input symbols. If the walk reaches a vertex
$v$ at the end, the extractor outputs $\modf_{N, d^{m}}(v)-1$, encoded
as a $d$-ary string of length $m$.  A similar argument as in
Theorem~\ref{thm:invWalk} can show that with our choice of the
parameters, the extractor has an exponentially small error, where the
error exponent is now inferior to that of Theorem~\ref{thm:invWalk} by
$O(m)$, but the constant behind $O(\cdot)$ can be made arbitrarily
small by choosing a sufficiently small $\eta$.

The real difficulty lies with the inverter because $\modf$ is not a
balanced function (that is, all images do not have the same number of
preimages), thus we will not be able to obtain a perfect
inverter. Nevertheless, it is possible to construct an inverter with a
close-to-uniform output in $\ell_\infty$ norm. This turns out to be as
good as having a perfect inverter, and thanks to the following lemma,
we will still be able to use it to construct a wiretap protocol with
zero leakage:

\begin{lem}
  \label{lem:infty}
  Suppose that $f\colon [d]^n \rightarrow [d]^m $ is a
  $(k,2^{-\Omega(m)})_d$ symbol-fixing extractor and that $\cX$ is a
  distribution on $[d]^n$ such that $\|\mathcal{X}-\U_{[d]^n}\|_\infty
  \le 2^{-\Omega(m)}/d^n$. Denote by $\cX'$ the distribution $\cX$
  conditioned on any fixing of at most $n-k$ coordinates. Then
  $f(\cX') \sim_{2^{-\Omega(m)}} \U_{[d]^m}$.
\end{lem}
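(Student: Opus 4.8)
Lemma~\ref{lem:infty} asserts that if $f$ is a $(k, 2^{-\Omega(m)})_d$ symbol-fixing extractor and $\cX$ is $\ell_\infty$-close to uniform on $[d]^n$ (within $2^{-\Omega(m)}/d^n$), then conditioning $\cX$ on any fixing of at most $n-k$ coordinates and applying $f$ yields a distribution that is $2^{-\Omega(m)}$-close to $\U_{[d]^m}$.

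The plan is to show that $\cX'$, even though it is obtained by conditioning the near-uniform source $\cX$, is statistically close to a \emph{genuine} $d$-ary symbol-fixing source with at least $k$ free coordinates, and then to invoke the extractor guarantee for $f$ together with Proposition~\ref{prop:closeFunction} (which says that applying a function cannot increase statistical distance). First I would fix notation: let $S \subseteq [n]$ with $|S| \le n-k$ be the set of fixed coordinates, say frozen to a value $w \in [d]^{|S|}$, so that $\cX'$ is the distribution of $X \sim \cX$ conditioned on $X|_S = w$. Write $\eps := 2^{-\Omega(m)}$ for the $\ell_\infty$ bound, so that $|\cX(x) - d^{-n}| \le \eps\, d^{-n}$ for every $x \in [d]^n$. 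Summing this over the $d^{n-|S|}$ strings consistent with $w$ gives $\Pr[X|_S = w] = d^{-|S|}(1+\eta)$ for some $|\eta| \le \eps$; in particular the conditioning event has positive probability and $\cX'$ is well defined.

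Next I would compare $\cX'$ with the distribution $\U'$ that is uniform on $\{x \colon x|_S = w\}$, i.e.\ the symbol-fixing source with the coordinates in $S$ frozen to $w$ and the remaining $n-|S| \ge k$ coordinates independent and uniform over $[d]$. For $x$ with $x|_S = w$ write $\cX(x) = d^{-n}(1+\delta_x)$ with $|\delta_x| \le \eps$; then the key computation is
\[
\cX'(x) = \frac{\cX(x)}{\Pr[X|_S = w]} = d^{-(n-|S|)}\,\frac{1+\delta_x}{1+\eta},
\qquad
|\cX'(x) - \U'(x)| = d^{-(n-|S|)}\,\frac{|\delta_x-\eta|}{1+\eta} \le d^{-(n-|S|)}\,\frac{2\eps}{1-\eps}.
\]
Summing the $d^{n-|S|}$ such terms yields $\dist(\cX',\U') \le \eps/(1-\eps) = 2^{-\Omega(m)}$.

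Finally, $\U'$ is an $(n,k)_d$ symbol-fixing source (it has $n-|S| \ge k$ uniform independent symbols), so the defining property of the $(k,2^{-\Omega(m)})_d$ extractor $f$ gives $\dist(f(\U'), \U_{[d]^m}) \le 2^{-\Omega(m)}$. Combining this with $\dist(\cX',\U') \le 2^{-\Omega(m)}$ via Proposition~\ref{prop:closeFunction} (applied to the function $f$) and the triangle inequality gives $\dist(f(\cX'), \U_{[d]^m}) \le 2^{-\Omega(m)}$, as claimed, since a sum of two $2^{-\Omega(m)}$ terms is again $2^{-\Omega(m)}$. The only step requiring a bit of care is controlling the renormalization factor $\Pr[X|_S = w]$: one must check it stays within $(1\pm\eps)d^{-|S|}$ so that dividing by it does not amplify the pointwise error — but this is immediate from the $\ell_\infty$ hypothesis, so there is no real obstacle, only bookkeeping.
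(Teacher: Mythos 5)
Your proof is correct and follows essentially the same route as the paper's: both arguments use the $\ell_\infty$ hypothesis to show that the conditioned distribution $\cX'$ is $2^{-\Omega(m)}$-close to the uniform distribution on the strings consistent with the fixing (a genuine $(n,k)_d$ symbol-fixing source), and then conclude via the extractor guarantee and Proposition~\ref{prop:closeFunction}. Your version merely carries out the renormalization bookkeeping slightly more explicitly than the paper does.
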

\begin{Proof}
  By Proposition~\ref{prop:closeFunction}, it suffices to show that
  $\cX'$ is $2^{-\Omega(m)}$-close to an $(n, k)_d$ symbol-fixing
  source.  Let $S \subseteq [d]^m$ denote the support of $\cX'$, and
  let $\eps/d^n$ be the $\ell_\infty$ distance between $\mathcal{X}$
  and $\U_{[d]^n}$, so that by our assumption, $\eps =
  2^{-\Omega(m)}$.  By the bound on the $\ell_\infty$ distance, we
  know that $\Pr_\mathcal{X}(S)$ is between $\frac{|S|}{d^n}(1 -
  \eps)$ and $\frac{|S|}{d^n}(1 + \eps)$. Hence for any $x \in S$,
  $\Pr_{\cX'}(x)$, which is $\Pr_{\cX}(x)/\Pr_{\cX}(S)$, is between
  $\frac{1}{|S|}\cdot \frac{1 - \eps}{1 + \eps}$ and
  $\frac{1}{|S|}\cdot \frac{1 + \eps}{1 - \eps}$.  This differs from
  $1/|S|$ by at most $O(\eps)/|S|$. Hence, $\cX'$ is
  $2^{-\Omega(m)}$-close to $\U_S$.
\end{Proof}

In order to invert our new construction, we will need to construct an
inverter $\imod_{q, p}$ for the function $\modf_{q, p}$.  For that,
given $x \in[p]$ we will just sample uniformly in its preimages.  This
is where the non-balancedness of $\modf$ causes problems, since if $p$
does not divide $q$ the distribution $\imod_{q, p}(\U_{[p]})$ is not
uniform on $[q]$.
\begin{lem}
  Suppose that $q>p$.  Given a distribution $\mathcal{X}$ on $[p]$
  such that $\| \mathcal{X} - \U_{[p]} \|_\infty \le \frac{\eps}{p}$,
  we have $\| \imod_{q, p}(\mathcal{X}) - \U_{[q]} \|_\infty \le
  \frac{1}{q}\cdot \frac{p+\epsilon q}{q-p}$.
\end{lem}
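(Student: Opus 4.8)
The statement is an elementary pointwise estimate, so the plan is to compute the probability mass function of $\imod_{q,p}(\mathcal{X})$ at an arbitrary point $v\in[q]$ and bound its deviation from $1/q$ uniformly. First I would fix notation: write $q=ap+b$ with $a=\lfloor q/p\rfloor\ge 1$ and $0\le b<p$. For $y\in[p]$ the fiber $\modf_{q,p}^{-1}(y)\subseteq[q]$ has size $n_y\in\{a,a+1\}$ (with exactly $b$ of the values $y$ having $a+1$ preimages). Since $b<p$ we get the two inequalities I will need: $|q-p\,n_y|\le p$ (it equals $b$ when $n_y=a$ and $p-b$ when $n_y=a+1$), and $p\,n_y\ge pa=q-b>q-p$.

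Next, for $v\in[q]$ put $y:=\modf_{q,p}(v)$. By the definition of $\imod_{q,p}$ (sample $X\sim\mathcal{X}$, then output a uniformly random element of $\modf_{q,p}^{-1}(X)$), we have $\Pr[\imod_{q,p}(\mathcal{X})=v]=\Pr_{\mathcal{X}}(y)/n_y$. The hypothesis $\|\mathcal{X}-\U_{[p]}\|_\infty\le\eps/p$ lets me write $\Pr_{\mathcal{X}}(y)=(1+\theta)/p$ with $|\theta|\le\eps$, so the mass at $v$ equals $(1+\theta)/(p\,n_y)$. Then, by the triangle inequality and the two facts from the previous paragraph,
\[
\left| \frac{1+\theta}{p\,n_y} - \frac{1}{q} \right|
= \frac{|(1+\theta)q - p\,n_y|}{p\,n_y\,q}
\le \frac{|q - p\,n_y| + |\theta|\,q}{p\,n_y\,q}
\le \frac{p + \eps q}{p\,n_y\,q}
< \frac{p + \eps q}{(q-p)\,q}
= \frac{1}{q}\cdot\frac{p+\eps q}{q-p}.
\]
Since $v$ was arbitrary, this is exactly the claimed bound on $\|\imod_{q,p}(\mathcal{X})-\U_{[q]}\|_\infty$.

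There is no real obstacle here — the argument is a routine estimate. The only place requiring a moment's care is the bookkeeping around $n_y$ taking two possible values: one must check both cases $n_y=a$ and $n_y=a+1$ when bounding $|q-p\,n_y|\le p$, and the lower bound $p\,n_y>q-p$ is precisely where the defining property $b<p$ of the remainder is used; everything else is mechanical. (For context, this lemma is what is needed to show the inverter $\imod$ has output $\ell_\infty$-close to uniform, which together with Lemma~\ref{lem:infty} yields the zero-leakage wiretap protocol of Corollary~\ref{coro:WTWalk} in the dense-family case $c>1$.)
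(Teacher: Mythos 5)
Your proof is correct and takes essentially the same route as the paper's: compute the mass at each point as $\Pr_{\mathcal{X}}(y)/n_y$, bound the fiber size $n_y$ via $\lfloor q/p\rfloor$ and $\lceil q/p\rceil$ (your $q=ap+b$ bookkeeping is just a slightly more explicit version of the paper's bound $q/p\pm 1$), and conclude with the pointwise estimate. The only cosmetic difference is that you fold the upper and lower deviations into a single absolute-value inequality, whereas the paper states them separately; both yield the claimed $\ell_\infty$ bound.
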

\begin{Proof}
  Let $X \sim \mathcal{X}$ and $Y \sim \imod_{q,p}(\mathcal{X})$.
  Since we invert the modulo function by taking for a given output a
  random preimage uniformly, $\Pr[Y=y]$ is equal to $\Pr[X =
  \modf_{q,p}(y)]$ divided by the number of $y$ with the same value
  for $\modf_{q,p}(y)$.  The latter number is either $\lfloor q/p
  \rfloor$ or $\lceil q/p \rceil$, so
  \begin{equation*}
    \frac{1-\eps}{p \lceil q/p \rceil} \le \Pr(Y = y) \le
    \frac{1+\eps}{p \lfloor q/p \rfloor}
  \end{equation*}
  Bounding the floor and ceiling functions by $q/p \pm 1$, we obtain
  \begin{equation*}
    \frac{1-\eps}{q+p}
    \le \Pr(Y = y) \le
    \frac{1+\eps}{q-p}
  \end{equation*}
  That is
  \begin{equation*}
    \frac{-p-\epsilon q}{q(q+p)}
    \le \Pr(Y = y)-\frac{1}{q} \le
    \frac{p+\epsilon q}{q(q-p)}\ ,
  \end{equation*}
  which concludes the proof since this is true for all $y$.
\end{Proof}

Now we describe the inverter $\inv(x)$ for the extractor, again
abusing the notation.  First the inverter calls $\imod_{N, d^{m}}(x)$
to obtain $x_1 \in [N]$.  Then it performs a random walk on the graph,
starting from $x_1$, to reach a vertex $x_2$ at the end which is
inverted to obtain $x_3=\imod_{d^{m'}, N}(x_2)$ as a $d$-ary string of
length $m'$.  Finally, the inverter outputs $y=(x_3,w)$, where $w$
corresponds the \emph{inverse} of the random walk of length $n-m'$.
It is obvious that this procedure yields a valid preimage of $x$.

Using the previous lemma, if $x$ is chosen uniformly, $x_1$ will be at
$\ell_\infty$-distance \[\epsilon_1 := \frac{1}{N}\cdot
\frac{d^m}{N-d^m} = \frac{1}{N}O(c^{-\eta m}).\] For a given walk, the
distribution of $x_2$ will just be a permutation of the distribution
of $x_1$ and applying the lemma again, we see that the
$\ell_\infty$-distance of $x_3$ from the uniform distribution is
\[ \epsilon_2 := \frac{1}{d^{m'}} \cdot \frac{N+\epsilon_1
  d^{m'}}{d^{m'}-N} = \frac{1}{d^{m'}} O(c^{-\eta m}).\] This is true
for all the $d^{n-m'}$ possible walks so the $\ell_\infty$-distance of
the distribution of $y$ from uniform is bounded by $\frac{1}{d^n}
O(c^{-\eta m})$.  Applying Lemma~\ref{lem:infty} in an argument
similar to Lemma~\ref{lem:protocol} concludes the proof.

\end{subappendices}

\musicBoxWiretap


\Chapter{Group Testing}
\epigraphhead[70]{\epigraph{\textsl{``War does not determine who is right---only who is left.''}}{\textit{--- Bertrand Russell}}}
\label{chap:testing}

The history of group testing is believed to date back to the second
World War.  During the war, millions of blood samples taken from
draftees had to be subjected to a certain test, and be analyzed in
order to identify a few thousand cases of syphilis. The tests were
identical for all the samples.  Here the idea of group testing came to
a statistician called Robert Dorfman (and perhaps, a few other
researchers working together with him, among them David
Rosenblatt). He made a very intuitive observation, that, the samples
are constantly subjected to the same test, which is extremely
sensitive and remains reliable even if the sample is
diluted. Therefore, it makes sense to, instead of analyzing each
sample individually, pool every few samples in a group, and apply the
test on the mixture of the samples.  If the test outcome is negative,
we will be sure that none of the samples participating in the pool are
positive. On the other hand, if the outcome is positive, we know that
one or more of the samples are positive, and will have to proceed with
more refined, or individual, tests in order to identify the individual
positives within the group.

Since the number of positives in the entire population was suspected
to be in order of a few thousands---a small fraction of the
population---Dorfman's idea would save a great deal of time and
resources. Whether or not the idea had been eventually implemented at
the time, Dorfman went on to publish a paper on the topic
\cite{ref:Dor43}, which triggered an extensive line of research in
combinatorics known today as \emph{combinatorial group testing}.

The main challenge in group testing is to design the pools in such a
way to minimize the number of tests required in order to identify the
exact set of positives.  Larger groups would save a lot of tests if
their outcome is negative, and are rather wasteful otherwise (since in
the latter case they convey a relatively small amount of information).

Of course the applications of group testing are not limited to blood
sampling. To mention another early example, consider a production line
of electric items such as light bulbs (or resistors, capacitors, etc).
As a part of the quality assurance, defective items have to be
identified and discarded. Group testing can be used to aid this
process. Suppose that a group of light bulbs are connected in series,
and an electric current is passed through the circuit. If all the
bulbs are illuminated, we can be sure than none is defective, and
otherwise, we know that at least one is defective.

Since its emergence decades ago, group testing has found a large
number of surprising applications that are too numerous to be
extensively treated here. We particularly refer to applications in
molecular biology and DNA library screening (cf.\
\cites{ref:BKBB95,ref:KKM97,ref:Mac99,ref:ND00,ref:STR03,ref:WHL06,ref:WLHD08}
and the references therein), multiaccess communication
\cite{ref:Wol85}, data compression \cite{ref:HL00}, pattern matching
\cite{ref:CEPR07}, streaming algorithms \cite{ref:CM05}, software
testing \cite{ref:BG02}, compressed sensing \cite{ref:CM06}, and
secure key distribution \cite{ref:CDH07}, among others. Moreover,
entire books are specifically targeted to combinatorial group testing
\cites{ref:groupTesting,ref:DH06}.

In formal terms, the classical group testing\index{group
  testing!classical formulation} problem can be described as
follows. Suppose that we wish to ``learn'' a Boolean vector of length
$n$, namely $x=(x_1, \ldots, x_n) \in \zo^n$ using as few questions as
possible. Each question can ask for a single bit $x_i$, or more
generally, specify a group of coordinates $\cI \subseteq [n]$
($\cI \neq \emptyset$) and ask for the bit-wise ``or'' of the entries
at the specified coordinates; i.e., $\bigvee_{i \in \cI} x_i$. We will
refer to this type of questions as \emph{disjunctive
  queries}\index{disjunctive query}. Obviously, in order to be able to
uniquely identify $x$, there is in general no better way than asking
for individual bits $x_1, \ldots, x_n$ (and thus, $n$ questions),
since the number of Boolean vectors of length $n$ is $2^n$ and thus,
information theoretically, $n$ bits of information is required to
describe an arbitrary $n$-bit vector. Therefore, without imposing
further restrictions on the possible realizations of the unknown
vector, the problem becomes trivial.

Motivated by the blood sampling application that we just described,
natural restriction that is always assumed in group testing on the
unknown vector $x$ is that it is \emph{sparse}. Namely, for an integer
parameter $d > 0$, we will assume that the number of nonzero entries
of $x$ is at most $d$. We will refer to such a vector as
$d$-sparse\index{sprase vector}. The number of $d$-sparse Boolean
vectors is
\[
\sum_{i=0}^n \binom{n}{i} = 2^{\Theta(d \log(n/d))},
\]
and therefore, in principle, any $d$-sparse Boolean vector can be
described using only $O(d \log(n/d))$ bits of information, a number
that can be substantially smaller than $n$ if $d \ll n$. The precise
interpretation of the assumption ``$d \ll n$'' varies from a setting
to another. For a substantial part of this chapter, one can think of
$d = O(\sqrt{n})$. The important question in group testing that we
will address in this chapter is that, whether the
infor\-mation-theoretic limit $\Omega(d \log(n/d))$ on the number of
questions can be achieved using disjunctive queries as well.

\vspace{2mm} \noindent \textbf{Notation for this chapter: }
In this chapter we will be constantly working with Boolean vectors and
their support. The \emph{support}\index{support} of a vector $x=(x_1,
\ldots, x_n) \in \zo^n$, denoted by $\supp(x)$, is a subset of $[n]$
such that $i \in \supp(x)$ if and only if $x_i=1$. Thus the Hamming
weight\index{Hamming weight} of $x$, that we will denote by $\wgt(x)$
can be defined as $\wgt(x)=|\supp(x)|$, and a $d$-sparse vector has
the property that $\wgt(x) \leq d$.  \index{notation!{$\supp(x)$}}
\index{notation!{$\wgt(x)$}} \index{notation!{$\cM[i, j] \text{ for
      matrix $\cM$}$}} \index{notation!{$x(i) \text{ for vector
      $x$}$}} \index{notation!{$\cM|_S \text{ for matrix $\cM$}$}}

For a matrix $\cM$, we denote by $\cM[i,j]$ the entry of $\cM$ at the
$i$th row and $j$th column.  Moreover, we denote the $i$th entry of a
vector $x$ by $x(i)$ (assuming a one-to-one correspondence between the
coordinate positions of $x$ and natural numbers).  For an $m \times n$
Boolean matrix $\cM$ and $S \subseteq [n]$, we denote by $\cM|_S$ the
$m \times |S|$ submatrix of $\cM$ formed by restricting $\cM$ to the
columns picked by $S$.

For non-negative integers $e_0$ and $e_1$, we say that an ordered pair
of binary vectors $(x, y)$, each in $\zo^n$, are $(e_0, e_1)$-close
\index{notation!{$(e_0,e_1)$-close}} (or $x$ is $(e_0, e_1)$-close to
$y$) if $y$ can be obtained from $x$ by flipping at most $e_0$ bits
from $0$ to $1$ and at most $e_1$ bits from $1$ to $0$.  Hence, such
$x$ and $y$ will be $(e_0+e_1)$-close in Hamming-distance. Further,
$(x, y)$ are called $(e_0, e_1)$-far if they are not $(e_0,
e_1)$-close.  Note that if $x$ and $y$ are seen as characteristic
vectors of subsets $X$ and $Y$ of $[n]$, respectively, they are $(|Y
\sm X|,|X \sm Y|)$-close.  Furthermore, $(x,y)$ are $(e_0, e_1)$-close
if and only if $(y,x)$ are $(e_1, e_0)$-close.

\section{Measurement Designs and Disjunct Matrices}

Suppose that we wish to correctly identify a $d$-sparse vector $x \in
\zo^n$ using a reasonable amount of disjunctive queries (that we will
simply refer to as ``measurements''). In order to do so, consider
first the following simple scheme:

\begin{enumerate}
\item If $n \leq 2d$, trivially measure the vector by querying $x_1,
  \ldots, x_n$ individually.

\item Otherwise, partition the coordinates of $x$ into $\lfloor 2d
  \rfloor$ blocks of length either $\lfloor n/(2d) \rfloor$ or $\lceil
  n/(2d) \rceil$ each, and query the bitwise ``or'' of the positions
  within each block.

\item At least half of the measurement outcomes must be negative,
  since the vector $x$ is $d$-sparse. Recursively run the measurements
  over the union of those blocks that have returned positive.
\end{enumerate}

In the above procedure, each recursive call reduces the length of the
vector to half or less, which implies that the depth of the recursion
is $\log (n/2d)$.  Moreover, since $2d$ measurements are made at each
level, altogether we will have $O(d \log(n/d))$
measurements. Therefore, the simple scheme above is optimal in the
sense that it attains the infor\-mation-theoretic limit $\Omega(d
\log(n/d))$ on the number of measurements, up to constant factors.

The main problem with this scheme is that, the measurements are
\emph{adaptive} in nature. That is, the choice of the coordinate
positions defining each measurement may depend on the outcomes of the
previous measurements. However, the scheme can be seen as having
$O(\log(n/d))$ adaptive \emph{stages}. Namely, each level of the
recursion consists of $2d$ queries whose choices depend on the query
outcomes of the previous levels, but otherwise do not depend on the
outcomes of one another and can be asked in parallel.

Besides being of theoretical interest, for certain application such as
those in molecular biology, adaptive schemes can be infeasible or too
costly, and the ``amortized'' cost per test can be substantially
lowered when all queries are specified and fixed before any
measurements are performed.  Thus, a basic goal would be to design a
measurement scheme that is fully non-adaptive so that all measurements
can be performed in parallel. The trivial scheme, of course, is an
example of a non-adaptive scheme that achieves $n$ measurements. The
question is that, how close can one get to the infor\-mation-theoretic
limit $\Omega(\log(n/d))$ using a fully non-adaptive scheme? In order
to answer this question, we must study the \emph{combinatorial
  structure} of non-adaptive group testing schemes.

Non-adaptive measurements can be conveniently thought of in a matrix
form, known as the \emph{measurement matrix}\index{measurement
  matrix}, that is simply the incidence matrix of the set of
queries. Each query can be represented by a Boolean row vector of
length $n$ that is the characteristic vector of the set of indices
that participate in the query. In particular, for a query that takes a
subset $\cI \subseteq [n]$ of the coordinate positions, the
corresponding vector representation would the Boolean vector of length
$n$ that is supported on the positions picked by $\cI$.  Then the
measurement matrix is obtained by arranging the vector encodings on
the individual queries as its rows.  In particular, the measurement
matrix corresponding to a set of $m$ non-adaptive queries will be the
$m \times n$ Boolean matrix that has a $1$ at each position $(i,j)$ if
and only if the $j$th coordinate participates in the $i$th query.
Under this notation, the measurement outcomes corresponding to a
Boolean vector $x \in \zo^n$ and an $m \times n$ measurement matrix
$\cM$ is nothing but the Boolean vector of length $m$ that is equal to
the bit-wise ``or'' of those columns of $\cM$ picked by the support of
$x$. We will denote the vector of measurement outcomes by
$\cM[x]$\index{notation!$\cM[x]$}.
For example, for the measurement matrix
\newcommand{\Y}[1]{\mathbf{#1}}
\[
\cM :=
\begin{pmatrix}
\Y{0}&\Y{0}&1&\Y{1}&0&1&1&0 \\
\Y{1}&\Y{0}&1&\Y{0}&0&1&0&1 \\
\Y{0}&\Y{1}&0&\Y{1}&0&1&0&0 \\
\Y{0}&\Y{0}&0&\Y{0}&1&0&1&1 \\
\Y{1}&\Y{0}&1&\Y{0}&1&1&1&0
\end{pmatrix}
\]
and Boolean vector $x := (1,1,0,1,0,0,0,0)$, we have $\cM[x] = (1,1,1,0,1)$,
which is the bit-wise ``or'' of the columns shown in boldface.

Now suppose that the measurement matrix $\cM$ is chosen so that it can
be used to distinguish between any two $d$-sparse vectors. In
particular, for every set $S \subseteq [n]$
of indices such that $|S| \leq d-1$, $d$ being the sparsity parameter,
the $(d-1)$-sparse vector $x \in \zo^n$ supported on $S$ must be
distinguishable from the $d$-sparse vector $x' \in \zo^n$ supported on
$S \cup \{ i \}$, for any arbitrary index $i \in [n] \setminus S$.
Now observe that the Boolean function ``or'' is
\emph{monotone}. Namely, for a Boolean vector $(a_1, \ldots, a_n) \in
\zo^n$ that is monotonically less than or equal to another vector
$(b_1, \ldots, b_n) \in \zo^n$ (i.e., for every $j \in [n]$, $a_j \leq
b_j$), it must be that
\[ \bigvee_{j \in [n]} a_j \leq \bigvee_{j \in [n]} b_j. \] Therefore,
since we have chosen $x$ and $x'$ so that $\supp(x) \subseteq
\supp(x')$, we must have $\supp(\cM[x]) \subseteq
\supp(\cM[x'])$. Since by assumption, $\cM[x]$ and $\cM[x']$ must
differ in at least one position, at least one of the rows of $\cM$
must have an entry~$1$ at the $i$th row but all zeros at those
corresponding to the set $S$. This is the idea behind the classical
notion of \emph{disjunct} matrices, formally defined below (in a
slightly generalized form).

\begin{defn} \label{def:classicDisjunct} \index{disjunct!$d$-disjunct}
  \index{disjunct!$(d,e)$-disjunct} For integer parameters $d, e \geq
  0$ (respectively called the \emph{sparsity} parameter and
  \emph{noise tolerance}), a Boolean matrix is $(d,e)$-disjunct if for
  every choice of $d+1$ distinct columns $C_0, C_1,\ldots, C_d$ of the
  matrix we have
  \[
  |\supp(C_0) \setminus \cup_{i=1}^{d} \supp(C_i)| > e.
  \]
  A $(d, 0)$-disjunct matrix is simply called $d$-disjunct.
\end{defn}

In the discussion preceding the above definition we saw that the
notion of $(d-1)$-disjunct matrices is \emph{necessary} for
non-adaptive group testing, in that any non-adaptive measurement
scheme must correspond to a $(d-1)$-disjunct matrix. It turns out that
this notion is also sufficient, and thus precisely captures the
combinatorial structure needed for non-adaptive group testing.

\begin{thm} \label{thm:classicDisjunct} Suppose that $\cM$ is an $m
  \times n$ matrix that is $(d,e)$-disjunct. Then for every pair of
  distinct $d$-sparse vectors $x, x' \in \zo^n$ such that $\supp(x)
  \nsubseteq \supp(x')$, we have
  \begin{equation} \label{thm:classicDisjunct:eqn} |\supp(M[x])
    \setminus \supp(M[x'])| > e.
  \end{equation}
  Conversely, if $\cM$ is such that \eqref{thm:classicDisjunct:eqn}
  holds for every choice of $x, x'$ as above, then it must be
  $(d-1,e)$-disjunct.
\end{thm}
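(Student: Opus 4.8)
The plan is to derive both directions from the single identity $\supp(\cM[x]) = \bigcup_{j \in \supp(x)} \supp(C_j)$, where $C_j$ denotes the $j$-th column of $\cM$, combined with the elementary monotonicity facts that enlarging a set can only shrink a set difference and that $\supp(C_i) \subseteq \supp(\cM[x])$ whenever $i \in \supp(x)$. In other words, the ``or'' of the columns picked by $\supp(x)$ translates containment among supports of input vectors into containment among supports of outcome vectors, and everything reduces to set arithmetic.

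For the forward direction I would fix distinct $d$-sparse $x, x'$ with $\supp(x) \nsubseteq \supp(x')$ and choose an index $i \in \supp(x) \setminus \supp(x')$. Since $|\supp(x')| \leq d$, I would enlarge $\supp(x')$ to a set $T$ of size exactly $d$ with $i \notin T$ (this uses $n \geq d+1$; when $n \leq d$ the claimed inequality is vacuous on the disjunct side, so that regime needs only a quick separate check). Then $C_i$ together with $\{C_j : j \in T\}$ constitute $d+1$ distinct columns, so $(d,e)$-disjunctness gives $|\supp(C_i) \setminus \bigcup_{j \in T} \supp(C_j)| > e$. Because $\supp(\cM[x']) = \bigcup_{j \in \supp(x')} \supp(C_j) \subseteq \bigcup_{j \in T} \supp(C_j)$ and $\supp(C_i) \subseteq \supp(\cM[x])$, the chain of inclusions $\supp(C_i) \setminus \bigcup_{j \in T}\supp(C_j) \subseteq \supp(C_i) \setminus \supp(\cM[x']) \subseteq \supp(\cM[x]) \setminus \supp(\cM[x'])$ yields $|\supp(\cM[x]) \setminus \supp(\cM[x'])| > e$, i.e.\ \eqref{thm:classicDisjunct:eqn}.

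For the converse I would take an arbitrary choice of $d$ distinct columns $C_{i_0}, C_{i_1}, \ldots, C_{i_{d-1}}$ (indexed by distinct coordinates) and apply the hypothesis to the pair $x, x'$, where $x$ is the weight-$d$ (hence $d$-sparse) vector supported on $\{i_0, i_1, \ldots, i_{d-1}\}$ and $x'$ is supported on $\{i_1, \ldots, i_{d-1}\}$. These are distinct $d$-sparse vectors with $i_0 \in \supp(x) \setminus \supp(x')$, so \eqref{thm:classicDisjunct:eqn} applies. Now $\supp(\cM[x]) = \supp(C_{i_0}) \cup \bigcup_{k=1}^{d-1} \supp(C_{i_k})$ and $\supp(\cM[x']) = \bigcup_{k=1}^{d-1} \supp(C_{i_k})$, so the set difference $\supp(\cM[x]) \setminus \supp(\cM[x'])$ equals exactly $\supp(C_{i_0}) \setminus \bigcup_{k=1}^{d-1} \supp(C_{i_k})$; its size therefore exceeds $e$, which is precisely the defining inequality of $(d-1,e)$-disjunctness.

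The argument is essentially bookkeeping, so I do not expect a genuine obstacle. The only points needing care are the padding step in the forward direction (making sure one really has $d+1$ distinct columns to feed to the disjunctness hypothesis, plus the cautious check of the degenerate range $n \leq d$ where both sides are vacuous) and consistently exploiting that the bitwise ``or'' is monotone, which is what licenses the set-difference manipulations in both directions.
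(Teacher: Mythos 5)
Your proof is correct and follows essentially the same route as the paper's: the forward direction isolates an index $i\in\supp(x)\setminus\supp(x')$ and applies disjunctness of $C_i$ against the columns of $\supp(x')$, and the converse instantiates the hypothesis with $x$ supported on a chosen set of $d$ columns and $x'$ on that set minus the distinguished one. Your explicit padding of $\supp(x')$ to exactly $d$ columns is a careful touch the paper glosses over, but it is the same argument.
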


\begin{proof}
  For the forward direction, let $S := \supp(x')$ and $i \in \supp(x)
  \setminus \supp(x')$.  Then Definition~\ref{def:classicDisjunct}
  implies that there is a set $E \subseteq [m]$ of rows of $\cM$ such
  that $|E| > e$ and for every $j \in E$, we have $\cM[i,j]=1$ and the
  $j$th row of $\cM$ restricted to the columns in $S$ (i.e., the
  support of $x'$) entirely consists of zeros.  Thus, the measurement
  outcomes for $x'$ at positions in $E$ must be zeros while those
  measurements have a positive outcome for $x$ (since they include at
  least one coordinate, namely $i$, on the support of $x$). Therefore,
  \eqref{thm:classicDisjunct:eqn} holds.

  For the converse, consider any set $S \subseteq [n]$ of size at most
  $d-1$ and $i \in [n] \setminus S$. Consider $d$-sparse vectors $x,
  x' \in \zo^n$ such that $\supp(x') := S$ and $\supp(x) := S \cup
  \{i\}$. By assumption, there must be a set $E \subseteq [m]$ of size
  larger than $e$ such that, for every $j \in E$, we have $M[x](j) =
  1$ but $M[x'](j) = 0$.  This implies that on those rows of $\cM$
  that are picked by $E$, the $i$th entry must be one while those
  corresponding to $S$ must be zeros. Therefore, $\cM$ is
  $(d,e)$-disjunct.
\end{proof}

From the above theorem we know that the measurement outcomes
corresponding to distinct $d$-sparse vectors differ from one another
in more than $e$ positions provided that the measurement matrix is
$(d,e)$-disjunct. When $e > 0$, this would allow for
distinguishability of sparse vectors even in presence of \emph{noise}.
Namely, even if up to $\lfloor e/2 \rfloor$ of the measurements are
allowed to be incorrect, it would still possible to uniquely
reconstruct the vector being measured.  For this reason, we have
called the parameter $e$ the ``noise tolerance''.

\subsection{Reconstruction}

So far we have focused on \emph{combinatorial} distinguishability of
sparse vectors.  However, for applications unique distinguishability
is by itself not sufficient and it is important to have efficient
``decoding'' algorithms to reconstruct the vector being measured.

Fortunately, monotonicity of the ``or'' function substantially
simplifies the decoding problem. In particular, if two Boolean vectors
$x, x'$ such that the support of $x$ is not entirely contained in that
of $x'$ are distinguishable by a measurement matrix, adding new
elements to the support of $x$ will never make it ``less
disginguishable'' from $x'$. Moreover, observe that the proof of
Theorem~\ref{thm:classicDisjunct} never uses sparsity of the vector
$x$.  Therefore we see that, $(d,e)$-disjunct matrices are not only
able to distinguish between $d$-sparse vectors, but moreover, the only
Boolean vector (be it sparse or not) that may reproduce the
measurement outcomes resulting from a $d$-sparse vector $x \in \zo^n$
is $x$ itself. Thus, given a vector of measurement outcomes, in order
to reconstruct the sparse vector being measured it suffices to produce
\emph{any} vector that is consistent with the measurement outcomes.
This observation leads us to the following simple decoding algorithm,
that we will call the \emph{distance decoder}: \index{distance
  decoder}

\begin{enumerate}
\item Given a measurement outcome $\tilde{y} \in \zo^m$, identify the
  set $S_\ty \subseteq [n]$ of the column indices of the measurement
  matrix $\cM$ such that each $i \in [n]$ is in $S_\ty$ if and only if
  the $i$th column of $\cM$, denoted by $c_i$, satisfies
  \[
  |\supp(c_i) \setminus \supp(\tilde{y})| \leq \floor{e/2}.
  \]

\item The reconstruction outcome $\tilde{x} \in \zo^n$ is the Boolean
  vector supported on $S_\ty$.
\end{enumerate}

\begin{lem}
  Let $x \in \zo^n$ be $d$-sparse and $y := \cM[x]$, where the
  measurement matrix $\cM$ is $(d,e)$-disjunct. Suppose that a
  measurement outcome $\ty$ that has Hamming distance at most
  $\floor{e/2}$ with $y$ is given to the distance decoder.  Then the
  outcome $\tx$ of the decoder is equal to $x$.
\end{lem}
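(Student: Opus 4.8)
The plan is to reduce everything to a single set equality: I will show that the set $S_\ty \subseteq [n]$ produced by the distance decoder is exactly $\supp(x)$. Since, by definition, $\tx$ is the Boolean vector supported on $S_\ty$ and $x$ (being $d$-sparse) is the Boolean vector supported on $\supp(x)$, the equality $S_\ty = \supp(x)$ immediately yields $\tx = x$. Throughout I will work with the ``true'' outcome $y = \cM[x]$, whose support is $\supp(y) = \bigcup_{j \in \supp(x)} \supp(c_j)$ by the definition of $\cM[\cdot]$ (here $c_j$ denotes the $j$th column of $\cM$), and I will let $D := \{\, j \in [m] : y(j) \neq \ty(j)\,\}$, so that $|D| \leq \floor{e/2}$ by hypothesis. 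The two inclusions $\supp(x) \subseteq S_\ty$ and $S_\ty \subseteq \supp(x)$ are proved separately, in each case by comparing $\supp(c_i)$ against $\supp(y)$ first and then paying the ``rounding cost'' $|D|$.

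For the inclusion $\supp(x) \subseteq S_\ty$, fix $i \in \supp(x)$. Then $\supp(c_i) \subseteq \supp(y)$, so any coordinate $j \in \supp(c_i) \setminus \supp(\ty)$ satisfies $y(j) = 1$ and $\ty(j) = 0$, i.e.\ $j \in D$; hence $|\supp(c_i) \setminus \supp(\ty)| \leq |D| \leq \floor{e/2}$ and $i \in S_\ty$ by the decoder's rule. For the reverse inclusion, fix $i \notin \supp(x)$, write $S := \supp(x)$ (so $|S| \leq d$ and $i \notin S$), and apply $(d,e)$-disjunctness with distinguished column $c_i$ and the columns indexed by $S$ — padding, when $|S| < d$, with arbitrarily chosen further column indices outside $S \cup \{i\}$, which only enlarges the union and is possible whenever $n \geq d+1$ (the case $n \leq d$ being trivial since then the identity scheme is used). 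This gives $|\supp(c_i) \setminus \supp(y)| = |\supp(c_i) \setminus \bigcup_{j \in S}\supp(c_j)| > e$, i.e.\ at least $e+1$. Now every $j \in \supp(c_i) \setminus \supp(y)$ either lies in $\supp(c_i) \setminus \supp(\ty)$ (if $\ty(j) = 0$) or lies in $D$ (if $\ty(j) = 1 \neq y(j)$), so $\supp(c_i)\setminus\supp(y) \subseteq (\supp(c_i)\setminus\supp(\ty)) \cup D$, whence
\[
|\supp(c_i)\setminus\supp(\ty)| \;\geq\; (e+1) - |D| \;\geq\; (e+1) - \floor{e/2} \;=\; \lceil e/2 \rceil + 1 \;>\; \floor{e/2},
\]
so $i \notin S_\ty$. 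Combining the two inclusions gives $S_\ty = \supp(x)$ and hence $\tx = x$.

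This is essentially a direct unwinding of definitions together with the same monotonicity observation already used in the proof of Theorem~\ref{thm:classicDisjunct}, so there is no deep obstacle; the only places requiring care — and the ``hard part'' only in a bookkeeping sense — are (i) applying $(d,e)$-disjunctness when $|\supp(x)|$ is strictly smaller than $d$, which I handle by padding the column set and noting distinctness of columns (automatic for disjunct matrices), and (ii) getting the floor/ceiling arithmetic to come out with the correct \emph{strict} inequality $\lceil e/2\rceil + 1 > \floor{e/2}$ in the last display.
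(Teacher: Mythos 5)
Your proof is correct and follows essentially the same route as the paper's: establish $\supp(x)\subseteq S_\ty$ from the $\floor{e/2}$ mismatch bound, and rule out any $i\notin\supp(x)$ by combining $(d,e)$-disjunctness with the decoder's threshold via the inclusion $\supp(c_i)\setminus\supp(y)\subseteq(\supp(c_i)\setminus\supp(\ty))\cup D$. The only differences are cosmetic — you argue the second inclusion contrapositively rather than by contradiction, and you make explicit the padding to $d$ columns when $\wgt(x)<d$, which the paper leaves implicit.
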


\begin{proof}
  Since the distance decoder allows for a ``mismatch'' of size up to
  $e$ for the columns picked by the set $S_\ty$, we surely know that
  $\supp(x) \subseteq S_\ty = \supp(\tx)$.  Now suppose that there is
  an index $i \in [n]$ such that $i \in S_\ty$ but $i \notin
  \supp(x)$.  Since $\cM$ is $(d,e)$-disjunct, we know that for the
  $i$th column $c_i$ we have
  \[
  |\supp(c_i) \setminus \supp(y)| > e.
  \]
  On the other hand, since $i \in S_\ty$, it must be that
  \[
  |\supp(c_i) \setminus \supp(\ty)| \leq \floor{e/2},
  \]
  and moreover, by assumption we have that
  \[
  |\supp(\ty) \setminus \supp(y)| \leq \floor{e/2}.
  \]
  This is a contradiction. Therefore we must have $S_\ty \subseteq
  \supp(x)$, implying that $x = \tx$.
\end{proof}

\subsection{Bounds on Disjunct Matrices}

So far we have seen that the notion of disjunct matrices is all we
need for non-adaptive group testing. But how small can the number of
rows of such matrices be? Equivalently, what is the smallest number of
measurements required by a non-adaptive group testing scheme that can
correctly identify the support of $d$-sparse vectors?

\subsubsection{Upper and Lower Bounds}

In the following, we use the probabilistic method to show that, a
randomly constructed matrix is with overwhelming probability disjunct,
and thus obtain an upperbound on the number of the rows of disjunct
matrices.

\begin{thm} \label{thm:classicDisjunct:random} \index{disjunct!upper
    bound} Let $p \in [0,1)$ be an arbitrary real parameter, and $d,
  n$ be integer parameters such that $d < n$. Consider a random $m
  \times n$ Boolean matrix $\cM$ such that each entry of $\cM$ is,
  independently, chosen to be $1$ with probability $q := 1/d$. Then
  there is an $m_0 = O(d^2 \log(n/d)/(1-p)^2)$ and $e = \Omega(p m/d)$
  such that $\cM$ is $(d,e)$-disjunct with probability $1-o(1)$
  provided that $m \geq m_0$.
\end{thm}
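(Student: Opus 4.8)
The plan is to prove this by the probabilistic method: show that for the stated $q=1/d$ a random matrix fails to be $(d,e)$-disjunct only with probability $o(1)$, by controlling one column configuration at a time with a Chernoff bound and then taking a union bound.

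First I would fix an arbitrary configuration: a distinguished column of $\cM$ together with $d$ other distinct columns, say with column indices $c_0$ and $c_1,\dots,c_d$. Call a row $j$ \emph{private} for this configuration if $\cM[j,c_0]=1$ and $\cM[j,c_i]=0$ for every $i\in[d]$. By Definition~\ref{def:classicDisjunct}, $\cM$ is $(d,e)$-disjunct exactly when every such configuration has more than $e$ private rows. Since the entries of $\cM$ are mutually independent, for a fixed configuration the events ``row $j$ is private'' ($j\in[m]$) are independent, each of probability $\rho := q(1-q)^d = \tfrac1d(1-\tfrac1d)^d$. A uniform lower bound $(1-1/d)^d\ge 1/4$ holds for all $d\ge 2$ (the case $d=1$ uses a different sampling probability and is standard), so $\rho\ge \tfrac{1}{4d}$. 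Hence the number $N$ of private rows is $\mathrm{Bin}(m,\rho)$ with $\Exp[N]=m\rho\ge \tfrac{m}{4d}$.

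Next I would set the noise tolerance $e := \floor{pm/(4d)}$, so that $e\le p\cdot\tfrac{m}{4d}\le p\,\Exp[N]$ and $e=\Omega(pm/d)$ whenever $pm/d=\Omega(1)$ (and when $e=0$ the conclusion is simply $d$-disjunctness). The multiplicative Chernoff bound with deviation factor $1-p$ then gives
\[
\Pr[N\le e]\le \Pr[N\le p\,\Exp[N]]\le \exp\!\big(-\tfrac{(1-p)^2}{2}\Exp[N]\big)\le \exp\!\big(-\tfrac{(1-p)^2 m}{8d}\big),
\]
which is precisely where the $(1-p)^{-2}$ factor of $m_0$ originates.

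Finally I would union bound over all bad configurations. The number of pairs consisting of a $(d+1)$-subset of columns and a choice of distinguished column within it is $(d+1)\binom{n}{d+1}\le (d+1)\big(\tfrac{en}{d+1}\big)^{d+1}$, whose logarithm is $O(d\log(n/d))$. Thus the probability that $\cM$ is not $(d,e)$-disjunct is at most $\exp\!\big(O(d\log(n/d))-\tfrac{(1-p)^2 m}{8d}\big)$, which is $o(1)$ as soon as $m\ge m_0$ for a suitable $m_0=O\!\big(d^2\log(n/d)/(1-p)^2\big)$, completing the proof. The argument is essentially routine; the only real care points are (i) extracting the clean $(1-p)^2$ dependence in the exponent from the lower-tail Chernoff bound, and (ii) bounding the combinatorial factor by $(en/(d+1))^{d+1}$ rather than $n^{d+1}$ so that it contributes $\log(n/d)$ and not $\log n$, together with the uniform estimate on $(1-1/d)^d$.
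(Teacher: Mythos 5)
Your proposal is correct and follows essentially the same route as the paper's proof: fix a distinguished column and a set of $d$ others, observe that each row is "good" independently with probability $q(1-q)^d=\Omega(1/d)$, set $e$ to be a $p$ fraction of the binomial mean, apply the lower-tail Chernoff bound to get the $(1-p)^2 m/d$ exponent, and union bound over the roughly $n\binom{n}{d}$ configurations. The only (immaterial) differences are constants — your uniform estimate $(1-1/d)^d\ge 1/4$ versus the paper's claimed range — and the exact bookkeeping of the union bound.
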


\begin{proof}
  Consider any set $S$ of $d$ columns of $\cM$, and any column outside
  those, say the $i$th column where $i \notin S$. First we upper bound
  the probability of a \emph{failure} for this choice of $S$ and $i$,
  i.e., the probability that the number of the positions at the $i$th
  column corresponding to which all the columns in $S$ have zeros is
  at most $e$.  Clearly if this event happens the $(d, e)$-disjunct
  property of $\cM$ would be violated.  On the other hand, if for no
  choice of $S$ and $i$ a failure happens the matrix would be indeed
  $(d, e)$-disjunct.

  Now we compute the failure probability $p_f$ for a fixed $S$ and
  $i$.  A row is \emph{good} if at that row the $i$th column has a $1$
  but all the columns in $S$ have zeros. For a particular row, the
  probability that the row is good is $q(1-q)^d$. Then failure
  corresponds to the event that the number of good rows is at most
  $e$. The distribution of the number of good rows is binomial with
  mean $\mu = q(1-q)^d m$. Choose $e := pmq(1-q)^d = \Omega(pm/d)$.
  By a Chernoff bound, the failure probability is at most
  \begin{eqnarray*}
    p_f &\leq& \exp( -(\mu-e)^2 / (2\mu)) \\
    &\leq& \exp(-mq (1-p)^2/6)
  \end{eqnarray*}
  where the second inequality is due to the fact that $(1-q)^d =
  (1-1/d)^d$ is always between $1/3$ and $1/2$.

  Now if we apply a union bound over all possible choices of $S$ and
  $i$, the probability of coming up with a bad choice of $\cM$ would
  be at most \[ n \binom{n}{d} \exp(-mq (1-p)^2/6). \] This
  probability vanishes so long as $m \geq m_0$ for some $m_0 = O(d^2
  \log (n/d) / (1-p)^2)$.
\end{proof}

The above result shows, in particular, that $d$-disjunct matrices with
$n$ columns and $O(d^2 \log (n/d))$ rows exist. This is by off from
the infor\-mation-theoretic barrier $O(d \log(n/d))$ by a multiplicative
factor $O(d)$, which raises the question, whether better disjunct
matrices can be found.  In the literature of group testing,
combinatorial lower bounds on the number of rows of disjunct matrices
exist, which show that the above upper bound is almost the best one
can hope for. In particular, D'yachkov and Rykov \cite{ref:DR83} have
shown that the number of rows of any $d$-disjunct matrices has to be
$\Omega(d^2 \log_d n)$. \index{disjunct!lower bound} Several other
concrete lower bounds on the size of disjunct matrices is known, which
are all asymptotically equivalent (e.g.,
\cites{ref:Rus94,ref:Fue96}). Moreover, for a nonzero noise tolerance
$e$, the lower bounds can be extended to $\Omega(d^2 \log_d n + ed)$.

\subsubsection{The Fixed-Input Case}

The probabilistic construction of disjunct matrices presented in
Theorem~\ref{thm:classicDisjunct:random} almost surely produces a
disjunct matrices using $O(d^2 \log(n/d))$ measurements.  Obviously,
due to almost-matching lower bounds, by lowering the number of the
measurement the disjunctness property cannot be assured
anymore. However, the randomized nature of the designs can be used to
our benefit to show that, using merely $O(d \log n)$ measurements
(almost matching the infor\-mation-theoretic lower bound) it is possible
(with overwhelming probability) to distinguish a ``fixed'' $d$-sparse
vector from any other (not necessarily sparse) vector. More precisely
we have the following result, whose proof is quite similar to that of
Theorem~\ref{thm:classicDisjunct:random}.

\begin{thm}
  \label{thm:classicDisjunct:fixedInput}
  Let $p \in [0,1)$ be an arbitrary real parameter, $d, n$ be integer
  parameters such that $d < n$, and $x \in \zo^n$ be a fixed
  $d$-sparse vector. Consider a random $m \times n$ Boolean matrix
  $\cM$ such that each entry of $\cM$ is, independently, chosen to be
  $1$ with probability $q := 1/d$. Then there is an $m_0 = O(d (\log
  n)/(1-p)^2)$ and $e = \Omega(p m/d)$ such that, provided that $m
  \geq m_0$, with probability $1-o(1)$ the following holds: For every
  $y \in \zo^n$, $y \neq x$, the Hamming distance between the outcomes
  $M[y]$ and $M[x]$ is greater than $e$.
\end{thm}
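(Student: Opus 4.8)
The plan is to follow the same probabilistic recipe used for Theorem~\ref{thm:classicDisjunct:random}, but to exploit the fact that the input $x$ is \emph{fixed} in order to reduce the number of ``bad events'' from roughly $n\binom{n}{d}$ down to $O(n)$; this reduction is exactly what buys back the factor $d$ in the number of measurements.

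First I would reduce the requirement ``for every $y\neq x$, $\cM[y]$ and $\cM[x]$ differ in more than $e$ positions'' to a small collection of events about $\cM$. Since $y$ is Boolean, $y\neq x$ means $\supp(y)\neq\supp(x)$, and there are two cases. If $\supp(y)\nsubseteq\supp(x)$, pick any $i\in\supp(y)\setminus\supp(x)$: then any row $r$ with $\cM[r,i]=1$ and $\cM[r,j]=0$ for all $j\in\supp(x)$ is a row at which $\cM[x]$ reads $0$ and $\cM[y]$ reads $1$. If instead $\supp(y)\subsetneq\supp(x)$, pick any $i\in\supp(x)\setminus\supp(y)$, so that $\supp(y)\subseteq\supp(x)\setminus\{i\}$: then any row $r$ with $\cM[r,i]=1$ and $\cM[r,j]=0$ for all $j\in\supp(x)\setminus\{i\}$ is, by monotonicity of the disjunctive query, a row at which $\cM[y]$ reads $0$ and $\cM[x]$ reads $1$. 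In both cases the count of such ``separating'' rows depends only on the index $i$ and on the fixed $x$, not on $y$ itself. Hence it suffices to guarantee, simultaneously over the at most $n$ indices $i\notin\supp(x)$ and the at most $d$ indices $i\in\supp(x)$, that the relevant count of separating rows exceeds $e$.

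Next I would bound each such count by a Chernoff argument, just as in the proof of Theorem~\ref{thm:classicDisjunct:random}. For a fixed $i$ the number of separating rows is $\mathrm{Bin}(m,\rho_i)$ with $\rho_i\ge q(1-q)^{\wgt(x)}\ge q(1-q)^d=\Omega(1/d)$ (in the ``subset'' case the exponent is $\wgt(x)-1$, which only increases $\rho_i$), using $(1-1/d)^d=\Theta(1)$ for $d\ge 2$. Setting $e:=\floor{p\,q(1-q)^d m}$, which is $\Omega(pm/d)$ and never exceeds $\rho_i m$, the multiplicative lower-tail Chernoff bound gives failure probability at most $\exp\big(-(1-p)^2\rho_i m/2\big)=\exp\big(-\Omega((1-p)^2 m/d)\big)$ for that $i$. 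A union bound over the at most $n+d\le 2n$ events then shows the total failure probability is at most $2n\exp(-\Omega((1-p)^2 m/d))$, which is $o(1)$ once $m\ge m_0$ for a suitable $m_0=O(d\log n/(1-p)^2)$; on the complementary event every $y\neq x$ is separated from $x$ in more than $e$ rows, as required.

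The only genuinely new ingredient compared with Theorem~\ref{thm:classicDisjunct:random}, and the step I would be most careful about, is the reduction in the second paragraph: without the observation that it suffices to ``protect'' $x$ against the $d$ maximal proper sub-supports $\supp(x)\setminus\{i\}$ rather than against all of the exponentially many subsets of $\supp(x)$, the union bound would blow up and the $O(d\log n)$ count would be lost. It is precisely monotonicity of the ``or'' function — a row that kills every query column in the larger set $\supp(x)\setminus\{i\}$ automatically kills every column in $\supp(y)$ for all $y$ with $\supp(y)\subseteq\supp(x)\setminus\{i\}$ — that makes this shortcut valid. Everything else is a routine adaptation of the earlier Chernoff-plus-union-bound calculation.
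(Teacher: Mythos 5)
Your proposal is correct and follows essentially the same route as the paper: the paper likewise defines a column $i$ to be ``good'' when more than $e$ rows have a $1$ in column $i$ and zeros on $\supp(x)\setminus\{i\}$, applies the same Chernoff estimate per column, and replaces the union bound over $(n-1)\binom{n}{d}$ events with one over only $n$ column events, handling an arbitrary $y\neq x$ by exactly your two-case monotonicity argument. The only cosmetic difference is that the paper uses a single uniform definition of ``good column'' for both cases rather than splitting the indices into $i\notin\supp(x)$ and $i\in\supp(x)$.
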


\begin{proof}
  We follow essentially the same argument as the proof of
  Theorem~\ref{thm:classicDisjunct:random}, but will need a weaker
  union bound at the end. Call a column $i$ of $\cM$ \emph{good} if
  there are more than $e$ rows of $\cM$ at which the $i$th column has
  a $1$ but those on the support of $x$ (excluding the $i$th column)
  have zeros. Now we can follow the argument in the proof of
  Theorem~\ref{thm:classicDisjunct:random} to show that under the
  conditions of the theorem, with probability $1-o(1)$, all columns of
  $\cM$ are good (the only difference is that, the last union bound
  will enumerate a set of $n$ possibilities rather than
  $(n-1)\binom{n}{d}$).

  Now suppose that for the particular outcome of $\cM$ all columns are
  good, and take any $y \in \zo^n$, $y \neq x$. One of the following
  cases must be true, and in either case, we show that $\cM[x]$ and
  $\cM[y]$ are different at more than $e$ positions:
  \begin{enumerate}
  \item There is an $i \in \supp(y) \setminus \supp(x)$: Since the
    $i$th column is good, we know that for more than $e$ rows of
    $\cM$, the entry at the $i$th column is $1$ while those at
    $\supp(x)$ are all zeros. This implies that at positions
    corresponding to such rows, $\cM[y]$ must be $1$ but $\cM[x]$ must
    be zero.

  \item We have $\supp(y) \subseteq \supp(x)$: In this case, take any
    $i \in \supp(x) \setminus \supp(y)$, and again use the fact that
    the $i$th column is good to conclude that at more than $e$
    positions the outcome $\cM[y]$ must be zero but $\cM[x]$ must
    be~$1$.
  \end{enumerate}
\end{proof}

As a corollary, the above theorem shows that, with overwhelming
probability, once we fix the outcome of the random matrix $\cM$
constructed by the theorem, the matrix $\cM$ will be able to
distinguish between \emph{most} $d$-sparse vectors even in presence of
any up to $\floor{e/2}$ incorrect measurement outcomes. In particular,
we get an average-case result, that there is a \emph{fixed}
measurement scheme with only $O(d \log n)$ measurements using which it
is possible to uniquely reconstruct a randomly chosen $d$-sparse
vector (e.g., under the uniform distribution) with overwhelming
probability over the distribution from which the sparse vector is
drawn.

\subsubsection{Sparsity of the Measurements}

The probabilistic construction of
Theorem~\ref{thm:classicDisjunct:random} results in a rather sparse
matrix, namely, one with density $q = 1/d$ that decays with the
sparsity parameter $d$. Below we show that sparsity is a necessary
condition for the probabilistic construction to work at an optimal level on the
number of measurements:

\begin{lem} \label{lem:sparsityMeasurements} Let $\cM$ be an $m \times
  n$ Boolean random matrix, where $m = O(d^2 \log n)$ for an integer
  $d > 0$, which is constructed by setting each entry independently to
  $1$ with probability $q$. Then either $q = O(\log d/d)$ or otherwise
  the probability that $\cM$ is $(d,e)$-disjunct (for any $e \geq 0$)
  approaches to zero as $n$ grows.
\end{lem}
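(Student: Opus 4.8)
The plan is to prove the contrapositive in the form: there is an absolute constant $c$ (depending only on the hidden constant in $m = O(d^2 \log n)$) such that if $q \ge c\log d / d$ and $d$ is large enough, then $\Pr[\cM \text{ is }(d,e)\text{-disjunct}] = o(1)$ for \emph{every} $e \ge 0$; the complementary range $q < c\log d/d$ is precisely the asserted alternative $q = O(\log d/d)$. The first reduction is immediate: every $(d,e)$-disjunct matrix is in particular $(d,0)$-disjunct (i.e. $d$-disjunct), since $|\supp(C_0)\setminus\bigcup_{i\ge 1}\supp(C_i)| > e \ge 0$ forces $\supp(C_0)\not\subseteq\bigcup_{i\ge 1}\supp(C_i)$. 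So it suffices to show that $\cM$ fails to be $d$-disjunct with probability $1-o(1)$, i.e. that with high probability some column $C_0$ satisfies $\supp(C_0)\subseteq\bigcup_{i=1}^d\supp(C_i)$ for $d$ other columns $C_1,\dots,C_d$.

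The core idea is a conditional-independence argument over disjoint blocks of columns. Take $C_0$ to be the first column of $\cM$ and partition the remaining $n-1$ columns into $T := \lfloor (n-1)/d\rfloor$ pairwise disjoint blocks $B_1,\dots,B_T$, each of exactly $d$ columns. Set $\beta := 1-(1-q)^d$, the probability that the union of $d$ independent columns of $\cM$ has a $1$ in a fixed row. Conditioned on the realization of $C_0$, with $w := \wgt(C_0)$, the events $A_t := \{\supp(C_0)\subseteq\bigcup_{j\in B_t}\supp(c_j)\}$ are mutually independent (they depend on disjoint sets of the remaining i.i.d.\ columns) and each has probability $\beta^{w}$. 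Hence
\[
\Pr[\cM \text{ is } d\text{-disjunct}\mid C_0] \le \Pr\big[\textstyle\bigwedge_{t=1}^T \neg A_t \mid C_0\big] = (1-\beta^{w})^{T} \le \exp(-T\beta^{w}).
\]
A standard Chernoff/tail bound for $w\sim\mathrm{Bin}(m,q)$ gives $w \le W_0 := O(qm+\log n)$ with probability $1-o(1)$, and since $\beta^{w}$ decreases in $w$, it remains to show $T\beta^{W_0}\to\infty$.

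This last estimate is the crux of the argument, and it is exactly where the hypotheses $m \le A d^2\log n$ and $q \ge c\log d/d$ enter. From $(1-q)^d \le e^{-qd}\le d^{-c}$ we get $\log(1/\beta) = -\log(1-(1-q)^d) \le 2d^{-c}$ once $d$ is large, so
\[
W_0\log(1/\beta) = O(qm+\log n)\cdot O(d^{-c}) = O\big(d^{2}\log n\big)\cdot O(d^{-c}) = O\big(d^{2-c}\log n\big),
\]
which is $o(\log(n/d))$ for $c \ge 3$ and $d$ large relative to $A$ (using $\log(n/d)\ge\log 2$ and $\log n/d \to 0$). Therefore $\beta^{W_0} = e^{-o(\log(n/d))} = (n/d)^{-o(1)}$, so $T\beta^{W_0} = \Omega(n/d)\cdot(n/d)^{-o(1)} = (n/d)^{1-o(1)}\to\infty$ in the relevant regime $d = o(n)$. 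Combining the displayed conditional bound with the high-probability bound on $w$ then yields $\Pr[\cM\text{ is }d\text{-disjunct}] = o(1)$, and the reduction above finishes the proof. I expect the only real subtlety to be bookkeeping the constants so that the threshold $c$ can be chosen uniformly in $d,n,q$ and so that the degenerate ranges ($d$ bounded, or $d$ close to $n$) are either absorbed into ``$d$ large, $n$ large'' or dismissed as outside the meaningful group-testing regime; the probabilistic content is entirely in the conditional independence of the block-covering events.
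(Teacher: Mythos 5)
Your proof is correct (up to the edge cases you flag yourself), but it takes a genuinely different route from the paper's. You attack disjunctness from the column side: reduce to $d$-disjunctness, fix a column $C_0$, partition the remaining columns into $\lfloor (n-1)/d\rfloor$ disjoint blocks of size $d$, and use the conditional independence of the block-covering events given $C_0$ to show that some block covers $\supp(C_0)$ with probability $1-o(1)$ once $q\geq c\log d/d$; the only probabilistic inputs are a Chernoff bound on $\wgt(C_0)$ and the estimate $\log(1/\beta)=O(d^{-c})$. The paper instead argues from the row side: it uses the closure property that deleting $t$ columns together with all rows meeting their supports leaves a $(d-t,e)$-disjunct matrix, applies a Chernoff bound to show that for $t$ just above a threshold $t_0$ the number of surviving rows drops below $\log n$, and then invokes the fact that even a $(1,0)$-disjunct matrix on many columns needs at least $\log n$ rows; the dichotomy $q=O(\log d/d)$ falls out in one line from solving $(1+\delta)(1-q)^{d-1}m\geq \log n$. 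Your argument is more self-contained (it needs no lower bound for $1$-disjunct matrices) and proves the slightly stronger statement that a single \emph{fixed} column is covered with high probability; the paper's extracts the threshold more cleanly and avoids your case analysis on the size of $d$ relative to $n$. Two small things to tidy: the parenthetical justification ``$\log n/d\to 0$'' is not what you actually need — what you need, and what holds, is $d^{2-c}\log n=o(\log(n/d))$ whenever $d\to\infty$ and $d\leq n/2$, seen by splitting at $d=\sqrt{n}$; and the regime $d=\Theta(n)$ that you dismiss is in fact covered by your own estimates, since there $W_0\log(1/\beta)\to 0$, hence $\beta^{W_0}\to 1$ and already a single block covers $C_0$ with high probability, so no appeal to $T\to\infty$ is needed.
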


\begin{proof}
  Suppose that $\cM$ is an $m \times n$ matrix that is
  $(d,e)$-disjunct.  Observe that, for any integer $t \in (0,d)$, if
  we remove any $t$ columns of $\cM$ and all the rows on the support
  of those columns, the matrix must remain $(d-t, e)$-disjunct. This
  is because any counterexample for the modified matrix being $(d-t,
  e)$-disjunct can be extended to a counterexample for $\cM$ being
  $(d,e)$-disjunct by adding the removed columns to its support.

  Now consider any $t$ columns of $\cM$, and denote by $m_0$ the
  number of rows of $\cM$ at which the entries corresponding to the
  chosen columns are all zeros. The expected value of $m_0$ is
  $(1-q)^t m$. Moreover, for any constant $\delta > 0$ we have
  \begin{equation} \label{eqn:chernoffDisj} \Pr[m_0 > (1+\delta)
    (1-q)^t m] \leq \exp( -\delta^2 (1-q)^t m/4 )
  \end{equation}
  by a Chernoff bound.

  Let $t_0$ be the largest integer for which \[(1+\delta) (1-q)^{t_0} m
  \geq \log n.\]  If $t_0 < d-1$, we let $t := 1+t_0$ above, and this
  makes the right hand side of \eqref{eqn:chernoffDisj} upper bounded
  by $o(1)$.  So with probability $1-o(1)$, the chosen $t$ columns of
  $\cM$ will keep $m_0$ at most $(1+\delta)(1-q)^t m$, and removing
  those columns and $m_0$ rows on their union leaves the matrix
  $(d-t_0-1, e)$-disjunct, which obviously requires at least $\log n$
  rows (as even a $(1, 0)$-disjunct matrix needs so many
  rows). Therefore, we must have
  \[
  (1+\delta)(1-q)^t m \geq \log n
  \]
  or otherwise (with overwhelming probability) $\cM$ will not be
  $(d,e)$-disjunct.  But the latter inequality is not satisfied by the
  assumption on $t_0$.  So if $t_0 < d-1$, little chance remains for
  $\cM$ to be $(d,e)$-disjunct.

  Now consider the case $t_0 \geq
  d-1$. Thus, by the choice of $t_0$, we must have
  \[
  (1+\delta)(1-q)^{d-1} m \geq \log n.
  \]
  The above inequality implies that we must
  have
  \[
  q \leq \frac{\log(m(1+\delta)/\log n)}{{d-1}},
  \]
  which, for $m = O(d^2 \log n)$ gives $q = O(\log d/d)$.
\end{proof}

\section[Noise resilient schemes]{Noise resilient schemes and
  approximate reconstruction}

So far, we have introduced the notion of $(d, e)$-disjunct matrices
that can be used in non-adaptive group testing schemes to identify
$d$-sparse vectors up to a number of measurement errors depending on
the parameter $e$.  However, as the existing lower bounds suggest, the
number of rows of such matrices cannot reach to the infor\-mation-theoretic
optimum $O(d \log(n/d))$ and moreover, the noise tolerance
$e$ can be at most a factor $1/d$ of the number of measurements. This
motivates two natural questions:

\begin{enumerate}
\item Can the number of measurements be lowered at the cost of causing
  a slight amount of ``confusion''? We know, by
  Theorem~\ref{thm:classicDisjunct:fixedInput} that, it is possible to
  identify sparse vectors on average using only $O(d \log n)$
  measurements. But can something be said in the \emph{worst case} ?

\item What can be said if the amount of possible errors can be
  substantially high; e.g., when a constant fraction of the
  measurements can produce false outcomes?
\end{enumerate}

In order to answer the above questions, in this section we introduce a
notion of measurement schemes that can be ``more flexible'' than that
of disjunct matrices, and aims to study the trade-off between the
amount of errors expected on the measurements versus the ambiguity of
the reconstruction. More formally we define the following notion.

\begin{defn} \label{def:matrix} \index{resilient matrix} Let $m, n, d,
  e_0, e_1, e'_0, e'_1$ be integers.  An $m \times n$ measurement
  matrix $A$ is called $(e_0, e_1, e'_0, e'_1)$-resilient for
  $d$-sparse vectors if, for every $y \in \zo^m$ there exists $z \in
  \zo^n$ (called a \emph{valid decoding of $y$}) such that for every
  $x \in \zo^n$, whenever $(x, z)$ are $(e'_0, e'_1)$-far, $(A[x], y)$
  are $(e_0, e_1)$-far\footnote{ In particular this means that for
    every $x, x' \in \zo^n$, if $(A[x], A[x'])$ are $(e_0,
    e_1)$-close, then $x$ and $x'$ must be $(e'_0+e'_1,
    e'_0+e'_1)$-close. }.

  The matrix $A$ is called explicit if it can be computed in
  polynomial time in its size, and \emph{fully explicit} if each entry
  of the matrix can be computed in time $\poly(m, \log n)$.
\end{defn}

Intuitively, the definition states that two measurements are allowed
to be confused only if they are produced from close vectors.  The
parameters $e_0$ and $e'_0$ correspond to amount of tolerable
\emph{false positives} on the measurement outcomes and reconstructed
vector, where by false positive\index{false positive} we mean an error
caused by mistaking a $0$ for $1$.  Similarly, $e_1$ and $e'_1$ define
the amount of tolerable \emph{false negatives}\index{false negative}
on both sides, where a false negative occurs when a bit that actually
must be $1$ is flipped to $0$.

In particular, an $(e_0, e_1, e'_0, e'_1)$-resilient matrix gives a
group testing scheme that reconstructs the sparse vector up to $e'_0$
false positives and $e'_1$ false negatives even in the presence of
$e_0$ false positives and $e_1$ false negatives in the measurement
outcome.  Under this notation, unique (exact) decoding would be
possible using an $(e_0, e_1, 0, 0)$-resilient matrix if the amount of
measurement errors is bounded by at most $e_0$ false positives and
$e_1$ false negatives.  However, when $e'_0+e'_1$ is positive,
decoding may require a bounded amount of ambiguity, namely, up to
$e'_0$ false positives and $e'_1$ false negatives in the decoded
sequence.

Observe that the special case of $(0,0,0,0)$-resilient matrices
corresponds to the classical notion of $d$-disjunct matrices, while a
$(d,e)$-disjunct matrix would give a $(\floor{e/2}, \floor{e/2}, 0,
0)$-resilient matrix for $d$-sparse vectors.

Definition~\ref{def:matrix} is in fact reminiscent of
\emph{list-decoding} in error-correcting codes, but with the stronger
requirement that the list of decoding possibilities must consist of
vectors that are close to one another.

\subsection{Negative Results}

In coding theory, it is possible to construct codes that can tolerate
up to a constant fraction of adversarially chosen errors and still
guarantee unique decoding.  Hence it is natural to wonder whether a
similar possibility exists in group testing, namely, whether there is
a measurement matrix that is robust against a constant fraction of
adversarial errors and still recovers the measured vector exactly. We
already have mentioned that this is in general not possible, since any
$(d,e)$-disjunct matrix (a notion that is necessary for this task)
requires at least $de$ rows, and thus the fraction of tolerable errors
by disjunct matrices cannot be above $1/d$.  Below we extend this
result to the more ``asymmetric'' notion of resilient matrices, and
show that the fraction of tolerable false positives and false
negatives must be both below $1/d$.

\begin{lem} \label{lem:distance} Suppose that an $m \times n$
  measurement matrix $\cM$ is $(e_0, e_1, e'_0, e'_1)$-resilient for
  $d$-sparse vectors.  Then $(\max\{e_0,e_1\}+1)/(e'_0+e'_1+1) \leq
  m/d$.
\end{lem}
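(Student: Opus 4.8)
The plan is to distill the resilience hypothesis into a single combinatorial inequality about the supports of $d$ chosen columns of $\cM$, and then finish by a double-counting argument; throughout write $e := \max\{e_0,e_1\}$ and $e' := e'_0+e'_1$.

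First I would fix any $d$ columns $C_1,\dots,C_d$ of $\cM$ (possible since $d<n$), and for $S\subseteq[d]$ let $x^{(S)}\in\zo^n$ be the characteristic vector of $S$ and $U_S := \supp(\cM[x^{(S)}]) = \bigcup_{i\in S}\supp(C_i)$. The reduction I want is: if $S\subseteq S'\subseteq[d]$ with $|S'\setminus S|\ge e'+1$, then $|U_{S'}|-|U_S|\ge e+1$. To get it, note $x^{(S)}$ and $x^{(S')}$ differ in more than $e'$ positions, all of them flips $0\to1$, so they are not $(e'_0+e'_1,e'_0+e'_1)$-close; the consequence recorded in the footnote to Definition~\ref{def:matrix} then forces $(\cM[x^{(S)}],\cM[x^{(S')}])$ to be $(e_0,e_1)$-far, and since $U_S\subseteq U_{S'}$ the output $\cM[x^{(S')}]$ differs from $\cM[x^{(S)}]$ only by $0\to1$ flips, so there must be more than $e_0$ of them, i.e. $|U_{S'}|-|U_S|>e_0$; repeating the argument with the roles of $S$ and $S'$ swapped gives $|U_{S'}|-|U_S|>e_1$ as well, hence $\ge e+1$.

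Then I would instantiate the reduction at $S'=[d]$ and $S=[d]\setminus T$ for every $(e'+1)$-element subset $T\subseteq[d]$ (taking $e'+1\le d$; the remaining degenerate range, where already the requirement that $\cM$ separate the all-zero vector from a heavy vector pins down $m$, can be handled directly). Writing $P_T := U_{[d]}\setminus U_{[d]\setminus T}$ — the rows covered by some column indexed in $T$ and by no column indexed outside $T$ — the reduction yields $|P_T|\ge e+1$ for each such $T$. Summing over $T$ and reorganising by rows: a row $\rho\in U:=U_{[d]}$ lies in $P_T$ exactly when its covering set $T_\rho:=\{i\in[d]:\rho\in\supp(C_i)\}$ (nonempty) is contained in $T$, so it is counted $\binom{d-|T_\rho|}{e'+1-|T_\rho|}\le\binom{d-1}{e'}$ times. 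This gives
\[
\binom{d}{e'+1}(e+1)\ \le\ \sum_{|T|=e'+1}|P_T|\ =\ \sum_{\rho\in U}\binom{d-|T_\rho|}{e'+1-|T_\rho|}\ \le\ |U|\binom{d-1}{e'}\ \le\ m\binom{d-1}{e'},
\]
and since $\binom{d}{e'+1}/\binom{d-1}{e'}=d/(e'+1)$, rearranging yields $m\ge d(e+1)/(e'+1)$, which is exactly $(\max\{e_0,e_1\}+1)/(e'_0+e'_1+1)\le m/d$.

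The delicate point is obtaining the \emph{exact} factor $d/(e'+1)$ rather than the lossy $\lfloor d/(e'+1)\rfloor$ that a straightforward telescoping chain of nested $d$-sparse vectors with jumps of size $e'+1$ produces — such a chain is clean only when $e'+1$ divides $d$, and otherwise either loses a step or forces a vector of Hamming weight above $d$. Averaging the private-region bound over \emph{all} $(e'+1)$-subsets of a single $d$-column block is the trick that eliminates any divisibility hypothesis and the need for an auxiliary column; the rest is elementary manipulation of binomial coefficients, together with the easy observation that $\binom{d-t}{e'+1-t}$ is non-increasing in $t$ and so peaks at $t=1$.
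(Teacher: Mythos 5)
Your proposal is correct, and while the heart of it — the reduction ``$x,x'$ are $(e'_0+e'_1,e'_0+e'_1)$-far $\Rightarrow$ $(\cM[x],\cM[x'])$ are $(e_0,e_1)$-far,'' obtained from the valid decoding $z$ and the triangle inequality, combined with monotonicity of ``or'' to convert farness of nested encodings into a weight gap of at least $\max\{e_0,e_1\}+1$ — is exactly the paper's key fact, your finishing step is genuinely different. The paper builds a single telescoping chain $x_0\prec x_1\prec\cdots\prec x_t$ with weight jumps of exactly $e'_0+e'_1+1$, concludes $\wgt(\cM[x_t])\ge t(e_0+1)\le m$ with $t=d/(e'_0+e'_1+1)$, and relegates the non-integrality of $t$ to a footnote (``this will cause no loss of generality''). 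You instead fix one block of $d$ columns and average the private-region bound $|P_T|\ge e+1$ over \emph{all} $\binom{d}{e'+1}$ subsets $T$ of size $e'+1$, then double-count by rows using $\binom{d-|T_\rho|}{e'+1-|T_\rho|}\le\binom{d-1}{e'}$ and the identity $\binom{d}{e'+1}/\binom{d-1}{e'}=d/(e'+1)$. What this buys is the exact constant $d/(e'+1)$ with no divisibility caveat, at the cost of a slightly heavier computation; the chain argument is shorter but, taken literally, only yields $\lfloor d/(e'_0+e'_1+1)\rfloor$ steps. Your handling of the degenerate range $e'_0+e'_1+1>d$ is left as a gesture (``can be handled directly''), but the paper's proof silently assumes $t\ge 1$ as well, so you are no worse off there. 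One cosmetic point: you need $d\le n$ (not $d<n$) to pick the columns, which is the same implicit assumption the paper makes when it forms a weight-$d$ vector.
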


\begin{proof}
  We use similar arguments as those used in \cites{ref:BKS06,ref:GG08}
  in the context of black-box hardness amplification in $\mathsf{NP}$:
  Define a partial ordering $\prec$ between binary vectors using
  bit-wise comparisons (with $0 < 1$).  Let $t := d/(e'_0+e'_1+1)$ be
  an integer\footnote{For the sake of simplicity in this presentation
    we ignore the fact that certain fractions might in general give
    non-integer values. However, it should be clear that this will
    cause no loss of generality.}, and consider any monotonically
  increasing sequence of vectors $x_0 \prec \cdots \prec x_t$ in
  $\zo^n$ where $x_i$ has weight $i(e'_0+e'_1+1)$. Thus, $x_0$ and
  $x_t$ will have weights zero and $d$, respectively. Note that we
  must also have $\cM[x_0] \prec \cdots \prec \cM[x_t]$ due to
  monotonicity of the ``or'' function.

  A fact that is directly deduced from Definition~\ref{def:matrix} is
  that, for every $x, x' \in \zo^n$, if $(\cM[x], \cM[x'])$ are $(e_0,
  e_1)$-close, then $x$ and $x'$ must be $(e'_0+e'_1,
  e'_0+e'_1)$-close. This can be seen by setting $y := \cM[x']$ in the
  definition, for which there exists a valid decoding $z \in
  \zo^n$. As $(\cM[x], y)$ are $(e_0, e_1)$-close, the definition
  implies that $(x, z)$ must be $(e'_0, e'_1)$-close.  Moreover,
  $(\cM[x'], y)$ are $(0, 0)$-close and thus, $(e_0, e_1)$-close,
  which implies that $(z, x')$ must be $(e'_1, e'_0)$-close. Thus by
  the triangle inequality, $(x, x')$ must be $(e'_0+e'_1,
  e'_0+e'_1)$-close.

  Now, observe that for all $i$, $(x_i, x_{i+1})$ are $(e'_0+e'_1,
  e'_0+e'_1)$-far, and hence, their encodings must be $(e_0,
  e_1)$-far, by the fact we just mentioned.  In particular this
  implies that $\cM[x_t]$ must have weight at least $t(e_0+1)$, which
  must be trivially upper bounded by $m$. Hence it follows that
  $(e_0+1)/(e'_0+e'_1+1) \leq m/d$. Similarly we can also show that
  $(e_1+1)/(e'_0+e'_1+1) \leq m/d$.
\end{proof}

As shown by the lemma above, tolerance of a measurement matrix against
a constant fraction of errors would make an ambiguity of order
$\Omega(d)$ in the decoding inevitable, irrespective of the number of
measurements.  For most applications this might be an unsatisfactory
situation, as even a close estimate of the set of positives might not
reveal whether any particular individual is defective or not, and in
certain scenarios (such as an epidemic disease or industrial quality
assurance) it is unacceptable to miss any defective individuals.  This
motivates us to focus on approximate reconstructions with
\emph{one-sided} error.  Namely, we will require the support of the
reconstruction $\hat{x}$ to always contain the support of the original
vector $x$ being measured, and be possibly larger by up to $O(d)$
positions. It can be argued that, for most applications, such a scheme
is as good as exact reconstruction, as it allows one to significantly
narrow-down the set of defectives to up to $O(d)$ \emph{candidate
  positives}. In particular, as observed in \cite{ref:Kni95}, one can
use a \emph{second stage} if necessary and individually test the
resulting set of candidates, using more reliable measurements, to
identify the exact set of positives.  In the literature, such schemes
are known as \emph{trivial two-stage} schemes\index{group
  testing!two-stage schemes}.

The trade-off given by the following lemma only focuses on false
negatives and is thus useful for trivial two-stage schemes:

\begin{lem} \label{lem:falseNeg} Suppose that an $m \times n$
  measurement matrix $M$ is $(e_0, e_1, e'_0, e'_1)$-resilient for
  $d$-sparse vectors.  Then for every $\eps > 0$, either \[ e_1 <
  \frac{(e'_1+1)m}{\eps d}\] or \[e'_0 \geq
  \frac{(1-\eps)(n-d+1)}{(e'_1+1)^2}.\]
\end{lem}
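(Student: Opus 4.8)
The plan is to prove the contrapositive: assume $e_1\ge\tfrac{(e'_1+1)m}{\eps d}$ and derive $e'_0\ge\tfrac{(1-\eps)(n-d+1)}{(e'_1+1)^2}$. Write $v:=e'_1+1$, let $c_1,\dots,c_n$ be the columns of $M$, and for $S\subseteq[n]$ let $x_S\in\zo^n$ be the characteristic vector of $S$. First dispose of the trivial regimes: if $\eps\ge1$ the claimed bound is non‑positive; if $e'_0\ge n-d+1$ it holds because $(1-\eps)/v^2<1$; and if $v>d$ then $e_1\le m<\tfrac{(e'_1+1)m}{\eps d}$ for $\eps\le1$, so the first alternative already holds. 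Hence we may assume $0<\eps<1$, $e'_0\le n-d$, and $v\le d$.

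The first step converts resilience into a statement about the columns. Fix any $(d-v)$‑subset $D_0\subseteq[n]$; the vector $x_{D_0}$ is $(d-v)$‑sparse, hence $d$‑sparse, so Definition~\ref{def:matrix} applies with $y:=M[x_{D_0}]$ and a valid decoding $z$. Since $(M[x_{D_0}],y)$ is $(0,0)$‑close, resilience forces $(x_{D_0},z)$ to be $(e'_0,e'_1)$‑close, so $Z_0:=\supp(z)\setminus D_0$ has $|Z_0|\le e'_0$. Now take any $v$‑subset $S\subseteq[n]\setminus D_0$ and set $T:=\supp(M[x_{D_0}])$ and $\delta_S:=\wgt(M[x_{D_0\cup S}])-\wgt(M[x_{D_0}])=\bigl|\bigcup_{i\in S}\supp(c_i)\setminus T\bigr|$. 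Because $x_{D_0}\le x_{D_0\cup S}$, monotonicity of $M[\cdot]$ gives $T\subseteq\supp(M[x_{D_0\cup S}])$, so $(M[x_{D_0\cup S}],M[x_{D_0}])$ is $(0,\delta_S)$‑close; if $\delta_S\le e_1$ it is therefore $(e_0,e_1)$‑close, and resilience makes $(x_{D_0\cup S},z)$ be $(e'_0,e'_1)$‑close, whence $|(D_0\cup S)\setminus\supp(z)|\le e'_1=v-1<|S|$ and so $S\cap Z_0\neq\emptyset$. Contrapositively, every $v$‑subset of $[n]\setminus(D_0\cup Z_0)$ has $\delta_S>e_1$. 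Applying this to the $v$ columns of smallest residual $\gamma_i(D_0):=|\supp(c_i)\setminus T|$ lying outside $D_0\cup Z_0$ and using $\delta_S\le\sum_{i\in S}\gamma_i(D_0)$ shows that at most $v-1$ columns of $[n]\setminus(D_0\cup Z_0)$, hence at most $v-1+e'_0$ columns of $[n]\setminus D_0$, satisfy $\gamma_i(D_0)\le e_1/v$. Since $e_1/v\ge m/(\eps d)$, at least $(n-d+v)-(v-1+e'_0)=n-d+1-e'_0$ of the $n-d+v$ columns outside $D_0$ have $\gamma_i(D_0)>m/(\eps d)$.

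The second step averages this over $D_0$. Pick a uniformly random permutation $\pi$ of $[n]$, let $D_0:=\{\pi(1),\dots,\pi(d-v)\}$ and $R_k:=\bigl|\supp(c_{\pi(k)})\setminus\bigcup_{j<k}\supp(c_{\pi(j)})\bigr|$, noting $R_{d-v+1}=\gamma_{\pi(d-v+1)}(D_0)$. Conditioning on the set $D_0$, the index $\pi(d-v+1)$ is uniform on $[n]\setminus D_0$, so by the previous paragraph $\Pr[R_{d-v+1}>m/(\eps d)]\ge\tfrac{n-d+1-e'_0}{n-d+v}$, giving $\Exp[R_{d-v+1}]\ge\tfrac{m}{\eps d}\cdot\tfrac{n-d+1-e'_0}{n-d+v}$. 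On the other hand $R_1+\dots+R_k=\bigl|\bigcup_{j\le k}\supp(c_{\pi(j)})\bigr|\le m$ for every $k$, and $\Exp[R_j]$ is non‑increasing in $j$ (the expected number of newly covered rows can only shrink as more columns are revealed — a short check on $\binom{n-w}{j}/\binom{n}{j}$), so $k\,\Exp[R_k]\le m$ and in particular $\Exp[R_{d-v+1}]\le\tfrac{m}{d-v+1}$. Combining and cancelling $m$ yields $\tfrac{n-d+1-e'_0}{n-d+v}\le\tfrac{\eps d}{d-v+1}$, i.e. $e'_0\ge(n-d+1)-\tfrac{\eps d\,(n-d+v)}{d-v+1}$.

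It remains to check that this lower bound implies the stated one. For $v=1$ it is the equality $e'_0\ge(1-\eps)(n-d+1)$; for $v\ge2$ one uses $n-d+v\le v(n-d+1)$ (valid since $n-d+1\ge1$) together with control of $d/(d-v+1)$ when $d$ is not too small, so that $(n-d+1)-\tfrac{\eps d(n-d+v)}{d-v+1}\ge\tfrac{(1-\eps)(n-d+1)}{v^2}$; and the residual corner cases ($n$ close to $d$, so the target is below $1$ and the integrality of $e'_0$ makes it vacuous, or $\eps$ very close to $1$ with $v$ moderate) are absorbed by the cruder estimate of Lemma~\ref{lem:distance}, which under our hypothesis $e_1\ge\tfrac{(e'_1+1)m}{\eps d}$ already forces $e'_0\ge(e'_1+1)(1-\eps)/\eps$. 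I expect this final bookkeeping — matching the three regimes so that the single clean inequality $e'_0\ge(1-\eps)(n-d+1)/(e'_1+1)^2$ emerges, and carefully tracking the rounding in ``$e_1/v$'' — to be the fussiest part; the structural heart of the proof is the $(d-v)$‑base reduction of the second paragraph (which sidesteps any case analysis on how the decoding sits relative to $D_0$, precisely because adding $v>e'_1$ coordinates forces a collision with $\supp(z)$) together with the decreasing‑residual averaging.
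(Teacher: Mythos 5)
Your two structural steps are sound and genuinely different from the paper's argument (the paper fixes a random weight-$d$ vector $x$, flips $e'_1+1$ support bits to get $x'$, and runs a density/matching argument via Lemma~\ref{lem:denseGraph} on the $(e'_1+1)$-subsets $\supp(x)\setminus\supp(x')$ that decode consistently; you instead fix a $(d-v)$-base $D_0$, show all but $v-1+e'_0$ columns outside $D_0$ have residual $>e_1/v$, and average the decreasing residuals $R_k$). However, the final step --- which you defer as ``bookkeeping'' --- contains a genuine gap, not just fussiness. Your intermediate inequality is
\[
e'_0 \;\geq\; (n-d+1)\Bigl(1-\tfrac{\eps d}{d-v+1}\Bigr)-\tfrac{\eps d (v-1)}{d-v+1},
\]
and the coefficient of $n-d+1$ is non-positive as soon as $\eps \geq 1-\tfrac{v-1}{d}$, which happens for a nonempty range of $\eps<1$ whenever $v=e'_1+1\geq 2$. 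In that regime your bound is vacuous for all large $n$, while the target $\tfrac{(1-\eps)(n-d+1)}{v^2}$ grows linearly in $n$. Neither of your proposed patches closes this: integrality of $e'_0$ only helps when the target is below $1$, and the fallback from Lemma~\ref{lem:distance} gives $e'_0\geq v(1-\eps)/\eps$, a quantity independent of $n$. Concretely, with $v=2$, $d=10$, $\eps=0.95$, $n=100$, the target is about $1.14$ (so $e'_0\geq 2$ is required), the fallback gives only $e'_0\geq 1$, and the intermediate bound is negative. (Also, the specific route you sketch for the main regime, $n-d+v\leq v(n-d+1)$, yields the coefficient $1-\eps A v$ with $A=d/(d-v+1)\geq 1$, which is already negative for $\eps\geq 1/v$; you would need the exact form of the inequality, and even that does not survive $\eps$ near $1$.)

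The source of the loss is that your step 1 is a worst-case statement with no $\eps$ in it, and $\eps$ only enters through the threshold $e_1/v\geq m/(\eps d)$ compared against the average residual $m/(d-v+1)$ at position $d-v+1$; the resulting fraction $\eps d/(d-v+1)$ can exceed $1$. The paper's hypergraph argument keeps the $\eps$-dependence multiplicative in the right place: a $1-\eps$ fraction of the $v$-subsets of the $(n-d+v)$-element vertex set must be hit by $\supp(z)\setminus\supp(x')$, and the matching bound of Lemma~\ref{lem:denseGraph} turns that density directly into a cover of size $\tfrac{(1-\eps)(n-d+1)}{v^2}$, degrading gracefully as $\eps\to 1$. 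To rescue your approach you would need either to measure residuals against a $(d-1)$-base (recovering the denominator $d$ rather than $d-v+1$) while still being able to form the bad $v$-set $S$, or to reintroduce a fractional count of bad $v$-subsets rather than a worst-case count of bad columns --- at which point you are essentially back to the paper's matching argument.
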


\begin{proof}
  Let $x \in \zo^n$ be chosen uniformly at random among vectors of
  weight $d$.  Randomly flip $e'_1+1$ of the bits on the support of
  $x$ to $0$, and denote the resulting vector by $x'$. Using the
  partial ordering $\prec$ in the proof of the last lemma, it is
  obvious that $x' \prec x$, and hence, $\cM[x'] \prec \cM[x]$.  Let
  $b$ denote any disjunction of a number of coordinates in $x$ and
  $b'$ the same disjunction in $x'$. We must have
  \[ \Pr[b' = 0 | b = 1] \leq \frac{e'_1+1}{d}, \] as for $b$ to be
  $1$ at least one of the variables on the support of $x$ must be
  present in the disjunction and one particular such variable must
  necessarily be flipped to bring the value of $b'$ down to zero.
  Using this, the expected Hamming distance between $\cM[x]$ and
  $\cM[x']$ can be bounded as follows:
  \begin{eqnarray*}
    \Ex[ \dist(\cM[x], \cM[x']) ] = \sum_{i \in [m]} \mathds{1}( \cM[x]_i = 1 \land \cM[x']_i = 0 ) \leq \frac{e'_1+1}{d} \cdot m,
  \end{eqnarray*}
  where the expectation is over the randomness of $x$ and the bit
  flips, $\dist(\cdot, \cdot)$ denotes the Hamming distance between
  two vectors, and $\mathds{1}(\cdot)$ denotes an indicator predicate.

  Fix a particular choice of $x'$ that keeps the expectation at most
  $(e'_1+1)m/d$. Now the randomness is over the possibilities of $x$,
  that is, flipping up to $e'_1+1$ zero coordinates of $x'$ randomly.
  Denote by $\cX$ the set of possibilities of $x$ for which $\cM[x]$
  and $\cM[x']$ are $\frac{(e'_1+1)m}{\eps d}$-close, and by $\cS$ the
  set of all vectors that are monotonically larger than $x'$ and are
  $(e'_1+1)$-close to it.  Obviously, $\cX \subseteq \cS$, and, by
  Markov's inequality, we know that $|\cX| \geq (1-\eps) |\cS|$.

  Let $z$ be any valid decoding of $\cM[x']$, Thus, $(x', z)$ must be
  $(e'_0, e'_1)$-close.  Now assume that $e_1 \geq
  \frac{(e'_1+1)m}{\eps d}$ and consider any $x \in \cX$.  Hence,
  $(\cM[x], \cM[x'])$ are $(e_0, e_1)$-close and $(x, z)$ must be
  $(e'_0, e'_1)$-close by Definition~\ref{def:matrix}.  Regard $x, x',
  z$ as the characteristic vectors of sets $X, X', Z \subseteq [n]$,
  respectively, where $X' \subseteq X$. We know that $|X \sm Z| \leq
  e'_1$ and $|X \sm X'| = e'_1+1$.  Therefore,
  \begin{equation} \label{eqn:sets} |(X \sm X') \cap Z| = |X \sm X'| -
    |X\sm Z| + |X' \sm Z| > 0,
  \end{equation}
  and $z$ must take at least one nonzero coordinate from $\supp(x) \sm
  \supp(x')$.

  Now we construct an $(e'_1+1)$-hypergraph\footnote{See
    Appendix~\ref{app:proofs} for definitions.} $H$ as follows: The
  vertex set is $[n] \sm \supp(x')$, and for every $x \in \cX$, we put
  a hyperedge containing $\supp(x) \sm \supp(x')$. The density of this
  hypergraph is at least $1-\eps$, by the fact that $|\cX| \geq
  (1-\eps) \cS$.  Now Lemma~\ref{lem:denseGraph} implies that $H$ has
  a matching of size at least
  \[
  t := \frac{(1-\eps)(n-d+1)}{(e'_1+1)^2}.
  \]
  As by \eqref{eqn:sets}, $\supp(z)$ must contain at least one element
  from the vertices in each hyperedge of this matching, we conclude
  that $|\supp(z) \sm \supp(x')| \geq t$, and that $e'_0 \geq t$.
\end{proof}

The lemma above shows that if one is willing to keep the number $e'_1$
of false negatives in the reconstruction at the zero level (or bounded
by a constant), only an up to $O(1/d)$ fraction of false negatives in
the measurements can be tolerated (regardless of the number of
measurements), unless the number $e'_0$ of false positives in the
reconstruction grows to an enormous amount (namely, $\Omega(n)$ when
$n-d = \Omega(n)$) which is certainly undesirable.

Recall that exact reconstruction of $d$-sparse vectors of length $n$,
even in a noise-free setting, requires at least $\Omega(d^2 \log_d n)$
non-adaptive measurements.  However, it turns out that there is no
such restriction when an approximate reconstruction is sought for,
except for the following bound which can be shown using simple
counting and holds for adaptive noiseless schemes as well:

\begin{lem} \label{lem:lowerbound} Let $\cM$ be an $m \times n$
  measurement matrix that is $(0, 0, e'_0, e'_1)$-resilient for
  $d$-sparse vectors. Then \[m \geq d \log (n/d) - d - e'_0 - O(e'_1
  \log ((n-d-e'_0)/e'_1)),\] where the last term is defined to be zero
  for $e'_1 = 0$.
\end{lem}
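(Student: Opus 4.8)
The plan is to use a counting argument over the number of possible measurement outcomes versus the number of $d$-sparse vectors that must be distinguished, modulo the allowed ambiguity. First I would fix $e_0 = e_1 = 0$ (as in the statement) so that the matrix $\cM$ gives a deterministic decoding: each $y \in \zo^m$ has a valid decoding $z(y) \in \zo^n$, and whenever $x \in \zo^n$ is such that $(x, z(\cM[x]))$ are $(e'_0, e'_1)$-far we would get a contradiction with resiliency (since $(\cM[x], \cM[x])$ are trivially $(0,0)$-close). Hence for \emph{every} $x \in \zo^n$ we have that $x$ is $(e'_0, e'_1)$-close to $z(\cM[x])$; in particular the map $x \mapsto \cM[x]$ is injective when restricted to any set of $d$-sparse vectors that are pairwise $(e'_0+e'_1, e'_0+e'_1)$-far — but more usefully, the total number of $d$-sparse $x$ is at most $2^m$ times the maximum number of $d$-sparse vectors that can all be $(e'_0, e'_1)$-close to a single fixed $z$.

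The key steps, in order: (1) Lower-bound the number of $d$-sparse vectors: $\sum_{i=0}^{d}\binom{n}{i} \geq \binom{n}{d} \geq (n/d)^d$, so $\log(\#\{d\text{-sparse } x\}) \geq d\log(n/d)$. (2) Upper-bound, for a fixed $z \in \zo^n$, the number of $d$-sparse $x$ that are $(e'_0, e'_1)$-close to $z$: such an $x$ is obtained from $z$ by flipping at most $e'_0$ zeros of $z$ to one and at most $e'_1$ ones of $z$ to zero. Since $x$ must be $d$-sparse, $z$ itself has weight at most $d + e'_1$; the zeros-to-ones flips can occur in at most $\binom{n - d}{\leq e'_0} \leq 2^{O(e'_0)} \cdot (\text{something})$ — here I would be more careful and bound $\log\binom{n}{\leq e'_0} \leq e'_0 \log(en/e'_0)$, but since each flipped zero of $z$ must land inside the at most $d$ coordinates of $x$'s support, a cleaner bound is $\binom{d}{\leq e'_0} \leq 2^d$, contributing at most $d$ bits; and the ones-to-zeros flips can occur in at most $\binom{d + e'_1}{\leq e'_1}$ ways, contributing at most $O(e'_1 \log((n - d - e'_0)/e'_1))$ bits (the denominator $n-d-e'_0$ being the natural scale when one tracks which coordinates outside the final support got zeroed out). (3) Combine: $2^m \geq \#\{d\text{-sparse } x\} / \max_z \#\{x : x \text{ is } (e'_0,e'_1)\text{-close to } z, \ d\text{-sparse}\}$, take logs, and rearrange to $m \geq d\log(n/d) - d - e'_0 - O(e'_1 \log((n-d-e'_0)/e'_1))$, where the $-e'_0$ term absorbs the $\log\binom{d}{\leq e'_0} \leq d$ contribution... actually I should match the stated form, so I would arrange the zeros-to-ones count to give exactly an $e'_0$ additive loss (e.g.\ by noting $\binom{\cdot}{\leq e'_0} \leq 2^{\cdot}$ gives at most $e'_0$ bits when the ground set has size controlled by $e'_0$ itself, or simply absorbing it).

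The main obstacle I expect is getting the bookkeeping on the ambiguity ball to produce \emph{exactly} the stated error terms $-d - e'_0 - O(e'_1\log((n-d-e'_0)/e'_1))$ rather than some looser variant like $O(e'_0 \log n) + O(e'_1 \log n)$. The trick will be to observe that the two kinds of flips act on structurally different, small ground sets: the ones-to-zeros flips are confined to $\supp(z)$, which has size $O(d)$, and the zeros-to-ones flips, tracked relative to the final support, are confined to a set of size $n - d - e'_0$; choosing which set to charge each binomial coefficient against, and using $\binom{N}{\leq k} \leq (eN/k)^k$ only where it helps, is what yields the precise form. The fact that this bound holds even for adaptive noiseless schemes is immediate from the same argument: an adaptive noiseless scheme with $m$ queries also has at most $2^m$ distinct transcripts, and the decoding-ambiguity counting is unchanged.
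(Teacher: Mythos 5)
Your overall strategy is exactly the paper's: partition the $d$-sparse vectors by their (noiseless) measurement outcome, note that each class lies inside the $(e'_0,e'_1)$-ambiguity ball of a single valid decoding $z$, and conclude that $2^m$ times the maximum ball size must be at least $\binom{n}{d}\geq (n/d)^d$. Your derivation that every $d$-sparse $x$ is $(e'_0,e'_1)$-close to $z(\cM[x])$, the entropy lower bound, and the remark about adaptive schemes are all fine. The gap is in your step (2), and it is not mere bookkeeping: you have reversed the direction of the closeness relation. By Definition~\ref{def:matrix}, $(x,z)$ being $(e'_0,e'_1)$-close means $z$ is obtained from $x$ by turning at most $e'_0$ zeros into ones (false positives) and at most $e'_1$ ones into zeros (false negatives); hence $\wgt(z)\leq d+e'_0$ (not $d+e'_1$), and recovering $x$ from $z$ means \emph{deleting} at most $e'_0$ elements of $\supp(z)$ and \emph{adding} at most $e'_1$ elements outside $\supp(z)$. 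You have it the other way around, which forces you to control $\binom{n-\wgt(z)}{\leq e'_0}$ --- a quantity of order $e'_0\log n$ bits --- and your patch, ``each flipped zero of $z$ must land inside the at most $d$ coordinates of $x$'s support, so the count is $\binom{d}{\leq e'_0}\leq 2^d$,'' is invalid: $\supp(x)$ varies over the vectors being counted, so it does not shrink the ground set. (Take $z=0^n$, $e'_1=0$, $e'_0=d=1$: the ball contains $0^n$ and all $n$ weight-one vectors, which costs $\log(n+1)$ bits, not $d=1$ bit.) Carried out honestly, your orientation yields $m\geq d\log(n/d)-d-e'_1-O(e'_0\log(n/e'_0))$, a different statement with the roles of $e'_0$ and $e'_1$ interchanged.

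With the correct orientation the asymmetry of the stated bound falls out with no appeal to $\supp(x)$, which is what the paper does: each $x$ in the ball around a fixed valid decoding $z$ is determined by a subset of $\supp(z)$ of size at most $e'_0$ (at most $\log\binom{d+e'_0}{\leq e'_0}\leq d+e'_0$ bits, since $|\supp(z)|\leq d+e'_0$) together with a subset of $[n]\setminus\supp(z)$ of size at most $e'_1$ (at most $O(e'_1\log((n-d-e'_0)/e'_1))$ bits). Once you make this correction, the rest of your argument goes through verbatim.
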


\begin{proof}
  The proof is a simple counting argument.  For integers $a > b > 0$,
  we use the notation $V(a, b)$ for the volume of a Hamming ball of
  radius $b$ in $\zo^a$. It is given by
  \[
  V(a, b) = \sum_{i = 0}^b \binom{a}{i} \leq 2^{a h(b/a)},
  \]
  where $h(\cdot)$ is the binary entropy function defined as
  \[
  h(x) := -x\log_2(x) -(1-x)\log_2(1-x),
  \]
  and thus
  \[
  \log V(a, b) \leq b \log \frac{a}{b} + (a-b) \log \frac{a}{a-b} =
  \mathrm{\Theta}(b \log (a/b)).
  \]
  Also, denote by $V'(a, b, e_0, e_1)$ the number of vectors in
  $\zo^a$ that are $(e_0, e_1)$-close to a fixed $b$-sparse
  vector. Obviously, $V'(a, b, e_0, e_1) \leq V(b, e_0) V(a-b, e_1)$.
  Now consider any (without loss of generality, deterministic) reconstruction algorithm $D$
  and let $X$ denote the set of all vectors in $\zo^n$ that it returns
  for some noiseless encoding; that is,
  \[
  X := \{ x \in \zo^n \mid \exists y \in \cB, x = D(A[y]) \},
  \]
  where $\cB$ is the set of $d$-sparse vectors in $\zo^n$.  Notice
  that all vectors in $X$ must be $(d+e'_0)$-sparse, as they have to
  be close to the corresponding ``correct'' decoding.  For each vector
  $x \in X$ and $y \in \cB$, we say that $x$ is \emph{matching} to $y$
  if $(y, x)$ are $(e'_0, e'_1)$-close. A vector $x \in X$ can be
  matching to at most $v := V'(n, d+e'_0, e'_0, e'_1)$ vectors in
  $\cB$, and we upper bound $\log v$ as follows:
  \[
  \log v \leq \log V(n-d-e'_0, e'_1) + \log V(d+e'_0, e'_0) = O(e'_1
  \log ((n-d-e'_0)/e'_1)) + d + e'_0,
  \]
  where the term inside $O(\cdot)$ is interpreted as zero when $e'_1 =
  0$.  Moreover, every $y \in \cB$ must have at least one matching
  vector in $X$, namely, $D(\cM[y])$. This means that $|X| \geq
  |\cB|/v$, and that
  \[
  \log |X| \geq \log |\cB| - \log v \geq d \log (n/d) - d - e'_0 -
  O(e'_1 \log ((n-d-e'_0)/e'_1)).
  \]
  Finally, we observe that the number of measurements has to be at
  least $|X|$ to enable $D$ to output all the vectors in $X$.
\end{proof}

According to the lemma, even in the noiseless scenario, any
reconstruction method that returns an approximation of the sparse
vector up to $e'_0=O(d)$ false positives and without false negatives
will require $\Omega(d \log (n/d))$ measurements. As we will show in
the next section, an upper bound of $O(d \log n)$ is in fact
attainable even in a highly noisy setting using only non-adaptive
measurements.  This in particular implies an asymptotically optimal
trivial two-stage group testing scheme.

\subsection{A Noise-Resilient Construction} \label{sec:nrConstr}

In this section we introduce our general construction and design
measurement matrices for testing $d$-sparse vectors in $\zo^n$.  The
matrices can be seen as adjacency matrices of certain unbalanced
bipartite graphs constructed from good randomness condensers or
extractors. The main technique that we use to show the desired
properties is the \emph{list-decoding view} of randomness condensers,
extractors, and expanders, developed over the recent years starting
from the work of Ta-Shma and Zuckerman on \emph{extractor codes}
\cite{ref:TZ04} and followed by Guruswami, Umans, Vadhan \cite{ref:GUV09}
and Vadhan \cite{ref:Vad10}.

\subsubsection{Construction from Condensers}

We start by introducing the terms and tools that we will use in our
construction and its analysis.

\begin{defn} (mixtures, agreement, and agreement list) \label{def:mix}
  \index{mixture} \index{agreement list} Let $\Sigma$ be a finite
  set. A \emph{mixture} over $\Sigma^n$ is an $n$-tuple $S := (S_1,
  \ldots, S_n)$ such that every $S_i$, $i \in [n]$, is a nonempty
  subset of $\Sigma$.

  The \emph{agreement} of $w := (w_1, \ldots w_n) \in \Sigma^n$ with
  $S$, denoted by $\agr(w, S)$, is the quantity \[ \frac{1}{n} |\{i
  \in [n]\colon w_i \in S_i\}|.  \] Moreover, we define the quantity
  \[ \wgt(S) := \sum_{i \in [n]} |S_i| \] and \[\rho(S) := \wgt(S)/(n
  |\Sigma|),\] where the latter is the expected agreement of a random
  vector with $S$.

  For example, consider a mixture $S := (S_1,\ldots, S_8)$ over $[4]^8$
  where $S_1 :=\emptyset, S_2 :=\{1,3\}, S_3 :=\{1,2\}, S_4 :=\{1,4\},
  S_5 :=\{1\}, S_6 :=\{3\}, S_7 :=\{4\}, S_8 :=\{1,2,3,4\}$. For this
  example, we have
  \[
   \agr((1,3,2,3,4,3,4,4),S) = 5/8,
  \]
  and $\rho(S) = 13/32$.

  For a code $\cC \subseteq \Sigma^n$ and $\alpha \in (0, 1]$, the
  \emph{$\alpha$-agreement list} of $\cC$ with respect to $S$, denoted
  by $\List_\cC(S, \alpha)$, is defined as the set\footnote{When
    $\alpha=1$, we consider codewords with full agreement with the
    mixture.}  \[ \List_\cC(S, \alpha) := \{ c \in \cC\colon \agr(c,
  S) > \alpha \}.  \]
\end{defn}

\begin{defn} (induced code) \index{induced code} Let $f\colon \Gamma
  \times \Omega \to \Sigma$ be a function mapping a finite set $\Gamma
  \times \Omega$ to a finite set $\Sigma$. For $x \in \Gamma$, we use
  the shorthand $f(x)$ to denote the vector $y := (y_i)_{i \in
    \Omega}$, $y_i := f(x, i)$, whose coordinates are indexed by the
  elements of $\Omega$ in a fixed order.  The \emph{code induced by
    $f$}, denoted by $\cC(f)$ is the set \[ \{ f(x)\colon x \in \Gamma
  \}. \] The induced code has a natural encoding function given by $x
  \mapsto f(x)$.
\end{defn}

\begin{defn} (codeword graph) \index{codeword graph} Let $\cC
  \subseteq \Sigma^n$, $|\Sigma| = q$, be a $q$-ary code. The
  \emph{codeword graph} of $\cC$ is a bipartite graph with left vertex
  set $\cC$ and right vertex set $n \times \Sigma$, such that for
  every $x = (x_1, \ldots, x_n) \in \cC$, there is an edge between $x$
  on the left and $(1, x_1), \ldots, (n, x_n)$ on the right. The
  \emph{adjacency matrix} of the codeword graph is an $n |\Sigma|
  \times |\cC|$ binary matrix whose $(i, j)$th entry is $1$ if and only if there
  is an edge between the $i$th right vertex and the $j$th left vertex.
\end{defn}

\begin{figure}
\newcommand{\ro}[1]{\mbox{\begin{rotate}{90}$#1$\end{rotate}}\text{\hspace{0.5em}}} 
\begin{centering}
\begin{tabular}{rl}
\vspace{3em} &
\multirow{4}{*}{\vspace{-4cm}\hspace{-2mm}\mbox{\includegraphics[width=5cm]{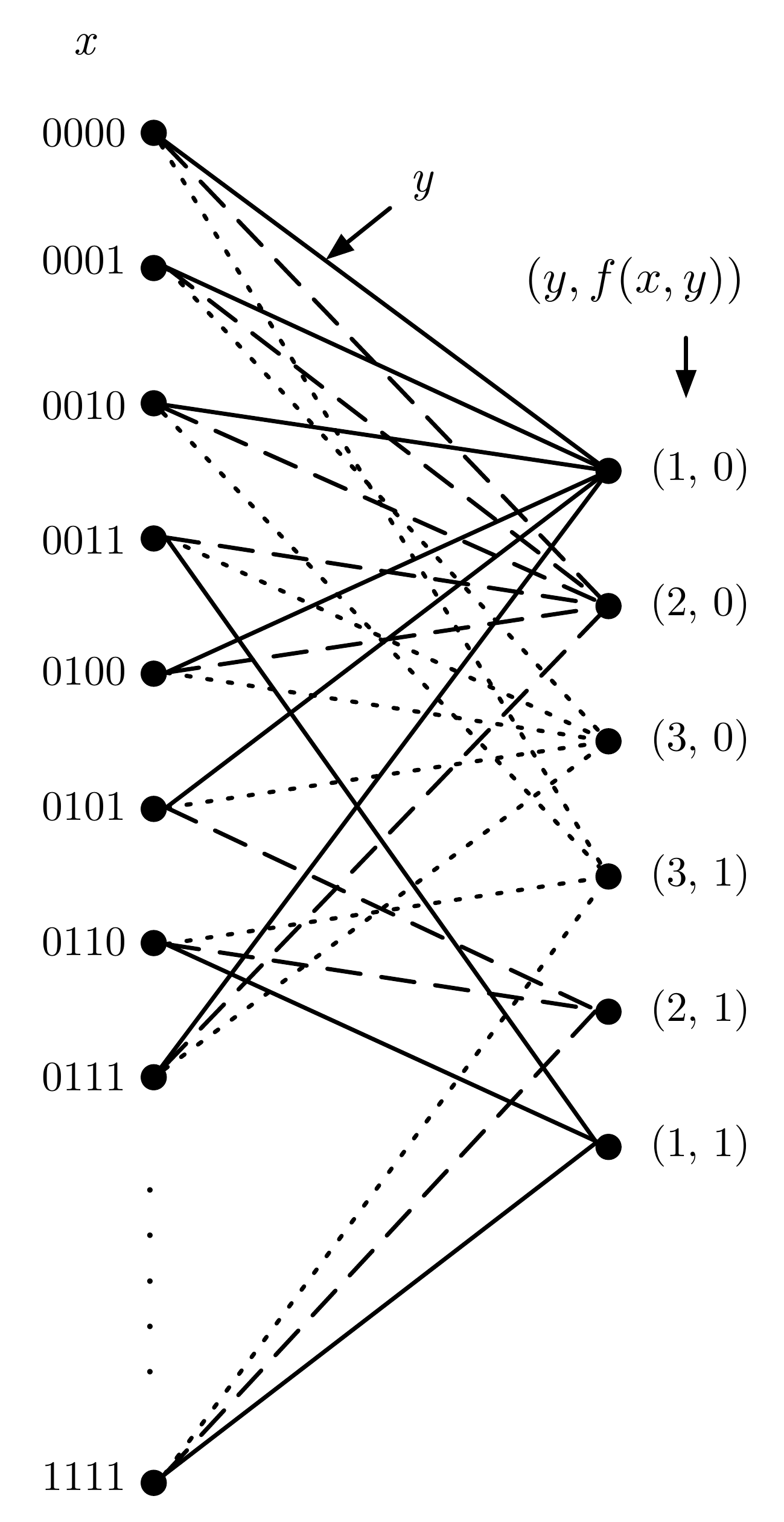}}}
\\
\hspace{-5mm} \mbox{$
\begin{array}{c}
\begin{array}{ll}
\begin{matrix} \hspace{1.4em}
\ro{x=0000}&\ro{x=0001}&\ro{x=0010}&\ro{x=0011}&\ro{x=0100}&\ro{x=0101}&\ro{x=0110}&\ro{x=0111}&\text{\hspace{1em}}&\ro{x=1111}
\end{matrix}\\
\begin{pmatrix}
0&0&0&1&0&0&1&0&\ldots&1 \\
0&0&0&0&0&1&1&0&\ldots&1 \\
1&0&1&0&0&0&1&0&\ldots&1 \\
\end{pmatrix}
\begin{matrix}
y=1\\ y=2\\ y=3
\end{matrix}
\end{array} \\
f(x,y)
\end{array}
$} &
\\
\vspace{15mm}
\\ \hspace{-5mm}
\mbox{$
\begin{array}{rr}
\begin{matrix}
(y,f(x,y)) \hspace{2em}
\ro{x=0000}&\ro{x=0001}&\ro{x=0010}&\ro{x=0011}&\ro{x=0100}&\ro{x=0101}&\ro{x=0110}&\ro{x=0111}&\text{\hspace{1em}}&\ro{x=1111}\hspace{0.5em}
\end{matrix}\\
\begin{array}{r}
(1,0)\\(1,1)\\(2,0)\\(2,1)\\(3,0)\\(3,1)\\
\end{array}
\begin{pmatrix}
1&1&1&0&1&1&0&1&\ldots&0 \\
0&0&0&1&0&0&1&0&\ldots&1 \\ \hline
1&1&1&1&1&0&0&1&\ldots&0 \\
0&0&0&0&0&1&1&0&\ldots&1 \\ \hline
0&1&0&1&1&1&0&1&\ldots&0 \\
1&0&1&0&0&0&1&0&\ldots&1 \\
\end{pmatrix}
\end{array}
$} &
\end{tabular}
\end{centering}
\caption[A function with its truth table, codeword graph of the induced
code, and the adjacency matrix of the graph]{A function $f\colon \zo^4 \times [3] \to \zo$ with its truth table (top left), codeword graph of the induced
code (right), and the adjacency matrix of the graph (bottom left). Solid, dashed and dotted edges
in the graph respectively correspond to the choices $y=1$, $y=2$, and $y=3$ of the second argument.}
\label{fig:codewordGraph}
\end{figure}

A simple example of a function with its truth table, codeword graph of the induced code along with its
adjacency matrix is given in Figure~\ref{fig:codewordGraph}.

The following theorem is a straightforward generalization of the result in
\cite{ref:TZ04} that is also shown in \cite{ref:GUV09} (we have
included a proof for completeness):

\begin{thm} \label{thm:list} Let $f\colon \zo^\tn \times \zo^\tee \to
  \zo^\tl$ be a strong $k \to_\eps k'$ condenser, and $\cC\subseteq
  \Sigma^{2^\tee}$ be its induced code, where $\Sigma :=
  \zo^\tl$. Then for any mixture $S$ over $\Sigma^{2^\tee}$ we have \[
  |\List_\cC(S, \rho(S) 2^{\tl-k'} + \eps)| < 2^k. \]
\end{thm}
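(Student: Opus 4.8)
The plan is to argue by contradiction using the list-decoding view of condensers, following the standard template from \cite{ref:TZ04,ref:GUV09}. Suppose that $|\List_\cC(S, \alpha)| \geq 2^k$ where $\alpha := \rho(S) 2^{\tl-k'} + \eps$. I would first fix a set $L \subseteq \List_\cC(S,\alpha)$ of exactly $2^k$ codewords, and let $\cX$ be the flat distribution on the $2^k$ pre-images in $\zo^\tn$ corresponding (via the encoding map $x \mapsto f(x)$) to these codewords; since $f$ is injective on the seed-indexed outputs in the relevant sense, $\cX$ is a genuine $k$-source. Because $f$ is a strong $k \to_\eps k'$ condenser, the distribution of $(Y, f(X,Y))$ for $X \sim \cX$ and $Y \sim \U_\tee$ is $\eps$-close to some distribution $(\U_\tee, \cZ)$ with min-entropy at least $\tee + k'$.

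The heart of the argument is to set up a distinguishing event from the mixture $S$ and derive a contradiction with this closeness. Concretely, I would consider the event $T \subseteq \zo^\tee \times \zo^\tl$ defined by $T := \{(i, w) : w \in S_i\}$ (treating $\zo^\tee$ as indexing the $2^\tee$ coordinates and $S_i \subseteq \Sigma = \zo^\tl$). On the one hand, for any distribution $(\U_\tee, \cZ)$ with $H_\infty \geq \tee + k'$, each point gets mass at most $2^{-\tee-k'}$, so the probability assigned to $T$ is at most $|T| \cdot 2^{-\tee - k'} = \wgt(S) 2^{-\tee-k'} = \rho(S)\, 2^\tl \cdot 2^{-k'}$ (using $\wgt(S) = \rho(S)\, 2^\tee |\Sigma| = \rho(S)\, 2^{\tee+\tl}$). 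On the other hand, for $X \sim \cX$ the probability that $(Y, f(X,Y)) \in T$ is exactly $\Exp_{X}[\agr(f(X), S)]$, which by the choice of $L$ exceeds $\alpha = \rho(S) 2^{\tl - k'} + \eps$ since every codeword in $L$ has agreement strictly greater than $\alpha$ with $S$. The gap between these two probabilities is more than $\eps$, contradicting Proposition~\ref{prop:statDist} applied to the $\eps$-closeness of $(Y, f(X,Y))$ and $(\U_\tee, \cZ)$. Hence $|\List_\cC(S,\alpha)| < 2^k$.

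I would be somewhat careful about two points. First, the reduction to a flat source: since we only know $|\List_\cC(S,\alpha)| \geq 2^k$, I truncate to exactly $2^k$ elements, and I need the encoding map to be injective so that distinct codewords correspond to distinct pre-images of size $2^k$ — this is immediate from the definition of the induced code (its encoding function is $x \mapsto f(x)$, which we may take injective without loss of generality, or else work directly with the $2^k$ distinct codewords as the support of a flat distribution on $\cC$ and transport through). Second, I must make sure the ``agreement equals probability of hitting $T$'' identity is stated with the right normalization: $\agr(w, S) = \frac{1}{2^\tee}|\{i : w_i \in S_i\}| = \Pr_{Y \sim \U_\tee}[(Y, w_Y) \in T]$, and then average over $X \sim \cX$. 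The main obstacle, such as it is, is bookkeeping the min-entropy budget: the output distribution $\cZ$ has min-entropy $\tee + k'$ over $\zo^{\tee + \tl}$, and I need to translate ``$H_\infty(\cZ) \geq \tee + k'$'' into the per-point bound $2^{-\tee-k'}$ and hence the set bound on $\Pr[T]$ — this is exactly the characterization recorded right after the min-entropy definition in the excerpt ($k$-source $\iff$ probability $\leq 2^{-k}$ everywhere), so no genuine difficulty arises, only care in matching the strong-condenser guarantee (which is about the joint distribution $(\U_\tee, \cZ)$, not $\cZ$ alone) to the event $T$.
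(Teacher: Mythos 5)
Your proposal is correct and is essentially the paper's own argument: both define the test/event $T=\{(i,w):w\in S_i\}$, bound its mass under any min-entropy $\tee+k'$ distribution by $\wgt(S)2^{-\tee-k'}=\rho(S)2^{\tl-k'}$, identify $\Pr_{Y}[(Y,f(X,Y))\in T]$ with the expected agreement (which exceeds $\rho(S)2^{\tl-k'}+\eps$ on the list), and conclude via the statistical-distance characterization that the condenser guarantee is violated. Your extra care about truncating to exactly $2^k$ codewords and passing from codewords to pre-images is a point the paper glosses over, but it does not change the argument.
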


\begin{proof}
  Index the coordinates of $S$ by the elements of $\zo^t$ and denote
  the $i$th coordinate by $S_i$.  Let $Y$ be any random variable with
  min-entropy at least $\tee+k'$ distributed on $\F_2^{\tee+k'}$.
  Define an infor\-mation-theoretic test $T\colon \zo^\tl \times
  \zo^\tee \to \zo$ as follows: $T(x, i) = 1$ if and only if $x \in
  S_i$.  Observe that
  \[ \Pr[T(Y) = 1] \leq \wgt(S) 2^{-(\tee+k')} = \rho(S) 2^{\tl-k'},\]
  and that for every vector $w \in (\zo^\ell)^{2^t}$, \[ \Pr_{i \sim
    \U_\tee}[T(w_i, i) = 1] = \agr(w, S).\] Now, let the random
  variable $X=(X_1, \ldots, X_{2^\tee})$ be uniformly distributed on
  the codewords in $\List_\cC(S, \rho(S) 2^{\tl-k'} + \eps)$ and $Z
  \sim \U_\tee$.  Thus, from Definition~\ref{def:mix} we know that \[
  \Pr_{X, Z}[T(X_Z, Z) = 1] > \rho(S) 2^{\tl-k'} + \eps.\] As the
  choice of $Y$ was arbitrary, this implies that $T$ is able to
  distinguish between the distribution of $(Z, X)$ and any
  distribution on $\zo^{\tee+\tl}$ with min-entropy at least
  $\tee+k'$, with bias greater than $\eps$, which by the definition of
  condensers implies that the min-entropy of $X$ must be less than
  $k$, or \[|\List_\cC(S, \rho(S) 2^{\tl-k'} + \eps)| < 2^k.\]
\end{proof}

Now using the above tools, we are ready to describe and analyze our
construction of error-resilient measurement matrices. We first state a
general result without specifying the parameters of the condenser, and
then instantiate the construction with various choices of the
condenser, resulting in matrices with different properties.

\begin{thm} \label{thm:main} Let $f\colon \zo^\tn \times \zo^\tee \to
  \zo^\tl$ be a strong $k \to_\eps k'$ condenser, and $\cC$ be its
  induced code. Suppose that the parameters $\PI, \nu, \gamma > 0$ are
  chosen so that \[(\PI + \gamma)2^{\tl-k'} + \nu/\gamma < 1 -
  \eps, \] and $d := \gamma 2^\tl$. Then the adjacency matrix of the
  codeword graph of $\cC$ (which has $m := 2^{\tee+\tl}$ rows and $n
  := 2^\tn$ columns) is a $(\PI m , (\nu/d) m, 2^k-d, 0)$-resilient
  measurement matrix for $d$-sparse vectors. Moreover, it allows for a
  reconstruction algorithm with running time $O(mn)$. 
\end{thm}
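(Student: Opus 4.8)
The plan is to pass to the ``codeword graph / mixture'' dictionary and then let Theorem~\ref{thm:list} (the list-decoding view of condensers) do the real work. First I would fix notation. Index the $m=2^{\tee+\tl}$ rows of $\cM$ by pairs $(i,\sigma)$ with $i\in\zo^\tee$ a coordinate of the induced code $\cC$ and $\sigma\in\Sigma:=\zo^\tl$; then the column of $\cM$ indexed by a message $c'\in\zo^\tn$ is supported exactly on the $2^\tee$ positions $\{(i,f(c')_i): i\in\zo^\tee\}$. Hence for a $d$-sparse $x\in\zo^n$ the outcome $\cM[x]$ is precisely the characteristic vector of the mixture $T(x):=(T_i(x))_i$ with $T_i(x)=\{f(c')_i : c'\in\supp(x)\}$. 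Two elementary facts follow and will be used repeatedly: $\wgt(\cM[x])=\wgt(T(x))\le 2^\tee|\supp(x)|\le 2^\tee d=\gamma m$ (using $d=\gamma 2^\tl$), and every $c'\in\supp(x)$ trivially has $\agr(f(c'),T(x))=1$.

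Next I would describe the decoder, which also serves as the $O(mn)$ reconstruction algorithm. Given $y\in\zo^m$, form the mixture $S=S(y)$ over $\Sigma^{2^\tee}$ by $S_i:=\{\sigma: y_{(i,\sigma)}=1\}$, set the threshold $\alpha:=\rho(S)2^{\tl-k'}+\eps$, and output the vector $z\in\zo^n$ whose support is the (message-level) agreement list $\List_\cC(S,\alpha)=\{c'\in\zo^\tn:\agr(f(c'),S)>\alpha\}$. Theorem~\ref{thm:list} then yields $|\supp(z)|<2^k$ at once; the footnote in Definition~\ref{def:mix} about full agreement is irrelevant since $\alpha<1$. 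Computing $\rho(S)$ from $\wgt(y)$ costs $O(m)$, and for each of the $n$ columns, scanning it against $y$ to compute $\agr(f(c'),S)$ costs $O(m)$, so the whole procedure runs in time $O(mn)$, as claimed.

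Third — the heart of the matter — I would verify resilience: for every $x$ of weight $d$, if $(\cM[x],y)$ are $(\PI m,(\nu/d)m)$-close then $(x,z)$ are $(2^k-d,0)$-close. The two noise types play different roles, and the bookkeeping must be arranged so that the stated inequality $(\PI+\gamma)2^{\tl-k'}+\nu/\gamma<1-\eps$ appears exactly. The $0\to1$ flips (at most $\PI m$) only inflate $y$, so $\wgt(S)=\wgt(y)\le\wgt(\cM[x])+\PI m\le(\gamma+\PI)m$, whence $\rho(S)\le\PI+\gamma$ and $\alpha\le(\PI+\gamma)2^{\tl-k'}+\eps$. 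The $1\to0$ flips (at most $(\nu/d)m$) are the only way a position $(i,f(c')_i)$ — which is a $1$ of $\cM[x]$ for every $c'\in\supp(x)$ — can become a $0$ of $y$; for a fixed $c'$ these are $2^\tee$ distinct positions, so at most $(\nu/d)m=(\nu/d)2^{\tee+\tl}$ of them are disturbed, giving $\agr(f(c'),S)\ge 1-(\nu/d)2^\tl=1-\nu/\gamma$ for every $c'\in\supp(x)$. Combining, $\agr(f(c'),S)\ge 1-\nu/\gamma>(\PI+\gamma)2^{\tl-k'}+\eps\ge\alpha$ by hypothesis, so $\supp(x)\subseteq\supp(z)$ (this is the assertion $e'_1=0$, i.e.\ no true positive is ever dropped — exactly what makes the scheme usable in the trivial two-stage setting discussed after the theorem); and since $|\supp(z)|<2^k$ while $|\supp(x)|=d$, we get $|\supp(z)\setminus\supp(x)|<2^k-d$, i.e.\ $e'_0=2^k-d$.

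The main obstacle is not a deep one: there is no hard analytic step, since all the ``pseudorandomness'' is outsourced to Theorem~\ref{thm:list}. The actual care is in the translation between disjunct-matrix noise ($e_0$ false positives and $e_1$ false negatives in the outcome) and mixture quantities ($e_0$-noise perturbs $\rho(S)$, $e_1$-noise perturbs the agreement of the genuine codewords), and in pinning down $\alpha$ together with the identity $d=\gamma 2^\tl$ so that the parameter inequality of the theorem is exactly the condition the argument consumes. I would also be slightly careful to phrase the $e'_0=2^k-d$ guarantee for inputs of weight $d$ (for lighter inputs one only gets $|\supp(z)|<2^k$), and to note that the hypothesis $(\PI+\gamma)2^{\tl-k'}+\nu/\gamma<1-\eps$ with $\PI,\nu,\gamma>0$ automatically forces $d=\gamma 2^\tl<2^{k'}\le 2^k$, so the regime $2^k-d>0$ is never in question.
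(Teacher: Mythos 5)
Your proposal is correct and follows essentially the same route as the paper: translate the noisy outcome into a mixture $S$, bound $\rho(S)\le \PI+\gamma$ using the false positives and the column weight $2^\tee$, bound the agreement of each true codeword below by $1-\nu/\gamma$ using the false negatives, and invoke Theorem~\ref{thm:list} to cap the output support at $2^k$. The only cosmetic difference is that you threshold the agreement list at $\rho(S)2^{\tl-k'}+\eps$ while the paper thresholds at $1-\nu/\gamma$; since the hypothesis makes the latter strictly larger, both decoders contain $\supp(x)$ and both lists fall under the same $2^k$ bound, and your closing remarks about weight-$d$ inputs and $d<2^{k'}$ are accurate refinements rather than corrections.
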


\begin{proof}
  Define $L := 2^\tl$ and $T := 2^\tee$.  Let $\cM$ be the adjacency
  matrix of the codeword graph of $\cC$.  It immediately follows from
  the construction that the number of rows of $\cM$ (denoted by $m$)
  is equal to $TL$. Moreover, notice that the Hamming weight of each
  column of $\cM$ is exactly $T$.

  Let $x \in \zo^n$ and denote by $y \in \zo^m$ its encoding, i.e., $y
  := \cM[x]$, and by $\hat{y} \in \zo^m$ a \emph{received word}, or a
  \emph{noisy} version of $y$.

  The encoding of $x$ can be schematically viewed as follows: The
  coefficients of $x$ are assigned to the left vertices of the
  codeword graph and the encoded bit on each right vertex is the
  bitwise ``or'' of the values of its neighbors.

  The coordinates of $x$ can be seen in one-to-one correspondence with
  the codewords of $\cC$. Let $X \subseteq \cC$ be the set of
  codewords corresponding to the support of $x$.  The coordinates of
  the noisy encoding $\hat{y}$ are indexed by the elements of $[T]
  \times [L]$ and thus, $\hat{y}$ naturally defines a mixture $S =
  (S_1, \ldots, S_{T})$ over $[L]^T$, where $S_i$ contains $j$ iff
  $\hat{y}$ at position $(i, j)$ is $1$.

  Observe that $\rho(S)$ is the relative Hamming weight (denoted below
  by $\delta(\cdot)$) of $\hat{y}$; thus, we have
  \[
  \rho(S) = \delta(\hat{y}) \leq \delta(y) + \PI \leq d/L + \PI =
  \gamma + \PI,
  \]
  where the last inequality comes from the fact that the relative
  weight of each column of $\cM$ is exactly $1/L$ and that $x$ is
  $d$-sparse.

  Furthermore, from the assumption we know that the number of false
  negatives in the measurement is at most $\nu TL/d = \nu
  T/\gamma$. Therefore, any codeword in $X$ must have agreement at
  least $1- \nu/\gamma$ with $S$.  This is because $S$ is indeed
  constructed from a mixture of the elements in $X$, modulo false
  positives (that do not decrease the agreement) and at most $\nu
  T/\gamma$ false negatives each of which can reduce the agreement by
  at most $1/T$.

  Accordingly, we consider a decoder which, similar to the distance
  decoder that we have introduced before, simply outputs a binary
  vector $\hat{x}$ supported on the coordinates corresponding to those
  codewords of $\cC$ that have agreement larger than $1 - \nu/\gamma$
  with $S$. Clearly, the running time of the decoder is linear in the
  size of the measurement matrix.

  By the discussion above, $\hat{x}$ must include the support of $x$.
  Moreover, Theorem~\ref{thm:list} applies for our choice of
  parameters, implying that $\hat{x}$ must have weight less than
  $2^k$.
\end{proof}

\subsubsection{Instantiations}

Now we instantiate the general result given by Theorem~\ref{thm:main}
with various choices of the underlying condenser, among the results
discussed in Section~\ref{sec:extrConstr}, and compare the obtained
parameters.  First, we consider two extreme cases, namely, a
non-explicit optimal condenser with zero overhead (i.e., extractor)
and then a non-explicit optimal condenser with zero loss (i.e.,
lossless condenser) and then consider how known explicit constructions
can approach the obtained bounds. A summary of the obtained results is
given in Table~\ref{tab:results}.

\begin{table} 
  \caption[Summary of the noise-resilient group testing schemes]{A summary of constructions in Section~\ref{sec:nrConstr}. The parameters $\alpha \in [0,1)$ and
    $\delta \in (0, 1]$ are arbitrary constants, $m$ is the number of measurements,
    $e_0$ (resp., $e_1$) the number of tolerable false positives (resp., negatives) in the measurements,
    and $e'_0$ is the number of false positives in the reconstruction. The fifth column shows whether
    the construction is explicit (Exp) or randomized (Rnd), and the last column shows
    the running time of the reconstruction algorithm.}
  \begin{center}
    \begin{tabular}{|c|c|c|c|l|l|}
      \hline
      &&&& Exp/ & Rec. \\
      $m$ & $e_0$ & $e_1$ & $e'_0$ & Rnd & Time \\
      \hline
      $O(d \log n)$ & $\alpha m$ & $\Omega(m/d)$ & $O(d)$ & Rnd & $O(mn)$ \\
      $O(d \log n)$ & $\Omega(m)$ & $\Omega(m/d)$ & $\delta d$ & Rnd & $O(mn)$ \\
      $O(d^{1+o(1)} \log n)$ & $\alpha m$ & $\Omega(m/d)$ & $O(d)$ & Exp & $O(mn)$ \\
      $d \cdot \qpoly(\log n)$ & $\Omega(m)$ & $\Omega(m/d)$ & $\delta d$ & Exp & $O(mn)$ \\
      $d \cdot \qpoly(\log n)$ & $\alpha m$ & $\Omega(m/d)$ & $O(d)$ & Exp & $\poly(m)$ \\
      $\poly(d) \poly(\log n)$ & $\poly(d) \poly(\log n)$ & $\Omega(e_0/d)$ & $\delta d$ & Exp & $\poly(m)$ \\
      \hline
    \end{tabular}
  \end{center}
  \label{tab:results}
\end{table}

\subsubsection*{Optimal Extractors}

Recall Radhakrishan and Ta-Shma's non-constructive bound that for
every choice of the parameters $k, \tn, \eps$, there is a strong $(k,
\eps)$-extractor with input length $\tn$, seed length $t = \log
(\tn-k) + 2 \log (1/\eps) + O(1)$ and output length $\tl = k - 2\log
(1/\eps) - O(1)$, and that the bound is achieved by a random function.
Plugging this result in Theorem~\ref{thm:main}, we obtain a
non-explicit measurement matrix from a simple, randomized construction
that achieves the desired trade-off with high probability:

\begin{coro} \label{coro:optext} For every choice of constants $\PI
  \in [0, 1)$ and $\nu \in [0, \nu_0)$, $\nu_0 := (\sqrt{5-4\PI} -
  1)^3/8$, and positive integers $d$ and $n \geq d$, there is an $m
  \times n$ measurement matrix, where $m = O(d \log n)$, that is $(\PI
  m, (\nu/d) m, O(d), 0)$-resilient for $d$-sparse vectors of length
  $n$ and allows for a reconstruction algorithm with running time
  $O(mn)$.
\end{coro}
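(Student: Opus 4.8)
The plan is to obtain Corollary~\ref{coro:optext} by feeding the non-explicit, parameter-optimal strong extractor of Radhakrishnan and Ta-Shma~\cite{ref:lowerbounds} into the black-box reduction of Theorem~\ref{thm:main}. Since a strong $(k,\eps)$-extractor is exactly a strong $k \to_\eps k'$ condenser of zero overhead, i.e.\ with $k' = \tl$, we have $2^{\tl - k'} = 1$ and the hypothesis $(\PI + \gamma)2^{\tl - k'} + \nu/\gamma < 1 - \eps$ of Theorem~\ref{thm:main} collapses to $(\PI + \gamma) + \nu/\gamma < 1 - \eps$. First I would fix $\gamma$ to be the positive root of $\gamma^2 + \gamma = 1 - \PI$, namely $\gamma := (\sqrt{5 - 4\PI} - 1)/2$, which lies in $(0,1)$ for $\PI \in [0,1)$. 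With this choice $1 - \PI - \gamma = \gamma^2$, so the inequality becomes $\nu/\gamma < \gamma^2 - \eps$, i.e.\ $\nu < \gamma^3 - \eps\gamma = \nu_0 - \eps\gamma$; hence, for any $\nu < \nu_0$, choosing the constant $\eps$ small enough that $\eps < (\nu_0 - \nu)/\gamma$ makes Theorem~\ref{thm:main} applicable.

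Next I would pin down the remaining parameters so that $m$, $n$ and the resilience quantities come out as claimed. Take input length $\tn := \log_2 n$ (if $n$ is not a power of two, set $\tn := \lceil \log_2 n \rceil$, build the scheme on $n' := 2^{\tn} \in [n, 2n)$ coordinates and keep the first $n$ columns; this only discards columns and still handles all $d$-sparse vectors on $n$ coordinates, with $\log n' = \Theta(\log n)$). Choose $\tl$ so that $d = \gamma 2^{\tl}$ — as is customary in this chapter (cf.\ the rounding remark in the proof of Lemma~\ref{lem:distance}) I would suppress the negligible integer-rounding of $\tl$ and $\gamma$. The Radhakrishnan--Ta-Shma bound then supplies, with probability $1 - o(1)$ over a uniformly random function, a strong $(k,\eps)$-extractor $f\colon \zo^{\tn} \times \zo^{\tee} \to \zo^{\tl}$ with $k = \tl + 2\log_2(1/\eps) + O(1)$ and seed length $\tee = \log_2(\tn - k) + 2\log_2(1/\eps) + O(1)$, which is meaningful as long as $k \le \tn$, i.e.\ whenever $d \le c_1 n$ for a suitable constant $c_1 < 1$ (the complementary range $d = \Theta(n)$ is outside the interesting regime and is handled separately).

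It then remains to read the conclusion off Theorem~\ref{thm:main}. The adjacency matrix of the codeword graph of the induced code $\cC(f)$ has $n$ columns and $m = 2^{\tee + \tl} = 2^{\tee}\cdot 2^{\tl} = O\!\big((\tn - k)/\eps^2\big)\cdot O(d/\gamma) = O(d\log n)$ rows, using that $\eps,\gamma$ are absolute constants and $\tn - k = \log_2(n/d) + O(1) = O(\log n)$. By the theorem it is $(\PI m,\ (\nu/d)m,\ 2^k - d,\ 0)$-resilient for $d$-sparse vectors, so the tolerated errors are $\PI m$ false positives and $(\nu/d)m$ false negatives exactly as stated, the reconstruction incurs $e'_1 = 0$ false negatives, and $e'_0 = 2^k - d = 2^{\tl}\big(2^{O(1)}/\eps^2 - \gamma\big) = O(2^{\tl}) = O(d)$ false positives; the $O(mn)$-time decoding algorithm is inherited verbatim. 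The only genuinely delicate part of the argument is the bookkeeping just sketched — reconciling the integrality constraints on $\tn,\tl,\tee$ and the power-of-two rounding of $n$ and $d$ while keeping $m = O(d\log n)$, $e'_0 = O(d)$ and the hypothesis of Theorem~\ref{thm:main} simultaneously satisfied — but this is routine and of the same flavour as the rounding arguments elsewhere in the chapter; everything else is a direct substitution.
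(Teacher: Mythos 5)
Your proposal is correct and follows essentially the same route as the paper: instantiate Theorem~\ref{thm:main} with the Radhakrishnan--Ta-Shma optimal strong extractor (a zero-overhead condenser, so $2^{\tl-k'}=1$), pick $\gamma$ and $\eps$ to satisfy $\PI+\gamma+\nu/\gamma<1-\eps$, set $d=\gamma 2^{\tl}$, and read off $m=2^{\tee+\tl}=O(d\log n)$ and $e'_0=2^k-d=O(d)$. The only difference is cosmetic: the paper sets $\gamma:=\nu^{1/3}$ (treating $\nu=0$ separately), whereas you fix $\gamma$ as the root of $\gamma+\gamma^2=1-\PI$, which handles $\nu=0$ uniformly and makes the origin of $\nu_0$ more transparent; both choices satisfy the required inequality.
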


\begin{proof}
  For simplicity we assume that $n = 2^\tn$ and $d = 2^\td$ for
  positive integers $\tn$ and $\td$. However, it should be clear that
  this restriction will cause no loss of generality and can be
  eliminated with a slight change in the constants behind the
  asymptotic notations.

  We instantiate the parameters of Theorem~\ref{thm:main} using an
  optimal strong extractor.  If $\nu = 0$, we choose $\gamma, \eps$
  small constants such that $\gamma+\eps < 1-\PI$. Otherwise, we
  choose $\gamma := \sqrt[3]{\nu}$, which makes $\nu/\gamma =
  \sqrt[3]{\nu^2}$, and $\eps < 1-\PI - \sqrt[3]{\nu} -
  \sqrt[3]{\nu^2}$.  (One can easily see that the right hand side of
  the latter inequality is positive for $\nu < \nu_0$). Hence, the
  condition $\PI + \nu/\gamma < 1 - \eps - \gamma$ required by
  Theorem~\ref{thm:main} is satisfied.

  Let $r = 2 \log (1/\eps) + O(1) = O(1)$ be the entropy loss of the
  extractor for error $\eps$, and set up the extractor for min-entropy
  $k = \log d + \log(1/\gamma) + r$, which means that $K := 2^k =
  O(d)$ and $L := 2^\tl = d/\gamma = O(d)$. Now we can apply
  Theorem~\ref{thm:main} and conclude that the measurement matrix is
  $(\PI m, (\nu/d) m, O(d), 0)$-resilient.  The seed length required
  by the extractor is $\tee \leq \log \tn + 2\log(1/\eps) + O(1)$,
  which gives $T := 2^t = O(\log n)$.  Therefore, the number of
  measurements will be $m = TL = O(d \log n)$.
\end{proof}

\subsubsection*{Optimal Lossless Condensers}

Now we instantiate Theorem~\ref{thm:main} with an optimal strong
lossless condenser with input length $\tn$, entropy requirement $k$,
seed length $\tee = \log \tn + \log(1/\eps) + O(1)$ and output length
$\tl = k+\log(1/\eps)+ O(1)$. Thus we get the following corollary.

\begin{coro} \label{coro:optcond} For positive integers $n \geq d$ and
  every constant $\delta > 0$ there is an $m \times n$ measurement
  matrix, where $m = O(d \log n)$, that is $(\Omega(m), \Omega(1/d) m,
  \nextLine \delta d, 0)$-resilient for $d$-sparse vectors of length
  $n$ and allows for a reconstruction algorithm with running time
  $O(mn)$.
\end{coro}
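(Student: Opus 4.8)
The plan is to follow the proof of Corollary~\ref{coro:optext} verbatim in structure, but to feed Theorem~\ref{thm:main} an optimal (non-explicit) strong \emph{lossless} condenser rather than an optimal strong extractor. Concretely, by the probabilistic bound of Capalbo et al.~\cite{ref:CRVW02} recalled in Section~\ref{sec:extrConstr}, for all $\tn,k,\eps$ a uniformly random function $f\colon\zo^\tn\times\zo^\tee\to\zo^\tl$ is, with probability $1-o(1)$, a strong $k\to_\eps k$ condenser (so the output entropy equals the input entropy, $k'=k$) with seed length $\tee=\log\tn+\log(1/\eps)+O(1)$ and output length $\tl=k+\log(1/\eps)+O(1)$. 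I will instantiate Theorem~\ref{thm:main} with such an $f$ on input length $\tn:=\log n$, so the resulting matrix has $n=2^\tn$ columns (the relevant regime being $d$ small relative to $n$); as usual, rounding $n,\tn,\tee,\tl$ to integer/power-of-two values affects only the hidden constants, exactly as in Corollary~\ref{coro:optext}, and I will not dwell on it.

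First I would fix the parameters. Treating $\delta>0$ as a constant, pick a small constant $\eps=\eps(\delta)>0$ to be chosen at the end, and set the entropy requirement to the real number $k:=\log\big((1+\delta)d\big)$, so that $2^k-d=\delta d$ --- this is precisely the false-positive bound $e'_0$ delivered by Theorem~\ref{thm:main}. Then $\tl=k+\log(1/\eps)+O(1)$ gives $2^\tl=\Theta(d/\eps)$, and $\tee=\log\tn+\log(1/\eps)+O(1)=\log\log n+\log(1/\eps)+O(1)$ gives $2^\tee=\Theta(\tn/\eps)=\Theta(\log n)$; hence the number of rows is $m=2^{\tee+\tl}=\Theta(d\log n/\eps^2)=O(d\log n)$ for constant $\eps$. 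Theorem~\ref{thm:main} forces the choice $\gamma:=d/2^\tl=\Theta(\eps)$, also a constant once $\eps$ is pinned down.

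Next I would verify the hypothesis of Theorem~\ref{thm:main}, which for a lossless condenser ($k'=k$) reads $(\PI+\gamma)\,2^{\tl-k}+\nu/\gamma<1-\eps$. The key identity is
\[
\gamma\cdot 2^{\tl-k}=\frac{d}{2^\tl}\cdot\frac{2^\tl}{2^k}=\frac{d}{2^k}=\frac{1}{1+\delta},
\]
so the condition simplifies to $\PI\cdot 2^{\tl-k}+\tfrac{1}{1+\delta}+\nu/\gamma<1-\eps$, in which $2^{\tl-k}=\Theta(1/\eps)$ and $\gamma=\Theta(\eps)$ are constants. Since $\tfrac{1}{1+\delta}<1$, I would first choose $\eps$ (e.g.\ $\eps:=\tfrac{\delta}{2(1+\delta)}$) small enough that $\tfrac{1}{1+\delta}+\eps$ still leaves a positive constant of slack, and then take $\PI$ and $\nu$ to be sufficiently small positive constants (of order $\eps$ times that slack) so that $\PI\cdot 2^{\tl-k}+\nu/\gamma$ fits inside it. Then $\PI$ and $\nu$ are positive constants depending only on $\delta$, so the matrix tolerates $e_0=\PI m=\Omega(m)$ false positives and $e_1=(\nu/d)m=\Omega(m/d)$ false negatives in the measurement outcomes. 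Invoking Theorem~\ref{thm:main} now yields an $m\times n$ matrix that is $(\Omega(m),\Omega(m/d),\delta d,0)$-resilient for $d$-sparse vectors together with the claimed $O(mn)$-time reconstruction algorithm, proving the corollary.

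The only genuinely delicate point --- the ``main obstacle'', modest as it is --- is the multiplicative coupling in the displayed inequality: the condenser's overhead $\tl-k\approx\log(1/\eps)$ multiplies the $(\PI+\gamma)$ term, so $\eps$ cannot be pushed to $0$; it must be frozen at a value dictated by $\delta$, which in turn fixes $\gamma$ and hence caps how large the noise constants $\PI,\nu$ can be. Once $\eps=\eps(\delta)$ has been chosen, the remainder is routine bookkeeping, and no ideas beyond those already used for Corollary~\ref{coro:optext} are required.
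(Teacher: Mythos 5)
Your proposal is correct and follows essentially the same route as the paper's own proof: instantiate Theorem~\ref{thm:main} with an optimal strong lossless condenser, set $2^k=(1+\delta)d$ so that the false-positive bound becomes $\delta d$, fix $\eps=\tfrac{1}{2}\delta/(1+\delta)$, and observe that $\gamma\cdot 2^{\tl-k}=1/(1+\delta)$ leaves a constant slack in the condition of the theorem into which constant $\PI$ and $\nu$ can be fitted. The parameter bookkeeping ($T=O(\log n)$, $L=O(d)$, $m=O(d\log n)$) matches the paper's as well.
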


\begin{proof}
  We will use the notation of Theorem~\ref{thm:main} and apply it
  using an optimal strong lossless condenser.  This time, we set up the condenser
  with error $\eps := \frac{1}{2} \delta/(1+\delta)$ and min-entropy
  $k$ such that $K := 2^k = d/(1-2\eps)$. As the error is a constant,
  the overhead and hence $2^{\tl-k}$ will also be a constant. The seed
  length is $\tee = \log (\tn/\eps) + O(1)$, which makes $T := 2^\tee
  = O(\log n)$. As $L := 2^\tl = O(d)$, the number of measurements
  becomes $m = TL = O(d \log n)$, as desired.

  Moreover, note that our choice of $K$ implies that $K-d = \delta
  d$.  Thus we only need to choose $\PI$ and $\nu$ appropriately to
  satisfy the condition \begin{equation} \label{eqn:pncondition} (\PI + \gamma)L/K + \nu/\gamma < 1 -
  \eps, \end{equation} where $\gamma = d/L = K/(L(1+\delta))$ is a constant, as
  required by the lemma.  Substituting for $\gamma$ in \eqref{eqn:pncondition}
  and after simple manipulations, we get the
  condition \[ \PI L/K + \nu (L/K) (1+\delta) < \frac{\delta}{2(1+\delta)}, \]
  which can be satisfied by choosing $\PI$ and $\nu$ to be appropriate
  positive constants.
\end{proof}

Both results obtained in Corollaries
\ref{coro:optext}~and~\ref{coro:optcond} almost match the lower bound
of Lemma~\ref{lem:lowerbound} for the number of measurements. However,
we note the following distinction between the two results:
Instantiating the general construction of Theorem~\ref{thm:main} with
an extractor gives us a sharp control over the fraction of tolerable
errors, and in particular, we can obtain a measurement matrix that is
robust against \emph{any} constant fraction (bounded from $1$) of
false positives. However, the number of potential false positives in
the reconstruction will be bounded by some constant fraction of the
sparsity of the vector that cannot be made arbitrarily close to zero.

On the other hand, using a lossless condenser enables us to bring down
the number of false positives in the reconstruction to an arbitrarily
small fraction of $d$ (which is, in light of Lemma~\ref{lem:distance},
the best we can hope for), though it does not give as
good a control on the fraction of tolerable errors as in the extractor
case, though we still obtain resilience against the same order of
errors.

Recall that the simple divide-and-conquer adaptive construction given
in beginning the chapter consists of $O(\log (n/d))$
\emph{non-adaptive stages}, where within each stage $O(d)$
non-adaptive measurements are made, but the choice of the measurements
for each stage fully depends on all the previous outcomes. By the
lower bounds on the size of disjunct matrices, we know that the number
of non-adaptive rounds cannot be reduced to $1$ without affecting the
total number of measurements by a multiplicative factor of
$\tilde{\Omega}(d)$. However, our non-adaptive upper bounds
(Corollaries \ref{coro:optext}~and~\ref{coro:optcond}) show that the
number of rounds can be reduced to $2$, while preserving the total
number of measurements at $O(d \log n)$. In particular, in a two-stage
scheme, the first non-adaptive round would output an approximation of
the $d$-sparse vector up to $O(d)$ false positive (even if the
measurements are highly unreliable) and the second round simply
examines the $O(d)$ possible positions using trivial singleton
measurements to pinpoint the exact support of the vector.

\subsubsection*{Applying the Guruswami-Umans-Vadhan's Extractor}

While Corollaries \ref{coro:optext}~and~\ref{coro:optcond} give
probabilistic constructions of noise-resi\-lient measurement matrices,
certain applications require a fully explicit matrix that is
guaranteed to work. To that end, we need to instantiate
Theorem~\ref{thm:main} with an explicit condenser. First, we use the
nearly-optimal explicit extractor of Guruswami, Umans and Vadhan
(Theorem~\ref{thm:extr}), that currently gives the best trade-off for
the range of parameters needed for our application.  Using this
extractor, we obtain a similar trade-off as in
Corollary~\ref{coro:optext}, except for a higher number of
measurements which would be bounded by $O(2^{O(\log^2 \log d)} d \log
n) = O(d^{1+o(1)} \log n)$.

\begin{coro} \label{coro:guv} For every choice of constants $\PI \in
  [0, 1)$ and $\nu \in [0, \nu_0)$, $\nu_0 := (\sqrt{5-4\PI} -
  1)^3/8$, and positive integers $d$ and $n \geq d$, there is a fully
  explicit $m \times n$ measurement matrix, where \[ m = O(2^{O(\log^2
    \log d)} d \log n) = O(d^{1+o(1)} \log n), \] that is $(\PI m,
  (\nu/d) m, O(d), 0)$-resilient for $d$-sparse vectors of length $n$
  and allows for a reconstruction algorithm with running time
  $O(mn)$. \qed
\end{coro}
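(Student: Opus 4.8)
The plan is to follow the proof of Corollary~\ref{coro:optext} verbatim, the only change being that the non-explicit optimal extractor of Radhakrishan and Ta-Shma is replaced by the explicit extractor of Guruswami, Umans and Vadhan (Theorem~\ref{thm:extr}). As in that proof I would first assume $n = 2^\tn$ and $d = 2^\td$ for integers $\tn, \td$, which costs only constants in the asymptotics. Since an extractor is precisely a lossless condenser with zero overhead, in the notation of Theorem~\ref{thm:main} we have $k' = \tl$, so $2^{\tl - k'} = 1$ and the requirement $(\PI+\gamma)2^{\tl-k'} + \nu/\gamma < 1-\eps$ of that theorem simplifies to $\PI + \gamma + \nu/\gamma < 1-\eps$, which is exactly the inequality treated in Corollary~\ref{coro:optext}. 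I would therefore set $\gamma := \sqrt[3]{\nu}$ (or a small constant when $\nu=0$), giving $\nu/\gamma = \sqrt[3]{\nu^2}$, and choose the error $\eps$ to be a sufficiently small constant so that $\PI + \sqrt[3]{\nu} + \sqrt[3]{\nu^2} < 1-\eps$; the restriction $\nu < \nu_0 = (\sqrt{5-4\PI}-1)^3/8$ is exactly what guarantees the left-hand side is below $1$.

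Next I would fix the entropy parameters. Let $r := 2\log(1/\eps) + O(1)$ be the entropy loss of the extractor of Theorem~\ref{thm:extr} at error $\eps$, and set the min-entropy to $k := \log d + \log(1/\gamma) + r$, so that $K := 2^k = O(d)$ and $L := 2^\tl = d/\gamma = O(d)$; this makes $d = \gamma 2^\tl$ consistent with the hypothesis of Theorem~\ref{thm:main}. Applying that theorem then yields that the adjacency matrix of the codeword graph of the induced code is a $(\PI m, (\nu/d)m, 2^k - d, 0) = (\PI m, (\nu/d)m, O(d), 0)$-resilient measurement matrix for $d$-sparse vectors of length $n$, together with a reconstruction algorithm running in time $O(mn)$.

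The only quantitative difference from Corollary~\ref{coro:optext} is the seed length, and this is where the extra $d^{o(1)}$ factor in $m$ appears. The extractor of Theorem~\ref{thm:extr} has seed length $\tee = \log\tn + O(\log k\cdot\log(k/\eps))$; with $\tn = \log n$, $k = O(\log d)$ and $\eps = \Theta(1)$ this becomes $\tee = \log\log n + O(\log^2\log d)$, hence $T := 2^\tee = 2^{O(\log^2\log d)}\log n$. Therefore $m = TL = O(2^{O(\log^2\log d)}\, d\log n) = O(d^{1+o(1)}\log n)$, as claimed. Full explicitness follows since each entry of the matrix, indexed by a (seed, output-symbol) pair on the rows and a codeword index $x$ on the columns, equals $1$ exactly when $f(x,\cdot)$ takes that symbol on that seed, which is decidable in time $\poly(\log n)$ because the underlying extractor $f$ is explicit. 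I do not expect a genuine obstacle here; the one point that needs care is the bookkeeping of the seed length, namely verifying that $O(\log k\cdot\log(k/\eps)) = O(\log^2\log d)$ with the chosen parameters, so that the blow-up over the non-explicit bound is only $2^{O(\log^2\log d)} = d^{o(1)}$.
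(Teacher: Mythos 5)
Your proposal is correct and follows exactly the route the paper intends: Corollary~\ref{coro:guv} is obtained by rerunning the proof of Corollary~\ref{coro:optext} with the explicit extractor of Theorem~\ref{thm:extr} in place of the non-explicit optimal one, with the only quantitative change being the seed length $\log\log n + O(\log^2\log d)$, which yields $T = 2^{O(\log^2\log d)}\log n$ and hence $m = TL = O(2^{O(\log^2\log d)}\,d\log n)$. Your parameter bookkeeping (the choice of $\gamma$, $\eps$, $k$, the verification that $\nu<\nu_0$ gives $\PI+\gamma+\nu/\gamma<1-\eps$, and the explicitness of the matrix entries) matches the intended argument.
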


\subsubsection*{Applying ``Zig-Zag'' Lossless Condenser}

An important explicit construction of lossless condensers that has an
almost optimal output length is due to Capalbo et
al.~\cite{ref:CRVW02}.  This construction borrows the notion of
``zig-zag products'' that is a combinatorial tool for construction of
expander graphs as a major ingredient of the condenser.  The following
theorem quotes a setting of this construction that is most useful for
our application:

\begin{thm} \cite{ref:CRVW02} \label{thm:CRVW} For every $k \leq n \in
  \N$, $\eps > 0$ there is an explicit $k \to_\eps k$
  condenser\footnote{Though not explicitly mentioned in
    \cite{ref:CRVW02}, these condensers can be considered to be
    strong.} with seed length $d=O(\log^3 (n/\eps))$ and output length
  $m=k+\log(1/\eps)+O(1)$. \qed
\end{thm}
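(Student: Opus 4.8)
The statement is quoted verbatim from Capalbo, Reingold, Vadhan, and Wigderson~\cite{ref:CRVW02}; accordingly, the plan is to reproduce the high-level structure of their zig-zag-based construction of lossless expanders rather than to reprove it from scratch. The first step is to pass to the equivalent graph-theoretic formulation: as in the codeword-graph correspondence of Section~\ref{sec:nrConstr} (and the extractor-codes viewpoint of \cite{ref:TZ04}), a strong $k \to_\eps k$ lossless condenser $f\colon \zo^n \times \zo^d \to \zo^m$ is the same thing as a bipartite ``lossless expander'': a graph with $2^n$ left vertices, left-degree $2^d$, and $2^m$ right vertices, in which every set of at most $2^k$ left vertices has at least $(1-\eps)2^{d+k}$ distinct neighbours. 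So it suffices to build such graphs explicitly with the stated degree and right-vertex count.

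CRVW carry this out by working with a flexible generalization that simultaneously captures extractors, condensers, and expanders (they call these \emph{conductors}), together with auxiliary ``buffer'' variants that carry a designated slice of the seed forward unused. The construction then has three ingredients, which I would present in order: (i) \emph{building blocks}: constant-size conductors with near-optimal parameters, obtained by brute-force search --- explicitness is immediate because the search space has constant size; (ii) a \emph{zig-zag product} tailored to conductors, which combines a ``large'' conductor with a ``small'' one into a larger conductor whose input length grows, whose entropy parameters add correctly, whose seed length is essentially the sum of the two seed lengths, and whose loss parameter degrades only additively; (iii) \emph{iteration}: starting from a constant-size block, apply the product $O(\log n)$ times so that the input length reaches $n$, with explicitness preserved at each stage. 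Tracking parameters through the iteration gives the claimed bounds --- the product is arranged to be essentially lossless at every stage, so the output length stays at $k+\log(1/\eps)+O(1)$, while each stage contributes $O(\log^2(n/\eps))$ bits to the seed (one $\log(n/\eps)$ factor from the per-stage block and another from an internal re-buffering step), so that after $O(\log n)$ stages the seed length is $O(\log^3(n/\eps))$.

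The main obstacle --- and the technical core of \cite{ref:CRVW02} --- is showing that the zig-zag product of two \emph{lossless} conductors is again \emph{lossless}, not merely a good entropy-condenser. The second-eigenvalue argument that suffices for the original zig-zag analysis is far too lossy here: losing even a constant factor of entropy at each of the $\log n$ stages would be fatal. Instead one must argue combinatorially --- fixing a flat source on the left, one tracks how its mass is spread across the ``blocks'' created by the outer graph and shows that collisions on the right side occur only for an $\eps$ fraction of the mass. This is exactly where the buffer conductors are needed: they reserve enough fresh randomness to be injected at the right moment to defeat the mass-concentration that would otherwise force collisions. Finally, for the \emph{strong} version asserted in the footnote (and used elsewhere in this thesis via Proposition~\ref{prop:strongDist}, Corollary~\ref{coro:strongExt}, and Proposition~\ref{prop:flatmap}), I would note that the conductor framework already carries the seed as part of each output block, so prepending the seed to the final output costs only $d$ additional output bits --- absorbed into the stated $O(\cdot)$ --- and yields a condenser whose output remains close to the appropriate high-min-entropy distribution even after the seed is revealed.
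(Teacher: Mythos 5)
The paper offers no proof of this statement: it is imported verbatim from \cite{ref:CRVW02} and stated with a \qed, so there is nothing internal to compare against. Your outline is a faithful high-level account of the cited construction (conductors, brute-forced constant-size building blocks, the lossless zig-zag product with buffers, and the $O(\log n)$-fold iteration yielding the $O(\log^3(n/\eps))$ seed), and your remark on strongness matches the paper's own footnote, so nothing further is needed here.
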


Combining Theorem~\ref{thm:main} with the above condenser, we obtain a
similar result as in Corollary~\ref{coro:optcond}, except that the
number of measurements would be $d 2^{\log^3(\log n)}= d \cdot
\qpoly(\log n)$.

\begin{coro} \label{coro:crvw} For positive integers $n \geq d$ and
  every constant $\delta > 0$ there is a fully explicit $m \times n$
  measurement matrix, where \[ m = d 2^{\log^3(\log n)}= d \cdot
  \qpoly(\log n),\] that is $(\Omega(m), \Omega(1/d) m, \delta d,
  0)$-resilient for $d$-sparse vectors of length $n$ and allows for a
  reconstruction algorithm with running time $O(mn)$. \qed
\end{coro}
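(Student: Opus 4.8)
The plan is to instantiate the general construction of Theorem~\ref{thm:main} with the explicit lossless condenser of Theorem~\ref{thm:CRVW}, following essentially the same computation as in the proof of Corollary~\ref{coro:optcond} but paying attention to the larger seed length of the zig-zag condenser. As there, I would first assume without loss of generality that $n = 2^\tn$ and $d = 2^\td$ for integers $\tn, \td$; removing this assumption costs only a change in the constants hidden by the asymptotic notation.

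First I would fix the error parameter of the condenser to the constant $\eps := \frac{1}{2}\delta/(1+\delta)$ and choose the entropy requirement $k$ so that $K := 2^k = d/(1-2\eps)$, which gives $K - d = \delta d$ exactly, matching the desired number $e'_0 = \delta d$ of false positives in the reconstruction. Since $\eps$ is a constant, the overhead $\tl - k = \log(1/\eps) + O(1)$ of the condenser is $O(1)$, so $L := 2^\tl = O(d)$. The seed length supplied by Theorem~\ref{thm:CRVW} is $\tee = O(\log^3(\tn/\eps)) = O(\log^3 \log n)$ (using $\tn = \log n$ and $\eps$ constant), hence $T := 2^\tee = 2^{O(\log^3 \log n)} = \qpoly(\log n)$. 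Therefore the codeword-graph matrix produced by Theorem~\ref{thm:main} has $m := TL = d \cdot 2^{O(\log^3 \log n)} = d \cdot \qpoly(\log n)$ rows and $n = 2^\tn$ columns, as claimed.

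It then remains to choose the density parameter $\PI$ and the false-negative parameter $\nu$ so that the hypothesis $(\PI+\gamma)2^{\tl-k} + \nu/\gamma < 1-\eps$ of Theorem~\ref{thm:main} holds, where $\gamma := d/L = K/(L(1+\delta))$ is a constant. Substituting for $\gamma$ and simplifying exactly as in the proof of Corollary~\ref{coro:optcond} reduces this to $\PI L/K + \nu(L/K)(1+\delta) < \delta/(2(1+\delta))$, which is satisfied by taking $\PI$ and $\nu$ to be sufficiently small positive constants. With these choices Theorem~\ref{thm:main} yields that the adjacency matrix of the codeword graph of the induced code is a $(\PI m, (\nu/d)m, \delta d, 0)$-resilient measurement matrix for $d$-sparse vectors; since $\PI$ and $\nu$ are positive constants this is in particular $(\Omega(m), \Omega(1/d)m, \delta d, 0)$-resilient, and it comes with the claimed $O(mn)$-time reconstruction algorithm.

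The only genuine point to verify beyond this bookkeeping is that the construction of Theorem~\ref{thm:main} requires a \emph{strong} condenser, whereas the statement of Theorem~\ref{thm:CRVW} is phrased for ordinary lossless condensers; as noted there, the Capalbo et al.\ condenser can be taken to be strong, so this causes no difficulty. Full explicitness of the resulting matrix (computability of each entry, and of the codeword graph, in time polynomial in the size of the matrix) follows immediately from the explicitness of the condenser, since the measurement matrix is literally the adjacency matrix of the codeword graph of the code induced by $f$. I do not expect any real obstacle here: the content of the corollary lies entirely in Theorems~\ref{thm:main} and~\ref{thm:CRVW}, and the remaining work is a parameter substitution.
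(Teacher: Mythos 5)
Your proposal is correct and follows exactly the route the paper intends: the paper gives no separate proof for this corollary, merely noting that it follows by repeating the computation of Corollary~\ref{coro:optcond} with the condenser of Theorem~\ref{thm:CRVW} in place of an optimal one, which is precisely the parameter substitution you carry out (including the correct handling of the constant overhead, the $O(\log^3\log n)$ seed length, and the strongness caveat from the footnote to Theorem~\ref{thm:CRVW}).
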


\subsubsection{Measurements Allowing Sublinear Time Reconstruction}
\index{sublinear time reconstruction}

The naive reconstruction algorithm given by Theorem~\ref{thm:main}
works efficiently in linear time in the size of the measurement
matrix. However, for very sparse vectors (i.e., $d \ll n$), it might
be of practical importance to have a reconstruction algorithm that
runs in \emph{sublinear} time in $n$, the length of the vector, and
ideally, polynomial in the number of measurements, which is merely
$\poly(\log n, d)$ if the number of measurements is optimal.

As shown in \cite{ref:TZ04}, if the code $\cC$ in
Theorem~\ref{thm:list} is obtained from a strong extractor constructed
from a \emph{black-box pseudorandom generator (PRG)}, it is possible
to compute the agreement list (which is guaranteed by the theorem to
be small) more efficiently than a simple exhaustive search over all
possible codewords.  In particular, in this case they show that
$\List_\cC(S, \rho(S)+\eps)$ can be computed in time $\poly(2^\tee,
2^\tl, 2^k, 1/\eps)$ (where $\tee, \tl, k, \eps$ are respectively the
seed length, output length, entropy requirement, and error of the
extractor), which can be much smaller than $2^\tn$ ($\tn$ being the
input length of the extractor).

Currently two constructions of extractors from black-box PRGs are
known: Trevisan's extractor \cite{ref:Tre} (as well as its improvement
in \cite{ref:RRV}) and Shaltiel-Umans' extractor \cite{ref:SU}.
However, the latter can only extract a sub-constant fraction of the
min-entropy and is not suitable for our needs, albeit it requires a
considerably shorter seed than Trevisan's extractor.  Thus, here we
only consider Raz's improvement of Trevisan's extractor given in
Theorem~\ref{thm:Tre}.  Using this extractor in
Theorem~\ref{thm:main}, we obtain a measurement matrix for which the
reconstruction is possible in polynomial time in the number of
measurements; however, as the seed length required by this extractor
is larger than Theorem~\ref{thm:extr}, we will now require a higher
number of measurements than before. Specifically, using Trevisan's
extractor, we get the following.

\begin{coro} \label{coro:efficient} For every choice of constants $\PI
  \in [0, 1)$ and $\nu \in [0, \nu_0)$, $\nu_0 := (\sqrt{5-4\PI} -
  1)^3/8$, and positive integers $d$ and $n \geq d$, there is a fully
  explicit $m \times n$ measurement matrix $\cM$ that is $(\PI m,
  (\nu/d) m, O(d), 0)$-resilient for $d$-sparse vectors of length $n$,
  where \[ m = O(d 2^{\log^3 \log n}) = d \cdot \qpoly(\log n). \]
  Furthermore, $\cM$ allows for a reconstruction algorithm with
  running time $\poly(m)$, which would be sublinear in $n$ for $d =
  O(n^c)$ and a suitably small constant $c > 0$.  \qed \end{coro}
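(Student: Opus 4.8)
The plan is to get the resilience guarantee exactly as in the proof of Corollary~\ref{coro:optext}, simply replacing the non-explicit optimal strong extractor there by the explicit (Raz--Reingold--Vadhan version of) Trevisan's extractor of Theorem~\ref{thm:Tre}, and then to upgrade the running time of the decoder of Theorem~\ref{thm:main} by exploiting the fact that this extractor is a \emph{black-box} pseudorandom-generator construction, so that the agreement list of Theorem~\ref{thm:list} can be enumerated in time polynomial in $2^\tee, 2^\tl, 2^k$ rather than by brute force over all $2^\tn$ codewords, as shown by Ta-Shma and Zuckerman \cite{ref:TZ04}.

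Concretely, first I would fix the parameters of Theorem~\ref{thm:main} in the same way as in Corollary~\ref{coro:optext}: take $\gamma := \sqrt[3]{\nu}$ (or a small constant when $\nu=0$) and a constant error $\eps < 1-\PI-\sqrt[3]{\nu}-\sqrt[3]{\nu^2}$, so that $\PI+\gamma+\nu/\gamma < 1-\eps$, and set the min-entropy to $k = \log d + \log(1/\gamma) + r$ with $r = 2\log(1/\eps)+O(1) = O(1)$ the entropy loss, giving $K := 2^k = O(d)$ and $L := 2^\tl = d/\gamma = O(d)$. Since here $\tl = k-r$ with $r = O(1)$, the quantity $\alpha := k/(\tl-1)-1 = (r+1)/(\tl-1) = O(1/\log d)$ is below $1/2$ for all sufficiently large $d$, so Theorem~\ref{thm:Tre} applies with input length $\tn = \log n$, output length $\tl = O(\log d)$ and constant error, yielding seed length $\tee = O(\log^2(\log n)\cdot\log(1/\alpha)) = O(\log^2\log n\cdot\log\log d)$. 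Using $d \le n$ this is $O(\log^3\log n)$, hence $T := 2^\tee = 2^{O(\log^3\log n)} = \qpoly(\log n)$ and $m = TL = d\cdot\qpoly(\log n) = O(d\,2^{\log^3\log n})$. Applying Theorem~\ref{thm:main} then yields the claimed $(\PI m, (\nu/d)m, O(d), 0)$-resilience, with $2^k-d = O(d)$.

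For the running time, I would note that the decoder of Theorem~\ref{thm:main} merely outputs the support of those codewords of the induced code $\cC$ whose agreement with the mixture $S$ read off the received word exceeds $1-\nu/\gamma$, and that by the parameter choice $1-\nu/\gamma > \rho(S)+\eps = \rho(S)2^{\tl-k'}+\eps$ (recall that $k' = \tl$ for an extractor, so $2^{\tl-k'}=1$, and $\rho(S) \le \PI+\gamma$ by the proof of Theorem~\ref{thm:main}). Hence this set is contained in $\List_\cC(S,\rho(S)+\eps)$, which has size $< 2^k = O(d)$ by Theorem~\ref{thm:list}. Since Trevisan's extractor \cite{ref:Tre} and its improvement \cite{ref:RRV} are obtained by instantiating the Nisan--Wigderson generator with a list-decodable code in a black-box way, the result of \cite{ref:TZ04} lets one enumerate $\List_\cC(S,\rho(S)+\eps)$ in time $\poly(2^\tee,2^\tl,2^k,1/\eps)$; filtering out codewords of agreement at most $1-\nu/\gamma$ and reading off supports costs an extra $\poly(m)$. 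With our parameters $\poly(2^\tee,2^\tl,2^k,1/\eps) = \poly(\qpoly(\log n),d,d,1) = \poly(m)$, so the whole reconstruction runs in time $\poly(m) = \poly(d)\cdot\qpoly(\log n)$, which for $d = O(n^c)$ is $n^{O(c)}\cdot n^{o(1)}$, sublinear in $n$ when $c$ is a suitably small constant.

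The main obstacle I anticipate is not conceptual but the careful alignment of the two imported ingredients: I must verify that the regime we actually need — output length only $\Theta(\log d)$, essentially all the min-entropy extracted, constant error — genuinely satisfies the hypothesis $\alpha < 1/2$ of Theorem~\ref{thm:Tre}, that the seed length I substitute is consistent with the advertised $\qpoly(\log n)$ bound (absorbing $\log\log d$ into $\log\log n$ via $d\le n$), and that the efficient list-decoding algorithm of \cite{ref:TZ04} applies verbatim to the RRV variant of Trevisan's extractor, whose construction is still a black-box PRG construction so that its running-time bound is $\poly(2^\tee,2^\tl,2^k,1/\eps)$ rather than $\poly(2^\tn)$. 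Everything else is identical bookkeeping to the proof of Corollary~\ref{coro:optext}.
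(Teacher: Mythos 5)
Your proposal is correct and follows exactly the route the paper intends (the paper states this corollary without a written proof, relying on the preceding discussion): instantiate Theorem~\ref{thm:main} with the Raz--Reingold--Vadhan improvement of Trevisan's extractor using the same parameter choices as Corollary~\ref{coro:optext}, and obtain the $\poly(m)$ reconstruction time from the Ta-Shma--Zuckerman efficient computation of the agreement list for extractors arising from black-box PRG constructions. Your bookkeeping of the seed length ($O(\log^2\log n\cdot\log(1/\alpha))=O(\log^3\log n)$ via $\alpha=O(1/\log d)$ and $d\le n$) and the verification that the decoder's threshold $1-\nu/\gamma$ exceeds $\rho(S)+\eps$ so that its output is contained in the efficiently enumerable list are precisely the details the paper leaves implicit.
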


On the condenser side, we observe that the strong lossless (and lossy)
condensers due to Guruswami et al.\ (given in
Theorem~\ref{thm:GUVcond}) also allow efficient list-recovery.  The
code induced by this condenser is precisely a list-decodable code due
to Parvaresh and Vardy~\cite{ref:PV05}. Thus, the efficient list
recovery algorithm of the condenser is merely the list-decoding
algorithm for this code\footnote{ For similar reasons, any
  construction of measurement matrices based on codeword graphs of
  algebraic codes that are equipped efficient soft-decision decoding
  (including the original Reed-Solomon based construction of Kautz and
  Singleton \cite{ref:KS64}) allow sublinear time
  reconstruction.}. Combined with Theorem~\ref{thm:main}, we can show
that codeword graphs of Parvaresh-Vardy codes correspond to good
measurement matrices that allow sublinear time recovery, but with
incomparable parameters to what we obtained from Trevisan's extractor
(the proof is similar to Corollary~\ref{coro:optcond}):

\begin{coro} \label{coro:guvcond} For positive integers $n \geq d$ and
  any constants $\delta, \alpha > 0$ there is an $m \times n$
  measurement matrix, where \[ m = O(d^{3+\alpha+2/\alpha} (\log
  n)^{2+2/\alpha}), \] that is $(\Omega(e), \Omega(e /d), \delta d,
  0)$-resilient for $d$-sparse vectors of length $n$, where \[ e :=
  (\log n)^{1+1/\alpha} d^{2+1/\alpha}.\] Moreover, the matrix allows
  for a reconstruction algorithm with running time $\poly(m)$. \qed
\end{coro}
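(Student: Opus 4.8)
The plan is to instantiate the general construction of Theorem~\ref{thm:main} with the Guruswami--Umans--Vadhan lossless condenser of Theorem~\ref{thm:GUVcond}, following essentially the same recipe as in the proof of Corollary~\ref{coro:optcond}, but tracking the (somewhat larger) seed and output lengths of the explicit condenser, and then observing that the code induced by the condenser admits an efficient list-recovery algorithm, so that the naive $O(mn)$-time decoder of Theorem~\ref{thm:main} can be replaced by one running in $\poly(m)$ time. As in the other corollaries, I would assume $n = 2^\tn$ and $d = 2^\td$ without loss of generality, absorbing the rounding into the constants.

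First I would fix the entropy parameter by setting $K := 2^k = (1+\delta)d$, so that the overhead term $2^k - d$ in Theorem~\ref{thm:main} is exactly $\delta d$, and fix the condenser error to the constant $\eps := \delta/(2(1+\delta))$. Applying Theorem~\ref{thm:GUVcond} with input length $\tn := \log n$, entropy requirement $k$, error $\eps$, and the given constant $\alpha$ yields seed length $\tee = (1+1/\alpha)\log(\tn K/\eps) + O(1)$ and output length $\tl = \tee + (1+\alpha)k$. Exponentiating, $T := 2^\tee = O\big((d\log n)^{1+1/\alpha}\big)$ and $L := 2^\tl = T\cdot K^{1+\alpha} = O\big((\log n)^{1+1/\alpha} d^{2+\alpha+1/\alpha}\big)$, whence the number of rows of the resulting matrix is $m = 2^{\tee+\tl} = TL = O\big((\log n)^{2+2/\alpha}\, d^{3+\alpha+2/\alpha}\big)$, matching the claimed bound on $m$.

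Next I would check the hypothesis of Theorem~\ref{thm:main}. Since the condenser is lossless ($k' = k$), we have $2^{\tl-k'} = L/K$, and with $\gamma := d/L$ the required inequality $(\PI+\gamma)2^{\tl-k'} + \nu/\gamma < 1-\eps$ becomes $\PI\,(L/K) + \tfrac{1}{1+\delta} + \nu\,(L/d) < 1 - \tfrac{\delta}{2(1+\delta)}$; since $\tfrac{1}{1+\delta} + \tfrac{\delta}{2(1+\delta)} < 1$, there is a positive slack of $\tfrac{\delta}{2(1+\delta)}$, so choosing $\PI := c_1 K/L$ and $\nu := c_2 d/L$ for sufficiently small positive constants $c_1, c_2$ (depending on $\delta$) satisfies it. Theorem~\ref{thm:main} then gives a $(\PI m, (\nu/d)m, \delta d, 0)$-resilient matrix, and substituting $\PI$, $\nu$, $m$ shows $\PI m = \Theta(KT) = \Theta\big((\log n)^{1+1/\alpha} d^{2+1/\alpha}\big) = \Theta(e)$ and $(\nu/d)m = \Theta(T) = \Theta(e/d)$, as claimed. (By Lemma~\ref{lem:distance} the $\delta d$ ambiguity is essentially unavoidable.)

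Finally, for the running time I would invoke the list-recovery viewpoint used before Corollary~\ref{coro:efficient}. The code induced by Construction~\ref{constr:GUV} is a Parvaresh--Vardy code \cite{ref:PV05}, and its list-decoding algorithm is precisely an efficient algorithm for the list-recovery problem implicit in the analysis of Theorem~\ref{thm:list}; as in \cite{ref:TZ04}, this lets one compute the agreement list $\List_\cC(S,\, 1-\nu/\gamma)$ in time $\poly(2^\tee, 2^\tl, 2^k, 1/\eps) = \poly(m)$ instead of searching over the $2^\tn = n$ codewords. Feeding this list into the decoder of Theorem~\ref{thm:main} yields reconstruction in $\poly(m)$ time, which is sublinear in $n$ whenever $d$ is polynomially small in $n$. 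The step requiring the most care is exactly this one: one must verify that the agreement threshold $1-\nu/\gamma$ our decoder needs lies within the radius for which the Parvaresh--Vardy list-recovery (soft-decision) algorithm provably terminates in polynomial time, i.e., that the decoding guarantee underlying Theorem~\ref{thm:GUVcondGeneral} is algorithmic at the chosen parameters. The slack engineered in the previous step ensures this, but it is the place where the argument leans on the fine structure of the GUV condenser rather than treating Theorem~\ref{thm:main} as a black box.
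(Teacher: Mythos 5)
Your proposal is correct and follows essentially the same route as the paper: the paper proves this corollary by instantiating Theorem~\ref{thm:main} exactly as in Corollary~\ref{coro:optcond} (with $K=(1+\delta)d$ and constant condenser error) but with the Guruswami--Umans--Vadhan condenser of Theorem~\ref{thm:GUVcond}, and obtains the $\poly(m)$ decoding time from the Parvaresh--Vardy list-recovery algorithm, just as you describe. Your parameter calculations for $m$, $\PI m=\Theta(e)$, and $(\nu/d)m=\Theta(e/d)$ check out.
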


We remark that we could also use a lossless condenser due to Ta-Shma
et al.\ \cite{ref:TUZ01} which is based on Trevisan's extractor
and also allows efficient list recovery, but it achieves inferior
parameters compared to Corollary~\ref{coro:guvcond}.

\subsubsection{Connection with
  List-Recoverability} \label{sec:listrec}

Extractor codes that we used in Theorem~\ref{thm:main} are instances
of \emph{soft-decision decodable} codes\footnote{To be precise, here
  we are dealing with a special case of soft-decision decoding with
  binary weights.}  that provide high list-decodability in ``extremely
noisy'' scenarios. In fact it is not hard to see that good extractors
or condensers are required for our construction to carry through, as
Theorem~\ref{thm:list} can be shown to hold, up to some loss in parameters, in the reverse direction
as well (as already shown by Ta-Shma and Zuckerman \cite{ref:TZ04}*{Theorem~1} for the case of extractors).

However, for designing measurement matrices for the noiseless (or
low-noise) case, it is possible to resort to the slightly weaker
notion of \emph{list recoverable codes}. Formally, a code $\cC$ of
block length $\tn$ over an alphabet $\Sigma$ is called \emph{$(\alpha,
  d, \tl)$-list recoverable} if for every mixture $S$ over
$\Sigma^\tn$ consisting of sets of size at most $d$ each, we have
$|\List_\cC(S, \alpha)| \leq \tl$.  A simple argument similar to
Theorem~\ref{thm:main} shows that the adjacency matrix of the codeword
graph of such a code with rate $R$ gives a $(\log n)|\Sigma|/R \times
n$ measurement matrix\footnote{For codes over large alphabets, the
  factor $|\Sigma|$ in the number of rows can be improved using
  \emph{concatenation} with a suitable \emph{inner} measurement
  matrix.}  for $d$-sparse vectors in the noiseless case with at most
$\tl-d$ false positives in the reconstruction.

Ideally, a list-recoverable code with $\alpha=1$, alphabet size
$O(d)$, positive constant rate, and list size $\tl=O(d)$ would give an
$O(d \log n) \times n$ matrix for $d$-sparse vectors, which is almost
optimal (furthermore, the recovery would be possible in sublinear time
if $\cC$ is equipped with efficient list recovery). However, no
explicit construction of such a code is so far known.

Two natural choices of codes with good list-recoverability properties
are Reed-Solomon and Algebraic-Geometric codes, which in fact provide
soft-decision decoding with short list size (cf.\ \cite{ref:Venkat}).
However, while the list size is polynomially bounded by $\tn$ and $d$,
it can be much larger than $O(d)$ that we need for our application
even if the rate is polynomially small in $d$.

On the other hand, it is shown in \cite{ref:GR08} that \emph{folded
  Reed-Solomon Codes} are list-recoverable with constant rate, but
again they suffer from large alphabet and list size\footnote{As shown
  in \cite{ref:GUV09}, folded Reed-Solomon codes can be used to
  construct lossless condensers, which eliminates the list size
  problem.  However, they give inferior parameters compared to
  Parvaresh-Vardy codes used in Corollary~\ref{coro:guvcond}.}.

We also point out a construction of $(\alpha, d, d)$ list-recoverable
codes (allowing list recovery in time $O(\tn d)$) in \cite{ref:GR08}
with rate polynomially small but alphabet size exponentially large in
$d$, from which they obtain superimposed codes.

\subsubsection{Connection with the Bit-Probe Model and
  Designs} \label{sec:bitprobe}

An important problem in data structures is the static set membership
problem in bit-probe model, which is the following:
Given a set $S$ of at most $d$ elements from a universe of size $n$,
store the set as a string of length $m$ such that any query of the
type ``is $x$ in $S$?''  can be reliably answered by reading few bits
of the encoding.  The query algorithm might be probabilistic, and be
allowed to err with a small one or two-sided error. Information
theoretically, it is easy to see that $m=\Omega(d \log (n/d))$
regardless of the bit-probe complexity and even if a small constant
error is allowed.

Remarkably, it was shown in \cite{ref:BMRV02} that the lower bound on
$m$ can be (non-explicitly) achieved using only one
bit-probe. Moreover, a part of their work shows that any one-probe
scheme with negative one-sided error $\eps$ (where the scheme only
errs in case $x \notin S$) gives a $\lfloor d/\eps
\rfloor$-superimposed code (and hence, requires $m=\Omega(d^2 \log n)$
by \cite{ref:DR83}). It follows that from any such scheme one can
obtain a measurement matrix for exact reconstruction of sparse
vectors, which, by Lemma~\ref{lem:distance}, cannot provide high
resiliency against noise. The converse direction, i.e., using
superimposed codes to design bit-probe schemes does not necessarily
hold unless the error is allowed to be very close to $1$. However, in
\cite{ref:BMRV02} \emph{combinatorial designs}\footnote{ A design is a
  collection of subsets of a universe, each of the same size, such
  that the pairwise intersection of any two subset is upper bounded by
  a prespecified parameter.}  based on low-degree polynomials are used
to construct one bit-probe schemes with $m=O(d^2 \log^2 n)$ and small
one-sided error.

On the other hand, Kautz and Singleton \cite{ref:KS64} observed that
the encoding of a combinatorial design as a binary matrix corresponds
to a superimposed code (which is in fact slightly
error-resilient). Moreover, they used Reed-Solomon codes to construct
a design, which in particular gives a $d$-superimposed code. This is
in fact the same design that is used in \cite{ref:BMRV02}, and in our
terminology, can be regarded as the adjacency matrix of the codeword
graph of a Reed-Solomon code.

It is interesting to observe the intimate similarity between our
framework given by Theorem~\ref{thm:main} and classical constructions
of superimposed codes. However, some key differences are worth
mentioning.  Indeed, both constructions are based on codeword graphs
of error-correcting codes. However, classical superimposed codes owe
their properties to the large distance of the underlying code. On the
other hand, our construction uses extractor and condenser codes and
does not give a superimposed code simply because of the substantially
low number of measurements. However, as shown in
Theorem~\ref{thm:main}, they are good enough for a slight relaxation
of the notion of superimposed codes because of their soft-decision
list decodability properties, which additionally enables us to attain
high noise resilience and a considerably smaller number of
measurements.

Interestingly, Buhrman et al.\ \cite{ref:BMRV02} use randomly
chosen bipartite graphs to construct storage schemes with two-sided
error requiring nearly optimal space $O(d \log n)$, and Ta-Shma
\cite{ref:Ta02} later shows that expander graphs from lossless
condensers would be sufficient for this purpose. However, unlike
schemes with negative one-sided error, these schemes use encoders that
cannot be implemented by the ``or'' function and thus do not translate
to group testing schemes.

\section{The Threshold Model} \label{sec:threshold}

A natural generalization of classical group testing, introduced by
Damaschke \cite{ref:thresh1}, considers the case where the measurement
outcomes are determined by a \emph{threshold predicate} instead of
logical ``or''.

In particular, the threshold model\index{group testing!threshold
  model} is characterized by two integer parameters $\ell, u$ such
that $0 < \ell \leq u$ (that are considered to be fixed constants),
and each measurement outputs positive if the number of positives
within the corresponding pool is at least $u$. On the other hand, if
the number of positives is less than $\ell$, the test returns
negative, and otherwise the outcome can be arbitrary.  In this view,
classical group testing corresponds to the special case where $\ell =
u = 1$.  In addition to being of theoretical interest, the threshold
model is interesting for applications, in particular in biology, where
the measurements have reduced or unpredictable sensitivity or may
depend on various factors that must be simultaneously present in the
sample.

The difference $g := u - \ell$ between the thresholds is known as the
\emph{gap}\index{group testing!threshold model!gap} parameter.  As
shown by Damaschke \cite{ref:thresh1}, in threshold group testing
identification of the set of positives is only possible when the
number of positives is at least $u$.  Moreover, regardless of the
number of measurements, in general the set of positives can only be
identified within up to $g$ false positives and $g$ false negatives
(thus, unique identification can be guaranteed only when $\ell = u$).

Additionally, Damaschke constructed a scheme for identification of the
positives in the threshold model. For the gap-free case where $g=0$,
the number of measurements in this scheme is $O((d+u^2) \log n)$,
which is nearly optimal (within constant factors).  However, when $g >
0$, the number of measurements becomes $O(dn^b + d^u)$, for an
arbitrary constant $b > 0$, if up to $g+(u-1)/b$ misclassifications
are allowed. Moreover, Chang et al.\ \cite{ref:CCFS10} have proposed a
different scheme for the gap-free case that achieves $O(d \log n)$
measurements.

A drawback of the scheme presented by Damaschke (as well as the one by
Chang et al.) is that the measurements are adaptive. As mentioned
before, for numerous applications (in particular, molecular biology),
adaptive measurements are infeasible and must be avoided.

In this section, we consider the non-adaptive threshold testing
problem in a possibly noisy setting, and develop measurement matrices
that can be used in the threshold model. Similar to the classical
model of group testing, non-adaptive measurements in the threshold
model can be represented as a Boolean matrix, where the $i$th row is
the characteristic vector of the set of items that participate in the
$i$th measurement.

\subsection{Strongly Disjunct Matrices} \label{sec:stronglyDisjunct}
\index{disjunct!strongly disjunct}

Non-adaptive threshold testing has been considered by Chen and
Fu~\cite{ref:thresh2}. They observe that, a generalization of the
standard notion of disjunct matrices (the latter being extensively
used in the literature of classical group testing) is suitable for the
threshold model. In this section, we refer to this generalized notion
as \emph{strongly disjunct} matrices and to the standard notion as
\emph{classical} disjunct matrices. Strongly disjunct matrices can be
defined as follows.

\begin{defn} \label{def:strongDisjunct} A Boolean matrix (with at
  least $d+u$ columns) is said to be strongly $(d,e;u)$-disjunct if
  for every choice of $d+u$ distinct columns \[ C_1,\ldots, C_u, C'_1,
  \ldots, C'_d,\] all distinct, we have
  \[
  |\cap_{i=1}^u \supp(C_i) \setminus \cup_{i=1}^{d} \supp(C'_i) | > e.
  \]
\end{defn}
Observe that, $(d,e;u)$-disjunct matrices are, in particular,
$(d',e';u')$-dis\-junct for any $d' \leq d$, $e' \leq e$, and $u' \leq
u$.  Moreover, \emph{classical} $(d,e)$-disjunct matrices correspond
to the special case $u=1$.

An important motivation for the study of this notion is the following
\emph{hidden hypergraph learning problem} (cf.\
\cite{ref:DH06}*{Chapter~6} and \cite{ref:groupTesting}*{Chapter~12}),
itself being motivated by the so-called \emph{complex model} in
computational biology \cite{ref:CDH07}.  A $(\le
u)$-hypergraph\index{hypergraph}\index{group testing!hypergraph
  learning} is a tuple $(V,E)$ where $V$ and $E$ are known as the set
of vertices and hyper-edges, respectively.  Each hyperedge $e \in E$
is a non-empty subset of $V$ of size at most $u$. The classical notion
of undirected graphs (with self-loops) corresponds to $(\le
2)$-hypergraphs.

Now, suppose that $G$ is a $(\leq u)$-hypergraph on a vertex set $V$
of size $n$, and denote by $\mathcal{V}(G)$ the set of vertices
induced by the hyper-edge set of $G$; i.e., $v \in \mathcal{V}(G)$ if
and only if $G$ has a hyper-edge incident to $v$. Then assuming that
$|\mathcal{V}(G)| \leq d$ for a \emph{sparsity parameter} $d$, the aim
in the hypergraph-learning problem is to identify $G$ using as few
(non-adaptive) queries of the following type as possible: Each query
specifies a set $Q \subseteq V$, and its corresponding answer is a
Boolean value which is $1$ if and only if $G$ has a hyperedge
contained in $Q$.

It is known that \cites{ref:GHTWZ06,ref:CDH07}, in the hypergraph
learning problem, any suitable grouping strategy defines a strongly
disjunct matrix (whose rows are characteristic vectors of individual
queries $Q$), and conversely, any strongly disjunct matrix can be used
as the incidence matrix of the set of queries. Below we recollect a
simple proof of this fact.

\begin{lem}
  Let $\cM$ be a strongly $(d, e; u)$-disjunct matrix with columns
  indexed by the elements of a vertex set $V$, and $G$ and $G'$ be any
  two distinct $(\le u)$-hypergraphs on $V$ such that $\mathcal{V}(G)
  \leq d$ and $\mathcal{V}(G') \leq d$. Then the vector of the
  outcomes corresponding to the queries defined by $\cM$ on $G$ and
  $G'$ differ in more than $e$ positions.  Conversely, if $\cM$ is
  such that the query outcomes differ in more than $e$ positions for
  every choice of the hypergraphs $G$ and $G'$ as above, then it must
  be strongly $(d-u, e; u)$-disjunct.
\end{lem}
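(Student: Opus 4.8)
The plan is to translate the combinatorial meaning of a query answer — ``$G$ fires on $Q$ iff $G$ has a hyperedge contained in $Q$'' — directly into the defining inequality of strong disjunctness, treating the two directions separately. Throughout I would adopt the standard convention that the hypergraphs in play are \emph{antichains} (no hyperedge properly contains another); this is harmless because the answer to a query depends only on the inclusion-minimal hyperedges of $G$, so one may identify a hypergraph with its minimal hyperedges, and ``$G\ne G'$'' should be read accordingly. (Without this convention the forward implication actually fails: for $u=2$ the hypergraphs with edge sets $\{\{1\},\{1,2\}\}$ and $\{\{1\}\}$ are distinct yet give identical answers on every query.) Two routine facts used repeatedly are the two ``firing'' translations above and the monotonicity remark noted after Definition~\ref{def:strongDisjunct}, namely that a strongly $(d,e;u)$-disjunct matrix is also strongly $(d',e';u')$-disjunct whenever $d'\le d$, $e'\le e$, $u'\le u$.

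\emph{Forward direction.} Given distinct antichains $G,G'$ with $|\mathcal V(G)|,|\mathcal V(G')|\le d$, I would pick a hyperedge $e_0$ of \emph{minimum size} in the symmetric difference of $E(G)$ and $E(G')$; say $e_0=\{v_1,\dots,v_s\}\in E(G)\setminus E(G')$ with $s\le u$. The key point is that no hyperedge of $G'$ is contained in $e_0$: it cannot equal $e_0$ since $e_0\notin E(G')$, and a proper subset $e'\subsetneq e_0$ with $e'\in E(G')$ would satisfy $e'\notin E(G)$ (as $G$ is an antichain containing $e_0$), hence $e'$ would lie in the symmetric difference with $|e'|<|e_0|$, contradicting minimality. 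Now put $W:=\mathcal V(G')\setminus e_0$, so $|W|\le d$ and the columns $C_{v_1},\dots,C_{v_s}$ together with the columns indexed by $W$ are all distinct (since $W\cap e_0=\emptyset$). Applying strong $(|W|,e;s)$-disjunctness,
\[
\Bigl|\bigcap_{i=1}^{s}\supp(C_{v_i})\setminus\bigcup_{w\in W}\supp(C_w)\Bigr|>e .
\]
For each row $r$ in this set, the query $Q_r$ contains $e_0$ (so $G$ fires), while $Q_r\cap\mathcal V(G')\subseteq e_0$, so by the key point $G'$ does not fire; hence the two answer vectors disagree on these more than $e$ coordinates.

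\emph{Converse.} Suppose $\cM$ distinguishes every such pair of distinct hypergraphs in more than $e$ coordinates, and fix arbitrary distinct columns split as $C_{v_1},\dots,C_{v_u}$ and $C_{w_1},\dots,C_{w_{d-u}}$. I would let $G$ be the antichain with edge set $\{\{v_1,\dots,v_u\}\}\cup\{\{w_j\}:1\le j\le d-u\}$ and $G'$ the antichain $\{\{w_j\}:1\le j\le d-u\}$; these are legitimate $(\le u)$-hypergraphs with $|\mathcal V(G)|=d$, $|\mathcal V(G')|=d-u\le d$, and $G\ne G'$. A query $Q_r$ fires on $G'$ iff some $w_j\in Q_r$, in which case it also fires on $G$; hence $Q_r$ gives different answers exactly when $\{v_1,\dots,v_u\}\subseteq Q_r$ and no $w_j$ lies in $Q_r$, i.e.\ exactly for the rows in $\bigcap_{i}\supp(C_{v_i})\setminus\bigcup_{j}\supp(C_{w_j})$. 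The hypothesis forces this set to have size larger than $e$, and since the $d$ columns were arbitrary, $\cM$ is strongly $(d-u,e;u)$-disjunct.

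I expect the main obstacle to be the forward direction: the nesting of hyperedges has to be controlled, and getting a clean argument requires both the antichain convention and the minimum-size choice of $e_0$ (the excerpt's bare definition of a $(\le u)$-hypergraph does not build antichainness in, so this should be stated carefully). Everything else is bookkeeping via the definition of strong disjunctness and the monotonicity remark, so the converse and the ``firing'' translations I would present essentially verbatim as above.
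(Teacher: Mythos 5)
Your proof is correct and follows essentially the same strategy as the paper's: in the forward direction, locate a hyperedge $S$ of one graph that contains no hyperedge of the other, and apply strong disjunctness to the columns of $S$ versus the columns of $\mathcal{V}(G')\setminus S$; in the converse, engineer a pair of hypergraphs whose distinguishing queries are exactly the rows counted in Definition~\ref{def:strongDisjunct}. Two points of comparison are worth recording. First, the paper simply asserts ``without loss of generality, let $S\in E$ be chosen such that no hyper-edge of $G'$ is contained in it,'' which, as you correctly observe, is not justified by distinctness alone: your example with edge sets $\{\{1\},\{1,2\}\}$ versus $\{\{1\}\}$ shows that the forward implication is literally false unless one identifies hypergraphs with their inclusion-minimal hyperedges (equivalently, restricts to antichains). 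Your minimum-size choice of $e_0$ in the symmetric difference, together with the antichain convention, is exactly the repair needed to make that step rigorous, so your write-up is actually more careful than the paper's on the one step that requires an idea. Second, in the converse the paper uses a $u$-clique on $Z\cup S\setminus\{v\}$ augmented by the edge $S$, whereas you use the edge $\{v_1,\dots,v_u\}$ together with singletons on the $w_j$'s; both isolate the same set of rows $\bigcap_i\supp(C_{v_i})\setminus\bigcup_j\supp(C_{w_j})$, and your variant is slightly cleaner since the ``fires on $G'$ implies fires on $G$'' monotonicity is immediate. No gaps.
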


\begin{proof}
  Suppose that $\cM$ is an $m \times |V|$ strongly $(d, e;
  u)$-disjunct matrix, and consider distinct $(\le u)$-hypergraphs
  $G=(V,E)$ and $G'=(V,E')$ with $\mathcal{V}(G) \leq d$ and
  $\mathcal{V}(G') \leq d$.  Denote by $y, y' \in \zo^m$ the vector of
  query outcomes for the two graphs $G$ and $G'$, respectively.
  \Wlog, let $S \in E$ be chosen such that no hyper-edge of $G'$ is
  contained in it.  Let $V' := \mathcal{V}(G')\setminus S$, and denote
  by $C_1, \ldots, C_{|S|}$ (resp., $C'_1, \ldots, C_{|V'|}$) the
  columns of $\cM$ corresponding to the vertices in $S$ (resp.,
  $V'$). By Definition~\ref{def:strongDisjunct}, there is a set $T
  \subseteq [m]$ of more than $e$ indices such that for every $i \in
  [|S|]$ (resp., $i \in [|V'|]$) and every $t \in T$, $C_i(t) = 1$
  (resp., $C'_i(t) = 0$).  This means that, for each such $t$, the
  answer to the $t$th query must be $1$ for $G$ (as the query includes
  the vertex set of $S$) but $0$ for $G'$ (considering the assumption
  that no edge of $G'$ is contained in $S$).

  For the converse, let $S, Z \subseteq [V]$ be disjoint sets of
  vertices such that $|S| = u$ and $|Z| = d-u$, and denote by $\{C_1,
  \ldots, C_u\}$ and $\{C'_1, \ldots, C'_{d-u}\}$ the set of columns
  of $\cM$ picked by $S$ and $T$, respectively.  Take any $v \in S$,
  let the $u$-hypergraph $G=(V,E)$ be a $u$-clique on $Z \cup S
  \setminus \{v\}$, and $G'=(V,E')$ be such that $E' := E \cup \{S\}$.
  Denote by $y, y' \in \zo^m$ the vector of query outcomes for the two
  graphs $G$ and $G'$, respectively. Since $G'$ is a subgraph of $G$,
  it must be that $\supp(y') \subseteq \supp(y)$.

  Let $T := \supp(y) \setminus \supp(y)$. By the distinguishing
  property of $\cM$, the set $T$ must have more than $e$ elements.
  Take any $t \in T$. We know that the $t$th query defined by $\cM$
  returns positive for $G$ but negative for $G'$. Thus this query must
  contain the vertex set of $S$, but not any of the elements in $Z$
  (since otherwise, it would include some $z \in Z$ and subsequently,
  $\{z\} \cup S \setminus \{v\}$, which is a hyperedge of $G'$).  It
  follows that for each $i \in [u]$ (resp., $i \in [d-u]$), we must
  have $C_i(t) = 1$ (resp., $C'_i(t) = 0$) and the disjunctness
  property as required by Definition~\ref{def:strongDisjunct} holds.
\end{proof}

The parameter $e$ determines ``noise tolerance'' of the measurement
scheme.  Namely, a strongly $(d,e;u)$-disjunct matrix can uniquely
distinguish between $d$-sparse hypergraphs even in presence of up to
$\lfloor e/2 \rfloor$ erroneous query outcomes.

The key observation made by Chen and Fu~\cite{ref:thresh2} is that
threshold group testing corresponds to the special case of the
hypergraph learning problem where the hidden graph $G$ is known to be
a $u$-clique\footnote{A $u$-clique on the vertex set $V$ is a $(\leq
  u)$-hypergraph $(V,E)$ such that, for some $V' \subseteq V$, $E$ is
  the set of all subsets of $V'$ of size $u$.}. In this case, the
unknown Boolean vector in the corresponding threshold testing problem
would be the characteristic vector of $\mathcal{V}(G)$.  It follows
that strongly disjunct matrices are suitable choices for the
measurement matrices in threshold group testing.

More precisely, the result by Chen and Fu states that, for threshold
parameters $\ell$ and $u$, a strongly $(d-\ell-1, 2e; u)$-disjunct
matrix suffices to distinguish between $d$-sparse vectors in the
threshold model\footnote{Considering unavoidable assumptions that up
  to $g := u - \ell$ false positives and $g$ false negatives are
  allowed in the reconstruction, and that the vector being measured
  has weight at least $u$.}, even if up to $e$ erroneous measurements
are allowed.

Much of the known results for classical disjunct matrices can be
extended to strongly disjunct matrices by following similar ideas.  In
particular, the probabilistic result of
Theorem~\ref{thm:classicDisjunct:random} can be generalized to show
that strongly $(d,e;u)$-disjunct matrices exist with \[ m = O(d^{u+1}
(\log (n/d))/(1-p)^2)\] rows and error tolerance \[ e = \Omega(p d
\log (n/d)/(1-p)^2),\] for any noise parameter $p \in [0,1)$. On the
negative side, however, several concrete lower bounds are known for
the number of rows of such matrices
\cites{ref:SW00,ref:DVMT02,ref:SW04}.  In asymptotic terms, these
results show that one must have \[ m = \Omega(d^{u+1} \log_d n + e
d^u),\] and thus, the probabilistic upper bound is essentially
optimal.

\subsection{Strongly Disjunct Matrices from Codes} \label{sec:KSext}
\index{Kautz-Singleton construction}

For the underlying strongly disjunct matrix, Chen and
Fu~\cite{ref:thresh2} use a greedy construction \cite{ref:CFH08} that
achieves, for any $e \geq 0$, $O((e+1) d^{u+1} \log(n/d))$ rows, but
may take exponential time in the size of the resulting matrix.

Nevertheless, as observed by several researchers
\cites{ref:DVMT02,ref:KL04,ref:GHTWZ06,ref:CDH07}, a classical
explicit construction of combinatorial designs due to Kautz and
Singleton~\cite{ref:KS64} can be extended to construct strongly
disjunct matrices. This concatenation-based construction transforms
any error-correcting code having large distance into a disjunct
matrix.

While the original construction of Kautz and Singleton uses
Reed-Solomon codes and achieves nice bounds, it is possible to use
other families of codes. In particular, as was shown by Porat and
Rothschild \cite{ref:PR08}, codes on the Gilbert-Varshamov bound (see
Appendix~\ref{app:coding}) would result in nearly optimal disjunct
matrices. Moreover, for a suitable range of parameters, they give a
\emph{deterministic} construction of such codes that runs in
polynomial time in the size of the resulting disjunct matrix (albeit
exponential in code's dimension\footnote{In this regard, this
  construction of disjunct matrices can be considered \emph{weakly
    explicit} in that, contrary to fully explicit constructions, it is
  not clear if each individual entry of the matrix can be computed in
  time $\poly(d, \log n)$.  }).

In this section, we will elaborate on details of this (known) class of
constructions, and in addition to Reed-Solomon codes and codes on the
Gilbert-Varshamov bound (that, as mentioned above, were used by Kautz, Singleton,
Porat and Rothschild), will consider a family of algebraic-geometric
codes and Hermitian codes which give nice bounds as well.
Construction~\ref{constr:strong} describes the general idea, which in
analyzed in the following lemma.

\newConstruction{Extension of Kautz-Singleton's method \cite{ref:KS64}.}%
{An $(\tn, \tk, \td)_q$ error-correcting code $\C
    \subseteq [q]^{\tn}$, and integer parameter $u > 0$.}%
    {An $m \times n$ Boolean matrix $\cM$, where $n
    = q^\tk$, and $m = \tn q^u$.}%
    {First, consider the mapping
    $\varphi\colon [q] \to \zo^{q^u}$ from $q$-ary symbols to column
    vectors of length $q^u$ defined as follows.  Index the coordinates
    of the output vector by the $u$-tuples from the set $[q]^u$. Then
    $\varphi(x)$ has a $1$ at position $(a_1, \ldots, a_u)$ if and only if there
    is an $i \in [u]$ such that $a_i = x$. Arrange all codewords of
    $\C$ as columns of an $\tn \times q^\tk$ matrix $\cM'$ with
    entries from $[q]$. Then replace each entry $x$ of $\cM'$ with
    $\varphi(x)$ to obtain the output $m \times n$ matrix $\cM$.}%
    {constr:strong}

\begin{lem} \label{lem:ks} Construction~\ref{constr:strong} outputs a
  strongly $(d,e;u)$-disjunct matrix for every $d < (\tn -
  e)/((\tn-\td)u)$.
\end{lem}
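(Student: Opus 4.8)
The plan is to unwind the block structure of $\cM$ and reduce the strong disjunctness condition to a simple counting argument about agreements between codewords of $\C$. First I would fix notation: the columns of $\cM$ are in bijection with the codewords of $\C$, and its rows are indexed by pairs $(j,\vec a)$ with $j\in[\tn]$ and $\vec a=(a_1,\dots,a_u)\in[q]^u$ (so indeed $m=\tn q^u$). Directly from the definition of $\varphi$, the entry of the column corresponding to a codeword $c=(c_1,\dots,c_\tn)$ at row $(j,\vec a)$ equals $1$ exactly when $c_j\in\{a_1,\dots,a_u\}$.

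Now fix $d+u$ distinct columns, say $C_1,\dots,C_u$ arising from codewords $c^{(1)},\dots,c^{(u)}$ and $C'_1,\dots,C'_d$ arising from codewords $c'^{(1)},\dots,c'^{(d)}$; since the columns are distinct, all $d+u$ codewords are distinct. The key idea is, for each coordinate $j\in[\tn]$, to use the particular $u$-tuple $\vec a^{(j)}:=(c^{(1)}_j,\dots,c^{(u)}_j)$ and the associated row $r_j:=(j,\vec a^{(j)})$. By construction $c^{(i)}_j$ is an entry of $\vec a^{(j)}$ for every $i\in[u]$, so $r_j\in\bigcap_{i=1}^u\supp(C_i)$ automatically. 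On the other hand $r_j\in\supp(C'_{i'})$ holds precisely when $c'^{(i')}_j\in\{c^{(1)}_j,\dots,c^{(u)}_j\}$, i.e.\ when $c'^{(i')}_j=c^{(i)}_j$ for some $i\in[u]$.

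Next I would call a coordinate $j$ \emph{good} if $r_j\notin\bigcup_{i'=1}^d\supp(C'_{i'})$; by the previous observation, $j$ is good iff $c^{(i)}_j\neq c'^{(i')}_j$ for all $i\in[u]$ and $i'\in[d]$. For a fixed pair $(i,i')$ the codewords $c^{(i)}$ and $c'^{(i')}$ are distinct, hence agree in at most $\tn-\td$ coordinates; a union bound over the $ud$ pairs shows that at most $ud(\tn-\td)$ coordinates fail to be good, so at least $\tn-ud(\tn-\td)$ coordinates are good. The rows $r_j$ for distinct good coordinates $j$ are pairwise distinct (they already differ in the first component), so $\bigl|\bigcap_{i=1}^u\supp(C_i)\setminus\bigcup_{i'=1}^d\supp(C'_{i'})\bigr|\ge \tn-ud(\tn-\td)$, which exceeds $e$ exactly under the hypothesis $d<(\tn-e)/((\tn-\td)u)$, completing the proof.

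I do not expect a serious obstacle: once the right per-coordinate tuple $\vec a^{(j)}$ is chosen, everything is routine counting. The only points demanding a little care are verifying that the constructed rows $r_j$ are genuinely distinct, and that "distinct columns" yields "distinct codewords" so that the minimum-distance bound $\tn-\td$ applies to every relevant pair. A mild harmless subtlety is that the entries of $\vec a^{(j)}$ may repeat when several $c^{(i)}_j$ coincide, but this does not matter since $\varphi$ is defined on all of $[q]^u$ and the containment conditions above are unaffected by repetitions.
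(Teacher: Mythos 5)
Your proposal is correct and follows essentially the same argument as the paper's proof: a union bound over the $du$ pairs of codewords shows that more than $e$ coordinates $j$ have no agreement between the $u$ "intersection" codewords and the $d$ "excluded" codewords, and for each such $j$ the row of $\cM$ indexed by $j$ and the tuple $(c^{(1)}_j,\dots,c^{(u)}_j)$ witnesses the disjunctness condition. The points you flag for care (distinctness of the rows $r_j$, distinct columns giving distinct codewords, and repeated entries in the tuple being harmless) are all handled correctly.
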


\begin{Proof}
  Let $C := \{ c_1, \ldots, c_u \} \subseteq [n]$ and $C' := \{ c'_1,
  \ldots, c'_d \} \subseteq [n]$ be disjoint subsets of column
  indices. We wish to show that, for more than $e$ rows of $\cM$, the
  entries at positions picked by $C$ are all-ones while those picked
  by $C'$ are all-zeros. For each $j \in [n]$, denote the $j$th column
  of $\cM'$ by $\cM'(j)$, and let $\cM'(C) := \{ \cM'(c_j)\colon j \in
  [u] \}$, and $\cM'(C') := \{ \cM'(c'_j)\colon j \in [d] \}$.

  From the minimum distance of $\C$, we know that every two distinct
  columns of $\cM'$ agree in at most $\tn - \td$ positions.  By a
  union bound, for each $i \in [d]$, the number of positions where
  $\cM'(c'_i)$ agrees with one or more of the codewords in $\cM'(C)$
  is at most $u(\tn - \td)$, and the number of positions where some
  vector in $\cM'(C')$ agrees with one or more of those in $\cM'(C)$
  is at most $du(\tn - \td)$. By assumption, we have $\tn - du(\tn -
  \td) > e$, and thus, for a set $E \subseteq [\tn]$ of size greater
  than $e$, at positions picked by $E$ none of the codewords in
  $\cM'(C')$ agree with any of the codewords in $\cM'(C)$.

  Now let $w \in [q]^n$ be any of the rows of $\cM'$ picked by $E$,
  and consider the $q^u \times n$ Boolean matrix $W$ formed by
  applying the mapping $\varphi(\cdot)$ on each entry of $w$. We know
  that $\{ w(c_j)\colon j \in [u]\} \cap \{ w(c'_j)\colon j \in [d]\}
  = \emptyset$.  Thus we observe that the particular row of $W$
  indexed by $(w(c_1), \ldots, w(c_u))$ (and in fact, any of its
  permutations) must have all-ones at positions picked by $C$ and
  all-zeros at those picked by $C'$. As any such row is a distinct row
  of $\cM$, it follows that $\cM$ is strongly $(d,e;u)$-disjunct.
\end{Proof}

Now we mention a few specific instantiations of the above
construction.  We will first consider the family of Reed-Solomon
codes, that are also used in the original work of Kautz and
Singleton~\cite{ref:KS64}, and then move on to the family of algebraic
geometric (AG) codes on the Tsfasman-Vl{\u a}du{\c t}-Zink (TVZ)
bound, and Hermitian codes, and finally, codes on the
Gilbert-Varshamov (GV) bound. A quick review of the necessary
background on coding-theoretic terms is given in
Appendix~\ref{app:coding}.

\subsubsection*{Reed-Solomon Codes} Let $p \in [0,1)$ be an arbitrary
``noise'' parameter.  If we take $\C$ to be an $[\tn, \tk, \td]_{\tn}$
Reed-Solomon code over an alphabet of size $\tn$ (more precisely, the
smallest prime power that is no less than $\tn$), where $\td =
\tn-\tk+1$, we get a strongly disjunct $(d,e;u)$-matrix with \[ m =
O(du \log n / (1-p))^{u+1} \] rows and \[ e = p \tn = \Omega(p d u
(\log n)/(1-p)).\]

\subsubsection*{AG Codes on the TVZ Bound} \index{TVZ bound} Another
interesting family for the code $\C$ is the family of algebraic
geometric codes that attain the Tsfasman-Vl{\u a}du{\c t}-Zink bound
(cf.\ \cites{tsvz:82,gast:95}).  This family is defined over any
alphabet size $q \geq 49$ that is a square prime power, and achieves a
minimum distance $\td \geq \tn-\tk-\tn/(\sqrt{q}-1)$. Let $e := pn$,
for a noise parameter $p \in [0,1)$. By Lemma~\ref{lem:ks}, the
underlying code $\C$ needs to have minimum distance at least
$\tn(1-(1-p)/(du))$. Thus in order to be able to use the
above-mentioned family of AG codes, we need to have $q \gg
(du/(1-p))^2 =: q_0$. Let us take an appropriate $q \in [2q_0, 8q_0]$,
and following Lemma~\ref{lem:ks}, $\tn - \td = \lceil \tn(1-p)/(du)
\rceil$.  Thus the dimension of $\C$ becomes at least
\[
\tk \geq \tn - \td - \frac{\tn}{\sqrt{q}-1} = \Omega \left(
  \frac{\tn(1-p)}{du} \right) = \Omega(\tn / \sqrt{q_0}),
\]
and subsequently\footnote{Note that, given the parameters $p, d, n$,
  the choice of $q$ depends on $p, d$, as explained above, and then
  one can choose the code length $\tn$ to be the smallest integer for
  which we have $q^\tk \geq n$. But for the sake of clarity we have
  assumed that $q^\tk = n$.} we get that $ \log n = \tk \log q \geq
\tk = \Omega(\tn / \sqrt{q_0}).  $ Now, noting that $m = q^u \tn$, we
conclude that
\[
m = q^u \tn = O(q_0^{u+1/2} \log n) = O\left(\frac{du}{1-p}
\right)^{2u+1} \log n,
\]
and $e = \Omega(p d u (\log n)/(1-p))$.

We see that the dependence of the number of measurements on the
sparsity parameter $d$ is worse for AG codes than Reed-Solomon codes
by a factor $d^u$, but the construction from AG codes benefits from a
linear dependence on $\log n$, compared to $\log^{u+1} n$ for
Reed-Solomon codes.  Thus, AG codes become more favorable only when
the sparsity is substantially low; namely, when $d \ll \log n$.

\subsubsection*{Hermitian Codes} \index{Hermitian codes} A
particularly nice family of AG codes arises from the Hermitian
function field\footnote{See \cite{stich} for an extensive
treatment of the notions in algebraic geometry.}.  Let $q'$ be a prime power and $q := q'^2$. Then the
Hermitian function field over $\F_q$ is a finite extension of the
rational function field $\F_q(x)$, denoted by $\F_q(x,y)$, where we
have $y^{q'} + y = x^{q'+1}$.  The structure of this function field is
relatively well understood and the family of Goppa codes defined over
the rational points of the Hermitian function field is known as
Hermitian codes. This family is recently used by Ben-Aroya and Ta-Shma
\cite{ref:BT09} for construction of small-bias sets. Below we quote
some parameters of Hermitian codes from their work.

The number of rational points of the Hermitian function field is equal
to ${q'}^3+1$, which includes a common pole $Q_\infty$ of $x$ and
$y$. The genus of the function field is $g = q'(q'-1)/2$. For some
integer parameter $r$, we take $G := rQ_\infty$ as the divisor
defining the Riemann-Roch space $\cL(G)$ of the code $\C$, and the set
of rational points except $Q_\infty$ as the evaluation points of the
code. Thus the length of $\C$ becomes $\tn = {q'}^3$. Moreover, the
minimum distance of the code is $\td = n-\deg(G) = n-r$. When $r \geq
2g - 1$, the dimension of the code is given by the Riemann-Roch
theorem, which is equal to $r-g+1$. For the low-degree regime where $r
< 2g-1$, the dimension $\tk$ of the code is the size of the
Wirestrauss semigroup of $G$, which turns out to be the set $W=\{
(i,j) \in \N^2\colon j \leq q'-1 \land iq'+j(q'+1) \leq r\}$.

Now, given parameters $d, p$ of the disjunct matrix, define $\rho :=
(1-p)/((d+1)u)$,
take the alphabet size $q$ as a square prime power,
and set $r := \rho q^{3/2}$. First we consider the case where $r < 2g
- 1 = 2q - 2\sqrt{q} - 1$.  In this case, the dimension of the
Hermitian code becomes $k = |W| = \Omega(r^2/q) = \Omega(\rho^2 q^2).$
The distance $\td$ of the code satisfies $\td = \tn - r \geq \tn
(1-\rho)$ and thus, for $e := p \tn$, conditions of Lemma~\ref{lem:ks}
are satisfied.  The number of the rows of the resulting measurement
matrix becomes $m = q^{u+3/2}$, and we have $n = q^\tk$. Therefore,
\begin{gather*}
  \log n = k \log q \geq k = \Omega(\rho^2 q^2) \\ \Rightarrow q =
  O(\sqrt{\log n}/\rho) \Rightarrow m = O\left(\big(\frac{d\sqrt{\log
        n}}{1-p}\big)^{u+3/2}\right),
\end{gather*}
and in order to ensure that $r < 2g-1$, we need to have $du/(1-p) \gg
\sqrt{\log n}$.  On the other hand, when $du/(1-p) \ll \sqrt{\log n}$,
we are in the high-degree regime, in which case the dimension of the
code becomes $k = r - g+1 = \Omega(r) = \Omega(\rho q^{3/2})$, and we
will thus have
\[
q = O((\log n / \rho)^{2/3}) \Rightarrow m = O\left(\big(\frac{d \log
    n}{1-p}\big)^{1+2u/3}\right)
\]
Altogether, we conclude that Construction~\ref{constr:strong} with
Hermitian codes results in a strongly $(d,e;u)$-disjunct matrix with
\[
m = O\left(\big( \frac{d \sqrt{\log n}}{1-p} + \big(\frac{d \log
    n}{1-p}\big)^{2/3} \big)^{u+3/2}\right)
\]
rows, where $e = p \cdot \Omega\left( d (\log n)/(1-p) + (d \sqrt{\log
    n} / (1-p))^{3/2} \right)$.  Compared to the Reed-Solomon codes,
the number of measurements has a slightly worse dependence on $d$, but
a much better dependence on $n$. Compared to AG codes on the TVZ
bound, the dependence on $d$ is better while the dependence on $n$ is
inferior.

\subsubsection*{Codes on the GV Bound}
A $q$-ary $(\tn,\tk,\td)$-code (of sufficiently large length) is said
to be on the Gilbert-Varshamov bound if it satisfies $\tk \geq
\tn(1-h_q(\td/\tn))$, where $h_q(\cdot)$ is the $q$-ary entropy
function defined as
\[
h_q(x) := x \log_q(q-1) - x \log_q(x) - (1-x) \log_q(1-x).
\]
It is well known that a random linear code achieves the bound with
overwhelming probability (cf.\ \cite{ref:MS}).  Now we apply
Lemma~\ref{lem:ks} on a code on the GV bound, and calculate the
resulting parameters.  Let $\rho := (1-p)/(4du)$, choose any alphabet
size $q \in [1/\rho, 2/\rho]$, and let $\C$ be any $q$-ary code of
length $\tn$ on the GV bound, with minimum distance $\td \geq \tn
(1-2/q)$. By the Taylor expansion of the function $h_q(x)$ around $x =
1-1/q$, we see that the dimension of $\C$ asymptotically behaves as $
\tk = \Theta(\tn / (q \log q)).  $ Thus the number of columns of the
resulting measurement matrix becomes $n = q^\tk = 2^{\Omega(\tn /
  q)}$, and therefore, the number $m$ of its rows becomes
\[
m = q^u \tn = O(q^{u+1} \log n) = O((d/(1-p))^{u+1} \log n),
\]
and the matrix would be strongly $(d,e;u)$-disjunct for \[ e = p \tn =
\Omega(p d (\log n)/(1-p)).\]

We remark that for the range of parameters that we are interested in,
Porat and Rothschild \cite{ref:PR08} have recently come up with a
deterministic construction of linear codes on the GV bound that runs
in time $\poly(q^\tk)$ (and thus, polynomial in the size of the
resulting measurement matrix). Their construction is based on a
derandomization of the probabilistic argument for random linear codes
using the method of conditional expectations, and as such, can be
considered \emph{weakly explicit} (in the sense that, the entire
measurement matrix can be computed in polynomial time in its length;
but for a fully explicit construction one must be ideally able to
deterministically compute any single entry of the measurement matrix
in time $\poly(d, \log n)$, which is not the case for this
construction).

\begin{table}
  \caption[Bounds obtained by constructions of strongly $(d,e;u)$-disjunct matrices]{Bounds obtained by strongly $(d,e;u)$-disjunct matrices. The
    noise parameter $p \in [0,1)$ is arbitrary. The first four rows correspond to
    the explicit coding-theoretic construction described in Section~\ref{sec:KSext},
    with the underlying code indicated as a remark.
  }

  \begin{center}
    \begin{tabular}{|l|l|p{50mm}|}
      \hline
      Number of rows & Noise tolerance & Remark \\ \hline \hline
      $O((\frac{d}{1-p})^{u+1} \log n)$ & $\Omega(p d \frac{\log n}{1-p})$ & Using codes on the GV bound. \\
      $O((\frac{d \log n}{1-p})^{u+1})$ & $\Omega(p d \frac{\log n}{1-p})$ & Using Reed-Solomon codes. \\
      $O((\frac{d}{1-p})^{2u+1} \log n)$ & $\Omega(p d \frac{\log n}{1-p})$ & Using Algebraic Geometric codes. \\
      $O((\frac{d \sqrt{\log n}}{1-p})^{u+3/2})$ & $\Omega(p (\frac{d \sqrt{\log n}}{1-p})^{3/2})$ & Using Hermitian codes ($d \gg \sqrt{\log n}$). \\
      \hline
      $O(d^{u+1} \frac{\log (n/d)}{(1-p)^2})$ & $\Omega(p d \frac{\log (n/d)}{(1-p)^2})$ & Probabilistic construction. \\
      $\Omega(d^{u+1} \log_d n + e d^{u})$ & $e$ & Lower bound (Section~\ref{sec:stronglyDisjunct}). \\
      \hline
    \end{tabular}
  \end{center}
  \label{tab:params:strongly}
\end{table}

We see that, for a fixed $p$, Construction~\ref{constr:strong} when
using codes on the GV bound achieves almost optimal
parameters. Moreover, the explicit construction based on the
Reed-Solomon codes possesses the ``right'' dependence on the sparsity
$d$, AG codes on the TVZ bound have a matching dependence on the
vector length $n$ with random measurement matrices, and finally, the
trade-off offered by the construction based on Hermitian codes lies in
between the one for Reed-Solomon codes and AG codes. These parameters
are summarized in Table~\ref{tab:params:strongly}. Note that the
special case $u=1$ would give classical $(d,e)$-disjunct matrices as
in Definition \ref{def:classicDisjunct}.

\subsection{Disjunct Matrices for Threshold Testing}

Even though, as discussed above, the general notion of strongly
$(d,e;u)$-disjunct matrices is sufficient for threshold group testing
with upper threshold $u$, in this section we show that a weaker notion
of disjunct matrices (which turns out to be \emph{strictly} weaker
when the lower threshold $\ell$ is greater than $1$), would also
suffice.  We proceed by showing how such measurement matrices can be
constructed.

Before introducing our variation of disjunct matrices, let us fix some
notation that will be useful for the threshold model.  Consider the
threshold model with thresholds $\ell$ and $u$, and an $m \times n$
measurement matrix $\cM$ that defines the set of measurements.  For a
vector $x \in \zo^n$, denote by $\cM[x]_{\ell,u}$ the set of vectors
in $\zo^m$ that correctly encode the measurement outcomes
corresponding to the vector $x$.  In particular, for any $y \in
\cM[x]_{\ell,u}$ we have $y(i) = 1$ if $|\supp(\cM_j) \cap \supp(x)|
\geq u$, and $y(i) = 0$ if $|\supp(\cM_j) \cap \supp(x)| < \ell$,
where $\cM_j$ indicates the $j$th row of $\cM$.  In the gap-free case,
the set $\cM[x]_{\ell,u}$ may only have a single element that we
denote by $\cM[x]_u$. Note that the gap-free case with $u=1$ reduces
to ordinary group testing, and thus we have $\cM[x]_1 = \cM[x]$.

To make the main ideas more transparent, until Section~\ref{sec:gap}
we will focus on the gap-free case where $\ell = u$. The extension to
nonzero gaps is straightforward and will be discussed in
Section~\ref{sec:gap}. Moreover, often we will implicitly assume that
the Hamming weight of the Boolean vector that is to be identified is
at least $u$ (since otherwise, any $(u-1)$-sparse vector would be
confused with the all-zeros vector). Moreover, we will take the
thresholds $\ell, u$ as fixed constants while the parameters $d$ and
$n$ are allowed to grow.

\subsubsection{The Definition and Properties}

Our variation of disjunct matrices along with an ``auxiliary'' notion
of \emph{regular} matrices is defined in the following.

\begin{defn} \label{def:regularMatrix} \index{regular
    matrix}\index{disjunct!$(d,e;u)$-disjunct} A Boolean matrix $\cM$
  with $n$ columns is called $(d,e;u)$-regular if for every subset of
  columns $S \subseteq [n]$ (called the \emph{critical set}) and every
  $Z \subseteq [n]$ (called the \emph{zero set}) such that $u \leq |S|
  \leq d$, $|Z| \leq |S|$, $S \cap Z = \emptyset$, there are more than
  $e$ rows of $\cM$ at which $\cM|_S$ has weight exactly $u$ and (at
  the same rows) $\cM|_Z$ has weight zero. Any such row is said to
  \emph{$u$-satisfy} $S$ and $Z$.

  If, in addition, for every \emph{distinguished column} $i \in S$,
  more than $e$ rows of $\cM$ both $u$-satisfy $S$ and $Z$ and have a
  $1$ at the $i$th column, the matrix is called $(d,e;u)$-disjunct
  (and the corresponding ``good'' rows are said to $u$-satisfy $i$,
  $S$, and $Z$).
\end{defn}

It is easy to verify that (assuming $2d \leq n$) the classical notion
of $(2d-1,e)$-disjunct matrices is equivalent to strongly
$(2d-1,e;1)$-disjunct and $(d,e;1)$-disjunct.  Moreover, any
$(d,e;u)$-disjunct matrix is $(d,e;u)$-regular, $(d-1,e;u-1)$-regular,
and $(d,e)$-disjunct (but the reverse implications do not in general
hold). Therefore, the lower bound \[m = \Omega(d^{2} \log_d n + ed)\]
that applies for $(d,e)$-disjunct matrices holds for
$(d,e;u)$-disjunct matrices as well.

Below we show that our notion of disjunct matrices is necessary and
sufficient for the purpose of threshold group testing:

\begin{lem} \label{lem:disjunct} Let $\cM$ be an $m \times n$ Boolean
  matrix that is $(d,e;u)$-disjunct. Then for every distinct
  $d$-sparse vectors $x, x' \in \zo^n$ such that\footnote{ Note that
    at least one of the two possible orderings of any two distinct
    $d$-sparse vectors, at least one having weight $u$ or more,
    satisfies this condition.} $\supp(x) \nsubseteq \supp(x')$,
  $\wgt(x) \geq |\supp(x') \setminus \supp(x)|$ and $\wgt(x) \geq u$,
  we have
  \begin{equation} \label{eqn:distin} |\supp(\cM[x]_u) \setminus
    \supp(\cM[x']_u)| > e.
  \end{equation}
  Conversely, assuming $d \geq 2u$, if $\cM$ satisfies
  \eqref{eqn:distin} for every choice of $x$ and $x'$ as above, it
  must be $(\lfloor d/2 \rfloor,e;u)$-disjunct.
\end{lem}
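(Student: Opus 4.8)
The plan is to prove both directions by translating directly between the combinatorial conditions in Definition~\ref{def:regularMatrix} and the gap-free threshold encoding $\cM[\cdot]_u$, using the single fact that in the gap-free case ($\ell=u$) a measurement $j$ is positive exactly when $|\supp(\cM_j)\cap\supp(x)|\ge u$.

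For the forward direction I would take $\cM$ to be $(d,e;u)$-disjunct and, given $x,x'$ as in the statement, set $S:=\supp(x)$ (so $u\le |S|=\wgt(x)\le d$), $Z:=\supp(x')\sm\supp(x)$ (so $S\cap Z=\emptyset$ and $|Z|=|\supp(x')\sm\supp(x)|\le\wgt(x)=|S|$ by hypothesis), and pick a distinguished column $i\in\supp(x)\sm\supp(x')\subseteq S$. Disjunctness then supplies more than $e$ rows $j$ that $u$-satisfy $i$, $S$ and $Z$. For each such $j$ I would verify the two encodings: since $\cM|_S$ has weight exactly $u$ at row $j$, $|\supp(\cM_j)\cap\supp(x)|=u$, hence $\cM[x]_u(j)=1$; and since $i$ is selected by row $j$, $\cM|_S$ has weight exactly $u$, and $i\notin\supp(x')$, the columns of $\supp(x')\cap S$ selected by row $j$ number at most $u-1$, while $\cM|_Z$ contributes nothing, so $|\supp(\cM_j)\cap\supp(x')|\le u-1<u$ and $\cM[x']_u(j)=0$. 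Thus all of these $>e$ rows lie in $\supp(\cM[x]_u)\sm\supp(\cM[x']_u)$, giving \eqref{eqn:distin}.

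For the converse, assume $d\ge 2u$ and that \eqref{eqn:distin} holds for every admissible pair. Given a critical set $S$ with $u\le|S|\le\floor{d/2}$, a zero set $Z$ with $S\cap Z=\emptyset$ and $|Z|\le|S|$, and a distinguished column $i\in S$, I would take $x$ with $\supp(x)=S$ and $x'$ with $\supp(x')=(S\sm\{i\})\cup Z$, and then check the hypotheses of \eqref{eqn:distin}: $x\ne x'$, $\supp(x)\sm\supp(x')=\{i\}\ne\emptyset$ (using $i\notin Z$), $\supp(x')\sm\supp(x)=Z$ so $\wgt(x)=|S|\ge|Z|$, $\wgt(x)=|S|\ge u$, and both vectors are $d$-sparse since $\wgt(x')=|S|-1+|Z|\le 2\floor{d/2}-1<d$. (This is exactly where the bound $\floor{d/2}$ on $|S|$ enters, and $d\ge 2u$ is what guarantees that a critical set of the required size can exist at all.) Applying \eqref{eqn:distin} yields more than $e$ rows $j$ with $\cM[x]_u(j)=1$ and $\cM[x']_u(j)=0$, i.e. $|\supp(\cM_j)\cap S|\ge u$ and $|\supp(\cM_j)\cap(S\sm\{i\})|+|\supp(\cM_j)\cap Z|\le u-1$.

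The crux — the step I expect to require the most care — is extracting from these two cardinality inequalities that each such row $j$ genuinely $u$-satisfies $i$, $S$ and $Z$. The argument I have in mind: if $i$ were not selected by row $j$, then $\supp(\cM_j)\cap(S\sm\{i\})=\supp(\cM_j)\cap S$ would have size $\ge u$, contradicting the second inequality, so $i$ is selected; then the first inequality gives $|\supp(\cM_j)\cap(S\sm\{i\})|\ge u-1$, the second forces it to be $\le u-1$, hence it equals $u-1$ and $|\supp(\cM_j)\cap S|=u$ exactly; substituting back, $|\supp(\cM_j)\cap Z|\le(u-1)-(u-1)=0$, so $\cM|_Z$ has weight zero at row $j$. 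Therefore all $>e$ of these rows $u$-satisfy $i$, $S$ and $Z$, and $\cM$ is $(\floor{d/2},e;u)$-disjunct. The remaining work is routine bookkeeping: confirming the weight/sparsity inequalities, handling edge cases such as $|S|=u$ or $Z=\emptyset$ uniformly, and applying the gap-free conventions for $\cM[\cdot]_u$ consistently throughout.
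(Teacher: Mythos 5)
Your proposal is correct and follows essentially the same route as the paper's proof: the forward direction instantiates the disjunctness definition with $S=\supp(x)$, $Z=\supp(x')\setminus\supp(x)$ and a distinguished $i\in\supp(x)\setminus\supp(x')$, and the converse builds $x,x'$ from a given $(i,S,Z)$ with $\supp(x')=S\cup Z\setminus\{i\}$ and extracts the exact-weight-$u$/zero-weight/$\cM[j,i]=1$ conditions from the two counting inequalities. Your "crux" step is just a more explicit write-up of the paper's one-line disjointness argument, and your bookkeeping (in particular $\wgt(x')\le 2\lfloor d/2\rfloor-1\le d$) is exactly where the $\lfloor d/2\rfloor$ loss enters in the paper as well.
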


  \begin{Proof}
    First, suppose that $\cM$ is $(d,e;u)$-disjunct, and let $y :=
    \cM[x]_u$ and $y' := \cM[x']_u$.  Take any $i \in \supp(x)
    \setminus \supp(x')$, and let $S := \supp(x)$ and $Z := \supp(x')
    \setminus \supp(x)$.  Note that $|S| \leq d$ and by assumption, we
    have $|Z| \leq |S|$.
    Now, Definition~\ref{def:regularMatrix} implies that there is a
    set $E$ of more than $e$ rows of $M$ that $u$-satisfy $i$ as the
    distinguished column, $S$ as the critical set and $Z$ as the zero
    set. Thus for every $j \in E$, the $j$th row of $\cM$ restricted
    to the columns chosen by $\supp(x)$ must have weight exactly $u$,
    while its weight on $\supp(x')$ is less than $u$. Therefore, $y(j)
    = 1$ and $y'(j) = 0$ for more than $e$ choices of $j$.

    For the converse, consider any choice of a distinguished column $i
    \in [n]$, a critical set $S \subseteq [n]$ containing $i$ (such
    that $|S| \geq u$), and a zero set $Z \subseteq [n]$ where $|Z|
    \leq |S|$.  Define $d$-sparse Boolean vectors $x, x' \in \zo^n$ so
    that $\supp(x) := S$ and $\supp(x') := S \cup Z \setminus \{ i\}$.
    Let $y := \cM[x]_u$ and $y' := \cM[x']_u$ and $E := \supp(y)
    \setminus \supp(y')$. By assumption we know that $|E| > e$. Take
    any $j \in E$. Since $y(j) = 1$ and $y'(j) = 0$, we get that the
    $j$th row of $\cM$ restricted to the columns picked by $S \cup Z
    \setminus \{i\}$ must have weight at most $u-1$, whereas it must
    have weight at least $u$ when restricted to $S$.  As the sets
    $\{i\}, S \setminus \{i\}$, and $Z$ are disjoint, this can hold
    only if $\cM[j, i] = 1$, and moreover, the $j$th row of $\cM$
    restricted to the columns picked by $S$ (resp., $Z$) has weight
    exactly $u$ (resp., zero).  Hence, this row (as well as all the
    rows of $\cM$ picked by $E$) must $u$-satisfy $i, S$, and $Z$,
    confirming that $\cM$ is $(\lfloor d/2 \rfloor,e;u)$-disjunct.
  \end{Proof}

  \newConstruction[b]%
  {Direct product of measurement matrices.}%
  {Boolean matrices $\cM_1$ and $\cM_2$ that are
      $m_1 \times n$ and $m_2 \times n$, respectively.}%
      {An $m \times n$ Boolean matrix $\cM_1 \rep
      \cM_2$, where $m := m_1 m_2$. \index{notation!$\cM_1 \rep
        \cM_2$}}%
{Let the rows of $\cM := \cM_1 \rep
      \cM_2$ be indexed by the set $[m_1] \times [m_2]$. Then the row
      corresponding to $(i,j)$ is defined as the bit-wise or of the
      $i$th row of $\cM_1$ and the $j$th row of $\cM_2$.}%
{constr:replace}

  We will use regular matrices as intermediate building blocks in our
  constructions of disjunct matrices to follow. The connection with
  disjunct matrices is made apparent through a direct product of
  matrices defined in Construction~\ref{constr:replace}.
  Intuitively, using this product, regular matrices can be used to
  transform any measurement matrix suitable for the standard group
  testing model to one with comparable properties in the threshold
  model. The following lemma formalizes this idea.

  \begin{lem} \label{lem:rep} Let $\cM_1$ and $\cM_2$ be Boolean
    matrices with $n$ columns, such that $\cM_1$ is
    $(d-1,e_1;u-1)$-regular. Let $\cM := \cM_1 \rep \cM_2$, and
    suppose that for $d$-sparse Boolean vectors $x, x' \in \zo^n$ such
    that $\wgt(x) \geq \wgt(x')$, we have
    \[
    | \supp(\cM_2[x]_1) \setminus \supp(\cM_2[x']_1)| \geq e_2.
    \]
    Then, $ |\supp(\cM[x]_u) \setminus \supp(\cM[x']_u)| \geq (e_1+1)
    e_2.  $
  \end{lem}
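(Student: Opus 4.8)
The plan is to analyze the rows of $\cM = \cM_1 \rep \cM_2$ indexed by pairs $(i,j) \in [m_1] \times [m_2]$, and to show that a suitable product structure of ``good'' rows survives the threshold encoding. First I would set $S := \supp(x)$ and recall that $\wgt(x) \geq \wgt(x')$ together with $\supp(x) \ne \supp(x')$ (which we may assume, since otherwise there is nothing to prove as the right-hand side could be taken vacuous, but in fact the interesting case is $\supp(x) \nsubseteq \supp(x')$) gives us a coordinate $i_0 \in \supp(x) \setminus \supp(x')$; more importantly, set $Z := \supp(x') \setminus \supp(x)$, so that $|Z| \leq \wgt(x') \leq \wgt(x) = |S|$ and $S \cap Z = \emptyset$. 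Since $\cM_1$ is $(d-1,e_1;u-1)$-regular and $u-1 \leq |S \setminus \{i_0\}| \leq d-1$ while $|Z| \leq |S| $ — here I would need to be slightly careful about whether to apply regularity to $S$ or to $S\setminus\{i_0\}$; the cleanest route is to use the critical set $S' := S \setminus \{i_0\}$ of size between $u-1$ and $d-1$ and zero set $Z$, noting $|Z| \le |S| $, and if necessary shrink $Z$ so that $|Z| \le |S'|$, which only helps — there are more than $e_1$ rows of $\cM_1$, say indexed by a set $E_1 \subseteq [m_1]$ with $|E_1| \geq e_1+1$, each of which has weight exactly $u-1$ on $S'$ and weight zero on $Z$.

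Next I would look at $\cM_2$. By hypothesis $|\supp(\cM_2[x]_1) \setminus \supp(\cM_2[x']_1)| \geq e_2$, so there is a set $E_2 \subseteq [m_2]$ with $|E_2| \geq e_2$ such that for each $j \in E_2$, the $j$th row of $\cM_2$ intersects $\supp(x)$ but is disjoint from $\supp(x')$; equivalently it has a $1$ in some column of $S$ and all zeros on $\supp(x') = S' \cup Z \cup \{$possibly $i_0\}$ — wait, $i_0 \in \supp(x)$ not $\supp(x')$, so $\supp(x') = (S \setminus \{i_0\}) \cup Z = S' \cup Z$ when $\supp(x') \subseteq S \cup Z$, which holds by construction of $Z$. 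So for $j \in E_2$ the $j$th row of $\cM_2$ has all zeros on $S' \cup Z$ and at least one $1$ on $S$; since it is zero on $S' = S \setminus \{i_0\}$, that $1$ must be exactly at column $i_0$. The key claim is then: for every $(i,j) \in E_1 \times E_2$, the $(i,j)$th row of $\cM = \cM_1 \rep \cM_2$ (which is the bitwise OR of row $i$ of $\cM_1$ and row $j$ of $\cM_2$) lies in $\supp(\cM[x]_u) \setminus \supp(\cM[x']_u)$. Indeed, on the columns $S$, row $i$ of $\cM_1$ has weight $u-1$ (all on $S'$) and row $j$ of $\cM_2$ has a $1$ at $i_0$ and zeros on $S'$, so the OR has weight exactly $u$ on $S$ — hence this measurement is positive for $x$. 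On $\supp(x') = S' \cup Z$, row $i$ of $\cM_1$ is zero (zero on $S'$ by weight-$(u-1)$-on-$S'$... no: weight exactly $u-1$ on $S'$ is nonzero). Here I must recompute: I want the row to have weight $< u$ on $\supp(x')$; row $i$ of $\cM_1$ has weight $u-1$ on $S' \subseteq \supp(x')$ and weight $0$ on $Z$, contributing $u-1$; row $j$ of $\cM_2$ has weight $0$ on all of $\supp(x') = S' \cup Z$; so the OR has weight at most $u-1$ on $\supp(x')$, hence this measurement is negative (or at least $<u$) for $x'$. Therefore $\cM[x]_u$ has a $1$ and $\cM[x']_u$ has a $0$ at row $(i,j)$, as desired.

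Finally, since the rows $(i,j)$ ranging over $E_1 \times E_2$ are distinct rows of $\cM$ and $|E_1 \times E_2| = |E_1|\cdot|E_2| \geq (e_1+1)e_2$, we conclude $|\supp(\cM[x]_u) \setminus \supp(\cM[x']_u)| \geq (e_1+1)e_2$, completing the proof. The main obstacle I anticipate is bookkeeping around the exact-weight condition: regularity of $\cM_1$ guarantees weight \emph{exactly} $u-1$ on the critical set, and I must verify that adding the single extra $1$ from $\cM_2$ at the distinguished column $i_0$ pushes the weight on $S$ to exactly $u$ (not more), which requires that $i_0 \notin S'$ and that $\cM_2$'s good rows are zero everywhere on $S'$ — both of which follow from the setup, but need to be stated cleanly. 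A secondary subtlety is handling the case distinctions on whether $\supp(x') \subseteq \supp(x) \cup Z$ and the precise size constraints $|Z| \le |S'|$ versus $|Z| \le |S|$; I would absorb these by choosing $Z$ minimally (only the coordinates of $\supp(x')$ outside $\supp(x)$) and, if needed, invoking the fact noted after Definition~\ref{def:regularMatrix} and Lemma~\ref{lem:disjunct} that regular/disjunct matrices are monotone in their parameters.
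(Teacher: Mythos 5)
There is a genuine gap in your argument, and it comes from the order in which you fix the distinguished column. You choose a single $i_0 \in \supp(x)\setminus\supp(x')$ \emph{before} looking at $\cM_2$, set the critical set to $S' = \supp(x)\setminus\{i_0\}$, and then claim that every good row $j \in E_2$ of $\cM_2$ must have its ``$1$'' exactly at column $i_0$. Your justification is that such a row ``is zero on $S'$,'' but that is not what the hypothesis gives you: a row contributing to $\supp(\cM_2[x]_1)\setminus\supp(\cM_2[x']_1)$ is zero on $\supp(x')$ and has a $1$ somewhere in $\supp(x)\setminus\supp(x')$ --- and $\supp(x)\setminus\supp(x')$ may contain many columns besides $i_0$, all of which lie inside $S'$. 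So the extra $1$ from $\cM_2$ can land at a column $c \in S'$ that is already among the $u-1$ ones that the $\cM_1$-row places on $S'$; in that case the OR still has weight only $u-1$ on $S'$, and since regularity says nothing about the $\cM_1$-row's entry at $i_0$, the total weight on $\supp(x)$ can be $u-1 < u$. The measurement would then be \emph{negative} for $x$, and the pair $(k,j)$ would not witness a coordinate of $\supp(\cM[x]_u)\setminus\supp(\cM[x']_u)$. (Your argument does go through in the special case $|\supp(x)\setminus\supp(x')|=1$, which is probably why it looks airtight.)

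The fix is to reverse the quantifiers, which is what the paper's proof does: first fix a row $i$ of $\cM_2$ with $\cM_2[x]_1(i)=1$ and $\cM_2[x']_1(i)=0$, let $j$ be a column in $\supp(x)\setminus\supp(x')$ where \emph{that particular row} has a $1$, and only then apply regularity of $\cM_1$ with the critical set $S := \supp(x)\setminus\{j\}$ and zero set $Z := \supp(x')\setminus\supp(x)$. Now the $1$ contributed by $\cM_2$ sits at the removed column $j \notin S$, so it genuinely adds to the $u-1$ ones that $\cM_1$ places on $S$, giving weight at least $u$ on $\supp(x)$; and since $\supp(x') \subseteq S \cup Z$, the OR has at most $u-1$ ones on $\supp(x')$. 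The critical set thus depends on the row of $\cM_2$, but the pairs $(k,i)$ remain distinct across different $i$, so the count $(e_1+1)e_2$ is unaffected. Your handling of the size constraint $|Z|\le|S|$ (shrinking $Z$ in the disjoint-support case) matches the paper's and is fine; the quantifier order is the one real defect.
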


  \begin{Proof}
    First we consider the case where $u > 1$.  Let $y := \cM_2[x]_1
    \in \zo^{m_2}$, $y' := \cM_2[x']_1 \in \zo^{m_2}$, where $m_2$ is
    the number of rows of $\cM_2$, and let $E := \supp(y) \setminus
    \supp(y')$.  By assumption, $|E| \geq e_2$. Fix any $i \in E$ so
    that $y(i) = 1$ and $y'(i) = 0$.
    Therefore, the $i$th row of $\cM_2$ must have all zeros at
    positions corresponding to $\supp(x')$ and there is a $j \in
    \supp(x) \setminus \supp(x')$ such that $\cM_2[i, j] = 1$.  Define
    $S := \supp(x) \setminus \{j\}$, $Z := \supp(x') \setminus
    \supp(x)$, $z := \cM[x]_u$ and $z' := \cM[x']_u$.

    As $\wgt(x) \geq \wgt(x')$, we know that $|Z| \leq |S|+1$.  The
    extreme case $|Z| = |S|+1$ only happens when $x$ and $x'$ have
    disjoint supports, in which case one can remove an arbitrary
    element of $Z$ to ensure that $|Z| \leq |S|$ and the following
    argument (considering the assumption $u>1$) still goes through.
    By the definition of regularity, there is a set $E_1$ consisting
    of at least $e_1+1$ rows of $\cM_1$ that $(u-1)$-satisfy the
    critical set $S$ and the zero set $Z$. Pick any $k \in E_1$, and
    observe that $z$ must have a $1$ at position $(k,i)$. This is
    because the row of $\cM$ indexed by $(k,i)$ has a $1$ at the $j$th
    position (since the $i$th row of $\cM_2$ does), and at least $u-1$
    more $1$'s at positions corresponding to $\supp(x) \setminus
    \{j\}$ (due to regularity of $\cM_1$).  On the other hand, note
    that the $k$th row of $\cM_1$ has at most $u-1$ ones at positions
    corresponding to $\supp(x')$ (because $\supp(x') \subseteq S \cup
    Z$), and the $i$th row of $\cM_2$ has all zeros at those positions
    (because $y'(i)=0$).  This means that the row of $\cM$ indexed by
    $(k,i)$ (which is the bit-wise or of the $k$th row of $\cM_1$ and
    the $i$th row of $\cM_2$) must have less than $u$ ones at
    positions corresponding to $\supp(x')$, and thus, $z'$ must be $0$
    at position $(k,i)$.  Therefore, $z$ and $z'$ differ at position
    $(k,i)$.

    Since there are at least $e_2$ choices for $i$, and for each
    choice of $i$, at least $e_1+1$ choices for $k$, we conclude that
    in at least $(e_1+1) e_2$ positions, $z$ has a one while $z'$ has
    a zero.

    The argument for $u=1$ is similar, in which case it suffices to
    take $S := \supp(x)$ and $Z := \supp(x') \setminus \supp(x)$.
  \end{Proof}

  As a corollary it follows that, when $\cM_1$ is a
  $(d-1,e_1;u-1)$-regular and $\cM_2$ is a $(d,e_2)$-disjunct matrix,
  the product $\cM := \cM_1 \rep \cM_2$ will distinguish between any
  two distinct $d$-sparse vectors (of weight at least $u$) in at least
  $(e_1+1)(e_2+1)$ positions of the measurement outcomes.  This
  combined with Lemma~\ref{lem:disjunct} would imply that $\cM$ is, in
  particular, $(\lfloor d/2 \rfloor, (e_1+1) (e_2+1) -1; u)$-disjunct.
  However, using a direct argument similar to the above lemma it is
  possible to obtain a slightly better result, given by
  Lemma~\ref{lem:repdisj} (the proof follows the same line of argument
  as that of Lemma~\ref{lem:rep} and is thus omitted).

  \begin{lem} \label{lem:repdisj} Suppose that $\cM_1$ is a
    $(d,e_1;u-1)$-regular and $\cM_2$ is a $(2d,e_2)$-disjunct matrix.
    Then $\cM_1 \rep \cM_2$ is a $(d, (e_1+1) (e_2+1) -1; u)$-disjunct
    matrix. \qed
  \end{lem}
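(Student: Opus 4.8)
The plan is to mimic the argument of Lemma~\ref{lem:rep} directly, but tailored to prove the stronger $(d,(e_1+1)(e_2+1)-1;u)$-disjunct conclusion rather than going through Lemma~\ref{lem:disjunct}, which would only give a $(\lfloor d/2\rfloor,\dots)$ result with a weaker sparsity parameter. First I would unwind Definition~\ref{def:regularMatrix} for the product matrix $\cM := \cM_1 \rep \cM_2$: fix a distinguished column $i \in S$, a critical set $S$ with $u \le |S| \le d$, and a disjoint zero set $Z$ with $|Z| \le |S|$. I must exhibit more than $(e_1+1)(e_2+1)-1$ rows of $\cM$ — i.e., at least $(e_1+1)(e_2+1)$ rows — that $u$-satisfy $i$, $S$, and $Z$: rows where $\cM|_S$ has weight exactly $u$, $\cM|_Z$ has weight zero, and the $i$-th entry is $1$.

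The key structural idea is that a row of $\cM$ indexed by $(k,j) \in [m_1]\times[m_2]$ is the bitwise OR of row $k$ of $\cM_1$ and row $j$ of $\cM_2$. So I would proceed in two layers. \emph{Outer layer ($\cM_2$):} Apply $(2d,e_2)$-disjunctness of $\cM_2$ with the single distinguished column $i$, the ``union'' set $S \cup Z \setminus\{i\}$ (of size at most $2d-1 < 2d$) as the set to be avoided — this gives a set $J \subseteq [m_2]$ of more than $e_2$ rows where $\cM_2$ has a $1$ in column $i$ and all zeros on $S \cup Z \setminus \{i\}$, in particular weight zero on $Z$ and weight exactly $1$ on $S$ (namely at $i$). \emph{Inner layer ($\cM_1$):} Apply $(d,e_1;u-1)$-regularity of $\cM_1$ with critical set $S \setminus \{i\}$ — hmm, here I need $|S\setminus\{i\}| \ge u-1$, which holds since $|S|\ge u$ — and zero set $Z \cup \{i\}$, of size at most $|Z|+1 \le |S| = |S\setminus\{i\}|+1$; a mild adjustment (dropping one element of $Z$ when the bound is tight, exactly as in the proof of Lemma~\ref{lem:rep}, legitimate because $u \ge 1$ and actually we want the $i$-th column zeroed in $\cM_1$) ensures the size constraint. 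This yields a set $K \subseteq [m_1]$ of more than $e_1$ rows where $\cM_1|_{S\setminus\{i\}}$ has weight exactly $u-1$ and $\cM_1|_{Z\cup\{i\}}$ has weight zero.

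Then for each $(k,j) \in K \times J$ I would check the row of $\cM$ indexed by $(k,j)$ $u$-satisfies $i$, $S$, $Z$: its restriction to $S$ is the OR of a weight-$(u-1)$ vector on $S\setminus\{i\}$ (from $\cM_1$, with a $0$ at $i$) and a vector that is $1$ only at $i$ on $S$ (from $\cM_2$), for total weight exactly $u$ with a $1$ at column $i$; its restriction to $Z$ is the OR of two zero vectors, hence zero. Since $|K| \ge e_1+1$ and $|J| \ge e_2+1$, we get at least $(e_1+1)(e_2+1)$ such rows, as required. The main obstacle — really the only delicate point — is bookkeeping the boundary cases: the degenerate size constraints ($|S\setminus\{i\}|$ versus $u-1$, and $|Z \cup\{i\}|$ versus $|S\setminus\{i\}|$, especially when $x,x'$ had disjoint supports in the motivating picture), and separately handling $u=1$, where $\cM_1$ carries no constraint and one just takes $S$ itself as the critical set — but these are exactly the same adjustments already carried out in Lemma~\ref{lem:rep}, so I would state them briefly and refer back rather than re-deriving them.
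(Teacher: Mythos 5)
Your two-layer decomposition is exactly the argument the paper has in mind (the proof of Lemma~\ref{lem:repdisj} is omitted there with a pointer to Lemma~\ref{lem:rep}), and its core is sound: $(2d,e_2)$-disjunctness of $\cM_2$, applied to column $i$ against the at most $2d-1$ columns of $S\cup Z\setminus\{i\}$ (monotonicity in the sparsity parameter lets you use fewer than $2d$ avoided columns), yields more than $e_2$ rows contributing weight exactly $1$ on $S$, located at $i$, and weight $0$ on $Z$; regularity of $\cM_1$ on $S\setminus\{i\}$ yields more than $e_1$ rows contributing the remaining $u-1$ ones; and each OR of such a pair $u$-satisfies $i$, $S$, $Z$, giving the product count $(e_1+1)(e_2+1)$. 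One simplification: you do not need $i$ in the zero set of $\cM_1$ at all, since $\cM_2[j,i]=1$ forces the support of the OR on $S$ to be $\supp(\cM_1[k]|_{S\setminus\{i\}})\cup\{i\}$, of size exactly $u$, irrespective of $\cM_1[k,i]$.

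The genuine gap is the boundary case you dismiss as bookkeeping. Taking $S':=S\setminus\{i\}$ as the critical set for $\cM_1$, Definition~\ref{def:regularMatrix} only covers zero sets of size at most $|S'|=|S|-1$, whereas the disjunctness hypothesis allows $|Z|=|S|$; so even after discarding $i$ the zero set can be one element too large (and your inequality ``$|Z|+1\le|S|$'' silently assumes $|Z|<|S|$). Your repair---drop an element $z_0$ of $Z$, ``exactly as in the proof of Lemma~\ref{lem:rep}''---does not transfer. In Lemma~\ref{lem:rep} the conclusion only requires the row of $\cM$ to have \emph{fewer than} $u$ ones on $\supp(x')$, so a stray $1$ at the dropped coordinate is absorbed by slack (that is precisely what the paper's aside ``considering the assumption $u>1$'' is doing). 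Here the target condition is exact: a row $u$-satisfies $Z$ only if $\cM|_Z$ has weight zero, and regularity applied with zero set $Z\setminus\{z_0\}$ permits $\cM_1[k,z_0]=1$, in which case the OR row has a $1$ at $z_0\in Z$ and the row is lost, with no slack anywhere to absorb it. Closing the gap requires either observing that the chapter's constructions of regular matrices (both the probabilistic one and the condenser-based one) tolerate zero sets of size $|S'|+1$ with only a change of constants---i.e., the clause $|Z|\le|S|$ in Definition~\ref{def:regularMatrix} should be relaxed by one before this lemma is invoked---or strengthening the regularity hypothesis in the statement; as written, your argument does not cover the case $|Z|=|S|$.
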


  As another particular example, we remark that the resilient
  measurement matrices that we constructed in
  Section~\ref{sec:nrConstr} for the ordinary group testing model
  can be combined with regular matrices to offer the same qualities
  (i.e., approximation of sparse vectors in highly noisy settings) in
  the threshold model. In the same way, numerous existing results in
  group testing can be ported to the threshold model by using
  Lemma~\ref{lem:rep} (e.g., constructions of measurement matrices
  suitable for trivial two-stage schemes; cf.\ \cite{ref:CD08}).

  \subsubsection{Constructions} \label{sec:constr}

  In this section, we obtain several constructions of regular and
  disjunct matrices. Our first construction, described in
  Construction~\ref{constr:prob}, is a randomness-efficient
  probabilistic construction that can be analyzed using standard
  techniques from the probabilistic method. The bounds obtained by
  this construction are given by Lemma~\ref{lem:probDisjunct} below.
  The amount of random bits required by this construction is
  polynomially bounded in $d$ and $\log n$, which is significantly
  smaller than it would be had we picked the entries of $\cM$ fully
  independently.

\newConstruction%
{Probabilistic construction of regular and disjunct matrices.}%
{Integer parameters $n, m', d, u$.}%
{An $m \times n$ Boolean matrix $\cM$, where
      $m := m' \lceil \log(d/u) \rceil$.}%
{Let $r := \lceil \log(d/u)
      \rceil$. Index the rows of $\cM$ by $[r] \times [m']$.
%
      Sample the $(i, j)$th row of $\cM$ independently from a
      $(u+1)$-wise independent distribution on $n$ bit vectors, where
      each individual bit has probability $1/(2^{i+2} u)$ of being
      $1$.}%
{constr:prob}

  \begin{lem} \label{lem:probDisjunct} For every $p \in [0,1)$ and
    integer parameter $u > 0$, Construction~\ref{constr:prob}
    with\footnote{The subscript in $O_u(\cdot)$ and $\Omega_u(\cdot)$
      implies that the hidden constant in the asymptotic notation is
      allowed to depend on $u$.}
    $m' = O_{u}(d \log (n/d)/(1-p)^2)$ (resp., $m' = O_{u}(d^2 \log
    (n/d)/(1-p)^2)$) outputs a $(d, \Omega_{u}(pm');u)$-regular
    (resp., $(d,\Omega_{u}(pm'/d);u)$-disjunct) matrix with
    probability $1-o(1)$.
  \end{lem}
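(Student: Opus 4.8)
The plan is to prove both parts of Lemma~\ref{lem:probDisjunct} via the probabilistic method, following essentially the same structure as the proof of Theorem~\ref{thm:classicDisjunct:random} but with three new ingredients: the handling of the logarithmic "layering" built into Construction~\ref{constr:prob}, the use of $(u+1)$-wise independence in place of full independence, and a union bound tailored to the $(d,e;u)$-regular / $(d,e;u)$-disjunct structure of Definition~\ref{def:regularMatrix}. First I would fix the relevant bad event: for the regular case, fix a critical set $S$ and zero set $Z$ with $u \le |S| \le d$, $|Z| \le |S|$, $S \cap Z = \emptyset$, and (for the disjunct case) also a distinguished column $i \in S$; call a row "good" if $\cM|_S$ has weight exactly $u$ and $\cM|_Z$ has weight zero on that row (plus a $1$ in column $i$ for the disjunct case). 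The matrix fails for this choice iff fewer than $e$ rows are good.

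The key observation making the layering work is that if $|S| = s \in [u, d]$, then choosing the layer index $i^\star := \lceil \log(s/u) \rceil \le r = \lceil \log(d/u)\rceil$, the per-bit probability $q_{i^\star} = 1/(2^{i^\star+2} u)$ is tuned so that $q_{i^\star} s = \Theta(1/u)$ — i.e., on rows of that layer the expected number of $1$'s among the $s$ columns of $S$ is a constant (depending on $u$). Then I would compute, using only $(u+1)$-wise independence of the bits in a single row (which suffices because the event "exactly $u$ ones in $S$, zero ones in $Z$, and a $1$ in a fixed column" is determined by at most $u+1$ coordinates, via inclusion–exclusion over which $u$ of the $S$-columns are the ones that are set), that a single row of layer $i^\star$ is good with probability $p_{\mathrm{good}} = \Omega_u(1)$. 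Since the $m'$ rows within layer $i^\star$ are sampled independently, the number of good rows is a sum of $m'$ i.i.d.\ indicators with mean $\mu = \Omega_u(m')$; a Chernoff bound gives that the number of good rows is at least $e := \Omega_u(p m')$ except with probability $\exp(-\Omega_u((1-p)^2 m'))$.

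Then I would finish with a union bound. For the regular case the number of choices of $(S,Z)$ is at most $\binom{n}{d}\binom{n}{d} \le (en/d)^{2d}$, so the total failure probability is at most $(en/d)^{2d}\exp(-\Omega_u((1-p)^2 m'))$, which is $o(1)$ provided $m' = O_u(d\log(n/d)/(1-p)^2)$ with a large enough constant. For the disjunct case there is an extra factor of $n$ for the choice of the distinguished column $i$, and — crucially — the single-row good probability now carries an extra factor of roughly $q_{i^\star} = \Theta(1/(us))$ from insisting on a $1$ in column $i$, so the per-row success probability drops to $\Omega_u(1/d)$ and $\mu = \Omega_u(m'/d)$; to keep the Chernoff deviation term $\exp(-\Omega_u((1-p)^2 m'/d))$ dominating the union bound $(en/d)^{2d} n$ one needs $m' = O_u(d^2\log(n/d)/(1-p)^2)$, and the resulting noise tolerance is $e = \Omega_u(pm'/d)$, exactly as claimed.

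The main obstacle I expect is the bookkeeping in the single-row probability estimate under limited independence: one must verify (i) that the event in question really depends on only $u+1$ coordinates once one conditions on \emph{which} $u$ columns of $S$ are the set ones, so that $(u+1)$-wise independence computes its probability exactly; (ii) that after summing over the $\binom{s}{u}$ choices of which columns are set — and subtracting, via inclusion–exclusion, configurations with more than $u$ ones in $S$ — the net probability is bounded below by a positive constant depending only on $u$ (this uses $q_{i^\star} s = \Theta_u(1)$ and that $q_{i^\star}|Z| \le q_{i^\star}|S| = \Theta_u(1)$, so the "all zeros on $Z$" factor $(1-q_{i^\star})^{|Z|}$ is bounded away from $0$); and (iii) that the constants can be chosen uniformly over all $s \in [u,d]$ and the right layer picked for each $s$. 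Once these are nailed down, the Chernoff-plus-union-bound finish is routine and mirrors Theorem~\ref{thm:classicDisjunct:random}.
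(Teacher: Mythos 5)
Your proposal follows essentially the same route as the paper's proof: pick the layer $i$ with $2^{i-1}u \le |S| \le 2^i u$ so that the per-bit probability satisfies $q_i|S| = \Theta(1)$, lower-bound the single-row success probability by a constant $c = c(u) > 0$ by decomposing over which $u$-subset $R \subseteq S$ carries the ones (using $u$-wise independence for $\Pr[w|_R = 1^u]$ and $(u+1)$-wise independence plus a union bound for the "all zeros elsewhere" factor), then apply Chernoff over the $m'$ independent rows of that layer and a union bound over the at most $d^2\binom{n}{d}^2$ choices of $(S,Z)$ (times $n$ for the distinguished column in the disjunct case). The only cosmetic differences are that the paper's decomposition over $R$ yields disjoint events, so no inclusion--exclusion is needed, and your extra factor of $q_{i}=\Omega_u(1/d)$ for the distinguished column correctly reproduces the $d^2$ scaling and $e=\Omega_u(pm'/d)$ that the paper leaves to "the proof is similar."
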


  \begin{Proof}
    We show the claim for regular matrices, the proof for disjunct
    matrices is similar.  Consider any particular choice of a critical
    set $S \subseteq [n]$ and a zero set $Z \subseteq [n]$ such that
    $u \leq |S| \leq d$ and $|Z| \leq |S|$. Choose an integer $i$ so
    that $2^{i-1} u \leq |S| \leq 2^i u$, and take any $j \in
    [m']$. Denote the $(i,j)$th row of $\cM$ by the random variable
    $\rv{w} \in \zo^n$, and by $q$ the ``success'' probability that
    $\rv{w}|_S$ has weight exactly $u$ and $\rv{w}|_Z$ is all zeros.
    For an integer $\ell > 0$, we will use the shorthand $1^\ell$
    (resp., $0^\ell$) for the all-ones (resp., all-zeros) vector of
    length $\ell$.  We have

\begin{align*}
  q &= \sum_{\substack{R \subseteq [S] \\ |R| = u}} \Pr[(\rv{w}|_R) = 1^u \land (\rv{w}|_{Z \cup (S \setminus R)}) = 0^{|S|+|Z|-u}] \\
  &= \sum_{R} \Pr[(\rv{w}|_R) = 1^u] \cdot \Pr[(\rv{w}|_{Z \cup (S \setminus R)}) = 0^{|S|+|Z|-u} \mid (\rv{w}|_R) = 1^u] \displaybreak[0]\\
  &\stackrel{\mathrm{(a)}}{=} \sum_{R} (1/(2^{i+2} u))^u \cdot (1-\Pr[(\rv{w}|_{Z \cup (S \setminus R)}) \neq 0^{|S|+|Z|-u} \mid (\rv{w}|_R) = 1^u]) \displaybreak[0]\\
  &\stackrel{\mathrm{(b)}}{\geq} \sum_{R} (1/(2^{i+2} u))^u \cdot (1-(|S|+|Z|-u)/(2^{i+2} u)) \\
  &\stackrel{\mathrm{(c)}}{\geq} \frac{1}{2} \binom{|S|}{u} (1/(2^{i+2} u))^u \geq \frac{1}{2} \left(\frac{|S|}{u}\right)^u \cdot (1/(2^{i+2} u))^u \geq \frac{1}{2^{3u+1} \cdot u^u} =: c, \\
\end{align*}
where $\mathrm{(a)}$ and $\mathrm{(b)}$ use the fact that the entries
of $\rv{w}$ are $(u+1)$-wise independent, and $\mathrm{(b)}$ uses an
additional union bound. Moreover, in $\mathrm{(c)}$ the binomial term
counts the number of possibilities for the set $R$. Note that the
lower bound $c > 0$ obtained at the end is a constant that only
depends on $u$.  Now, let $e := m'p q$, and observe that the expected
number of ``successful'' rows is $m' q$. Using Chernoff bounds, and
independence of the rows, the probability that there are at most $e$
rows (among $(i,1), \ldots, (i, m')$) whose restriction to $S$ and $Z$
has weights $u$ and $0$, respectively, becomes upper bounded by
\[
\exp( -(m'q-e)^2/(2m'q) ) = \exp( -(1-p)^2 m' q/2 ) \leq \exp(
-(1-p)^2 m' c/2 ).
\]
Now take a union bound on all the choices of $S$ and $Z$ to conclude
that the probability that the resulting matrix is not $(d,e;
u)$-regular is at most
\begin{gather*}
  \left(\sum_{s=u}^{d} \binom{n}{s} \sum_{z=0}^{s}
    \binom{n-s}{z}\right)
  \exp( -(1-p)^2 m' c/2 ) \\
  \leq d^2 \binom{n}{d}^2 \exp( -(1-p)^2 m' c/2 ),
\end{gather*}
which can be made $o(1)$ by choosing $m' = O_{u}(d
\log(n/d)/(1-p)^2)$.
\end{Proof}

\newConstruction%
{A building block for construction of regular matrices.}%
{A strong lossless $(\tk, \eps)$-condenser
    $f\colon \zo^\tn \times \zo^\tee \to \zo^\tl$, integer parameter
    $u \geq 1$ and real parameter $p \in [0,1)$ such that $\eps <
    (1-p)/16$,}%
{An $m \times n$ Boolean matrix $\cM$, where $n
    := 2^\tn$ and $ m = 2^{\tee+\tk} O_u(2^{u(\tl-\tk)}) $.}%
    {Let $G_1=(\zo^\tl, \zo^\tk, E_1)$ be any bipartite bi-regular
    graph with left vertex set $\zo^\tl$, right vertex set $\zo^\tk$,
    left degree $d_\ell := 8u$, and right degree $d_r := 8u
    2^{\tl-\tk}$.  Replace each right vertex $v$ of $G_1$ with
    $\binom{d_r}{u}$ vertices, one for each subset of size $u$ of the
    vertices on the neighborhood of $v$, and connect them to the
    corresponding subsets. Denote the resulting graph by $G_2 =
    (\zo^\tl, V_2, E_2)$, where $|V_2| = 2^\tk \binom{d_r}{u}$.
    Define the bipartite graph $G_3=(\zo^n, V_3, E_3)$, where $V_3 :=
    \zo^t \times V_2$, as follows: Each left vertex $x \in \zo^n$ is
    connected to $(y, \Gamma_2(f(x,y))$, for each $y \in \zo^t$, where
    $\Gamma_2(\cdot)$ denotes the neighborhood function of $G_2$
    (i.e., $\Gamma_2(v)$ denotes the set of vertices adjacent to $v$
    in $G_2$).  The output matrix $\cM$ is the bipartite adjacency
    matrix of $G_3$.}%
{constr:main}

\newConstruction%
{Regular matrices from strong lossless condensers.}%
{Integer parameters $d \geq u \geq 1$, real
    parameter $p \in [0,1)$, and a family $f_0, \ldots, f_r$ of strong
    lossless condensers, where $r := \lceil \log(d/u') \rceil$ and
    $u'$ is the smallest power of two such that $u' \geq u$.  Each
    $f_i\colon \zo^\tn \times \zo^\tee \to \zo^{\tl(i)}$ is assumed to
    be a strong lossless $(\tk(i), \eps)$-condenser, where $\tk(i) :=
    \log u'+i+1$ and $\eps < (1-p)/16$.}%
{An $m \times n$ Boolean matrix $\cM$, where $n
    := 2^\tn$ and $
    m = 2^{\tee} d \sum_{i=0}^r O_u(2^{u (\tl(i)-\tk(i))}) $.}%
{For each $i \in \{0, \ldots, r\}$, denote
    by $\cM_i$ the output matrix of Construction~\ref{constr:main}
    when instantiated with $f_i$ as the underlying condenser, and by
    $m_i$ its number of rows.  Define $r_i := 2^{r-i}$ and let
    $\cM'_i$ denote the matrix obtained from $\cM_i$ by repeating each
    row $r_i$ times.  Construct the output matrix $\cM$ by stacking
    $\cM'_0, \ldots, \cM'_r$ on top of one another.}%
{constr:reg}

Now we turn to a construction of regular matrices using strong
lossless condensers. Details of the construction are described in
Construction~\ref{constr:reg} that assumes a family of lossless
condensers with different entropy requirements\footnote{We have
  assumed that all the functions in the family have the same seed
  length $t$. If this is not the case, one can trivially set $t$ to be
  the largest seed length in the family.}, and in turn, uses
Construction~\ref{constr:main} as a building block.

The following theorem analyzes the obtained parameters without
specifying any particular choice for the underlying family of
condensers.

\begin{thm} \label{thm:regular} The $m \times n$ matrix $\cM$ output
  by Construction~\ref{constr:reg} is $(d,p \gamma 2^t;u)$-regular,
  where $\gamma = \max\{1, \Omega_u(d \cdot \min\{2^{\tk(i)-\tl(i)}
  \colon i=0,\ldots,r \}) \}$.
\end{thm}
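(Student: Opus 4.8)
The plan is to argue scale by scale. Given a critical set $S$ with $u\le|S|\le d$ and a disjoint zero set $Z$ with $|Z|\le|S|$, I would first select the index $j:=\lceil\log(|S|/u')\rceil\in\{0,\dots,r\}$, so that with $\tk(j)=\log u'+j+1$ we have $2^{\tk(j)}/4<|S|\le 2^{\tk(j)}/2$ and hence $|S|+|Z|\le 2^{\tk(j)}$. The block $\cM'_j$ of $\cM$ (built from the lossless $(\le \tk(j),\eps)$-condenser $f_j$ via Construction~\ref{constr:main} and then repeated $r_j=2^{r-j}$ times) is the one that will supply the required number of rows that $u$-satisfy $S$ and $Z$. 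The rows coming from the other blocks are simply ignored.

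The key structural input is the near-injectivity of lossless condensers (the ``injective side'' complementing the agreement-list bound of Theorem~\ref{thm:list} used in Theorem~\ref{thm:main}). Applying Proposition~\ref{prop:flatmap}(1) to the map $(x,y)\mapsto(y,f_j(x,y))$ on the flat source uniform on $(S\cup Z)\times\zo^\tee$ (whose min-entropy $\tee+\log|S\cup Z|$ is at most $\tee+\tk(j)$, so the lossless condenser outputs a distribution $\eps$-close to one of the same min-entropy), I obtain that all but an $O(\eps)$-fraction of pairs $(x,y)$ are \emph{isolated}, meaning $f_j(x,y)\ne f_j(x',y)$ for every $x'\in(S\cup Z)\setminus\{x\}$. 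A Markov averaging over $y$ then yields a set $\G$ of good seeds of density $1-O(\eps^{1/3})$, under each of which at least $|S|/2$ elements of $S$ are isolated; here $\eps<(1-p)/16$ guarantees $1-O(\eps^{1/3})$ exceeds $p$ with room to spare. For a good seed $y$, the set $A_y$ of images $f_j(x,y)$ of the isolated elements of $S$ has $|A_y|\ge|S|/2>2^{\tk(j)}/8$ pairwise distinct members, and a pigeonhole count in the bi-regular graph $G_1$ of Construction~\ref{constr:main} (left degree $8u$, right side $\zo^{\tk(j)}$) shows that at least $\Omega_u(2^{2\tk(j)-\tl(j)})$ right vertices $w$ have $\ge u$ neighbours in $A_y$.

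For any such $w$ and any $u$-subset $T\subseteq\Gamma_1(w)\cap A_y$, I would check that the row of $\cM_j$ indexed by $(y,(w,T))$ $u$-satisfies $S$ and $Z$: its restriction to $S$ is supported exactly on the $u$ isolated preimages of the elements of $T$ (no other element of $S$, and no element of $Z$, can map into $T$, because each element of $T$ is the image of an isolated point), and its restriction to $Z$ is zero. Summing the number of such rows over $w$ and over the good seeds, and multiplying by the repetition factor $r_j=2^{r-j}$ built into $\cM'_j$, the factor $2^{2\tk(j)-\tl(j)}=2u'2^j\cdot 2^{\tk(j)-\tl(j)}$ combines with $2^{r-j}$ so that the powers of $2^{j}$ cancel, and using $2^r\ge d/u'$ the count becomes at least $\Omega_u(d\cdot 2^{\tk(j)-\tl(j)})\cdot 2^\tee$ — and, from $r_j\ge 1$ alone, at least $\Omega_u(2^\tee)$. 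Since $2^{\tk(j)-\tl(j)}\ge\min_i 2^{\tk(i)-\tl(i)}$, this exceeds $e:=p\gamma 2^\tee$ with $\gamma=\max\{1,\Omega_u(d\cdot\min_i 2^{\tk(i)-\tl(i)})\}$ once the constant in $\Omega_u$ is chosen small enough. As $S$ and $Z$ were arbitrary, $(d,p\gamma 2^\tee;u)$-regularity follows.

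The main obstacle is precisely this last bookkeeping step: one must make the per-good-seed pigeonhole yield, once combined with the scale-dependent repetition count $r_j$, produce a lower bound that is simultaneously uniform over every $|S|\in[u,d]$ and clears the threshold $e=p\gamma 2^\tee$ even as $p\to 1$ (the regime where $\gamma$ degenerates to the constant $1$ and only the crude $r_j$-count survives). This is exactly what dictates the particular constants in Construction~\ref{constr:reg} — the ``$+1$'' in $\tk(i)$, the left degree $8u$ of $G_1$, the repetition $r_i=2^{r-i}$, and the bound $\eps<(1-p)/16$ — and confirming that these fit together is where the real care is needed; the ``isolated image'' and pigeonhole ingredients themselves are routine once Proposition~\ref{prop:flatmap} is in hand.
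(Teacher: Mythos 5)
Your proposal follows essentially the same route as the paper's proof: pick the scale $j$ matching $|S|$, use the near-injectivity of the lossless condenser (Proposition~\ref{prop:flatmap}) to find, for most seeds, a large set of elements of $S$ with collision-free images, then pigeonhole in $G_1$ to find $\Omega_u(2^{2\tk-\tl})$ right vertices with $\ge u$ neighbours among those images, pass to $G_2$ to make the count exactly $u$, and finally let the repetition factor $r_j=2^{r-j}$ cancel the $2^{j}$ so the bound becomes uniform in $|S|$. The structural steps and the final bookkeeping (including the cancellation $2^{\tk(j)}\cdot 2^{r-j}=2u'2^{r}\ge 2d$) are all correct and match the paper.

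There is, however, one genuine quantitative error, and it sits exactly in the step you flag as delicate. You claim the Markov averaging over seeds yields good seeds of density $1-O(\eps^{1/3})$ and that ``$\eps<(1-p)/16$ guarantees $1-O(\eps^{1/3})$ exceeds $p$ with room to spare.'' That implication is false in the regime $p\to 1$: $\eps<(1-p)/16$ only gives $\eps^{1/3}<((1-p)/16)^{1/3}$, which is far larger than $1-p$ when $1-p$ is small (e.g.\ $1-p=10^{-3}$ gives $\eps^{1/3}\approx 0.04$), so $1-O(\eps^{1/3})$ can be well below $p$. Since the theorem's conclusion needs \emph{more than} a $p$ fraction of good seeds for every $p<1$ with only $\eps<(1-p)/16$, your accounting does not close. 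The repair is the paper's: apply Corollary~\ref{coro:strongExt} with $\delta:=1-p$, so that for more than a $p$ fraction of seeds the conditional output distribution is $\eps/(1-p)<1/16$-close to min-entropy $\tk(j)$; a \emph{constant} closeness of $1/16$ is already enough for Proposition~\ref{prop:flatmap} to give $\Omega(2^{\tk})$ isolated elements of $S$ per good seed. In other words, do not try to make the per-seed error small at the expense of the good-seed density --- keep the density at $>p$ and tolerate constant per-seed error. A second, minor point: you apply the condenser to the flat source on $S\cup Z$ directly, whose min-entropy may be strictly below $\tk(j)$; this needs the $(\le\tk(j),\eps)$ property, whereas Construction~\ref{constr:reg} only posits a $(\tk(j),\eps)$-condenser. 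The paper sidesteps this by first growing $S$ and $Z$ to disjoint sets of size exactly $2^{j}u'$ each, so the source has min-entropy exactly $\tk(j)$; you should do the same (or explicitly assume the $\le$ property, which the known instantiations satisfy).
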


\begin{Proof}
  As a first step, we verify the upper bound on the number of
  measurements $m$. Each matrix $\cM_i$ has $m_i = 2^{t+\tk(i)}
  O_u(2^{u(\tl(i)-\tk(i))})$ rows, and $M'_i$ has $m_i r_i$ rows,
  where $r_i = 2^{r-i}$. Therefore, the number of rows of $M$ is
  \[
  \sum_{i=0}^r r_i m_i = \sum_{i=0}^r 2^{t+\log u' + r+1} m_i = 2^t d
  \sum_{i=0}^r O_u(2^{u(\tl(i)-\tk(i))}).
  \]

  Let $S, Z \subseteq \zo^\tn$ respectively denote any choice of a
  critical set and zero set of size at most $d$, where $|Z| \leq |S|$,
  and choose an integer $i \geq 0$ so that $2^{i-1} u' \leq |S| \leq
  2^i u'$.  Arbitrarily grow the two sets $S$ and $Z$ to possibly
  larger, and disjoint, sets $S' \supseteq S$ and $Z' \supseteq Z$
  such that $|S'| = |Z'| = 2^i u'$ (for simplicity we have assumed
  that $d \leq n/2$).  Our goal is to show that there are ``many''
  rows of the matrix $\cM_i$ (in Construction~\ref{constr:reg}) that
  $u$-satisfy $S$ and $Z$.

  Let $\tk := \tk(i) = \log u' + i+1$, $\tl := \tl(i)$, and denote by
  $G_1, G_2, G_3$ the bipartite graphs used by the instantiation of
  Construction~\ref{constr:main} that outputs $\cM_i$. Thus we need to
  show that ``many'' right vertices of $G_3$ are each connected to
  exactly $u$ of the vertices in $S$ and none of those in $Z$.

  Consider the uniform distribution $\cX$ on the set $S' \cup Z'$,
  which has min-entropy $\log u' + i+1$.  By an averaging argument,
  since the condenser $f_i$ is strong, for more than a $p$ fraction of
  the choices of the seed $y \in \zo^\tee$ (call them \emph{good
    seeds}), the distribution $\cZ_y := f_i(\cX, y)$ is
  $\eps/(1-p)$-close (in particular, $1/16$-close) to a distribution
  with min-entropy $\log u' + i+1$.

  Fix any good seed $y \in \zo^\tee$.  Let $G=(\zo^\tn, \zo^\tl, E)$
  denote a bipartite graph representation of $f_i$, where each left
  vertex $x \in \zo^\tn$ is connected to $f_i(x,y)$ on the
  right. Denote by $\Gamma_y(S' \cup Z')$ the right vertices of $G$
  corresponding to the neighborhood of the set of left vertices picked
  by $S' \cup Z'$. Note that $\Gamma_y(S' \cup Z') = \supp(\cZ_y)$.
  Using Proposition~\ref{prop:flatmap} in the appendix, we see that
  since $\cZ_y$ is $1/16$-close to having min-entropy $\log(|S' \cup
  Z'|)$, there are at least $(7/8) |S' \cup Z'|$ vertices in
  $\Gamma(S' \cup Z')$ that are each connected to exactly one left
  vertex in $S' \cup Z'$.  Since $|S| \geq |S' \cup Z'| / 4$, this
  implies that at least $|S' \cup Z'|/8$ vertices in $\Gamma(S' \cup
  Z')$ (call them $\Gamma'_y$) are connected to exactly one left
  vertex in $S$ and no other vertex in $S' \cup Z'$.  In particular we
  get that $|\Gamma'_y| \geq 2^{k-3}$.

  Now, in $G_1$, let $T_y$ be the set of left vertices corresponding
  to $\Gamma'_y$ (regarding the left vertices of $G_1$ in one-to-one
  correspondence with the right vertices of $G$).  The number of edges
  going out of $T_y$ in $G_1$ is $d_\ell |T_y| \geq u 2^k$. Therefore,
  as the number of the right vertices of $G_1$ is $2^k$, there must be
  at least one right vertex that is connected to at least $u$ vertices
  in $T_y$.  Moreover, a counting argument shows that the number of
  right vertices connected to at least $u$ vertices in $T_y$ is also
  at least $2^{k-\tl} 2^k/(10u)$.

  Observe that in construction of $G_2$ from $G_1$, any right vertex
  of $G_1$ is replicated $\binom{d_r}{u}$ times, one for each
  $u$-subset of its neighbors. Therefore, for a right vertex of $G_1$
  that is connected to \emph{at least} $u$ left vertices in $T_y$, one
  or more of its copies in $G_2$ must be connected to \emph{exactly}
  $u$ vertex in $T_y$ (among the left vertices of $G_2$) and no other
  vertex (since the right degree of $G_2$ is equal to $u$).

  Define $\gamma' := \max\{1, 2^{k-\tl} 2^k/(10u)\}$. From the
  previous argument we know that, looking at $T_y$ as a set of left
  vertices of $G_2$, there are at least $\gamma'$ right vertices on
  the neighborhood of $T_y$ in $G_2$ that are connected to exactly $u$
  of the vertices in $T_y$ and none of the left vertices outside
  $T_y$. Letting $v_y$ be any such vertex, this implies that the
  vertex $(y,v_y) \in V_3$ on the right part of $G_3$ is connected to
  exactly $u$ of the vertices in $S$, and none of the vertices in $Z$.
  Since the argument holds for every good seed $y$, the number of such
  vertices is at least the number of good seeds, which is more than $p
  \gamma' 2^t$. Since the rows of the matrix $m_i$ are repeated $r_i =
  2^{r-i}$ times in $M$, we conclude that $M$ has at least $p \gamma'
  2^{t+r-i} \geq p \gamma 2^t$ rows that $u$-satisfy $S$ and $Z$, and
  the claim follows.
\end{Proof}

\subsubsection*{Instantiations} \label{sec:instan}

We now instantiate the result obtained in Theorem~\ref{thm:regular} by
various choices of the family of lossless condensers. The crucial
factors that influence the number of measurements are the seed length
and the output length of the condenser. In particular, we will
consider optimal lossless condensers (with parameters achieved by
random functions), zig-zag based construction of
Theorem~\ref{thm:CRVW}, and the coding-theoretic construction of
Guruswami et al., quoted in Theorem~\ref{thm:GUVcond}. The results are
summarized in the following theorem.


%

\begin{thm} \label{thm:instan} Let $u > 0$ be fixed, and $p \in [0,1)$
  be a real parameter. Then for integer parameters $d, n \in \N$ where
  $u \leq d \leq n$,

  \begin{enumerate}
  \item Using an optimal lossless condenser in
    Construction~\ref{constr:reg} results in an $m_1 \times n$ matrix
    $\cM_1$ that is $(d,e_1; u)$-regular, where \[ m_1 = O(d (\log n)
    (\log d) / (1-p)^{u+1}) \] and $e_1 = \Omega(p d \log n)$,

  \item Using the lossless condenser of Theorem~\ref{thm:CRVW} in
    Construction~\ref{constr:reg} results in an $m_2 \times n$ matrix
    $\cM_2$ that is $(d,e_2; u)$-regular, where \[ m_2 = O(T_2 d (\log
    d) / (1-p)^u)\] for some \[ T_2 = \exp(O(\log^3((\log n)/(1-p))))
    = \qpoly(\log n),\] and $e_2 = \Omega(pd T_2 (1-p))$.

  \item Let $\beta > 0$ be any fixed constant. Then
    Construction~\ref{constr:reg} can be instantiated using the
    lossless condenser of Theorem~\ref{thm:GUVcond} so that we obtain
    an $m_3 \times n$ matrix $\cM_3$ that is $(d, e_3; u)$-regular,
    where \[ m_3 = O(T_3^{1+u} d^{1+\beta} (\log d))\] for \[ T_3 :=
    ((\log n) (\log d)/(1-p))^{1+u/\beta} = \poly(\log n, \log d),\]
    and $e_3 = \Omega(p \max\{ T_3, d^{1-\beta/u}\})$.
  \end{enumerate}
\end{thm}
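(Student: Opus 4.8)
The statement is purely a matter of substituting the three families of lossless condensers into Theorem~\ref{thm:regular} and bookkeeping the parameters; no new ideas are needed beyond what the theorem already provides, so the plan is to apply Theorem~\ref{thm:regular} three times and, in each case, compute $\gamma$, the seed length $t$, and the output lengths $\tl(i)$. Recall that Construction~\ref{constr:reg} uses a family $f_0,\ldots,f_r$ of strong lossless condensers with $r=\lceil\log(d/u')\rceil=O(\log d)$, that $f_i$ has entropy requirement $\tk(i)=\log u'+i+1$ (so $2^{\tk(i)}=\Theta_u(2^i)$ ranges from $\Theta(u)$ up to $\Theta(d)$), and that the output matrix has $m=2^t d\sum_{i=0}^r O_u(2^{u(\tl(i)-\tk(i))})$ rows and is $(d,p\gamma 2^t;u)$-regular with $\gamma=\max\{1,\Omega_u(d\cdot\min_i 2^{\tk(i)-\tl(i)})\}$. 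The uniform requirement $\eps<(1-p)/16$ on the condenser error is met in all three cases because $\eps$ is a free parameter (or can be driven down polynomially) and only affects the hidden constants.

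\textbf{Part (1): optimal lossless condenser.} First I would plug in the non-explicit optimal strong lossless $(\tk,\eps)$-condenser (Section~\ref{sec:extrConstr}, the Capalbo et al.\ probabilistic bound): seed length $\tee=\log\tn+\log(1/\eps)+O(1)$ and output length $\tl(i)=\tk(i)+\log(1/\eps)+O(1)$, i.e.\ constant overhead $\tl(i)-\tk(i)=O(1)$ once $\eps$ is a constant. Then $2^t=\Theta(\tn/\eps)=\Theta(\log n)$, each summand $2^{u(\tl(i)-\tk(i))}=O_u(1)$, so $\sum_{i=0}^r O_u(2^{u(\tl(i)-\tk(i))})=O_u(r)=O_u(\log d)$, giving $m_1=2^t d\cdot O_u(\log d)=O_u(d(\log n)(\log d))$; carrying the $(1-p)$-dependence of $\eps$ through yields $m_1=O(d(\log n)(\log d)/(1-p)^{u+1})$ as claimed (one factor $(1-p)$ from $2^t\sim 1/\eps$ and $u$ more from the $2^{u(\tl-\tk)}$ terms when $\eps\asymp(1-p)$). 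For the noise tolerance, $\min_i 2^{\tk(i)-\tl(i)}=2^{-O(1)}=\Omega(1)$, so $\gamma=\Omega_u(d)$ and $e_1=p\gamma 2^t=\Omega(pd\log n)$.

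\textbf{Parts (2) and (3): explicit condensers.} For Part~(2) I would substitute the zig-zag condenser of Theorem~\ref{thm:CRVW}, which as a $\tk\to_\eps\tk$ condenser has output length $\tl(i)=\tk(i)+\log(1/\eps)+O(1)$ (again $O(1)$ overhead for constant $\eps$) but seed length $\tee=O(\log^3(\tn/\eps))$; setting $T_2:=2^t=\exp(O(\log^3((\log n)/(1-p))))=\qpoly(\log n)$ and repeating the computation of Part~(1) verbatim gives $m_2=T_2 d\cdot O_u(\log d/(1-p)^u)$ and $e_2=p\gamma T_2=\Omega(pdT_2(1-p))$, the extra $(1-p)$ coming from $\eps\asymp(1-p)$ in the overhead. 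For Part~(3) I would use the Guruswami--Umans--Vadhan lossless condenser (Theorem~\ref{thm:GUVcond}) with constant $\alpha$: there the overhead is not constant but $\tl(i)-\tk(i)=\alpha\tk(i)+O(\log(\tn\tk(i)/\eps))$, so the dominant summand is $i=r$ (largest $\tk(i)=\Theta(\log d)$, hence $2^{\tk(r)}=\Theta(d)$), contributing $2^{u\alpha\tk(r)}\cdot\poly(\cdot)=d^{u\alpha}\cdot\poly(\log n,\log d)$; choosing $\alpha=\beta/u$ makes this $d^{\beta}\poly(\log n,\log d)$, and with $T_3:=2^t=((\log n)(\log d)/(1-p))^{1+u/\beta}$ (the seed length of Theorem~\ref{thm:GUVcond} raised to the power coming from the $1+1/\alpha$ factor) one gets $m_3=O(T_3^{1+u}d^{1+\beta}(\log d))$ after collecting the factors $T_3$ from $2^t$, the $d^{1+\beta}$ from $d\cdot d^\beta$, and tracking the $\poly$ terms into $T_3^{u}$; the noise tolerance is $e_3=p\gamma 2^t$ with $\gamma=\max\{1,\Omega_u(d\cdot 2^{-\alpha\tk(r)}/\poly)\}=\Omega_u(\max\{1,d^{1-\beta/u}/\poly\})$, giving $e_3=\Omega(p\max\{T_3,d^{1-\beta/u}\})$.

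\textbf{Main obstacle.} The only real care is in Part~(3): unlike Parts~(1)--(2), the overhead of the GUV condenser grows with the entropy requirement, so the sum $\sum_i O_u(2^{u(\tl(i)-\tk(i))})$ is geometric and dominated by its last term, and one must check that $\tk(i)=\Theta(\log d)$ for $i$ near $r$ to see that this last term is $d^{u\alpha}\cdot\poly$ rather than something larger; likewise $\gamma$ is governed by the \emph{worst} (largest-overhead) condenser in the family, which is again $i=r$, and one must verify that $\min_i 2^{\tk(i)-\tl(i)}=2^{\tk(r)-\tl(r)}=d^{-\alpha}/\poly$. Once $\alpha$ is set to $\beta/u$ these all collapse to the stated bounds, and the $(1-p)$-dependencies throughout come uniformly from the constraint $\eps<(1-p)/16$ together with the fact that the condenser error enters the overhead logarithmically and the seed length logarithmically — so nothing here is deeper than careful substitution and collecting exponents.
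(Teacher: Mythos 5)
Your proposal follows exactly the paper's proof: instantiate Theorem~\ref{thm:regular} with the three condensers, using $2^{\tl(i)-\tk(i)}=O(1/\eps)$ with $\eps=\Theta(1-p)$ for the optimal and zig-zag condensers and $\tl(i)-\tk(i)=t+\alpha\tk(i)$ with $\alpha=\beta/u$ (dominant term at $i=r$) for Guruswami--Umans--Vadhan, then collect exponents. The only slight imprecision is in Part~(1), where $\gamma$ should be $\Omega_u((1-p)d)$ rather than $\Omega_u(d)$ once $\eps\asymp 1-p$; this cancels against the $1/\eps$ factor in $2^t$ and leaves $e_1=\Omega(pd\log n)$ unchanged.
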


\begin{Proof}
  First we show the claim for $M_1$. In this case, we take each $f_i$
  in Construction~\ref{constr:reg} to be an optimal lossless condenser
  satisfying the bounds obtained in\footnote{This result is similar in
    spirit to the probabilistic argument used in
    \cite{ref:lowerbounds} for showing the existence of good
    extractors.} \cite{ref:CRVW02}.  Thus we have that $2^\tee =
  O(\tn/\eps)=O(\log n/\eps)$, and for every $i = 0, \ldots, r$, we
  have $2^{\tl(i)-\tk(i)} = O(1/\eps)$, where $\eps = O(1-p)$. Now we
  apply Theorem~\ref{thm:regular} to obtain the desired bounds (and in
  particular, $\gamma = \Omega(\eps d)$).

  Similarly, for the construction of $\cM_2$ we set up each $f_i$
  using the explicit construction of condensers in
  Theorem~\ref{thm:CRVW} for min-entropy $\tk(i)$. In this case, the
  maximum required seed length is $t = O(\log^3(\tn/\eps))$, and we
  let
  \[ T_2 := 2^t = \exp(O(\log^3((\log n)/(1-p)))).\] Moreover, for
  every $i = 0, \ldots, r$, we have $2^{\tl(i)-\tk(i)} =
  O(1/\eps)$. Plugging these parameters in Theorem~\ref{thm:regular}
  gives $\gamma = \Omega(\eps d)$ and the bounds on $m_2$ and $e_2$
  follow.

  Finally, for $\cM_3$ we use Theorem~\ref{thm:GUVcond} with $\alpha
  := \beta/u$.  Thus the maximum seed length becomes \[t = (1+u/\beta)
  \log(\tn (\log d)/(1-p)) + O(1),\] and for every $i = 0, \ldots, r$,
  we have $\tl(i)-\tk(i) = O(t+\beta (\log d)/u)$.  Clearly, $T_3 =
  \Theta(2^t)$, and thus (using Theorem~\ref{thm:regular}) the number
  of measurements becomes $m_3 = T^{1+u} d^{1+\beta} (\log
  d)$. Moreover, we get \[\gamma = \max\{1, \Omega(d^{1-\beta/u} /T
  )\},\] which gives \[ e_3 = \Omega(pT\gamma) = p \max\{T,
  d^{1-\beta/u}\},\] as claimed.
\end{Proof}

By combining this result with Lemma~\ref{lem:repdisj} using any
explicit construction of classical disjunct matrices, we will obtain
$(d,e;u)$-disjunct matrices that can be used in the threshold model
with any fixed threshold, sparsity $d$, and error tolerance $\lfloor
e/2 \rfloor$.

In particular, using the coding-theoretic explicit construction of
nearly optimal classical disjunct matrices (see
Table~\ref{tab:params:strongly}), we obtain $(d,e;u)$-disjunct
matrices with \[ m=O(m' d^2 (\log n)/(1-p)^2) \] rows and error
tolerance \[ e = \Omega(e' p d (\log n)/(1-p)),\] where $m'$ and $e'$
are respectively the number of rows and error tolerance of any of the
regular matrices obtained in Theorem~\ref{thm:instan}.

\begin{table}
  \caption[Summary of the parameters achieved by various threshold testing schemes]{Summary of the parameters achieved by various threshold testing schemes. The
    noise parameter $p \in [0,1)$ is arbitrary, and thresholds $\ell, u = \ell+g$ are fixed constants.
    ``Exp'' and ``Rnd'' respectively indicate explicit and randomized constructions.
  }

  \begin{center}
    \begin{tabular}{|l|l|p{60mm}|}
      \hline
      Number of rows & Tolerable & Remarks \\
      & errors & \\
      \hline \hline
      $O(d^{g+2} \frac{(\log d) \log (n/d)}{(1-p)^2})$ & $\Omega(p d \frac{\log (n/d)}{(1-p)^2})$ & Rnd: Construction~\ref{constr:prob}. \\
      $O(d^{g+3} \frac{(\log d) \log^2 n}{(1-p)^{2}})$ & $\Omega(p d^2 \frac{\log^2 n}{(1-p)^2})$ &
      Constructions \ref{constr:reg}~and~\ref{constr:replace} combined, assuming optimal condensers and strongly disjunct matrices. \\
      $O(d^{g+3} \frac{(\log d) T_2 \log n}{(1-p)^{g+2}})$ & $\Omega(p d^2 \frac{T_2 \log n}{1-p})$ &
      Exp $(\star)$ \\
%

      $O(d^{g+3+\beta} \frac{T_3^{\ell} \log n}{(1-p)^{g+2}})$ &
      $\Omega(p d^{2-\beta} \frac{ \log n}{1-p} )$ &
      Exp $(\star\star)$ \\
      \hline \hline
      $\Omega(d^{g+2} \log_d n + e d^{g+1})$ & $e$ & Lower bound (see Section~\ref{sec:gap}). \\
      \hline
    \end{tabular}
  \end{center}
  \begin{description}
  \item [$(\star)$] Constructions
    \ref{constr:reg}~and~\ref{constr:replace} combined using
    Theorem~\ref{thm:CRVW} and \cite{ref:PR08}, where $T_2 =
    \exp(O(\log^3 \log n )) = \qpoly(\log n)$.
  \item [$(\star\star)$] Constructions
    \ref{constr:reg}~and~\ref{constr:replace} combined using
    Theorem~\ref{thm:GUVcond} and \cite{ref:PR08}, where $\beta>0$ is
    any arbitrary constant and $T_3 = ((\log n)(\log d))^{1+u/\beta} =
    \poly(\log n, \log d)$.
  \end{description}

   \label{tab:params}
   \hrule
 \end{table}

 We note that in all cases, the final dependence on the sparsity
 parameter $d$ is, roughly, $O(d^3)$ which has an exponent independent
 of the threshold $u$.  Table~\ref{tab:params} summarizes the obtained
 parameters for the general case (with arbitrary gaps).  We see that,
 when $d$ is not negligibly small (e.g., $d=n^{1/10}$), the bounds
 obtained by our explicit constructions are significantly better than
 those offered by strongly disjunct matrices (as in
 Table~\ref{tab:params:strongly}).

 \subsubsection{The Case with Positive Gaps} \label{sec:gap}

 In preceding sections we have focused on the case where $g=0$.
 However, we observe that all the techniques that we have developed so
 far can be extended to the positive-gap case in a straightforward
 way. The main observations are as follows.

 \begin{enumerate}
 \item Definition~\ref{def:regularMatrix} can be adapted to allow more
   than a single distinguished column in disjunct matrices. In
   particular, in general we may require the matrix $M$ to have more
   than $e$ rows that $u$-satisfy every choice of a critical set $S$,
   a zero set $Z$, and any $g+1$ designated columns $D \subseteq S$
   (at which all entries of the corresponding rows must be $1$).
   Denote this generalized notion by $(d,e;u,g)$-disjunct matrices.
   It is straightforward to extend the arguments of
   Lemma~\ref{lem:disjunct} to show that the generalized notion of
   $(d,e;u,g)$-disjunct matrices is necessary and sufficient to
   capture non-adaptive threshold group testing with upper threshold
   $u$ and gap $g$.

 \item Lemma~\ref{lem:probDisjunct} can be generalized to show that
   Construction~\ref{constr:prob} (with probability $1-o(1)$) results
   in a $(d, \Omega_u(pd \log(n/d)/(1-p)^2); u,g)$-disjunct matrix if
   the number of measurements is increased by a factor $O(d^g)$.

 \item Lemma \ref{lem:rep} can be extended to positive gaps, by taking
   $\cM_1$ as a $(d-1, e_1; \ell-1)$-regular matrix, provided that,
   for every $y \in \cM_2[x]_{1,g+1}$ and $y' \in \cM_2[x']_{1,g+1}$,
   we have $ |\supp(y) \setminus \supp(y')| \geq e_2.  $ In particular
   this is the case if $\cM_2$ is strongly
   $(d,e_2-1;g+1)$-disjunct\footnote{Here we are also considering the unavoidable assumption that \\
     $\max\{|\supp(x) \setminus \supp(x')|, |\supp(x') \setminus
     \supp(x)|\} > g$.}. Similarly for Lemma~\ref{lem:repdisj},
   $\cM_2$ must be taken as a strongly $(2d,e_2;g+1)$-disjunct
   matrix. Consequently, using the coding-theoretic construction of
   strongly disjunct matrices described in Section~\ref{sec:KSext},
   our explicit constructions of $(d,e;u)$-disjunct matrices can be
   extended to the gap model at the cost of a factor $O(d^g)$ increase
   in the number of measurements (as summarized in
   Table~\ref{tab:params}).

 \item Observe that a $(d,e;u,g)$-disjunct matrix is in particular,
   strongly $(d-g,e;g+1)$-disjunct and thus, the lower bound
   $\Omega(d^{g+2} \log_d n + e d^{g+1})$ on the number of rows of
   strongly disjunct matrices applies to them as well.
 \end{enumerate}

 \section{Notes}

 The notion of $d$-disjunct matrices is also known in certain
 equivalent forms; e.g., \emph{$d$-superimposed codes},
 \emph{$d$-separable matrices}, or \emph{$d$-cover-free families}
 (cf. \cite{ref:groupTesting}).  The special case of
 Definition~\ref{def:matrix} corresponding to $(0,0,e'_0,0)$-resilient
 matrices is related to the notion of \emph{selectors} in
 \cite{ref:DBGV05} and \emph{resolvable matrices} in \cite{ref:EGH07}.
 Lemma~\ref{lem:lowerbound} is similar in spirit to the lower bound
 obtained in \cite{ref:DBGV05} for the size of selectors.

 The notion of strongly disjunct matrices, in its general form, has
 been studied in the literature under different names and equivalent
 formulations, e.g., superimposed $(u,d)$-designs/codes and $(u,d)$
 cover-free families (see
 \cites{ref:SW00,ref:DVMT02,ref:KL04,ref:SW04,ref:CDH07,ref:CFH08} and
 the references therein).

 \begin{subappendices}

   \section{Some Technical Details} \label{app:proofs}

   \newcommand{\Deg}{\mathsf{deg}}

   For a positive integer $c > 1$, define a $c$-hypergraph as a tuple
   $(V, E)$, where $V$ is the set of vertices and $E$ is the set of
   hyperedges such that every $e \in E$ is a subset of $V$ of size
   $c$. The degree of a vertex $v \in V$, denoted by $\Deg(v)$, is the
   size of the set $\{e \in E\colon v \in E\}$.  Note that $|E| \leq
   \binom{|V|}{c}$ and $\Deg(v) \leq \binom{|V|}{c-1}$.  The
   \emph{density} of the hypergraph is given by $|E| /
   \binom{|V|}{c}$.  A \emph{vertex cover} on the hypergraph is a
   subset of vertices that contains at least one vertex from every
   hyperedge. A \emph{matching} is a set of pairwise disjoint
   hyperedges. It is well known that any dense hypergraph must have a
   large matching. Below we reconstruct a proof of this claim.

   \begin{prop} \label{prop:matching} Let $H$ be a $c$-hypergraph such
     that every vertex cover of $H$ has size at least $k$. Then $H$
     has a matching of size at least $k/c$.
   \end{prop}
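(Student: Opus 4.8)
The plan is to prove this by a greedy/extremal argument showing that a maximal matching must be large, using the fact that the vertices it covers form a vertex cover.

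First I would take $M$ to be a maximal matching in $H$ (maximal in the sense that no further hyperedge can be added while keeping the hyperedges pairwise disjoint). Let $t := |M|$, and let $U$ denote the set of all vertices that appear in some hyperedge of $M$, so $|U| \le ct$ since each of the $t$ hyperedges contributes exactly $c$ vertices. The key observation is that $U$ is a vertex cover of $H$: if some hyperedge $e \in E$ had $e \cap U = \emptyset$, then $e$ would be disjoint from every hyperedge in $M$, so $M \cup \{e\}$ would be a strictly larger matching, contradicting maximality of $M$.

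Now I would invoke the hypothesis: every vertex cover of $H$ has size at least $k$. Since $U$ is a vertex cover, we get $ct \ge |U| \ge k$, hence $t \ge k/c$. This gives the desired matching of size at least $k/c$ and completes the proof.

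There is no real obstacle here — the only point requiring a moment's care is the maximality argument (distinguishing ``maximal'' from ``maximum,'' and noting that the existence of a maximal matching is trivial since $E$ is finite). The bound $|U| \le ct$ is immediate from $c$-uniformity of the hypergraph, and the final arithmetic is a one-line division. The proof is short enough that I would write it out in full rather than sketch it.
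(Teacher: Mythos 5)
Your proof is correct and is essentially identical to the paper's: both take a maximal matching, observe that the set of vertices it covers forms a vertex cover (else the matching could be extended), and conclude $c\,|M| \geq k$. Nothing further is needed.
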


\begin{proof}
  Let $M$ be a maximal matching of $H$, i.e., a matching that cannot
  be extended by adding further hyperedges. Let $C$ be the set of all
  vertices that participate in hyperedges of $M$. Then $C$ has to be a
  vertex cover, as otherwise one could add an uncovered hyperedge to
  $M$ and violate maximality of $M$. Hence, $c|M| = |C| \geq k$, and
  the claim follows.
\end{proof}

\begin{lem} \label{lem:denseGraph} Let $H=(V, E)$ be a $c$-hypergraph
  with density at least $\eps > 0$. Then $H$ has a matching of size at
  least $\frac{\eps}{c^2} (|V|-c+1)$.
\end{lem}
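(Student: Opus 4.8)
The strategy is to go through the dual quantity already set up in the excerpt, namely the minimum size of a vertex cover, and then invoke Proposition~\ref{prop:matching}. So the proof has two moves: first lower-bound every vertex cover of $H$ in terms of the density $\eps$, then convert that into a matching lower bound via the proposition.

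First I would unwind the definition: density at least $\eps$ means $|E| \geq \eps \binom{|V|}{c}$ (if $|V| < c$ the statement is vacuous since then $E = \emptyset$ forces $\eps \le 0$, or the claimed bound is non-positive, so assume $|V| \ge c$). Next, let $C \subseteq V$ be any vertex cover. Each vertex $v$ lies in at most $\Deg(v) \leq \binom{|V|}{c-1}$ hyperedges, so the hyperedges incident to $C$ number at most $|C| \cdot \binom{|V|}{c-1}$; since $C$ is a cover, this must be at least $|E|$. Combining, $|C| \cdot \binom{|V|}{c-1} \geq \eps \binom{|V|}{c}$, hence
\[
|C| \;\geq\; \eps \cdot \frac{\binom{|V|}{c}}{\binom{|V|}{c-1}} \;=\; \eps \cdot \frac{|V|-c+1}{c} \;=:\; k,
\]
where the middle equality is the elementary identity $\binom{m}{c}/\binom{m}{c-1} = (m-c+1)/c$. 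Thus every vertex cover of $H$ has size at least $k$.

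Finally I would apply Proposition~\ref{prop:matching} with this value of $k$: $H$ has a matching of size at least $k/c = \frac{\eps}{c^2}(|V|-c+1)$, which is exactly the claimed bound.

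\textbf{Main obstacle.} There is essentially no obstacle here — the only non-cosmetic step is the counting argument bounding the number of hyperedges coverable by a set of $s$ vertices, together with the binomial-coefficient ratio; both are routine. The one point worth stating carefully is the degenerate range of parameters ($|V| < c$, or $\eps$ so small that $\binom{|V|}{c-1} \le \eps\binom{|V|}{c}$ is impossible), where the bound is trivial or vacuous, so that the division and the application of the proposition are legitimate.
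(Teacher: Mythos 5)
Your proof is correct and follows essentially the same route as the paper's: both lower-bound the size of every vertex cover by the double-counting inequality $\eps\binom{|V|}{c} \le |E| \le |C|\binom{|V|}{c-1}$ and then invoke Proposition~\ref{prop:matching}. The binomial-ratio simplification and the remark about degenerate parameter ranges are fine.
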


\begin{proof}
  For every subset $S \subseteq V$ of size $c$, denote by
  $\mathds{1}(S)$ the indicator value of $S$ being in $E$.  Let $C$ be
  any vertex cover of $H$. Denote by $\cS$ the set of all subsets of
  $V$ of size $c$. Then we have
  \[
  \eps \binom{|V|}{c} \leq \sum_{S \in \cS} \mathds{1}(S) \leq \sum_{v
    \in C} \Deg(v) \leq |C| \binom{|V|}{c-1}.
  \]
  Hence, $|C| \geq \eps(n-c+1)/c$, and the claim follows using
  Proposition~\ref{prop:matching}.
\end{proof}

\end{subappendices}

\musicBoxTesting


\Chapter{Capacity Achieving Codes}
\epigraphhead[70]{\epigraph{\textsl{``How is an error possible in
      mathematics?''}}{\textit{--- Henri Poincar\'e}}}
\label{chap:capacity}
\newcommand{\bsc}{\mathsf{BSC}} \newcommand{\bec}{\mathsf{BEC}}
\newcommand{\cF}{\mathcal{F}}

One of the basic goals of coding theory is coming up with efficient
constructions of error-correcting codes that allow reliable
transmission of information over discrete communication
channels. Already in the seminal work of Shannon~\cite{ref:Shannon},
the notion of \emph{channel capacity} was introduced which is a
characteristic of the communication channel that determines the
maximum rate at which reliable transmission of information (i.e., with
vanishing error probability) is possible. However, Shannon's result
did not focus on the \emph{feasibility} of the underlying code and
mainly concerned with the existence of reliable, albeit possibly
complex, coding schemes.  Here feasibility can refer to a combination
of several criteria, including: succinct description of the code and
its efficient computability, the existence of an efficient encoder and
an efficient decoder, the error probability, and the set of message
lengths for which the code is defined.


Besides heuristic attempts, there is a large body of rigorous work in
the literature on coding theory with the aim of designing feasible
capacity approaching codes for various discrete channels, most
notably, the natural and fundamental cases of the binary erasure
channel (BEC) and binary symmetric channel (BSC).  Some notable
examples in ``modern coding'' include Turbo codes and sparse graph
codes (e.g., LDPC codes and Fountain codes, cf.\ \cites{ref:modern,
  ref:blahut, ref:Raptor}).  These classes of codes are either known
or strongly believed to contain capacity achieving ensembles for the
erasure and symmetric channels.

While such codes are very appealing both theoretically and
practically, and are in particular designed with efficient decoding in
mind, in this area there still is a considerable gap between what we
can prove and what is evidenced by practical results, mainly due to
complex combinatorial structure of the code constructions.  Moreover,
almost all known code constructions in this area involve a
considerable amount of randomness, which makes them prone to a
possibility of design failure (e.g., choosing an ``unfortunate''
degree sequence for an LDPC code). While the chance of such
possibilities is typically small, in general there is no known
efficient way to certify whether a particular outcome of the code
construction is satisfactory.  Thus, it is desirable to come up with
constructions of provably capacity achieving code families that are
explicit, i.e., are efficient and do not involve any randomness.

Explicit construction of capacity achieving codes was considered as
early as the classic work of Forney \cite{ref:forney}, who showed that
concatenated codes can achieve the capacity of various memoryless
channels. In this construction, an outer MDS code is concatenated with
an inner code with small block length that can be found in reasonable
time by brute force search. An important subsequent work by Justesen
\cite{ref:Justesen} (that was originally aimed for explicit
construction of asymptotically good codes) shows that it is possible
to eliminate the brute force search by varying the inner code used for
encoding different symbols of the outer encoding, provided that the
ensemble of inner codes contains a large fraction of capacity
achieving codes.

Recently, Arikan~\cite{ref:Arikan09} gave a framework for
deterministic construction of capacity achieving codes for discrete
memoryless channels (DMCs) with binary input that are equipped with
efficient encoders and decoders and attain slightly worse than
exponentially small error probability. These codes are defined for
every block length that is a power of two, which might be considered a
restrictive requirement.  Moreover, the construction is currently
explicit (in the sense of polynomial-time computability of the code
description) only for the special case of BEC and requires exponential
time otherwise.

In this chapter, we revisit the concatenation scheme of Justesen and
give new constructions of the underlying ensemble of the inner codes.
The code ensemble used in Justesen's original construction is
attributed to Wozencraft.
Other ensembles that are known to be useful in this scheme include the
ensemble of Goppa codes and shortened cyclic codes
(see~\cite{ref:Roth}, Chapter~12).  The number of codes in these
ensembles is exponential in the block length and they achieve
exponentially small error probability.  These ensembles are also known
to achieve the Gilbert-Varshamov bound, and owe their capacity
achieving properties to the property that each nonzero vector belongs
to a small number of the codes in the ensemble.

Here, we will use extractors and lossless condensers to construct much
smaller ensembles with similar, random-like, properties.  The quality
of the underlying extractor or condenser determines the quality of the
resulting code ensemble. In particular, the size of the code ensemble,
the decoding error and proximity to the channel capacity are
determined by the \emph{seed length}, the \emph{error}, and the
\emph{output length} of the extractor or condenser being used.



As a concrete example, we will instantiate our construction with
appropriate choices of the underlying condenser (or extractor) and
obtain, for every block length $n$, a capacity achieving ensemble of
size $2^n$ that attains exponentially small error probability for both
erasure and symmetric channels (as well as the broader range of
channels described above), and an ensemble of
quasipolynomial\footnote{A quantity $f(n)$ is said to be
  quasipolynomial\index{quasipolynomial} in $n$ (denoted by $f(n) =
  \qpoly(n)$) if $f(n) = 2^{(\log n)^{O(1)}}$.} size $2^{O(\log^3 n)}$
that attains the capacity of BEC.  Using nearly optimal extractors and
condensers that require logarithmic seed lengths, it is possible to
obtain polynomially small capacity achieving ensembles for any block
length.

Finally, we apply our constructions to
Justesen's concatenation scheme to obtain an explicit construction of
capacity-achieving codes for both BEC and BSC that attain
exponentially small error, as in the original construction of Forney.
Moreover, the running time of the encoder is almost linear in the
block length, and decoding takes almost linear time for BEC and almost
quadratic time for BSC. Using our quasipolynomial-sized ensemble as
the inner code, we are able to construct a fully explicit code for BEC
that is defined and capacity achieving for every choice of the message
length.

\section{Discrete Communication Channels} \label{sec:dmc}

\newcommand{\SC}{\mathsf{SC}} \newcommand{\CAP}{\mathsf{Cap}}
\newcommand{\Ch}{\mathscr{C}}

A \emph{discrete communication channel}\index{channel} is a randomized
process that takes a potentially infinite stream of symbols $X_0, X_1,
\ldots$ from an \emph{input alphabet} $\Sigma$ and outputs an infinite
stream $Y_0, Y_1, \ldots$ from an \emph{output alphabet} $\Gamma$.
The indices intuitively represent the \emph{time}, and each output
symbol is only determined from what channel has observed in the
past. More precisely, given $X_0, \ldots, X_t$, the output symbol
$Y_t$ must be independent of $X_{t+1}, X_{t+2}, \ldots$. Here we will
concentrate on finite input and finite output channels, that is, the
alphabets $\Sigma$ and $\Gamma$ are finite. In this case, the
conditional distribution $p(Y_t|X_t)$ of each output symbol $Y_t$
given the input symbol $X_t$ can be written as a stochastic $|\Sigma|
\times |\Gamma|$ \emph{transition matrix}, where each row is a
probability distribution.

Of particular interest is a \emph{memoryless}
channel\index{channel!memoryless}, which is intuitively ``oblivious''
of the past. In this case, the transition matrix is independent of the
time instance. That is, we have $p(Y_t | X_t)=p(Y_0 | X_0)$ for every
$t$. When the rows of the transition matrix are permutations of one
another and so is the case for the columns, the channel is called
\emph{symmetric}\index{channel!symmetric}. For example, the channel
defined by
\[
p(Y|X) = \begin{pmatrix} 0.4 & 0.1 & 0.5 \\ 0.5 & 0.4 & 0.1 \\ 0.1 &
  0.5 & 0.4
\end{pmatrix}
\]
is symmetric. Intuitively, a symmetric channel does not ``read'' the
input sequence.  An important class of symmetric channels is defined
by \emph{additive noise}.  In an additive noise channel, the input and
output alphabets are the same finite field $\F_q$ and each output
symbol $Y_t$ is obtained from $X_t$ using
\[
Y_t = X_t + Z_t,
\]
where the addition is over $\F_q$ and the channel noise $Z_t \in \F_q$
is chosen independently of the input sequence\footnote{ In fact, since
  we are only using the additive structure of $\F_q$, it can be
  replaced by any additive group, and in particular, the ring $\Z/q\Z$
  for an arbitrary integer $q>1$. This way, $q$ does not need to be
  restricted to a prime power.}.  Typically $Z_t$ is also independent
of time $t$, in which case we get a memoryless additive noise channel.
For a noise distribution $\cZ$, we denote the memoryless additive
noise channel over the input (as well as output) alphabet $\Sigma$ by
$\SC(\Sigma, \cZ)$.

Note that the notion of additive noise channels can be extended to the
case where the input and alphabet sets are vector spaces $\F_q^n$, and
the noise distribution is a probability distribution over $\F_q^n$. By
considering an isomorphism between $\F_q^n$ and the field extension
$\F_{q^n}$, such a channel is essentially an additive noise channel
$\SC(\F_{q^n}, \cZ)$, where $\cZ$ is a noise distribution over
$\F_{q^n}$. On the other hand, the channel $\SC(\F_{q^n}, \cZ)$ can be
regarded as a ``block-wise memoryless'' channel over the alphabet
$\F_q$. Namely, in a natural way, each channel use over the alphabet
$\F_{q^n}$ can be regarded as $n$ subsequent uses of a channel over
the alphabet $\F_q$. When regarding the channel over $\F_q$, it does
not necessarily remain memoryless since the additive noise
distribution $\cZ$ can be an arbitrary distribution over $\F_{q^n}$
and is not necessarily expressible as a product distribution over
$\F_q$. However, the noise distribution of blocks of $n$ subsequent
channel uses are independent from one another and form a product
distribution (since the original channel $\SC(\F_{q^n}, \cZ)$ is
memoryless over $\F_{q^n}$).  Often by choosing larger and larger
values of $n$ and letting $n$ grow to infinity, it is possible to
obtain good approximations of a non-memoryless additive noise channel
using memoryless additive noise channels over large alphabets.

An important additive noise channel is the \emph{$q$-ary symmetric
  channel}\index{channel!$q$-ary symmetric}, which is defined by a
(typically small) noise parameter $p \in [0, 1)$. For this channel,
the noise distribution $\cZ$ has a probability mass $1-p$ on zero, and
$p/(q-1)$ on every nonzero alphabet letter.  A fundamental special
case is the \emph{binary symmetric channel}\index{channel!binary
  symmetric (BSC)} (BSC), which corresponds to the case $q=2$ and is
denoted by $\bsc(p)$.

Another fundamentally important channel is the \emph{binary erasure
  channel}\index{channel!binary erasure (BEC)}.  The input alphabet
for this channel is $\{0,1\}$ and the output alphabet is the set
$\{0,1,?\}$.  The transition is characterized by an \emph{erasure
  probability} $p \in [0,1)$. A transmitted symbol is output intact by
the channel with probability $1-p$. However, with probability $p$, a
special \emph{erasure symbol} ``$?$'' is delivered by the channel. The
behavior of the binary symmetric channel $\bsc(p)$ and binary erasure
channel $\bec(p)$ is schematically described by Figure~\ref{fig:dmc}.

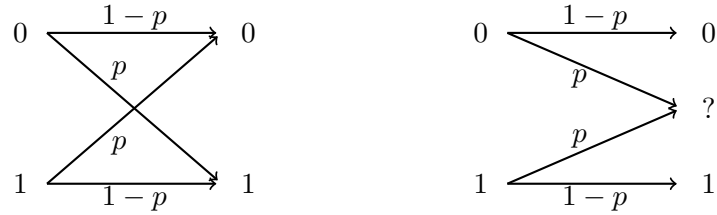
\begin{figure}[t]
  \begin{center}
    \mbox{
      \begin{tikzpicture}[rounded corners, thick]
        \path (0,2) node[circle] (left_0) {$0$} (0,0) node[circle]
        (left_1) {$1$} (3,0) node[circle] (right_1) {$1$} (3,2)
        node[circle] (right_0) {$0$};

        \draw [->,shorten >= 2pt] (left_0.east) -- (right_0.west);
        \draw [->,shorten >= 2pt] (left_0.east) -- (right_1.west);
        \draw [->,shorten >= 2pt] (left_1.east) -- (right_1.west);
        \draw [->,shorten >= 2pt] (left_1.east) -- (right_0.west);

        \draw (1.5,2.2) node {$1-p$}; \draw (1.5,-0.2) node {$1-p$};
        \draw (1.3,1.5) node {$p$}; \draw (1.3,0.5) node {$p$};
      \end{tikzpicture}} \hspace{2cm} \mbox{
      \begin{tikzpicture}[rounded corners, thick]
        \path (0,2) node[circle] (left_0) {$0$} (0,0) node[circle]
        (left_1) {$1$} (3,0) node[circle] (right_1) {$1$} (3,2)
        node[circle] (right_0) {$0$} (3,1) node[circle] (right_E)
        {$?$};

        \draw [->,shorten >= 2pt] (left_0.east) -- (right_0.west);
        \draw [->,shorten >= 2pt] (left_0.east) -- (right_E.west);
        \draw [->,shorten >= 2pt] (left_1.east) -- (right_1.west);
        \draw [->,shorten >= 2pt] (left_1.east) -- (right_E.west);

        \draw (1.5,2.2) node {$1-p$}; \draw (1.5,-0.2) node {$1-p$};
        \draw (1.3,1.4) node {$p$}; \draw (1.3,0.6) node {$p$};
      \end{tikzpicture}}
  \end{center}
  \caption[The binary symmetric and binary erasure channels]{The
    binary symmetric channel (left) and binary erasure channel
    (right). On each graph, the left part corresponds to the input
    alphabet and the right part to the output alphabet. Conditional
    probability of each output symbol given an input symbol is shown
    by the labels on the corresponding arrows.}
  \label{fig:dmc}
\end{figure}

A \emph{channel encoder} $\mathcal{E}$ for a channel $\mathscr{C}$
with input alphabet $\Sigma$ and output alphabet $\Gamma$ is a mapping
$\C\colon \zo^k \to \Sigma^n$. A \emph{channel decoder}, on the other
hand, is a mapping $\mathcal{D}\colon \Gamma^n \to \zo^k$. A channel
encoder and a channel decoder collectively describe a \emph{channel
  code}. Note that the image of the encoder mapping defines a block
code of length $n$ over the alphabet $\Sigma$. The parameter $n$
defines the \emph{block length} of the code.  For a sequence $Y \in
\Sigma^n$, denote by the random variable $\mathscr{C}(Y)$ a sequence
$\hat{Y} \in \Gamma^n$ that is output by the channel, given the input
$Y$.

Intuitively, a channel encoder adds sufficient redundancy to a given
``message'' $X \in \zo^k$ (that is without loss of generality modeled
as a binary string of length $k$), resulting in an encoded sequence $Y
\in \Sigma^n$ that can be fed into the channel. The channel
manipulates the encoded sequence and delivers a sequence $\hat{Y} \in
\Gamma^n$ to a recipient whose aim is to recover $X$.  The recovery
process is done by applying the channel decoder on the received
sequence $\hat{Y}$.  The transmission is successful when
$\mathcal{D}(\hat{Y})=X$. Since the channel behavior is not
deterministic, there might be a nonzero probability, known as the
\emph{error probability}, that the transmission is unsuccessful. More
precisely, the error probability of a channel code is defined as
\[
p_e := \sup_{X \in \zo^k} \Pr[\mathcal{D}(\mathscr{C}(\mathcal{E}(X)))
\neq X],
\]
where the probability is taken over the randomness of $\mathscr{C}$.
A schematic diagram of a simple communication system consisting of an
encoder, point-to-point channel, and decoder is shown in
Figure~\ref{fig:channel}.

\begin{figure}[b]
  \begin{center} \vspace{5mm}
    \includegraphics[width=\textwidth]{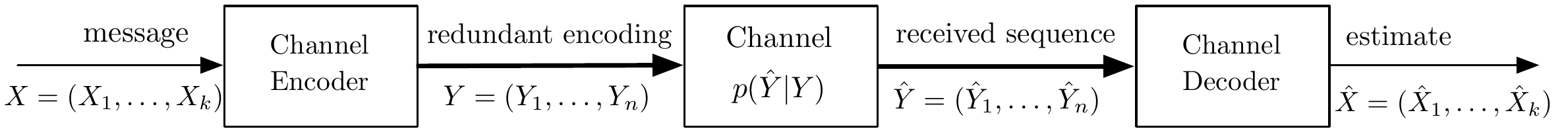}
  \end{center}
  \caption[The schematic diagram of a point-to-point communication
  system]{The schematic diagram of a point-to-point communication
    system. The stochastic behavior of the channel is captured by the
    conditional probability distribution $p(\hat{Y}|Y)$.}
  \label{fig:channel}
\end{figure}

For linear codes over additive noise channels, it is often convenient
to work with \emph{syndrome decoders}. Consider a linear code with
generator and parity check matrices $G$ and $H$, respectively. The
encoding of a message $x$ (considered as a row vector) can thus be
written as $xG$.  Suppose that the encoded sequence is transmitted
over an additive noise channel, which produces a noisy sequence $y :=
xG+z$, for a randomly chosen $z$ according to the channel
distribution.  The receiver receives the sequence $y$ and, without
loss of generality, the decoder's task is to obtain an estimate of the
noise realization $z$ from $y$. Now, observe that
\[
Hy^\top = HG^\top x^\top + H z^\top = H z^\top,
\]
where the last equality is due to the orthogonality of the generator
and parity check matrices. Therefore, $H z^\top$ is available to the
decoder and thus, in order to decode the received sequence, it
suffices to obtain an estimate of the noise sequence $z$ from the
\emph{syndrome} $H z^\top$. A syndrome decoder is a function that,
given the syndrome, outputs an estimate of the noise sequence (note
that this is independent of the codeword being sent). The error
probability of a syndrome decoder can be simply defined as the
probability (over the noise randomness) that it obtains an incorrect
estimate of the noise sequence. Obviously, the error probability of a
syndrome decoder upper bounds the error probability of the channel
code.

The \emph{rate} of a channel code (in bits per channel use) is defined
as the quantity $k/n$.  We call a rate $r \geq 0$ \emph{feasible} if
for every $\eps > 0$, there is a channel code with rate $r$ and error
probability at most $\eps$. The rate of a channel code describes its
efficiency; the larger the rate, the more information can be
transmitted through the channel in a given ``time frame''.  A
fundamental question is, given a channel $\mathscr{C}$, to find the
largest possible rate at which reliable transmission is possible. In
his fundamental work, Shannon \cite{ref:Shannon} introduced the notion
of \emph{channel capacity} that answers this question.  Shannon
capacity can be defined using purely infor\-mation-theoretic
terminology. However, for the purposes of this chapter, it is more
convenient to use the following, more ``computational'', definition
which turns out to be equivalent to the original notion of Shannon
capacity:
\[
\CAP(\mathscr{C}) := \sup \{ r \mid \text{$r$ is a feasible rate for
  the channel $\mathscr{C}$} \}.
\]

Capacity of memoryless symmetric channels has a particularly nice
form. Let $\cZ$ denote the probability distribution defined by any of
the rows of the transition matrix of a memoryless symmetric channel
$\mathscr{C}$ with output alphabet $\Gamma$. Then, capacity of
$\mathscr{C}$ is given by
\[
\CAP(\mathscr{C}) = \log_2 |\Gamma| - H(\cZ),
\]
where $H(\cdot)$ denotes the Shannon entropy
\cite{ref:cover}*{Section~7.2}. In particular, capacity of the binary
symmetric channel $\bsc(p)$ (in bits per channel use) is equal to
\[
1-h(p) = 1+p \log_2 p +(1-p) \log_2 (1-p).
\]
Capacity of the binary erasure channel $\bec(p)$ is moreover known to
be $1-p$ \cite{ref:cover}*{Section~7.1}.

A \emph{family}\index{code!family} of channel codes of rate $r$ is an
infinite set of channel codes, such that for every (typically small)
\emph{rate loss} $\delta \in (0, r)$ and block length $n$, the family
contains a code $\C(n, \delta)$ of length at least $n$ and rate at
least $r-\delta$. The family is called \emph{explicit} if there is a
deterministic algorithm that, given $n$ and $\delta$ as parameters,
computes the encoder function of the code $\C(n, \delta)$ in
polynomial time in $n$.  For linear channel codes, this is equivalent
to computing a generator or parity check matrix of the code in
polynomial time.  If, additionally, the algorithm receives an
auxiliary index $i \in [s]$, for a \emph{size parameter} $s$ depending
on $n$ and $\delta$, we instead get an
\emph{ensemble}\index{code!ensemble} of size $s$ of codes. An ensemble
can be interpreted as a set of codes of length $n$ and rate at least
$r-\delta$ each, that contains a code for each possibility of the
index $i$.

We call a family of codes \emph{capacity
  achieving}\index{code!capacity achieving} for a channel
$\mathscr{C}$ if the family is of rate $\CAP(\mathscr{C})$ and
moreover, the code $\C(n, \delta)$ as described above can be chosen to
have an arbitrarily small error probability for the channel
$\mathscr{C}$.  If the error probability decays exponentially with the
block length $n$; i.e., $p_e = O(2^{-\gamma n})$, for a constant
$\gamma > 0$ (possibly depending on the rate loss), then the family is
said to achieve an \emph{error exponent}\index{code!error exponent}
$\gamma$.  We call the family \emph{capacity achieving for all
  lengths} if it is capacity achieving and moreover, there is an
integer constant $n_0$ (depending only on the rate loss $\delta$) such
that for every $n \geq n_0$, the code $\C(n, \delta)$ can be chosen to
have length exactly $n$.

\section{Codes for the Binary Erasure Channel} \label{sec:bec}

Any code with minimum distance $d$ can tolerate up to $d-1$ erasures
in the \emph{worst case}\footnote{See Appendix~\ref{app:coding} for a
  quick review of the basic notions in coding theory.}.
Thus one way to ensure reliable communication over $\bec(p)$ is to use
binary codes with relative minimum distance of about $p$. However,
known negative bounds on the rate-distance trade-off (e.g., the sphere
packing and MRRW bounds) do not allow the rate of such codes to
approach the capacity $1-p$.  However, by imposing the weaker
requirement that \emph{most} of the erasure patterns should be
recoverable, it is possible to attain the capacity with a positive,
but arbitrarily small, error probability (as guaranteed by the
definition of capacity).

In this section, we consider a different relaxation that preserves the
worst-case guarantee on the erasure patterns; namely we consider
\emph{ensembles} of linear codes with the property that \emph{any}
pattern of up to $p$ erasures must be tolerable by all but a
negligible fraction of the codes in the ensemble. This in particular
allows us to construct ensembles in which all but a negligible
fraction of the codes are capacity achieving for BEC.  Note that as we
are only considering linear codes, recoverability from a particular
erasure pattern $S \subseteq [n]$ (where $n$ is the block length)
is a property of the code and independent of the encoded sequence.

Now we introduce two constructions, which employ strong, linear
extractors and lossless condensers as their main
ingredients. Throughout this section we denote by $f\colon \F_2^n
\times \F_2^d \to \F_2^r$ a strong, linear, lossless condenser for
min-entropy $m$ and error $\eps$ and by $g\colon \F_2^n \times
\F_2^{d'} \to \F_2^k$ a strong, linear extractor for min-entropy $n-m$
and error $\eps'$. We assume that the errors $\eps$ and $\eps'$ are
substantially small.  Using this notation, we define the ensembles
$\cF$ and $\cG$ as in Construction~\ref{constr:ensembles}.

\begin{constr}[b!] 
  \begin{framed}
    \begin{description}
    \item[Ensemble $\cF$:] Define a code $\C_u$ for each seed $u \in
      \F_2^d$ as follows: Let $H_u$ denote the $r \times n$ matrix
      that defines the linear function $f(\cdot, u)$, i.e., for each
      $x \in \F_2^n$, $H_u \cdot x = f(x,u)$. Then $H_u$ is a parity
      check matrix for $\C_u$.

    \item[Ensemble $\cG$:] Define a code $\C'_u$ for each seed $u \in
      \F_2^{d'}$ as follows: Let $G_u$ denote the $k \times n$ matrix
      that defines the linear function $g(\cdot, u)$.  Then $G_u$ is a
      generator matrix for $\C'_u$.
    \end{description}
  \end{framed}
  \caption{Ensembles $\cF$ and $\cG$ of error-correcting codes.}
  \label{constr:ensembles}
\end{constr}

Obviously, the rate of each code in $\cF$ is at least
$1-r/n$. Moreover, as $g$ is a strong extractor we can assume without
loss of generality that the rank of each $G_u$ is exactly\footnote{
  This causes no loss of generality since, if the rank of some $G_u$
  is not maximal, one of the $k$ symbols output by the linear function
  $g(\cdot, u)$ would linearly depend on the others and thus, the
  function would fail to be an extractor for any source (so one can
  arbitrarily modify $g(\cdot, u)$ to have rank $k$ without negatively
  affecting the parameters of the extractor $g$).  } $k$. Thus, each
code in $\cG$ has rate $k/n$. Lemma \ref{lem:becCodes} below is our
main tool in quantifying the erasure decoding capabilities of the two
ensembles. Before stating the lemma, we mention a proposition showing
that linear condensers applied on \emph{affine sources} achieve either
zero or large errors:

\begin{prop} \label{prop:zeroError} Suppose that a distribution $\cX$
  is uniformly supported on an affine $k$-dimensional subspace over
  $\F_q^n$. Consider a linear function $f\colon \F_q^n \to \F_q^m$,
  and define the distribution $\cY$ as $\cY := f(\cX)$. Suppose that,
  for some integer $k$ and $\eps < 1/2$, $\cY$ is $\eps$-close to
  having either min-entropy $m \log q$ or at least $k \log q$.  Then,
  $\eps=0$.
\end{prop}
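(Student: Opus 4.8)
The plan is to analyze the structure of the image distribution $\cY = f(\cX)$ when $\cX$ is uniform on an affine subspace. Since $f$ is linear, write $f(x) = Mx^\top + b$ for some matrix $M$ and vector $b$; composing with the affine parametrization of $\cX$, the distribution $\cY$ is itself uniformly supported on an affine subspace of $\F_q^m$, say of dimension $j \le \min\{k, m\}$. The key observation is that a flat distribution on a $j$-dimensional affine subspace of $\F_q^m$ assigns probability exactly $q^{-j}$ to each of its $q^j$ support points, so its min-entropy is precisely $j \log q$, and it is ``rigid'' in the sense that it is either far from or exactly equal to a distribution of a given min-entropy.

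First I would make this rigidity quantitative. Suppose $\cY$ is $\eps$-close to some distribution $\cW$ with min-entropy at least $s \log q$ (where $s$ is either $m$ or $k$), with $\eps < 1/2$. Since $\cY$ is flat on $q^j$ points each of mass $q^{-j}$, and $\cW$ assigns mass at most $q^{-s}$ to every point, I would compare the two: on each of the $q^j$ support points of $\cY$, the contribution to the statistical distance from points where $\cY$ exceeds $\cW$ is at least $q^{-j} - q^{-s}$ if $j < s$. Summing over the support gives $\dist(\cY, \cW) \ge \tfrac{1}{2} q^j (q^{-j} - q^{-s}) = \tfrac{1}{2}(1 - q^{j-s}) \ge \tfrac{1}{2}(1 - q^{-1}) \ge 1/4$ whenever $j < s$, contradicting $\eps < 1/2$. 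Hence we must have $j \ge s$. Combined with $j \le \min\{k,m\} \le s$ (since $s \in \{k, m\}$), this forces $j = s$, meaning $\cY$ is flat on exactly $q^s$ points and already has min-entropy exactly $s \log q$.

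Then I would close the argument by showing that once $j = s$, the closest distribution of min-entropy at least $s\log q$ is $\cY$ itself, so $\eps = 0$. Indeed, $\cY$ is flat of min-entropy exactly $s\log q$; any distribution $\cW$ with min-entropy $\ge s \log q$ has all point masses $\le q^{-s}$, but since $\cW$ must have total mass $1$ and $\cY$ already ``maxes out'' the allowed mass on its $q^s$-point support, the only way to be close is... actually more carefully: I would argue that $\cY$ is the unique distribution of min-entropy $\ge s\log q$ supported within $\supp(\cY)$, and that any $\cW$ with min-entropy $\ge s \log q$ that is $\eps$-close to $\cY$ with $\eps$ small must be supported on $\supp(\cY)$ up to small mass; a short computation bounding the distance from below by the mass $\cW$ places outside $\supp(\cY)$ combined with the min-entropy constraint inside shows $\dist(\cY, \cW) = 0$. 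The main obstacle is handling the $s = m$ case cleanly: here $\cY$ lives in $\F_q^m$ and ``min-entropy $m \log q$'' means $\cY = \U_{\F_q^m}$ exactly, so $j = m$ forces $\cY$ to be the full uniform distribution, giving $\eps = 0$ immediately; for $s = k$ the subtlety is that $\cY$'s support could be a proper affine subspace, but flatness plus the dimension count $j = k$ still pins it down. I expect the bookkeeping distinguishing ``min-entropy exactly $s\log q$ and flat'' from ``$\eps$-close to min-entropy $\ge s\log q$'' to be the one place requiring care, and Proposition~\ref{prop:flatmap} or a direct elementary estimate will suffice there.
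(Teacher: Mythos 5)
Your overall strategy is the same as the paper's: observe that by linearity $\cY$ is flat on a $j$-dimensional affine subspace of $\F_q^m$ with $j\le\min\{k,m\}$, lower-bound the distance to any distribution of min-entropy $s\log q$ when $j<s$, conclude $j=s$, and then note that a flat distribution of dimension $s$ already has min-entropy exactly $s\log q$ (so $\eps=0$; your extra worry about ``the closest distribution being $\cY$ itself'' is unnecessary once $j=s$).

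However, there is a genuine flaw in the quantitative step. You lower-bound the distance by $\tfrac{1}{2}\sum_{y\in\supp(\cY)}(q^{-j}-q^{-s})=\tfrac{1}{2}(1-q^{j-s})\ge 1/4$ and then claim this ``contradicts $\eps<1/2$.'' It does not: $1/4<1/2$, so a distance of, say, $0.3$ would be consistent with both. The problem is that you kept the $\tfrac12$ prefactor from the $\ell_1$ definition while only summing the terms where $\cY$ exceeds $\cW$. The correct identity is
\[
\dist(\cY,\cW)\;=\;\sum_{y}\max\bigl(0,\;\cY(y)-\cW(y)\bigr)\;\ge\;\sum_{y\in\supp(\cY)}\bigl(q^{-j}-q^{-s}\bigr)\;=\;1-q^{j-s}\;\ge\;1-q^{-1}\;\ge\;\tfrac12,
\]
(equivalently: the mass that $\cW$ must place where it exceeds $\cY$ contributes an equal amount to the $\ell_1$ sum, cancelling your factor of $\tfrac12$). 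With this corrected bound the contradiction with $\eps<1/2$ is genuine, and your argument then matches the paper's proof. Note also that this bound is tight --- for $q=2$ and $j=s-1$, taking $\cW$ flat on a $2^s$-point set containing $\supp(\cY)$ gives distance exactly $1/2$ --- so the hypothesis $\eps<1/2$ leaves no slack and the factor of two really matters here.
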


\begin{proof}
  By linearity, $\cY$ is uniformly supported on an affine subspace $A$
  of $\F_q^m$. Let $k' \leq m$ be the dimension of this subspace, and observe
  that $k' \leq k$.

  First, suppose that $\cY$ is $\eps$-close to a distribution with
  min-entropy $m \log q$; i.e., the uniform distribution on $\F_q^m$.
  Now, the statistical distance between $\cY$ and the uniform
  distribution is, by definition,
  \[
  \sum_{x \in A} ( q^{-k'} - q^{-m} ) = 1-q^{k'-m}-1.
  \]
  Since $\eps < 1/4$, $q \geq 2$, and $k'$ and $m$ are integers, this
  implies that the distance is greater than $1/2$ (a contradiction)
  unless $k'=m$, in which case it becomes zero. Therefore, the output
  distribution is exactly uniform over $\F_q^m$.

  Now consider the case where $\cY$ is $\eps$-close to having
  min-entropy at least $k \log q$. Considering that $k' \leq k$, the
  definition of statistical distance implies that $\eps$ is at least
  \[
  \sum_{x \in A} (q^{-k'} - q^{-k}) = 1-q^{k'-k}.
  \]
  Similarly as before, we get that $k' = k$, meaning that $\cY$ is
  precisely a distribution with min-entropy $k \log q$.
\end{proof}

\begin{lem} \label{lem:becCodes} Let $S \subseteq [n]$ be a set of
  size at most $m$. Then all but a $5 \eps$ fraction of the codes in
  $\cF$ and all but a $5 \eps'$ fraction of those in $\cG$ can
  tolerate the erasure pattern defined by
  $S$. 
\end{lem}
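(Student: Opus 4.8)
\textbf{Proof plan for Lemma~\ref{lem:becCodes}.}

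The plan is to exploit the list-decoding / affine-source view of linear condensers and extractors. Fix an erasure pattern $S \subseteq [n]$ with $|S| \le m$. For the ensemble $\cF$: a code $\C_u$ can tolerate the erasures in $S$ precisely when the columns of its parity-check matrix $H_u$ indexed by $S$ are linearly independent, equivalently when the restriction of the linear map $f(\cdot, u)$ to the coordinate subspace $V_S := \{x \in \F_2^n : \supp(x) \subseteq S\}$ is injective. Since $\dim V_S = |S| \le m$, the uniform distribution $\cX_S$ on $V_S$ is a flat source of min-entropy $|S| \le m$, hence also a source of min-entropy $\le m$ (and, padding $S$ to size exactly $m$ if necessary, exactly $m$). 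Because $f$ is a \emph{strong} lossless $(m,\eps)$-condenser, the distribution $(U_d, f(\cX_S, U_d))$ is $\eps$-close to some $(\U_d, \cZ)$ with min-entropy at least $d + |S|$; by Corollary~\ref{coro:strongExt} (with $\delta := \sqrt{\eps}$, say, or directly by Proposition~\ref{prop:strongDist}), for all but a $\sqrt{\eps}$ fraction of seeds $u$ the distribution $f(\cX_S, u)$ is $\sqrt{\eps}$-close to having min-entropy $|S|$.

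Next I would invoke Proposition~\ref{prop:zeroError}: for each fixed $u$, $f(\cdot, u)$ is linear and $\cX_S$ is (uniform on) an affine subspace, so $f(\cX_S, u)$ is uniform on an affine subspace $A_u$ of dimension $k'_u \le |S|$; the proposition says that if this distribution is $\sqrt{\eps}$-close (with $\sqrt{\eps} < 1/2$) to having min-entropy $|S|$, then in fact $k'_u = |S|$ exactly and the closeness is $0$. Thus for every \emph{good} seed $u$ — those constituting at least a $1 - \sqrt{\eps}$ fraction — the map $f(\cdot, u)$ restricted to $V_S$ is injective, i.e.\ $\C_u$ tolerates the erasure pattern $S$. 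This already gives a $\sqrt{\eps}$ bound; to reach the claimed $5\eps$ one should be slightly more careful with the averaging argument — e.g.\ apply Proposition~\ref{prop:strongDist} with a smaller $\delta$, noting that by Proposition~\ref{prop:zeroError} the \emph{only} possible outcomes of $f(\cX_S, u)$ are ``distance exactly $0$ from min-entropy $|S|$'' or ``distance strictly bounded away from $0$ (in fact $\ge 1 - 2^{-1} > 1/2$, since $k'_u \le |S| - 1$ forces distance $\ge 1 - 2^{k'_u - |S|} \ge 1/2$)''. Feeding this dichotomy back into the inequality $\sum_u \|f(\cX_S,u) - \mu_u\| \le 2^d \eps$ shows the fraction of bad seeds is at most $2\eps / (1/2) = 4\eps \le 5\eps$; the constant is slack enough that the exact bookkeeping is routine.

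For the ensemble $\cG$ the argument is dual. The code $\C'_u$ (with generator matrix $G_u$) tolerates the erasure pattern $S$ iff the columns of $G_u$ \emph{outside} $S$ have rank $k$, equivalently iff the projection of $\C'_u$ onto $\bar S := [n] \setminus S$ is injective on codewords, equivalently iff no nonzero codeword is supported inside $S$. Dualizing: since each $G_u$ has full rank $k$, this is equivalent to the linear map $g(\cdot, u)$ being surjective when restricted to $V_{\bar S}$ — and $\dim V_{\bar S} = n - |S| \ge n - m$, so the uniform distribution on $V_{\bar S}$ is an affine source of min-entropy $\ge n - m$. Applying the fact that $g$ is a strong $(n-m, \eps')$-extractor together with Proposition~\ref{prop:zeroError} in exactly the same manner (the extractor case of the proposition, distance to min-entropy $k \log 2 = k$), I conclude that for all but a $5\eps'$ fraction of seeds $u$ the output $g(\cX_{\bar S}, u)$ is \emph{exactly} uniform on $\F_2^k$, hence $g(\cdot,u)|_{V_{\bar S}}$ is surjective, hence $\C'_u$ tolerates $S$. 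The main obstacle is getting the constant in the averaging right — i.e.\ making the dichotomy of Proposition~\ref{prop:zeroError} (``error is $0$ or error is huge'') interact cleanly with the $\ell_1$/statistical-distance accounting of Proposition~\ref{prop:strongDist} — but since we only need some fixed constant multiple of $\eps$ (resp.\ $\eps'$), this is a matter of bookkeeping rather than a genuine difficulty.
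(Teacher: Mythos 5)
Your proposal follows essentially the same route as the paper's proof: take the bit-fixing (affine) source supported on the relevant coordinates, use Corollary~\ref{coro:strongExt} to pass to a $1-O(\eps)$ fraction of good seeds, invoke the zero-or-large-error dichotomy of Proposition~\ref{prop:zeroError} to upgrade "close to full min-entropy/uniform" to "exactly so," and translate that into the rank condition on the relevant submatrix of $H_u$ or $G_u$. The paper only writes out the $\cG$ case and obtains the constant $5$ by applying Corollary~\ref{coro:strongExt} directly with $\delta = 5\eps'$ (giving $(1/5)$-closeness, which suffices for Proposition~\ref{prop:zeroError}); your bookkeeping via the dichotomy is an equivalent, slightly sharper variant of the same averaging step.
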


\begin{proof}
  We prove the result for the ensemble $\cG$. The argument for $\cF$
  is similar.  Consider a probability distribution $\cS$ on $\F_2^n$
  that is uniform on the coordinates specified by $\bar{S} := [n]
  \setminus S$ and fixed to zeros elsewhere. Thus the min-entropy of
  $\cS$ is at least $n-m$, and the distribution $(U, g(\cS, U))$,
  where $U \sim \U_{d'}$, is $\eps'$-close to $\U_{d'+k}$.

  By Corollary~\ref{coro:strongExt}, for all but a $5 \eps'$ fraction
  of the choices of $u \in \F_2^{d'}$, the distribution of $g(\cS, u)$
  is $(1/5)$-close to $\U_k$. Fix such a $u$. By
  Proposition~\ref{prop:zeroError}, the distribution of $g(\cS, u)$
  must in fact be exactly uniform. Thus, the $k \times m$ submatrix of
  $G_u$ consisting of the columns picked by $\bar{S}$ must have rank
  $k$, which implies that for every $x \in \F_2^k$, the projection of
  the encoding $x \cdot G_u$ to the coordinates chosen by $\bar{S}$
  uniquely identifies $x$.
\end{proof}

The lemma combined with a counting argument implies the following
corollary:

\begin{coro} \label{coro:becCodes} Let $\cS$ be any distribution on
  the subsets of $[n]$ of size at most $m$.  Then all but a
  $\sqrt{5\eps}$ (resp., $\sqrt{5\eps'}$) fraction of the codes in
  $\cF$ (resp., $\cG$) can tolerate erasure patterns sampled from
  $\cS$ with probability at least $1-\sqrt{5\eps}$ (resp.,
  $1-\sqrt{5\eps'}$). \qed
\end{coro}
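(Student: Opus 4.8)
The statement is an easy averaging consequence of Lemma~\ref{lem:becCodes}, so the plan is to combine that lemma with a two-dimensional Markov / double-counting argument over the product of the code-index space and the support space. I will give the argument for $\cG$; the argument for $\cF$ is identical with $\eps'$ replaced by $\eps$ and $d'$ by $d$. First I would set up the relevant bipartite "bad incidence" relation: say a seed $u \in \F_2^{d'}$ and a set $S \subseteq [n]$ with $|S| \le m$ are \emph{incompatible} if the code $\C'_u$ cannot tolerate the erasure pattern defined by $S$ (equivalently, by the proof of Lemma~\ref{lem:becCodes}, the $k \times |\bar S|$ submatrix of $G_u$ on the columns $\bar S := [n]\setminus S$ has rank less than $k$). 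Lemma~\ref{lem:becCodes} says precisely that for every fixed $S$, at most a $5\eps'$ fraction of the seeds $u$ are incompatible with $S$.

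Next I would average over $S \sim \cS$. Let $p(u) := \Pr_{S \sim \cS}[\,u \text{ and } S \text{ incompatible}\,]$ denote, for each seed $u$, the probability (under $\cS$) that $\C'_u$ fails on a random erasure pattern. By Lemma~\ref{lem:becCodes} and Fubini (interchanging the finite expectation over $u \sim \U_{d'}$ with the expectation over $S \sim \cS$),
\[
\Ex_{u \sim \U_{d'}}[p(u)] \;=\; \Ex_{S \sim \cS}\Big[\Pr_{u}[\,u,S \text{ incompatible}\,]\Big] \;\le\; 5\eps'.
\]
Now apply Markov's inequality to the nonnegative random variable $p(u)$ with threshold $\sqrt{5\eps'}$: the fraction of seeds $u$ with $p(u) > \sqrt{5\eps'}$ is at most $\Ex[p(u)]/\sqrt{5\eps'} \le 5\eps'/\sqrt{5\eps'} = \sqrt{5\eps'}$. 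Hence for all but a $\sqrt{5\eps'}$ fraction of the codes $\C'_u$ in $\cG$, we have $p(u) \le \sqrt{5\eps'}$, i.e. that code tolerates an erasure pattern drawn from $\cS$ except with probability at most $\sqrt{5\eps'}$. This is exactly the claim for $\cG$; repeating verbatim with the ensemble $\cF$ (using the bound $5\eps$ from Lemma~\ref{lem:becCodes}) gives the claim for $\cF$, completing the proof.

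There is no real obstacle here — the only thing to be slightly careful about is keeping the two "failure" events straight (failure of a code on a fixed pattern, averaged over seeds, versus failure of a fixed seed averaged over patterns) and making sure the single application of Markov at level $\sqrt{5\eps'}$ is what splits the $5\eps'$ into a $\sqrt{5\eps'}$ fraction of bad codes each failing with probability $\sqrt{5\eps'}$. One should also note that the hypothesis $|S| \le m$ for every $S$ in the support of $\cS$ is what licenses the use of Lemma~\ref{lem:becCodes} for every fixed $S$, so it must be invoked explicitly before averaging. If one wanted a cleaner split one could instead threshold at any $\lambda$ and get a $(5\eps'/\lambda)$-fraction of codes each failing with probability at most $\lambda$; the symmetric choice $\lambda = \sqrt{5\eps'}$ is what the corollary records.
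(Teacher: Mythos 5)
Your proof is correct and is exactly the ``counting argument'' the paper alludes to (the paper states the corollary follows from Lemma~\ref{lem:becCodes} ``combined with a counting argument'' and omits the details). The Fubini-plus-Markov split at threshold $\sqrt{5\eps'}$ is the intended derivation, so there is nothing to add.
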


Note that the result holds irrespective of the distribution $\cS$,
contrary to the familiar case of $\bec(p)$ for which the erasure
pattern is an i.i.d.\ (i.e., independent and identically-distributed)
sequence.  For the case of $\bec(p)$, the erasure pattern (regarded as
its binary characteristic vector in $\F_2^n$) is given by $S := (S_1,
\ldots, S_n)$, where the random variables $S_1, \ldots, S_n \in \F_2$
are i.i.d.\ and $\Pr[S_i = 1] = p$. We denote this particular
distribution by $\cB_{n,p}$, which assigns a nonzero probability to
every vector in $\F_2^n$.  Thus in this case we cannot directly apply
Corollary~\ref{coro:becCodes}. However, note that $\cB_{n,p}$ can be
written as a convex combination
\begin{equation} \label{eqn:convex} \cB_{n,p} = (1-\gamma) \U_{n,\leq
    p'} + \gamma \cD,
\end{equation}
for $p' := p + \Omega(1)$ that is arbitrarily close to $p$, where
$\cD$ is an ``error distribution'' whose contribution $\gamma$ is
exponentially small. The distribution $\U_{n, \leq p'}$ is the
distribution $\cB_{n,p}$ conditioned on vectors of weight at most
$np'$. Corollary~\ref{coro:becCodes} applies to $\U_{n,\leq p'}$ by
setting $m = np'$.  Moreover, by the convex combination above, the
erasure decoding error probability of any code for erasure pattern
distributions $\cB_{n,p}$ and $\U_{n,\leq p'}$ differ by no more than
$\gamma$. Therefore, the above result applied to the erasure
distribution $\U_{n,\leq p'}$ handles the particular case of $\bec(p)$
with essentially no change in the error probability.

In light of Corollary~\ref{coro:becCodes}, in order to obtain rates
arbitrarily close to the channel capacity, the output lengths of $f$
and $g$ must be sufficiently close to the entropy requirement
$m$. More precisely, it suffices to have $r \leq (1+\alpha) m$ and $k
\geq (1-\alpha) m$ for arbitrarily small constant $\alpha > 0$. The
seed length of $f$ and $g$ determine the size of the code ensemble.
Moreover, the error of the extractor and condenser determine the
erasure error probability of the resulting code ensemble. As achieving
the channel capacity is the most important concern for us, we will
need to instantiate $f$ (resp., $g$) with a linear, strong, lossless
condenser (resp., extractor) whose output length is close to $m$. We
mention one such instantiation for each function.

For both functions $f$ and $g$, we can use the explicit extractor and
lossless condenser obtained from the Leftover Hash Lemma
(Lemma~\ref{lem:leftover}), which is optimal in the output length, but
requires a large seed, namely, $d=n$. The ensemble resulting this way
will thus have size $2^n$, but attains a positive error exponent
$\delta/2$ for an arbitrary rate loss $\delta > 0$.  Using an optimal
lossless condenser or extractor with seed length $d=\log(n) +
O(\log(1/\eps))$ and output length close to $m$, it is possible to
obtain a polynomially small capacity-achieving ensemble.  However, in
order to obtain an explicit ensemble of codes, the condenser of
extractor being used must be explicit as well.

In the world of linear extractors, we can use Trevisan's extractor
(Theorem~\ref{thm:Tre}) to improve the size of the ensemble compared
to what obtained from the Leftover Hash Lemma. In particular,
Trevisan's extractor combined with Corollary~\ref{coro:becCodes}
(using ensemble $\cG$) immediately gives the following result:

\begin{coro} \label{coro:bec} Let $p, c > 0$ be arbitrary
  constants. Then for every integer $n > 0$, there is an explicit
  ensemble $\cG$ of linear codes of rate $1-p-o(1)$ such that, the
  size of $\cG$ is quasipolynomial, i.e., $|\cG| = 2^{O(c^3 \log^3
    n)}$, and, all but an $n^{-c}=o(1)$ fraction of the codes in the
  ensemble have error probability at most $n^{-c}$ when used over
  $\bec(p)$. \qed
\end{coro}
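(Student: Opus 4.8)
The plan is to instantiate the ensemble $\cG$ from Construction~\ref{constr:ensembles} with a concrete explicit strong linear extractor, then apply Corollary~\ref{coro:becCodes} together with the convex-combination trick already sketched in the text for handling $\bec(p)$. The only real content beyond the preceding development is choosing parameters correctly so that the rate loss is $o(1)$, the size is quasipolynomial, and the error probability is $n^{-c}$.

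First I would fix the erasure fraction $p$ and the constant $c$, set $\eps' := \tfrac{1}{5} n^{-2c}$ (so that $\sqrt{5\eps'} = n^{-c}$), and let $m := n p'$ where $p' := p + o(1)$ is chosen slightly above $p$ as in \eqref{eqn:convex}. I would then invoke Theorem~\ref{thm:Tre} (Raz--Reingold--Vadhan's improvement of Trevisan's extractor), which is linear for every fixed seed (as noted in the remark following that theorem), with input length $n$, min-entropy $n - m$, output length $k := (1-\alpha)(n-m)$ for an arbitrarily small constant $\alpha>0$, and error $\eps'$. With $\alpha := \alpha(n) \to 0$ slowly, the resulting seed length is $d' = O(\log^2(n/\eps') \cdot \log(1/\alpha)) = O(c^2 \log^2 n \cdot \log(1/\alpha))$; choosing $\alpha$ to decay, say, polylogarithmically keeps $d' = O(c^3 \log^3 n)$, so $|\cG| = 2^{d'} = 2^{O(c^3\log^3 n)}$ is quasipolynomial. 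The rate of each code in $\cG$ is $k/n = (1-\alpha)(1 - p') = 1 - p - o(1)$, as required.

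Next I would apply Corollary~\ref{coro:becCodes} to the distribution $\U_{n,\le p'}$ (the weight-$\le np'$ truncation of $\cB_{n,p}$), which has support on subsets of size at most $m = np'$: all but a $\sqrt{5\eps'} = n^{-c}$ fraction of the codes in $\cG$ tolerate an erasure pattern drawn from $\U_{n,\le p'}$ with probability at least $1-n^{-c}$. Finally, I would transfer this to the genuine channel $\bec(p)$ via the convex decomposition $\cB_{n,p} = (1-\gamma)\,\U_{n,\le p'} + \gamma\,\cD$ with $\gamma = 2^{-\Omega(n)}$ (a Chernoff bound, since $p' > p$ is a fixed constant above $p$): the erasure-decoding error probability under $\cB_{n,p}$ exceeds that under $\U_{n,\le p'}$ by at most $\gamma$, which is absorbed into the $n^{-c}$ term for large $n$. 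Hence all but an $n^{-c}$ fraction of the codes in $\cG$ have error probability at most $n^{-c}$ over $\bec(p)$, and explicitness of $\cG$ follows from explicitness of the extractor of Theorem~\ref{thm:Tre} together with the description of $\cG$ in Construction~\ref{constr:ensembles}.

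**The main obstacle** is the bookkeeping of the three small parameters $\alpha, \eps', \gamma$: one must verify simultaneously that (i) the seed length stays $O(\log^3 n)$ despite $\alpha \to 0$, which constrains how fast $\alpha$ may decay; (ii) the rate loss $\alpha + (p'-p)$ is genuinely $o(1)$; and (iii) the two sources of decoding failure — the $\sqrt{5\eps'}$ ``bad code'' fraction, the $\sqrt{5\eps'}$ in-ensemble error, and the $\gamma$ channel-truncation slack — all collapse to a single clean $n^{-c}$. None of these is deep, but stating the corollary cleanly requires choosing the constants in the right order ($c$ first, then $\eps'$, then $\alpha$ as a slowly-decaying function of $n$, then $p'$, then $\gamma$). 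A minor subtlety worth a sentence is justifying that we may assume each generator matrix $G_u$ has full rank $k$ (so the quoted rate is exact), which is already handled by the footnote in Construction~\ref{constr:ensembles} and by Proposition~\ref{prop:zeroError}.
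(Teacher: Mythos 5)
Your proposal is correct and follows essentially the same route as the paper: the paper derives this corollary immediately by instantiating the ensemble $\cG$ of Construction~\ref{constr:ensembles} with the linear Trevisan/Raz--Reingold--Vadhan extractor of Theorem~\ref{thm:Tre} and applying Corollary~\ref{coro:becCodes}, with the passage from $\U_{n,\le p'}$ to $\cB_{n,p}$ handled by the convex decomposition~\eqref{eqn:convex} exactly as you describe. Your parameter bookkeeping (setting $\eps'$ so that $\sqrt{5\eps'}=n^{-c}$, letting $\alpha\to 0$ slowly to keep the seed length $O(c^3\log^3 n)$ while making the rate loss $o(1)$, and absorbing the exponentially small truncation term $\gamma$) supplies precisely the details the paper leaves implicit.
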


For the ensemble $\cF$, on the other hand, we can use the linear
lossless condenser of Guruswami et al.\ that only requires a
logarithmic seed (Corollary \ref{coro:GUVcondLinear}).  Using this
condenser combined with Corollary~\ref{coro:becCodes}, we can
strengthen the above result as follows:

\begin{coro}
  \label{coro:becGUV} Let $p, c, \alpha > 0$ be arbitrary
  constants. Then for every integer $n > 0$, there is an explicit
  ensemble $\cF$ of linear codes of rate $1-p-\alpha$ such that $|\cG|
  = O(n^{c'})$ for a constant $c'$ only depending on $c, \alpha$.
  Moreover, all but an $n^{-c}=o(1)$ fraction of the codes in the
  ensemble have error probability at most $n^{-c}$ when used over
  $\bec(p)$. \qed
\end{coro}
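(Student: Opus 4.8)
The plan is to instantiate the ensemble $\cF$ from Construction~\ref{constr:ensembles} with the explicit linear lossless condenser of Corollary~\ref{coro:GUVcondLinear} (the linearized Guruswami--Umans--Vadhan condenser), and then apply Corollary~\ref{coro:becCodes} together with the convex-combination decomposition \eqref{eqn:convex} of the erasure distribution $\cB_{n,p}$. First I would fix the target rate loss: given the constant $\alpha > 0$, apply Corollary~\ref{coro:GUVcondLinear} over $\F_2$ (so $p=2$ there, a fixed prime power; note the unfortunate name clash with the erasure probability $p$ here — I will keep the erasure parameter as $p$ and use the corollary with its prime-power parameter set to $2$) with its constant parameter chosen small enough — call it $\alpha_0 := \alpha/2$ — for min-entropy requirement $m := \lceil (p+\alpha/4) n \rceil$ and error $\eps := n^{-2c}/10$. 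This yields an explicit strong $(\leq m, \eps)$ lossless condenser $f\colon \F_2^n \times \zo^d \to \F_2^r$ that is linear for every fixed seed, with seed length $d \leq (1+1/\alpha_0)(\log(nm/\eps) + O(1)) = O(c\log n)$ and output length $r \leq d + (1+\alpha_0)m \leq (1+\alpha/2)m + O(c\log n)$. Hence $r/n \leq (1+\alpha/2)(p+\alpha/4) + o(1)$, which for small $\alpha$ is at most $p+\alpha$, so each code $\C_u$ in $\cF$ has rate at least $1-r/n \geq 1-p-\alpha$. The ensemble size is $2^d = O(n^{c'})$ for a constant $c'$ depending only on $c$ and $\alpha$ (through $\alpha_0$), as claimed.

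Next I would invoke Corollary~\ref{coro:becCodes}: since $f$ is a strong linear lossless condenser for min-entropy $m$ with error $\eps$, for any distribution $\cS$ on subsets of $[n]$ of size at most $m$, all but a $\sqrt{5\eps}$ fraction of the codes in $\cF$ tolerate $\cS$-distributed erasure patterns with probability at least $1-\sqrt{5\eps}$. To handle $\bec(p)$ specifically I would use \eqref{eqn:convex}: write $\cB_{n,p} = (1-\gamma)\,\U_{n,\leq p'} + \gamma\,\cD$ where $p' := p + \alpha/4$ (so $np' \leq m$), and where $\gamma$ — the probability that a $\mathrm{Binomial}(n,p)$ variable exceeds $np'$ — is exponentially small by a Chernoff bound. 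Apply Corollary~\ref{coro:becCodes} with $\cS = \U_{n,\leq p'}$ and $m = \lfloor np'\rfloor$: all but a $\sqrt{5\eps}$ fraction of the codes in $\cF$ decode $\U_{n,\leq p'}$-erasures with error at most $\sqrt{5\eps}$. Since the $\bec(p)$ erasure distribution differs from $\U_{n,\leq p'}$ in total variation by at most $\gamma$, the error probability over $\bec(p)$ for each such ``good'' code is at most $\sqrt{5\eps} + \gamma$. With $\eps = n^{-2c}/10$ we get $\sqrt{5\eps} \leq n^{-c}/\sqrt{2}$, and $\gamma = o(n^{-c})$, so both the fraction of bad codes and the error probability of good codes are at most $n^{-c}$ for $n$ large (and by adjusting constants, for all $n$).

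The routine verifications are: (i) the arithmetic showing $r/n \leq p+\alpha$ after choosing $\alpha_0$ small — this is elementary once one tracks the $(1+\alpha_0)$ factor and the $O(c\log n)/n$ term; (ii) the Chernoff bound giving $\gamma = 2^{-\Omega(n)}$; and (iii) checking that Corollary~\ref{coro:GUVcondLinear} indeed applies — its hypothesis $m \leq n\log p$ becomes $m \leq n$ here, which holds since $p + \alpha/4 < 1$. The main obstacle, and the place where care is genuinely needed, is the \emph{bookkeeping of the three interacting constants} $\alpha$ (rate loss), $c$ (error exponent / ensemble size exponent), and $\alpha_0$ (the GUV condenser parameter): one must pick $\alpha_0$ depending on $\alpha$ first, then $\eps$ depending on $c$, then verify the seed length $d = (1+1/\alpha_0)\log(nm/\eps)+O(1)$ stays $O(\log n)$ with a hidden constant that depends only on $c$ and $\alpha$ — in particular $d = (1+2/\alpha)(2c\log n + \log(n/\alpha) + O(1)) + O(1)$, so $c' = (1+2/\alpha)(2c+1) + o(1)$ works. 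Everything else follows mechanically from the quoted results, exactly paralleling the proof of Corollary~\ref{coro:bec} but with the logarithmic-seed linear condenser in place of Trevisan's extractor and with ensemble $\cF$ in place of $\cG$.
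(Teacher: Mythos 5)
Your proposal is correct and follows exactly the route the paper intends: instantiate the ensemble $\cF$ of Construction~\ref{constr:ensembles} with the linearized Guruswami--Umans--Vadhan lossless condenser of Corollary~\ref{coro:GUVcondLinear}, apply Corollary~\ref{coro:becCodes}, and pass from $\U_{n,\leq p'}$ to $\bec(p)$ via the convex decomposition \eqref{eqn:convex} with an exponentially small remainder. The paper states the corollary without writing out these details, so your parameter bookkeeping ($\alpha_0$, $\eps = \Theta(n^{-2c})$, and the seed-length accounting giving $c'$) is a faithful and complete elaboration of the intended argument.
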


\section{Codes for the Binary Symmetric Channel} \label{sec:bsc}

The goal of this section is to design capacity achieving code
ensembles for the binary symmetric channel $\bsc(p)$. In order to do
so, we obtain codes for the general (and not necessarily memoryless)
class $\SC(\F_q, \cZ)$ of symmetric channels, where $\cZ$ is any flat
distribution or sufficiently close to one.  For concreteness, we will
focus on the binary case where $q=2$.

Recall that the capacity of $\bsc(\cZ)$, seen as a binary channel, is
$1-h(\cZ)$ where $h(\cZ)$ is the entropy rate of $\cZ$.  The special
case $\bsc(p)$ is obtained by setting $\cZ = \cB_{n,p}$; i.e., the
product distribution of $n$ Bernoulli random variables with
probability $p$ of being equal to $1$.

The code ensemble that we use for the symmetric channel is the
ensemble $\cF$, obtained from linear lossless condensers, that we
introduced in the preceding section. Thus, we adopt the notation (and
parameters) that we used before for defining the ensemble $\cF$.
Recall that each code in the ensemble has rate at least $1-r/n$.  In
order to show that the ensemble is capacity achieving, we consider the
following brute-force decoder for each code:

\begin{quote}
  \emph{Brute-force decoder for code $\C_u$: } Given a received word
  $\hat{y} \in \F_2^n$, find a codeword $y \in \F_2^n$ of $\C_u$ used
  and a vector $z \in \supp(\cZ)$ such that $\hat{y} = y+z$. Output
  $y$, or an arbitrary codeword if no such pair is found. If there is
  more than one choice for the codeword $y$, arbitrarily choose one of
  them.
\end{quote}

For each $u \in \F_2^d$, denote by $\cE(\C_u, \cZ)$ the error
probability of the above decoder for code $\C_u$ over $\bsc(\F_2,
\cZ)$. The following lemma quantifies this probability:

\begin{lem} \label{lem:bscCodes} Let $\cZ$ be a flat distribution with
  entropy $m$.  Then for at least a $1-2\sqrt{\eps}$ fraction of the
  choices of $u \in \F_2^d$, we have $\cE(\C_u, \cZ) \leq
  \sqrt{\eps}$.
\end{lem}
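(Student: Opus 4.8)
The plan is to mimic the erasure-channel argument of Lemma~\ref{lem:becCodes}, but now using the lossless condenser's almost-injectivity rather than the extractor's surjectivity. Since $\cZ$ is a flat distribution with min-entropy $m$, it is uniformly supported on a set $T\subseteq\F_2^n$ of size $2^m$. Because $f$ is a strong lossless $(m,\eps)$-condenser, the joint distribution $(U,f(\cZ,U))$ with $U\sim\U_d$ is $\eps$-close to some $(\U_d,\cW)$ with min-entropy at least $d+m$. By Corollary~\ref{coro:strongExt} applied with $\delta:=\sqrt{\eps}$, for all but at most a $\sqrt{\eps}$ fraction of seeds $u\in\F_2^d$ the distribution $f(\cZ,u)=H_u\cdot\cZ$ is $\sqrt{\eps}$-close to having min-entropy $m$.

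The next step is to upgrade ``$\sqrt{\eps}$-close'' to ``exactly a flat source of min-entropy $m$'': since $\cZ$ is flat and $f(\cdot,u)$ is linear, $f(\cZ,u)$ is uniform on an affine (in fact linear) subspace of $\F_2^r$, so Proposition~\ref{prop:zeroError} (with $k=m$, $q=2$, $\eps<1/2$) forces $f(\cZ,u)$ to be \emph{exactly} uniform on a subspace of dimension $m$. Concretely this says: for such a ``good'' seed $u$, the linear map $z\mapsto H_u z$ restricted to $\supp(\cZ)=T$ is injective, i.e.\ distinct noise vectors in $T$ produce distinct syndromes. (Strictly, $\cZ$ need only be flat; the statement phrases it for flat $\cZ$, so there is no rounding issue here, and for the near-flat case one would instead invoke Proposition~\ref{prop:flatmap} as is done in Section~\ref{sec:threshold}.)

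Then I would argue that for every good seed $u$ the brute-force (equivalently, syndrome) decoder for $\C_u$ never errs on noise drawn from $\cZ$: given $\hat y=y+z$ with $y\in\C_u$ and $z\in T$, the decoder can compute the syndrome $H_u\hat y=H_u z$, and by injectivity of $H_u$ on $T$ this syndrome determines $z$ uniquely, hence recovers $y$. So in fact $\cE(\C_u,\cZ)=0$ for every good seed. Since good seeds are a $1-\sqrt{\eps}$ fraction (and certainly a $1-2\sqrt{\eps}$ fraction), and on the bad seeds we trivially bound $\cE(\C_u,\cZ)\le 1$, the conclusion as stated — that for at least a $1-2\sqrt{\eps}$ fraction of $u$ we have $\cE(\C_u,\cZ)\le\sqrt{\eps}$ — follows immediately; indeed a stronger bound holds, but the stated form is what later sections need (and leaves slack for the reduction of the not-quite-flat $\cB_{n,p}$ case to a flat conditional distribution via a convex-combination argument as in \eqref{eqn:convex}).

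The only mild subtlety — and the step I would be most careful about — is the passage from the statistical-distance guarantee of Corollary~\ref{coro:strongExt} to exact injectivity: one must check that Proposition~\ref{prop:zeroError} genuinely applies, i.e.\ that $f(\cZ,u)$ is uniform on an affine subspace (true by linearity of $f(\cdot,u)$ and flatness of $\cZ$) and that $\sqrt{\eps}<1/2$ (true since $\eps$ is assumed substantially small). Everything else is bookkeeping: tracking the entropy parameter $m$ through the condenser, and observing that the syndrome decoder's error is exactly the probability that two noise realizations in $\supp(\cZ)$ collide under $H_u$, which is zero for good seeds. No new combinatorial input beyond what was used for the BEC case is required; the BSC case is in a sense cleaner because there is no rounding of ``at most $np'$ erasures'' to worry about once $\cZ$ is taken to be exactly flat.
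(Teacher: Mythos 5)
There is a genuine gap at the step you yourself flag as the ``only mild subtlety'': you invoke Proposition~\ref{prop:zeroError} to conclude that $f(\cZ,u)=H_u\cdot\cZ$ is exactly uniform on an $m$-dimensional subspace, hence that $H_u$ is injective on $\supp(\cZ)$ and the decoder never errs. But Proposition~\ref{prop:zeroError} applies to \emph{affine} sources, and a flat distribution with min-entropy $m$ is merely uniform on an arbitrary set $T\subseteq\F_2^n$ of size $2^m$ --- not on an affine subspace. Indeed, the flat distributions this lemma is ultimately applied to are the Hamming slices $\U_{n,i}$ (uniform on weight-$i$ vectors), whose supports are very far from affine; the image of such a set under the linear map $H_u$ is not uniform on a subspace, and $H_u$ need not be injective on it. So the conclusion $\cE(\C_u,\cZ)=0$ for good seeds is unjustified (and false in general); the linearity-plus-flatness argument that worked for the BEC ensemble $\cG$ (where the erasure source really was affine, being uniform on a coordinate subspace) does not transfer here.

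The correct tool is the one you mention only parenthetically: Proposition~\ref{prop:flatmap}. Applying Corollary~\ref{coro:strongExt} with $\delta:=2\sqrt{\eps}$ (not $\sqrt{\eps}$) shows that for a $1-2\sqrt{\eps}$ fraction of seeds $u$ the distribution $f(\cZ,u)$ is $(\sqrt{\eps}/2)$-close to having min-entropy $m$; then the first part of Proposition~\ref{prop:flatmap} gives a set of at least $(1-\sqrt{\eps})\,|\supp(\cZ)|$ image points with a \emph{unique} preimage in $\supp(\cZ)$. Since the brute-force/syndrome decoder can only err on noise realizations $z$ whose syndrome $H_u z$ collides with that of some other $z'\in\supp(\cZ)$, and $\cZ$ is uniform on its support, this bounds $\cE(\C_u,\cZ)$ by $\sqrt{\eps}$ --- which is the bound claimed, not zero. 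Note also that your parameter choice $\delta=\sqrt{\eps}$ would, after the flatmap step, only yield error $2\sqrt{\eps}$ for a $1-\sqrt{\eps}$ fraction of seeds, so the constants need the adjustment above to match the statement. The rest of your outline (reduction to syndrome collisions, the role of the convex-combination decomposition for $\cB_{n,p}$) is consistent with the paper's argument.
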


\begin{proof}
  The proof is straightforward from the almost-injectivity property
  of lossless condensers discussed in Section~\ref{sec:inject}.  We
  will use this property to construct a syndrome decoder for the code
  ensemble that achieves a sufficiently small error probability.

  By Corollary~\ref{coro:strongExt}, for a $1-2\sqrt{\eps}$ fraction
  of the choices of $u \in \zo^d$, the distribution $\cY := f(\cZ, u)$
  is $(\sqrt{\eps}/2)$-close to having min-entropy at least $m$. Fix
  any such $u$.  We show that the error probability $\cE(\C_u, \cZ)$
  is bounded by $\sqrt{\eps}$.

  For each $y \in \F_2^r$, define
  \[ \mathcal{N}(y) := | \{ x \in \supp(\cZ)\colon f(x, u) = y \}| \]
  and recall that $f(x, u) = H_u \cdot x$. Now suppose that a message
  is encoded using the code $\C_u$ to an encoding $x \in \C_u$, and
  that $x$ is transmitted through the channel. The error probability
  $\cE(\C_u, \cZ)$ can be written as
  \begin{eqnarray}
    \cE(\C_u, \cZ) &=& \Pr_{z \sim \cZ}[ \exists x' \in \C_u, \exists z' \in \supp(\cZ)\setminus z \colon x+z = x'+z' ] \nonumber \\
    &\leq& \Pr_{z \sim \cZ}[ \exists x' \in \C_u, \exists z' \in \supp(\cZ)\setminus z \colon H_u\cdot (x+z) = H_u \cdot(x'+z') ]\nonumber \\
    &=& \Pr_{z \sim \cZ}[ \exists z' \in \supp(\cZ) \setminus z \colon H_u\cdot z = H_u \cdot z' ] \label{eqn:errorProbCollision} \\
    &=& \Pr_{z \sim \cZ}[ \mathcal{N}(H \cdot z) > 1 ]\nonumber \\
    &=& \Pr_{z \sim \cZ}[ \mathcal{N}(f(x,u)) > 1 ], \label{eqn:errorProbLast}
  \end{eqnarray}
  where $\eqref{eqn:errorProbCollision}$ uses the fact that any
  codeword of $\C_u$ is in the right kernel of $H_u$.

  By the first part of Proposition~\ref{prop:flatmap}, there is a set
  $T \subseteq \F_2^r$ of size at least $(1-\sqrt{\eps})|\supp(\cZ)|$
  such that, $\mathcal{N}(y) = 1$ for every $y \in T$. Since $\cZ$ is
  uniformly distributed on its support, this combined with
  \eqref{eqn:errorProbLast} immediately implies that $\cE(\C_u, \cZ)
  \leq \sqrt{\eps}$.
\end{proof}

The lemma implies that any linear lossless condenser with entropy
requirement $m$ can be used to construct an ensemble of codes such
that all but a small fraction of the codes are good for reliable
transmission over $\bsc(\cZ)$, where $\cZ$ is an arbitrary flat
distribution with entropy at most $m$. Similar to the case of BEC, the
seed length determines the size of the ensemble, the error of the
condenser bounds the error probability of the decoder, and the output
length determines the proximity of the rate to the capacity of the
channel.
Again, using the condenser given by the Leftover Hash Lemma
(Lemma~\ref{lem:leftover}), we can obtain a capacity achieving
ensemble of size $2^n$. Moreover, using the linear lossless condenser
of Guruswami et al.\ (Corollary \ref{coro:GUVcondLinear}) the ensemble
can be made polynomially small (similar to the result given by
Corollary~\ref{coro:becGUV}).

It is not hard to see that the converse of the above result is also
true; namely, that any ensemble of linear codes that is universally
capacity achieving with respect to any choice of the noise
distribution $\cZ$ defines a strong linear, lossless, condenser. This
is spelled out in the lemma below.

\begin{lem} \label{lem:bscCodesConverse} Let $\{\C_1, \ldots, \C_T\}$
  be a binary code ensemble of length $n$ and dimension $n-r$ such
  that for every flat distribution $\cZ$ with min-entropy at most $m$
  on $\F_2^n$, all but a $\gamma$ fraction of the codes in the
  ensemble (for some $\gamma \in [0,1)$) achieve error probability at
  most $\eps$ (under syndrome decoding) when used over $\SC(\F_{q^n},
  \cZ)$. Then the function $f\colon \F_2^n \times [T] \to \F_2^{r}$
  defined as
  \[
  f(x,u) := H_u \cdot x,
  \]
  where $H_u$ is a parity check matrix of $\C_u$, is a strong,
  lossless, $(m, 2\eps+\gamma)$-condenser.
\end{lem}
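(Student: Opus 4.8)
The plan is to run the argument of Lemma~\ref{lem:bscCodes} in reverse, turning the decoding guarantee of the code ensemble into the distinguisher-based condenser condition. First I would fix an arbitrary distribution $\cX$ on $\F_2^n$ with min-entropy at least $m$, a random seed $U \sim \U_{[T]}$, and a test $T\colon \F_2^r \times [T] \to \zo$, and aim to bound $|\Pr[T(f(X,U),U)=1] - \Pr[T(W,U)=1]|$ for the ``ideal'' target, where $(U,W)$ has the distribution $(\U_{[T]}, \cZ')$ with $\cZ'$ of min-entropy at least $d+m$ (here $d := \log T$, abusing notation slightly since the seed is indexed by $[T]$ rather than $\zo^d$). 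Using Proposition~\ref{prop:convex}, it suffices to prove the claim when $\cX$ is flat with min-entropy exactly $m$; this is the key reduction that lets us invoke the code-ensemble hypothesis, which is stated for flat noise distributions. So set $\cZ := \cX$, a flat distribution with min-entropy $m$, and apply the hypothesis: all but a $\gamma$ fraction of the codes $\C_u$ have syndrome-decoding error at most $\eps$ over $\SC(\F_{q^n},\cZ)$.

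Next I would recall, exactly as in the forward direction, that the syndrome-decoding error of $\C_u$ equals $\Pr_{z \sim \cZ}[\mathcal{N}(H_u \cdot z) > 1]$ where $\mathcal{N}(y) := |\{x \in \supp(\cZ)\colon H_u \cdot x = y\}|$ (this is the chain of equalities \eqref{eqn:errorProbCollision}--\eqref{eqn:errorProbLast} in the proof of Lemma~\ref{lem:bscCodes}, and it is an \emph{identity}, not just an inequality, when the decoder is the natural syndrome decoder that only fails on collisions). Hence for a $1-\gamma$ fraction of seeds $u$, at most an $\eps$ fraction of the mass of $\cZ$ lands on syndromes with a collision; equivalently, $f(\cdot,u)$ restricted to $\supp(\cZ)$ is injective on a subset of $\supp(\cZ)$ of relative size at least $1-\eps$. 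This is precisely the ``almost-injectivity'' conclusion of the first part of Proposition~\ref{prop:flatmap} (read in the other direction): if $f(\cX)$ is injective on a $(1-\eps)$-fraction of the flat source, then $f(\cX)$ is $\eps$-close (in fact, following the proof there, within $2\eps$ or so) to a flat distribution of min-entropy $m$. I would then average over the seed: for the $1-\gamma$ good seeds we get output within $O(\eps)$ of min-entropy $m$, and for the remaining $\gamma$ fraction we pay at most $1$ in statistical distance, so the joint distribution $(U, f(X,U))$ is within $2\eps + \gamma$ (using the convexity Proposition~\ref{prop:ConvexOfClose} together with the per-seed bound) of a distribution $(\U_{[T]}, \cZ')$ with $\cZ'$ of min-entropy $m$ conditioned on each seed, hence the pair has min-entropy at least $d + m$. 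By Proposition~\ref{prop:disting} this statistical-distance bound is exactly the distinguisher condition defining a strong lossless condenser, and linearity of $f(\cdot,u) = H_u\cdot(\cdot)$ is immediate from its definition, completing the proof.

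The main obstacle I anticipate is bookkeeping the constants and making the ``reverse'' use of Proposition~\ref{prop:flatmap} airtight: Proposition~\ref{prop:flatmap} as stated goes from a closeness-to-min-entropy hypothesis to an injectivity conclusion (part 1) and from a support-size hypothesis to closeness (part 2), and here I need the direction ``support of $f(\cZ)$ is large (because few collisions) $\Rightarrow$ $f(\cZ)$ is close to min-entropy $m$,'' which is precisely part~2 of that proposition (noting $|\F_2^r| \geq 2^m$ since $r \geq m$, which holds because a lossless condenser cannot compress below its entropy — and this is consistent with the code having dimension $n-r$ and positive rate). I would also need to be slightly careful that the per-seed error $\eps$ for syndrome decoding translates to ``$f$ collides on at most an $\eps$ fraction of $\supp(\cZ)$'' with the right normalization (both are measured against the uniform measure on $\supp(\cZ)$, so this is clean), and that combining the $\gamma$ bad seeds costs additively $\gamma$ in the final statistical distance, yielding the claimed error $2\eps + \gamma$. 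None of this is deep, but getting the factor of $2$ and the additive $\gamma$ to land exactly as in the statement is where the care is needed.
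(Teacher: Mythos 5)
There is a genuine gap at the central step of your argument: the claim that the syndrome-decoding error of $\C_u$ \emph{equals} $\Pr_{z\sim\cZ}[\mathcal{N}(H_u\cdot z)>1]$. No decoder ``fails on all collisions'': a (deterministic) syndrome decoder maps each syndrome to a single noise estimate, so on a colliding syndrome class it is still \emph{correct} for exactly one of the preimages in $\supp(\cZ)$. Consequently the decoding error is in general strictly smaller than the collision probability, and --- more importantly for a converse --- you are given only that \emph{some} decoder achieves error $\leq\eps$, so you must lower-bound the error of an \emph{arbitrary} decoder in terms of the collision structure. The correct observation (and the one the paper uses) is: letting $T:=\{z\in\supp(\cZ)\colon \exists z'\in\supp(\cZ)\setminus\{z\},\ H_u z=H_u z'\}$, any deterministic decoder is correct on at most half of the mass of $T$ (each colliding syndrome class has at least two equally likely preimages under the flat $\cZ$, and the decoder can name only one), so $\eps\geq\frac12\Pr[Z\in T]$, i.e.\ $\Pr[Z\in T]\leq 2\eps$. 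This is exactly where the factor of $2$ in the bound $2\eps+\gamma$ comes from; your ``identity'' would instead give $\Pr[Z\in T]\leq\eps$ and a final error of $\eps+\gamma$, a conclusion that does not follow from the hypothesis. (You flag the factor of $2$ as a bookkeeping concern, but its source is this $1/2$-correctness argument, not the slack in Proposition~\ref{prop:flatmap}.)

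Apart from this, your outline matches the paper's: reduce to flat $\cZ$ via Proposition~\ref{prop:convex}, translate the decoding guarantee into almost-injectivity of $H_u$ on $\supp(\cZ)$, apply the second part of Proposition~\ref{prop:flatmap}, and account for the $\gamma$ fraction of bad seeds. One cosmetic difference: you average the per-seed statistical distances over $u$, whereas the paper counts the support of the joint distribution $(U,H_U\cdot Z)$ directly --- it is at least $(1-\gamma)(1-2\eps)T2^m\geq(1-\gamma-2\eps)T2^m$ --- and applies Proposition~\ref{prop:flatmap} once to the flat joint source; both routes give $2\eps+\gamma$ once the collision bound is corrected.
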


\begin{proof}
  The proof is straightforward using similar arguments as in
  Lemma~\ref{lem:bscCodes}.  Without loss of generality (by
  Proposition~\ref{prop:convex}), let $\cZ$ be a flat distribution
  with min-entropy $m$, and denote by $D\colon \F_2^r \to \F_2^n$ the
  corresponding syndrome decoder. Moreover, without loss of generality
  we have taken the decoder to be a deterministic function. For a
  randomized decoder, one can fix the internal coin flips so as to
  preserve the upper bound on its error probability.  Now let $u$ be
  chosen such that $\C_u$ achieves an error probability at most $\eps$
  (we know this is the case for at least $\gamma T$ of the choices of
  $u$).

  Denote by $T \subseteq \supp(\cZ)$ the set of noise realizations
  that can potentially confuse the syndrome decoder. Namely,
  \[
  T := \{ z \in \supp(\cZ)\colon \exists z' \in \supp(\cZ), z' \neq z,
  H_u \cdot z = H_u \cdot z'\}.
  \]
  Note that, for a random $Z \sim \cZ$, conditioned on the event that
  $Z \in T$, the probability that the syndrome decoder errs on $Z$ is
  at least $1/2$, since we know that $Z$ can be confused by at least
  one different noise realization. We can write this more precisely as
  \[
  \Pr_{Z\sim \cZ}[ D(Z) \neq Z \mid Z \in T] \geq 1/2.
  \]
  Since the error probability of the decoder is upper bounded by
  $\eps$, we conclude that
  \[
  \Pr_{Z\sim \cZ}[ Z \in T] \leq 2\eps.
  \]
  Therefore, the fraction of the elements on support of $\cZ$ that
  collide with some other element under the mapping defined by $H_u$
  is at most $2\eps$. Namely,
  \[
  |\{ H_u \cdot z\colon z \in \supp(\cZ) \}| \geq 2^m(1-2\eps),
  \]
  and this is true for at least $1-\gamma$ fraction of the choices of
  $u$.  Thus, for a uniformly random $U \in [T]$ and $Z \sim \cZ$, the
  distribution of $(U, H_U \cdot Z)$ has a support of size at least
  \[
  (1-\gamma)(1-2\eps) T 2^m \geq (1-\gamma-2\eps) T 2^m.
  \]
  By the second part of Proposition~\ref{prop:flatmap}, we conclude
  that this distribution is $(2\eps+\gamma)$-close to having entropy
  $m+\log T$ and thus, the function $f$ defined in the statement is a
  strong lossless $(m, 2\eps+\gamma)$-condenser.
\end{proof}

By this lemma, the known lower bounds on the seed length and the
output length of lossless condensers that we discussed in
Chapter~\ref{chap:extractor} translate into lower bounds on the size
of the code ensemble and proximity to the capacity that can be
obtained from our framework.  In particular, in order to get a
positive error exponent (i.e., exponentially small error in the block
length), the size of the ensemble must be exponentially large.

It is worthwhile to point out that the code ensembles $\cF$ and $\cG$
discussed in this and the preceding section preserve their erasure and
error correcting properties under any change of basis in the ambient
space $\F_2^n$, due to the fact that a change of basis applied on any
linear condenser results in a linear condenser with the same
parameters.  This is a property achieved by the trivial, but large,
ensemble of codes defined by the set of all $r \times n$ parity check
matrices.  Observe that no single code can be universal in this sense,
and it is inevitable to have a sufficiently large ensemble to attain
this property.

%
%

\subsection*{The Case $\bsc(p)$}

For the special case of $\bsc(p)$, the noise distribution $\cB_{n, p}$
is not a flat distribution. Fortunately, similar to the BEC case, we
can again use convex combinations to show that the result obtained in
Lemma~\ref{lem:bscCodes} can be extended to this important noise
distribution. The main tool that we need is an extension of
Lemma~\ref{lem:bscCodes} to convex combinations with a small number of
components.

Suppose that the noise distribution $\cZ$ is not a flat distribution
but can be written as a convex combination
\begin{equation} \label{eqn:noiseConvex} \cZ = \alpha_1 \cZ_1 + \cdots
  + \alpha_t \cZ_t.
\end{equation}
of $t$ flat distributions, where the number $t$ of summands is not too
large, and
\[
|\supp(\cZ_1)| \geq |\supp(\cZ_2)| \geq \cdots \geq |\supp(\cZ_t)|.
\]
For this more general case, we need to slightly tune our brute-force
decoder in the way it handles ties. In particular, we now require the
decoder to find a codeword $y \in \C_u$ and a potential noise vector
$z \in \supp(\cZ)$ that add up to the received word, as before.
However, in case more than one matching pair is found, we will require
the decoder to choose the one whose noise vector $z$ belongs to the
component $\cZ_1, \ldots, \cZ_t$ with smallest support (i.e., largest
index). If the noise vector $z \in \supp(\cZ_i)$ that maximizes the
index $i$ is still not unique, the decoder can arbitrarily choose one.
Under these conventions, we can now prove the following:

\begin{lem} \label{lem:bscConvex} Suppose that a noise distribution
  $\cZ$ is as in $\eqref{eqn:noiseConvex}$, where each component
  $\cZ_i$ has entropy at most $m$, and the function $f$ defining the
  ensemble $\cF$ is a strong lossless $(\leq m+1, \eps)$-condenser.
  Then for at least a $1-t(t+1)\sqrt{\eps}$ fraction of the choices of
  $u \in \F_2^d$, the brute-force decoder satisfies $\cE(\C_u, \cZ)
  \leq 2t \sqrt{\eps}$.
\end{lem}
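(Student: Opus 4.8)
The plan is to reduce the general convex-combination case to repeated applications of the single flat-distribution analysis from Lemma~\ref{lem:bscCodes} (or rather its underlying almost-injectivity argument). First I would observe that since $\cZ = \alpha_1 \cZ_1 + \cdots + \alpha_t \cZ_t$ with each $\cZ_i$ flat of entropy at most $m$, we can think of a sample $z \sim \cZ$ as being drawn in two stages: first pick an index $i$ with probability $\alpha_i$, then sample $z$ from $\cZ_i$. The brute-force decoder (with the tie-breaking rule favouring the component of smallest support) succeeds as long as the particular noise realization $z \in \supp(\cZ_i)$ is not ``confusable'', where by confusable I mean there is some $z' \in \supp(\cZ)$, $z' \neq z$, with $H_u \cdot z = H_u \cdot z'$ \emph{and} $z'$ lies in a component with index $\geq i$ (a collision with a strictly smaller-index, i.e.\ larger-support, component does not cause an error because of the tie-breaking rule). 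So the total error probability is at most $\sum_{i=1}^t \alpha_i \cdot \Pr_{z \sim \cZ_i}[\,z \text{ collides under } H_u \text{ with some } z' \in \supp(\cZ_i) \cup \cdots \cup \supp(\cZ_t),\ z' \neq z\,]$.

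Next I would bound each of these collision probabilities. For a fixed $i$, consider the ``merged'' distribution $\cZ_{\geq i}$ obtained by restricting the convex combination to components $i, i+1, \ldots, t$ and renormalizing; its support has size at most $t \cdot |\supp(\cZ_i)| \leq t 2^m$, so it is $2$-close to a flat distribution of entropy at most $m + \log t$. Actually it is cleaner to work directly: I would apply Corollary~\ref{coro:strongExt} to each flat component and to these merged objects. The condenser $f$ is assumed to be a strong lossless $(\leq m+1, \eps)$-condenser, so in particular it is a lossless condenser for every entropy level up to $m+1$. For each $i$, for all but a $2\sqrt{\eps}$ fraction of seeds $u$, the output $f(\cZ_i, u)$ is $(\sqrt{\eps}/2)$-close to having min-entropy $\log|\supp(\cZ_i)|$, and by the first part of Proposition~\ref{prop:flatmap} this forces a set $T_i \subseteq \supp(\cZ_i)$ of relative size $\geq 1 - \sqrt{\eps}$ on which $H_u$ is injective. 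To control collisions \emph{across} components $i, \ldots, t$ rather than merely within $\cZ_i$, I would additionally invoke the condenser property on the union $\cup_{j \geq i}\supp(\cZ_j)$, viewed as (close to) a flat source of entropy $\leq m + \log t \leq m+1$ (using that $t$ is small — here is where the ``$\leq m+1$'' slack and the bound on $t$ are used); this yields, again for all but a $2\sqrt{\eps}$ fraction of seeds, a set of size $\geq (1-\sqrt{\eps})|\cup_{j\geq i}\supp(\cZ_j)|$ on which $H_u$ is injective, which in particular means at most a $\sqrt{\eps}$ fraction of elements of $\supp(\cZ_i)$ collide with \emph{any} element of $\cup_{j \geq i}\supp(\cZ_j)$.

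Then I would take a union bound over the $i \in \{1,\ldots,t\}$ exceptional seed sets (plus the $i$-merged ones), giving a total exceptional fraction of at most roughly $t(t+1)\sqrt{\eps}$ seeds; for every ``good'' seed $u$ outside this set, the per-component collision probability is at most $\sqrt{\eps}$ (or $2\sqrt{\eps}$ after accounting for the $(\sqrt\eps/2)$-closeness slack exactly as in the proof of Lemma~\ref{lem:bscCodes}), so the total error probability is $\cE(\C_u, \cZ) \leq \sum_i \alpha_i \cdot 2\sqrt{\eps} = 2\sqrt{\eps}$, and being slightly more careful with the cross-component bookkeeping gives the claimed $2t\sqrt{\eps}$. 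The main obstacle I anticipate is getting the tie-breaking argument airtight: I need to argue precisely that an error occurs only when the \emph{actual} noise vector collides with another noise vector of equal-or-larger index, and that the decoder's choice rule then correctly handles all collisions with smaller-index (larger-support) components — this is a careful but elementary case analysis on which component the true noise came from. A secondary nuisance is that $\cZ$ is not exactly flat and the merged objects are only $2$-close to flat, so I must thread the $\eps$-slack through Corollary~\ref{coro:strongExt} and Proposition~\ref{prop:flatmap} without accumulating constants larger than the stated $2t\sqrt\eps$; this is why the hypothesis asks for a $(\leq m+1,\eps)$-condenser rather than an $(m,\eps)$-condenser.
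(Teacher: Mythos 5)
Your overall strategy (two-stage sampling of the noise, the definition of ``confusable'' as a collision with a vector in a component of index $\geq i$, the tie-breaking case analysis, and the use of Corollary~\ref{coro:strongExt} together with Proposition~\ref{prop:flatmap}) matches the paper's proof. However, there is one genuine gap in how you handle cross-component collisions. You propose to invoke the condenser on the merged source supported on $\cup_{j\geq i}\supp(\cZ_j)$, and you justify this by asserting that its entropy is $m+\log t\leq m+1$ ``using that $t$ is small.'' But the lemma places no bound on $t$ --- indeed the conclusion's dependence on $t$ (error $2t\sqrt{\eps}$, seed fraction $1-t(t+1)\sqrt{\eps}$) exists precisely because $t$ may be large, and in the intended application to $\bsc(p)$ one has $t=\Theta(n)$. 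For $t>2$ your merged source can have min-entropy up to $m+\log t > m+1$, which falls outside the guarantee of the assumed $(\leq m+1,\eps)$-condenser, so that step of your argument has no support from the hypothesis. (A symptom of this is that your seed-set accounting gives roughly $2t$ exceptional sets rather than the $t(t+1)/2$ needed to produce the stated $t(t+1)\sqrt{\eps}$.)

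The paper's fix is to never merge more than two components at a time: for every pair $1\leq i\leq j\leq t$ define the flat distribution $\cZ_{ij}$ uniform on $\supp(\cZ_i)\cup\supp(\cZ_j)$, which has min-entropy at most $m+1$ and is therefore covered by the condenser hypothesis. A union bound over these $t(t+1)/2$ sources (each costing a $2\sqrt{\eps}$ fraction of seeds via Corollary~\ref{coro:strongExt}) gives the $1-t(t+1)\sqrt{\eps}$ fraction of good seeds. Then, for a good seed and a fixed $i$, Proposition~\ref{prop:flatmap} applied to $f(\cZ_{ij},u)$ bounds the elements of $\supp(\cZ_i)$ that collide with $\supp(\cZ_j)$ by $\sqrt{\eps}\,|\supp(\cZ_{ij})|\leq 2\sqrt{\eps}\,|\supp(\cZ_i)|$ (using $|\supp(\cZ_j)|\leq|\supp(\cZ_i)|$ for $j\geq i$), and summing over the at most $t$ indices $j\geq i$ yields $\cE_i\leq 2t\sqrt{\eps}$. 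If you replace your merged-source step with this pairwise decomposition, the rest of your argument goes through as written.
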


\begin{proof}
  For each $1 \leq i \leq j \leq t$, we define a flat distribution
  $\cZ_{ij}$ that is uniformly supported on $\supp(\cZ_i) \cup
  \supp(\cZ_j)$. Observe that each $\cZ_{ij}$ has min-entropy at most
  $m+1$ and thus the function $f$ is a lossless condenser with error
  at most $\eps$ for this source. By Corollary~\ref{coro:strongExt}
  and a union bound, for a $1-t(t+1)\sqrt{\eps}$ fraction of the
  choices of $u \in \zo^d$, all $t(t+1)/2$ distributions
  \[
  f(\cZ_{ij}, u)\colon 1 \leq i \leq j \leq t
  \]
  are simultaneously $(\sqrt{\eps}/2)$-close to having min-entropy at
  least $m$. Fix any such $u$.

  Consider a random variable $Z$, representing the channel noise, that
  is sampled from $\cZ$ as follows: First choose an index $I \in [t]$
  randomly according to the distribution induced by $(\alpha_1,
  \ldots, \alpha_t)$ over the indices, and then sample a random noise
  $Z \sim \cZ_I$.  Using the same line of reasoning leading to
  $\eqref{eqn:errorProbCollision}$ in the proof of
  Lemma~\ref{lem:bscCodes}, the error probability with respect to the
  code $\C_u$ (i.e., the probability that the tuned distance decoder
  gives a wrong estimate on the noise realization $Z$) can now be
  bounded as
  \[
  \cE(\C_u, \cZ) \leq \Pr_{I, Z}[ \exists i \in \{ I, \ldots, t\},
  \exists z' \in \supp(\cZ_i)\setminus Z\colon f(Z, u) = f(z', u) ].
  \]
  For $i=1, \ldots, t$, denote by $\cE_i$ the right hand side
  probability in the above bound conditioned on the event that
  $I=i$. Fix any choice of the index $i$. Now it suffices to obtain an
  upper bound on $\cE_i$ irrespective of the choice of $i$, since
  \[
  \cE(\C_u, \cZ) \leq \sum_{i\in [t]} \alpha_i \cE_i.
  \]
  We call a noise realization $z \in \supp(\cZ_i)$ \emph{confusable}
  if
  \[
  \exists j \geq i, \exists z' \in \supp(\cZ_j) \setminus z\colon f(z,
  u) = f(z', u).
  \]
  That is, a noise realization is confusable if it can potentially
  cause the brute-force decoder to compute a wrong noise estimate. Our
  goal is to obtain an upper bound on the fraction of vectors on
  $\supp(\cZ_i)$ that are confusable.

  For each $j \geq i$, we know that $f(\cZ_{ij}, u)$ is
  $(\sqrt{\eps}/2)$-close to having min-entropy at least $m$.
  Therefore, by the first part of Proposition~\ref{prop:flatmap}, the
  set of confusable elements
  \[
  \{ z \in \supp(\cZ_i)\colon \exists z' \in \supp(\cZ_j)\setminus z
  \text{ such that } f(z, u)=f(z', u) \}
  \]
  has size at most $\sqrt{\eps} |\supp(\cZ_{ij})| \leq 2\sqrt{\eps}
  |\supp(\cZ_i)|$ (using the fact that, since $j \geq i$, the support
  of $\cZ_j$ is no larger than that of $\cZ_i$).  By a union bound on
  the choices of $j$, we see that the fraction of confusable elements
  on $\supp(\cZ_i)$ is at most $2t\sqrt{\eps}$. Therefore, $\cE_i \leq
  2t\sqrt{\eps}$ and we get the desired upper bound on the error
  probability of the brute-force decoder.
\end{proof}

The result obtained by Lemma~\ref{lem:bscConvex} can be applied to the
channel $\bsc(p)$ by observing that the noise distribution $\cB_{n,p}$
can be written as a convex combination
\[
\cB_{n,p} = \sum_{i=n(p-\eta)}^{n(p+\eta)} \alpha_i \U_{n,i} + \gamma
\cD,
\]
where $\U_{n,i}$ denotes the flat distribution supported on binary
vectors of length $n$ and Hamming weight exactly $i$, and $\cD$ is the
distribution $\cB_{n,p}$ conditioned on the vectors whose Hamming
weights lie outside the range $[n(p-\eta), n(p+\eta)]$.  The parameter
$\eta > 0$ can be chosen as an arbitrarily small real number, so that
the min-entropies of the distributions $\U_{n,i}$ become arbitrarily
close to the Shannon entropy of $\cB_{n,p}$; namely, $n h(p)$. This
can be seen by the estimate
\[
\binom{n}{w} = 2^{n h(w/n) \pm o(n)},
\]
$h(\cdot)$ being the binary entropy function, that is easily derived
from Stirling's formula.  By Chernoff bounds, the error $\gamma$ can
be upper bounded as
\[
\gamma = \Pr_{Z \sim \cB_{n,p}}[|\wgt(Z) - np| > \eta n] \leq 2
e^{-c_\eta np} = 2^{-\Omega(n)},
\]
where $c_\eta > 0$ is a constant only depending on $\eta$, and is thus
exponentially small.  Thus the error probability attained by any code
under noise distributions $\cB_{n,p}$ and $\cZ :=
\sum_{i=n(p-\eta)}^{n(p+\eta)} \alpha_i \U_{n,i}$ differ by the
exponentially small quantity $\gamma$. We may now apply
Lemma~\ref{lem:bscConvex} on the noise distribution $\cZ$ to attain
code ensembles for the binary symmetric channel $\bsc(p)$.  The error
probability of the ensemble is at most $2n\sqrt{\eps}$, and this bound
is satisfied by at least a $1-n^2 \sqrt{\eps}$ fraction of the codes.
Finally, the code ensemble is capacity achieving for $\bsc(p)$
provided that the condenser $f$ attains an output length $r \leq
(1+\alpha)(p+\eta)n$ for arbitrarily small constant $\alpha$, and
$\eps = o(n^{-4})$.

\section{Explicit Capacity Achieving Codes} \label{sec:explicit}

In the preceding sections, we showed how to obtain small ensembles of
explicit capacity achieving codes for various discrete channels,
including the important special cases $\bec(p)$ and $\bsc(p)$.  Two
drawbacks related to these constructions are:
\begin{enumerate}
\item While an overwhelming fraction of the codes in the ensemble are
  capacity achieving, in general it is not clear how to pin down a
  single, capacity achieving code in the ensemble.

\item For the symmetric additive noise channels, the brute-force
  decoder is extremely inefficient and is of interest only for proving
  that the constructed ensembles are capacity achieving.
\end{enumerate}

In a classic work, Justesen \cite{ref:Justesen} showed that the idea
of code concatenation\footnote{ A quick review of code concatenation
  and its basic properties appears in Appendix~\ref{app:coding}.}
first introduced by Forney \cite{ref:forney} can be used to transform
any ensemble of capacity achieving codes, for a memoryless channel,
into an explicit, efficiently decodable code with improved error
probability over the same channel. In this section we revisit this
idea and apply it to our ensembles. For concreteness, we focus on the
binary case and consider a memoryless channel $\mathscr{C}$ that is
either $\bec(p)$ or $\bsc(p)$.

Throughout this section, we consider an ensemble $\cS$ of linear codes
with block length $n$ and rate $R$, for which it is guaranteed that
all but a $\gamma = o(1)$ fraction of the codes are capacity achieving
(for a particular DMSC, in our case either $\bec(p)$ or $\bsc(p)$)
with some vanishing error probability $\eta = o(1)$ (the asymptotics
are considered with respect to the block length $n$).

\newcommand{\cout}{\C_{\mathrm{out}}} Justesen's concatenated codes
take an outer code $\cout$ of block length $s := |\cS|$, alphabet
$\F_{2^k}$, rate $R'$ as the \emph{outer code}. The particular choice
of the outer code in the original construction is Reed-Solomon codes.
However, we point out that any outer code that allows unique decoding
of some constant fraction of errors at rates arbitrarily close to one
would suffice for the purpose of constructing capacity achieving
codes. In particular, in this section we will use an expander-based
construction of asymptotically good codes due to Spielman
\cite{ref:Spielman}, from which the following theorem can be easily
derived\footnote{There are alternative choices of the outer code that
  lead to a similar result, e.g., expander-based codes due to
  Guruswami and Indyk \cite{ref:GI05}.}:

\begin{thm} \label{thm:expander} For every integer $k > 0$ and every
  absolute constant $R' < 1$, there is an explicit family of
  $\F_2$-linear codes over $\F_{2^k}$ for every block length and rate
  $R'$ that is error-correcting for an $\Omega(1)$ fraction of
  errors. The running time of the encoder and the decoder is linear in
  the bit-length of the codewords.
\end{thm}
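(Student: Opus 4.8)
The plan is to derive Theorem~\ref{thm:expander} from Spielman's construction of linear-time encodable and decodable asymptotically good codes \cite{ref:Spielman}. Spielman's codes are binary linear codes of some fixed rate $R_0 \in (0,1)$ with constant relative minimum distance $\delta_0 > 0$, whose encoding and (unique) decoding up to some $\Omega(1)$ fraction of errors run in linear time in the block length. The two features of our target statement that are not literally present in Spielman's theorem are (i) that the rate $R'$ can be pushed arbitrarily close to $1$, and (ii) that the code should be an $\F_2$-linear code over the larger alphabet $\F_{2^k}$ for an arbitrary $k$ while still being encodable and decodable in time linear in the \emph{bit-length} of the codewords. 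I would obtain both by simple, standard transformations on top of Spielman's codes.

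For the rate amplification, I would first recall the textbook fact that from a single family of linear-time codes with positive rate and positive error-correction radius one can, by a constant number of levels of code concatenation (or equivalently by the ``rate booster'' transformation of taking direct sums / tensoring with repetition-type building blocks), obtain for every $R' < 1$ a linear-time encodable and decodable binary linear code family of rate $R'$ correcting an $\Omega_{R'}(1)$ fraction of errors. Concretely: split the message into blocks, protect only a vanishing fraction of ``check'' symbols with Spielman's code while leaving the bulk of the information untouched; decoding recovers the checks and then corrects the rest by a short combinatorial argument. Since $R'$ is an absolute constant, the number of recursion levels and hence the constants in the running time are fixed. This step is where I expect to spend the most care, since the precise statement of the rate-booster lemma and the bookkeeping of which constant-fraction error radius survives each level is the only genuinely technical part; everything else is routine.

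For the alphabet extension to $\F_{2^k}$, I would use the observation that a binary linear code $\C \subseteq \F_2^N$ can be turned into an $\F_2$-linear code over $\F_{2^k}$ of the same rate by applying $\C$ coordinate-wise to the $k$ ``bit-planes'' of an $\F_{2^k}^{N}$ message: viewing a vector in $\F_{2^k}^N$ as a $k \times N$ binary matrix, encode each of the $k$ rows with $\C$. The result is $\F_2$-linear (indeed $\F_{2^k}$-linear if $\C$ is), has block length $N$ over $\F_{2^k}$ and the same rate $R'$, corrects the same $\Omega(1)$ \emph{fraction} of symbol errors (a symbol error in $\F_{2^k}$ is at most $k$ bit errors spread over the bit-planes, but more usefully, if one runs the $k$ binary decoders independently, each corrects its own errors provided the overall symbol-error fraction is below the binary radius, since a corrupted symbol contributes one potential error to each bit-plane), and its encoder/decoder consist of $k$ parallel invocations of the linear-time binary encoder/decoder, hence run in time $O(kN)$, i.e.\ linear in the bit-length $kN$ of the codewords. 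Finally I would note that the construction is explicit because Spielman's underlying expander-based construction is explicit and all the transformations above are explicit, completing the proof. I would also remark, as the footnote already does, that the Guruswami--Indyk expander codes \cite{ref:GI05} give an equally valid starting point and make the rate-near-$1$ property available directly, shortcutting the rate-amplification step entirely.
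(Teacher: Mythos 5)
The paper offers no proof of this theorem at all --- it merely asserts that the statement ``can be easily derived'' from Spielman's expander-based codes \cite{ref:Spielman} (with the Guruswami--Indyk codes \cite{ref:GI05} noted as an alternative in a footnote) --- and your derivation is precisely the standard way of filling in that assertion: the rate-boosting step is in essence Spielman's own error-reduction cascade (so the ``check'' symbols must carry expander-based parities of the message, not just appended redundancy, for the ``short combinatorial argument'' to recover the bulk), and the bit-plane interleaving correctly yields an $\F_{2^k}$-linear (hence $\F_2$-linear) code with the same symbol-error radius and encoding/decoding time linear in the bit-length. The one clause you do not address is ``for every block length,'' which needs a routine shortening/padding argument on top of the expander-code family, whose lengths come from a discrete sequence; this is standard but worth a sentence in a complete write-up.
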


\subsection{Justesen's Concatenation Scheme} \index{Justesen's
  concatenation}

The concatenation scheme of Justesen differs from traditional
concatenation in that the outer code is concatenated with an ensemble
of codes rather than a single inner code.

In this construction, size of the ensemble is taken to be matching
with the block length of the outer code, and each symbol of the outer
code is encoded with one of the inner codes in the ensemble. We use
the notation $\C := \cout \diamond \cS$ to denote concatenation of an
outer code $\cout$ with the ensemble $\cS$ of inner codes.  Suppose
that the alphabet size of the outer code is taken as $2^{\lfloor Rn
  \rfloor}$, where we recall that $n$ and $R$ denote the block length
and rate of the inner codes in $\cS$.

The encoding of a message with the concatenated code can be obtained
as follows: First, the message is encoded using $\cout$ to obtain an
encoding $(c_1, \ldots, c_{s}) \in \F_{2^k}^s$, where $k = \lfloor Rn
\rfloor$ denotes the dimension of the inner codes. Then, for each $i
\in [s]$, the $i$th symbol of the encoding $c_i$ is further encoded by
the $i$th code in the ensemble $\cS$ (under some arbitrary ordering of
the codes in the ensemble), resulting in a binary sequence $c'_i$ of
length $n$. The $ns$-bit long binary sequence $(c'_1, \ldots, c'_s)$
defines the encoding of the message under $\cout \diamond \cS$.  The
concatenation is scheme is depicted in Figure~\ref{fig:justesen}.

\begin{figure}
  \centerline{\includegraphics[width=\textwidth]{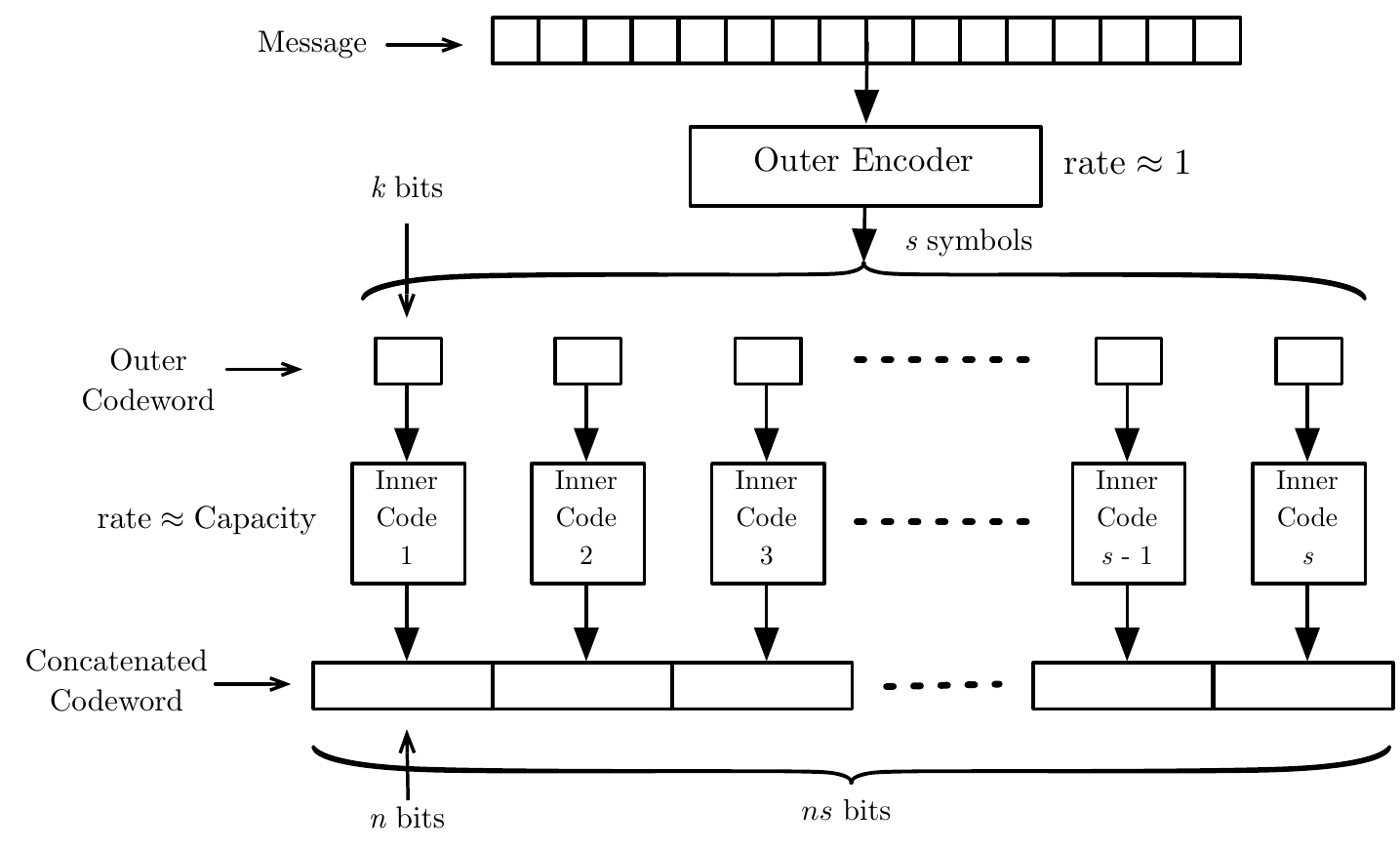}}
  \caption[Justesen's concatenation scheme]{Justesen's concatenation
    scheme.}
  \label{fig:justesen}
\end{figure}

Similar to classical concatenated codes, the resulting binary code
$\C$ has block length $N := n s$ and dimension $K := kk'$, where $k'$
is the dimension of the outer code $\cout$.  However, the neat idea in
Justesen's concatenation is that it eliminates the need for a
brute-force search for finding a good inner code, as long as almost
all inner codes are guaranteed to be good.

\subsection{The Analysis}

In order to analyze the error probability attained by the concatenated
code $\cout \diamond \cS$, we consider the following naive
decoder\footnote{Alternatively, one could use methods such as Forney's
  Generalized Minimum Distance (GMD) decoder for Reed-Solomon codes
  \cite{ref:forney}.  However, the naive decoder suffices for our
  purposes and works for any asymptotically good choice of the outer
  code.}:

\begin{enumerate}
\item Given a received sequence $(y_1, \ldots, y_s) \in (\F_2^n)^{s}$,
  apply an appropriate decoder for the inner codes (e.g., the
  brute-force decoder for BSC, or Gaussian elimination for BEC) to
  decode each $y_i$ to a codeword $c'_i$ of the $i$th code in the
  ensemble.

\item Apply the outer code decoder on $(c'_1, \ldots, c'_{s})$ that is
  guaranteed to correct some constant fraction of errors, to obtain a
  codeword $(c_1, \ldots, c_{s})$ of the outer code $\cout$.

\item Recover the decoded sequence from the corrected encoding $(c_1,
  \ldots, c_{s})$.
\end{enumerate}

Since the channel is assumed to be memoryless, the noise distributions
on inner codes are independent. Let $\cG \subseteq [s]$ denote the set
of coordinate positions corresponding to ``good'' inner codes in $\cS$
that achieve an error probability bounded by $\eta$.  By assumption,
we have $\cG \geq (1-\gamma)|\cS|$.

Suppose that the outer code $\cout$ corrects some $\gamma + \alpha$
fraction of adversarial errors, for a constant $\alpha > \eta$.  Then
an error might occur only if more than $\alpha N$ of the codes in
$\cG$ fail to obtain a correct decoding. We expect the number of
failures within the good inner codes to be $\eta |\cG|$. Due to the
noise independence, it is possible to show that the fraction of
failures may deviate from the expectation $\eta$ only with a
negligible probability. In particular, a direct application of
Chernoff bound implies that the probability that more than an $\alpha$
fraction of the good inner codes err is at most
\begin{equation} \label{eqn:justesenError} \eta^{\alpha' |\cG|} =
  2^{-\Omega_\alpha(\log(1/\eta) s)},
\end{equation}
where $\alpha' > 0$ is a constant that only depends on $\alpha$.  This
also upper bounds the error probability of the concatenated code.  In
particular, we see that if the error probability $\eta$ of the inner
codes is exponentially small in their block length $n$, the
concatenated code also achieves an exponentially small error in its
block length $N$.

Now we analyze the encoding and decoding complexity of the
concatenated code, assuming that Spielman's expander codes
(Theorem~\ref{thm:expander}) are used for the outer code. With this
choice, the outer code becomes equipped with a linear-time encoder and
decoder.  Since any linear code can be encoded in quadratic time (in
its block length), the concatenated code can be encoded in $O(n^2 s)$,
which for $s \gg n$ can be considered ``almost linear'' in the block
length $N = ns$ of $\C$. The decoding time of each inner code is cubic
in $n$ for the erasure channel, since decoding reduces to Gaussian
elimination, and thus for this case the naive decoder runs in time
$O(n^3 s)$.  For the symmetric channel, however, the brute-force
decoder used for the inner codes takes exponential time in the block
length, namely, $2^{Rn} \poly(n)$.  Therefore, the running time of the
decoder for concatenated code becomes bounded by $O(2^{Rn} s
\poly(n))$. When the inner ensemble is exponentially large; i.e., $s =
2^n$ (which is the case for our ensembles if we use the Leftover Hash
Lemma), the decoding complexity becomes $O(s^{1+R} \poly(\log s))$
which is at most quadratic in the block length of $\C$.

Since the rate $R'$ of the outer code can be made arbitrarily close to
$1$, rate of the concatenated code $\C$ can be made arbitrarily close
to the rate $R$ of the inner codes. Thus, if the ensemble of inner
codes is capacity-achieving, so would be the concatenated code.

\subsection{Density of the Explicit Family}

In the preceding section we saw how to obtain explicit capacity
achieving codes from capacity achieving code ensembles using
concatenation.  One of the important properties of the resulting
family of codes that is influenced by the size of the inner code
ensemble is the set of block lengths $N$ for which the concatenated
code is defined.  Recall that $N = ns$, where $n$ and $s$ respectively
denote the block length of the inner codes and the size of the code
ensemble, and the parameter $s$ is a function of $n$. For instance,
for all classical examples of capacity achieving code ensembles
(namely, Wozencraft's ensemble, Goppa codes and shortened cyclic
codes) we have $s(n) = 2^n$. In this case, the resulting explicit
family of codes would be defined for integer lengths of the form $N(i)
= i 2^i$.

A trivial approach for obtaining capacity achieving codes for all
lengths is to use a \emph{padding trick}. Suppose that we wish to
transmit a particular bit sequence of length $K$ through the channel
using the concatenated code family of rate $\rho$ that is taken to be
sufficiently close to the channel capacity. The sequence might
originate from a source that does not produce a constant stream of
bits (e.g., consider a terminal emulator that produces data only when
a user input is available).

Ideally, one requires the length of the encoded sequence to be $N =
\lceil K/\rho \rceil$. However, since the family might not be defined
for the block length $N$, we might be forced to take a code $\C$ in
the family with smallest length $N' \geq N$ that is of the form $N' =
n s(n)$, for some integer $n$, and pad the original message with
redundant symbols.  This way we have encoded a sequence of length $K$
to one of length $N'$, implying an effective rate $K/N'$.  The rate
loss incurred by padding is thus equal to $\rho - K/N' = K(1/N -
1/N')$.  Thus, if $N' \geq N(1+\delta)$ for some positive constant
$\delta > 0$, the rate loss becomes lower bounded by a constant and
thus, even if the original concatenated family is capacity achieving,
it no longer remains capacity achieving when extended to arbitrarily
chosen lengths using the padding trick.

Therefore, if we require the explicit family obtained from
concatenation to remain capacity achieving for all lengths, the set of
block lengths $\{i s(i)\}_{i \in \N}$ for which it is defined must be
sufficiently dense. This is the case provided that we have
\[
\frac{s(n)}{s(n+1)} = 1-o(1),
\]
which in turn, requires the capacity achieving code ensemble to have a
sub-exponential size (by which we mean $s(n) = 2^{o(n)}$).

Using the framework introduced in this chapter, linear extractors and
lossless condensers that achieve nearly optimal parameters would
result in code ensembles of polynomial size in $n$.  The explicit
erasure code ensemble obtained from Trevisan's extractor
(Corollary~\ref{coro:bec}) or Guruswami-Umans-Vadhan's lossless
condenser (Corollary~\ref{coro:becGUV}) combined with Justesen's
concatenation scheme results in an explicit sequence of capacity
achieving codes for the binary erasure channel that is defined for
every block length, and allows almost linear-time (i.e., $N^{1+o(1)}$)
encoding and decoding. Moreover, the latter sequence of codes that is
obtained from a lossless condenser is capacity achieving for the
binary symmetric channel (with a matching bit-flip probability) as
well.

\section{Duality of Linear Affine
  Condensers} \label{sec:dualityAffine}

In Section~\ref{sec:bec} we saw that linear extractors for bit-fixing
sources can be used to define generator matrices of a family of
erasure-decodable codes. On the other hand, we showed that linear
lossless condensers for bit-fixing sources define parity check
matrices of erasure-decodable codes.

Recall that generator and parity check matrices are dual notions, and
in our construction we have considered matrices in one-to-one
correspondence with linear mappings.  Indeed, we have used linear
mappings defined by extractors and lossless condensers to obtain
generator and parity check matrices of our codes (where the $i$th row
of the matrix defines the coefficient vector of the linear form
corresponding to the $i$th output of the mapping).  Thus, we get a
natural duality between linear functions: If two linear functions
represent generator and parity check matrices of the same code, they
can be considered dual\footnote{Note that, under this notion of
  duality, the dual of a linear function need not be unique even
  though its linear-algebraic properties (e.g., kernel) would be
  independent of its choice.}.  Just in the same way that the number
of rows of a generator matrix and the corresponding parity check
matrix add up to their number of columns (provided that there is no
linear dependence between the rows), the dual of a linear function
mapping $\F_q^n$ to $\F_q^m$ (where $m \leq n$) that has no linear
dependencies among its $n-m$ outputs can be taken to be a linear
function mapping $\F_q^n$ to $\F_q^{n-m}$.

In fact, a duality between linear extractors and lossless condensers
for affine sources is implicit in the analysis leading to
Corollary~\ref{coro:becCodes}. Namely, it turns out that if a linear
function is an extractor for an affine source, the dual function
becomes a \emph{lossless condenser} for the \emph{dual distribution},
and vice versa.  This is made precise (and slightly more general) in
the following theorem.

\begin{thm} \label{thm:conDuality} Suppose that the linear mapping
  defined by a matrix $G \in \F_q^{m\times n}$ of rank $m \leq n$ is a
  $(k \log q) \to_0 (k' \log q)$ condenser for a $k$-dimensional
  affine source $\cX$ over $\F_q^n$ so that for $X \sim \cX$, the
  distribution of $G \cdot X^\top$ has entropy at least $k' \log q$.
  Let $H \in \F_q^{(n-m) \times n}$ be a dual matrix for $G$ (i.e., $G
  H^\top = 0$) of rank $n-m$ and $\cY$ be an $(n-k)$-dimensional
  affine space over $\F_q^n$ supported on a translation of the dual
  subspace corresponding to the support of $\cX$. Then for $Y \sim
  \cY$, the distribution of $H \cdot Y^\top$ has entropy at least
  $(n-k+k'-m) \log q$.
\end{thm}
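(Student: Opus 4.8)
The plan is to translate the statement into a purely linear-algebraic question about dimensions of images of subspaces, and then argue via the standard correspondence between a subspace and its orthogonal complement. First I would recall the setup precisely: $\cX$ is uniform on an affine subspace $V + a$ where $V \subseteq \F_q^n$ has dimension $k$, and $\cY$ is uniform on $V^\perp + b$, which has dimension $n - k$. Since $G$ is linear and $\cX$ is flat on an affine subspace, $G \cdot X^\top$ is flat on the affine subspace $G(V) + Ga$; its $q$-ary entropy is exactly $\dim G(V)$. The hypothesis that this entropy is at least $k'$ therefore says $\dim G(V) \geq k'$, equivalently (since $\dim V = k$) that $\dim(V \cap \ker G) \leq k - k'$. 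Similarly, $H \cdot Y^\top$ is flat on $H(V^\perp) + Hb$, so its entropy equals $\dim H(V^\perp)$, and the goal reduces to showing
\[
\dim H(V^\perp) \geq (n - k) - (n - m) + \dim H(V^\perp) \text{'s complement...}
\]
more precisely, to showing $\dim H(V^\perp) \geq n - k + k' - m$, i.e. $\dim(V^\perp \cap \ker H) \leq (n-k) - (n - k + k' - m) = m - k'$.

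So the whole statement collapses to the implication: $\dim(V \cap \ker G) \leq k - k'$ implies $\dim(V^\perp \cap \ker H) \leq m - k'$. Here $\ker G$ has dimension $n - m$ (since $G$ has rank $m$) and $\ker H$ has dimension $m$ (since $H$ has rank $n-m$). The key structural fact I would use is that $\ker H = \mathrm{rowspace}(G)^\perp{}^\perp = \ldots$ — more usefully, from $G H^\top = 0$ and the rank conditions one gets $\ker G = \mathrm{rowspace}(H)$ and $\ker H = \mathrm{rowspace}(G)$. Hence $V^\perp \cap \ker H = V^\perp \cap \mathrm{rowspace}(G)$. Now I would apply the orthogonality/duality identity for intersections and sums: for subspaces $A, B$ of $\F_q^n$, $(A \cap B)^\perp = A^\perp + B^\perp$, and $\dim(A \cap B) = \dim A + \dim B - \dim(A + B)$. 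Applying this with $A = V^\perp$, $B = \mathrm{rowspace}(G) = (\ker G)^\perp$: $V^\perp \cap (\ker G)^\perp = (V + \ker G)^\perp$, so $\dim(V^\perp \cap \ker H) = n - \dim(V + \ker G) = n - (\dim V + \dim \ker G - \dim(V \cap \ker G)) = n - k - (n - m) + \dim(V \cap \ker G) = (m - k) + \dim(V \cap \ker G)$. Combining with $\dim(V \cap \ker G) \leq k - k'$ gives $\dim(V^\perp \cap \ker H) \leq (m - k) + (k - k') = m - k'$, exactly as needed.

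The remaining steps are bookkeeping: confirm that $\ker H = \mathrm{rowspace}(G)$ under the stated rank hypotheses (this uses $GH^\top = 0$, which gives $\mathrm{rowspace}(G) \subseteq \ker H$, plus a dimension count $\dim \mathrm{rowspace}(G) = m = \dim \ker H$ to force equality), and then unwind the entropy statements back from dimensions. I would also note that the ``$\to_0$'' (zero error) hypothesis is genuinely used: it lets me pass freely between ``entropy at least $k' \log q$'' and ``$\dim G(V) \geq k'$'' with no slack, and by Proposition~\ref{prop:zeroError} a linear map on a flat affine source either matches a target entropy exactly or is far from it, so the zero-error assumption is the natural regime. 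The main obstacle, such as it is, will be making sure the definition of ``dual matrix'' and ``dual affine space'' in the statement line up with the orthogonal-complement picture — in particular that the affine translates don't matter (they don't, since entropy of a flat distribution on an affine subspace depends only on the linear part) and that the non-degeneracy (full-rank) hypotheses on $G$ and $H$ are exactly what is needed to identify kernels with row spaces. Once that dictionary is fixed, the argument is a three-line dimension count.
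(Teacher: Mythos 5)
Your proposal is correct and follows essentially the same route as the paper's proof: both reduce the claim to the identity $\ker H = \mathrm{rowspace}(G)$ (forced by $GH^\top=0$ together with the rank hypotheses) and then a dimension count relating $\dim(V^\perp\cap\ker H)$ to $\dim(V\cap\ker G)$. The only cosmetic difference is that you obtain the count directly via $V^\perp\cap(\ker G)^\perp=(V+\ker G)^\perp$, whereas the paper argues by contradiction, exhibiting an explicit correspondence between the right kernel of $HA_H^\top$ and the left kernel of $GA_G^\top$; your version even yields the exact equality $\dim(V^\perp\cap\ker H)=(m-k)+\dim(V\cap\ker G)$.
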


\begin{proof}
  \newcommand{\rker}{\mathsf{rker}} \newcommand{\lker}{\mathsf{lker}}
  Suppose that $\cX$ is supported on a set
  \[
  \{ x \cdot A_G + a\colon x \in \F_q^k \},
  \]
  where $A_G \in \F_q^{k \times n}$ has rank $k$ and $a \in \F_q^n$ is
  a fixed row vector.  Moreover we denote the dual distribution $\cY$
  by the set
  \[
  \{ y \cdot A_H + b\colon y \in \F_q^{n-k} \},
  \]
  where $b \in \F_q^n$ is fixed and $A_H \in \F_q^{(n-k) \times n}$ is
  of rank $n-k$, and we have the orthogonality relationship $A_H \cdot
  A_G^\top = 0$.

  The assumption that $G$ is a $(k \log q) \to_0 (k' \log
  q)$-condenser implies that the distribution
  \[
  G \cdot (A_G^\top \cdot \U_{\F_q^k} + a^\top),
  \]
  where $\U_{\F_q^k}$ stands for a uniformly random row vector in
  $\F_q^k$, is an affine source of dimension at least $k'$, equivalent
  to saying that the matrix $G \cdot A_G^\top \in \F_q^{m \times k}$
  has rank at least $k'$ (since rank is equal to the dimension of the
  image), or in symbols,
  \begin{equation} \label{eqn:rankGAg} \rk(G \cdot A_G^\top) \geq k'.
  \end{equation}
  Observe that since we have assumed $\rk(G) = m$, its right kernel is
  $(n-m)$-dimensional, and thus the linear mapping defined by $G$
  cannot reduce more than $n-m$ dimensions of the affine source
  $\cX$. Thus, the quantity $n-k+k'-m$ is non-negative.

  By a similar argument as above, in order to show the claim we need
  to show that
  \[
  \rk(H \cdot A_H^\top) \geq n-k+k'-m.
  \]
  Suppose not. Then the right kernel of $H\cdot A_H^\top \in
  \F_q^{(n-m)\times(n-k)}$ must have dimension larger than
  $(n-k)-(n-k+k'-m) = m-k'$. Denote this right kernel by $\mathcal{R}
  \subseteq \F_q^{n-k}$.  Since the matrix $A_H$ is assumed to have
  maximal rank $n-k$, and $n-k \geq m-k'$, for each nonzero $y \in
  \mathcal{R}$, the vector $y \cdot A_H \in \F_q^n$ is nonzero and
  since $H \cdot (A_H^\top y^\top) = 0$ (by the definition of right
  kernel), the duality of $G$ and $H$ implies that there is a nonzero
  $x \in \F_q^m$ where
  \[
  x \cdot G = y \cdot A_H,
  \]
  and the choice of $y$ uniquely specifies $x$. In other words, there
  is a subspace $\mathcal{R'} \subseteq \F_q^m$ such that \[
  \dim(\mathcal{R'}) = \dim(\mathcal{R}), \] and
  \[
  \{ x \cdot G\colon x \in \mathcal{R'} \} = \{ y \cdot A_H\colon y
  \in \mathcal{R} \}.
  \]
  But observe that, by orthogonality of $A_G$ and $A_H$, every $y$
  satisfies $y \cdot A_H A_G^\top = 0$, meaning that for every $x \in
  \mathcal{R'}$, we must have $x \cdot G A_G^\top = 0$.  Thus the left
  kernel of $G A_G^\top$ has dimension larger than $m-k'$ (since
  $\mathcal{R'}$ does), and we conclude that the matrix $G A_G^\top$
  has rank less than $k'$, a contradiction for \eqref{eqn:rankGAg}.
\end{proof}

Since every $k$-dimensional affine space over $\F_q^n$ has an
$(n-k)$-dimensional dual vector space, the above result combined with
Proposition~\ref{prop:zeroError} directly implies the following
corollary:

\begin{coro} \label{coro:conDuality} Suppose that the linear mapping
  defined by a matrix $G \in \F_q^{m\times n}$ of rank $m \leq n$ is a
  $(k \log q) \to_\eps (k' \log q)$ condenser, for some $\eps <
  1/2$. Let $H \in \F_q^{(n-m) \times n}$ of rank $n-m$ be so that
  $GH^\top = 0$.  Then, the linear mapping defined by $H$ is an $(n-k)
  \log q \to_0 (n-k+k'-m) \log q$ condenser. \qed
\end{coro}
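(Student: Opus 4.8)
The plan is to deduce Corollary~\ref{coro:conDuality} from Theorem~\ref{thm:conDuality} together with Proposition~\ref{prop:zeroError}. The one mismatch to resolve is that Theorem~\ref{thm:conDuality} is phrased for a \emph{zero-error} condenser acting on \emph{one} prescribed affine source, while the hypothesis here provides an $\eps$-error condenser (with $\eps<1/2$) that works for \emph{every} source of min-entropy at least $k\log q$. So I would first upgrade the hypothesis to a zero-error statement about every $k$-dimensional affine source, and then feed the right such source into the duality theorem.

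\emph{Step 1: from error $\eps$ to error $0$.} Let $\cX$ be any $k$-dimensional affine source over $\F_q^n$. It is flat with min-entropy exactly $k\log q$, so the condenser hypothesis applies and $G(\cX)$ is $\eps$-close to a distribution of min-entropy at least $k'\log q$. By linearity $G(\cX)$ is itself uniform on an affine subspace $A\subseteq\F_q^m$ (each fiber of $G$ over a point of $A$ meets $\supp(\cX)$ in a translate of a fixed subspace, hence in equally many points). If $\dim A<k'$, then for any distribution $\mu$ of min-entropy $\ge k'\log q$ one has $\dist(G(\cX),\mu)\ge\sum_{x\in A}\big(q^{-\dim A}-q^{-k'}\big)=1-q^{\dim A-k'}\ge 1-1/q\ge 1/2$, which is the estimate at the heart of the proof of Proposition~\ref{prop:zeroError} and contradicts $\eps<1/2$. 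Hence $\dim A\ge k'$, i.e.\ $G$ is a $(k\log q)\to_0(k'\log q)$ condenser for $\cX$. Since $\cX$ was arbitrary, this holds for \emph{every} $k$-dimensional affine source.

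\emph{Step 2: apply the duality.} Let $\cY$ be an arbitrary $(n-k)$-dimensional affine source over $\F_q^n$, supported on $V+b$ with $V$ an $(n-k)$-dimensional linear subspace. Put $\cX:=\U_{V^{\perp}}$, the uniform distribution on the $k$-dimensional subspace $V^{\perp}$. By Step~1, $G$ is a zero-error $(k\log q)\to_0(k'\log q)$ condenser for this $\cX$; the ranks of $G$ and $H$ and the relation $GH^{\top}=0$ assumed here are exactly the hypotheses of Theorem~\ref{thm:conDuality}; and $(V^{\perp})^{\perp}=V$, so $V+b$ is a translation of the dual subspace of $\supp(\cX)$. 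Thus Theorem~\ref{thm:conDuality} gives that $H\cdot Y^{\top}$ has entropy at least $(n-k+k'-m)\log q$ for $Y$ distributed over the relevant translate of $V$ — and this translate is immaterial, since $H(V+b)=HV+Hb^{\top}$ is a coset of $HV$ of size $q^{\dim HV}$ and $H$ carries $\U_{V+b}$ to the uniform distribution on it, so the entropy equals $\dim(HV)\cdot\log q$ for every $b$. Hence $H\cdot Y^{\top}$ has entropy $\ge(n-k+k'-m)\log q$ for our $\cY$. As $\cY$ was an arbitrary $(n-k)$-dimensional affine source and the incurred error is $0$, this is precisely the assertion that $H$ defines an $(n-k)\log q\to_0(n-k+k'-m)\log q$ condenser; the non-negativity of $n-k+k'-m$ is already recorded in the proof of Theorem~\ref{thm:conDuality}.

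The routine ingredients — flatness of the linear image of a flat affine source, translation-invariance of $\dim(HV)$, and the one-line $\ell_1$ estimate reproduced from Proposition~\ref{prop:zeroError} — are immediate. The only point that needs genuine care is Step~1, namely reconciling the ``$\eps$-error, all-sources'' form of the hypothesis with the ``$0$-error, single-source'' form in which Theorem~\ref{thm:conDuality} and Proposition~\ref{prop:zeroError} are cast; once that is in place, the corollary follows by a direct application of the duality theorem, via the observation that every $(n-k)$-dimensional affine source is (up to an entropy-irrelevant translation) the dual of some $k$-dimensional one.
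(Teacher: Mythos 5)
Your proposal is correct and follows essentially the same route as the paper: Proposition~\ref{prop:zeroError} upgrades the $\eps<1/2$ hypothesis to a zero-error statement for every $k$-dimensional affine source, and then, given an arbitrary $(n-k)$-dimensional affine source on a translate of $V$, one applies Theorem~\ref{thm:conDuality} to the $k$-dimensional source $\U_{V^{\perp}}$ and its dual. Your write-up merely makes explicit the two steps the paper compresses into one sentence.
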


Similarly, linear \emph{seeded} condensers for affine sources define
linear seeded \emph{dual condensers} for affine sources with
complementary entropy (this is done by taking the dual linear function
for every fixing of the seed).

Two important special cases of the above results are related to affine
extractors and lossless condensers. When the linear mapping $G$ is an
affine extractor for $k$-dimensional distributions, the dual mapping
$H$ becomes a lossless condenser for $(n-k)$-dimensional spaces, and
vice versa.

\musicBoxCapacity


\Chapter{Codes on the Gilbert-Varshamov Bound}
\epigraphhead[70]{\epigraph{\textsl{``I confess that Fermat's Theorem as an isolated
proposition has very little interest for me, because I could easily lay down a
multitude of such propositions, which one could neither prove
nor dispose of.''}}{\textit{--- Carl Friedrich Gauss}}}
\label{chap:gv}

\newcommand{\D}{\mathcal{D}} \newcommand{\sE}{\mathsf{E}}
\newcommand{\DSPACE}{\mathsf{DSPACE}} \newcommand{\DTIME}{\mathsf{DTIME}}
\newcommand{\PSPACE}{\mathsf{PSPACE}} \newcommand{\CP}{\mathsf{P}}
\newcommand{\NP}{\mathsf{NP}}
\newenvironment{proof_sketch}[1] {\noindent\hspace{2em}{\itshape Proof
    Sketch #1: }} {\hspace*{\fill}~\QED\par\endtrivlist\unskip}
\newenvironment{proofIdea}[1] {\noindent\hspace{2em}{\itshape Proof
    Idea #1: }} {\hspace*{\fill}~\QED\par\endtrivlist\unskip}
\newcommand{\fE}{f_{\mathsf{E}}}
\newcommand{\LE}{\mathcal{L}_{\mathsf{E}}} \newcommand{\logq}{\log_q}

One of the central problems in coding theory is the construction of
codes with extremal parameters.  Typically, one fixes an alphabet size
$q$, and two among the three fundamental parameters of the code
(block-length, number of codewords, and minimum distance), and asks
about extremal values of the remaining parameter such that there is a
code over the given alphabet with the given parameters.  For example,
fixing the minimum distance $d$ and the block-length $n$, one may ask
for the largest number of codewords $M$ such that there exists a code
over an alphabet with $q$ elements having $n,M,d$ as its parameters,
or in short, an $(n, M,d)_q$-code.

Answering this question in its full generality is extremely difficult,
especially when the parameters are large.  For this reason,
researchers have concentrated on asymptotic assertions: to any
$[n,\log M,d]_q$-code $C$ we associate a point $(\delta(C),R(C)) \in
[0,1]^2$, where $\delta(C) = d/n$ and $R(C)=\logq M/n$ are
respecitvely the relative distance and rate of the code.  A particular
point $(\delta,R)$ is called {\em asymptotically achievable} (over a
$q$-ary alphabet) if there exists a sequence $(C_1,C_2,\ldots)$ of
codes of increasing block-length such that $\delta(C_i)\to \delta$ and
$R(C_i)\to R$ as $i\to\infty$.

Even with this asymptotic relaxation the problem of determining the
shape of the set of asymptotically achievable points remains
difficult.  Let $\alpha_q(\delta)$ be defined as the supremum of all
$R$ such that $(\delta,R)$ is asymptotically achievable over a $q$-ary
alphabet.  It is known that $\alpha_q$ is a continuous function of
$\delta$~\cite{mani:81}, that $\alpha_q(0)=1$ (trivial), and
$\alpha_q(\delta)=0$ for $\delta\ge (q-1)/q$ (by the Plotkin bound).
However, for no $\delta\in(0,(q-1)/q)$ and for no $q$ is the value of
$\alpha_q(\delta)$ known.

What is known are lower and upper bounds for $\alpha_q$.  The best
lower bound known is due to Gilbert and
Varshamov\cites{gilbert,varshamov} which states that $\alpha_q(\delta)
\ge 1 - h_q(\delta)$, where the $q$-ary entropy function $h_q$ is
defined as
\[
h_q(\delta) \eqdef -\delta \log_q \delta
-(1-\delta)\log_q(1-\delta)+\delta \log_q(q-1).
\]
Up until 1982, years of research had made it plausible to think that
this bound is tight, i.e., that $\alpha_q(\delta) = 1-h_q(\delta)$.
Goppa's invention of algebraic-geometric codes~\cite{gopp:81}, and the
subsequent construction of Tsfasman, Vl{\u a}du{\c t}, and Zink
\cite{tsvz:82} using curves with many points over a finite field and
small genus showed however that the bound is not tight when the
alphabet size is large enough.  Moreover, Tsfasman et~al.~also gave a
polynomial time construction of such codes (which has been greatly
simplified since, see, e.g.,~\cite{gast:95}).

The fate of the binary alphabet is still open.  Many researchers still
believe that $\alpha_2(\delta) = 1-h_2(\delta)$.  In fact, for a
randomly chosen linear code $C$ (one in which the entries of a
generator matrix are chosen independently and uniformly over the
alphabet) and for any positive $\varepsilon$ we have $R(C)\ge 1 -
h_q(\delta(C))-\varepsilon$ with high probability (with probability at
least $1-2^{-nc_\varepsilon}$ where $n$ is the block-length and
$c_\varepsilon$ is a constant depending on $\varepsilon$).  However,
even though this shows that most randomly chosen codes are arbitrarily
close to the Gilbert-Varshamov bound, no explicit polynomial time
construction of such codes is known when the alphabet size is small
(e.g., for binary alphabets).

In this chapter, we use the technology of pseudorandom generators
which has played a prominent role in the theoretical computer science
research in recent years to (conditionally) produce, for any
block-length $n$ and any rate $R<1$, a list of $\poly(n)$ many codes
of block length $n$ and designed rate $R$ (over an arbitrary alphabet)
such that a very large fraction of these codes has parameters
arbitrarily close to the Gilbert-Varshamov bound.  Here, $\poly(n)$
denotes a polynomial in $n$.

In a nutshell, our construction is based on the pseudorandom generator
of Nisan and Wigderson \cite{NW}. In particular, we will first
identify a Boolean function $f$ of which we assume that it satisfies a
certain com\-plex\-ity-theoretic assumption.  More precisely, we assume
that the function cannot be computed by algorithms that require
sub-exponential amount of memory.
A natural candidate for such a function is given later in the chapter.
This function is then extended to produce $nk$ bits from $O(\log n)$
bits.  The extended function is called a \emph{pseudorandom
  generator.}  The main point about this extended function is that the
$nk$ bits produced cannot be distinguished from random bits by a
Turing machine with restricted resources.
In our case, the output cannot be distinguished from a random sequence
when a Turing machine is used which uses only an amount of space that
is polynomially bounded in the length of its input.

The new $nk$ bits are regarded as the entries of a generator matrix of
a code.  Varying the base $O(\log n)$ bits in all possible ways gives
us a polynomially long list of codes of which we can show that a
majority lies asymptotically on the Glibert-Varshamov bound, provided
the hardness assumption is satisfied\footnote{ We remark that the
  method used in this chapter can be regarded as a ``relativized''
  variation of the original Nisan-Wigderson generator and, apart from
  construction of error-correcting codes, can be applied to a vast
  range of probabilistic constructions of combinatorial objects (e.g.,
  Ramsey graphs, combinatorial designs, etc). Even though this
  derandomization technique seems to be ``folklore'' among the
  theoretical computer science community, it is included in the thesis
  mainly since there appears to be no elaborate and specifically
  focused writeup of it in the literature.}.

\section{Basic Notation}

We begin with the definitions of the terms we will use throughout the
chapter. For simplicity, we restrict ourselves to the particular cases
of our interest and will avoid presenting the definitions in full
generality.  See Appendix~\ref{app:coding} for a quick review of the
basic notions in coding theory and \cites{ref:sipser,papa} for
com\-plex\-ity-theoretic notions.

Our main tool in this chapter is a hardness-based pseudorandom
generator.  Informally, this is an efficient algorithm that receives a
sequence of truly random bits at input and outputs a much longer
sequence \emph{looking} random to any distinguisher with bounded
computational power.  This property of the pseudorandom generator can
be guaranteed to hold by assuming the existence of functions that are
hard to compute for certain computational devices. This is indeed a
broad sketch; Depending on what we precisely mean by the quantitative
measures just mentioned, we come to different definitions of
pseudorandom generators. Here we will be mainly interested in
computational hardness against algorithms with \emph{bounded} space
complexity.

Hereafter, we will use the shorthand $\DSPACE[s(n)]$ to denote the
class of problems solvable with $O(s(n))$ bits of working memory and
$\sE$ for the class of problems solvable in time $2^{O(n)}$ (i.e.,
$\sE=\bigcup_{c\in\N}\DTIME[2^{cn}]$, where $\DTIME[t(n)]$ stands for
the class of problems deterministically solvable in time $O(t(n))$).

Certain arguments that we use in this chapter require
\emph{non-uniform} computational models.  Hence, we will occasionally
refer to algorithms that receive \emph{advice strings} to help them
carry out their computation. Namely, in addition to the input string,
the algorithm receives an \emph{advice} string whose content only
depends on the \emph{length} of the input and not the input itself. It
is assumed that, for every $n$, there is an advice string that makes
the algorithm work correctly on \emph{all} inputs of length $n$. We
will use the notation $\DSPACE[f(n)]/g(n)$ for the class of problems
solvable by algorithms that receive $g(n)$ bits of advice and use
$O(f(n))$ bits of working memory.

\begin{defn}
  Let $S\colon \N \to \N$ be a (constructible) function.  A Boolean
  function $f \colon \{0,1\}^{\ast} \to \{0,1\}$ is said to have
  \emph{hardness} $S$ if for every algorithm $A$ in
  $\DSPACE[S(n)]/O(S(n))$ and infinitely many $n$ (and no matter how
  the advice string is chosen) it holds that \[ |\Pr_x[A(x)=f(x)] -
  1/2| < 1/S(n), \] where $x$ is uniformly sampled from $\{0,1\}^n$.
\end{defn}

Obviously, any Boolean function can be trivially computed correctly on
at least half of the inputs by an algorithm that always outputs a
constant value (either $0$ or $1$). Intuitively, for a hard function
no \emph{efficient} algorithm can do much better. For the purpose of
this chapter, the central hardness assumption that we use is the
following:

\begin{assum}
  \label{assum:main}
  There is a Boolean function in $\sE$ with hardness at least
  $2^{\epsilon n}$, for some constant $\epsilon > 0$.
\end{assum}

The term \emph{pseudorandom generator} emphasizes the fact that it is
infor\-mation-theoretically impossible to transform a sequence of truly
random bits into a longer sequence of truly random bits, hence the
best a transformation with a nontrivial stretch can do is to generate
bits that \emph{look} random to a \emph{particular family} of
observers. To make this more precise, we need to define
\emph{computational indistinguishability} first.

\begin{defn}
  Let $p=\{p_n\}$ and $q=\{q_n\}$ be families of probability
  distributions, where $p_n$ and $q_n$ are distributed over
  $\{0,1\}^n$. Then $p$ and $q$ are $(S, \ell,
  \epsilon)$-\emph{indistinguishable} (for some $S, \ell \colon \N \to
  \N$ and $\epsilon \colon \N \to (0,1)$) if for every algorithm $A$
  in $\DSPACE(S(n))/O(\ell(n))$ and infinitely many $n$ (and no matter
  how the advice string is chosen) we have that
  \[ |\Pr_x[A(x)=1]-\Pr_y[A(y)=1]| < \epsilon(n), \] where $x$ and $y$
  are sampled from $p_n$ and $q_n$, respectively.
\end{defn}

This is in a way similar to computational hardness. Here the
\emph{hard task} is \emph{telling the difference} between the
sequences generated by different sources.  In other words, two
probability distributions are indistinguishable if any
resource-bounded observer is \emph{fooled} when given inputs sampled
from one distribution rather than the other. Note that this may even
hold if the two distributions are not statistically close to each
other.

Now we are ready to define pseudorandom generators we will later need.

\begin{defn}
  \label{def:PRG}
  A deterministic algorithm that computes a function \[ G\colon
  \{0,1\}^{c\log n} \to \{0,1\}^n\] (for some constant $c > 0$) is
  called a (high-end) \emph{pseudorandom generator} if the following
  conditions hold:
  \begin{enumerate}
  \item It runs in polynomial time with respect to $n$.
  \item Let the probability distribution $G_n$ be defined uniformly
    over the range of $G$ restricted to outputs of length $n$. Then
    the family of distributions $\{G_n\}$ is $(n, n,
    1/n)$-indistinguishable from the uniform distribution.
  \end{enumerate}
  An input to the pseudorandom generator is referred to as a
  \emph{random seed}. Here the length of the output as a function of
  the seed length $s$, known as the \emph{stretch} of the pseudorandom
  generator, is required to be the exponential function $2^{s/c}$.
\end{defn}

\section{The Pseudorandom Generator}
\label{se:prg}

A pseudorandom generator, as we just defined, extends a truly random
sequence of bits into an exponentially long sequence that \emph{looks}
random to any efficient distinguisher. From the definition it is not
at all clear whether such an object could exist. In fact the existence
of pseudorandom generators (even much weaker than our definition) is
not yet known. However, there are various constructions of
pseudorandom generators based on unproven (but seemingly plausible)
assumptions. The presumed assumption is typically chosen in line with
the same guideline, namely, a computational task being
\emph{intractable}. For instance, the early constructions of
\cite{shamir} and \cite{BM} are based on the intractability of certain
num\-ber-theoretic problems, namely, integer factorization and the
discrete logarithm function.  Yao \cite{yao} extends these ideas to
obtain pseudorandomness from one-way permutations. This is further
generalized by \cite{HILL} who show that the existence of \emph{any}
one-way function is sufficient. However, these ideas are mainly
motivated by cryptographic applications and often require strong
assumptions.

The prototypical pseudorandom generator for the applications in
derandomization, which is of our interest, is due to Nisan and
Wigderson\cite{NW}. They provide a broad range of pseudorandom
generators with different strengths based on a variety of hardness
assumptions. In rough terms, their generator works by taking a hard
function for a certain complexity class, evaluating it in carefully
chosen points (related to the choice of the random seed), and
outputting the resulting sequence. Then one can argue that an
efficient distinguisher can be used to efficiently compute the hard
function, contradicting the assumption. Note that for certain
complexity classes, hard functions are \emph{provably} known. However,
they typically give generators too weak to be applied in typical
derandomizations.  Here we simply apply the Nisan-Wigderson
construction to obtain a pseudorandom generator which is \emph{robust}
against space-efficient computations. This is shown in the following
theorem:

\begin{thm}
  Assumption~\ref{assum:main} implies the existence of a pseudorandom
  generator as in Definition~\ref{def:PRG}. That is to say, suppose
  that there is a constant $\epsilon > 0$ and a Boolean function
  computable in time $2^{O(n)}$ that has hardness $2^{\epsilon
    n}$. Then there exists a function $G\colon \{0,1\}^{O(\log n)} \to
  \{0,1\}^n$ computable in time polynomial in $n$ whose output (when
  given uniformly random bits at input) is indistinguishable from the
  uniform distribution for all algorithms in $\DSPACE[n]/O(n)$.
\end{thm}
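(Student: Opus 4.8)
The plan is to instantiate the Nisan--Wigderson construction \cite{NW} using the hard function $f\in\sE$ guaranteed by Assumption~\ref{assum:main}, and verify that the standard reduction carries through with space complexity in place of circuit size. First I would fix the parameters: by assumption there is a constant $\epsilon>0$ and $f\in\sE$ with hardness $2^{\epsilon n}$; set the ``input length'' of $f$ to be $\ell := c_0\log n$ for a suitable constant $c_0$ (depending on $\epsilon$), so that the hardness $2^{\epsilon\ell} = n^{\epsilon c_0}$ comfortably exceeds the threshold we need (a polynomial in $n$ of large enough degree). Then I would invoke the existence of an explicit combinatorial design $\mathcal{S} = \{S_1,\ldots,S_n\}$ with each $S_i\subseteq[d]$, $|S_i|=\ell$, seed length $d = O(\ell) = O(\log n)$, and pairwise intersections $|S_i\cap S_j| = O(\log\ell) = O(\log\log n)$ --- such designs are constructible in time polynomial in $n$ (this is exactly the design used in Trevisan's extractor in Construction~\ref{constr:Tre}, restricted to the regime where the design is explicit). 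The generator is then $G(z) := (f(z|_{S_1}), f(z|_{S_2}), \ldots, f(z|_{S_n}))$ for $z\in\{0,1\}^d$, where $z|_{S_i}$ denotes the $\ell$-bit subsequence of $z$ indexed by $S_i$. Since $f\in\sE$ and each argument has length $O(\log n)$, each output bit is computable in time $2^{O(\log n)} = \poly(n)$, so $G$ runs in polynomial time, giving condition (1) of Definition~\ref{def:PRG}; and the stretch is $n = 2^{d/c}$ for the appropriate constant $c$, as required.

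For condition (2) --- indistinguishability from uniform for algorithms in $\DSPACE[n]/O(n)$ --- I would run the usual hybrid argument. Suppose for contradiction that some $A\in\DSPACE[n]/O(n)$ distinguishes $\{G_n\}$ from $\U_n$ with advantage at least $1/n$ for infinitely many $n$. Interpolating between a fully random string and $G(z)$ bit by bit (the hybrids $H_0,\ldots,H_n$ as in the proof of the theorem analyzing Construction~\ref{constr:Tre}), a telescoping/averaging argument produces an index $i$ and a predictor that, given $f(z|_{S_1}),\ldots,f(z|_{S_{i-1}})$ and fresh random bits for the remaining coordinates, guesses $f(z|_{S_i})$ with probability at least $1/2 + 1/n^2$ over the random seed; here I would use Proposition~\ref{prop:predictor} to convert the distinguishing advantage on the $i$-th hybrid into such a next-bit predictor (after fixing the ``future'' random bits and the advice appropriately). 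The key structural point, exactly as in Nisan--Wigderson and in the analysis of Trevisan's extractor, is that for $j<i$ the bit $f(z|_{S_j})$ depends on $z|_{S_i}$ through only $|S_i\cap S_j| = O(\log\log n)$ coordinates, hence $f(z|_{S_j})$ as a function of $z|_{S_i}$ is a function of at most $O(\log\log n)$ bits and can be described by a lookup table of size $2^{O(\log\log n)} = \poly(\log n)$; feeding these $i-1$ small tables to the predictor as part of a nonuniform advice string (whose total length is $\poly(\log n)\cdot n = O(n)$, within our advice budget) allows the predictor to compute all the needed earlier coordinates on the fly. Fixing the random bits for coordinates $>i$ to their best value and absorbing them into the advice, we obtain an algorithm that computes $f$ on inputs of length $\ell$ with advantage $1/n^2 > 1/2^{\epsilon\ell}$ over a constant guess, using $O(n) = 2^{O(\ell)}$ space --- but since $\ell = c_0\log n$, this is space $\poly(n) = 2^{O(\ell)} = O(S(\ell))$ for $S(\ell) = 2^{\epsilon'\ell}$ with $\epsilon'<\epsilon$, contradicting the hardness of $f$.

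The one point that requires care --- and that I expect to be the main obstacle --- is verifying that the entire reduction stays within the \emph{space} bound rather than merely the time bound, since the original Nisan--Wigderson argument is phrased for circuit size. Concretely: the predictor must simulate $A$ (which uses $O(n)$ space), and in doing so it must, whenever $A$ queries one of its input bits at coordinate $j<i$, compute $f(z|_{S_j})$; doing this by table lookup from the advice costs only $O(\log\log n)$ extra space to index the table, so the simulation still runs in $O(n)$ space. One must also confirm that reading the appropriate bits of the seed $z$ and computing the index into each small table uses only $O(\log n)$ additional workspace, which is immediate. Finally, the advice length bookkeeping must be checked: the advice consists of the original $O(n)$-bit advice for $A$, the index $i$ (which is $O(\log n)$ bits), the fixed values of the ``future'' random bits (at most $n$ bits), and the $i-1$ tables of total size $O(n)$ --- all together $O(n)$, matching $\DSPACE[\ell']/O(\ell')$ in the $f$-hardness definition once we rescale to $\ell' = \Theta(\log n)$-length inputs and observe $n = 2^{\Theta(\ell')}$. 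Assembling these observations, the reduction contradicts Assumption~\ref{assum:main}, so no such distinguisher $A$ exists and $G$ is a pseudorandom generator in the sense of Definition~\ref{def:PRG}.
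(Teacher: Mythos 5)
Your overall strategy is exactly the paper's: instantiate Nisan--Wigderson with the hard function from Assumption~\ref{assum:main}, run the hybrid argument to extract a next-bit predictor via Proposition~\ref{prop:predictor}, hard-wire the ``future'' bits and the lookup tables for the earlier output bits into the advice, and contradict the hardness of $f$. But there is a genuine gap in your choice of design parameters. You ask for $n$ sets $S_i \subseteq [d]$ with $|S_i| = \ell = \Theta(\log n)$, universe size $d = O(\ell) = O(\log n)$, and pairwise intersections $O(\log\log n)$. No such design exists: counting incidences, $\sum_{i<j}|S_i\cap S_j| = \sum_{x\in[d]}\binom{\deg(x)}{2} \geq d\binom{n\ell/d}{2} \approx n^2\ell^2/(2d)$, so the average pairwise intersection is at least on the order of $\ell^2/d = \Omega(\log n)$ when $d = O(\ell)$. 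With seed length $O(\log n)$ you are forced to accept intersections of size $\Theta(\log n)$, hence lookup tables of size $2^{\Theta(\log n)} = n^{\Theta(1)}$ each, and total advice $n^{\Theta(1)}\cdot n$ rather than $O(n)$. (Separately, even granting your design, $\poly(\log n)\cdot n$ is not $O(n)$.)

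The repair is the one the paper uses, and it costs nothing: the advice does not need to be $O(n)$. The contradiction is with the hardness of $f$ at input length $m = \Theta(\log n)$, where the definition of hardness allows the adversary space and advice $O(2^{\epsilon m})$, a polynomial in $n$ whose degree you control by choosing $m$ (the paper takes $m = (2/\epsilon)\log n$, giving a budget of $O(n^2)$). Using the standard design with intersections at most $\log n$, each of the $i-1$ tables has size $O(n)$ and the total advice is $O(n^2)$, which fits this budget; the distinguisher's own $O(n)$ bits of advice and the fixed future bits are absorbed as well, and the simulation still runs in space well under $2^{\epsilon m}$. Your own closing paragraph already observes that the relevant budget is $2^{O(\ell)}$, so the insistence on $O(\log\log n)$ intersections was both unnecessary and unattainable; with that parameter choice corrected, your argument coincides with the paper's.
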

\begin{proof}{\cite{NW}}
  Let $f$ be a function satisfying Assumption~\ref{assum:main} for
  some fixed $\epsilon > 0$, and recall that we intend to generate $n$
  pseudorandom bits from a truly random seed of length $\ell$ which is
  only logarithmically long in $n$.

  The idea of the construction is as follows: We evaluate the hard
  function $f$ in $n$ carefully chosen points, each of the same length
  $m$, where $m$ is to be determined shortly. Each of these
  $m$-bit~long inputs is obtained from a particular subset of the
  $\ell$ bits provided by the random seed.  This can be conveniently
  represented in a matrix form: Let $\D$ be an $n \times \ell$ binary
  matrix, each row of which having the same weight $m$. Now the
  pseudorandom generator $G$ is described as follows: The $i\Th$ bit
  generated by $G$ is the evaluation of $f$ on the projection of the
  $\ell$-bit long input sequence to those coordinates indicated by the
  $i\Th$ row of $\D$.  Note that because $f$ is in $\sE$, the output
  sequence can be computed in time polynomial in $n$, as long as $m$
  is logarithmically small.

  As we will shortly see, it turns out that we need $\D$ to satisfy a
  certain \emph{small-overlap} property. Namely, we require the
  bitwise product of each pair of the rows of $\D$ to have weight at
  most $\log n$. A straightforward counting argument shows that, for a
  logarithmically large value of $m$, the parameter $\ell$ can be kept
  logarithmically small as well. In particular, for the particular
  choice of $m \eqdef \frac{2}{\epsilon} \log n$, the matrix $\D$
  exists with $\ell = O(\log n)$.  Moreover, rows of the matrix can be
  constructed (in time polynomial in $n$) using a simple greedy
  algorithm.

  To show that our construction indeed gives us a pseudorandom
  generator, suppose that there is an algorithm $A$ working in
  $\DSPACE[n]/O(n)$ which is able to distinguish the output of $G$
  from a truly random sequence with a bias of at least $1/n$.  That
  is, for all large enough $n$ it holds that
  \[ \delta \eqdef
  |\Pr_y[A^{\alpha(n)}(y)=1]-\Pr_x[A^{\alpha(n)}(G(x))=1]| \geq
  1/n, \] where $x$ and $y$ are distributed uniformly in
  $\{0,1\}^\ell$ and $\{0,1\}^n$, respectively, and $\alpha(n)$ in the
  superscript denotes an advice string of linear length (that only
  depends on $n$).  The goal is to transform $A$ into a
  space-efficient (and non-uniform) algorithm that approximates $f$,
  obtaining a contradiction.

  Without loss of generality, let the quantity inside the absolute
  value be non-negative (the argument is similar for the negative
  case).  Let the distribution $D_i$ (for $0 \leq i \leq n$) over
  $\{0,1\}^n$ be defined by concatenation of the length-$i$ prefix of
  $G(x)$, when $x$ is chosen uniformly at random from $\{0,1\}^\ell$,
  with a Boolean string of length $n-i$ obtained uniformly at random.
  Define $p_i$ as $\Pr_z[A^{\alpha(n)}(z)=1]$, where $z$ is sampled
  from $D_i$, and let $\delta_i \eqdef p_{i-1}-p_i$.  Note that $D_0$
  is the uniform distribution and $D_n$ is uniformly distributed over
  the range of $G$. Hence, we have $\sum_{i=1}^{n}\delta_i = p_0-p_n =
  \delta \geq 1/n$, meaning that for some $i$, $\delta_{i} \geq
  1/n^2$. Fix this $i$ in the sequel.

  Without loss of generality, assume that the $i\Th$ bit of $G(x)$
  depends on the first $m$ bits of the random seed. Now consider the
  following randomized procedure $B$: Given $i-1$ input bits $u_1,
  \ldots, u_{i-1}$, choose a binary sequence $r_i, \ldots, r_n$
  uniformly at random and compute
  $A^{\alpha(n)}(u_1,\ldots,u_{i-1},r_i,\ldots,r_n)$. If the output
  was $1$ return $r_i$, otherwise, return the negation of $r_i$. It is
  straightforward to show that
  \begin{equation}
    \label{eqn:distinguisher}
    \Pr_{x,r}[B(G(x)_1^{i-1})=G(x)_i]\geq \frac{1}{2}+\delta_i.
  \end{equation}
  Here, $G(x)_1^{i-1}$ and $G(x)_i$ are shorthands for the $(i-1)$-bit
  long prefix of $G(x)$ and the $i\Th$ bit of $G(x)$, respectively,
  and the probability is taken over the choice of $x$ and the internal
  coins of $B$.

  So far we have constructed a linear-time probabilistic procedure for
  \emph{guessing} the $i\Th$ pseudorandom bit from the first $i-1$
  bits. By averaging, we note that there is a particular choice of
  $r_i, \ldots, r_n$, independent of $x$, that preserves the bias
  given in \eqref{eqn:distinguisher}.  Furthermore, note that the
  function $G(x)_i$ we are trying to guess, which is in fact
  $f(x_1,\ldots,x_m)$, does \emph{not} depend on $x_{m+1}, \ldots,
  x_\ell$. Therefore, again by averaging we see that these bits can
  also be fixed. Therefore, for a given sequence $x_{1}, \ldots, x_m$,
  one can compute $G(x)_1^{i-1}$, feed it to $B$ (having known the
  choices we have fixed), and guess $G(x)_i$ with the same bias as in
  \eqref{eqn:distinguisher}.  The problem is of course that
  $G(x)_1^{i-1}$ does not seem to be easily computable. However, what
  we know is that each bit of this sequence depends only on $\log n$
  bits of $x_1,\ldots,x_m$, followed by the construction of
  $\D$. Hence, having fixed $x_{m+1},\ldots,x_\ell$, we can trivially
  describe each bit of $G(x)_1^{i-1}$ by a Boolean formula (or a
  Boolean circuit) of exponential size (that is, of size $O(2^{\log
    n})=O(n)$). These $i-1=O(n)$ Boolean formulae can be encoded as an
  additional advice string of length $O(n^2)$ (note that their
  descriptions only depend on $n$), implying that $G(x)_1^{i-1}$ can
  be computed in linear space using $O(n^2)$ bits of advice.

  All the choices we have fixed so far (namely, $i$, $r_{i}, \ldots,
  r_{n}$, $x_{m+1}, \ldots, x_\ell$) only depend on $n$ and can be
  \emph{absorbed} into the advice string as
  well\footnote{Alternatively, one can avoid using this additional
    advice by enumerating over all possible choices and taking a
    majority vote. However, this does not decrease the total advice
    length by much.}.  Combined with the \emph{bit-guessing} algorithm
  we just described, this gives us a linear-space algorithm that needs
  an advice of quadratic length and correctly computes
  $f(x_1,\ldots,x_m)$ on at least a $\frac{1}{2}+\delta_i$ fraction of
  inputs, which is off from $1/2$ by a bias of at least $1/n^2$. But
  this is not possible by the hardness of $f$, which is assumed to be
  at least $2^{\epsilon m} = n^2$. Thus, $G$ must be a pseudorandom
  generator.
\end{proof}

The above proof uses a function that is completely unpredictable for
every efficient algorithm. Impagliazzo and Wigderson \cite{IW} improve
the construction to show that this requirement can be relaxed to one
that only requires a \emph{worst case} hardness, meaning that the
function computed by any \emph{efficient} (non-uniform) algorithm
needs to differ from the hard function on at least one input.  In our
application, this translates into the following hardness assumption:

\begin{assum}
  There is a constant $\epsilon > 0$ and a function $f$ in $\sE$ such
  that every algorithm in $\DSPACE[S(n)]/O(S(n))$ that correctly
  computes $f$ requires $S(n)=\Omega(2^{\epsilon n})$.
  \label{assum:IW}
\end{assum}

The idea of their result (which was later reproved in \cite{STV} using
a coding-theoretic argument) is to \emph{amplify} the given hardness,
that is, to transform a worst-case hard function in $\sE$ to another
function in $\sE$ which is hard on average. In our setting, this gives
us the following (since the proof essentially carries over without
change, we only sketch the idea):

\begin{thm}
  Assumption~\ref{assum:IW} implies Assumption~\ref{assum:main} and
  hence, the existence of pseudorandom generators.
\end{thm}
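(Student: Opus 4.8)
The plan is to apply the hardness amplification machinery of Impagliazzo--Wigderson \cite{IW} (equivalently, the coding-theoretic reconstruction of Sudan--Trevisan--Vadhan \cite{STV}) to upgrade a worst-case hard function into an average-case hard function, which then feeds directly into the previous theorem. Concretely, starting from a function $f \in \sE$ for which every algorithm in $\DSPACE[S(n)]/O(S(n))$ computing $f$ \emph{correctly on all inputs} requires $S(n) = \Omega(2^{\epsilon n})$, I would produce a new function $\hat{f} \in \sE$ that is hard \emph{on average} in the sense of the hardness definition used earlier: no $\DSPACE[2^{\epsilon' n}]/O(2^{\epsilon' n})$ algorithm agrees with $\hat{f}$ on more than a $1/2 + 2^{-\epsilon' n}$ fraction of inputs, for some constant $\epsilon' > 0$ depending on $\epsilon$. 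Once Assumption~\ref{assum:main} is established in this form, the preceding theorem gives the pseudorandom generator, completing the proof.

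First I would set up the amplification as a two-stage encoding, exactly as in \cite{STV}. Given the truth table of $f$ on inputs of length $n$ (a string of length $N := 2^n$), view it as a message and encode it with a code $\mathcal{C}$ that is (i) locally computable --- each symbol of the codeword is computable in time $\poly(n) = \poly(\log N)$ given oracle access to $f$, so that $\hat{f} \in \sE$ is preserved --- and (ii) efficiently locally list-decodable from agreement $1/2 + \delta$ with list size $\poly(1/\delta)$ and query/time complexity $\poly(\log N, 1/\delta)$. The standard choice is a concatenation of a Reed--Muller code (for local decodability) with the Hadamard code (to push the agreement threshold down to $1/2 + \delta$); the binary output is the desired $\hat{f}$, defined on inputs of length $m = O(n)$ so that $N' := 2^m = \poly(N)$.

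Next I would carry out the reconstruction argument. Suppose, for contradiction, that some algorithm $A \in \DSPACE[2^{\epsilon' m}]/O(2^{\epsilon' m})$ computes $\hat{f}$ on a $1/2 + 2^{-\epsilon' m}$ fraction of inputs of length $m$; then $A$, viewed as a received word, has agreement $1/2 + \delta$ with the codeword for $\delta := 2^{-\epsilon' m}$. Running the local list-decoder with oracle $A$ produces a short list of candidate messages (truth tables of length $N$), one of which equals the truth table of $f$; a short advice string of length $O(\log(\text{list size})) = O(\epsilon' m)$ selects the correct one, and the entire procedure runs in space $\poly(2^{\epsilon' m}) \cdot \polylog(N) = 2^{O(\epsilon' m)}$ with $2^{O(\epsilon' m)}$ bits of advice (absorbing $A$'s advice). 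Choosing $\epsilon'$ a small enough constant multiple of $\epsilon$, this contradicts the worst-case hardness $\Omega(2^{\epsilon n}) = \Omega(2^{\Omega(\epsilon' m)})$ of $f$. Hence $\hat{f}$ witnesses Assumption~\ref{assum:main}, and the previous theorem yields the generator.

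The main obstacle is bookkeeping the \emph{space} complexity rather than time: the original \cite{IW}, \cite{STV} arguments are phrased for circuit size (a non-uniform time measure), and I need every step --- local encoding, local list-decoding, and in particular the composition of the decoder with the distinguisher-turned-predictor --- to run in the bounded-space, advice-taking model $\DSPACE[\cdot]/O(\cdot)$ used throughout this chapter. This requires checking that Reed--Muller local decoding (polynomial interpolation along lines, curve-fitting for the list-decoding step) is implementable in small space, that composing a space-$s$ oracle machine with a space-$s'$ oracle is space-$O(s + s')$ (true, since oracle queries can be recomputed rather than stored), and that the advice accounting stays within the budget. None of these is deep, but the statement ``the proof essentially carries over without change'' hides precisely this verification, which is why I would flag it as the delicate point.
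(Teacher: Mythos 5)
Your proposal follows essentially the same route as the paper's own (sketched) proof: encode the truth table of the worst-case hard function with the Reed--Muller/Hadamard concatenation of \cite{STV}, observe that local computability keeps the amplified function in $\sE$, and use local list decoding plus a short advice string to turn any average-case space-efficient predictor into a worst-case space-efficient algorithm for $f$, contradicting Assumption~\ref{assum:IW}. Your explicit flagging of the space-complexity bookkeeping is exactly the point the paper elides with ``the proof essentially carries over without change,'' so your write-up is, if anything, more careful than the original.
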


\begin{proof}[Proof Idea]{\cite{STV}}
  Let a function $f$ be hard in worst case. Consider the truth table
  of $f$ as a string $x$ of length $N \eqdef 2^n$. The main ingredient
  of the proof is a linear code $\C$ with dimension $N$ and length
  polynomial in $N$, which is obtained by concatenation of a
  Reed-Muller code with the Hadamard code. The code is list-decodable
  up to a fraction $\frac{1}{2}-\epsilon$ of errors, for arbitrary
  $\epsilon > 0$. Moreover, decoding can be done in sub-linear time,
  that is, by querying the \emph{received word} only at a small number
  of (randomly chosen) positions. Then the truth table of the
  transformed function $g$ can be simply defined as the encoding of
  $x$ with $\C$. Hence $g$ can be evaluated at any point in time
  polynomial in $N$, which shows that $g \in \sE$. Further, suppose
  that an algorithm $A$ can space-efficiently compute $g$ correctly in
  a fraction of points non-negligibly bounded away from $1/2$
  (possibly using an advice string). Then the function computed by $A$
  can be seen as a \emph{corrupted} version of the \emph{codeword} $g$
  and can be efficiently \emph{recovered} using the list-decoding
  algorithm.  From this, one can obtain a space-efficient algorithm
  for computing $f$, contradicting the hardness of $f$. Hence $g$ has
  to be hard on average.
\end{proof}



While the above result seems to require hardness against non-uniform
algorithms (as phrased in Assumption~\ref{assum:IW}), we will see that
the hardness assumption can be further relaxed to the following, which
only requires hardness against uniform algorithms:

\begin{assum}
  \label{assum:strong}
  The complexity class $\sE$ is not contained in $\DSPACE[2^{o(n)}]$.
\end{assum}

\begin{rems} A result by Hopcroft~et~al.~\cite{HPV77} shows a
  deterministic simulation of time by space. Namely, they prove that
  \[ \DTIME[t(n)] \subseteq \DSPACE[t(n)/\log t(n)].\] However, this
  result is not strong enough to influence the hardness assumption
  above. To violate the assumption, a much more space-efficient
  simulation in the form \[\DTIME[t(n)] \subseteq
  \DSPACE[t(n)^{o(1)}]\] is required.
\end{rems}

Before we show the equivalence of the two assumptions (namely,
Assumption~\ref{assum:IW} and Assumption~\ref{assum:strong}), we
address the natural question of how to construct an \emph{explicit}
function to satisfy the required hardness assumption (after all,
evaluation of such a function is needed as part of the pseudorandom
generator construction). One possible candidate (which is a canonical
hard function for $\sE$) is proposed in the following lemma:

\begin{lem}
  \label{lem:candidate}
  Let $\LE$ be the set (encoded in binary)
  \begin{multline*}
\{\langle M, x, t, i \rangle \mid \text{$M$ is a Turing machine, where given} \\ \text{input $x$ at
    time $t$ the $i\Th$ bit of its configuration is $1$}\},
    \end{multline*}
    and let
  the Boolean function $\fE$ be its characteristic function.  Then if
  Assumption~\ref{assum:strong} is true, it is satisfied by $\fE$.
\end{lem}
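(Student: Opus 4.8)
The plan is to establish that $\fE$, the characteristic function of the "bounded configuration" language $\LE$, inherits the worst-case hardness required by Assumption~\ref{assum:IW} whenever Assumption~\ref{assum:strong} holds. The key observation is that $\LE$ is $\sE$-complete under a suitable reduction, so hardness of \emph{any} problem in $\sE$ against space-bounded algorithms transfers to $\fE$.

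\begin{proof}
  First I would verify that $\fE \in \sE$. Given an instance $\langle M, x, t, i\rangle$ of length $n$, we have $t \leq 2^n$ (as $t$ is encoded in binary within the $n$ bits), and the configuration of $M$ on input $x$ at time $t$ occupies at most $t+|x| = 2^{O(n)}$ cells. Simulating $M$ for $t$ steps and reading off the $i$th bit of the configuration therefore takes time $2^{O(n)}$, so $\fE \in \sE$.

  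Next I would show that $\fE$ inherits the hardness. Suppose, for contradiction, that $\fE$ does \emph{not} satisfy Assumption~\ref{assum:IW}; that is, there is a function $S(n) = 2^{o(n)}$ and an algorithm $A$ running in $\DSPACE[S(n)]/O(S(n))$ that computes $\fE$ correctly on all inputs. I claim this implies $\sE \subseteq \DSPACE[2^{o(n)}]$, contradicting Assumption~\ref{assum:strong}. Indeed, let $L \in \sE$ be arbitrary, decided by a Turing machine $M_L$ running in time $2^{cn}$ for some constant $c$. To decide whether $y \in L$ (with $|y| = n$), we form the query $\langle M_L, y, 2^{cn}, i_0 \rangle$, where $i_0$ is the fixed bit position of the configuration that records the accept/reject state of $M_L$ at halting. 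This query string has length $n' = O(n)$ (the description of $M_L$ is a constant, $y$ contributes $n$ bits, the time bound $2^{cn}$ contributes $O(n)$ bits, and $i_0 = O(n)$ bits). Running $A$ on this query uses space $O(S(n')) = O(S(O(n))) = 2^{o(n)}$ and equally short advice, and by assumption returns $\fE$ of the query, which is exactly the membership bit for $y \in L$. Hence $L \in \DSPACE[2^{o(n)}]/2^{o(n)}$; absorbing the $2^{o(n)}$ advice into the simulation (by hard-wiring, or by noting we may take a uniform $A$ with no advice under the uniform reading of Assumption~\ref{assum:strong}) gives $L \in \DSPACE[2^{o(n)}]$. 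Since $L$ was arbitrary, $\sE \subseteq \DSPACE[2^{o(n)}]$, contradicting Assumption~\ref{assum:strong}. Therefore $\fE$ satisfies Assumption~\ref{assum:IW}.
\end{proof}

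\textbf{The main obstacle} I anticipate is bookkeeping the advice strings and the uniform-versus-nonuniform distinction precisely: Assumption~\ref{assum:strong} as stated is a \emph{uniform} statement ($\sE \not\subseteq \DSPACE[2^{o(n)}]$ with no advice), while Assumption~\ref{assum:IW} quantifies over algorithms receiving $O(S(n))$ bits of advice. To make the reduction airtight one should either (i) observe that the only advice $A$ consumes when deciding $L$ depends solely on the query length $n' = O(n)$, hence on $n$, so it can be folded into a nonuniform simulation and one actually gets the nonuniform conclusion $\sE \not\subseteq \DSPACE[2^{o(n)}]/2^{o(n)}$ — which is what is genuinely needed downstream and follows from the same diagonalization spirit — or (ii) carefully note that a time-hierarchy/diagonalization argument shows Assumption~\ref{assum:strong} is in fact robust to sub-exponential advice. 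I would go with route (i), spelling out that the advice length stays $2^{o(n)}$ throughout the reduction so no blow-up occurs, and that the remaining gap between uniform and nonuniform hardness is exactly the content already absorbed in the statement of Assumption~\ref{assum:IW}. The length-accounting ($t \le 2^n$, configuration size $2^{O(n)}$, query length $O(n)$) is routine but must be stated to justify that $\fE \in \sE$ and that the reduction is length-efficient.
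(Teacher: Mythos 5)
Your core argument is the same as the paper's: show $\fE\in\sE$ by universal simulation, observe that $\LE$ is $\sE$-complete under a linear-space reduction that maps length-$n$ instances to queries of length $O(n)$, and conclude that a sub-exponential-space algorithm for $\fE$ would place all of $\sE$ in $\DSPACE[2^{o(n)}]$, contradicting Assumption~\ref{assum:strong}. The length and space accounting you give is exactly what is needed, and for the \emph{uniform} reading of the conclusion (which is all the lemma claims, since Assumption~\ref{assum:strong} speaks of $\DSPACE[2^{o(n)}]$ with no advice) your proof is correct and matches the paper's.

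The one genuine problem is the step where you pass from $L\in\DSPACE[2^{o(n)}]/2^{o(n)}$ to $L\in\DSPACE[2^{o(n)}]$ by ``absorbing the advice into the simulation (by hard-wiring).'' Non-uniform advice cannot be hard-wired into a uniform machine, and the non-uniform strengthening you aim for in route~(i) of your discussion does \emph{not} ``follow from the same diagonalization spirit'': deriving hardness against advice-taking algorithms from the purely uniform Assumption~\ref{assum:strong} requires a new idea, and in the paper this is precisely the content of the \emph{next} lemma. There the authors exploit the fact that a purported algorithm for $\fE$ restricted to inputs of length $n$ can be \emph{verified} in linear space --- one checks that its answers are consistent with the initial configuration at $t=0$ and with legal transitions between consecutive time steps --- so the correct advice string can be found by exhaustive enumeration, converting a non-uniform algorithm into a uniform one of essentially the same space complexity. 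Without that verification argument, your contradiction only goes through against a uniform $A$. The fix is simply to drop the detour through Assumption~\ref{assum:IW}: assume for contradiction that $\fE\in\DSPACE[2^{o(n)}]$ (no advice), run your reduction, and conclude $\sE\subseteq\DSPACE[2^{o(n)}]$, which is exactly the negation of Assumption~\ref{assum:strong}. The lift to advice-taking algorithms should be left to the subsequent equivalence lemma, where the self-verifiability of $\fE$ is available.
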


\begin{proof}
  First we show that $\LE$ is complete for $\sE$ under Turing
  reductions bounded in linear space. The language being in $\sE$
  directly follows from the efficient constructions of universal
  Turing machines.  Namely, given a properly-encoded input $\langle M,
  x, t, i\rangle$, one can simply simulate the Turing machine $M$ on
  $x$ for $t$ steps and decide according to the configuration obtained
  at time $t$.  This indeed takes exponential time. Now let $L$ be any
  language in $\sE$ which is computable by a Turing machine $M$ in
  time $2^{cn}$, for some constant $c>0$.  For a given $x$ of length
  $n$, using an oracle for solving $\fE$, one can query the oracle
  with inputs of the form $\langle M, x, 2^{cn}, i\rangle$ (where the
  precise choice of $i$ depends on the particular encoding of the
  configurations) to find out whether $M$ is in an accepting state,
  and hence decide $L$. This can obviously be done in space linear in
  $n$, which concludes the completeness of $\LE$.  Now if
  Assumption~\ref{assum:strong} is true and is not satisfied by $\fE$,
  this completeness result allows one to compute all problems in $\sE$
  in sub-exponential time, which contradicts the assumption.
\end{proof}

The following lemma shows that this seemingly weaker assumption is in
fact sufficient for our pseudorandom generator:

\begin{lem}
  Assumptions~\ref{assum:IW} and \ref{assum:strong} are equivalent.
\end{lem}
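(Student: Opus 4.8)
The plan is to prove the two implications separately, the easy one being Assumption~\ref{assum:IW} $\Rightarrow$ Assumption~\ref{assum:strong} and the substantive one being the converse.

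For Assumption~\ref{assum:IW} $\Rightarrow$ Assumption~\ref{assum:strong} I would argue by contraposition. If $\sE \subseteq \DSPACE[2^{o(n)}]$, then every $f \in \sE$ is computed by some \emph{uniform} algorithm using $2^{o(n)}$ space; regarding this as a $\DSPACE[2^{o(n)}]/O(2^{o(n)})$ algorithm that simply ignores its advice contradicts the defining property of any candidate witness for Assumption~\ref{assum:IW}, whose hardness requires $S(n)=\Omega(2^{\epsilon n})$. Hence no function witnesses Assumption~\ref{assum:IW}, so it fails. This needs no new ideas.

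For the converse, the plan is to show that the canonical $\sE$-complete function $\fE$ of Lemma~\ref{lem:candidate} already witnesses Assumption~\ref{assum:IW}. Suppose not: then $\fE$ is computed by some $A \in \DSPACE[S(n)]/O(S(n))$ with $S(n)=2^{o(n)}$, using advice strings $\{a_n\}$ of length $O(S(n))$. I would then \emph{eliminate the advice} by exploiting the tableau structure of $\fE$: since $\fE(\langle M,x,t,i\rangle)$ is the $i$-th bit of the time-$t$ configuration of $M$ on $x$, and that bit is a fixed \emph{local}, constant-window function $g_M$ of the time-$(t-1)$ configuration, $\fE$ is downward self-reducible in the time parameter $t$. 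Concretely, to compute $\fE$ on a length-$n$ input with time-component $t$, iterate $s=0,1,\dots,t$: at the base step verify a candidate $w_0$ against the trivially computable initial configuration, and at step $s$ brute-force search, in lexicographic order, over all strings $w$ of length $O(S(n))$ for one that is \emph{consistent} with $w_{s-1}$, meaning $A^{w}(y)=g_M\bigl(A^{w_{s-1}}(y_{-1}),A^{w_{s-1}}(y_{0}),A^{w_{s-1}}(y_{+1})\bigr)$ for every length-$n$ input $y$ whose time-component equals $s$. Keeping only the sliding pair $(w_{s-1},w_s)$, the whole search runs in space $O(S(n))+O(n)=2^{o(n)}$ with no advice. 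A short induction shows that every consistent $w_s$ makes $A^{w_s}$ compute $\fE$ correctly on the $s$-th tableau row: the true advice $a_n$ is consistent (so a consistent string exists), and the local-function identity propagates correctness from row $s-1$ to row $s$. Therefore $\fE \in \DSPACE[2^{o(n)}]$ uniformly, which by Lemma~\ref{lem:candidate} contradicts Assumption~\ref{assum:strong}. Combined with the completeness reduction of Lemma~\ref{lem:candidate}, this also yields the cleaner statement that $\sE \subseteq \DSPACE[2^{o(n)}]/2^{o(n)}$ iff $\sE \subseteq \DSPACE[2^{o(n)}]$.

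The main obstacle, and essentially the only content of the argument, is this advice-elimination step. The naive way to certify a candidate advice $w$ is to compare $A^{w}$ against $\fE$ itself on all inputs, but $\fE$ is $\sE$-complete and, under Assumption~\ref{assum:strong}, needs exponential space, so such a global check is far too expensive; the point is that the self-reducible structure of $\fE$ replaces the global check by a \emph{local} consistency check between consecutive tableau rows, costing only the space to run $A$ a constant number of times. A secondary, purely bookkeeping issue is to fix once and for all a precise reading of the $o(\cdot)$ and $\Omega(\cdot)$ quantifiers — e.g.\ working with the bounds $2^{\epsilon n}$ for each fixed $\epsilon>0$ and tracking the linear blow-up in input length incurred by the completeness reduction — so that the chain of implications is literally valid; I expect this part to be routine.
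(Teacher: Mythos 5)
Your proof is correct and matches the paper's argument: both directions are handled the same way, and the substantive step --- eliminating the advice of a purported small-space algorithm for $\fE$ by exploiting the locally checkable tableau structure of $\fE$ --- is exactly the paper's verification-and-enumeration idea. The only cosmetic difference is that you certify a separate candidate advice string per tableau row via a sliding consistency check, whereas the paper enumerates a single advice string and verifies it against all rows at once; both run in space $O(S(n))+O(n)$.
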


\begin{proof}
  This argument is based on \cite{peter}*{Section~5.3}.  First we
  observe that, given a \emph{black box} $C$ that receives $n$ input
  bits and outputs a single bit, it can be verified in linear space
  whether $C$ computes the restriction of $\fE$ to inputs of length
  $n$.  To see this, consider an input of the form $\langle M, x, t, i
  \rangle$, as in the statement of Lemma~\ref{lem:candidate}.  The
  correctness of $C$ can be explicitly checked when the time parameter
  $t$ is zero (that is, $C$ has to agree with the initial
  configuration of $M$). Moreover, for every time step $t > 0$, the
  answer given by $C$ has to be consistent with that of the previous
  time step (namely, the transition made at the location of the head
  should be legal and every other position of the tape should remain
  unchanged). Thus, on can verify $C$ simply by enumerating all
  possible inputs and verifying whether the answer given by $C$
  remains consistent across subsequent time steps. This can obviously
  be done in linear space.

  Now suppose that Assumption~\ref{assum:strong} is true and hence, by
  Lemma~\ref{lem:candidate}, is satisfied by $\fE$. That is, there is
  a constant $\epsilon > 0$ such that every algorithm for computing
  $\fE$ requires space $O(2^{\epsilon n})$. Moreover, assume that
  there is an algorithm $A$ working in $\DSPACE[S(n)]/O(S(n))$ that
  computes $\fE$. Using the verification procedure described above,
  one can (uniformly) simulate $A$ in space $O(S(n))$ by enumerating
  all choices of the advice string and finding the one that makes the
  algorithm work correctly. Altogether this requires space
  $O(S(n))$. Combined with the hardness assumption, we conclude that
  $S(n)=\Omega(2^{\epsilon n})$.  The converse direction is obvious.
\end{proof}

Putting everything together, we obtain a very strong pseudorandom
generator as follows:

\begin{coro}
  \label{coro:strong}
  Assumption~\ref{assum:strong} implies the existence of pseudorandom
  generators whose output of length $n$ is
  $(n,n,1/n)$-indistinguishable from the uniform distribution. \qed
\end{coro}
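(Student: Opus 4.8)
The plan is to chain together the three implications established immediately above, so that the corollary follows by composition with no genuinely new argument. First I would invoke the lemma asserting that Assumption~\ref{assum:strong} and Assumption~\ref{assum:IW} are equivalent; applying it in the forward direction produces a constant $\epsilon > 0$ and a function $f \in \sE$ such that every algorithm in $\DSPACE[S(n)]/O(S(n))$ that correctly computes $f$ requires $S(n) = \Omega(2^{\epsilon n})$. This is exactly the worst-case hardness hypothesis needed to start the amplification.

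Next I would apply the hardness-amplification theorem, based on \cite{STV}, stating that Assumption~\ref{assum:IW} implies Assumption~\ref{assum:main}. Concretely, one views the truth table of $f$ as a string of length $N := 2^n$, encodes it with the concatenation of a Reed--Muller code and the Hadamard code, and lets $g \in \sE$ be the function whose truth table is this encoding. Membership of $g$ in $\sE$ follows since the code is polynomial-time (indeed sublinear-time) computable; average-case hardness of $g$ follows because a space-efficient algorithm computing $g$ on a fraction of inputs non-negligibly above $1/2$ can be viewed as a corrupted codeword and list-decoded space-efficiently, yielding a space-efficient worst-case solver for $f$ and contradicting its hardness. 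Thus $g$ has hardness at least $2^{\epsilon' n}$ for some constant $\epsilon' > 0$, which is Assumption~\ref{assum:main}.

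Finally I would apply the Nisan--Wigderson theorem (\cite{NW}): Assumption~\ref{assum:main} yields a function $G\colon \zo^{O(\log n)} \to \zo^n$, computable in time polynomial in $n$, whose output on a uniformly random seed is $(n, n, 1/n)$-indistinguishable from the uniform distribution --- which is precisely the notion of pseudorandom generator recorded in Definition~\ref{def:PRG}. Composing the three steps gives the claim.

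The only real obstacle is parameter bookkeeping rather than any new idea: one must check that the amplification step both keeps $g$ inside $\sE$ and delivers hardness that is still \emph{exponential}, so that the choice $m := \frac{2}{\epsilon'}\log n$ in the NW construction keeps the seed length $\ell = O(\log n)$ and makes the hardness threshold $2^{\epsilon' m} = n^2$ large enough to beat the distinguishing advantage $\delta_i \ge 1/n^2$ arising in the hybrid argument; and one must be sure the equivalence lemma is applied in the direction that produces hardness of the (non-uniform) form consumed by the amplification theorem. Since all three ingredients are proved above with mutually compatible parameters, the corollary follows immediately.
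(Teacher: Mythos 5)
Your proposal is correct and matches the paper's intent exactly: the corollary is stated with no written proof precisely because it follows by composing the equivalence lemma (Assumption~\ref{assum:strong} $\Leftrightarrow$ Assumption~\ref{assum:IW}), the amplification theorem (Assumption~\ref{assum:IW} $\Rightarrow$ Assumption~\ref{assum:main}), and the Nisan--Wigderson theorem, in the order you describe. Your parameter bookkeeping remarks are consistent with the choices already made in those proofs, so nothing further is needed.
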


\section{Derandomized Code Construction}
\label{se:codes}

As mentioned before, the bound given by Gilbert and
Varshamov\cites{gilbert,varshamov} states that, for a $q$-ary
alphabet, large enough $n$, and for any value of $0 \leq \delta \leq
(q-1)/q$, there are codes with length $n$, relative distance at least
$\delta$ and rate $r \geq 1-h_q(\delta)$, where $h_q$ is the $q$-ary
entropy function.
Moreover, a random linear code (having each entry of its generator
matrix chosen uniformly at random) achieves this bound.  In fact, for
all $0 \leq r \leq 1$, in the family of linear codes with length $n$
and (designed) dimension $nr$, all but only a sub-constant fraction of
the codes achieve the bound when $n$ grows to infinity. However, the
number of codes in the family is exponentially large ($q^{nr}$) and we
do not have an \emph{a priori} indication on which codes in the family
are good. Putting it differently, a randomized algorithm that merely
outputs a random generator matrix \emph{succeeds} in producing a code
on the GV~bound with probability $1-o(1)$.  However, the number of
random bits needed by the algorithm is $nk\log q$.  For simplicity, in
the sequel we only focus on binary codes, for which no explicit
construction approaching the GV~bound is known.

The randomized procedure above can be considerably derandomized by
considering a more restricted family of codes. Namely, fix a length
$n$ and a basis for the finite field $\F_{m}$, where $m \eqdef
2^{n/2}$. Then over such a basis there is a natural isomorphism
between the elements of $\F_{m}$ and the elements of the vector space
$\F_2^{n/2}$. Now for each $\alpha \in \F_m$, define the code
$\C_\alpha$ as the set $\{\langle x, \alpha x \rangle \mid x \in \F_m
\}$, where the elements are encoded in binary\footnote{These codes are
  attributed to J.~M.~Wozencraft (see~\cite{massey}).}. This binary
code has rate $1/2$.  Further, it is well known that $\C_\alpha$
achieves the GV~bound for all but $1-o(1)$ fraction of the choices of
$\alpha$. Hence in this family a randomized construction can obtain
very good codes using only $n/2$ random bits.  Here we see how the
pseudorandom generator constructed in the last section can
dramatically reduce the amount of randomness needed in all code
constructions.  Our observation is based on the composition of the
following facts:

\begin{enumerate}
\item \textit{Random codes achieve the Gilbert-Varshamov bound:} It is
  well known that a simple randomized algorithm that chooses the
  entries of a generator matrix uniformly at random obtains a linear
  code satisfying the Gilbert-Varshamov bound with overwhelming
  probability \cite{varshamov}.

\item \textit{Finding the minimum distance of a (linear) code can be
    performed in linear space:} One can simply enumerate all the
  codewords to find the minimum weight codeword, and hence, the
  distance of the code. This only requires linear amount of memory
  with respect to the block length.

\item \textit{Provided a \emph{hardness condition}, namely that
    sub-exponential space algorithms cannot compute all the problems
    in $\sE$, every linear space algorithm can be \emph{fooled} by an
    explicit pseudorandom generator:} This is what we obtained in
  Corollary~\ref{coro:strong}.
\end{enumerate}

Now we formally propose a general framework that can be employed to
derandomize a wide range of combinatorial constructions.

\begin{lem}
  \label{lem:derandom}
  Let $\mathcal{S}$ be a family of combinatorial objects of
  (binary-encoded) length $n$, in which an $\epsilon$ fraction of the
  objects satisfy a property $P$. Moreover, suppose that the family is
  efficiently samplable, that is, there is a polynomial-time algorithm
  (in $n$) that, for a given $i$, generates the $i\Th$ member of the
  family. Further assume that the property $P$ is verifiable in
  polynomial space. Then for every constant $k > 0$, under
  Assumption~\ref{assum:strong}, there is a constant $\ell$ and an
  efficiently samplable subset of $\mathcal{S}$ of size at most
  $n^\ell$ in which at least an $\epsilon - n^{-k}$ fraction of the
  objects satisfy $P$.
\end{lem}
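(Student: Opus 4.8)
The plan is to apply the pseudorandom generator $G$ from Corollary~\ref{coro:strong} to select the index $i$ that identifies a member of $\mathcal{S}$. Concretely, I would let $s := c \log n$ be the seed length of a pseudorandom generator whose output length is polynomial in $n$; since $G$ stretches $c\log n$ bits to any polynomial length, I can produce enough pseudorandom bits to index into $\mathcal{S}$ (whose members have binary-encoded length $n$, so there are at most $2^{\poly(n)}$ of them, and an index is a string of $\poly(n)$ bits). The derandomized subfamily is then defined as the multiset $\{ \text{the object indexed by } G(z) : z \in \{0,1\}^{c\log n} \}$, which has size at most $2^{c \log n} = n^{c} =: n^\ell$ and is efficiently samplable: given $z$, compute $G(z)$ in time $\poly(n)$ (the generator is polynomial-time) and then run the sampling algorithm for $\mathcal{S}$ on the resulting index, again in $\poly(n)$ time.

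The core of the argument is the distinguisher-versus-generator reduction. Suppose, for contradiction, that fewer than an $\epsilon - n^{-k}$ fraction of the $n^\ell$ objects in the derandomized subfamily satisfy $P$. Consider the algorithm $A$ that, on input a string $w$ interpreted as an index, runs the sampling algorithm to obtain the corresponding object of $\mathcal{S}$, then runs the polynomial-space verifier for $P$, and outputs $1$ iff the object satisfies $P$. Because sampling takes polynomial time (hence polynomial space) and $P$-verification takes polynomial space, $A$ runs in polynomial space, so $A \in \DSPACE[n^{O(1)}]$; after a suitable padding/re-scaling of the input length this fits the $\DSPACE[n]/O(n)$ model against which $G$ is guaranteed to be indistinguishable (one pads the index to length, say, $n$ itself, or equivalently invokes the generator at a larger parameter). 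When $A$ is fed a truly uniform index, it outputs $1$ with probability exactly $\epsilon$ (the true density of $P$ in $\mathcal{S}$), whereas when fed $G(z)$ for uniform seed $z$, it outputs $1$ with probability equal to the fraction of good objects in the derandomized subfamily, which by assumption is less than $\epsilon - n^{-k}$. Thus $A$ distinguishes the pseudorandom distribution from uniform with bias exceeding $n^{-k} \geq 1/n$ for all large $n$ (after the reparametrization so that the distinguishing bias $1/n'$ for the generator's output length $n'$ is still at most $n^{-k}$), contradicting the $(n,n,1/n)$-indistinguishability guaranteed by Corollary~\ref{coro:strong} under Assumption~\ref{assum:strong}. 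Here one must be slightly careful: Corollary~\ref{coro:strong} gives indistinguishability with bias $1/n'$ where $n'$ is the generator's output length, so I would run the generator at output length $n' := \max\{n, \lceil n^{k} \rceil\}$ (still requiring only $O(\log n)$ seed bits), ensuring the bias $1/n' \le n^{-k}$ is smaller than the gap we have created.

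The main obstacle — and the step requiring the most care — is matching the resource bounds: the generator of Corollary~\ref{coro:strong} fools $\DSPACE[n']/O(n')$ distinguishers with advice, where $n'$ is the output length, and I must verify that the composed algorithm $A$ (sample the object, then verify $P$) genuinely fits inside this class after the input is padded to length $n'$. This is where the two hypotheses of the lemma are used precisely: "efficiently samplable" gives polynomial \emph{time} (hence polynomial space) for producing the object from its index, and "$P$ verifiable in polynomial space" gives the polynomial-space bound for the second stage; their composition is polynomial space, and polynomial in $n$ is polynomial in $n'$, so for an appropriate polynomial and large enough $n$ it is absorbed into the $O(n')$ space budget, with no advice needed (advice only makes the generator's guarantee stronger). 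A secondary subtlety is that "infinitely many $n$" in the indistinguishability definition must be converted to the "for all large $n$" form needed here: since the negation of the lemma's conclusion, if it held for infinitely many $n$, would already supply infinitely many distinguishing lengths, this is automatic — the contradiction only needs the distinguishing behavior on infinitely many input lengths, which matches the definition exactly. Everything else (counting the subfamily size as $n^\ell$, efficient samplability of the subfamily) is routine.
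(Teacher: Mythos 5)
Your proposal is correct and follows essentially the same route as the paper: compose the sampler and the polynomial-space verifier into a single distinguisher $A$, run the generator of Corollary~\ref{coro:strong} at a polynomially larger output-length parameter (seed length $c\cdot\max\{s,k\}\cdot\log n$ in the paper's notation) so that both the space bound of $A$ and the required bias $n^{-k}$ are absorbed, and take the subfamily indexed by all $n^\ell$ seeds. Your extra care about padding the input length and about the ``infinitely many $n$'' quantifier is consistent with (and if anything slightly more explicit than) the paper's own argument.
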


\begin{proof}
  Let $A$ be the composition of the sampling algorithm with the
  verifier for $P$.  By assumption, $A$ needs space $n^s$, for some
  constant $s$. Furthermore, when the input of $A$ is chosen randomly,
  it outputs $1$ with probability at least $\epsilon$. Suppose that
  the pseudorandom generator of Corollary~\ref{coro:strong} transforms
  $c \log n$ truly random bits into $n$ pseudorandom bits, for some
  constant $c>0$. Now it is just enough to apply the pseudorandom
  generator on $c \cdot \max\{s, k\} \cdot \log n$ random bits and
  feed $n$ of the resulting pseudorandom bits to $A$. By this
  construction, when the input of the pseudorandom generator is chosen
  uniformly at random, $A$ must still output $1$ with probability
  $\epsilon-n^{-k}$ as otherwise the pseudorandomness assumption would
  be violated. Now the combination of the pseudorandom generator and
  $A$ gives the efficiently samplable family of the objects we want,
  for $\ell \eqdef c \cdot \max\{s, k\}$, as the random seed runs over
  all the possibilities.
\end{proof}

As the distance of a code is obviously computable in linear space by
enumeration of all the codewords, the above lemma immediately implies
the existence of a (constructible) polynomially large family of codes
in which at least $1-n^{-k}$ of the codes achieve the GV~bound, for
arbitrary $k$.

\begin{rems}
  As shown in the original work of Nisan and Wigderson \cite{NW}
  (followed by the hardness amplification of Impagliazzo and Wigderson
  \cite{IW}) all randomized polynomial-time algorithms (namely, the
  complexity class $\BPP$) can be fully derandomized under the
  assumption that $\sE$ cannot be computed by Boolean circuits of
  sub-exponential size. This assumption is also sufficient to
  derandomize probabilistic constructions that allow a (possibly
  non-uniform) polynomial-time verification procedure for deciding
  whether a particular object has the desirable properties. For the
  case of good error-correcting codes, this could work if we knew of a
  procedure for computing the minimum distance of a linear code using
  circuits of size polynomial in the length of the code.  However, it
  turns out that (the decision version of) this problem is
  $\NP$-complete \cite{alex}, and even the approximation version
  remains $\NP$-complete \cite{madhu}. This makes such a possibility
  unlikely.

  However, a key observation, due to Klivans and van~Melkebeek
  \cite{KvM}, shows that the Nisan-Wigderson construction (as well as
  the Impagliazzo-Wigderson amplification) can be \emph{relativized}.
  Namely, starting from a hardness assumption for a certain family of
  \emph{oracle circuits} (i.e., Boolean circuits that can use special
  gates to compute certain Boolean functions as \emph{black box}) one
  can obtain pseudorandom generators secure against oracle circuits of
  the same family. In particular, this implies that any probabilistic
  construction that allows polynomial~time verification using $\NP$
  oracles (including the construction of good error-correcting codes)
  can be derandomized by assuming that $\sE$ cannot be computed by
  sub-exponential sized Boolean circuits that use $\NP$ oracle
  gates. However, the result given by Lemma~\ref{lem:derandom} can be
  used to derandomize a more general family of probabilistic
  constructions, though it needs a slightly stronger hardness
  assumption which is still plausible.
\end{rems}

\musicBoxGV


\Chapter{Concluding Remarks}
\epigraphhead[70]{\epigraph{\textsl{``To achieve great things, two
      things are needed; a plan, and not quite enough
      time.''}}{\textit{--- Leonard Bernstein}}}
\label{chap:conclude}

In this thesis, we investigated the role of objects studied at the
core of theoretical computer science--namely, randomness extractors,
condensers and pseudorandom generators--in efficient construction of
combinatorial objects suitable for more practical applications. The
applications being considered all share a coding-theoretic flavor and
include:
\begin{enumerate}
\item Wiretap coding schemes, where the goal is to provide
  infor\-mation-theo\-retic secrecy in a communication channel that is
  partially observable by an adversary (Chapter~\ref{chap:wiretap});

\item Combinatorial group testing schemes, that allow for efficient
  identification of sparse binary vectors using potentially unreliable
  disjunctive measurements (Chapter~\ref{chap:testing});

\item Capacity achieving codes, which provide optimally efficient and
  reliable transmission of information over unreliable discrete
  communication channels (Chapter~\ref{chap:capacity});

\item Codes on the Gilbert-Varshamov bound, which are error-correcting
  codes whose rate-distance trade-off matches what achieved by
  probabilistic constructions (Chapter~\ref{chap:gv}).
\end{enumerate}

We conclude the thesis by a brief and informal discussion of the
obtained results, open problems and possible directions for future
research.

\section*{Wiretap Protocols}

In Chapter~\ref{chap:wiretap} we constructed rate-optimal wiretap
schemes from optimal affine extractors. The combinatorial structure of
affine extractors guarantees almost perfect privacy even in presence
of linear manipulation of information. This observation was the key
for our constructions of infor\-mation-theoretically optimal schemes
in presence of noisy channels, active intruders, and linear network
coding.

Despite being sufficiently general for a wide range of practical
applications, it makes sense to consider different types of
intermediate processing.  We showed in Section~\ref{subsec:arbitrary}
that, at the cost of giving up zero leakage, it is possible to use
seeded extractors to provide secrecy in presence of arbitrary forms of
transformations. However, in order to attain zero leakage, it becomes
inevitable to construct seedless, invertible extractors for a class of
random sources that capture the nature of post-processing being
allowed.

For example, suppose that the encoded information is transmitted
through a packet network towards a destination, where information is
arbitrarily manipulated by intermediate routers, but is routed from
the source to the destination through $k \geq 2$ separated paths. In
this case, the intruder may learn a limited amount of information from
each of the $k$ components of the network. Similar arguments as what
presented in Chapter~\ref{chap:wiretap} can now be used to show that
the object needed for ensuring secrecy in this ``route-disjoint''
setting is invertible, $k$-source extractors. Shaltiel
\cite{ref:mileage} demonstrates that his method for boosting the
output size of extractors using output-length optimal seeded
extractors (that is the basis of our technique for making seedless
extractors invertible) can be extended to the case of two-source
extractors as well.

On the other hand, if the route-disjointness condition that is assumed
in the above example is not available, zero leakage can no longer be
guaranteed without imposing further restrictions (since, as discussed
in Section~\ref{subsec:arbitrary}, this would require seedless
extractors for general sources, which do not exist).  However, assume
that the intermediate manipulations are carried out by computationally
bounded devices (a reasonable assumption to model the real world).  A
natural candidate for modeling resource-bounded computation is the
notion of small-sized Boolean circuits.  The secrecy problem for this
class of transformations leads to invertible extractors for the
following class of sources:

\begin{quote}
  For an arbitrary Boolean function $C\colon \zo^n \to \zo$ that is
  computable by Boolean circuits of bounded size, the source is
  uniformly distributed on the set of inputs $x \in \zo^n$ such that
  $C(x) = 0$ (assuming that this set has a sufficiently large size).
\end{quote}

In a recent work of Shaltiel \cite{ref:Sha09}, this type of extractors
have been studied under the notion of ``extractors for
\emph{recognizable} sources'' (a notion that can be specialized to
different sub-classes depending on the bounded model of computation
being considered.

On the other hand, Trevisan and Vadhan \cite{TV00} introduce the
related notion of extractors for \emph{samplable sources}, where a
samplable source is defined as the image of a small-sized circuit
(having multiple outputs) when provided with a uniformly random
input. They proceed to show explicit constructions of such extractors
assuming suitable computational hardness assumptions (which turn out
to be to some extent necessary for such extractors to be
constructible).  It is straightforward to see that their techniques
can be readily extended to construction of explicit extractors for
sources recognizable by small-sized circuits (using even weaker
hardness assumptions). However, the technique works when the source
entropy is assured to be substantially large, and even so, is unable
to produce a nearly optimal output length.  To this date, explicit
construction of better extractors, under mild computational
assumptions, for sources that are samplable (or recognizable) by
small-sized circuits remains an important open problem.

Observe that the technique of using extractors for construction of
wiretap protocols as presented in Chapter~\ref{chap:wiretap} achieves
optimal rates only if the wiretap channel (i.e., the channel that
delivers intruder's information) is of \emph{erasure} nature.  That
is, we have so far assumed that, after some possible post-processing
of the encoded information, the intruder observes an arbitrarily
chosen, but bounded, subset of the bits being transmitted and remains
unaware of the rest. There are different natural choices of the
wiretap channel that can be considered as well. For example, suppose
that the intruder observes a \emph{noisy} version of the entire
sequence being transmitted (e.g., when a fraction of the encoded bits
get randomly flipped before being delivered to the intruder).  An
interesting question is to see whether invertible extractors (or a
suitable related notion) can be used to construct
infor\-mation-theoretically optimal schemes for such variations as
well.

\section*{Group Testing}

Non-adaptive group testing schemes are fundamental combinatorial
objects of both theoretical and practical interest. As we showed in
Chapter~\ref{chap:testing}, strong condensers can be used as building
blocks in construction of noise-resilient group testing and threshold
group testing schemes.

The factors that greatly influence the quality of our constructions
are the seed length and output length of the condenser being used. As
we saw, in order to obtain an asymptotically optimal number of
measurements, we need explicit constructions of extractors and
lossless condensers that achieve a logarithmic seed length, and output
length that is different from the source entropy by small additive
terms. While, as we saw, there are very good existing constructions of
both extractors and lossless condensers that can be used, they are
still sub-optimal in the above sense.  Thus, any improvement on the
state of the art in explicit construction of extractors and lossless
condensers will immediately improve the qualities of our explicit
constructions.

Moreover, our constructions of noise-resilient schemes with sublinear
decoding time demonstrates a novel application for list-decodable
extractors and condensers. This motivates further investigation of
these objects for improvement of their qualities.

In Section~\ref{sec:threshold}, we introduced the combinatorial notion
of $(d,e;u)$-regular matrices, that is used as an intermediate tool
towards obtaining threshold testing designs. Even though our
construction, assuming an optimal lossless condenser, matches the
probabilistic upper bound for regular matrices, the number of
measurements in the resulting threshold testing scheme will be larger
than the probabilistic upper bound by a factor of $\Omega(d \log n)$.
Thus, an outstanding question is coming up with a direct construction
of disjunct matrices that match the probabilistic upper bound.

Despite this, the notion of regular matrices may be of independent
interest, and an interesting question is to obtain (nontrivial)
concrete lower bounds on the number of rows of such matrices in terms
of the parameters $d, e, u$.

Moreover, in our constructions we have assumed the threshold $u$ to be
a fixed constant, allowing the constants hidden in asymptotic notions
to have a poor dependence on $u$.  An outstanding question is whether
the number of measurements can be reasonably controlled when $u$
becomes large; e.g., $u = \Omega(d)$.

Another interesting problem is decoding in the threshold model.  While
our constructions can combinatorially guarantee identification of
sparse vectors, for applications it is important to have an efficient
reconstruction algorithm as well. Contrary to the case of strongly
disjunct matrices that allow a straightforward decoding procedure
(cf. \cite{ref:thresh2}), it is not clear whether in general our
notion of disjunct matrices allow efficient decoding, and thus it
becomes important to look for constructions that are equipped with
efficient reconstruction algorithms.

Finally, for clarity of the exposition, in this presentation we have
only focused on asymptotic trade-offs, and it would be nice to obtain
good, non-asymptotic, estimates on the obtained bounds that are useful
for applications.

\section*{Capacity Achieving Codes}

The general construction of capacity-achieving codes presented in
Chapter~\ref{chap:capacity} can be used to obtain a polynomial-sized
ensemble of codes of any given block length $n$, provided that nearly
optimal linear extractors or lossless condensers are available. In
particular, this would require a logarithmic seed length and an output
length which is different from the input entropy by an arbitrarily
small constant fraction of the entropy.  Both extractors and lossless
condensers constructed by Guruswami, Umans, and Vadhan
\cite{ref:GUV09} achieve this goal, and as we saw in
Chapter~\ref{chap:extractor}, their lossless condenser can be easily
made linear.  However, to the best of our knowledge, to this date no
explicit construction of a linear extractor with logarithmic seed
length that extracts even a constant fraction of the source entropy is
known.


Another interesting problem concerns the duality principle presented
in Section~\ref{sec:dualityAffine}.  As we showed, linear affine
extractors and lossless condensers are dual objects. It would be
interesting to see whether a more general duality principle exist
between extractors and lossless condensers. It is not hard to use
basic Fourier analysis to slightly generalize our result to linear
extractors and lossless condensers for more general (not necessarily
affine) sources. However, since condensers for general sources are
allowed to have a positive, but negligible error (which is \emph{not}
the case for linear affine condensers), controlling the error to a
reasonable level becomes a tricky task, and forms an interesting
problem for future research.

\section*{The Gilbert-Varshamov Bound}

As we saw in Chapter~\ref{chap:gv}, a suitable computational
assumption implies a deterministic polynomial-time algorithm for
explicit construction of polynomially many linear codes of a given
length $n$, almost all of which attaining the Gilbert-Varshamov
bound. That is, a randomly chosen code from such a short list
essentially behaves like a fully random code and in particular, is
expected to attain the same rate-distance tradeoff.

An important question that remains unanswered is whether a single code
of length $n$ attaining the Gilbert-Varshamov bound can be efficiently
constructed from a list of $\poly(n)$ codes in which an overwhelming
fraction attain the bound. In effect, we are looking for an efficient
\emph{code product} to combine a polynomially long list of codes (that
may contain a few \emph{unsatisfactory} codes) into a single code that
possesses the qualities of the overwhelming majority of the codes in
the ensemble. Since the computational problem of determining (or even
approximating) the minimum distance of a linear code is known to be
intractable, such a product cannot be constructed by simply examining
the individual codes. It is also interesting to consider impossibility
results, that is, models under which such a code product may become as
difficult to construct as finding a good code ``from scratch''.

Finally, a challenging problem which still remains open is explicit
construction of codes (or even small ensembles of codes) that attain
the Gilbert-Varsha\-mov bound without relying on unproven
assumptions. For sufficiently large alphabets (i.e., of size $49$ or
higher), geometric Goppa codes are known to even surpass the GV bound
\cite{tsvz:82}. However, for smaller alphabets, or rates close to zero
over constant-sized alphabets, no explicit construction attaining the
GV bound is known. It also remains unclear whether in such cases the
GV bound is optimal; that is, whether there are families of codes, not
necessarily explicit, that beat the bound.

\musicBoxConclusion


\appendix

\Chapter{A Primer on Coding Theory}
\epigraphhead[70]{\epigraph{\textsl{``A classic is a book that has never finished
saying what it has to say.''}}{\textit{--- Italo Calvino}}}
\label{app:coding}

In this appendix, we briefly overview the essential notions of coding
theory that we have used in the thesis. For an extensive treatment of
the theory of error-correcting codes (an in particular, the facts
collected in this appendix), we refer the reader to the books by
MacWilliams and Sloane \cite{ref:MS}, van~Lint \cite{ref:vanl}, and
Roth \cite{ref:Roth} on the topic.

\section{Basics}

Let $\Sigma$ be a finite alphabet of size $q > 1$. A code\index{code}
$\C$ of length\index{code!length} $n$ over $\Sigma$ is a non-empty
subset of $\Sigma^n$. Each element of $\C$ is called a
\emph{codeword}\index{codeword} and $|\C|$ defines the
\emph{size}\index{code!size} of the code. The
\emph{rate}\index{code!rate} of the $\C$ is defined as $\log_q |\C| /
n$. An important choice for the alphabet is $\Sigma = \zo$, which
results in a \emph{binary code}\index{code!binary}. Typically, we
assume that $q$ is a prime power and take $\Sigma$ to be the finite
field $\F_q$.

The Hamming distance\index{Hamming distance} between vectors $w :=
(w_1, \ldots, w_n) \in \Sigma^n$ and $w' := (w'_1, \ldots, w'_n) \in
\Sigma^n$ is defined as the number of positions at which $w$ and $w'$
differ. Namely,
\[
\dist(w, w') := |\{ i\in [n]\colon w_i \neq w'_i \}|.
\]
The \emph{Hamming weight}\index{Hamming weight} of a vector $w \in
\F_q^n$ (denoted by $\wgt(w)$) is the number of its nonzero
coordinates; i.e.,
\[
\wgt(w) := |\{ i\in [n]\colon w_i \neq 0 \}|.
\]
Therefore, when $w, w' \in \F_q^n$, we have \[\dist(w, w') = \wgt(w -
w').\]

\noindent The \emph{minimum distance}\index{code!minimum distance} of
a code $\C \subseteq \Sigma^n$ is the quantity
\[
\dist(\C) := \min_{w, w' \in C} \dist(w, w'),
\]
and the \emph{relative distance} of the code is defined as
$\dist(\C)/n$. A family of codes of growing block lengths $n$ is
called \emph{asymptotically good}\index{code!asymptotically good} if,
for large enough $n$, it achieves a positive constant rate (i.e.,
independent of $n$) and a positive constant relative distance.

A code $\C \in \F_q^n$ is called \emph{linear}\index{code!linear} if
it is a vector subspace of $\F_q^n$.  In this case, the
\emph{dimension}\index{code!linear!dimension} of the code is defined
as its dimension as a subspace, and the rate would be given by the
dimension divided by $n$. A code $\C$ with minimum distance $d$ is
denoted by the shorthand $(n, \log_q |\C|, d)_q$, and when $\C$ is
linear with dimension $k$, by $[n, k, d]_q$.  The subscript $q$ is
omitted for binary codes. Any linear code must include the all-zeros
word $0^n$.  Moreover, due to the linear structure of such codes, the
minimum distance of a linear code is equal to the minimum Hamming
weight of its nonzero codewords.

A \emph{generator matrix}\index{code!linear!generator matrix} $G$ for
a linear $[n,k,d]_q$-code $\C$ is a $k \times n$ matrix of rank $k$
over $\F_q$ such that
\[
\C = \{ xG\colon x \in \F_q^k \}.
\]
Moreover, a \emph{parity check matrix}\index{code!linear!parity check
  matrix} $H$ for $\C$ is an $r \times n$ matrix over $\F_q$ of rank
$n-k$, for some $r \geq n-k$, such that\footnote{Here we consider
  vectors as row vectors, and denote column vectors (e.g., $x^\top$)
  as transpose of row vectors.}
\[
\C = \{ x \in \F_q^n\colon H x^\top = 0 \}.
\]
Any two such matrices are orthogonal to one another, in that we must
have $G H^\top = 0$.  It is easy to verify that, if $\C$ has minimum
distance $d$, then every choice of up to $d-1$ columns of $H$ are
linearly independent, and there is a set of $d$ columns of $H$ that
are dependent (and the dependency is given by a codeword of minimum
weight).

The \emph{dual}\index{code!dual} of a linear code $\C$ of length $n$
over $\F_q$ (denoted by $\C^\top$) is defined as the dual vector space
of the code; i.e., the set of vectors in $\F_q^n$ that are all
orthogonal to every codeword in $\C$:
\[
\C^\perp := \{ c \in \F_q^n\colon (\forall w \in \C)\ c \cdot w^\top =
0 \}.
\]
The dual of a $k$-dimensional code has dimension $n-k$, and
$(\C^\perp)^\perp = \C$. Moreover, a generator matrix for the code
$\C$ is a parity check matrix for $\C^\perp$ and vice versa.

An \emph{encoder} for a code $\C$ with $q^k$ codewords is a function
$E\colon \Sigma^k \to \Sigma^n$ whose image is the code $\C$. In
particular, this means that $E$ must be injective (one-to-one).
Moreover, any generator matrix for a linear code defines the encoder
$E(x) := xG$. The input $x$ is referred to as the \emph{message}.  We
will consider a code \emph{explicit}\index{code!explicit} if it is
equipped with a polynomial-time computable encoder.  For linear codes,
this is equivalent to saying that there is a deterministic polynomial
time algorithm (in the length $n$) that outputs a generator, or parity
check, matrix for the code\footnote{There are more strict
  possibilities for considering a code explicit; e.g., one may require
  each entry of a generator matrix to be computable in logarithmic
  space.}.

Given a message $x \in \Sigma^k$, assume that an encoding of $x$ is
obtained using an encoder $E$; i.e., $y := E(x) \in
\Sigma^n$. Consider a \emph{communication channel} through which the
encoded sequence $y$ is communicated. The output of the channel
$\tilde{y} \in \Sigma^n$ is delivered to a receiver, whose goal is to
reconstruct $x$ from $\tilde{y}$. Ideally, if the channel is perfect,
we will $\tilde{y} = y$ and, since $E(x)$ is injective, deducing $x$
amounts to inverting the function $E$, which is an easy task for
linear codes (in general, this can be done using Gaussian
elimination).  However, consider a \emph{closest distance} decoder
$D\colon \Sigma^n \to \Sigma^k$ that, given $\tilde{y}$, outputs an $x
\in \Sigma^k$ for which $\dist(E(x), \tilde{y})$ is minimized. It is
easy to see that, even if we allow the channel to arbitrarily alter up
to $t := \floor{(d-1)/2}$ of the symbols in the transmitted sequence
$y$ (in symbols, if $\dist(y, \tilde{y}) \leq t$), then we can still
ensure that $x$ is uniquely deducible from $\tilde{y}$; in particular,
we must have $D(\tilde{y}) = x$.

For a linear code over $\F_q$ with parity check matrix $H$, a
\emph{syndrome} corresponding to a sequence $\tilde{y} \in \F_q^n$ is
the vector $H {\tilde{y}}^\top$. Thus, $\tilde{y}$ is a codeword if
and only if its corresponding syndrome is the zero vector. Therefore,
in the channel model above, if the syndrome corresponding to the
received word $\tilde{y}$ is nonzero, we can be certain that
$\tilde{y} \neq y$.  The converse is not necessarily true. However, it
is a simple exercise to see that if $\tilde{y}$ and $\tilde{y}' \in
\F_q^n$ are both such that $\tilde{y} \neq \tilde{y}'$ and moreover
$\dist(y, \tilde{y}) \leq t$ and $\dist(y, \tilde{y}) \leq t$, then
the corresponding syndromes must be different; i.e., $H
{\tilde{y}}^\top \neq H \mbox{$\tilde{y}'$}^\top$.  Therefore,
provided that the number of errors is no more than the
``unique-decoding threshold'' $t$, it is ``combinatorially'' possible
to uniquely reconstruct $x$ from the syndrome corresponding to the
received word.  This task is known as \emph{syndrome
  decoding}\index{syndrome decoding}.  However, ideally it is
desirable to have an efficient algorithm for syndrome decoding as well
that runs in polynomial time in the length of the code. In general,
syndrome decoding for a linear code defined by its parity check matrix
is $\mathsf{NP}$-hard (see \cite{ref:BMT78}). However, a variety of
explicit code constructions are equipped with efficient syndrome
decoding algorithms.

As discussed above, a code with minimum distance $d$ can tolerate up
to $t := \floor{(d-1)/2}$ errors.  Moreover, if the number of errors
can potentially be larger than $t$, then a confusion becomes
unavoidable and unique decoding can no longer be guaranteed. However,
the notion \emph{list decoding}\index{code!list decodable} allows to
control the ``amount of confusion'' when the number of errors is more
than $t$. Namely, for a \emph{radius} $\rho$ and integer $\ell$
(referred to as the \emph{list size}), a code $\C \subseteq [q]^n$ is
called $(\rho, \ell)$ list-decodable if the number of codewords within
a distance $\rho n$ of any vector in $[q]^n$ is at most $\ell$.  In
this view, unique decoding corresponds to the case $\ell=1$, and a
code with minimum distance $d$ is $(\frac{1}{n}\floor{(d-1)/2}, 1)$
list-decodable. However, for many theoretical and practical purposes,
a small (but possibly much larger than~$1$) list size may be
sufficient.

\section{Bounds on codes}

For positive integers $n, d, q$, denote by $A_q(n, d)$ the maximum
size of a code with length $n$ and minimum distance $d$ over a $q$-ary
alphabet, and define
\[
\alpha_q(\delta) := \lim_{n \to \infty} \frac{\log_q A(n, \delta
  n)}{n}
\]
as the ``highest'' rate a code with relative distance $\delta$ can
asymptotically attain.  The exact form of the function
$\alpha_q(\cdot)$ is not known for any $q$; however, certain lower and
upper bounds for this quantity exist. In this section, we briefly
review some important bounds on $\alpha_q(\delta)$.

\subsection*{The Gilbert-Varshamov bound} \index{Gilbert-Varshamov
  bound}

Using the probabilistic method, it can be shown that a random linear
code (constructed by picking the entries of its generator, or parity
check, matrix uniformly and independently at random) with overwhelming
probability attains a dimension-distance tradeoff given by
\[
k \geq n (1 - h_q(d/n)),
\]
where $h_q(\cdot)$ is the $q$-ary entropy function defined as
\begin{equation} \label{eqn:HQ} \index{entropy function} h_q(x) := x
  \log_q(q-1) - x \log_q(x) - (1-x) \log_q(1-x).
\end{equation}
Thus we get the lower bound \[ \alpha_q(\delta) \geq 1-h_q(\delta) \]
on the function $\alpha_q(\cdot)$, known as the Gilbert-Varshamov
bound.

\subsection*{The Singleton bound} \index{Singleton bound}

On the negative side, the Singleton bound states that the minimum
distance $d$ of any $q$-ary code with $q^k$ or more codewords must
satisfy $d \leq n-k+1$. Codes that attain this bound with equality are
known as \emph{maximum distance separable (MDS)} codes.
\index{code!MDS} Therefore we get that, regardless of the alphabet
size, one must have
\[
\alpha_q(\delta) \leq 1 - \delta.
\]

\subsection*{Lower bounds for fixed alphabet size}

When the alphabet size $q$ is fixed, there are numerous lower bounds
known for the function $\alpha_q(\cdot)$. Here we list several such
bounds.

\begin{itemize}
\item {\em Hamming (sphere packing) bound: } $\alpha_q(\delta) \leq 1
  - h_q(\delta/2).$

\item {\em Plotkin bound: } $ \alpha_q(\delta) \leq \max\{ 0,
  1-\delta(q/(q-1)) \}.  $

\item {\em McEliece, Rodemich, Ramsey, and Welch (MRRW) bound: }
  \[
  \alpha_2(\delta) \leq h_2
  \big(\frac{1}{2}-\sqrt{\delta(1-\delta)}\big).
  \]
\end{itemize}
For the binary alphabet, these bounds are depicted in
Figure~\ref{fig:codebounds}.

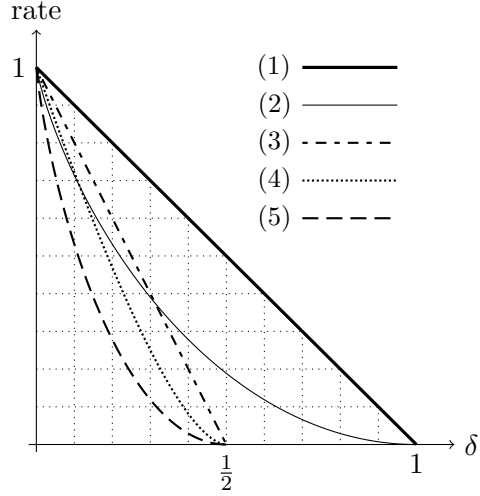
\begin{figure}[t]
  \centering
  \begin{tikzpicture}[xscale=5, yscale=5]
    \foreach \x in {1, ..., 9} { \draw[thin, dotted] (0.1*\x,
      1-0.1*\x) -- (0.1*\x, 0); \draw[thin, dotted] (0, 0.1*\x) -- (1
      - 0.1*\x, 0.1*\x); } \draw[->] (-0.02,0) -- (1.1,0) node[right]
    {$\delta$}; \draw[->] (0,-0.02) -- (0,1.1) node[above]
    {$\mathrm{rate}$}; \draw(0, 1) node[left, scale=1]{${1}$};
    \draw(1, 0) node[below, scale=1]{$1$};
    \draw[very thick] (0,1) -- (1,0); 
    \draw[thick, dash pattern=on 2pt off 3pt on 4pt off 3pt] (0,1) --
    (0.5,0) node[below]{$\frac{1}{2}$}; 
    \draw[thick, densely dotted] plot[raw gnuplot, id=MRRW] function{
      h(x) = (- x * log(x) - (1-x) * log(1-x) ) / log(2); mrrw(x) = (x
      == 0) ? 1 : ( (x==0.5) ? 0: h(0.5-sqrt(x*(1-x)))); plot
      [x=0:0.5] mrrw(x); }; \draw[solid] plot[raw gnuplot, id=HAMM]
    function{ h(x) = (- x * log(x) - (1-x) * log(1-x) ) / log(2);
      hamm(x) = (x == 0) ? 1 : ( (x==1) ? 0: 1 - h(x/2)); plot [x=0:1]
      hamm(x); }; \draw[thick, dash pattern=on 6pt off 3pt] plot[raw
    gnuplot, id=GV] function{ h(x) = (- x * log(x) - (1-x) * log(1-x)
      ) / log(2); gv(x) = (x == 0) ? 1 : ( (x==0.5) ? 0: 1 - h(x));
      plot [x=0:0.5] gv(x); };

    \draw[very thick] (0.7, 1) node[left]{{\small $(1)$}} -- (0.95,
    1); \draw (0.7, 0.9) node[left]{{\small $(2)$}} -- (0.95, 0.9);
    \draw[thick, dash pattern=on 2pt off 3pt on 4pt off 3pt](0.7, 0.8)
    node[left]{{\small $(3)$}} -- (0.95, 0.8); \draw[thick, densely
    dotted](0.7, 0.7) node[left]{{\small $(4)$}} -- (0.95, 0.7);
    \draw[thick, dash pattern=on 6pt off 3pt](0.7, 0.6)
    node[left]{{\small $(5)$}} -- (0.95, 0.6);
  \end{tikzpicture} \hspace{1cm}
  \caption[Bounds on binary codes]{ Bounds on binary codes: (1)
    Singleton bound, (2) Hamming bound, (3) Plotkin bound, (4) MRRW
    bound, (5) Gilbert-Varshamov bound.}
  \label{fig:codebounds}
\end{figure}

\subsection*{The Johnson Bound on List Decoding}

Intuitively, it is natural to expect that a code with large minimum
distance must remain a good list-decodable code when the list-decoding
radius exceeds half the minimum distance. The Johnson bound makes this
intuition rigorous.  Below we quote a strengthened version of the
bound.

\begin{thm} (cf.~\cite{ref:Venkat}*{Section~3.3}) \label{thm:johnson}
  Let $\C$ be a $q$-ary code of length $n$, and relative distance
  $\delta \geq (1-1/q)(1-\delta')$ for some $\delta' \in (0,1)$. Then
  for any $\gamma > \sqrt{\delta'}$, $\C$ is $((1-1/q)(1-\gamma),
  \ell)$ list-decodable for
  \[
  \ell = \min\{ n(q-1), \frac{1-\delta'}{\gamma^2 - \delta'}\}.
  \]
  Moreover, the code $\C$ is $((1-1/q)(1-\sqrt{\delta'}), 2n(q-1)-1)$
  list-decodable. \qed
\end{thm}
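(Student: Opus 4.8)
The plan is to use the standard geometric (simplex) embedding of $q$-ary words into Euclidean space and then reduce all three claims to elementary facts about collections of vectors with bounded pairwise inner products. First I would fix the embedding $\psi\colon [q] \to \mathbb{R}^q$, $\psi(a) := e_a - \tfrac{1}{q}\mathbf{1}$, and extend it coordinate-wise to $\Psi\colon [q]^n \to \mathbb{R}^{nq}$ by $\Psi(x) := (\psi(x_1),\ldots,\psi(x_n))$. A direct computation gives $\langle \psi(a),\psi(a)\rangle = 1-\tfrac1q$ and $\langle \psi(a),\psi(b)\rangle = -\tfrac1q$ for $a \neq b$, hence $\|\Psi(x)\|^2 = n(1-\tfrac1q)$ and $\langle \Psi(x),\Psi(y)\rangle = n(1-\tfrac1q) - \dist(x,y)$. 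Normalising, $v_x := \Psi(x)/\sqrt{n(1-1/q)}$ are unit vectors with $\langle v_x,v_y\rangle = 1 - \dist(x,y)/(n(1-\tfrac1q))$. Plugging in the minimum-distance hypothesis $\dist(c,c') \ge \delta n \ge (1-\tfrac1q)(1-\delta')n$ for distinct codewords yields $\langle v_c,v_{c'}\rangle \le \delta'$, while for any received word $r$ and any codeword $c$ with $\dist(c,r) \le (1-\tfrac1q)(1-\gamma)n$ we get $\langle v_c,v_r\rangle \ge \gamma$. Thus all claims become statements about a family $v_1,\ldots,v_\ell$ of unit vectors (images of the codewords in the list) that are pairwise $\delta'$-bounded and each have inner product at least $\gamma$ with a fixed unit vector $v_r$.

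For the bound $\ell \le (1-\delta')/(\gamma^2-\delta')$ I would use a one-line second-moment argument: with $u := \sum_{i=1}^\ell v_i$ one has $\|u\|^2 \le \ell + \ell(\ell-1)\delta'$, whereas Cauchy--Schwarz against $v_r$ gives $\ell^2\gamma^2 \le \langle u,v_r\rangle^2 \le \|u\|^2$; comparing these and dividing by $\ell$ gives $\ell(\gamma^2-\delta') \le 1-\delta'$, which is exactly the claim whenever $\gamma^2 > \delta'$. For the bound $\ell \le n(q-1)$ and for the boundary case $\gamma = \sqrt{\delta'}$, I would instead project orthogonally to $v_r$: set $w_i := v_i - \langle v_i,v_r\rangle v_r$, so that $\langle w_i,w_j\rangle = \langle v_i,v_j\rangle - \langle v_i,v_r\rangle\langle v_j,v_r\rangle \le \delta' - \gamma^2$, which is strictly negative when $\gamma > \sqrt{\delta'}$ and non-positive when $\gamma = \sqrt{\delta'}$. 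Since the $v_i$ all lie in the $n(q-1)$-dimensional subspace $W \subseteq \mathbb{R}^{nq}$ in which each $\psi$-block has zero coordinate sum, the $w_i$ lie in the $(n(q-1)-1)$-dimensional subspace $W \cap v_r^{\perp}$; the claims then follow from the folklore facts that in $\mathbb{R}^m$ at most $m+1$ nonzero vectors can have pairwise strictly negative inner products, and at most $2m$ can have pairwise non-positive inner products (both proved by induction on $m$).

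A few degenerate cases need attention. If some codeword $c_i$ in the list equals the received word $r$, then for every other codeword $c_j$ we would have $\langle v_j,v_i\rangle = \langle v_j,v_r\rangle \ge \gamma > \delta'$ (using $\gamma \ge \sqrt{\delta'} > \delta'$ for $\delta' \in (0,1)$), contradicting the minimum-distance bound; hence in that situation the list has size one and every bound holds trivially. Otherwise every $w_i$ is nonzero and the counting lemmas apply verbatim, giving $\ell \le n(q-1)$ in the strict case and $\ell \le 2(n(q-1)-1) < 2n(q-1)-1$ in the boundary case. I expect the main obstacle to be pure bookkeeping rather than anything conceptual: verifying the inner-product identity for the simplex embedding and tracking the exact dimension of the ambient space together with the $\pm 1$ slack in the linear-algebra lemmas so that the stated constants $n(q-1)$ and $2n(q-1)-1$ come out cleanly. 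The substantive content — the embedding, a second-moment inequality, and a pairwise-inner-product counting bound — is routine once the translation of parameters in the first paragraph is in place.
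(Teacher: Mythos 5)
The paper does not prove this theorem; it is quoted from Guruswami's monograph (the \verb|\qed| marks it as a cited result), so there is no internal proof to compare against. Your argument is correct and is essentially the standard geometric proof from that source: the simplex embedding with $\langle v_x,v_y\rangle = 1-\dist(x,y)/(n(1-1/q))$ translates the distance hypotheses into the inner-product bounds $\langle v_c,v_{c'}\rangle\le\delta'$ and $\langle v_c,v_r\rangle\ge\gamma$; the second-moment/Cauchy--Schwarz step gives $\ell(\gamma^2-\delta')\le 1-\delta'$ exactly as you write; and the projection onto $W\cap v_r^{\perp}$ (of dimension $n(q-1)-1$) combined with the classical bounds of $m+1$ vectors with pairwise negative inner products and $2m$ with pairwise non-positive inner products yields $\ell\le n(q-1)$ and $\ell\le 2n(q-1)-2\le 2n(q-1)-1$ respectively. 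Your handling of the degenerate case $c_i=r$ (which is the only way a $w_i$ can vanish, since $v_i=-v_r$ is excluded by $\langle v_i,v_r\rangle\ge\gamma>0$) is the right fix, and all the constants check out.
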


As an immediate corollary, we get that any binary code with relative
distance at least $\frac{1}{2}-\eps$ is $(\frac{1}{2}-\sqrt{\eps},
\frac{1}{2\eps})$ list-decodable.

\section{Reed-Solomon codes} \index{code!Reed-Solomon}

Let $p = (p_1, \ldots, p_n)$ be a vector consisting of $n$ distinct
elements of $\F_q$ (assuming $q \geq n$).  The \emph{evaluation
  vector} of a polynomial $f\colon \F_q \to \F_q$ with respect to $p$
is the vector $f(p) := (f(p_1), \ldots, f(p_n)) \in \F_q^n$.

A \emph{Reed-Solomon} code of length $n$ and dimension $k$ over $\F_q$
is the set of evaluation vectors of all polynomials of degree at most
$k-1$ over $\F_q$ with respect to a particular choice of $p$. The
dimension of this code is equal to $k$. A direct corollary of
Euclidean division algorithm states that, over any field, the number
of zeros of any nonzero polynomial is less than or equal to its
degree. Thus, we get that the minimum distance of a Reed-Solomon code
is at least $n-k+1$, and because of the Singleton bound, is in fact
equal to $n-k+1$.  Hence we see that a Reed-Solomon code is MDS. A
generator matrix for a Reed-Solomon code is given by the Vandermonde
matrix
\[
G := \begin{pmatrix}
  1&1&\ldots&1\\
  p_1&p_2&\ldots& p_n\\
  p_1^2&p_2^2&\ldots& p_n^2\\
  \vdots&\vdots&\ddots& \vdots \\
  p_1^{k-1}&p_2^{k-1}&\ldots& p_n^{k-1}
\end{pmatrix}
.
\]

\section{The Hadamard Code} \index{code!Hadamard}

The Hadamard code of dimension $n$ is a linear binary code of length
$2^n$ whose generator matrix can be obtained by arranging all binary
sequences of length $n$ as its columns. Each codeword of the Hadamard
code can thus be seen as the truth table of a linear form
\[
\ell(x_1, \ldots, x_n) = \sum_{i=1}^n \alpha_i x_i
\]
over the binary field. Therefore, each nonzero codeword must have
weight exactly $2^{n-1}$, implying that the relative distance of the
Hadamard code is $\frac{1}{2}$.

\section{Concatenated Codes}
\index{code!concatenation} \label{sec:concat}

Concatenation is a classical operation on codes that is mainly used
for reducing the alphabet size of a code.  Suppose that $\C_1$ (called
the \emph{outer code}) is an $(n_1, k_1, d_1)_Q$-code and $\C_2$
(called the \emph{inner code}) is a $(n_2, k_2, d_2)_{q}$-code, where
$Q = q^{k_2}$.  The concatenation of $\C_1$ and $\C_2$, that we
denote by $\C_1 \diamond \C_2$ is an $(n,k,d)_q$-code that can be
conveniently defined by its encoder mapping as follows.

Let $x = (x_1, \ldots, x_{k_1}) \in [Q]^{k_1}$ be the message given to
the encoder, and $C(x) = (c_1, \ldots, c_{n_1}) \in \C_1$ be its
encoding under $\C_1$.  Each $c_i$ is thus an element of $[q^{k_2}]$
and can thus be seen as a $q$-ary string of length $k_2$.  Denote by
$c'_i \in [q]^{n_2}$ be the encoding of this string under $\C_2$. Then
the encoding of $x$ by the concatenated code $\C_1 \diamond \C_2$ is
the $q$-ary string of length $n_1 n_2$
\[
(c''_1, \ldots, c''_{n_1})
\]
consisting of the string concatenation of symbol-wise encodings of
$C(x)$ using $\C_2$.

Immediately from the above definition, one can see that $n=n_1 n_2$,
and $k = k_1 k_2$.  Moreover, it is straightforward to observe that
the minimum distance of the concatenated code satisfies $d \geq d_1
d_2$. When $\C_1$ and $\C_2$ are linear codes, the so is $\C_1
\diamond \C_2$.

As an example, let $\C_1$ be a Reed-Solomon code of length $n_1 :=
2^{k_2}$ and dimension $k_1 := 2\delta n_1$ over $\F_Q$, where $Q :=
2^{k_2}$. Thus the relative distance of $\C_1$ equals $1-2\delta$.  As
the inner code $\C_2$, take the Hadamard code of dimension $k_2$ and
length $Q$.  The concatenated code $\C := \C_1 \diamond \C_2$ will
thus have length $n := Qn_1=2^{2k_2}$, dimension $k := \delta k_2
2^{k_2+1}$, and relative distance at least $\frac{1}{2}-\delta$.
Therefore, we obtain a binary $[n,k,d]$ code where $d \geq
(\frac{1}{2}-\delta)n$, and $n \leq (k/\delta)^2$. By the Johnson
bound (Theorem~\ref{thm:johnson}), this code must be
$(\frac{1}{2}-\delta, \ell)$ list-decodable with list size at most
$1/(2\delta)$.

We remark that binary codes with relative minimum distance
$\frac{1}{2}-\delta$ and rate $\Omega(\delta^3 \log(1/\delta))$ (which
only depends on the parameter $\delta$) can be obtained by
concatenating Geometric Goppa codes on the Tsfasman-Vl{\u a}du{\c
  t}-Zink bound (see Section~\ref{sec:KSext}) with the Hadamard
code. The Gilbert-Varshamov bound implies that binary codes with
relative distance $\frac{1}{2}-\delta$ and rate $\Omega(\delta^2)$
exist, and on the other hand, we know by the MRRW bound that
$O(\delta^2 \log(1/\delta))$ is the best rate one can hope for.

\musicBoxAppendixCode


%

\newpage
\bibliographystyle{plain}
\bibliography{bibliography}

\printindex


\end{document}